\begin{document}

\setCluster{Algorithms, Geometry \& Applications}
\setTitle{Kernelization for Treewidth-2 Vertex Deletion}
\setName{Jeroen L.G. Schols}
\setSupervisors{Dr. Bart M.P. Jansen\par Huib T. Donkers MSc.}
\setCommittee{Dr. Bart M.P. Jansen\par Huib T. Donkers MSc.\par Dr. Bas S.P. Luttik}
\setDefenseDate{March 17th 2022}
\setCredits{30}

\beginfront

\defcitealias{Preprocessing_for_Outerplanar_Vertex_Deletion_An_Elementary_Kernel_of_Quartic_Size}{IPEC 2021}
\defcitealias{Approximation-and-Tidying-A-Problem-Kernel-for-s-Plex-Cluster-Vertex-Deletion}{Algorithmica 2012}
~
\vspace{1.1cm}
\begin{abstract}
\centering
\noindent
\begin{minipage}{0.75\linewidth}
    \bigskip
    The \textsc{Treewidth-2 Vertex Deletion} problem asks whether a set of at most $t$ vertices can be removed from a graph, such that the resulting graph has treewidth at most two. A graph has treewidth at most two if and only if it does not contain a $K_4$ minor. Hence, this problem corresponds to the NP-hard $\mathcal{F}$-\textsc{Minor Cover} problem with $\mathcal{F} = \{K_4\}$. For any variant of the $\mathcal{F}$-\textsc{Minor Cover} problem where $\mathcal{F}$ contains a planar graph, it is known that a polynomial kernel exists. I.e., a preprocessing routine that in polynomial time outputs an equivalent instance of size $t^{O(1)}$. However, this proof is non-constructive, meaning that this proof does not yield an explicit bound on the kernel size. The $\{K_4\}$-\textsc{Minor Cover} problem is the simplest variant of the $\mathcal{F}$-\textsc{Minor Cover} problem with an unknown kernel size.
    
    \hspace{13.5pt}
    To develop a constructive kernelization algorithm, we present a new method to decompose graphs into near-protrusions, such that near-protrusions in this new decomposition can be reduced using elementary reduction rules. Our method extends the ``approximation and tidying'' framework by van Bevern et al. \citepalias{Approximation-and-Tidying-A-Problem-Kernel-for-s-Plex-Cluster-Vertex-Deletion} to provide guarantees stronger than those provided by both this framework and a regular protrusion decomposition. Furthermore, we provide extensions of the elementary reduction rules used by the $\{K_4, K_{2,3}\}$-\textsc{Minor Cover} kernelization algorithm introduced by Donkers et al. \citepalias{Preprocessing_for_Outerplanar_Vertex_Deletion_An_Elementary_Kernel_of_Quartic_Size}.
    
    \hspace{13.5pt}
    Using the new decomposition method and reduction rules, we obtain a kernel consisting of $O(t^{41})$ vertices, which is the first constructive kernel. This kernel is a step towards more concrete kernelization bounds for the $\mathcal{F}$-\textsc{Minor Cover} problem where $\mathcal{F}$ contains a planar graph, and our decomposition provides a potential direction to achieve these new bounds.
\end{minipage}
\end{abstract}

\tableofcontents

\stopfront

\chapter{Introduction}\label{ch:intro}
Many important computer science problems are difficult to solve efficiently. The class of NP-hard problems describes a family of problems that are expected to not be solvable using a polynomial time algorithm. In this area of `classical' complexity theory, the efficiency of algorithms is measured solely as a function over the size of the input (e.g. measured in bits). Parameterized algorithmics instead analyses the running time of an algorithm not solely on the size of the input, but instead also on one or more additional parameters of the input instances, which enables a way to analyse complexity in more detail. For example, unless NP = P, determining whether a graph $G$ with $n$ vertices has a \textsc{Vertex Cover} of size at most $t$ can not be done in $\poly{n}$ time. However, when we also measure the impact of parameter $t$ on the running time, then we can find algorithms that solve this problem in $\phi(t) \cdot \poly{n}$ time \cite[Chapter 1]{Cygan_Parameterized_Algorithms} (for some computable function $\phi$). A problem is called fixed parameter tractable (FPT) when it can be solved with an algorithm of which the running time can be expressed as $\phi(t) \cdot \poly{n}$. For such algorithms it holds that for any constant (fixed) parameter value $t$ the problem can be solved in polynomial (tractable) time. Hence, the \textsc{Vertex Cover} problem parameterized by $t$ is FPT, because for any fixed value $t = O(1)$ the problem complexity can be expressed as $\phi(O(1)) \cdot \poly{n} = O(1) \cdot \poly{n}$. The main benefit FPT algorithms offer over non-FPT algorithms is that such algorithms can compute a solution to any problem instance, but do so much more efficiently when these parameters are relatively small.

The most common parameterizations take the cost (target size) of the solution as the parameter \cite{reflections}, however, there are many alternatives. A multitude of graph problems can be efficiently solved when the input graph is a tree. On these type of problems one candidate parameter could be a measure of ``tree-likeness'', i.e., how similar the input graph is to a tree. For example, a parameter choice could be the number of cycles in the graph or the size of a \textsc{Feedback Vertex}\footnote{A set of vertices whose removal leaves a graph without cycles.} set. As stated by Cygan et al. \cite{Cygan_Parameterized_Algorithms}, the approach most useful from algorithmic and graph theoretical perspectives, is to view tree-likeness of a graph $G$ as the existence of a structural decomposition of $G$ into pieces of bounded size that are connected in a tree-like fashion (see Section \ref{sec:prelim:tree-decomposition-and-treewidth} for a formal definition). The parameter $\eta$ that measures this tree-likeness is called the treewidth. For example, a graph without an edge has treewidth $\eta = 0$ and a graph without cycles (i.e. a tree) has treewidth $\eta = 1$. The usage of treewidth as a measure of complexity has found widespread usage in parameterized algorithmics, some of the results we will cover in Sections \ref{sec:intro:background} and
\ref{sec:intro:related-works}.

One rather fundamental NP-hard graph problem is the $\mathcal{F}$-\textsc{Minor Cover} problem. A minor relationship between graphs is similar to the subgraph relationship. A graph $H$ is a minor of graph $G$ when $H$ can be obtained from $G$ by deleting edges and vertices and by contracting edges. Contracting an edge correspond to merging the endpoints of the edge. We show an example graph $G$ that has $H$ as a minor in Figure \ref{fig:example-minor}. We note that graph $H$ corresponds to the graph $K_4$, which is a clique consisting of four nodes.
\clearpage

\begin{figure}
\centering
\begin{subfigure}{.4\textwidth}
  \centering
  \includegraphics[page=1, scale=.61]{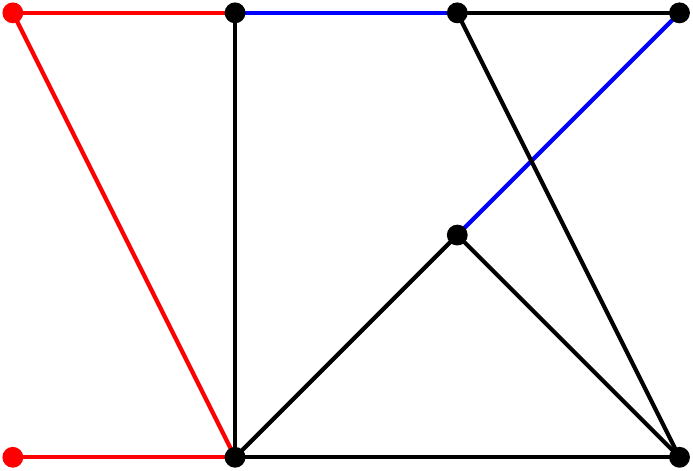}
  \caption{Graph $G$}
  \label{fig:example-minor:G}
\end{subfigure}%
\begin{subfigure}{.4\textwidth}
  \centering
  \includegraphics[page=2, scale=.61]{example-kernelization.pdf}
  \caption{Minor $H$}
  \label{fig:example-minor:H}
\end{subfigure}%
\caption{An example graph $H$ that is a minor of $G$ obtained by contracting blue edges and deleting red vertices and edges}
\label{fig:example-minor}
\end{figure}

\noindent
The $\mathcal{F}$-\textsc{Minor Cover} problem asks, given a graph $G$ and a target size $t$, does there exist a set $S$ of at most $t$ vertices such that the removal of all vertices $S$ from $G$ results in a graph that has no graph $F \in \mathcal{F}$ as a minor. Well-known variants of the $\mathcal{F}$-\textsc{Minor Cover} problem are the \textsc{Vertex Cover} problem ($\mathcal{F} = \{K_2\}$), the \textsc{Feedback Vertex Set} problem ($\mathcal{F} = \{K_3\}$), and the \textsc{Planarization} problem ($\mathcal{F} = \{K_5, K_{3,3}\}$). Interestingly, a graph without a $K_2$ minor corresponds to a graph with treewidth $\eta = 0$, and a graph without a $K_3$ minor corresponds to a graph with treewidth $\eta = 1$. Hence, these problems correspond respectively to the \textsc{Treewidth-0 Vertex Deletion} problem and the \textsc{Treewidth-1 Vertex Deletion} problem. The next problem in line is the \textsc{Treewidth-2 Vertex Deletion} problem, which corresponds to the $\{K_4\}$-\textsc{Minor Cover} problem, which will be the problem tackled in this thesis.

A commonplace technique for designing FPT-algorithms and speeding up computation is that of preprocessing. In `classical' complexity theory, unless NP = P, it is impossible to prove guarantees on the effectiveness of polynomial time preprocessing routines for NP-hard problems. This, because polynomial time preprocessing algorithms for NP-hard (decision) problems that are proven to make progress (i.e. strictly reduce the size of the problem) and do not change the answer would lead to an algorithm that solves a NP-hard problem in polynomial time. By using parameterized complexity theory we are able to prove guarantees on the effectiveness of polynomial time preprocessing routines. Kernelization algorithms are those polynomial time preprocessing algorithms that are proven to make progress until the problem instance has a size bounded by some function $f$ over the parameter $t$. I.e., a kernelization algorithm is a preprocessing algorithm that can be proven to make progress (without changing the decision-answer) when the problem instances with parameter $t$ are larger than some known bound $f(t)$.

We note that these kernelization algorithms themselves do not find a solution to the problem. I.e., these algorithms are preprocessing routines and not `solvers'. Nevertheless, after a problem instance is kernelized, applying a brute force search (assuming such an algorithm exists) will result in a FPT-algorithm. This, because the brute force algorithm will only take $\phi(|G'|) = \phi(f(t))$ time, which in case $t$ is a constant will be $\phi(f(t)) = O(1)$.

We illustrate the application of such a kernelization algorithm for the \textsc{Vertex Cover} problem. In our example we apply the kernelization method described by Fomin et al. \cite[Section 2.2]{Kernelization_Theory_of_Parameterized_Preprocessing}, although we will not introduce this algorithm. A $f(t)$ kernelization algorithm for the \textsc{Vertex Cover} problem is an algorithm that given a graph $G$ with $n$ vertices and a parameter $t$ returns in $\poly{n}$ time a graph $G'$ with $n' \leq f(t)$ vertices and a parameter $t'$. The algorithm needs to provide the guarantee that: graph $G'$ has a vertex cover of size $t'$ if and only if graph $G$ has a vertex cover of size $t$. For example, this kernelization algorithm could obtain as input the graph $G$ shown in Figure \ref{fig:example-kernelization:G} and a target size $t = 10$. In this case, the algorithm should be able to find in $\poly{n}$ time a graph $G'$ with $|V(G')| \leq f(t)$ and a target size $t'$, such as the graph $G'$ with $t' = 3$ shown in Figure \ref{fig:example-kernelization:G'}. As shown in Figure \ref{fig:example-kernelization} by red vertices, both $G'$ and $G$ have vertex covers of size $t$ and $t'$ respectively. Hence, this obtained problem instance would be a valid output for the kernelization algorithm. Nevertheless, as indicated before, the kernelization algorithm itself will not determine whether such vertex covers exist.\clearpage

\begin{figure}[ht]
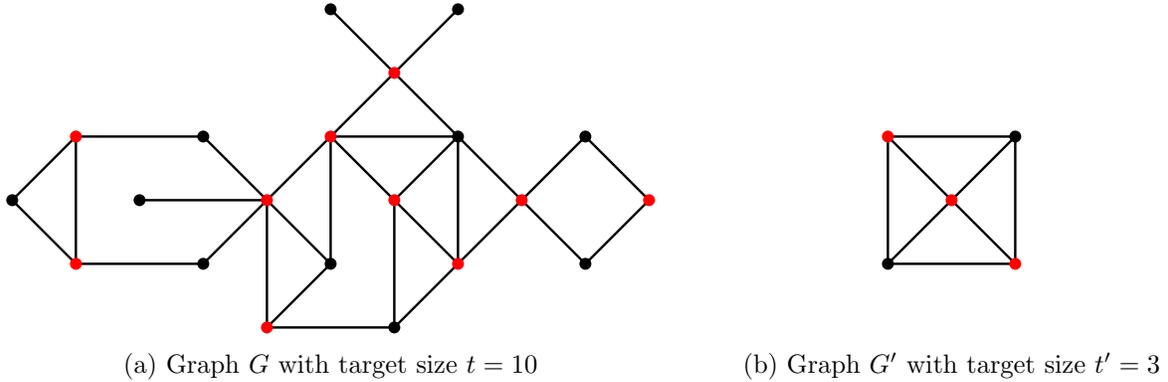

\centering
\begin{subfigure}{.6\linewidth}
  \centering
  \includegraphics[page=5, scale=.75]{example-kernelization.pdf}
  \caption{Graph $G$ with target size $t = 10$}
  \label{fig:example-kernelization:G}
\end{subfigure}%
\begin{subfigure}{.4\textwidth}
  \centering
  \includegraphics[page=6, scale=.75]{example-kernelization.pdf}
  \caption{Graph $G'$ with target size $t' = 3$}
  \label{fig:example-kernelization:G'}
\end{subfigure}%
\caption{Example \textsc{Vertex Cover} problem $(G,t)$ and its reduced problem $(G',t')$}
\label{fig:example-kernelization}
\end{figure}

\noindent
An important consequence of kernelizations algorithms is that they prove that large NP-hard problems can in polynomial time be shrunken to their computationally difficult ``core'' (also called the kernel), and the size of the core is independent of the sizes of the original problems. Besides their practical application, these kernels therefore also provide a key to better understanding the ``nature'' of the computational intractability underlying the problem \cite{reflections}.

An important goal in the design of kernelization algorithms is to obtain a kernel of a size bounded by a small value $f(t)$. This, because the value of $f(t)$ dictates the super-polynomial time complexity $\phi(f(t))$ in solving the NP-hard problem. In case a problem has a kernelization algorithm where function $f$ is polynomial or linear, then this problem is said to admit a polynomial or linear kernel respectively.

\section{Background}\label{sec:intro:background}
The $\mathcal{F}$-\textsc{Minor Cover} problem (also called the $\mathcal{F}$-\textsc{Minor-Free Deletion} problem) has received much attention since the advent of parameterized algorithmics. Lewis and Yannakakis \cite{The_node_deletion_problem_for_hereditary_properties_is_NP_complete} have proven that any variant of the  $\mathcal{F}$-\textsc{Minor Cover} problem, where each graph in $\mathcal{F}$ contains at least one edge, is NP-hard.

As indicated, well-known variants of the $\mathcal{F}$-\textsc{Minor Cover} problem include the \textsc{Vertex Cover} problem ($\mathcal{F} = \{K_2\}$), the \textsc{Feedback Vertex Set} problem ($\mathcal{F} = \{K_3\}$), and the \textsc{Planarization} problem ($\mathcal{F} = \{K_5, K_{3,3}\}$). Each of these problems have garnered a lot of attention and have been extensively studied separately \cite{Fellows2018, Festa99feedbackset, planarizing-graphs}. The $\mathcal{F}$-\textsc{Minor Cover} generalises these problems into one fundamental problem. As a consequence, (meta-)results on the $\mathcal{F}$-\textsc{Minor Cover} can provide algorithmic insights relevant to all problems from the entire class of $\mathcal{F}$-\textsc{Minor Cover} problems.

For the \textsc{Vertex Cover} problem and the \textsc{Feedback Vertex Set} problem explicit kernelization algorithms are known. The \textsc{Vertex Cover} problem parameterized by target size $t$ admits a linear kernel of size $2t$, which can be obtained by either a crown reduction \cite{Kernelization_Theory_of_Parameterized_Preprocessing} or by using the linear programming approach introduced by Chen et al. \cite{CHEN2001280}. The \textsc{Feedback Vertex Set} problem parameterized by target size $t$ admits a kernel of size $2t^2 + t$ as introduced by Iwata \cite{iwata:LIPIcs:2017:7430}.

Robertson and Seymour \cite{ROBERTSON198692} have proven that if $H$ is a planar graph, that every graph $G$ that does not contain $H$ as a minor must have a bounded treewidth. Opposed to that, for every $\eta > 0$ there exists a planar graph $H$ such that no graph with treewidth at most $\eta$ contains $H$ as a minor \cite{graph-minor-theory}. Hence, a family of $\mathcal{F}$-minor free graphs has bounded treewidth if and only if $\mathcal{F}$ includes a planar graph. Fomin et al. \cite{Planar_F-Deletion_Approximation_and_Optimal_FPT_Algorithms} used this property to obtain constant factor approximation algorithms, single exponential FPT algorithms, and polynomial kernels for the variant of the $\mathcal{F}$-\textsc{Minor Cover} problem where $\mathcal{F}$ contains a planar graph. Their method for obtaining these algorithms, however, rely on protrusion replacement techniques. A protrusion is a graph with a constant treewidth and a constant boundary size, and protrusion replacement is a technique where large protrusions are replaced by smaller ones of constant size. These generic protrusion replacement techniques, however, rely on a non-constructive argument that every protrusion can be replaced by a representative from a finite set of graphs \cite{DBLP:journals/corr/JansenW16}. Consequently, the results by Fomin et al. \cite{Planar_F-Deletion_Approximation_and_Optimal_FPT_Algorithms} are non-constructive, meaning that they only prove that such algorithms must exist, but their proof does not provide these algorithms\footnote{We note that the kernelization algorithm introduced by Fomin et al. \cite{Planar_F-Deletion_Approximation_and_Optimal_FPT_Algorithms} does not apply a protrusion replacement technique. Instead, their kernelization algorithm finds a single edge in a protrusion to remove. Nevertheless, also this method relies on a non-constructive argument that bounds the size of the largest graph in certain antichains
in a well-quasi-order. Meaning that also their kernelization algorithm is non-constructive.}. Therefore, because $K_4$ is a planar graph, we know that a polynomial kernel for the \textsc{Treewidth-2 Vertex Deletion} problem exists, however, the size of this kernel is unknown.

Kim et al. \cite{A_single-exponential_FPT_algorithm_for_the_K4-minor_cover_problem} introduced the first single-exponential FPT algorithm for the $\{K_4\}$-\textsc{Minor Cover} problem parameterized by target size. Their algorithm, however, also made use of generic protrusion replacement rules. Hence, similar to the work by Fomin et al. \cite{Planar_F-Deletion_Approximation_and_Optimal_FPT_Algorithms}, the work by Kim et al. \cite{A_single-exponential_FPT_algorithm_for_the_K4-minor_cover_problem} only yields existential results.

Fomin et al. \cite{Planar_F-Deletion_Approximation_and_Optimal_FPT_Algorithms} indicate that it is tempting to conjecture that the line of tractability is determined by whether $\mathcal{F}$ contains a planar graph or not. They base this argument on the observation that no constant factor approximation algorithm, no single exponential FPT algorithm, and no polynomial kernel is known for any variant of $\mathcal{F}$-\textsc{Minor Cover} problem where $\mathcal{F}$ does not contain a planar graph.

To better understand the tractability of the $\mathcal{F}$-\textsc{Minor Cover}, recent work by Donkers et al. \cite{Preprocessing_for_Outerplanar_Vertex_Deletion_An_Elementary_Kernel_of_Quartic_Size} focused on the simplest planar variant of the $\mathcal{F}$-\textsc{Minor Cover} problem for which no explicit kernel was known. This problem was the \textsc{Outerplanar Vertex Deletion} problem ($\mathcal{F} = \{K_4, K_{2,3}\}$). Their algorithm has a design similar to the non-constructive kernelization algorithm for the general $\mathcal{F}$-\textsc{Minor Cover} problem by Fomin et al. \cite{doi:10.1137/140997889}. The algorithm by Fomin et al. \cite{doi:10.1137/140997889} finds an approximate solution $S$ to the $\mathcal{F}$-\textsc{Minor Cover} problem and uses this to decompose the graph into so called near-protrusion. The algorithm by Donkers et al. \cite{Preprocessing_for_Outerplanar_Vertex_Deletion_An_Elementary_Kernel_of_Quartic_Size}, however, avoids using non-constructive arguments by using the procedure of ``tidying the modulator'' introduced by van Bevern et al. \cite{Approximation-and-Tidying-A-Problem-Kernel-for-s-Plex-Cluster-Vertex-Deletion}. A modulator corresponds to a vertex deletion set $S$ such that the removal of $S$ from the input graph $G$ results in a graph with the desired property (e.g., outerplanarity). Tidying the modulator corresponds to the process of adding additional vertices to modulator $S$ to ensure that any vertex from $S$ can be removed, and the resulting set will remain being a modulator for $G$. Donkers et al. \cite{Preprocessing_for_Outerplanar_Vertex_Deletion_An_Elementary_Kernel_of_Quartic_Size} use this `tidiness property' to reduce protrusions without relying on non-constructive arguments. Using this method they obtained a $O(t^4)$ kernel for the \textsc{Outerplanar Vertex Deletion} problem parameterized by target size $t$.

\section{Our Contributions}\label{sec:intro:our-contributions}
Within this thesis we develop a kernelization algorithm for the currently simplest planar variant of the $\mathcal{F}$-\textsc{Minor Cover} problem for which no explicit kernel is known, which is the \textsc{Treewidth-2 Vertex Deletion} problem ($\mathcal{F} = \{K_4\}$). We provide an $O(t^{41})$ kernel to the \textsc{Treewidth-2 Vertex Deletion} problem parameterized on the target size $t$. This yields the first explicit kernel size for the \textsc{Treewidth-2 Vertex Deletion} problem.

Our work builds upon the recent results on \textsc{Outerplanar Vertex Deletion} ($\mathcal{F} = \{K_4, K_{2,3}\}$) by Donkers et al. \cite{Preprocessing_for_Outerplanar_Vertex_Deletion_An_Elementary_Kernel_of_Quartic_Size}. Their algorithm exploits the fact that no $K_{2,3}$ minors are present in the graph $G - S$, where $S$ is an outerplanarity modulator. The absence of $K_{2,3}$ minors is a powerful property that limits the number of internally vertex-disjoint paths between any two vertices. Donkers et al. use the absence of $K_{2,3}$ minors to decompose graphs into near-protrusions and to reduce those near-protrusions without relying on non-constructive arguments. Our algorithm uses a novel approach for decomposing graphs, without relying on the absence of $K_{2,3}$ minors.

To decompose our graph, we extend the ``approximation and tidying'' framework introduced by van Bevern et al. \cite{Approximation-and-Tidying-A-Problem-Kernel-for-s-Plex-Cluster-Vertex-Deletion}. We briefly introduce this framework and our modifications.

The first step is an approximation step where a (treewidth-2) modulator $X$ is obtained using a constant factor approximation algorithm. The tidying step aims to find a tidy modulator $X' \subseteq V(G)$ of size polynomial in $t$ with $X \subseteq X'$. A tidy modulator is a vertex set for which any $x \in X'$ has that $X' \setminus \{x\}$ is a modulator. To obtain this set $X'$ the tidying step executes a routine for every vertex $x \in X$. It either determines that every modulator of size at most $t$ must include vertex $x$, or it finds a vertex set $X'_x$ of size polynomial in $t$, that when added to $X$ ensures that $(X \cup X'_x) \setminus \{x\}$ is a modulator. In case the former holds, then we know that there exists a treewidth-2 modulator of size $t$ for graph $G$ if and only if there exists a treewidth-2 modulator of size $t-1$ for graph $G - x$, in which case our kernelization algorithm recurses on $G - x$ with target size $t - 1$. In case the latter holds for each $x \in X$ we simply let $X'$ be the union over $X$ and all sets $X'_x$. The resulting set $X'$ then is a tidy modulator.

Where our algorithm deviates is that we want to run this tidying routine twice. I.e., we wish to find a tidy tidy modulator $Y \subseteq V(G)$ of size polynomial in $t$ with $X' \subseteq Y$, which is a vertex set for which any $x \in Y$ has that $Y \setminus \{x\}$ is a tidy modulator. In other words, we can remove any two vertices from $Y$ and the resulting set will be a modulator. It is straightforward to modify the tidying procedure by van Bevern et al. \cite{Approximation-and-Tidying-A-Problem-Kernel-for-s-Plex-Cluster-Vertex-Deletion} to execute the same routine for pairs of vertices $\{x,y\} \subseteq X'$ opposed to individual vertices. However, the tidying procedure does not provide a method to handle the case where it is determined that every modulator of size at most $t$ must include either vertex $x$ or $y$. I.e., in this case we know that there exists a treewidth-2 modulator of size $t$ for graph $G$ if and only if there exists a treewidth-2 modulator of size $t-1$ for either graph $G - x$ or graph $G - y$, however, we do not know whether this will be graph $G - x$ or $G - y$. We note that trying both options will lead to an algorithm that takes time exponential in $t$, whereas we search for an algorithm that takes $\poly{n}$ time. Our modified version, in this case, instead also finds a vertex set $Y_{x,y}$. Opposed to ensuring that the union $Y$ over all these sets is a tidy tidy modulator, we ensure that for every component $C$ in $G - (X' \cup Y)$ and for every modulator $S$ of size at most $t$; that all but at most one neighbour of $C$ in $X'$ must be included in $S$. We name this set $Y$ the component separator.

The property that all components in $G - (X' \cup Y)$ and all modulators $S$ of size at most $t$ must include all but at most one neighbour of $C$ in $X'$ stems from the method of obtaining an outerplanar decomposition by Donkers et al. \cite{Preprocessing_for_Outerplanar_Vertex_Deletion_An_Elementary_Kernel_of_Quartic_Size}. However, their usage of an outerplanar modulator $X$ ensures that $G - X$ contains no $K_{2,3}$ minor, which gives them a straightforward approach to obtaining a component separator $Y$. We note that $Y$ corresponds to vertex set $Z$ in Definition 3.13 in \cite{Preprocessing_for_Outerplanar_Vertex_Deletion_An_Elementary_Kernel_of_Quartic_Size}. Our extension of the ``approximation and tidying'' framework does not make use of the absence of $K_{2,3}$ minors, and has the potential to be more generally applicable for other variants of the $\mathcal{F}$-\textsc{Minor Cover} problem as well.

After having obtained our component separator we have decomposed our graph into near-protrusions, similar to the outerplanar decomposition introduced by Donkers et al. \cite{Preprocessing_for_Outerplanar_Vertex_Deletion_An_Elementary_Kernel_of_Quartic_Size}. Most reduction rules used for reducing components in $G - (X' \cup Y)$ are similar to those introduced by Donkers et al. Broadly speaking, these rules can be categorised into rules that target reducing large maximal biconnected subgraphs in $G - (X' \cup Y)$ and rules that target components in $G - (X' \cup Y)$ that consist of many maximal biconnected subgraphs. Although our reduction rules are similar to those by Donkers et al., significantly different methods for finding applications of these rules are introduced, due to the potential presence of $K_{2,3}$ minors in $G - (X' \cup Y)$.

In terms of graph theoretical relevance, we derive the first explicit kernel and first constructive fixed-parameter tractable algorithm for the \textsc{Treewidth-2 Vertex Deletion} problem. Furthermore, our extension of the ``approximation and tidying'' framework has the potential to provide a method for decomposing graphs for various variants of the $\mathcal{F}$-\textsc{Minor Cover} problem. However, the generalisation of this extended framework beyond its application to the \textsc{Treewidth-2 Vertex Deletion} problem is beyond the scope of this thesis.

\section{Related Works}\label{sec:intro:related-works}
Robertson and Seymour's graph minor theorem states that the class of all graphs is a well-quasi-order under the minor relation \cite{ROBERTSON2004325}. A consequence of this theorem is that a class of graphs closed under minors can be characterised by a finite set of minor obstructions \cite{BIENSTOCK1995481}. A minor obstruction for a graph property is a characterisation of a graph property by means of minor-minimal excluded graphs. For example, the property of having a treewidth bounded by $\eta$ is closed under minors. Hence, for each $\eta$ there exists a finite set of graphs $\mathcal{F}$, such that any graph $G$ has treewidth at most $\eta$ if and only if $G$ contains no graph from $\mathcal{F}$ as minor.

An important consequence of the graph minor theorem, as proven by Robertson and Seymour \cite{BIENSTOCK1995481, ROBERTSON199565}, is that for any fixed graph $F$ there exists
a polynomial-time algorithm that tests whether an input graph $G$ has $F$ as minor. As indicated by Bienstock and Langston \cite{BIENSTOCK1995481}, this yields that for any set $\mathcal{F}$ there exists an FPT algorithm that solves the $\mathcal{F}$-\textsc{Minor Cover} problem parameterized on target size $t$. As stated by Kim et al. \cite{A_single-exponential_FPT_algorithm_for_the_K4-minor_cover_problem}, an algorithm resulting from this meta-theorem will generally involve a huge exponential function $f$, that will be impractical even for small values of $t$. Regardless, the existence of such an algorithm also yields that the $\mathcal{F}$-\textsc{Minor Cover} problem has a kernel \cite[Chapter 2.1]{Cygan_Parameterized_Algorithms}.

We note that these results do not provide an answer to the conjecture posed by Fomin et al. \cite{Planar_F-Deletion_Approximation_and_Optimal_FPT_Algorithms}, that there does not exist a polynomial kernel for any instance of the $\mathcal{F}$-\textsc{Minor Cover} problem where $\mathcal{F}$ does not contain any planar graph. Even though all $\mathcal{F}$-\textsc{Minor Cover} problems have a kernel, these kernels need not be polynomial kernels. Bodlaender et al. \cite{BODLAENDER2009423} have proven that, under reasonable complexity-theoretic assumptions, many fixed-parameter tractable problems do not admit polynomial kernels.

The \textsc{Treewidth}-$\eta$ \textsc{Vertex Deletion} problem is the problem of finding a vertex deletion set such that the removal of this set of vertices results in a graph of treewidth at most $\eta$. As indicated, the \textsc{Vertex Cover} problem corresponds to the \textsc{Treewidth-0 Vertex Deletion} problem and the $\{K_2\}$-\textsc{Minor Cover} problem, the \textsc{Feedback Vertex Set} problem corresponds to the \textsc{Treewidth-1 Vertex Deletion} problem and the $\{K_3\}$-\textsc{Minor Cover} problem, and the \textsc{Treewidth-2 Vertex Deletion} problem corresponds to the $\{K_4\}$-\textsc{Minor Cover} problem. For any higher treewidth bound $\eta$ the obstruction set $\mathcal{F}$ will no longer consist of a single graph. For $\eta = 3$ this obstruction set, as proven by Arnborg et al. \cite{ARNBORG19901}, consists out of four graphs. For $\eta = 4$, Sanders \cite{10.5555/193587} gives an incomplete\footnote{This obstruction is incomplete, because it does not prove that the absence of all these minors necessarily results in a graph with treewidth at most 4.} obstruction set consisting out of 75 graphs. Ramachandramurthi \cite{doi:10.1137/S0895480195280010} has proven that the set of minimal forbidden minors consists of at least $\Omega(2^{\sqrt{\eta}})$ graphs.

The treewidth of a graph has in many cases been proven be a good measure of the intrinsic difficulty of various NP-hard graph problems \cite{Solving_connectivity_problems_parameterized_by_treewidth_in_single_exponential_time}. Several researchers observed that many graph problems on graphs of bounded treewidth can be solved efficiently using dynamic programming on tree decompositions \cite{ARNBORG198911, BERN1987216, 10.1007/3-540-19488-6_110}. For example, using the dynamic programming approach described by Niedermeier \cite{niedermeier2006invitation}, the \textsc{Vertex Cover} problem can be solved in $O(2^{\treewidth{G}} \cdot |G|)$ time, where $\treewidth{G}$ denotes the treewidth of graph $G$. As stated by Cygan et al. \cite{Solving_connectivity_problems_parameterized_by_treewidth_in_single_exponential_time}, if the problem is local\footnote{A property is local, if the property of the object to be found can be verified by separately checking the neighbourhoods of every vertex.}, these dynamic programming approaches often led to algorithms of which the running time matches known upper and lower bounds. E.g., for the \textsc{Vertex Cover} problem, unless the Strong Exponential Time Hypothesis fails, there will not exist an $(2 - \varepsilon)^{O(\treewidth{G})} \cdot |G|^{O(1)}$ algorithm for any $\varepsilon > 0$ \cite{Solving_connectivity_problems_parameterized_by_treewidth_in_single_exponential_time}.

Another application of treewidth follows from Courcelle's (meta-)theorem \cite{COURCELLE199012}, which states that every graph property definable in the monadic second-order logic of graphs\footnote{In literature sometimes referred to as the extended monadic second-order language.} can be decided in linear time on graphs of bounded treewidth. Even though these algorithms take a linear amount of time, the hidden constants can become extremely large \cite{FRICK20043}. Several techniques have been developed to address this problem \cite{Biehl1996AlgorithmsFG, mona_implementation_secrets, practical_approach_to_courcelles_theorem}, which drastically decrease the hidden constants. Nevertheless, as stated by Kneis and Langer \cite{practical_approach_to_courcelles_theorem}: ``whenever we
actually need Courcelle’s Theorem in practice because the formula is too complex to give direct algorithms, the corresponding automaton becomes large'', which results in these faster techniques still not being practically applicable. As such, Courcelle's theorem provides a powerful tool for deriving fixed parameter tractability results, however, these results seldom lead to practical applications.

An important note to make is that algorithms that efficiently solve the \textsc{Treewidth-$\eta$ Vertex Deletion} problem do not necessarily enable the usage of algorithms that are efficient on graphs of treewidth bounded by $\eta$. A solution to the \textsc{Treewidth-$\eta$ Vertex Deletion} problem is a vertex set of which the removal results in a graph of treewidth at most $\eta$. The algorithm that is run on this graph of bounded treewidth will not necessarily yield a solution that corresponds to a solution for the original graph. Hence, even though treewidth is a powerful parameter to measure difficulty of graph problems, the same does not necessarily hold for the deletion distance towards graphs of bounded treewidth. 

There exist some instances where taking the deletion distance towards forests (feedback vertex set number) as the parameter resulted in polynomial kernelizations. For example, Jansen en Bodlaender \cite{2012VCKR} derived an $O(k^3)$ kernel for the \textsc{Vertex Cover} problem parameterized on the minimum size $k$ of a feedback vertex set. Although this kernel is non-linear, opposed to the $2t$ kernels for the \textsc{Vertex Cover} problem parameterized on target size $t$ \cite{ CHEN2001280, Kernelization_Theory_of_Parameterized_Preprocessing}, in many graphs the size $t$ of a minimum vertex cover is significantly larger than the size $k$ of a minimum feedback vertex set. Hence, these kernelization techniques can often lead to smaller reduced graphs. Donkers and Jansen \cite{DONKERS2021164} have proven for the $\mathcal{F}$-\textsc{Minor Cover} problem where $\mathcal{F}$ consists only out of graphs with some components of at least three vertices, that unless $\text{NP} \subseteq \text{coNP}/\text{poly}$, there do not exist polynomial kernels when parameterized by the deletion distance to a graph of treewidth bounded by the treewidth of all graphs in $\mathcal{F}$. Hence, the deletion distance towards forests (or treewidth-2 graphs) will have limited applicability in terms of finding common approaches to solving all variants of the $\mathcal{F}$-\textsc{Minor Cover} problem.

\section{Organisation}\label{sec:intro:organisation}
In Chapter \ref{ch:prelim} we provide the basic definitions and notations used throughout this thesis. This also includes a formal problem definition and some well-known results from related works. In Chapter \ref{ch:general-reduction-rules} we introduce three rather simple reduction rules, and in all of the chapters following this chapter we will assume these three rules have been exhaustively applied. In Chapter \ref{ch:limit-1-subsets-and-cliques} we introduce the concept of limit-\textit{m} subsets, which provides a compact and consistent terminology to discuss graph properties relating to the interaction between candidate solutions and vertex sets. In this chapter we also provide an algorithm by which some limit-\textit{m} subsets can be found. In Chapter \ref{ch:graph-decompositions} we introduce our extended ``tidying the modulator framework'', which relies on the algorithm introduced in Chapter \ref{ch:limit-1-subsets-and-cliques}. We note that the application of the results stemming from this extended framework are deferred to Chapter \ref{ch:combining-results}. In Chapter \ref{ch:reducing-biconnected-subgraphs} we provide a set of reduction rules, alongside a method for finding applications of these reduction rules, on large biconnected subgraphs that are disjoint from a tidy modulator. In Chapter \ref{ch:reducing-block-cut-trees} we introduce reduction rules, and methods for applying these rules, on components that contain many maximal biconnected subgraphs. In Chapter \ref{ch:combining-results} we combine the methods from Chapters \ref{ch:graph-decompositions}, \ref{ch:reducing-biconnected-subgraphs}, and \ref{ch:reducing-block-cut-trees} to obtain our kernelization algorithm. Finally, in Chapter \ref{ch:conclusion}, we conclude by listing and discussing our main findings and we note some extensions as future work.

\chapter{Preliminaries}\label{ch:prelim}
We start by introducing the notation we will use all throughout this paper alongside references to some well-known results. Furthermore, we provide a formal definition to the \textsc{Treewidth-2 Vertex Deletion} problem, alongside some short proofs about problem-specific properties that follow from the known results.

\section{Graph Theory}\label{sec:prelim:graph-theory}
\paragraph{Graphs}
A graph $G = (V,E)$ is a set of nodes/vertices $V$ and a set of edges $E \subseteq \binom{V}{2}$. We note that this definition of a graph corresponds to that of an undirected unweighted graph without parallel edges and self-loops. We use $V(G)$ and $E(G)$ to denote the set of vertices and edges in graph $G$ respectively. For a set of edges $F$ we let $V(F)$ denote the set of endpoints of all edges in $F$. Two graphs $G$ and $H$ are vertex-disjoint when $V(G) \cap V(H) = \emptyset$. A graph $G$ contains a set of vertices $U$ when $U \subseteq V(G)$.

\paragraph{Subgraphs}
A graph $H$ is a subgraph of a graph $G$ when $V(H) \subseteq V(G) \wedge E(H) \subseteq E(G)$. A graph $G$ contains a graph $H$ when $H$ is a subgraph of $G$. For a set $F \subseteq \binom{V(G)}{2}$ let $G - F$ denote the graph with $V(G - F) = V(G) \wedge E(G - F) = E(G) \setminus F$ and let $G \cup F$ denote the graph with $V(G \cup F) = V(G) \wedge E(G \cup F) = E(G) \cup \parenthesised{F \cap \binom{V(G)}{2}}$. For an edge $e$ let $G - e = G - \{e\}$ and $G \cup e = G \cup \{e\}$.

The induced subgraph $G[U]$ of graph $G$ and vertex set $U$ is the graph with $V(G[U]) = V(G) \cap U$ and $E(G[U]) = E(G) \cap \binom{U}{2}$. We call $G[U]$ the graph induced by $U$ in $G$. For a set of vertices $U$ let $G - U = G[V(G) \setminus U]$. For a vertex $u$ let $G - u = G - \{u\}$. For a graph $H$ let $G - H = G - V(H)$.

\paragraph{Neighbourhoods}
The open neighbourhood $N_G(v)$ of a vertex $v$ is the set of vertices $u \in V(G)$ with $uv \in E(G)$. The closed neighbourhood $N_G[v]$ of a vertex $v$ is $N_G(v) \cup \{v\}$. For a set of vertices $U$ let $N_G(U) = \bigcup_{u \in U} N_G(u) \setminus U$ and $N_G[U] = N_G(U) \cup U$. The boundary $\partial_G(U)$ of a vertex set $U$ is $N_G(G - U)$. For a graph $H$ let $N_G(H) = N_G(V(H))$ and $N_G[H] = N_G[V(H)]$ and $\partial_G(H) = \partial_G(V(H))$. The degree of a vertex $v$ equals $|N_G(v)|$. The minimum/maximum degree of a graph $G$ is the minimum/maximum degree of any vertex in $G$.

\paragraph{Paths}
A path $P$ is a non-empty graph $(V,E)$ of the form $V(P) = \langle u_1, u_2, \dots, u_m \rangle$ and $E(P) = \langle u_1u_2, u_2u_3, \dots, u_{m-1}u_m \rangle$. In this definition we consider $V(P)$ and $E(P)$ to be sequences that might contain duplicates. We often refer to a path only by its sequence of vertices $P = V(P) = \langle u_1, u_2, \dots, u_m \rangle$, because it uniquely defines $E(P)$. A path $P$ contained in a path $Q$ is a subpath. A path $P = \langle u_1, u_2, \dots, u_m \rangle$ is a path between $u_1$ and $u_m$, which are called the endpoints of $P$. Similarly, we say that $P$ is a path from $u_1$ to $u_m$ and a path from $u_m$ to $u_1$. I.e., we do not associate a direction to paths. Vertices $u$ and $v$ are connected in $G$ when there exists a path in $G$ between $u$ and $v$. A path $P$ visits a set of vertices $U$ when $U \subseteq V(P)$. A path $P$ is simple if each vertex in sequence $V(P)$ is distinct. For a simple path $P$ and vertices $u,v \in V(P)$ we define subpath $P[u,v]$ as the path defined by the subsequence of $P$ from $u$ to $v$. Furthermore, we let $P[u,v) = P[u,v] - \{v\}$, we let $P(u,v] = P[u,v] - \{u\}$, and we let $P(u,v) = P[u,v] - \{u,v\}$. A simple path $P$ is an induced path in $G$ if it is an induced subgraph of $G$. Two paths $P = \langle p_1, p_2, \dots, p_m \rangle$ and $Q = \langle q_1, q_2, \dots, q_n \rangle$ are internally vertex-disjoint when $P - \{p_1, p_m\}$ and $Q - \{q_1, q_n\}$ are vertex-disjoint.

\paragraph{Cycles}
A cycle $C$ is a non-empty graph $(V,E)$ of the form $V(C) = \langle u_1, u_2, \dots u_m \rangle$ and $E(C) = \langle u_1u_2, u_2u_3, \dots, u_{m-1}u_m, u_mu_1 \rangle$ with $m \geq 3$. Similar to the definition of paths, $V(C)$ and $E(C)$ can contain duplicates. A cycle $C$ visits a set of vertices $U$ when $U \subseteq V(C)$. A cycle $C$ is simple if each vertex in sequence $V(C)$ is distinct. A graph that contains no simple cycles is called acyclic. A simple cycle $C$ is an induced cycle in $G$ if it is an induced subgraph of $G$. 

\paragraph{Separators}
A vertex set $S$ is a separator for vertex sets $X$ and $Y$ with $X \cap Y = X \cap S = Y \cap S = \emptyset$ in a graph $G$ if there exist no paths in $G - S$ between any vertex $x \in X$ and any vertex $y \in Y$. We say that $S$ separates $X$ from $Y$ in $G$ if $S$ is a separator for $X$ and $Y$ in $G$. Vertex set $S$ is a separator for vertices $x$ and $y$ in $G$ if it is a separator for $\{x\}$ and $\{y\}$ in $G$.

From Menger's theorem (Theorem 4.2.17 in \cite{Introduction-to-Graph-Theory}) it follows that each $\{x,y\} \subseteq V(G)$ with $xy \notin E(G)$ has a minimum-cardinality separator for $x$ and $y$ in $G$ of size equal to the maximum number of pairwise internally vertex-disjoint paths between $x$ and $y$ in $G$. A maximum set of internally vertex-disjoint paths between $x$ and $y$ can be found in $\polyG$ time using a maximum flow algorithm (Remark 4.3.15 in \cite{Introduction-to-Graph-Theory}). From these paths a minimum-cardinality separator can be directly obtained (Remark 4.3.13 and 4.3.15 in \cite{Introduction-to-Graph-Theory}). Hence, we make the following observations:

\begin{observation}\label{obs:get-max-internally-vertex-disjoint-paths}
    For any $\{x,y\} \subseteq V(G)$ with $xy \notin E(G)$ we can in $\polyG$ time obtain a maximum set of internally vertex-disjoint paths between $x$ and $y$ in $G$.
\end{observation}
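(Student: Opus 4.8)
The plan is to reduce the task to a single maximum-flow computation in an auxiliary directed graph and then to read the paths off from an integral maximum flow. First I would build the standard vertex-capacitated flow network: replace every vertex $v \in V(G) \setminus \{x,y\}$ by two vertices $v^{\mathrm{in}}, v^{\mathrm{out}}$ joined by an arc $v^{\mathrm{in}} \to v^{\mathrm{out}}$ of capacity $1$, replace every edge $uv \in E(G)$ by the two arcs $u^{\mathrm{out}} \to v^{\mathrm{in}}$ and $v^{\mathrm{out}} \to u^{\mathrm{in}}$ of capacity $1$, and leave $x$ and $y$ unsplit, taking $x$ as the source and $y$ as the sink. This network has $O(|G|)$ vertices and arcs, so a maximum integral $x$--$y$ flow can be computed in $\polyG$ time, e.g. by the Ford--Fulkerson method, since each augmentation raises the flow value by at least $1$ and the value never exceeds $|V(G)|$.

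Next I would decompose the resulting integral flow into arc-disjoint $x$--$y$ walks in the network; because every internal vertex $v$ carries at most one unit of flow across the arc $v^{\mathrm{in}} \to v^{\mathrm{out}}$, distinct walks in the decomposition use distinct internal vertices, so projecting them back to $G$ yields internally vertex-disjoint $x$--$y$ walks, which become simple paths after shortcutting any repeated vertex. The number of paths produced equals the value of the flow; by Menger's theorem (Theorem 4.2.17 in \cite{Introduction-to-Graph-Theory}) this value equals the minimum size of an $x$--$y$ separator in $G$, which by the same theorem equals the maximum number of internally vertex-disjoint $x$--$y$ paths. Hence the collection we output has maximum cardinality. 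The hypothesis $xy \notin E(G)$ is exactly what is needed for Menger's theorem to apply in this form, and it plays no further role in the construction.

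I do not expect a genuine obstacle here: the only points that need care are checking that the vertex-splitting construction faithfully encodes ``internally vertex-disjoint'' rather than merely ``edge-disjoint'', and that integral flow decomposition together with shortcutting indeed yields simple paths. Both are routine, and the statement is essentially quoted as Remark 4.3.15 in \cite{Introduction-to-Graph-Theory}; the observation merely repackages it in the form used later in the thesis.
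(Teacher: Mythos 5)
Your proposal is correct and takes the same route as the paper: both the paper and you rely on the standard vertex-splitting maximum-flow construction together with Menger's theorem, citing Remark 4.3.15 and Theorem 4.2.17 in \cite{Introduction-to-Graph-Theory}. The paper presents this observation without a detailed proof, simply pointing to those references, while you spell out the construction explicitly; the content is the same.
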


\begin{observation}\label{obs:get-min-separator}
    For any $\{x,y\} \subseteq V(G)$ with $xy \notin E(G)$ we can in $\polyG$ time obtain a minimum-cardinality separator for $x$ and $y$ in $G$.
\end{observation}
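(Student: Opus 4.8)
The plan is to reduce the problem to a single maximum-flow computation via the standard vertex-splitting gadget, and then read a minimum vertex separator off a minimum cut. First I would build a directed capacitated network $D$ from $G$: replace every vertex $v \in V(G) \setminus \{x,y\}$ by two copies $v^-, v^+$ joined by an arc $(v^-, v^+)$ of capacity $1$, keep $x$ and $y$ as single vertices, and for every edge $uv \in E(G)$ add arcs $(u^+, v^-)$ and $(v^+, u^-)$ of capacity $|V(G)|$ (with the obvious adaptation when an endpoint is $x$ or $y$, routing into $v^-$ and out of $v^+$). An integral $x$-$y$ flow of value $k$ in $D$ decomposes into $k$ arc-disjoint directed $x$-$y$ paths, and since each internal vertex $v$ is traversed only through its unique unit-capacity arc $(v^-,v^+)$, these paths are internally vertex-disjoint in $G$; conversely a family of internally vertex-disjoint $x$-$y$ paths in $G$ lifts to an $x$-$y$ flow in $D$ of the same value. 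Hence the maximum flow value in $D$ equals the maximum number of pairwise internally vertex-disjoint $x$-$y$ paths in $G$, which by Menger's theorem (quoted above) equals the minimum cardinality of an $x$-$y$ separator.

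Next I would compute, in $\polyG$ time, a maximum integral $x$-$y$ flow in $D$ together with a minimum cut $(A,B)$ of the same value, using any standard augmenting-path method; since the flow value never exceeds $|V(G)|$, even a plain Ford--Fulkerson iteration runs in polynomial time. The key structural point is that the min-cut value is at most $|V(G)| - 1 < |V(G)|$, so no arc of the form $(u^+, v^-)$ (of capacity $|V(G)|$) can cross the cut from $A$ to $B$; therefore every crossing arc is a vertex-splitting arc $(v^-, v^+)$, and the set $S = \setdef{v \in V(G)\setminus\{x,y\}}{v^- \in A \text{ and } v^+ \in B}$ satisfies $|S|$ equal to the cut value. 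A routine check shows $S$ separates $x$ from $y$ in $G$: an $x$-$y$ path in $G - S$ would lift to a directed $x$-$y$ path in $D$ avoiding every crossing arc, contradicting that $(A,B)$ is a cut. Combined with the matching lower bound from Menger's theorem, $S$ is a minimum-cardinality $x$-$y$ separator, and the entire procedure runs in $\polyG$ time.

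I do not expect a genuine obstacle: the statement is essentially the algorithmic content of Menger's theorem, and the only points requiring care are the correctness of the vertex-splitting reduction (in particular, that a minimum finite cut avoids the high-capacity arcs) and the bookkeeping that turns a minimum cut back into a vertex set. As an alternative route one may avoid constructing $D$ explicitly and instead invoke Observation~\ref{obs:get-max-internally-vertex-disjoint-paths} to obtain a maximum family of internally vertex-disjoint $x$-$y$ paths in $\polyG$ time, then apply the constructive step in the proof of Menger's theorem (Remark~4.3.13 and~4.3.15 of \cite{Introduction-to-Graph-Theory}) to extract in polynomial time a separator whose size matches the number of paths; this yields the same conclusion.
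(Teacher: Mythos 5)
Your proposal is correct and matches the paper's approach: both reduce the problem to a single maximum-flow computation (via the vertex-splitting gadget) and extract a minimum-cardinality separator from the optimal flow/cut structure, with Menger's theorem supplying the matching lower bound. The paper simply delegates the construction you spell out to Remarks 4.3.13 and 4.3.15 in West's textbook, and the alternative route you mention at the end is verbatim the paper's own argument.
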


\paragraph{Connectivity}
A graph $G$ is connected when every pair of vertices $u,v \in V(G)$ are connected. A component $C$ is a maximal connected subgraph of $G$. Set $\mathcal{C}(G)$ is a partitioning of $G$ into components. We note that set $\mathcal{C}(G)$ can be constructed in $\polyG$ time (e.g. by using DFS).

A vertex $u \in V(G)$ is an articulation vertex (also called a cut vertex) when $|\mathcal{C}(G - u)| > |\mathcal{C}(G)|$. A graph $G$ is biconnected when it does not contain articulation vertices. A biconnected component (also called a block) is a maximal biconnected subgraph of $G$.

\begin{lemma}[Theorem 4.2.4 in \cite{Introduction-to-Graph-Theory}]\label{lemma:biconnected-has-cycle}
    For every $\{u,v\} \subseteq V(B)$ in a biconnected graph $B$ with $|V(B)| \geq 3$ there exists a simple cycle in $B$ that visits $u$ and $v$.
\end{lemma}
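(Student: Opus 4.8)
The plan is to produce the cycle as the union of two internally vertex-disjoint paths between $u$ and $v$ in $B$, so the real content is showing such a pair exists and the rest is bookkeeping. Suppose we have two simple paths $P_1$ and $P_2$ between $u$ and $v$ that are internally vertex-disjoint. Then $P_1$ and $P_2$ share no vertices other than $u$ and $v$, so $P_1 \cup P_2$ is a simple cycle, and by construction it visits both $u$ and $v$. To match the definition of a cycle we only need this cycle to have at least three vertices, which will be immediate from length bounds on $P_1$ and $P_2$. I would obtain the two paths by splitting on whether $u$ and $v$ are adjacent, since Menger's theorem as quoted after Observation~\ref{obs:get-min-separator} is stated only for non-adjacent pairs.

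If $uv \notin E(B)$, I would invoke that form of Menger's theorem: it suffices to show every separator for $u$ and $v$ in $B$ has at least two vertices, as then the maximum number of pairwise internally vertex-disjoint $u$--$v$ paths is at least two. A separator of size $0$ would place $u$ and $v$ in different components, impossible since $B$ is connected; a size-$1$ separator $\{w\}$ would give $|\mathcal{C}(B - w)| \ge 2 > 1 = |\mathcal{C}(B)|$, so $w$ would be an articulation vertex, contradicting that $B$ is biconnected. Hence two internally vertex-disjoint $u$--$v$ paths exist, and since $uv \notin E(B)$ each has length at least two, so $P_1 \cup P_2$ has at least four vertices.

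If $uv \in E(B)$, I would take $P_1 = \langle u, v \rangle$ and let $P_2$ be a simple $u$--$v$ path in $B - uv$; since $B$ has no parallel edges, any such $P_2$ has length at least two, so $P_1 \cup P_2$ is a simple cycle on at least three vertices. The point to check is that $B - uv$ is connected. If it were not, then $u$ and $v$ would lie in distinct components of $B - uv$ (deleting one edge splits at most one component, and its two pieces contain $u$ and $v$ respectively), i.e. $uv$ is a bridge. Choosing a third vertex $w$, which exists because $|V(B)| \ge 3$, say in the piece containing $u$, every $w$--$v$ path in $B$ must use the bridge $uv$ and hence pass through $u$; so $u$ separates $w$ from $v$ and is an articulation vertex, again contradicting biconnectedness. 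Thus $B - uv$ is connected and $P_2$ exists.

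I expect the case split to be the only real subtlety: the adjacent case cannot use the quoted Menger statement directly, and there the mild obstacle is the ``$B - uv$ is connected'' claim, which is exactly where both biconnectedness and the hypothesis $|V(B)| \ge 3$ are used (a biconnected graph on two vertices is a single edge and has no cycle at all). Everything after the two paths are in hand is routine.
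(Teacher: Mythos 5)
The paper does not prove this lemma; it is cited as a black box from West's textbook, so there is no in-paper argument to compare against. Your proof is correct. For $uv \notin E(B)$ you correctly argue that biconnectedness (together with connectedness) rules out separators of size $0$ or $1$, so the form of Menger's theorem quoted after Observation~\ref{obs:get-min-separator} produces two internally vertex-disjoint $u$--$v$ paths, whose union is a simple cycle on at least four vertices. For $uv \in E(B)$ the bridge argument correctly pinpoints where $|V(B)| \geq 3$ is used, and the resulting cycle has at least three vertices, as the paper's definition of a cycle requires.

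Two remarks worth keeping in mind. First, West's own proof of this result (Theorem 4.2.4, Whitney's characterization of $2$-connectedness) is a direct induction on the distance between $u$ and $v$ and does not invoke Menger's theorem, which appears later in that book as Theorem 4.2.17; your route through Menger therefore uses a stronger tool than the original, but it is entirely legitimate in this paper's setting, where Menger has already been made available via Observations~\ref{obs:get-max-internally-vertex-disjoint-paths} and~\ref{obs:get-min-separator}. Second, the paper's stated definition of biconnectedness (``does not contain articulation vertices'') does not formally require connectedness, and you silently assume $B$ is connected when dismissing the size-zero separator; this is the intended reading (otherwise the lemma would be false, e.g.\ for three isolated vertices), but it is worth being explicit that biconnectedness here is taken in the usual sense of $2$-connectedness, which includes connectedness.
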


\paragraph{Trees}
A tree $T$ is a connected acyclic graph. A forest is a graph in which each component is a tree. A tree/forest is linear when it has maximum degree 2. A vertex $\ell \in V(T)$ is a leaf when it has degree at most one. A vertex $t \in V(T)$ is called a branching vertex when it has degree at least three. A tree $T'$ contained in a tree $T$ is a subtree.

A rooted tree $T$ is a tree in which one vertex $r \in V(T)$ has been designated as the root. A vertex $u \in V(T)$ is a descendant of a vertex $v \in V(T)$ in a tree $T$ with root $r$ if $v$ is contained on the unique simple path from $u$ to $r$. A vertex $u \in N_T(v)$ is a child of $v \in V(T)$ if $u$ is a descendant of $v$. The subtree rooted at $r' \in V(T)$ of a tree rooted at $r$ is the tree induced by vertex $r'$ and all descendants of $r'$. 

A spanning tree $T$ of a graph $G$ is a tree with $V(G) = V(T)$. We note that a spanning tree of a connected graph $G$ can be found in $\polyG$ time (e.g. using depth-first search or breath-first search).

\begin{observation}\label{obs:get-spanning-tree}
    For any connected graph $G$ we can in $\polyG$ time obtain a spanning tree.
\end{observation}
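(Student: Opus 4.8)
The plan is to obtain the spanning tree as the ``search tree'' of a standard graph traversal. First I would fix an arbitrary vertex $r \in V(G)$ and run a breadth-first (or depth-first) search from $r$, maintaining a set $F$ of edges that is initially empty: each time the search first reaches a not-yet-visited vertex $v$ along an edge $uv$ from an already-visited vertex $u$, I add $uv$ to $F$ and mark $v$ visited. The output is the graph $T = (V(G), F)$.

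Next I would verify that $T$ is a spanning tree of $G$. Since $G$ is connected, the traversal visits every vertex, so $V(T) = V(G)$. Each vertex other than $r$ contributes exactly one edge to $F$ at the moment it is discovered, while $r$ contributes none, so $|F| = |V(G)| - 1$. Connectivity of $T$ follows by induction on the order of discovery: the edge added when $v$ is discovered joins $v$ to a vertex discovered earlier, which is already connected to $r$ within $T$ by the induction hypothesis. A connected graph on $n$ vertices with exactly $n - 1$ edges is acyclic, hence $T$ is a tree.

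For the running time, breadth-first or depth-first search examines each vertex and edge a constant number of times, and maintaining $F$ adds only constant overhead per examined edge, so the whole procedure runs in $\polyG$ time. The construction is entirely routine; the only point that needs more than a single sentence is the claim that $F$ forms a tree rather than merely a spanning connected subgraph, which is precisely the edge-count-plus-connectivity argument above.
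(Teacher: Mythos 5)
Your proposal is correct and matches the paper's approach exactly: the paper justifies this observation only by a one-line remark that a spanning tree can be found by depth-first or breadth-first search. Your more detailed verification that the collected discovery edges form a tree (spanning, connected, $n-1$ edges) is sound and simply fills in what the paper treats as routine.
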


\paragraph{Matching}
A matching $M$ in a graph $G$ is a subset of edges from $E(G)$ with no shared endpoints. A maximal matching can be found in a graph $G$ in $\polyG$ time (e.g. by greedily adding edges until the formed set is maximal).

\begin{observation}\label{obs:get-maximal-matching}
    For any graph $G$ we can in $\polyG$ time obtain a maximal matching.
\end{observation}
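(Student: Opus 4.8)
The plan is to exhibit the textbook greedy procedure and then verify its two defining properties together with the running-time bound. First I would initialise $M := \emptyset$ and keep, for each vertex of $G$, a Boolean flag recording whether it is currently an endpoint of some edge of $M$. I would then pass once over $E(G)$ in an arbitrary order; when the pass reaches an edge $e = uv$, if both flags of $u$ and $v$ are still false I add $e$ to $M$ and set both flags to true, and otherwise I leave $M$ unchanged. The output is the set $M$ obtained after the pass terminates.

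For correctness I would argue two claims. The set $M$ is a matching: an edge is only ever added when both of its endpoints are unmatched, and a vertex's flag is never reset, so no vertex can belong to two edges of $M$. The set $M$ is maximal: suppose toward a contradiction that $M \cup \{e\}$ is still a matching for some $e = uv \in E(G) \setminus M$. Then $u$ and $v$ are both unmatched in the final $M$; since a matched vertex stays matched, they were also unmatched at the moment the pass examined $e$, so the procedure would have added $e$ to $M$ — contradicting $e \notin M$. Hence no such $e$ exists and $M$ is maximal.

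For the time bound I would observe that the pass performs $|E(G)| \le \binom{|V(G)|}{2}$ iterations, each doing a constant number of flag look-ups and updates (plus the constant-time insertion into $M$), and that initialising the flag array and reading off $M$ is likewise within $\polyG$; therefore the whole procedure runs in $\polyG$ time. There is no real obstacle here: the statement is elementary, and the only points deserving a sentence of care are making the maximality argument precise (a vertex, once matched, is never unmatched again, so the final state of a flag determines its state at every later iteration) and confirming that the bookkeeping does not escape the claimed polynomial bound.
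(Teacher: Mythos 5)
Your proposal is correct and matches the paper's approach exactly: the paper justifies Observation~\ref{obs:get-maximal-matching} by the same greedy procedure of adding edges until the set is maximal (stated in the paragraph preceding the observation). You have simply spelled out the bookkeeping and the maximality argument in more detail, which is fine but not a different route.
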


\paragraph{Other}
A graph $K$ is a clique if $E(K) = \binom{V(K)}{2}$. We let $K_m$ for $m \in \nat$ denote a clique graph with $|V(K_m)| = m$. A bipartite graph $B$ is a graph with two disjoint vertex sets $L \cup R = V(B)$ and $E(B) \subseteq L \times R$. A ladder graph $L_n$ is a graph with $V(L_n) = \{p_1, p_2, \dots, p_n, q_1, q_2, \dots, q_n\}$ and $E(L_n) = \bigcup_{1 \leq i < n} \{p_ip_{i+1}, q_iq_{i+1}\} \cup \bigcup_{1 \leq i \leq n} \{p_iq_i\}$. I.e., $L_n$ can be seen as a $2 \times n$ grid.

A partitioning $\mathcal{U}$ of a graph $G$ is a set of pairwise disjoint vertex sets, such that $\bigcup_{U \in \mathcal{U}} U = V(G)$. Furthermore, we let a set $\mathcal{D}$ consisting of subgraphs of $G$ be a partitioning of $G$ when $\setdef{V(D)}{D \in \mathcal{D}}$ is a partitioning of $G$.

Two graphs $G$ and $H$ are isomorphic when there exists a bijection $f : V(G) \rightarrow V(H)$ such that for all $\{u,v\} \subseteq V(G)$ we have $uv \in E(G) \Longleftrightarrow f(u)f(v) \in E(H)$.

\section{Tree Decompositions and Treewidth}\label{sec:prelim:tree-decomposition-and-treewidth}
We introduce a notion of treewidth, which is a well-known graph parameter that measures the similarity between a graph and a tree. A relatively intuitive explanation behind treewidth is given by Fomin et al. \cite{Kernelization_Theory_of_Parameterized_Preprocessing}: Suppose we want to ``draw'' graph $G$ inside a tree $T$. If $G$ is not a tree then the drawing has to self-intersect. The treewidth parameter captures how much such a drawing will need to intersect itself.

We will first introduce the notion of a tree decomposition (based on the definition by Bodlaender et al. \cite{Preprocessing_for_Treewidth_A_Combinatorial_Analysis_through_Kernelization}), which is necessary to formally define the treewidth parameter.

\begin{definition}[tree decomposition]\label{def:tree-decomposition}
    A tree decomposition of a graph $G$ is a pair $(T,\chi)$, where $T$ is a tree and $\chi : V(T) \rightarrow 2^{V(G)}$ assigns to every node of $T$ a subset of $V(G)$ called a bag, such that:
    \begin{enumerate}
        \item $\bigcup_{t \in V(T)} \chi(t) = V(G)$
        \item for each $uv \in E(G)$ there exists some $t \in V(T)$ with $\{u,v\} \subseteq \chi(t)$
        \item for each $v \in V(G)$ nodes $\setdef{t \in V(T)}{v \in \chi(t)}$ induce a connected subtree of $T$
    \end{enumerate}
\end{definition}
\bigskip

\noindent
We show an example tree decomposition in Figure \ref{fig:example-tree-decomposition} (example by Chatterjee et al. \cite{Chatterjee2014OptimalTB}). The nodes contained in each bag $\chi(t)$ are drawn within their corresponding node $t$.

\begin{figure}[ht]
\centering
\begin{subfigure}{.5\textwidth}
  \centering
  \includegraphics[page=2, scale=.6]{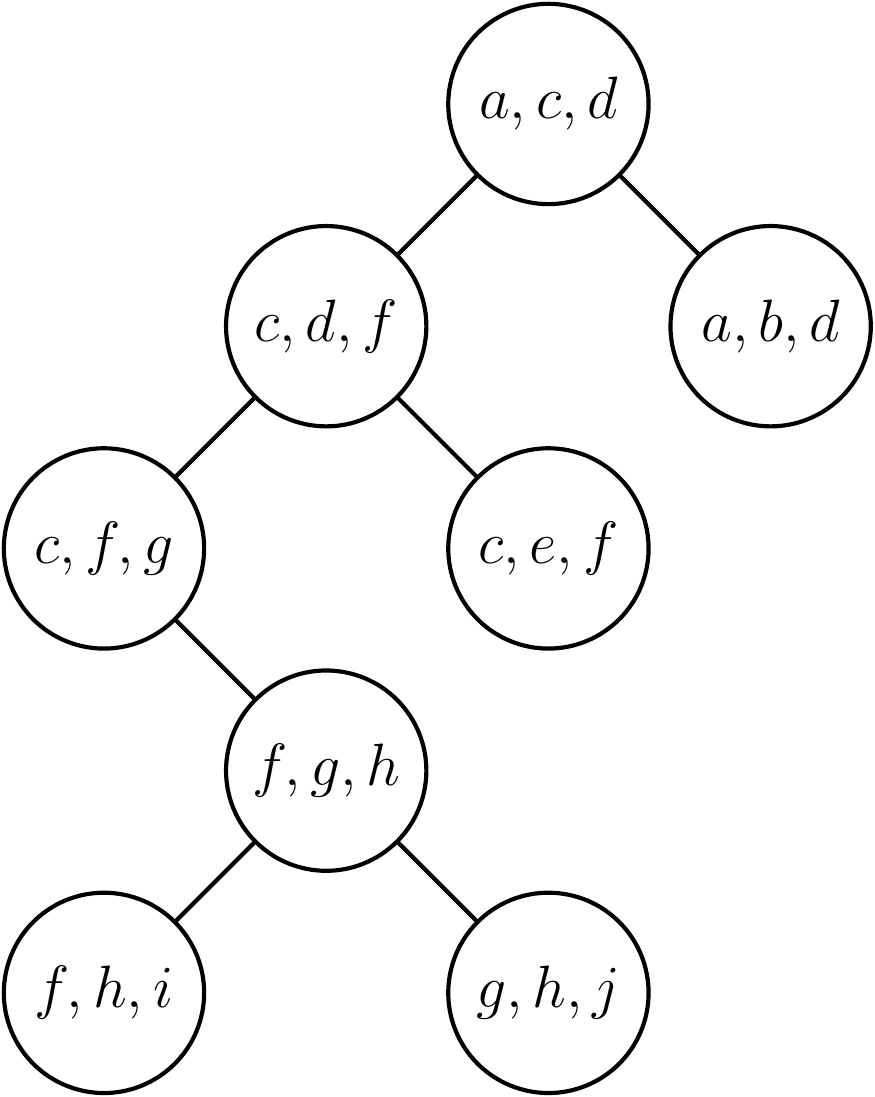}
  \caption{Graph $G$}
  \label{fig:example-tree-decomposition:graph}
\end{subfigure}%
\begin{subfigure}{.5\textwidth}
  \centering
  \includegraphics[page=1, scale=.6]{example-tree-decomposition.pdf}
  \caption{A tree decomposition $(T,\chi)$ of $G$}
  \label{fig:example-tree-decomposition:tree}
\end{subfigure}%
\caption{An example tree decomposition of width 2}
\label{fig:example-tree-decomposition}
\end{figure}

We also define a model function $\bchi : V(G) \rightarrow 2^{V(T)} \setminus \emptyset$ for tree decompositions $(T,\chi)$. A model function $\bchi$ is, in essence, the inverse of function $\chi : V(T) \rightarrow 2^{V(G)}$, although it maps individual vertices from $V(G)$ opposed to subsets of $V(G)$. We define $\bchi(v) = \setdef{t \in V(T)}{v \in \chi(t)}$. This definition yields that each $v \in V(G)$ and $t \in V(T)$ has $v \in \chi(t) \Longleftrightarrow t \in \bchi(v)$.

For any set $U \subseteq V(T)$ we let $\chi(U) = \bigcup_{t \in U} \chi(t)$ and for any subgraph $F$ of $T$ we let $\chi(F) = \chi(V(F))$. Similarly, for any set $U \subseteq V(G)$ we let $\bchi(U) = \bigcup_{v \in U} \bchi(v)$ and for any subgraph $H$ of $G$ we let $\bchi(H) = \bchi(V(H))$.

\noindent
The width of a tree decomposition $(T,\chi)$ equals $\max_{t \in V(T)} |\chi(t)| - 1$. A width-$\eta$ tree decomposition is a tree decomposition of width at most $\eta$. The treewidth $\treewidth{G}$ of a graph $G$ equals the minimum width of a tree decomposition of $G$.

Determining the treewidth of a graph $G$ is NP-complete \cite{Complexity-of-Finding-Embeddings-in-a-k-Tree}. Nevertheless, throughout this thesis we will only be interested in obtaining width-2 tree decompositions (assuming these exist). Hence, the correctness of the following lemma will suffice for our purpose.

\begin{lemma}\label{lemma:find-tree-decompositions}
	For any graph $G$ we can in $\polyG$ time obtain a width-2 tree decomposition of $G$ or derive $\treewidth{G} > 2$.
\end{lemma}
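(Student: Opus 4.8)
The plan is to reduce $G$ by the elementary rules that characterise the partial $2$-trees and, from a successful run, reconstruct an explicit width-$2$ tree decomposition. Call a vertex \emph{reducible} if it has degree at most $2$, and define its \emph{reduction}: if $\deg(v)\le 1$, delete $v$; if $\deg(v)=2$ with (necessarily distinct) neighbours $a,b$, delete $v$ and add the edge $ab$ if it is not already present. The algorithm records and applies reductions to reducible vertices until none remain. Since each reduction removes exactly one vertex, this halts after at most $|V(G)|$ steps, leaving either the empty graph or a nonempty graph of minimum degree at least $3$; in the former case we return a width-$2$ decomposition, in the latter we report $\treewidth{G}>2$.

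First I would prove soundness in the ``empty'' case. Write $G=G_0,G_1,\dots,G_m=\emptyset$ for the intermediate graphs, and build width-$\le 2$ tree decompositions of $G_m,G_{m-1},\dots,G_0$ backwards, starting from a single empty bag for $G_m$. To pass from a decomposition of $G_i$ to one of $G_{i-1}$, undo the $i$-th reduction of a vertex $v$: if its recorded neighbour set was $\{a\}$ or $\emptyset$, some bag of the current decomposition contains $a$ (property 2), and we attach to it a fresh leaf bag $\{v,a\}$ (resp.\ $\{v\}$); if it was $\{a,b\}$ then, because $ab\in E(G_i)$ by construction, some bag contains both $a$ and $b$, and we attach a fresh leaf bag $\{v,a,b\}$. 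Checking Definition~\ref{def:tree-decomposition} is routine --- the new vertex $v$ lies in a single (leaf) bag, the subtree condition for $a,b$ is preserved because the new bag is attached to a bag already containing them, the new edges at $v$ are covered, and every previously covered edge of $G_{i-1}$ still is --- and the new bag has size at most $3$, so the width stays $\le 2$. Thus $G_0=G$ gets a width-$2$ tree decomposition.

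Next I would prove soundness in the ``stuck'' case: if the run halts at a nonempty $H$ of minimum degree $\ge 3$, then $\treewidth{G}>2$. The crux is the sublemma that every nonempty graph of treewidth $\le 2$ has a vertex of degree $\le 2$: take a width-$\le 2$ decomposition with fewest nodes; if $T$ is a single node then $|V(H)|\le 3$ and we are done, otherwise a leaf bag $\chi(t)$ must contain a vertex $v$ absent from its neighbour's bag (else $t$ could be deleted), and by the subtree condition $v$ occurs only in $\chi(t)$, whence $N_H(v)\subseteq\chi(t)\setminus\{v\}$ has size $\le 2$. Hence $\treewidth{H}\ge 3$. Since vertex deletion does not increase treewidth and a degree-$2$ suppression exhibits $G_i$ as a minor of $G_{i-1}$ (contract the edge from $v$ to one of its neighbours), and treewidth is minor-monotone, $\treewidth{G}=\treewidth{G_0}\ge\treewidth{G_1}\ge\cdots\ge\treewidth{H}\ge 3$.

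For the running time, there are at most $|V(G)|$ reductions, each found and applied in $\polyG$ time, and the reconstruction inserts at most $|V(G)|$ bags, each time scanning the current (at most $(|V(G)|+1)$-node) tree for a suitable bag, so the whole procedure runs in $\polyG$ time. I expect the main obstacle to be the sublemma in the stuck case together with getting the reversibility of the reductions exactly right at the level of tree decompositions; an alternative to the sublemma is the classical fact that a graph of minimum degree $\ge 3$ contains a $K_4$-subdivision and hence a $K_4$ minor, which already forces treewidth $\ge 3$.
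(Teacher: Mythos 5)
Your proof is correct, but it takes a genuinely different route from the paper's. The paper discharges this lemma in one line by citing Bodlaender's linear-time algorithm, which for any fixed $\eta$ either produces a width-$\eta$ tree decomposition or certifies $\treewidth{G} > \eta$. You instead give a self-contained, elementary argument tailored to $\eta = 2$, using the classical characterisation of partial $2$-trees by iterative degree-$\leq 2$ reductions: if the reductions exhaust the graph, you reconstruct a width-$2$ decomposition by running the reductions backwards (attaching a fresh leaf bag $\{v,a,b\}$ to a bag already containing $a,b$, which exists because the added edge $ab$ forces it); if the process stalls at a nonempty graph of minimum degree $\geq 3$, your sublemma (every nonempty graph of treewidth $\leq 2$ has a vertex of degree $\leq 2$, proved via a node-minimal decomposition) together with the fact that each reduction yields a minor and treewidth is minor-monotone (Lemma~\ref{lemma:minor-treewidth}) gives $\treewidth{G} > 2$. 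All the steps check out, including the reversibility of the degree-$2$ suppression as an edge contraction. What your approach buys is self-containedness and a concretely implementable algorithm with small constants — notable since the paper itself remarks that Bodlaender's algorithm has impractically large hidden constants — at the cost of a longer proof; what the paper's citation buys is brevity and generality to arbitrary constant $\eta$, which is more than this paper needs.
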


\begin{proof}
	As follows from the works by Bodlaender \cite{Bodlaender_linear-time_algorithm_for_finding_tree-decomposition_of_small_treewidth}, if $\eta$ is a constant then there is a linear-time algorithm that given a graph $G$ outputs a width-$\eta$ tree decomposition or correctly concludes that $\treewidth{G} > \eta$. For $\eta = 2$ we can therefore in $\polyG$ time obtain a width-2 tree decomposition of $G$ or derive $\treewidth{G} > 2$.
\end{proof}

\noindent
It should, however, be noted that Bodlaender \cite{Bodlaender_linear-time_algorithm_for_finding_tree-decomposition_of_small_treewidth} states that the constant factor of this algorithm is too large for practical purposes, and no practically feasible algorithm is known \cite{Bodlaender2012TheMA}.

Throughout this thesis we will frequently make use of tree decompositions. We list some known properties about tree decompositions and derive some properties that frequently reoccur in proofs within later sections.

\begin{lemma}[Lemma 14.10 in \cite{Kernelization_Theory_of_Parameterized_Preprocessing}]
    Let $(T,\chi)$ be a tree decomposition of $G$ and let $H$ be a connected subgraph of $G$. Then $T[\bchi(H)]$ is connected.
\end{lemma}

\begin{corollary}[Corollary 14.11 in \cite{Kernelization_Theory_of_Parameterized_Preprocessing}]\label{corollary:treedecomp-node-separator}
    Let $(T,\chi)$ be a tree decomposition of $G$, let $t \in V(T)$, and let $C \in \mathcal{C}(G - \chi(t))$. There exists some component $T' \in \mathcal{C}(T - t)$ with $\bchi(C) \subseteq V(T')$.
\end{corollary}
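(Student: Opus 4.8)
The claim is Corollary~\ref{corollary:treedecomp-node-separator}: given a tree decomposition $(T,\chi)$ of $G$, a node $t \in V(T)$, and a component $C \in \mathcal{C}(G - \chi(t))$, there is a component $T' \in \mathcal{C}(T - t)$ with $\bchi(C) \subseteq V(T')$. The plan is to derive this directly from the preceding lemma (Lemma~14.10), which states that for a connected subgraph $H$ of $G$, the induced subtree $T[\bchi(H)]$ is connected. Since $C$ is a component of $G - \chi(t)$, it is in particular a connected subgraph of $G$, so $T[\bchi(C)]$ is connected.

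**Main argument.** First I would observe that no vertex of $C$ lies in the bag $\chi(t)$: by definition $V(C) \subseteq V(G) \setminus \chi(t)$, so for every $v \in V(C)$ we have $t \notin \bchi(v)$, hence $t \notin \bchi(C)$. Combined with the fact that $T[\bchi(C)]$ is connected, this means $\bchi(C)$ is a connected vertex subset of $T$ avoiding $t$, so it is entirely contained in a single component of $T - t$. Spelling that out: pick any $t_0 \in \bchi(C)$ (nonempty since $C$ is nonempty and every vertex has a nonempty bag-set), and let $T'$ be the component of $T - t$ containing $t_0$. For any other $t_1 \in \bchi(C)$, connectivity of $T[\bchi(C)]$ gives a path in $T$ from $t_0$ to $t_1$ using only nodes of $\bchi(C)$, and since $t \notin \bchi(C)$ this path avoids $t$, so it survives in $T - t$; therefore $t_1$ lies in the same component as $t_0$, i.e. $t_1 \in V(T')$. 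Hence $\bchi(C) \subseteq V(T')$.

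**Where the work is.** This is essentially a bookkeeping argument on top of Lemma~14.10, so there is no real obstacle — the only thing to be careful about is confirming that $C$ being a \emph{component} of $G - \chi(t)$ (not merely a connected subgraph of that graph) is not actually needed: connectedness of $C$ as a subgraph of $G$ suffices for Lemma~14.10, and the disjointness from $\chi(t)$ is what forces the containment in a single component of $T - t$. If one wanted to also note uniqueness of $T'$, that follows because distinct components of $T - t$ are vertex-disjoint. I would present the proof in two or three sentences: invoke Lemma~14.10 on the connected subgraph $C$, note $t \notin \bchi(C)$, and conclude that the connected set $\bchi(C)$ lies in one component of $T - t$.
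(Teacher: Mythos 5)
Your proof is correct and is exactly the standard derivation: the paper states this result as a citation (Corollary 14.11 in the Kernelization textbook) immediately following the cited Lemma 14.10, and the source's argument is precisely the one you give — $T[\bchi(C)]$ is connected by Lemma 14.10, $t \notin \bchi(C)$ since $V(C) \cap \chi(t) = \emptyset$, so $\bchi(C)$ lies in one component of $T - t$. No gap; same approach.
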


\begin{corollary}[Lemma 12.3.1 in \cite{diestel_graph_theory}]\label{corollary:treedecomp-edge-separator}
    Let $(T,\chi)$ be a tree decomposition of $G$, let $ab \in E(T)$, and let $\{T_a, T_b\} = \mathcal{C}(T - ab)$ be such that $a \in V(T_a) \wedge b \in V(T_b)$. Then $\chi(a) \cap \chi(b)$ separates $\chi(T_a) \setminus \chi(a)$ from $\chi(T_b) \setminus \chi(b)$ in $G$.
\end{corollary}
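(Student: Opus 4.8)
The plan is to argue by contradiction. Suppose $\chi(a) \cap \chi(b)$ does \emph{not} separate $\chi(T_a) \setminus \chi(a)$ from $\chi(T_b) \setminus \chi(b)$ in $G$; then there is a path $P$ in $G - (\chi(a) \cap \chi(b))$ from some $u \in \chi(T_a) \setminus \chi(a)$ to some $v \in \chi(T_b) \setminus \chi(b)$. The whole argument tracks, for each vertex $w$ of $P$, where in $T$ its model set $\bchi(w)$ lives; recall that by property 3 of Definition~\ref{def:tree-decomposition} the set $\bchi(w)$ induces a nonempty connected subtree of $T$.

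First I would classify each $w \in V(P)$. Since $T_a$ and $T_b$ are the two components of $T - ab$ and $\bchi(w)$ is connected, exactly one of the following holds: $\bchi(w) \subseteq V(T_a)$, or $\bchi(w) \subseteq V(T_b)$, or $\bchi(w)$ meets both $V(T_a)$ and $V(T_b)$. In the third case, a connected subgraph of $T$ meeting both sides of the cut edge $ab$ must contain that edge, hence both endpoints, so $a,b \in \bchi(w)$, i.e. $w \in \chi(a) \cap \chi(b)$; but $P$ avoids $\chi(a) \cap \chi(b)$, so this case cannot occur. Thus every vertex of $P$ is cleanly on the $T_a$-side or the $T_b$-side, and (as $V(T_a) \cap V(T_b) = \emptyset$) never both.

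Next I would pin down the endpoints and walk the path. Since $u \in \chi(T_a)$, the set $\bchi(u)$ meets $V(T_a)$; since $u \notin \chi(a)$, we have $a \notin \bchi(u)$, so $\bchi(u)$ cannot reach $V(T_b)$ (such a connected subtree would have to pass through $a$), giving $\bchi(u) \subseteq V(T_a)$; symmetrically $\bchi(v) \subseteq V(T_b)$. Now for any edge $ww' \in E(P)$, property 2 of Definition~\ref{def:tree-decomposition} yields a bag $\chi(t)$ with $\{w,w'\} \subseteq \chi(t)$, so $t \in \bchi(w) \cap \bchi(w')$ and in particular $\bchi(w)$ and $\bchi(w')$ are not on opposite sides of the classification. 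Consequently all vertices of $P$ lie on the same side, contradicting $\bchi(u) \subseteq V(T_a)$ and $\bchi(v) \subseteq V(T_b)$. This contradiction proves the corollary. (One could also try to derive the statement from Corollary~\ref{corollary:treedecomp-node-separator}, but the direct argument above is cleaner.)

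The only step requiring genuine care is the structural claim that a connected subtree of $T$ meeting both $V(T_a)$ and $V(T_b)$ must contain $a$ and $b$ — this is precisely where it matters that $ab$ is the \emph{unique} edge of $T$ between $T_a$ and $T_b$, and it is what forces any "crossing" vertex of $P$ into the forbidden set $\chi(a) \cap \chi(b)$. Everything else is bookkeeping with the three tree-decomposition axioms.
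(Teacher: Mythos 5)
The paper does not prove this statement; it simply cites Diestel's Lemma 12.3.1, so there is no in-paper proof to compare against. Your argument is correct and is essentially the standard textbook proof: classify each vertex $w$ on a putative $u$--$v$ path by which side of the $T_a$/$T_b$ cut the connected subtree $T[\bchi(w)]$ lies, note that a vertex straddling both sides would have $\{a,b\} \subseteq \bchi(w)$ and hence $w \in \chi(a) \cap \chi(b)$ (forbidden), pin down that $\bchi(u) \subseteq V(T_a)$ and $\bchi(v) \subseteq V(T_b)$ from the assumptions $u \notin \chi(a)$, $v \notin \chi(b)$, and then propagate the side along consecutive path edges via the shared-bag condition to reach a contradiction. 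The crux you correctly isolate --- that a connected subtree of $T$ meeting both $V(T_a)$ and $V(T_b)$ must contain the edge $ab$ because $T$ is a tree and $ab$ is the unique edge crossing that cut --- is precisely the uniqueness argument Diestel's proof also rests on. All the bookkeeping checks out, including the implicit disjointness requirements $(\chi(T_a) \setminus \chi(a)) \cap (\chi(T_b) \setminus \chi(b)) = \emptyset$ and disjointness of both sets from $\chi(a) \cap \chi(b)$, which follow from the same straddling observation.
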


\begin{lemma}\label{lemma:bchi-connected-in-subtree}
    Let $G$ be a graph, let $(T,\chi)$ be a tree decomposition of $G$, let $T'$ be a subtree of $T$, and let $v \in V(G)$. Then $T'[\bchi(v)]$ is connected. 
\end{lemma}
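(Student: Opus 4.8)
The plan is to reduce the statement to the elementary fact that the intersection of two subtrees of a tree is again connected (a subtree), applied to the subtrees $T'$ and $T[\bchi(v)]$ of $T$.

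First I would recall the relevant definitions. By definition $\bchi(v) = \setdef{t \in V(T)}{v \in \chi(t)}$, so $T'[\bchi(v)]$ is the subgraph of $T'$ induced on $V(T') \cap \bchi(v)$. By the third axiom of Definition~\ref{def:tree-decomposition}, the set $\bchi(v)$ induces a connected subtree $T[\bchi(v)]$ of $T$. The key observation I would use is that any connected subgraph $S$ of a tree $T$ is itself a tree, and therefore contains the unique $T$-path between every pair of its vertices: if some vertex on that path were missing from $S$, then $S$ would be disconnected, since in a tree there is no alternative route. In particular this applies both to $T'$ (a subtree of $T$ by hypothesis) and to $T[\bchi(v)]$.

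Next I would take two arbitrary vertices $a, b \in V(T') \cap \bchi(v)$ and let $P$ be the unique path between $a$ and $b$ in $T$. Since $a, b \in V(T')$ and $T'$ is a subtree of $T$, the path $P$ is contained in $T'$; since $a, b \in \bchi(v)$ and $T[\bchi(v)]$ is connected, the path $P$ is contained in $T[\bchi(v)]$, i.e. $V(P) \subseteq \bchi(v)$. Hence every vertex of $P$ lies in $V(T') \cap \bchi(v)$ and every edge of $P$ lies in $E(T')$, so $P$ is a path in $T'[\bchi(v)]$ witnessing that $a$ and $b$ are connected there. As $a, b$ were arbitrary, $T'[\bchi(v)]$ is connected (and if $V(T') \cap \bchi(v) = \emptyset$ the claim holds vacuously).

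I do not expect a genuine obstacle here: the argument is essentially a restatement of the Helly property of subtrees of a tree, and the only points requiring a little care are the bookkeeping about which graph the induced subgraph is taken in and the degenerate empty case. I would also remark that the lemma generalizes axiom~3 of Definition~\ref{def:tree-decomposition}, which is the special case $T' = T$.
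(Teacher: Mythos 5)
Your proof is correct and follows essentially the same route as the paper: both take $a,b \in V(T')\cap\bchi(v)$, use uniqueness of paths in a tree, and observe that the path lies in both $T'$ and $T[\bchi(v)]$ simultaneously. The only cosmetic difference is that the paper starts from the path in $T'$ and then notes it is the unique $T$-path, whereas you start from the unique $T$-path and then push it into both subtrees; the underlying idea is identical.
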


\begin{proof}
    Let $a,b \in V(T') \cap \bchi(v)$. Between any two nodes in a tree there exists exactly one simple path (Theorem 2.1.4 in \cite{Introduction-to-Graph-Theory}). We let $P$ be the path that connects $a$ and $b$ in $T'$. Because $T'$ is a subgraph of $T$ we have that $P$ is the only path in $T$ that connects $a$ and $b$. By Definition \ref{def:tree-decomposition} we know that $a$ and $b$ are connected in $T[\bchi(v)]$, which therefore implies $V(P) \subseteq \bchi(v)$. Hence, $P$ is a path that connects $a$ and $b$ in $T'[\bchi(v)]$, proving that $T'[\bchi(v)]$ is connected.
\end{proof}

\begin{lemma}[Proposition 3 in \cite{Preprocessing_for_Treewidth_A_Combinatorial_Analysis_through_Kernelization}]\label{lemma:clique-shares-bag}
    If $K$ is a clique subgraph of $G$, then any tree decomposition of $G$ has a bag that contains all vertices $V(K)$.
\end{lemma}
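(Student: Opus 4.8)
The plan is to prove the statement by induction on $|V(K)|$. Fix a tree decomposition $(T,\chi)$ of $G$ with model function $\bchi$. I will use only two facts about it: condition 3 of Definition \ref{def:tree-decomposition}, that for every $v \in V(G)$ the node set $\bchi(v)$ induces a connected subtree of $T$; and condition 2, that for every edge $uv \in E(G)$ some bag contains both $u$ and $v$, equivalently $\bchi(u) \cap \bchi(v) \neq \emptyset$. The base cases $|V(K)| \leq 2$ are immediate: for $|V(K)| \leq 1$ any bag works (one exists whenever $V(G) \neq \emptyset$), and for $|V(K)| = 2$ condition 2 directly provides the required bag.

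For the inductive step, write $V(K) = \{v_1, \dots, v_k\}$ with $k \geq 3$. Applying the induction hypothesis to the cliques $K - v_k$ and $K - v_1$ produces nodes $a,b \in V(T)$ with $\{v_1,\dots,v_{k-1}\} \subseteq \chi(a)$ and $\{v_2,\dots,v_k\} \subseteq \chi(b)$. If $a = b$ we are done, so assume otherwise and let $a = t_0, t_1, \dots, t_\ell = b$ be the unique simple path from $a$ to $b$ in $T$. Each of $v_2, \dots, v_{k-1}$ lies in $\chi(a) \cap \chi(b)$, hence, by connectedness of $\bchi(v_i)$, it lies in $\chi(t_j)$ for every $0 \leq j \leq \ell$; so it suffices to find a single node $t_j$ on this path with $\{v_1,v_k\} \subseteq \chi(t_j)$.

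Let $i^\ast$ be the largest index with $v_1 \in \chi(t_{i^\ast})$ and $j^\ast$ the smallest index with $v_k \in \chi(t_{j^\ast})$; these are well defined because $v_1 \in \chi(t_0)$ and $v_k \in \chi(t_\ell)$. If $i^\ast \geq j^\ast$, then any $t_j$ with $j^\ast \leq j \leq i^\ast$ contains $v_1$ and $v_k$ (both by connectedness along the path) together with all of $v_2,\dots,v_{k-1}$, finishing the step. Suppose instead $i^\ast < j^\ast$. Deleting the edge $t_{i^\ast} t_{i^\ast+1}$ splits $T$, by Corollary \ref{corollary:treedecomp-edge-separator}, into components $T_A \ni t_{i^\ast}$ and $T_B \ni t_{i^\ast+1}$, with $a \in V(T_A)$ and $b \in V(T_B)$. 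Since $\bchi(v_1)$ is connected, contains $t_0 \in V(T_A)$, and omits $t_{i^\ast+1}$ by maximality of $i^\ast$, it is disjoint from $V(T_B)$; symmetrically $\bchi(v_k)$, being connected, containing $t_\ell \in V(T_B)$, and omitting $t_{i^\ast}$ since $j^\ast > i^\ast$, is disjoint from $V(T_A)$. But $v_1 v_k \in E(G)$ as $K$ is a clique, so by condition 2 some bag $\chi(t^\ast)$ contains both $v_1$ and $v_k$; then $v_1$ forces $t^\ast \in V(T_A)$ and $v_k$ forces $t^\ast \in V(T_B)$, a contradiction. Hence $i^\ast \geq j^\ast$ always, completing the induction.

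The only delicate point — the part I would write out most carefully — is the final paragraph's reachability bookkeeping: that $\bchi(v_1)$ cannot cross the edge $t_{i^\ast} t_{i^\ast+1}$ to the $b$-side and $\bchi(v_k)$ cannot cross it to the $a$-side, which is precisely where connectedness of the $\bchi(\cdot)$ subtrees (condition 3, and Corollary \ref{corollary:treedecomp-edge-separator}) does the work. A cleaner-looking but ultimately equivalent alternative is to argue via the Helly property of subtrees of a tree: the family $\{\, T[\bchi(v)] : v \in V(K) \,\}$ consists of pairwise-intersecting subtrees (pairwise intersection comes from condition 2 and the clique assumption), a pairwise-intersecting family of subtrees of a tree has a common node $t$, and then $V(K) \subseteq \chi(t)$; this merely moves the same induction into the proof of the Helly property.
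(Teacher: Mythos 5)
Your proof is correct. Note that the paper does not prove this statement at all: it is quoted as Proposition~3 from Bodlaender et al.\ \cite{Preprocessing_for_Treewidth_A_Combinatorial_Analysis_through_Kernelization}, so there is no in-paper proof to compare against. What you give is the standard argument (essentially the Helly property for subtrees of a tree, unrolled as an induction on $|V(K)|$), and the reasoning is sound: the induction hypothesis supplies nodes $a$ and $b$ covering $V(K) \setminus \{v_k\}$ and $V(K) \setminus \{v_1\}$, the middle vertices $v_2,\dots,v_{k-1}$ propagate along the whole $a$--$b$ path by condition~3 of Definition~\ref{def:tree-decomposition}, and the case $i^\ast < j^\ast$ is correctly ruled out by the edge-separation argument combined with $v_1 v_k \in E(G)$ and condition~2. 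One small point worth spelling out in a final write-up is the step ``$\bchi(v_1)$ omits $t_{i^\ast+1}$ hence is disjoint from $V(T_B)$'': maximality of $i^\ast$ only directly excludes nodes \emph{on the $a$--$b$ path} from $\bchi(v_1)$, so you need the one extra sentence observing that any path in $T$ from $t_0$ to a node of $T_B$ must pass through $t_{i^\ast+1}$, which you clearly have in mind but compress. Your closing remark correctly identifies that the Helly-property phrasing is the same proof repackaged.
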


\begin{lemma}[Lemma 1 in \cite{Preprocessing_for_Treewidth_A_Combinatorial_Analysis_through_Kernelization}]\label{lemma:connect-treewidth-components-around-clique}
    Let $K$ be a clique subgraph in graph $G$. Then $\treewidth{G} = \max_{C \in \mathcal{C}(G - K)} \treewidth{G[V(C) \cup V(K)]}$.
\end{lemma}

\noindent
Because we will generalise some consequences of Lemma \ref{lemma:connect-treewidth-components-around-clique} we will first provide an informal proof why Lemma \ref{lemma:connect-treewidth-components-around-clique} is correct. Consider a clique subgraph $K$ of $G$ such that each $C \in \mathcal{C}(G - K)$ has $\treewidth{G[V(C) \cup V(K)]} \leq \eta$. For each component $C \in \mathcal{C}(G - K)$ there exists a width-$\eta$ tree decomposition for $G[V(C) \cup V(K)]$. From Lemma \ref{lemma:clique-shares-bag} it follows each of these tree decompositions contain a bag with all vertices $V(K)$. We can create a new tree decomposition with a bag $V(K)$ that connects all tree decompositions $G[V(C) \cup V(K)]$. The tree decomposition formed by this construction then is a tree decomposition of width $\eta$.

An example of this construction is shown in Figure \ref{fig:connect-treewidth-components-around-clique} where width-2 tree decompositions of $G[\{a,b,c,d,f\}]$, $G[\{c,e,f\}]$, and $G[\{c,f,g,h,i,j\}]$ can be connected to a bag $V(K)$ to form a width-2 tree decomposition of $G$.

\bigskip
\begin{figure}[ht]
\centering
\begin{subfigure}{.29\textwidth}
  \centering
  \includegraphics[page=4, scale=.6]{example-tree-decomposition.pdf}
  \caption{Graph $G$ with clique $K$}
  \label{fig:connect-treewidth-components-around-clique:graph}
\end{subfigure}%
\hspace{.05\textwidth}
\begin{subfigure}{.62\textwidth}
  \centering
  \includegraphics[page=3, scale=.6]{example-tree-decomposition.pdf}
  \caption{Tree decomposition formed by combining width-2 tree decompositions of $G[\{a,b,c,d,f\}]$, $G[\{c,e,f\}]$, and $G[\{c,f,g,h,i,j\}]$}
  \label{fig:connect-treewidth-components-around-clique:tree}
\end{subfigure}%
\caption{Example application of Lemma \ref{lemma:connect-treewidth-components-around-clique} with $\treewidth{G} \leq 2$}
\label{fig:connect-treewidth-components-around-clique}
\end{figure}
\bigskip

\noindent
We will often want to apply Lemma \ref{lemma:connect-treewidth-components-around-clique} to show that two (or more) induced subgraphs $H_1$ and $H_2$ of $G$ can be combined into a larger induced subgraph in $G$ of given treewidth $\eta$. The condition that in this case needs to be satisfied is that $G[V(H_1) \cap V(H_2)]$ is a clique, that $V(H_1) \cap V(H_2)$ is a separator for $V(H_1) \setminus V(H_2)$ and $V(H_2) \setminus V(H_1)$ in $G[V(H_1) \cup V(H_2)]$, and that $\treewidth{H_1} \leq 2 \wedge \treewidth{H_2} \leq 2$. This, because when these conditions are satisfied, we have $\mathcal{C}(G - (V(H_1) \cap V(H_2))) = \{H_1 - H_2, H_2 - H_1\}$, in which case $\treewidth{G[V(H_1) \cup V(H_2)]}$ follows from \ref{lemma:connect-treewidth-components-around-clique}. This lemma, however, is rather cumbersome to apply, as it requires proving these properties many times. For this reason, we prove the following (easier to apply) theorem:

\begin{theorem}\label{theorem:connect-treewidth-graphs}
    Let $G$ be a graph and let $U \subseteq V(G)$ with $\treewidth{G[U]} \leq \eta$. If each $C \in \mathcal{C}(G - U)$ has $\treewidth{G[N_G[C]]} \leq \eta$ and $G[N_G(C)]$ is a clique, then $\treewidth{G} \leq \eta$.
\end{theorem}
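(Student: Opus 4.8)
The plan is to build a width-$\eta$ tree decomposition of $G$ by repeatedly gluing, along clique bags, the tree decompositions of the pieces $G[N_G[C]]$ to a tree decomposition of $G[U]$, using Lemma~\ref{lemma:connect-treewidth-components-around-clique} (or rather the gluing argument underlying it) as the engine. First I would fix a width-$\eta$ tree decomposition $(T_U,\chi_U)$ of $G[U]$, and for each component $C\in\mathcal{C}(G-U)$ a width-$\eta$ tree decomposition $(T_C,\chi_C)$ of $G[N_G[C]]$, which exist by hypothesis. Since $G[N_G(C)]$ is a clique and $N_G(C)\subseteq U$, Lemma~\ref{lemma:clique-shares-bag} gives a bag $b_C\in V(T_U)$ of $(T_U,\chi_U)$ with $N_G(C)\subseteq\chi_U(b_C)$, and also a bag $b_C'\in V(T_C)$ with $N_G(C)\subseteq\chi_C(b_C')$. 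The construction is then: take the disjoint union of $T_U$ and all the $T_C$, and for each $C$ add the edge $b_Cb_C'$; call the result $(T,\chi)$ with $\chi$ inheriting the bag assignments. Each bag still has size at most $\eta+1$, so the width is right; it remains to check the three tree-decomposition axioms.

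For the axioms: every vertex of $G$ lies in $U$ or in some (unique) $C$, hence appears in some bag of $T_U$ or of the corresponding $T_C$ — this gives axiom~1, after noting $V(G)=U\cup\bigcup_C V(C)$ because $\mathcal{C}(G-U)$ partitions $G-U$. For axiom~2, take an edge $uv\in E(G)$. If both endpoints lie in $U$, it is covered inside $T_U$; if both lie in the same component $C$, or one lies in $C$ and the other in $N_G(C)\subseteq N_G[C]$, it is covered inside $T_C$; and there can be no edge with endpoints in two distinct components, nor an edge from $C$ to $U\setminus N_G(C)$, by definition of components and neighbourhoods. The only delicate axiom is~3 (connectivity of $\bchi(v)$). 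For $v\in V(C)\setminus N_G(C)$ this is immediate since $v$ appears only in $T_C$-bags and $T_C$ is a valid decomposition; for $v\in V(C)\cap N_G(C)$ — impossible, actually, since $N_G(C)\cap V(C)=\emptyset$, so every $v$ is either in $U$ or in exactly one $C$. For $v\in U$: the bags containing $v$ form a connected subtree of $T_U$ (from $(T_U,\chi_U)$), plus, for each component $C$ with $v\in N_G(C)$, a connected subtree of $T_C$; these $T_C$-subtrees are each attached to $T_U$ via the single edge $b_Cb_C'$, and the $T_U$-side endpoint $b_C$ satisfies $v\in\chi_U(b_C)$ (as $v\in N_G(C)\subseteq\chi_U(b_C)$) while the $T_C$-side endpoint $b_C'$ satisfies $v\in\chi_C(b_C')$, so the attachment edge lies inside $\{t : v\in\chi(t)\}$ and the union stays connected.

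The main obstacle I anticipate is precisely making the axiom~3 argument airtight: one must argue that gluing a subtree of $T_C$ to $T_U$ through the single bridge edge $b_Cb_C'$ cannot create two "islands" of $v$-bags, and that distinct components $C\neq C'$ contribute $T_C$- and $T_{C'}$-subtrees that are attached at possibly different nodes of $T_U$ but are connected through the (connected) $v$-subtree of $T_U$ — so I would phrase it as: $\{t\in V(T) : v\in\chi(t)\}$ is the union of the connected set $S_U\subseteq V(T_U)$ with connected sets $S_C\subseteq V(T_C)$, where each $S_C$ is joined to $S_U$ by the edge $b_Cb_C'$ with both endpoints in the union; a union of connected subtrees pairwise meeting a common connected subtree (through edges internal to the union) is connected. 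A cleaner alternative, if I want to avoid bookkeeping over many components at once, is induction on $|\mathcal{C}(G-U)|$: attach one component's decomposition, verify the three axioms for that single gluing using Theorem's hypotheses on that $C$ together with $\treewidth{G[U]}\le\eta$, observe the result is a width-$\eta$ decomposition of $G[U\cup V(C)]$, and note $U\cup V(C)$ is again a set whose removal leaves $\mathcal{C}(G-U)\setminus\{C\}$ with the same clique-neighbourhood and treewidth properties, so the inductive hypothesis finishes the job.
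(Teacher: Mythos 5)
Your proof is correct, and your main route is genuinely different from the paper's. The paper proves the theorem by induction on $|V(G)\setminus U|$: at each step it picks one component $C$, sets $U'=U\cup V(C)$, invokes Lemma~\ref{lemma:connect-treewidth-components-around-clique} as a black box (on the clique $N_G(C)$) to get $\treewidth{G[U']}\le\eta$, observes $\mathcal{C}(G-U')=\mathcal{C}(G-U)\setminus\{C\}$ still satisfies the hypotheses, and recurses. Your primary construction instead builds the full tree decomposition in one shot, gluing every $(T_C,\chi_C)$ onto $(T_U,\chi_U)$ simultaneously through bridge edges $b_Cb_C'$; this is really the ``informal proof'' of Lemma~\ref{lemma:connect-treewidth-components-around-clique} that the paper gives just before the theorem (and illustrates in Figure~\ref{fig:connect-treewidth-components-around-clique}), applied globally. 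Your sketched fallback --- induction on $|\mathcal{C}(G-U)|$, peeling one component at a time --- is essentially identical to the paper's argument. The trade-off: the paper's inductive version is shorter because it re-uses the cited lemma, while your direct version is more self-contained and makes the tree-decomposition axioms explicit. You correctly identify axiom~3 as the delicate one, and the resolution you give is right: since $b_C$ and $b_C'$ were chosen to contain all of $N_G(C)$, whenever $v\in N_G(C)$ both bridge endpoints contain $v$, so the bridge edge lies inside $T[\bchi(v)]$ and the connected $v$-subtrees of $T_U$ and each relevant $T_C$ are joined through $S_U$. One tiny edge case worth a sentence in a polished write-up: if $U=\emptyset$ your glued object is a forest rather than a tree (there is no $T_U$ to serve as a hub), which is easily patched by adding a single empty-bag root, or by noting that in this case $\treewidth{G}=\max_{C\in\mathcal{C}(G)}\treewidth{C}$ and the hypothesis already bounds each term; the paper's induction avoids this because its base case is $U=V(G)$, not $U=\emptyset$.
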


\begin{proof}
    We prove using induction on $V(G) \setminus U$.
    
    \begin{basecase}[$U = V(G)$]
        Trivial.
    \end{basecase}
    
    \begin{inductivestep}[$U \subset V(G)$]
        Let $C \in \mathcal{C}(G - U)$ and $U' = U \cup V(C)$. Because $G[N_G(C)]$ is a clique Lemma \ref{lemma:connect-treewidth-components-around-clique} yields $\treewidth{G[U']} = \max_{D \in \mathcal{C}(G[U'] - N_G(C))} \treewidth{G[V(D) \cup N_G(C)]}$. We have that $N_G(C)$ separates $V(C)$ from $U \setminus N_G(C)$ in $G[U']$. Hence, each component $D \in \mathcal{C}(G[U'] - N_G(C))$ is a subgraph of either $C$ or $G[U]$. If $D$ is a subgraph of $C$ we derive, by Theorem 14.15 in \cite{Kernelization_Theory_of_Parameterized_Preprocessing}, that $\treewidth{G[V(D) \cup N_G(C)]} \leq \treewidth{G[N_G[C]]} \leq \eta$. If $D$ is a subgraph of $G[U]$ we derive, by Theorem 14.15 in \cite{Kernelization_Theory_of_Parameterized_Preprocessing}, that $\treewidth{G[V(D) \cup N_G(C)]} \leq \treewidth{G[U]} \leq \eta$. Hence, we have $\treewidth{G[U']} = \max_{D \in \mathcal{C}(G[U'] - N_G(C))} \treewidth{G[V(D) \cup N_G(C)]} \leq \eta$.
    
        We have $\mathcal{C}(G - U') = \mathcal{C}(G - U) \setminus \{C\}$, which implies that each $D \in \mathcal{C}(G - U')$ has $\treewidth{G[N_G[D]]} \leq \eta$ and $G[N_G(D)]$ is a clique. Hence, from the induction hypothesis it follows that $\treewidth{G} \leq \eta$.
    \end{inductivestep}
    
    \noindent
    Because these cases cover all cases, correctness trivially follows.
\end{proof}

\noindent
We will frequently apply Theorem \ref{theorem:connect-treewidth-graphs}. Hence, for brevity, we say that we can apply Theorem \ref{theorem:connect-treewidth-graphs} on $(G,U)$ to derive $\treewidth{G} \leq \eta$ when Theorem \ref{theorem:connect-treewidth-graphs} on graph $G$ and vertex set $U$ allows us to derive $\treewidth{G} \leq \eta$. Similarly, for a graph $G$ and a subgraph $H$ of $G$ we say that we can apply Theorem \ref{theorem:connect-treewidth-graphs} on $(G,H)$ to derive $\treewidth{G} \leq \eta$ when we can apply Theorem \ref{theorem:connect-treewidth-graphs} on $(G, V(H))$ to derive $\treewidth{G} \leq \eta$.

\section{Treewidth-2 Vertex Deletion}\label{sec:treewidth-2 vertex deletion}
We will next formally define the problem central to our thesis:

\bigskip
\indent\textsc{Treewidth-2 Vertex Deletion}\\
\indent\textbf{input:}~a graph $G$ and an integer $t \in \nat$\\
\indent\textbf{question:}~does there exist a set $S \subseteq V(G)$ with $|S| \leq t$ and $\treewidth{G - S} \leq 2$?
\bigskip

\noindent
We provide some additional definitions to more concisely refer to concepts present within the definition of the \textsc{Treewidth-2 Vertex Deletion} problem.

\begin{definition}
    A problem instance is a pair $(G,t)$ with $G$ being a graph and $t \in \nat$.
\end{definition}

\begin{definition}
    A modulator $X$ of graph $G$ is a vertex set $X$ such that $\treewidth{G - X} \leq 2$.
\end{definition}

\begin{definition}
    $\textrm{\textsc{TW2D}}\parenthesised{G}$ denotes the size of a minimum modulator of $G$.
\end{definition}

\noindent
We note that these definitions provide an alternative formulation to the \textsc{Treewidth-2 Vertex Deletion} problem: given a problem instance $(G,t)$, does $\twtwodeletion{G}{t}$ hold? We will also introduce the notion of a solution to a problem instance, which essentially is set $S$ that proves $\twtwodeletion{G}{t}$.

\begin{definition}
    A solution $S$ to a problem instance $(G,t)$ is a vertex set $S$ such that $S$ is a modulator of $G$ with $|S| \leq t$.
\end{definition}

\section{Minors}\label{sec:prelim:minors}
To better understand the \textsc{Treewidth-2 Vertex Deletion} problem we will extensively make use of graph minors. Before defining graph minors, we first define the edge-contraction operation.

An edge-contraction of an edge $uv$ in $G$ is the operation of removing vertices $u$ and $v$ from $G$ and introducing a new vertex that is adjacent to $N_G(\{u,v\})$. We write $G \contract e$ for an edge $e \in E(G)$ to denote the graph resulting from the contraction of edge $e$ in $G$. For a non-empty connected subgraph $H$ of $G$ we let $G \contract H$ denote the graph resulting from contracting $H$ in $G$, which corresponds to replacing $H$ by a single vertex that is adjacent to $N_G(H)$. We note that graph $G \contract H$ can be obtained via a sequence of edge contractions. For a connected subgraph $H$ of $G$ and a vertex $u \in V(H)$ we define the operation of contracting $H$ into $u$ in $G$ as contracting $H$ in $G$ and having vertex $u$ correspond to the vertex resulting from this contraction.

A graph $H$ is a minor of $G$ when it can be formed from $G$ by using a (possibly empty) sequence of vertex-deletions, edge-deletions, and edge-contractions. The notion of graph minors will be essential in providing a kernel for the \textsc{Treewidth-2 Vertex Deletion} problem, in part due to the following properties:

\begin{lemma}[lemma 2 in \cite{A_single-exponential_FPT_algorithm_for_the_K4-minor_cover_problem}]\label{lemma:k_4_equals_treewidth_2}
    Graph $G$ contains a $K_4$ as a minor if and only if $\treewidth{G} > 2$.
\end{lemma}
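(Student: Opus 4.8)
The plan is to prove both directions of the equivalence between ``$G$ contains $K_4$ as a minor'' and ``$\treewidth{G} > 2$'' separately, using the machinery already established in the excerpt.

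For the easy direction, I would show the contrapositive: if $\treewidth{G} \leq 2$, then $G$ has no $K_4$ minor. The key facts are that $\treewidth{K_4} = 3$ (since any tree decomposition of $K_4$ must by Lemma~\ref{lemma:clique-shares-bag} contain a bag holding all four vertices, forcing width at least $3$), and that treewidth is minor-monotone (Theorem 14.15 in \cite{Kernelization_Theory_of_Parameterized_Preprocessing}, already cited in the excerpt — the treewidth of a minor does not exceed the treewidth of the original graph). So if $G$ had a $K_4$ minor, then $3 = \treewidth{K_4} \leq \treewidth{G} \leq 2$, a contradiction. This direction is short and I expect no obstacle.

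For the harder direction, I would prove: if $\treewidth{G} > 2$, then $G$ contains $K_4$ as a minor. I would argue by contradiction, taking a minor-minimal counterexample $G$ that has no $K_4$ minor but treewidth at least $3$. First, $G$ can be assumed connected, since treewidth is the maximum treewidth over components and a $K_4$ minor would live in one component. Next, I would find a small separator to decompose $G$ and invoke Theorem~\ref{theorem:connect-treewidth-graphs} (or Lemma~\ref{lemma:connect-treewidth-components-around-clique}) to reach a contradiction. Concretely: if $G$ has a cut vertex $v$, then writing $G$ as the union of $G[N_G[C]]$ over components $C$ of $G - v$, each such piece glued along the clique $\{v\}$, Theorem~\ref{theorem:connect-treewidth-graphs} would give $\treewidth{G} \leq 2$ once each piece has treewidth $\leq 2$ — and each piece is a proper minor (indeed subgraph) of $G$ with no $K_4$ minor, so by minimality has treewidth $\leq 2$. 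Hence $G$ is biconnected. Then I would look at a $2$-cut $\{u,w\}$ (if $G$ has more than $3$ vertices it has one, else it is a small graph checked directly): adding the edge $uw$ if absent keeps $G$ $K_4$-minor-free (this needs checking — a $K_4$ minor using the new edge would yield one in $G$ by routing through a $u$--$w$ path in one of the sides, which exists by biconnectivity), makes $\{u,w\}$ a clique separator, and then Theorem~\ref{theorem:connect-treewidth-graphs} applied to the sides (each a proper minor, hence treewidth $\leq 2$ by minimality) gives $\treewidth{G} \leq 2$, the final contradiction.

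The main obstacle is the case analysis in the biconnected, $3$-connected-up-to-small-cases situation: I need to make sure that when I contract sides of a separator or add separator edges, I neither create a $K_4$ minor nor lose the treewidth lower bound, and I need a clean base case for graphs on at most $3$ vertices (which trivially have treewidth $\leq 2$, so the counterexample has at least $4$ vertices and every biconnected graph on $\geq 4$ vertices that is not $K_4$ has a $2$-separator — this last structural claim is where care is needed, and it is essentially the statement that $3$-connected graphs on $\geq 4$ vertices other than $K_4$ contain $K_4$ as a minor, a classical fact I may instead cite or prove via a minimal-$3$-connected argument). An alternative, cleaner route avoiding some of this is to cite the known excluded-minor characterization of treewidth-$2$ graphs (series-parallel graphs) directly, but since the excerpt sets up Theorem~\ref{theorem:connect-treewidth-graphs} precisely for gluing along small cliques, I expect the intended proof to run through that theorem as sketched.
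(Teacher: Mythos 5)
The paper does not prove this lemma; it is stated purely by citation to Lemma~2 of Kim et al.\ \cite{A_single-exponential_FPT_algorithm_for_the_K4-minor_cover_problem}, so there is no in-paper argument to compare yours against. Your sketch is therefore doing strictly more work than the paper intends, and on its own merits it is a correct outline of the classical excluded-minor characterisation of treewidth-$2$ graphs. The easy direction is complete: Lemma~\ref{lemma:clique-shares-bag} gives $\treewidth{K_4} = 3$, and minor-monotonicity (Lemma~\ref{lemma:minor-treewidth}) excludes a $K_4$ minor whenever $\treewidth{G} \leq 2$.

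For the hard direction, your minor-minimal-counterexample scheme is sound, and you have correctly flagged the two places where details are owed. The cut-vertex step via Theorem~\ref{theorem:connect-treewidth-graphs} on $(G, \{v\})$ is immediate. At a $2$-cut $\{u,w\}$ of a biconnected $G$, the cleanest phrasing is: each piece $G[V(C) \cup \{u,w\}] \cup \{uw\}$ for $C \in \mathcal{C}(G - \{u,w\})$ is a \emph{proper} minor of $G$ (biconnectedness gives a $u$--$w$ path in any other component, which contracts to the edge $uw$), so by minor-minimality each piece has treewidth at most $2$; since $\{u,w\}$ induces a clique in each piece, Theorem~\ref{theorem:connect-treewidth-graphs} applied to $\bigl(G \cup \{uw\},\, V(C) \cup \{u,w\}\bigr)$ gives $\treewidth{G \cup \{uw\}} \leq 2$ and hence $\treewidth{G} \leq 2$. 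Your separate worry about whether adding $uw$ to all of $G$ preserves $K_4$-minor-freeness is in fact not needed for this gluing step (it is a consequence, not a prerequisite). The one genuinely nontrivial ingredient, which you correctly single out, is that a $3$-connected graph on at least four vertices contains a $K_4$ minor, closing the case analysis; a short self-contained proof is: if $G$ is complete, $K_4$ is a subgraph, and otherwise take non-adjacent $u,v$, three internally disjoint $u$--$v$ paths $P_1,P_2,P_3$ by Menger, and a shortest path $Q$ in $G - \{u,v\}$ from an internal vertex $x$ of $P_1$ to an internal vertex $y$ of $P_2 \cup P_3$, after which $u,v,x,y$ are the branch vertices of a $K_4$ subdivision. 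With that filled in, your sketch yields a complete proof of the cited fact.
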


\begin{lemma}[Theorem 14.15 in \cite{Kernelization_Theory_of_Parameterized_Preprocessing}]\label{lemma:minor-treewidth}
    Let $H$ be a minor of $G$. Then $\treewidth{H} \leq \treewidth{G}$.
\end{lemma}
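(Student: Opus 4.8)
The plan is to show that none of the three elementary minor operations --- edge deletion, vertex deletion, and edge contraction --- can increase treewidth, and then conclude by induction on the length of a sequence of such operations producing $H$ from $G$. Throughout, I fix a tree decomposition $(T,\chi)$ of $G$ of width $\treewidth{G}$ and, for each operation, modify it into a tree decomposition of the resulting graph of no larger width.

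First I would dispose of edge deletion: any tree decomposition $(T,\chi)$ of $G$ is already a tree decomposition of $G - e$, since deleting an edge only weakens condition~2 of Definition~\ref{def:tree-decomposition} while conditions~1 and~3 are untouched, and the width is unchanged. Next, vertex deletion: given $v \in V(G)$, define $\chi'(t) = \chi(t) \setminus \{v\}$ for every $t \in V(T)$. I would check that $(T,\chi')$ is a tree decomposition of $G - v$: condition~1 holds since every remaining vertex stays covered; condition~2 holds since every edge of $G - v$ avoids $v$, so its witnessing bag is unaffected; and condition~3 holds since for any $u \neq v$ the set of bags containing $u$ is unchanged by removing $v$, hence still induces a connected subtree. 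Each bag can only lose an element, so the width does not increase.

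The main work --- and the step I expect to be the principal obstacle --- is edge contraction. Let $e = uv \in E(G)$ and let $w$ be the new vertex of $G \contract e$, adjacent to $N_G(\{u,v\})$. I would define $\chi'(t)$ by replacing any occurrence of $u$ or $v$ in $\chi(t)$ by $w$. Condition~1 is immediate. For condition~2: every edge $xy$ of $G \contract e$ with $w \notin \{x,y\}$ is already an edge of $G$ witnessed by some bag still containing $x$ and $y$; every edge $wx$ arises from an edge $ux$ or $vx$ of $G$, and the bag that witnessed that edge now contains $w$ and $x$. The delicate point is condition~3: the set of nodes whose new bag contains $w$ is exactly $\bchi(u) \cup \bchi(v)$ (with $\bchi$ taken with respect to the original $(T,\chi)$), and I must argue $T[\bchi(u) \cup \bchi(v)]$ is connected. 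This is precisely where the hypothesis $uv \in E(G)$ is used: condition~2 applied to the edge $uv$ yields a node $t^\star$ with $\{u,v\} \subseteq \chi(t^\star)$, so $t^\star \in \bchi(u) \cap \bchi(v)$; since $T[\bchi(u)]$ and $T[\bchi(v)]$ are each connected and share $t^\star$, their union induces a connected subtree. Finally, replacing the (at most two) vertices of $\{u,v\}$ inside a bag by the single vertex $w$ never increases bag size, so the width of $(T,\chi')$ is at most $\treewidth{G}$.

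Having shown that each of the three operations produces a graph of treewidth at most $\treewidth{G}$, I would finish with a short induction on the length $k$ of a sequence of vertex-deletions, edge-deletions, and edge-contractions transforming $G$ into $H$: the case $k = 0$ is trivial, and the inductive step applies one of the three bounds above to the intermediate graph, so $\treewidth{H} \leq \treewidth{G}$.
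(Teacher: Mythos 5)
Your proof is correct and complete; it is the standard argument showing treewidth is minor-monotone, proceeding operation by operation and finishing by induction. The paper itself does not prove this lemma at all — it simply cites it as Theorem~14.15 of the Fomin et al.\ textbook on kernelization — so there is no in-paper proof to compare against; your self-contained argument matches the textbook treatment. One very small omission: in the edge-contraction case you verify condition~3 of Definition~\ref{def:tree-decomposition} only for the new vertex $w$; strictly speaking you should also note that for every other vertex $z \notin \{u,v\}$ the set $\bchi'(z)$ equals $\bchi(z)$ and hence still induces a connected subtree, but this is immediate and does not affect correctness.
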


\begin{lemma}\label{lemma:minor-of-G-implication}
	Let $H$ be a minor of $G$. Then $\twtwodeletionG{H} \leq \twtwodeletionG{G}$.
\end{lemma}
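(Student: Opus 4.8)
The plan is to show that every modulator of $G$ gives rise to a modulator of $H$ of no larger size, so that in particular a minimum modulator of $G$ yields a modulator of $H$ of size at most $\twtwodeletionG{G}$.

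First I would reduce to the case of a single minor operation. By definition $H$ is obtained from $G$ by a finite sequence of vertex-deletions, edge-deletions and edge-contractions, so it suffices to prove $\twtwodeletionG{H'} \leq \twtwodeletionG{H''}$ whenever $H'$ arises from a graph $H''$ by one such operation; chaining these inequalities along the sequence witnessing that $H$ is a minor of $G$ then gives the claim. So fix such an $H'$ obtained from $H''$, and let $S$ be any modulator of $H''$, i.e. $\treewidth{H'' - S} \leq 2$.

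Next I would exhibit, for each of the three kinds of operation, a vertex set $S' \subseteq V(H')$ with $|S'| \leq |S|$ such that $H' - S'$ is a minor of $H'' - S$; Lemma \ref{lemma:minor-treewidth} then gives $\treewidth{H' - S'} \leq \treewidth{H'' - S} \leq 2$, so $S'$ is a modulator of $H'$ and $\twtwodeletionG{H'} \leq |S'| \leq |S|$, and letting $S$ range over all modulators of $H''$ yields $\twtwodeletionG{H'} \leq \twtwodeletionG{H''}$. For a vertex-deletion $H' = H'' - u$, take $S' = S \setminus \{u\}$; then $H' - S' = H'' - (S \cup \{u\})$ is an induced subgraph of $H'' - S$. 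For an edge-deletion $H' = H'' - e$, take $S' = S$; then $H' - S' = (H'' - S) - e$ is a subgraph of $H'' - S$. For an edge-contraction $H' = H'' \contract uv$ with $w$ the resulting vertex: if $\{u,v\} \cap S = \emptyset$, take $S' = S$ and note $H' - S' = (H'' - S) \contract uv$, since contracting $uv$ commutes with deleting vertices disjoint from $\{u,v\}$; if $\{u,v\} \cap S \neq \emptyset$, take $S' = (S \setminus \{u,v\}) \cup \{w\}$, which satisfies $|S'| \leq |S|$ because we remove at least one element of $S$ and add the single new vertex $w$, and observe that $H' - S' = H'' - (S \cup \{u,v\})$ is an induced subgraph of $H'' - S$ (the contraction only affects $u$, $v$, $w$, all of which are deleted). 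In every case $H' - S'$ is a subgraph or a contraction of $H'' - S$, hence a minor of it, as required.

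Finally, I would assemble the pieces: applying the single-operation bound along the operation sequence, and starting from a minimum modulator of $G$, yields $\twtwodeletionG{H} \leq \twtwodeletionG{G}$. The only place that needs care is the contraction case, where one must verify the two identities $H' - S' = (H'' - S) \contract uv$ (when $S$ misses both endpoints of the contracted edge) and $H' - S' = H'' - (S \cup \{u,v\})$ (when $S$ meets $\{u,v\}$); these amount to the routine facts that vertex-deletion commutes with the contraction of a disjoint edge, and that deleting the contracted vertex $w$ has the same effect as having deleted both of its preimages $u$ and $v$. Everything else reduces to the observation that subgraphs and induced subgraphs are minors, combined with Lemma \ref{lemma:minor-treewidth}.
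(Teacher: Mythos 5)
Your proposal is correct and follows essentially the same route as the paper's proof: decompose the minor relation into single operations, induct along the sequence, and in each case exhibit a no-larger vertex set whose removal leaves a minor of the previous graph minus its modulator, invoking Lemma \ref{lemma:minor-treewidth}. The only cosmetic differences are that you let $S$ range over all modulators rather than fixing a minimum one, and you trim the deleted vertex out of $S$ in the vertex-deletion case, neither of which changes the argument.
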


\begin{proof}
    By Corollary 1.7.2 in \cite{diestel_graph_theory} we have that $H$ can be constructed from $G$ by performing a sequence of vertex-deletions, edge-deletions, and edge-contractions. Let $\langle G_0, G_1, \dots, G_m \rangle$ be a sequence of graphs with $G_0 = G$ and $G_m = H$, such that for any $0 < i \leq m$ it holds that $G_i$ can be obtained by performing a single vertex-deletion, edge-deletion, or edge-contraction on $G_{i-1}$. We prove by induction on $i$ that $\twtwodeletionG{G_i} \leq \twtwodeletionG{G}$.
    
    \begin{basecase}[$i = 0$]
        Trivial because $G_0 = G$.
    \end{basecase}

    \begin{inductivestep}[$i > 0$]
        Let $X_{i-1}$ be a minimum modulator of $G_{i-1}$. By the induction hypothesis we have $|X_{i-1}| = \twtwodeletionG{G_{i-1}} \leq \twtwodeletionG{G}$. We prove by case distinction on the operation performed on $G_{i-1}$ that $\twtwodeletionG{G_i} \leq \twtwodeletionG{G}$.
        
        \begin{indsubcase}[$G_i = G_{i-1} - v \vee G_i = G_{i-1} - uv$]
            Because $G_i - X_{i-1}$ is a minor of $G_{i-1} - X_{i-1}$ it follows from Lemma \ref{lemma:minor-treewidth} that $X_{i-1}$ is a modulator for $G_i$. Hence, we have $\twtwodeletionG{G_i} \leq |X_{i-1}| \leq \twtwodeletionG{G}$.
        \end{indsubcase}
    
        \begin{indsubcase}[$G_i = G_{i-1} \contract uv$]
            Let $w \in V(G_i)$ be the vertex resulting from the contraction $uv$ in $G_{i-1}$. If $X_{i-1} \cap \{u,v\} = \emptyset$ let $X_i = X_{i-1}$ and otherwise let $X_i = (X_{i-1} \setminus \{u,v\}) \cup \{w\}$. Regardless of whether $X_{i-1} \cap \{u,v\} = \emptyset$ holds, graph $G_i - X_i$ is a minor of $G_{i-1} - X_{i-1}$. It follows from Lemma \ref{lemma:minor-treewidth} that $X_i$ is a modulator for $G_i$, which implies $\twtwodeletionG{G_i} \leq |X_i| \leq |X_{i-1}| \leq \twtwodeletionG{G}$.
        \end{indsubcase}
    \end{inductivestep}
    
    \noindent
    Because these cases cover all cases, we conclude $\twtwodeletionG{H} = \twtwodeletionG{G_m} \leq \twtwodeletionG{G}$.
\end{proof}

\noindent
Throughout this thesis we will often identify $K_4$ minors in graphs $G$ to prove that $\treewidth{G} > 2$. To identify these $K_4$ minors we find four vertex-disjoint connected subgraphs $(H_1,H_2,H_3,H_4)$ in $G$, such that any pair of subgraphs $H_i$ and $H_j$ has an edge between a vertex from $H_i$ and a vertex from $H_j$. For brevity we say that such a tuple $(H_1,H_2,H_3,H_4)$ describes a $K_4$ minor in $G$.

\begin{lemma}\label{lemma:describes-k4-minor}
    Let $(H_1,H_2,H_3,H_4)$ be a tuple of vertex-disjoint connected subgraphs of graph $G$ such that for each $\{H_i,H_j\} \subseteq (H_1,H_2,H_3,H_4)$ we have $N_G(H_i) \cap V(H_j) \neq \emptyset$. Then $\treewidth{G} > 2$.
\end{lemma}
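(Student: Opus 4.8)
The plan is to exhibit $K_4$ as a minor of $G$ and then invoke Lemma \ref{lemma:k_4_equals_treewidth_2}. First I would set up the minor model with branch sets $B_i = V(H_i)$ for $i \in \{1,2,3,4\}$. These sets are pairwise disjoint because the subgraphs $H_i$ are vertex-disjoint, and each induced subgraph $G[B_i]$ is connected since it contains the connected subgraph $H_i$ on the same vertex set. The hypothesis $N_G(H_i) \cap V(H_j) \neq \emptyset$ says precisely that $G$ has an edge with one endpoint in $B_i$ and the other in $B_j$, so every pair of branch sets is joined by an edge of $G$.

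Concretely, I would first delete from $G$ every vertex outside $\bigcup_{i} B_i$; in the resulting graph each $G[B_i]$ is still connected (deleting vertices disjoint from $B_i$ does not affect $G[B_i]$), so I may contract $G[B_i]$ into a single vertex $w_i$ for each $i$, using the subgraph-contraction operation of Section \ref{sec:prelim:minors} and the fact that contracting a connected subgraph is realised by a sequence of edge contractions. The graph obtained has vertex set $\{w_1,w_2,w_3,w_4\}$, and for each $i \neq j$ the edge of $G$ between $B_i$ and $B_j$ survives as the edge $w_iw_j$; hence the result is isomorphic to $K_4$, so $K_4$ is a minor of $G$.

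Finally, Lemma \ref{lemma:k_4_equals_treewidth_2} immediately gives $\treewidth{G} > 2$; alternatively one may combine Lemma \ref{lemma:minor-treewidth} with $\treewidth{K_4} = 3$ to reach the same conclusion.

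I do not expect a real obstacle here. The only point requiring slight care is the bookkeeping around the contractions: one must check that it is $G[B_i]$, rather than merely $H_i$, that is being contracted and that it is connected, and that the preliminary deletion of the outside vertices leaves each $B_i$ connected — both of which are immediate from the observations above.
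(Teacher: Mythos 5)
Your proof is correct and follows the same route as the paper: build a $K_4$ minor model from the four connected, pairwise-adjacent branch sets and then invoke Lemma \ref{lemma:k_4_equals_treewidth_2}. The paper simply compresses the minor-model construction into a citation of a standard remark (Remark 6.2.13 in West), whereas you spell it out explicitly; there is no substantive difference.
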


\begin{proof}
    Directly follows Lemma \ref{lemma:k_4_equals_treewidth_2} and Remark 6.2.13 in \cite{Introduction-to-Graph-Theory}.
\end{proof}

\noindent
We say that we can apply Lemma \ref{lemma:describes-k4-minor} on a tuple $(H_1,H_2,H_3,H_4)$ to derive $\treewidth{G} > 2$ when this tuple together with $G$ satisfies the precondition of Lemma \ref{lemma:describes-k4-minor}. For brevity we will also say that we can apply Lemma \ref{lemma:describes-k4-minor} on a tuple $(X_1,X_2,X_3,X_4)$ where we allow each $X_i$ to be a vertex set $U_i \subseteq V(G)$ in which case $H_i = G[U_i]$ is used. Similarly, we also allow $X_i$ to be a single vertex $u_i \in V(G)$ in which case $H_i = G[\{u_i\}]$ is used. E.g. we can apply Lemma \ref{lemma:describes-k4-minor} on $(H, U, x, y)$ to derive $\treewidth{G} > 2$ when $H$ is a subgraph of $G$, $U \subseteq V(G)$, and $\{x,y\} \subseteq V(G)$; and $(H, G[U], G[\{x\}], G[\{y\}])$ together with $G$ satisfies the precondition of Lemma \ref{lemma:describes-k4-minor}.

\section{Parameterized Algorithms}\label{sec:prelim:parameterized-algorithms}
We introduce the formal definitions used in the field of parameterized algorithmics.

\begin{definition}[parameterized problem]
    A parameterized problem is a language $L \subseteq \Sigma^* \times \nat$, where $\Sigma$ is a fixed, finite alphabet. For an instance $(x,k) \in \Sigma^* \times \nat$, value $k \in \nat$ is called the parameter.
\end{definition}

\noindent
We can define the \textsc{Treewidth-2 Vertex Deletion} problem as a parameterized problem. We let $\Sigma^* = \mathcal{G}$ corresponds to the family of all undirected unweighted graphs without parallel edges and self-loops. We note that this is possible by encoding any graph as a string consisting of characters from $\Sigma$. We define language $L \subseteq \mathcal{G} \times \nat$ as the set of all problem instances $(G,t)$ with $\twtwodeletion{G}{t}$. Using this parameterization, we have that solving the \textsc{Treewidth-2 Vertex Deletion} problem corresponds to determining for a problem instance $(G,t)$ whether $(G,t) \in L$ holds.

A tractable problem can be solved using an algorithm that takes time polynomial in the size of the input. An intractable problem is a problem that can not be solved by such an algorithm. Because \textsc{Treewidth-2 Vertex Deletion} is NP-hard \cite{The_node_deletion_problem_for_hereditary_properties_is_NP_complete} we expect this problem to be intractable. The parameterized definition of this problem, however, also allows the efficiency of an algorithm to be measured in terms of the additional parameter $t$. An important notion of tractability in the field of parameterized algorithmics is that of a fixed-parameter tractability.

\begin{definition}[fixed-parameter tractable]
    A parameterized problem $L \subseteq \Sigma^* \times \nat$ is called fixed-parameter tractable (FPT) if there exists an algorithm $\mathcal{A}$ and a computable function $f : \nat \rightarrow \nat$, such that given $(x, k) \in \Sigma^* \times \nat$, algorithm $\mathcal{A}$ correctly decides whether $(x, k) \in L$ in time bounded by $f(k) \cdot \poly{|x| + k}$.
\end{definition}

\noindent
The main strength of a fixed-parameter algorithm is that they are able to `efficiently' solve intractable problems when the problem instances at hand have a small parameter value. I.e., for any parameter $k = O(1)$, such an algorithm $\mathcal{A}$ takes $f(O(1)) \cdot \poly{|x| + O(1)} = \poly{|x|}$ time. In other words, when the value of parameter $k$ is fixed to some constant $O(1)$, then a fixed-parameter tractable algorithm is tractable (takes time polynomial in the size of the input). In our case, we will provide a method to obtain a fixed-parameter tractable algorithm that determines whether $\twtwodeletion{G}{t}$.

One commonplace tool for designing FPT algorithms is that of kernelization. A kernelization algorithm is, in essence, a preprocessing algorithm that in polynomial time reduces a problem instance into its computationally difficult to solve `core'. A kernelization algorithm does not necessarily solve the problem, but rather shrink any problem instance into a smaller problem instance that then can be solved using exact algorithms. Before formally defining a kernelization algorithm, we first introduce its main building block.

\begin{definition}[reduction rule]
    A reduction rule for a parameterized problem $L \subseteq \Sigma^* \times \nat$ is a function $\phi : \Sigma^* \times \nat \rightarrow \Sigma^* \times \nat$ that maps instances $(x,k) \in \Sigma^* \times \nat$ into instances $(x',k') \in \Sigma^* \times \nat$ such that $(x,k) \in L \Longleftrightarrow (x',k') \in L$ and for which $\phi$ is computable in $\poly{|x| + k}$ time.
\end{definition}

\noindent
The property that $(x,k) \in L \Longleftrightarrow (x',k') \in L$ holds when a reduction rule is applied is referred to as the safeness of the reduction rule. In our case, this corresponds to proving that $\twtwodeletion{G}{t} \Longleftrightarrow \twtwodeletion{G'}{t'}$ holds.

A kernelization algorithm is a polynomial-time algorithm that repeatedly applies safe reduction rules on a problem instance until the resulting problem instance has a size bounded by a function over the parameter $k \in \nat$.

\begin{definition}[kernelization]\label{def:kernelization}
    A kernelization algorithm for a parameterzied problem $L \subseteq \Sigma^* \times \nat$ is an algorithm that, given an instance $(x,k) \in \Sigma^* \times \nat$, in $\poly{|x| + k}$ time returns a problem instance $(x',k') \in \Sigma^* \times \nat$ such that $(x,k) \in L \Longleftrightarrow (x',k') \in L$ and $|x'| + k' \leq g(k)$ for some computable function $g : \nat \rightarrow \nat$.
\end{definition}

\noindent
We note that in case a kernelization algorithm $\mathcal{A}$ exists for some parameterized problem and there exists some algorithm $\mathcal{B}$ that solves this parameterized problem (e.g. a brute-force algorithm), that then this problem is fixed-parameter tractable. This, because we can apply algorithm $\mathcal{A}$ on any problem instance $(x,k)$ in $\poly{|x| + k}$ time to obtain a problem instance $(x',k')$ with $|x'| + k' \leq g(k)$, which then can be solved using algorithm $\mathcal{B}$ in $f(|x'| + k') \leq f(g(k))$ time\footnote{We note here that we assume that $f$ is a non-decreasing function that expresses the running time of algorithm $\mathcal{B}$. Although almost all such functions would be non-decreasing by nature, if this were not the case one could still obtain a function describing the maximum running time using the non-decreasing function $\bar{f}(k) = \max_{0 \leq i \leq k} f(k)$.}, yielding an algorithm that solves any instance in $f(g(k)) + \poly{|x| + k}$ time.

The efficiency of a kernelization algorithm can be measured along two axes. One being the time the algorithm takes and the other being the size of the resulting problem instance. We will throughout this thesis mainly focus on latter. Although in practice the running time of the kernelization algorithm is important, from a theoretical standpoint the size of the reduced instances asymptotically dominates the total running time of the combined FPT-algorithm. We will only prove that the running time of our kernelization algorithm is polynomial in $|G| + t$, however, not the degree of this polynomial.

\section{Approximation}\label{sec:prelim:approximation}
Another method to efficiently find solutions to NP-hard problems is that of approximation. An approximation algorithm is an algorithm that finds near-optimal solutions. We use the definition of an approximation algorithm by Cormen et al. \cite{CLRS}. An algorithm has an approximation ratio $\varepsilon$ if for any problem instance $x$ it finds a solution with cost $c(x)$, such that $\max\parenthesised{\frac{c(x)}{\textrm{\textsc{opt}}(x)}, \frac{\textrm{\textsc{opt}}(x)}{c(x)}} \leq \varepsilon$, where $\textrm{\textsc{opt}}(x)$ is the optimal cost of a solution for $x$. An algorithm with an approximation ratio $\varepsilon$ is called an $\varepsilon$-approximation algorithm.

In our case, we can interpret the \textsc{Treewidth-2 Vertex Deletion} problem as finding a modulator for $G$ of minimum size. An $\varepsilon$-approximation algorithm for this optimisation problem then is an algorithm with $\varepsilon \geq 1$ that finds for a graph $G$ a modulator $X$ with $\twtwodeletionG{G} \leq |X| \leq \varepsilon \cdot \twtwodeletionG{G}$.

\chapter{General Reduction Rules}\label{ch:general-reduction-rules}
Within this section we start by introducing three reduction rules. The first two of these rules are rules that can be applied when we already know whether or not $\twtwodeletion{G}{t}$ holds. These rules each transform the input graph into a small constant size widget of which the solution is known. We return a small widget, opposed to returning the solution to the \textsc{Treewidth-2 Vertex Deletion} problem, to adhere to the definition of a kernelization algorithm (Definition \ref{def:kernelization}). The third rule is a rule that will ensure that the minimum degree of the graph is at least three.

\section{Reduce if Solution is known}
\begin{reduction}[solution is known]\label{red:solution-is-known}
    Given a problem instance $(G,t)$ and a solution $S$ for $(G,t)$. Replace $G$ by the empty graph.
\end{reduction}

\begin{lemma}[safeness]\label{lemma:safeness:solution-is-known}
    Let $(G',t)$ be the problem instance obtained by applying Reduction \ref{red:solution-is-known} on $(G,t)$. Then $\twtwodeletion{G}{t} \Longleftrightarrow \twtwodeletion{G'}{t}$ holds.
\end{lemma}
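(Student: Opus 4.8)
The plan is to prove safeness of Reduction~\ref{red:solution-is-known} by establishing both directions of the biconditional $\twtwodeletion{G}{t} \Longleftrightarrow \twtwodeletion{G'}{t}$, where $G'$ is the empty graph. The key observation is that the reduction is only applicable when we are handed a solution $S$ for $(G,t)$, i.e., a modulator $X = S$ of $G$ with $|S| \leq t$; so the left-hand side $\twtwodeletion{G}{t}$ already holds by hypothesis. Hence it suffices to show that the right-hand side $\twtwodeletion{G'}{t}$ also holds unconditionally, which makes the biconditional trivially true (both sides are true).

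First I would observe that the empty graph $G'$ has $V(G') = \emptyset$, so $G' - S' = G'$ for the empty set $S' = \emptyset$, and the empty graph has treewidth at most $2$ (indeed its treewidth is conventionally $-1$ or $0$; in any case it has a trivial tree decomposition of width at most $2$). Therefore $\emptyset$ is a modulator of $G'$ of size $0 \leq t$, witnessing $\twtwodeletion{G'}{t}$. Combined with the fact that $\twtwodeletion{G}{t}$ holds because $S$ is given as a solution for $(G,t)$, both sides of the biconditional are true, so $\twtwodeletion{G}{t} \Longleftrightarrow \twtwodeletion{G'}{t}$.

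There is essentially no hard part here: the only thing to be careful about is spelling out that the applicability precondition of the reduction (being given a solution $S$) is exactly what guarantees the left-hand side, so that we are not required to argue anything about graphs $G$ for which $\twtwodeletion{G}{t}$ fails. A secondary minor point worth stating explicitly is that $t \in \nat$ so $t \geq 0$, hence the empty modulator indeed satisfies the size bound $|{\emptyset}| = 0 \leq t$. I would phrase the whole argument in two or three sentences, noting that both $(G,t)$ and $(G',t)$ are yes-instances of \textsc{Treewidth-2 Vertex Deletion}, whence the equivalence.
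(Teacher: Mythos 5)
Your proof is correct and takes essentially the same approach as the paper, which simply observes that $\emptyset$ is a solution for $(G',t)$ (and implicitly relies on the precondition that a solution $S$ for $(G,t)$ was supplied). You spell out both sides of the biconditional more explicitly than the paper's one-line proof, which is fine and perhaps clearer, but the underlying argument is identical.
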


\begin{proof}
    Trivially holds because $\emptyset$ is a solution for $(G',t)$.
\end{proof}

\begin{lemma}\label{lemma:time:solution-is-known}
	We can detect whether Reduction \ref{red:solution-is-known} can be applied on a problem instance $(G,t)$ and a vertex set $S \subseteq V(G)$, and apply this rule in $\polyG$ time.
\end{lemma}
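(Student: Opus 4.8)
Reduction~\ref{red:solution-is-known} is applicable to $(G,t)$ and $S$ precisely when $S$ is a solution for $(G,t)$, i.e.\ when $S \subseteq V(G)$, $|S| \leq t$, and $\treewidth{G - S} \leq 2$. The plan is therefore to check these three conditions in turn and, if they all hold, output the empty graph.

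\textbf{Key steps.} First I would verify $S \subseteq V(G)$ and $|S| \leq t$; both are trivial comparisons computable in time linear in $|G| + t$, hence in $\polyG$ time (recall the parameter $t$ contributes polynomially to the input size). Next I would construct the induced subgraph $G - S = G[V(G) \setminus S]$, which takes $\polyG$ time as noted in the preliminaries. Then I would invoke Lemma~\ref{lemma:find-tree-decompositions} on $G - S$: in $\poly{|G - S|} \subseteq \polyG$ time this either produces a width-$2$ tree decomposition of $G - S$, certifying $\treewidth{G - S} \leq 2$, or it correctly reports $\treewidth{G - S} > 2$. If all three checks succeed, the reduction applies, and I carry it out by replacing $G$ with the empty graph, which costs $O(1)$ time; otherwise the reduction is not applicable. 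Summing the costs of these finitely many steps yields an overall $\polyG$ running time.

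\textbf{Main obstacle.} There is no real obstacle here: the only step that is not elementary is deciding $\treewidth{G - S} \leq 2$, and this is exactly what Lemma~\ref{lemma:find-tree-decompositions} provides in polynomial time. Everything else is bookkeeping, so the proof is short.
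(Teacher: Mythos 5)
Your proof is correct and takes the same approach as the paper, which simply cites Lemma~\ref{lemma:find-tree-decompositions}; you have merely spelled out the bookkeeping steps (checking $S \subseteq V(G)$, $|S| \leq t$, constructing $G-S$, outputting the empty graph) that the paper leaves implicit. No gap.
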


\begin{proof}
    Directly follows from Lemma \ref{lemma:find-tree-decompositions}.
\end{proof}

\noindent
We note that Lemma \ref{lemma:time:solution-is-known} only implies that verifying whether a given set $S$ is a solution for $(G,t)$ can be done in $\polyG$ time. It does not imply that such a candidate set $S$ can be found in $\polyG$ time.

\section{Reduce if no Solution exists}
\begin{reduction}[no existing solution]\label{red:no-existing-solution}
    Given a problem instance $(G,t)$ with $\nottwtwodeletion{G}{t}$. Replace $(G,t)$ by $(K_4,0)$.
\end{reduction}

\begin{lemma}[safeness]\label{lemma:safeness:no-existing-solution}
    Let $(G',t')$ be the problem instance obtained by applying Reduction \ref{red:no-existing-solution} on $(G,t)$. Then $\twtwodeletion{G}{t} \Longleftrightarrow \twtwodeletion{G'}{t}$ holds.
\end{lemma}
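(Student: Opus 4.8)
The plan is to exploit the hypothesis of Reduction~\ref{red:no-existing-solution} directly. Since the reduction only fires when $\nottwtwodeletion{G}{t}$ already holds, the left-hand side $\twtwodeletion{G}{t}$ of the claimed biconditional is false. It therefore suffices to show that the right-hand side is false as well, i.e.\ that the produced instance $(G',t') = (K_4,0)$ is again a no-instance; then the equivalence holds vacuously.

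To establish that $(K_4,0)$ is a no-instance, I would first note that $\treewidth{K_4} > 2$: the graph $K_4$ trivially contains $K_4$ as a minor (every graph is a minor of itself, via the empty sequence of operations), so Lemma~\ref{lemma:k_4_equals_treewidth_2} applies. Consequently $\emptyset$ is not a modulator of $K_4$, so by the definition of a minimum modulator $\twtwodeletionG{K_4} \geq 1 > 0 = t'$, i.e.\ $\nottwtwodeletion{K_4}{t'}$. Hence $\twtwodeletion{G'}{t'}$ is false, and combining this with the previous paragraph yields $\twtwodeletion{G}{t} \Longleftrightarrow \twtwodeletion{G'}{t'}$ (both sides being false).

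I do not anticipate any real difficulty; the one point worth stating explicitly is that, unlike Reduction~\ref{red:solution-is-known}, this reduction lowers the target (to $t' = 0$), so one must verify the equivalence against the new parameter rather than against $t$. This is precisely why $K_4$ with target $0$ is the correct trivial no-instance gadget: its minimum modulator has size exactly one, so $(K_4,0)$ is unconditionally a no-instance.
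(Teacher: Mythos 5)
Your proof is correct and follows essentially the same approach as the paper's one-line argument ("Trivially holds because there does not exist a solution for $(G',0)$"), just spelled out in more detail; you correctly observe that both sides of the biconditional are false. Your remark about checking the equivalence against the lowered parameter $t'=0$ also implicitly flags what appears to be a typo in the lemma statement (which writes $\twtwodeletion{G'}{t}$ rather than $\twtwodeletion{G'}{t'}$).
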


\begin{proof}
    Trivially holds because there does not exist a solution for $(G',0)$.
\end{proof}

\begin{lemma}\label{lemma:time:no-existing-solution}
	We can detect whether Reduction \ref{red:no-existing-solution} can be applied on a problem instance $(G,0)$ and apply this rule in $\polyG$ time.
\end{lemma}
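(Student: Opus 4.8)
The plan is to reduce the detection task to a single treewidth test. Reduction \ref{red:no-existing-solution} is applicable to a problem instance $(G,0)$ precisely when $\nottwtwodeletion{G}{0}$ holds, so what must be done is to decide this inequality and, in the case that it holds, carry out the (trivial) replacement of $(G,t)$ by $(K_4,0)$.

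First I would unfold the definitions. A solution for $(G,0)$ would be a modulator of $G$ of size at most $0$, i.e.\ the empty set, and $\emptyset$ is a modulator of $G$ exactly when $\treewidth{G - \emptyset} = \treewidth{G} \leq 2$. Hence $\twtwodeletion{G}{0}$ is equivalent to $\treewidth{G} \leq 2$, and therefore $\nottwtwodeletion{G}{0}$ is equivalent to $\treewidth{G} > 2$. So the reduction is applicable to $(G,0)$ if and only if $\treewidth{G} > 2$.

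Next I would invoke Lemma \ref{lemma:find-tree-decompositions}: in $\polyG$ time we can either obtain a width-$2$ tree decomposition of $G$ — which certifies $\treewidth{G} \leq 2$, so the reduction is not applicable — or correctly derive $\treewidth{G} > 2$, in which case the reduction is applicable. Actually applying the rule then merely requires writing down the constant-size instance $(K_4,0)$, which clearly costs $\polyG$ time. Combining these steps yields the claimed time bound.

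There is essentially no obstacle here: the only non-mechanical observation is that applicability of Reduction \ref{red:no-existing-solution} to $(G,0)$ collapses to a treewidth-at-most-two check, and all of the algorithmic weight is carried by Lemma \ref{lemma:find-tree-decompositions} (which itself rests on Bodlaender's linear-time fixed-treewidth algorithm). This mirrors the one-line proofs of Lemmas \ref{lemma:safeness:no-existing-solution} and \ref{lemma:time:solution-is-known}.
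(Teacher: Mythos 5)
Your proposal is correct and is exactly the argument the paper compresses into ``Directly follows from Lemma~\ref{lemma:find-tree-decompositions}'': unfold $\nottwtwodeletion{G}{0}$ to $\treewidth{G} > 2$, decide it via Lemma~\ref{lemma:find-tree-decompositions}, and output $(K_4,0)$. Nothing further to add.
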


\begin{proof}
    Directly follows from Lemma \ref{lemma:find-tree-decompositions}.
\end{proof}

\noindent
We note that Lemma \ref{lemma:time:no-existing-solution} can only be applied on problem instances $(G,t)$ with $t = 0$.

\section{Contract Component with Small Neighbourhood}
We introduce a reduction rule that will ensure that after it has been exhaustively applied, that the remaining graph $G$ will have a minimum degree of at least three, which will be an essential property in our kernelization algorithm.

The reduction rule we introduce is a generalisation of Reduction 1a, 1b, and 2 in the work of Bodlaender et al. \cite{Preprocessing-rules-for-triangulation-of-probabilistic-networks}, which are respectively rules that delete degree one, delete degree zero, or contract degree two vertices. Our definition of this reduction rule generalises these rules by considering connected subgraphs of $G$ opposed to individual vertices.

\begin{reduction}[contract component]\label{red:contract-component-with-small-neighborhood}
	Given a problem instance $(G,t)$ and a connected subgraph $H$ of $G$ with $\treewidth{G[N_G[H]] \cup \binom{N_G(H)}{2}} \leq 2$. Remove $H$ from $G$ and add edges $\binom{N_G(H)}{2}$ to $G$.
\end{reduction}

\noindent
In case $H$ has zero or one neighbours this reduction rule boils down to: `if $\treewidth{H} \leq 2$, then remove $H$ from $G$'. In case $H$ has two neighbours this reduction rule corresponds to: `if the graph $H$ together with its neighbours will have treewidth at most two, irregardless of whether there exists a path between those two neighbours outside of $H$, then contract $H$ into one of the two neighbours'.

\begin{lemma}[safeness]\label{lemma:safeness:contract-component-with-small-neighborhood}
	Let $(G',t)$ be the problem instance obtained by applying Reduction \ref{red:contract-component-with-small-neighborhood} on $(G,t)$. Then $\twtwodeletion{G}{t} \Longleftrightarrow \twtwodeletion{G'}{t}$ holds.
\end{lemma}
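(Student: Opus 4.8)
The plan is to prove the two implications separately, in each case taking a solution for one instance and showing it is (essentially) a solution for the other. Throughout, write $W = N_G(H)$, so that $G'$ is obtained from $G$ by deleting $V(H)$ and adding the edge set $\binom{W}{2}$; note that $V(G') = V(G)\setminus V(H)$ and that the hypothesis is $\treewidth{G[N_G[H]]\cup\binom{W}{2}}\le 2$, in particular $\treewidth{G[N_G[H]]}\le 2$ and (since $G[N_G[H]]\cup\binom{W}{2}$ contains $W$ as a clique, by Lemma~\ref{lemma:clique-shares-bag}) $|W|\le 3$.

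First I would handle the direction $\twtwodeletion{G'}{t}\Rightarrow\twtwodeletion{G}{t}$. Let $S'$ be a solution for $(G',t)$; I claim $S = S'$ (viewed as a subset of $V(G)$, which is legitimate since $V(G')\subseteq V(G)$) is a solution for $(G,t)$. Clearly $|S|=|S'|\le t$, so it remains to show $\treewidth{G-S}\le 2$. Here I would apply Theorem~\ref{theorem:connect-treewidth-graphs} to $(G-S,\,U)$ with $U = V(G)\setminus(S\cup V(H))=V(G'-S')$. The component structure of $(G-S)-U$ is exactly the set of components of $(G-S)[V(H)\setminus S]$, each of which is a subgraph of $H$; for such a component $C$ one has $N_{G-S}(C)\subseteq W$ and $G-S[N_{G-S}[C]]$ is a subgraph of $G[N_G[H]]$, hence has treewidth at most $2$ by the hypothesis and Theorem~14.15 of \cite{Kernelization_Theory_of_Parameterized_Preprocessing}. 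The point that needs care is that $G-S[N_{G-S}(C)]$ must be a \emph{clique}: this is where the added edges $\binom{W}{2}$ matter — in $G'$ the set $W$ induces a clique, so $W\setminus S'$ induces a clique in $G'-S'=(G-S)[U]$, hence any $N_{G-S}(C)\subseteq W\setminus S$ induces a clique. Combined with $\treewidth{G'-S'}=\treewidth{(G-S)[U]}\le 2$, Theorem~\ref{theorem:connect-treewidth-graphs} yields $\treewidth{G-S}\le 2$.

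For the converse $\twtwodeletion{G}{t}\Rightarrow\twtwodeletion{G'}{t}$, let $S$ be a solution for $(G,t)$. The natural candidate is $S' = (S\setminus V(H))\cup(\text{something accounting for }S\cap V(H))$; the cleanest route is to first argue we may assume $S\cap V(H)=\emptyset$ by an exchange argument, and then set $S'=S$. Concretely, if $S\cap V(H)\neq\emptyset$, I would replace the vertices of $S$ inside $V(H)$ by vertices of $W$ to obtain a set $\tilde S$ of size at most $|S|$ that is still a modulator of $G$: since $|W|\le 3$, one can argue that $(S\setminus V(H))\cup W'$ is a modulator for a suitable $W'\subseteq W$ with $|W'|\le|S\cap V(H)|$, using that $W$ separates $V(H)$ from the rest of $G$ together with $\treewidth{G[N_G[H]]}\le 2$ (so $H$ never needs to be hit from the inside — it can be ``absorbed'' once enough of its boundary is removed, via Theorem~\ref{theorem:connect-treewidth-graphs} applied around the clique $W$). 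Once $S$ avoids $V(H)$, set $S'=S\cap V(G')=S$. Then $G'-S'$ is obtained from $G-S$ by deleting $V(H)$ and adding $\binom{W\setminus S}{2}$; since $G-S$ restricted to $N_{G-S}[H]$ has treewidth at most $2$ even after adding $\binom{W}{2}$ (directly from the hypothesis together with Theorem~14.15 of \cite{Kernelization_Theory_of_Parameterized_Preprocessing}), and $H$ together with these added edges can be contracted/removed without increasing treewidth — formally, $G'-S'$ is a minor of $(G[N_G[H]]\cup\binom{W}{2})$ glued to $G-S-V(H)$ along the clique $W\setminus S$ — Lemma~\ref{lemma:connect-treewidth-components-around-clique} (or again Theorem~\ref{theorem:connect-treewidth-graphs}) gives $\treewidth{G'-S'}\le 2$.

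The main obstacle I anticipate is the exchange argument in the backward direction: showing that a solution hitting $V(H)$ can be rerouted to hit only $W$ without increasing its size, and doing so rigorously when $|S\cap V(H)|$ may be $1$, $2$, or $3$. This requires carefully using that $W$ is a clique of size at most $3$ separating $H$ from the rest of the graph and that $G[N_G[H]]$ together with the clique on $W$ has treewidth at most $2$, so that after removing $|S\cap V(H)|$ well-chosen vertices of $W$ the remaining part of $H$ contributes no $K_4$-minor obstruction — an application of Theorem~\ref{theorem:connect-treewidth-graphs} (or a direct $K_4$-minor argument via Lemma~\ref{lemma:describes-k4-minor}) localized to $N_G[H]$. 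Everything else is a routine double application of Theorem~\ref{theorem:connect-treewidth-graphs} around the clique $W$, using that the extra edges $\binom{W}{2}$ are exactly what is needed to make the boundary a clique after deletion.
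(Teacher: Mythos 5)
There are two genuine gaps, one in each direction. (Throughout, $W = N_G(H)$ as in your write-up.)

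In the direction $\twtwodeletion{G'}{t}\Rightarrow\twtwodeletion{G}{t}$, you apply Theorem~\ref{theorem:connect-treewidth-graphs} to the pair $(G-S,\,U)$ with $U=V(G'-S')$ and justify the clique condition by writing ``$G'-S'=(G-S)[U]$''. This equality is false: $G'$ contains the added edges $\binom{W}{2}$, whereas $(G-S)[U]=G[U]$ does not. Those missing edges are precisely the ones you need for the clique condition --- and Reduction~\ref{red:contract-component-with-small-neighborhood} is only interesting when $W$ is \emph{not} already a clique in $G$ (i.e.\ when $|W|=2$ and $W$ is a non-edge), so the gap cannot be waved away. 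The paper's fix is to apply Theorem~\ref{theorem:connect-treewidth-graphs} not to $G-S'$ but to $\parenthesised{G\cup\binom{W}{2}}-S'$, in which $W\setminus S'$ genuinely induces a clique and which still decomposes into $G'-S'$ and the components of $H-S'$; it then concludes $\treewidth{G-S'}\le 2$ via Lemma~\ref{lemma:minor-treewidth}, since $G-S'$ is a subgraph of $\parenthesised{G\cup\binom{W}{2}}-S'$.

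In the direction $\twtwodeletion{G}{t}\Rightarrow\twtwodeletion{G'}{t}$, the exchange argument you sketch (and yourself flag as the main obstacle) is not needed. The paper first sharpens your bound on $|W|$: you deduce $|W|\le 3$ from Lemma~\ref{lemma:clique-shares-bag}, but in fact $|W|\le 2$ holds, because if $|W|\ge 3$ then contracting the connected subgraph $H$ to a single vertex inside $G[N_G[H]]\cup\binom{W}{2}$ produces a $K_4$ minor, contradicting $\treewidth{G[N_G[H]]\cup\binom{W}{2}}\le 2$. Once $|W|\le 2$ is established, $G'$ is simply a minor of $G$: if $|W|\le 1$, $G'$ is obtained by deleting $V(H)$; if $W=\{u,v\}$, $G'$ is obtained by contracting $H$ into $u$ (the contraction creates the edge $uv$ since $H$ is connected and adjacent to $v$). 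Lemma~\ref{lemma:minor-of-G-implication} then gives $\twtwodeletionG{G'}\le\twtwodeletionG{G}$ at once, with no case analysis on $S\cap V(H)$. The sharper bound $|W|\le 2$ is essential here --- with $|W|=3$, contracting $H$ into one boundary vertex cannot create all three edges of the triangle on $W$, so $G'$ would not be a minor of $G$ and this clean route would be unavailable.
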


\begin{proof}
    We have $|N_G(H)| \leq 2$, because otherwise contracting $H$ into a single vertex would yield a $K_4$ minor in $G[N_G[H]] \cup \binom{N_G(H)}{2}$. In case $|N_G(H)| = 2$ we can obtain $G'$ from $G$ by contracting $H$ into a vertex in $N_G(H)$. In case $|N_G(H)| \leq 1$ we obtain $G'$ from $G$ by removing $V(H)$. Hence $G'$ is a minor of $G$. Therefore, by Lemma \ref{lemma:minor-of-G-implication} we have that $\twtwodeletion{G}{t}$ implies $\twtwodeletion{G'}{t}$. It remains to prove that $\twtwodeletion{G'}{t}$ implies $\twtwodeletion{G}{t}$.
    
    Let $S'$ be a solution for $(G',t)$. We have $\parenthesised{G \cup \binom{N_G(H)}{2}} - (V(H) \cup S') = G' - S'$ and that $N_G(H) \setminus S'$ induces a clique in $\parenthesised{G \cup \binom{N_G(H)}{2}} - S'$. Hence, we can apply Theorem \ref{theorem:connect-treewidth-graphs} on $\parenthesised{\parenthesised{G \cup \binom{N_G(H)}{2}} - S', G' - S'}$ to derive $\treewidth{\parenthesised{G \cup \binom{N_G(H)}{2}} - S'} \leq 2$. By Lemma \ref{lemma:minor-treewidth} this yields $\treewidth{G - S'} \leq 2$, which implies $\twtwodeletion{G}{t}$.
\end{proof}

\begin{lemma}\label{lemma:time:contract-component-with-small-neighborhood}
	We can detect whether Reduction \ref{red:contract-component-with-small-neighborhood} can be applied on a problem instance $(G,t)$ and apply this rule in $\polyG$ time.
\end{lemma}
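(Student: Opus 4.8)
The statement asks for a polynomial-time algorithm that detects whether Reduction \ref{red:contract-component-with-small-neighborhood} is applicable to $(G,t)$ and, if so, carries it out. The key realisation is that even though there are exponentially many connected subgraphs $H$ of $G$, the precondition $\treewidth{G[N_G[H]] \cup \binom{N_G(H)}{2}} \leq 2$ together with the analysis in the safeness proof (Lemma \ref{lemma:safeness:contract-component-with-small-neighborhood}) forces $|N_G(H)| \leq 2$. So it suffices to search only for connected subgraphs $H$ with at most two neighbours, and such subgraphs are essentially unions of components of $G - Z$ for some vertex set $Z$ with $|Z| \leq 2$.

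\smallskip

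\noindent\textbf{Step 1: Restrict the search.} First I would observe that whenever the reduction is applicable, the witnessing subgraph $H$ satisfies $|N_G(H)| \leq 2$: if $|N_G(H)| \geq 3$ then contracting the connected graph $H$ into a single vertex, together with the clique edges $\binom{N_G(H)}{2}$, yields a $K_4$ subgraph and hence a $K_4$ minor in $G[N_G[H]] \cup \binom{N_G(H)}{2}$, contradicting the treewidth bound via Lemma \ref{lemma:k_4_equals_treewidth_2}. Moreover, if $H$ witnesses applicability then so does the (unique) component of $G - N_G(H)$ containing $V(H)$, because enlarging $H$ to that whole component does not change $N_G(H)$ and $G[N_G[H]]$ only gains the rest of that one component while its treewidth-$\leq 2$ certificate is preserved by Theorem \ref{theorem:connect-treewidth-graphs}. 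Hence it is enough to test, for every vertex set $Z \subseteq V(G)$ with $|Z| \leq 2$ and every component $C \in \mathcal{C}(G - Z)$ with $N_G(C) \subseteq Z$, whether $\treewidth{G[N_G[C]] \cup \binom{N_G(C)}{2}} \leq 2$.

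\smallskip

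\noindent\textbf{Step 2: Enumerate and test.} There are $O(|G|^2)$ choices of $Z$, and for each, $\mathcal{C}(G - Z)$ is computable in $\polyG$ time (as noted in the preliminaries, via DFS). For each candidate $C$ one discards it unless $N_G(C) \subseteq Z$, then forms the graph $G[N_G[C]] \cup \binom{N_G(C)}{2}$ and invokes Lemma \ref{lemma:find-tree-decompositions} to decide in $\polyG$ time whether its treewidth is at most $2$. If some candidate passes, the reduction is applicable; one then performs the modification (delete $V(H)$, add the at most one edge $\binom{N_G(H)}{2}$) in $\polyG$ time. If no candidate passes, then by Step 1 no applicable $H$ exists. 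The whole procedure runs in $\polyG$ time since it is a polynomial number of $\polyG$-time subroutine calls.

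\smallskip

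\noindent\textbf{Main obstacle.} The only genuinely non-trivial point is Step 1 — justifying that restricting attention to components of $G - Z$ for $|Z|\le 2$ loses nothing, i.e. that if an applicable $H$ exists then an applicable one of this special shape exists. This needs the degree-bound argument (via Lemma \ref{lemma:k_4_equals_treewidth_2}) plus the observation that replacing $H$ by the whole component of $G-N_G(H)$ containing it preserves both the boundary and the treewidth certificate; the latter is a clean application of Theorem \ref{theorem:connect-treewidth-graphs} / Lemma \ref{lemma:connect-treewidth-components-around-clique}, since the new vertices attach along the clique $G[N_G(H)]$. Everything after that is routine enumeration bounded by Lemma \ref{lemma:find-tree-decompositions}.
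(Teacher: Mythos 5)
Your proposal takes essentially the same approach as the paper's proof: bound $|N_G(H)| \leq 2$ via the $K_4$-minor argument, enumerate vertex sets $Z$ with $|Z| \leq 2$ together with the components of $G - Z$, and for each candidate invoke Lemma~\ref{lemma:find-tree-decompositions} to test the treewidth condition in $\polyG$ time. The one thing you over-engineer is the ``enlarge $H$ to the full component'' step in your Step~1 --- it is in fact vacuous, since any connected subgraph $H$ already equals (as a vertex set) the unique component of $G - N_G(H)$ containing it, so no appeal to Theorem~\ref{theorem:connect-treewidth-graphs} is needed there.
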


\begin{proof}
    The only connected subgraphs $H$ on which Reduction \ref{red:contract-component-with-small-neighborhood} can be applied have $|N_G(H)| \leq 2$. For each $X \subseteq V(G)$ with $|X| \leq 2$ and each component $H \in \mathcal{C}(G - X)$ we can verify in $\polyG$ time whether $\treewidth{G[N_G[H]] \cup \binom{N_G(H)}{2}} \leq 2$ holds (Lemma \ref{lemma:find-tree-decompositions}). Because there exist at most $\polyG$ such subsets $X$ and components $H$, we can test all potential applications of Reduction \ref{red:contract-component-with-small-neighborhood} in $\polyG$ time.
\end{proof}

\begin{lemma}\label{lemma:min-degree-3}
    Given a problem instance $(G,t)$ on which Reduction \ref{red:contract-component-with-small-neighborhood} can not be applied. Graph $G$ has a minimum degree of at least three.
\end{lemma}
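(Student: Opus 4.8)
The plan is to prove the contrapositive: I would show that if $G$ contains a vertex $v$ of degree at most two, then Reduction \ref{red:contract-component-with-small-neighborhood} can be applied on $(G,t)$. The natural candidate for the connected subgraph $H$ is the single-vertex subgraph $H = G[\{v\}]$. With this choice, $N_G(H) = N_G(v)$ has at most two vertices, so $N_G[H] = N_G[v]$ has at most three vertices, and the graph $G[N_G[H]] \cup \binom{N_G(H)}{2}$ has at most three vertices. The first step is therefore to observe that any graph on at most three vertices has treewidth at most two (indeed, the full triangle $K_3$ already witnesses this, and every graph on $\le 3$ vertices is a subgraph of $K_3$, so Lemma \ref{lemma:minor-treewidth} — or a one-line tree decomposition consisting of a single bag — gives the bound).

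Given that observation, the precondition of Reduction \ref{red:contract-component-with-small-neighborhood} is satisfied for $H = G[\{v\}]$: it is a connected subgraph of $G$, and $\treewidth{G[N_G[H]] \cup \binom{N_G(H)}{2}} \leq 2$ since the underlying vertex set has size at most three. Hence the reduction is applicable, contradicting the hypothesis that it cannot be applied on $(G,t)$. I would phrase the argument directly as a proof by contradiction: assume some $v \in V(G)$ has degree at most two, exhibit $H = G[\{v\}]$ as a valid application of Reduction \ref{red:contract-component-with-small-neighborhood}, and conclude that no such $v$ exists, i.e. every vertex has degree at least three.

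There is no real obstacle here; the only thing to be careful about is making sure the candidate subgraph $H$ genuinely meets the stated precondition — in particular that $G[N_G[H]] \cup \binom{N_G(H)}{2}$ is correctly identified as a graph on $|N_G[v]| \le 3$ vertices, so that the treewidth bound is immediate from the fact that $K_3$ (and all its subgraphs) have treewidth at most two. No degenerate case needs separate treatment: if $\deg(v) = 0$ then $H$ has an empty neighbourhood, if $\deg(v) = 1$ it has one neighbour, and if $\deg(v) = 2$ it has two, and in all three cases the vertex count of the relevant graph is at most three.
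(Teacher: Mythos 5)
Your proof is correct and matches the paper's argument almost line for line: both take $H$ to be the single-vertex subgraph, observe that $G[N_G[H]] \cup \binom{N_G(H)}{2}$ lives on at most three vertices, conclude its treewidth is at most two, and derive a contradiction with Reduction \ref{red:contract-component-with-small-neighborhood} being inapplicable. The only cosmetic difference is that the paper notes the graph is a clique of size $\le 3$ while you note it is a graph on $\le 3$ vertices; both give the same treewidth bound.
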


\begin{proof}
    Assume per contradiction that there exists a vertex $u \in V(G)$ with $|N_G(u)| \leq 2$. We have that $G[N_G[u]] \cup \binom{N_G(u)}{2}$ is a clique of size $|N_G[u]| \leq 3$, which has treewidth $|N_G[u]| - 1 \leq 2$. This contradicts Reduction \ref{red:contract-component-with-small-neighborhood} not being applicable. Hence, no such vertex $u$ can exist.
\end{proof}

\section{Trivial Problem Instances}
The property that a graph has minimum degree three is essential to many of the reduction rules introduced in this thesis. Because we can apply this rule, and in specific cases the previous rules, in polynomial time, we will always want to apply these reduction rules whenever possible. I.e., whenever an application of these `trivial' reduction rules is possible, then we will want to immediately apply them. To avoid repeatedly mentioning the requirement that these `trivial' reduction rules have been applied on a problem instance $(G,t)$ we will introduce the notation of a trivial problem instance.

\begin{definition}[trivial problem instance]\label{def:trivial-problem-instance}
    A problem instance $(G,t)$ is a trivial problem instance when $V(G) \leq 4$ or $\treewidth{G} \leq 2$ or $t=0$ or Reduction \ref{red:contract-component-with-small-neighborhood} can be applied on $(G,t)$.
\end{definition}

\noindent
We note that a problem instance $(G,t)$ being trivial does not imply that it is trivial to determine whether $\twtwodeletion{G}{t}$ holds. Rather, one of the `trivial' reduction rules can be applied.

\begin{lemma}\label{lemma:get-non-trivial-problem-instance}
    Given a trivial problem instance $(G,t)$, we can obtain in $\polyG$ time obtain a problem instance $(G',t)$ with $\twtwodeletion{G'}{t} \Longleftrightarrow \twtwodeletion{G}{t}$ such that either $|V(G')| \leq 4$ or $(G', t)$ is a non-trivial problem instance with $|V(G')| < |V(G)|$.
\end{lemma}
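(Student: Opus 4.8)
The plan is to argue by a case analysis on which of the four defining conditions of a trivial problem instance (Definition \ref{def:trivial-problem-instance}) is responsible, reducing to the reduction rules already established in this chapter, and to iterate whenever a single reduction does not yet yield a non-trivial instance. First I would dispose of the case $|V(G)| \leq 4$ by simply returning $G' = G$, so that the first alternative of the conclusion holds. For the remainder assume $|V(G)| > 4$.

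Next I would test, in $\polyG$ time via Lemma \ref{lemma:find-tree-decompositions}, whether $\treewidth{G} \leq 2$. If so, then since $t \in \nat$ the empty set is a modulator of $G$ of size $0 \leq t$, hence a solution for $(G,t)$, so Reduction \ref{red:solution-is-known} applies with $S = \emptyset$; by Lemmas \ref{lemma:time:solution-is-known} and \ref{lemma:safeness:solution-is-known} it can be applied in $\polyG$ time and is safe, and the resulting empty graph $G'$ has $|V(G')| = 0 \leq 4$. Otherwise $\treewidth{G} > 2$; if moreover $t = 0$, then the only vertex set of size at most $t$, namely $\emptyset$, is not a modulator, so $\nottwtwodeletion{G}{t}$ and Reduction \ref{red:no-existing-solution} applies, yielding $G' = K_4$ with $|V(G')| = 4$ in $\polyG$ time and safely by Lemmas \ref{lemma:time:no-existing-solution} and \ref{lemma:safeness:no-existing-solution}. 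In the only remaining case we have $|V(G)| > 4$, $\treewidth{G} > 2$, and $t \geq 1$, so the first three conditions of Definition \ref{def:trivial-problem-instance} all fail; since $(G,t)$ is trivial, its fourth condition must hold, i.e.\ Reduction \ref{red:contract-component-with-small-neighborhood} can be applied. Applying it (in $\polyG$ time by Lemma \ref{lemma:time:contract-component-with-small-neighborhood}, safely by Lemma \ref{lemma:safeness:contract-component-with-small-neighborhood}) deletes a non-empty connected subgraph $H$, so the new instance has $|V(G)| - |V(H)| < |V(G)|$ vertices.

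The step needing the most care is handling the fact that a single application of Reduction \ref{red:contract-component-with-small-neighborhood} need not make the instance non-trivial. The fix is to repeat the procedure above as long as the current instance is trivial and has more than four vertices: the first two sub-cases terminate immediately with an instance on at most four vertices, and every time we fall into the third sub-case the vertex count strictly decreases, so the loop halts after at most $|V(G)|$ iterations. Since we entered the loop (the input is trivial with $|V(G)| > 4$), at the end either we returned an instance of size at most four, or the loop exited because the current instance became non-trivial — and in that case at least one application of Reduction \ref{red:contract-component-with-small-neighborhood} was performed, so its vertex count is strictly smaller than $|V(G)|$. Every iteration runs in $\polyG$ time by the cited lemmas and the vertex count is non-increasing throughout, so the whole procedure runs in $\polyG$ time, and since each reduction step preserves the answer, composing the equivalences along the chain of reductions gives $\twtwodeletion{G'}{t} \Longleftrightarrow \twtwodeletion{G}{t}$. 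This completes the plan.
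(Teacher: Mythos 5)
Your proposal is correct and takes essentially the same route as the paper: identical case split on the four triviality conditions and the same three reduction rules, with the only cosmetic difference being that you phrase the repeated application of Reduction~\ref{red:contract-component-with-small-neighborhood} as an explicit loop with a potential-function argument, whereas the paper packages it as induction on $|V(G)|$ and an appeal to the induction hypothesis.
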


\begin{proof}
    We prove using induction on $|V(G)|$.
    \begin{basecase}[$|V(G)| \leq 4$]
        Then returning $(G,t)$ suffices.
    \end{basecase}
    \begin{inductivestep}[$|V(G)| > 4$]~
        \begin{indsubcase}[$\treewidth{G} \leq 2$]
            We can apply Reduction \ref{red:solution-is-known} on $(G,t)$ and $S = \emptyset$ to obtain problem instance $(G',t)$. As follows from the definition of Reduction \ref{red:solution-is-known} and Lemmas \ref{lemma:safeness:solution-is-known} and \ref{lemma:time:solution-is-known}, returning $(G',t)$ suffices.
        \end{indsubcase}
    
        \begin{indsubcase}[$t = 0 \wedge \treewidth{G} > 2$]
            We can apply Reduction \ref{red:no-existing-solution} on $(G,t)$ to obtain problem instance $(G',0)$. As follows from the definition of Reduction \ref{red:no-existing-solution} and Lemmas \ref{lemma:safeness:no-existing-solution} and \ref{lemma:time:no-existing-solution}, returning $(G',t)$ suffices.
        \end{indsubcase}
        
        \begin{indsubcase}[$t > 0 \wedge \treewidth{G} > 0$]
            From Definition \ref{def:trivial-problem-instance} it follows that we can apply Reduction \ref{red:contract-component-with-small-neighborhood} on $(G,t)$. As follows from the definition of Reduction \ref{red:contract-component-with-small-neighborhood} and Lemmas \ref{lemma:safeness:contract-component-with-small-neighborhood} and \ref{lemma:time:contract-component-with-small-neighborhood}, we can in $\polyG$ time obtain a problem instance $(G',t)$ with $|V(G')| < |V(G)|$ and $\twtwodeletion{G'}{t} \Longleftrightarrow \twtwodeletion{G}{t}$. From the induction hypothesis it follows that we can in $\poly{|G'|} \leq \polyG$ time obtain a problem instance $(G'',t)$ with $\twtwodeletion{G''}{t} \Longleftrightarrow \twtwodeletion{G'}{t}$ such that $|V(G'')| \leq 4$ or $(G'', t)$ is a non-trivial problem instance with $|V(G'')| < |V(G')|$. Hence, returning $(G'',t)$ suffices.
        \end{indsubcase}
    \end{inductivestep}
    
    \noindent
    Because these cases cover all cases, correctness trivially follows.
\end{proof}

\noindent
We note that after obtaining a problem instance $(G',t)$ with $|V(G')| \leq 4$ our kernelization algorithm will terminate. This, because we can define the function $f$ that bounds the size of the resulting problem instance $|V(G')| \leq f(t)$ such that $4 \leq f(t)$ holds for any $t \in \nat$. In the remainder of this thesis we only focus on the remaining case where a non-trivial problem instance $(G',t)$ is obtained.

\chapter{Limit-\textit{m} Subsets and Cliques}\label{ch:limit-1-subsets-and-cliques}
In this section we introduce a new notion of a limit-$m$ subset. The idea behind a limit-$m$ subset $X$ is that it conveys whether there exist solutions for $(G,t)$ that remove few vertices from $X$, or whether it is essential that any solution for $(G,t)$ contains at least a certain number of vertices from $X$.

\begin{definition}[limit-$m$ subset\footnote{We note that the definition of a limit-$m$ subset corresponds to neighbourhood $X$ of a near-protrusion $C$, as introduced by Fomin et al. \cite{Planar_F-Deletion_Approximation_and_Optimal_FPT_Algorithms}. However, because we wish to refer to such sets $X$ independently of whether they are adjacent to a near-protrusion $C$, and our algorithm does not make use of protrusion replacement, we introduce new terminology that addresses the interaction between an arbitrary vertex set $X$ and a solution $S$.}]\label{def:limit-m-subset}
	A vertex subset $X \subseteq V(G)$ is a limit-$m$ subset for $(G,t)$ if every solution $S$ for $(G,t)$ has $|X \setminus S| \leq m$.
\end{definition}

\noindent
For example, if $\{x,y\}$ is a limit-1 subset of $(G,t)$ it will hold for any solution $S$ for $(G,t)$ that $|\{x,y\} \setminus S| \leq 1$. I.e., $S$ will need to include either $x$ or $y$ (or both) because there does not exist a solution that removes neither. Another interpretation of a limit-$m$ subset is that from this subset we can choose at most $m$ vertices to not include in a solution set $S$ (or rather leave in $G - S$).

An important aspect to point out is that for a vertex subset $X$ to be a limit-$m$ subset there does not need to exist a solution with $|X \setminus S| \leq m$. Rather, there should not exist one with $|X \setminus S| > m$. Furthermore, this definition only considers solutions for $(G,t)$ and not modulators of size larger than $t$. I.e., there might exist modulators $X$ of $G$ with $|X \setminus S| > m$, but those modulators will have $|X| > t$.

Throughout this thesis we will only work with limit-0, limit-1, and limit-2 subsets. Limit-0 subsets $X$ simply state that no solution can exist for $(G,t)$ with $X \not\subseteq S$, in which case we will end up removing vertex set $X$ from $G$ and decreasing $t$ by $|X|$. Limit-1 and limit-2 subsets will be used to denote properties of the neighbourhood $N_G(H)$ of some subgraphs $H$ in $G$. The statement that $N_G(H)$ is a limit-1 or limit-2 subset implies that $H-S$ can be separated from $(G - H) - S$ in graph $G - S$ by at most one or two neighbours $N(H) \setminus S$, which will be an essential property in proving the safeness of reduction rules.

Identifying whether any set $X$ is a limit-$m$ subset for $(G,t)$ is, in essence, the same problem as determining whether there exists a solution for $(G,t)$. For example, $V(G)$ is a limit-$0$ subset for $(G,t)$ with $t < |V(G)|$ if and only if each solution $S$ for $(G,t)$ has $|V(G) \setminus S| \leq 0$. Because any solution $S$ for $(G,t)$ has $|S| \leq t < |V(G)|$ we would derive for any such solution that $|V(G) \setminus S| > 0$, which means that $V(G)$ is a limit-$0$ subset for $(G,t)$ if and only if $\twtwodeletion{G}{t}$. Hence, we do not expect to find polynomial time algorithms that is able to always correctly determine whether a vertex set is a limit-$m$ subset. In some instances, however, we will be able to determine whether a set $X$ is a limit-$m$ subset for $(G,t)$. Within this section we will explore some of these methods, which will be used multiple times throughout this thesis.

\begin{lemma}\label{lemma:limit-m-subset-identification}
    Let $(G,t)$ be a problem instance and let $X \subseteq V(G)$. If there exist $t+1$ vertex-disjoint subgraphs $G_i$ in $G - X$ with $\treewidth{G[V(G_i) \cup X]} > 2$, then $X$ is a limit-{$|X|{{-}}1$} subset for $(G,t)$.
\end{lemma}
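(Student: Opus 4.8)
The plan is to argue by contraposition: suppose $X$ is \emph{not} a limit-$(|X|-1)$ subset for $(G,t)$, and derive a contradiction with the existence of $t+1$ vertex-disjoint subgraphs $G_1, \dots, G_{t+1}$ in $G - X$ each satisfying $\treewidth{G[V(G_i) \cup X]} > 2$. If $X$ is not a limit-$(|X|-1)$ subset, then by Definition \ref{def:limit-m-subset} there exists a solution $S$ for $(G,t)$ with $|X \setminus S| > |X| - 1$, i.e. $|X \setminus S| = |X|$, which forces $X \cap S = \emptyset$; in particular $X \subseteq V(G) \setminus S = V(G-S)$. Since $S$ is a solution we have $|S| \le t$ and $\treewidth{G - S} \le 2$.

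Next I would use a counting (pigeonhole) argument on the $t+1$ disjoint subgraphs. Because $|S| \le t$ and the subgraphs $G_1, \dots, G_{t+1}$ are vertex-disjoint, at least one of them, say $G_j$, satisfies $V(G_j) \cap S = \emptyset$. Combined with $X \cap S = \emptyset$, this gives $(V(G_j) \cup X) \cap S = \emptyset$, so $G[V(G_j) \cup X]$ is an induced subgraph of $G - S$. By Theorem 14.15 in \cite{Kernelization_Theory_of_Parameterized_Preprocessing} (treewidth is monotone under taking subgraphs), $\treewidth{G[V(G_j) \cup X]} \le \treewidth{G - S} \le 2$. This contradicts the hypothesis $\treewidth{G[V(G_j) \cup X]} > 2$. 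Hence no such solution $S$ exists, and $X$ is a limit-$(|X|-1)$ subset for $(G,t)$.

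The argument is short and the only real content is the pigeonhole step: with $t+1$ disjoint "obstruction" subgraphs and a deletion set of size at most $t$, one obstruction must survive intact, and since $X$ also survives (as $X \cap S = \emptyset$), the surviving obstruction together with $X$ witnesses treewidth $> 2$ in $G - S$, contradicting that $S$ is a solution. I do not anticipate a serious obstacle here; the one point to be careful about is that the subgraphs $G_i$ need only be vertex-disjoint from each other and contained in $G - X$ (so automatically vertex-disjoint from $X$), which is exactly what lets us conclude $(V(G_j) \cup X) \cap S = \emptyset$ from the two facts $V(G_j) \cap S = \emptyset$ and $X \cap S = \emptyset$. One should also note the edge case where $X = \emptyset$: then the statement claims $\emptyset$ is a limit-$(-1)$ subset, i.e. no solution exists at all, which is consistent since a single $G_i$ with $\treewidth{G[V(G_i)]} > 2$ already forces $\treewidth{G} > 2$ after removing at most $t$ vertices missing $G_i$ — the same argument goes through unchanged.
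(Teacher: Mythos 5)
Your proof is correct and matches the paper's argument essentially step for step: assume a solution $S$ disjoint from $X$ exists, use the pigeonhole principle on the $t+1$ vertex-disjoint subgraphs versus $|S| \leq t$ to find a surviving $G_j$, and observe that $G[V(G_j) \cup X]$ is then a subgraph of $G - S$ with treewidth exceeding $2$, contradicting $\treewidth{G - S} \leq 2$. The only cosmetic difference is that the paper cites its Lemma \ref{lemma:minor-treewidth} (treewidth is minor-monotone) where you invoke Theorem 14.15 directly, which is the same fact.
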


\begin{proof}
    Assume per contradiction that $S$ is a solution for $(G,t)$ with $|X \setminus S| > |X|-1$, i.e., $X \cap S = \emptyset$. Because $|S| \leq t$ there exists a subgraph $G_i$ with $V(G_i) \cap S = \emptyset$. Then $G[V(G_i) \cup X]$ is a subgraph of $G - S$. Lemma \ref{lemma:minor-treewidth} yields $\treewidth{G - S} \geq \treewidth{G[V(G_i) \cup X]} > 2$, which contradicts $S$ being a solution of $(G,t)$. Hence, $X$ must be a limit-{$|X|{{-}}1$} subset for $(G,t)$.
\end{proof}

\begin{lemma}\label{lemma:limit-m-of-subgraph-to-supergraph}
    Let $(G,t)$ be a problem instance, let $H$ be a subgraph of $G$, and let $X$ be a limit-$m$ subset for $(H,t)$. Then $X$ is a limit-$m$ subset for $(G,t)$.
\end{lemma}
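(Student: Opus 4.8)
The plan is to prove the contrapositive-free direct way: take an arbitrary solution $S$ for $(G,t)$ and exhibit a related solution for $(H,t)$, then invoke the hypothesis that $X$ is a limit-$m$ subset for $(H,t)$. Since being a limit-$m$ subset for $(H,t)$ presupposes $X \subseteq V(H)$, and $V(H) \subseteq V(G)$, the containment $X \subseteq V(G)$ required by Definition~\ref{def:limit-m-subset} is immediate; the only real content is the bound $|X \setminus S| \leq m$ for every solution $S$ of $(G,t)$.

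First I would fix a solution $S$ for $(G,t)$ and consider its restriction $S_H := S \cap V(H)$. I claim $S_H$ is a solution for $(H,t)$. Its size is at most $|S| \leq t$. For the treewidth condition, note that $H - S_H = H - S$ is a subgraph of $G - S$, hence a minor of $G - S$, so Lemma~\ref{lemma:minor-treewidth} gives $\treewidth{H - S_H} \leq \treewidth{G - S} \leq 2$. Thus $S_H$ is indeed a modulator of $H$ of size at most $t$, i.e.\ a solution for $(H,t)$.

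Now, because $X$ is a limit-$m$ subset for $(H,t)$, applying Definition~\ref{def:limit-m-subset} to the solution $S_H$ yields $|X \setminus S_H| \leq m$. The key observation is that $X \subseteq V(H)$, so $X \setminus S_H = X \setminus (S \cap V(H)) = X \setminus S$; removing from $S$ the vertices outside $V(H)$ cannot change its intersection with $X$. Hence $|X \setminus S| \leq m$, and since $S$ was an arbitrary solution for $(G,t)$, this shows $X$ is a limit-$m$ subset for $(G,t)$.

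There is essentially no obstacle here: the statement reduces to the two elementary facts that (i) restricting a modulator of $G$ to $V(H)$ yields a modulator of $H$ (via the subgraph/minor monotonicity of treewidth), and (ii) $X$ lying entirely inside $V(H)$ makes the ``leftover set'' $X \setminus S$ insensitive to whether we delete from $S$ or from $S \cap V(H)$. I would simply make sure to state explicitly that a solution's size can only shrink under restriction, so the size bound $\leq t$ is preserved.
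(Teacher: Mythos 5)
Your proof is correct and follows essentially the same approach as the paper: both use Lemma~\ref{lemma:minor-treewidth} to observe that $H - S$ is a minor of $G - S$, hence a solution for $(G,t)$ induces a solution for $(H,t)$, and then invoke the limit-$m$ property on $H$. The only cosmetic differences are that you argue directly rather than by contradiction and that you explicitly restrict $S$ to $S \cap V(H)$ (a small detail the paper's one-liner leaves implicit), neither of which changes the substance.
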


\begin{proof}
    Assume per contradiction that $S$ is a solution for $(G,t)$ with $|X \setminus S| > m$. By Lemma \ref{lemma:minor-treewidth} we have $\treewidth{H - S} \leq \treewidth{G - S} \leq 2$, which contradicts $X$ being a limit-$m$ subset for $(H,t)$. Hence, $X$ must be a limit-$m$ subset for $(G,t)$.
\end{proof}

\begin{lemma}\label{lemma:subset-of-limit-m-of-subset}
    Let $(G,t)$ be a problem instance, let $X$ be a limit-$m$ subset for $(G,t)$, and let $Y \subseteq X$. Then $Y$ is a limit-$m$ subset for $(G,t)$.
\end{lemma}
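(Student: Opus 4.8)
The plan is to prove Lemma~\ref{lemma:subset-of-limit-m-of-subset} directly from the definition of a limit-$m$ subset (Definition~\ref{def:limit-m-subset}), using the trivial set-theoretic fact that a smaller set can contain no more "surviving" vertices than the larger set that contains it. First I would take an arbitrary solution $S$ for $(G,t)$; the goal is to show $|Y \setminus S| \leq m$. Since $X$ is a limit-$m$ subset for $(G,t)$, we know $|X \setminus S| \leq m$. Because $Y \subseteq X$, we have $Y \setminus S \subseteq X \setminus S$, and hence $|Y \setminus S| \leq |X \setminus S| \leq m$. As $S$ was arbitrary, this establishes that $Y$ is a limit-$m$ subset for $(G,t)$.

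Unlike the two preceding lemmas, this one does not seem to require a proof by contradiction or any appeal to the treewidth/minor machinery (Lemma~\ref{lemma:minor-treewidth}); it is purely a monotonicity statement about the defining inequality. The only thing worth spelling out carefully is the containment $Y \setminus S \subseteq X \setminus S$, which follows immediately from $Y \subseteq X$: any vertex in $Y$ but not in $S$ is also in $X$ but not in $S$. There is essentially no obstacle here — the statement is an immediate structural consequence of the definition, and the main "work" is simply to note that the universally-quantified bound in Definition~\ref{def:limit-m-subset} is preserved under passing to subsets. I would keep the write-up to two or three sentences.

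\begin{proof}
    Let $S$ be an arbitrary solution for $(G,t)$. Because $X$ is a limit-$m$ subset for $(G,t)$, we have $|X \setminus S| \leq m$. Since $Y \subseteq X$, it follows that $Y \setminus S \subseteq X \setminus S$, and therefore $|Y \setminus S| \leq |X \setminus S| \leq m$. As this holds for every solution $S$ for $(G,t)$, we conclude that $Y$ is a limit-$m$ subset for $(G,t)$.
\end{proof}
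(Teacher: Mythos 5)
Your proof is correct and uses the same key observation as the paper, namely that $Y \subseteq X$ implies $|Y \setminus S| \leq |X \setminus S|$; the paper merely wraps this in a proof by contradiction while you state it directly, which is a purely stylistic difference.
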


\begin{proof}
    Assume per contradiction that $S$ is a solution for $(G,t)$ with $|Y \setminus S| > m$. Because $|X \setminus S| \geq |Y \setminus S|$ this contradicts $X$ being a limit-$m$ subset for $(G,t)$. Hence, $Y$ must be a limit-$m$ subset for $(G,t)$.
\end{proof}

\begin{lemma}\label{lemma:limit-m-subset-to-superset}
    Let $(G,t)$ be a problem instance and let $X \subseteq V(G)$. If each $Y \subseteq X$ with $|Y| = m+1$ is a limit-$m$ subset for $(G,t)$, then $X$ is a limit-$m$ subset.
\end{lemma}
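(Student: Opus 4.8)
The plan is to argue by contradiction, directly from Definition \ref{def:limit-m-subset}. Suppose $X$ is not a limit-$m$ subset for $(G,t)$; then there is a solution $S$ for $(G,t)$ with $|X \setminus S| > m$, hence $|X \setminus S| \geq m+1$. First I would pick an arbitrary subset $Y \subseteq X \setminus S$ with $|Y| = m+1$; such a set exists precisely because $|X \setminus S| \geq m+1$. Since $Y \subseteq X$ and $|Y| = m+1$, the hypothesis applies and $Y$ is a limit-$m$ subset for $(G,t)$, so $|Y \setminus S| \leq m$. On the other hand $Y \cap S = \emptyset$ since $Y \subseteq X \setminus S$, whence $|Y \setminus S| = |Y| = m+1 > m$, a contradiction. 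Therefore $X$ must be a limit-$m$ subset for $(G,t)$.

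There is essentially no obstacle here; the only point worth a moment's care is the degenerate case $|X| \leq m$, in which the hypothesis is vacuous. But that case is harmless: then every solution $S$ has $|X \setminus S| \leq |X| \leq m$, so $X$ is a limit-$m$ subset anyway, and the contradiction argument above never reaches it, since it first produces $|X \setminus S| \geq m+1$ and hence $|X| \geq m+1$ before invoking the assumption on $(m+1)$-element subsets.

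Conceptually this lemma is a local-to-global principle that complements Lemma \ref{lemma:subset-of-limit-m-of-subset}: the latter passes from a limit-$m$ subset to all of its subsets, while this one builds a limit-$m$ subset from the limit-$m$-ness of all of its $(m+1)$-element subsets. Together they show that being a limit-$m$ subset is governed entirely by the behaviour of the subsets of size $m+1$, which is convenient because the identification tools in this chapter (in particular Lemma \ref{lemma:limit-m-subset-identification}) are most naturally applied to small fixed-size sets.
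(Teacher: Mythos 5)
Your argument is correct and is essentially identical to the paper's: both proceed by contradiction, extract a subset $Y \subseteq X \setminus S$ of size $m+1$, and note that $Y \setminus S = Y$ contradicts $Y$ being a limit-$m$ subset. The extra remarks about the vacuous case $|X| \leq m$ and the complementary relationship with Lemma \ref{lemma:subset-of-limit-m-of-subset} are accurate but not needed.
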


\begin{proof}
    Assume per contradiction that $S$ is a solution for $(G,t)$ with $|X \setminus S| > m$. Let $Y \subseteq X \setminus S$ with $|Y| = m+1$. We have $|Y \setminus S| > m$, which contradicts $Y$ being a limit-$m$ subset for $(G,t)$. Hence, $X$ must be a limit-$m$ subset for $(G,t)$.
\end{proof}

\noindent
In some instances we will desire the additional property that a limit-$m$ subset $X$ of $(G,t)$ also induces a clique in $G$. To avoid repeating this additional requirement that a limit-$m$ subset $X$ also induces a clique in $G$ we define a limit-$m$ clique, which simply is a combination of these two requirements.

\begin{definition}[limit-$m$ clique]\label{def:limit-m-clique}
    A vertex subset $X \subseteq V(G)$ is a limit-$m$ clique for $(G,t)$ if $G[X]$ is a clique and $X$ is a limit-$m$ subset for $(G,t)$.
\end{definition}

\section{Finding Limit-\textit{m} Cliques or Disjoint Modulators}
We introduce the algorithm that will be used to obtain from a modulator $X$ a modulator $X \cup Y$, such that components in $G - (X \cup Y)$ will have additional properties that are desirable for reducing those components. I.e., the tidiness property briefly introduced in Section \ref{sec:intro:our-contributions}. Informally, the first property is that we can remove any vertex $u \in X \cup Y$ from $X \cup Y$ and retain $(X \cup Y) \setminus \{u\}$ being a modulator for $G$. The second property is that any component $C \in \mathcal{C}(G - (X \cup Y))$ has $N_G(C) \cap X$ being a limit-1 clique for $(G - C,t)$. In Chapter \ref{ch:graph-decompositions} we will handle the formalisation of these properties and how they follow from Algorithm \ref{algo:find-disjoint-modulator}.

The algorithm we define operates on problem instances $(G,t)$ together with modulators $X'$ of $G$ for which $G[X']$ is a clique and $0 < |X'| \leq 3$. The limitation that $X'$ is a small clique will not be problematic, as our kernelization algorithm will invoke this algorithm once for each $x \in X$ on graphs $G - (X \setminus \{x\})$ with modulator $X' = \{x\}$ and once for each $xy \in E[G[X]]$ on graphs $G - (X \setminus \{x,y\})$ with modulator $X' = \{x,y\}$. In other words, we will run this algorithm not on the full graph $G$ with modulator $X$. Instead we will run this algorithm multiple times on different subgraphs of $G$ with corresponding modulators $X' \subseteq X$.

\begin{algorithm}
    \SetAlgoLined
	\caption{\textsc{Find-Disjoint-Modulator}($G, t, X$)}
	\label{algo:find-disjoint-modulator}
	\smallskip
	\KwIn{A graph $G$, a parameter $t$, and a modulator $X$ for $G$, for which $G[X]$ is a clique and $0 < |X| \leq 3$.}
	\KwOut{A set $Y \subseteq V(G) \setminus X$ with $|Y| \leq 3t+3$ such that either $\treewidth{G - Y} \leq 2$ or every $C \in \mathcal{C}(G - (X \cup Y))$ has that $X$ is a limit-{$|X|{{-}}1$} clique for $(G-C,t)$.}
	\medskip
	\everypar={\nl}
	let $(T, \chi)$ be a width-2 tree decomposition of $G - X$ with an arbitrary root\\
	for each $r \in V(T)$ let $T(r)$ denote the subtree of $T$ rooted at $r$\\
	$Y = \emptyset$\\
	\While{\upshape$\treewidth{G[\chi(T) \cup X]} > 2$ and up to $t{{+}}1$ times}{
	    let $r \in V(T)$ be a vertex furthest from the root with $\treewidth{G[\chi(T(r)) \cup X]} > 2$\\
	    add all vertices $\chi(r)$ to $Y$\\
	    remove\footnotemark{} $\chi(T(r))$ from $(T,\chi)$
	}
	\Return $Y$
\end{algorithm}
\footnotetext{The obtained tree decomposition $(T,\chi')$ has for each $t \in V(T)$ that $\chi'(t) = \chi(t) \setminus \chi(T(r))$. We note that the resulting tree decomposition can contain empty bags. I.e., we do not remove or contract empty bags.}

\noindent
We will prove that Algorithm \ref{algo:find-disjoint-modulator} is correct and takes a polynomial amount of time. In these proofs we often need to make a distinction between the state of the variables throughout multiple iterations of the while-loop. We will use $(T,\chi_i)$ and $Y_i$ to respectively denote the tree decomposition $(T,\chi)$ and vertex set $Y$ after $i$ iterations of the while-loop. We let $r_i$ be the vertex chosen to be vertex $r$ during the $i{{+}}1$'th iteration, which means it is a vertex in $T$ furthest from the root with $\treewidth{G[\chi_i(T(r_i)) \cup X]} > 2$. In case the $i$'th iteration was the last iteration, then we let $r_i$ be undefined.

From the definitions of $(T,\chi_i)$, $Y_i$, and $r_i$ it follows that initially $Y_0 = \emptyset$ and $\chi_0(T) = V(G) \setminus X$, and after $i > 0$ iterations we have $Y_i = Y_{i-1} \cup \chi_{i-1}(r_{i-1})$ and $\chi_i(T) = \chi_{i-1}(T) \setminus \chi_{i-1}(T(r_{i-1}))$.

Before we formally prove that Algorithm \ref{algo:find-disjoint-modulator} is correct we first provide the intuition behind this algorithm. We display an example tree decomposition $(T,\chi)$ of $G - X$ in Figure \ref{fig:finding-limit-m-cliques-or-disjoint-modulators}.

\begin{figure}[ht]
    \centering
    \includegraphics[scale=.62]{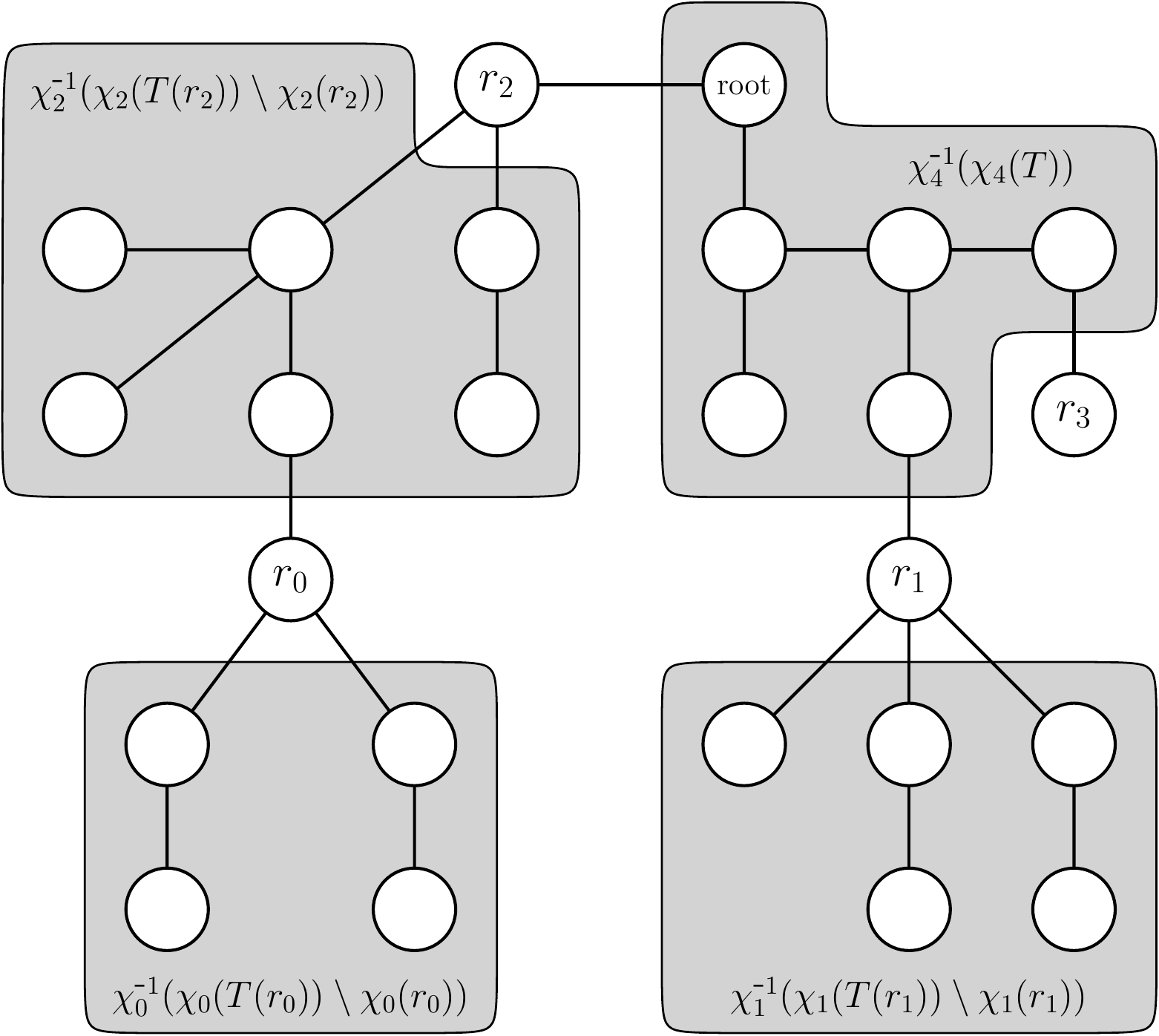}
    \caption{Example tree decomposition and subgraphs $\bchi_i(\chi_i(T(r_i)) \setminus \chi_i(r_i))$}
    \label{fig:finding-limit-m-cliques-or-disjoint-modulators}
\end{figure}
\clearpage

\noindent
Algorithm \ref{algo:find-disjoint-modulator} first finds a vertex $r_0$ furthest from the root with $\treewidth{G[\chi_0(T(r_0)) \cup X]} > 2$. We will be able to prove that $\treewidth{G[(\chi_0(T(r_0)) \setminus \chi_0(r_0)) \cup X]} \leq 2$. Hence, in the first iteration we identify a set of vertices $\chi_0(T(r_0))$ and add up to three vertices $\chi_0(r_0)$ to $Y_0$ such that $\treewidth{G[\chi_0(T(r_0)) \cup X]} > 2$ and $\treewidth{G[(\chi_0(T(r_0)) \setminus Y_1) \cup X]} \leq 2$.

The next iteration we `ignore' all vertices in $\chi_0(T(r_0))$. Using the same procedure as before we find a vertex $r_1$ and add three vertices $\chi_1(r_1)$ to $Y_1$ such that $\treewidth{G[\chi_1(T(r_1)) \cup X]} > 2$ and $\treewidth{G[(\chi_1(T(r_1)) \setminus Y_2) \cup X]} \leq 2$. Because $\chi_0(r_0) \subseteq Y_2$ separates $\chi_1(T(r_1)) \setminus Y_2$ from $\chi_0(T(r_0)) \setminus Y_2$ in $G - X$ we can prove that $\treewidth{G[((\chi_1(T(r_1)) \cup \chi_0(T(r_0))) \setminus Y_2) \cup X]} = \treewidth{G - (\chi_2(T) \cup Y_2)} \leq 2$.

We keep repeating this procedure, each time adding up to three vertices $\chi_i(r_i)$ to $Y_i$ such that $\treewidth{G[\chi_i(T(r_i)) \cup X]} > 2$ and $\treewidth{G - (\chi_{i+1}(T) \cup Y_{i+1})} \leq 2$. At some point a situation is reached where either $\treewidth{G[\chi_i(T) \cup X]} \leq 2$ or $i > t+1$. 

In case of the former holds, because $Y_i$ separates $\chi_i(T)$ from $V(G) \setminus X$ in $G - X$, we can prove $\treewidth{G - Y_i} \leq 2$. Hence, in this case returning $Y_i$ suffices.

In case the latter holds there exists a partitioning $\mathcal{D} = \{\chi_0(T(r_0)), \dots, \chi_t(T(r_t)), \chi_{t+1}(T)\}$ of $G - X$ into $t+2$ sets $D \in \mathcal{D}$ with $\treewidth{G[D \cup X]} > 2$. We can prove that each component $C \in \mathcal{C}(G - (X \cup Y_{t+1}))$ is a subgraph of some $G[D]$. When component $C$ is removed from $G$ there will still exist $t+1$ vertex sets $D' \in \mathcal{D} \setminus \{D\}$ in $G - C$ with $\treewidth{G[D \cup X]} > 2$. By Lemma \ref{lemma:limit-m-subset-identification} we then derive that $X$ is a limit-{$|X|{{-}}1$} clique for $(G-C,t)$. Hence, in this case returning $Y_i$ suffices as well.

With this we will then have proven that Algorithm \ref{algo:find-disjoint-modulator} is safe. The proof that Algorithm \ref{algo:find-disjoint-modulator} takes $\poly{G}$ time can easily be derived from Lemma \ref{lemma:find-tree-decompositions}. We will next formally prove the correctness of Algorithm \ref{algo:find-disjoint-modulator}.

\begin{lemma}\label{lemma:partial-safeness:find-disjoint-modulator:stays-tree-decomposition}
    After $i$ iterations $(T,\chi_i)$ is a tree decomposition of $G[\chi_i(T)]$.
\end{lemma}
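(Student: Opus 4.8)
The plan is to prove this by induction on the number of iterations $i$, which matches the recursive structure of the algorithm's state variables defined just above the lemma. For the base case $i=0$, I would observe that $(T,\chi_0) = (T,\chi)$ is by construction a width-2 tree decomposition of $G - X$, and $\chi_0(T) = V(G) \setminus X = V(G-X)$, so the claim holds trivially. For the inductive step, assuming $(T,\chi_{i-1})$ is a tree decomposition of $G[\chi_{i-1}(T)]$, I need to show that removing $\chi_{i-1}(T(r_{i-1}))$ from every bag (as described in the footnote: the new assignment is $\chi_i(t) = \chi_{i-1}(t) \setminus \chi_{i-1}(T(r_{i-1}))$ for each $t \in V(T)$) yields a tree decomposition of $G[\chi_i(T)]$, where $\chi_i(T) = \chi_{i-1}(T) \setminus \chi_{i-1}(T(r_{i-1}))$.

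The key steps are verifying the three tree-decomposition axioms from Definition~\ref{def:tree-decomposition} for $(T,\chi_i)$ with respect to $G[\chi_i(T)]$. Axiom~1 (vertex coverage): every $v \in \chi_i(T)$ lies in $\chi_{i-1}(t)$ for some $t$ by the induction hypothesis, and since $v \notin \chi_{i-1}(T(r_{i-1}))$ we have $v \in \chi_i(t)$; conversely every vertex appearing in some $\chi_i(t)$ lies in $\chi_i(T)$ by definition. Axiom~2 (edge coverage): for an edge $uv \in E(G[\chi_i(T)])$ we have $uv \in E(G)$ with both endpoints in $\chi_i(T)$; by the induction hypothesis some bag $\chi_{i-1}(t)$ contains $\{u,v\}$, and since neither $u$ nor $v$ is in $\chi_{i-1}(T(r_{i-1}))$, the bag $\chi_i(t)$ still contains $\{u,v\}$. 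Axiom~3 (connectivity): for $v \in \chi_i(T)$, the set $\{t : v \in \chi_i(t)\}$ equals $\{t : v \in \chi_{i-1}(t)\}$ (removing $\chi_{i-1}(T(r_{i-1}))$ from bags cannot delete $v$ from any bag, since $v \notin \chi_{i-1}(T(r_{i-1}))$), and the latter set induces a connected subtree by the induction hypothesis.

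I expect the main (mild) obstacle to be the bookkeeping around axiom~3 and the subtlety that the vertex set of $T$ itself is unchanged — only the bag contents shrink — so one must be careful that "node set where $v$ appears" is genuinely unaffected for the surviving vertices $v \in \chi_i(T)$. A secondary point worth stating cleanly is that $G[\chi_i(T)]$ is indeed the correct ``target'' graph: the removal operation deletes exactly the vertices of $\chi_{i-1}(T(r_{i-1}))$ from consideration, and one should note (perhaps via the fact that $T(r_{i-1})$ is a subtree together with Lemma~\ref{lemma:bchi-connected-in-subtree}) that these are precisely the vertices that disappear, so no vertex of $\chi_i(T)$ is accidentally dropped. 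Everything else is routine verification, so the proof should be short.
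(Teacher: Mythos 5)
Your proposal is correct and follows essentially the same route as the paper: induction on $i$, with the inductive step verifying the three axioms of Definition~\ref{def:tree-decomposition} via the observations that $E(G[\chi_i(T)]) \subseteq E(G[\chi_{i-1}(T)])$ and that $\bchi_i(v) = \bchi_{i-1}(v)$ for every surviving vertex $v \in \chi_i(T)$. Your ``secondary point'' about verifying that $G[\chi_i(T)]$ is the right target graph is a non-issue — that graph is simply the induced subgraph on $\chi_i(T)$ by definition of the lemma statement, so there is nothing to check there.
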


\begin{proof}
    We prove using induction on $i$.
    
    \begin{basecase}[$i = 0$]
        Holds because $(T,\chi_0)$ is a tree decomposition of $G[\chi_0(T)] = G - X$.
    \end{basecase}
    
    \begin{inductivestep}[$i > 0$]
        We assume the induction hypothesis that $(T,\chi_{i-1})$ is a tree decomposition of $G[\chi_{i-1}(T)]$ and prove that $(T,\chi_i)$ is a tree decomposition of $G[\chi_i(T)]$.
        
        Let $uv \in E(G[\chi_i(T)])$. Because $E(G[\chi_i(T)]) \subseteq E(G[\chi_{i-1}(T)])$ there exists some node $t \in V(T)$ with $u,v \in \chi_{i-1}(t)$. Because $\chi_i(t) = \chi_{i-1}(t) \setminus \chi_{i-1}(T(r_{i-1}))$ and $u,v \in \chi_i(T) = \chi_{i-1}(T) \setminus \chi_{i-1}(T(r_{i-1}))$ we have $u,v \in \chi_i(t)$. Hence, for each $uv \in E(G[\chi_i(T)])$ there exists some $t \in V(T)$ with $u,v \in \chi_i(t)$, which proves that Property 2 of Definition \ref{def:tree-decomposition} holds.

        For each $u \in \chi_i(T)$ we have $T[\bchi_{i-1}(u)] = T[\bchi_i(u)]$, which is connected. Hence, Property 3 of Definition \ref{def:tree-decomposition} holds. Because $V(G[\chi_i(T)]) = \chi_i(T)$ Property 1 of Definition \ref{def:tree-decomposition} holds. Hence, $(T,\chi_{i-1})$ is a tree decomposition of $G[\chi_{i-1}(T)]$.
    \end{inductivestep}
    
    \noindent
    We conclude that after $i$ iterations $(T,\chi_i)$ is a tree decomposition of $G[\chi_i(T)]$.
\end{proof}

\begin{lemma}\label{lemma:partial-safeness:find-disjoint-modulator:0}
    After $i$ iterations $X \cup Y_i$ separates $\chi_i(T)$ from $V(G) \setminus (\chi_i(T) \cup X \cup Y_i)$ in $G$.
\end{lemma}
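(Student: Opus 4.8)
The plan is to prove the statement by induction on $i$, in parallel with Lemma \ref{lemma:partial-safeness:find-disjoint-modulator:stays-tree-decomposition}. First I would set up bookkeeping notation: write $R_i := (V(G)\setminus X)\setminus\chi_i(T)$ for the set of vertices of $G-X$ that have been removed from the tree decomposition during the first $i$ iterations. Since $\chi_i(T),Y_i\subseteq V(G)\setminus X$, a direct computation gives $V(G)\setminus(\chi_i(T)\cup X\cup Y_i)=R_i\setminus Y_i$, so the claim to prove becomes: $X\cup Y_i$ separates $\chi_i(T)$ from $R_i\setminus Y_i$ in $G$. The base case $i=0$ is immediate, because $Y_0=\emptyset$ and $\chi_0(T)=V(G)\setminus X$, hence $R_0\setminus Y_0=\emptyset$ and there is nothing to separate.

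For the inductive step I would record the identities $\chi_i(T)=\chi_{i-1}(T)\setminus\chi_{i-1}(T(r_{i-1}))$, $Y_i=Y_{i-1}\cup\chi_{i-1}(r_{i-1})$, and $R_i=R_{i-1}\cup\chi_{i-1}(T(r_{i-1}))$ (with the two sets in the last union disjoint), together with the easy observations (by induction on the number of iterations) that $Y_{i-1}\subseteq R_{i-1}$ and that $Y_i\cap\chi_{i-1}(T(r_{i-1}))=\chi_{i-1}(r_{i-1})$. Now suppose, for contradiction, that $Q$ is a path in $G-(X\cup Y_i)$ from some $u\in\chi_i(T)$ to some $w\in R_i\setminus Y_i$. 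Since $X\cup Y_{i-1}\subseteq X\cup Y_i$, the path $Q$ avoids $X\cup Y_{i-1}$, and $u\in\chi_i(T)\subseteq\chi_{i-1}(T)$; if $Q$ visited a vertex of $R_{i-1}\setminus Y_{i-1}$, its prefix up to the first such vertex would be a path in $G-(X\cup Y_{i-1})$ connecting $\chi_{i-1}(T)$ to $R_{i-1}\setminus Y_{i-1}$, contradicting the induction hypothesis. As $Q$ also avoids $Y_{i-1}$ and $X$, it follows that $V(Q)\subseteq(V(G)\setminus X)\setminus R_{i-1}=\chi_{i-1}(T)$. In particular $w\in\chi_{i-1}(T)$, and since $w\in R_i\setminus Y_i$ and $R_i$ is the disjoint union of $R_{i-1}$ and $\chi_{i-1}(T(r_{i-1}))$, this forces $w\in\chi_{i-1}(T(r_{i-1}))\setminus\chi_{i-1}(r_{i-1})$, while $u\in\chi_i(T)=\chi_{i-1}(T)\setminus\chi_{i-1}(T(r_{i-1}))$.

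To finish I would invoke Lemma \ref{lemma:partial-safeness:find-disjoint-modulator:stays-tree-decomposition}, so that $(T,\chi_{i-1})$ is a tree decomposition of $G[\chi_{i-1}(T)]$, and apply Corollary \ref{corollary:treedecomp-node-separator} at the node $r_{i-1}$: each component of $G[\chi_{i-1}(T)]-\chi_{i-1}(r_{i-1})$ has all its bags inside a single component of $T-r_{i-1}$, which is either a subtree rooted at a child of $r_{i-1}$ — in which case the component lies in $\chi_{i-1}(T(r_{i-1}))\setminus\chi_{i-1}(r_{i-1})$ — or the ``upper'' component with node set $V(T)\setminus V(T(r_{i-1}))$ — in which case, using Property 3 of Definition \ref{def:tree-decomposition}, none of its vertices appear in any bag of $T(r_{i-1})$, so the component lies in $\chi_{i-1}(T)\setminus\chi_{i-1}(T(r_{i-1}))=\chi_i(T)$. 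Hence $\chi_{i-1}(r_{i-1})$ separates $\chi_i(T)$ from $\chi_{i-1}(T(r_{i-1}))\setminus\chi_{i-1}(r_{i-1})$ in $G[\chi_{i-1}(T)]$. But $Q$ is a path in $G[\chi_{i-1}(T)]$ from $u$ to $w$ avoiding $\chi_{i-1}(r_{i-1})\subseteq Y_i$, a contradiction.

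I expect the main obstacle to be the bookkeeping rather than a single deep idea: correctly identifying $V(G)\setminus(\chi_i(T)\cup X\cup Y_i)$ as $R_i\setminus Y_i$, and handling the ``escape and return'' subtlety — a priori $Q$ could wander out of $\chi_{i-1}(T)$ into the previously removed region $R_{i-1}$ and come back, and one must rule this out (via the induction hypothesis, which pins $Q$ inside $\chi_{i-1}(T)$) before the tree-decomposition node-separator argument applies to the freshly removed chunk $\chi_{i-1}(T(r_{i-1}))$. A small degenerate case also deserves a sentence: when $\chi_i(T)=\emptyset$ (for instance if $r_{i-1}$ is the root of $T$), the statement holds vacuously.
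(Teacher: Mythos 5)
Your proof is correct and follows essentially the same route as the paper's: induction on $i$, reduction of the inductive step to paths that would have to reach the freshly removed chunk $\chi_{i-1}(T(r_{i-1}))\setminus\chi_{i-1}(r_{i-1})$, the induction hypothesis to confine such a path to $\chi_{i-1}(T)$, and Corollary~\ref{corollary:treedecomp-node-separator} at $r_{i-1}$ for the contradiction. The $R_i$ bookkeeping and the explicit ``escape and return'' discussion are cosmetic repackagings of the paper's observation that the offending component $C$ must satisfy $V(C)\subseteq\chi_{i-1}(T)\setminus\chi_{i-1}(r_{i-1})$.
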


\begin{proof}
    We prove using induction on $i$.
    
    \begin{basecase}[$i = 0$]
        It vacuously holds that $Y_0 = \emptyset$ separates $\chi_0(T) = V(G) \setminus X$ from $V(G) \setminus (\chi(T) \cup X \cup Y_0) = \emptyset$ in $G$.
    \end{basecase}

    \begin{inductivestep}[$i > 0$]
        We assume the induction hypothesis that $X \cup Y_{i-1}$ separates $\chi_{i-1}(T)$ from $V(G) \setminus (\chi_{i-1}(T) \cup X \cup Y_{i-1})$ in $G$ and prove that $X \cup Y_i$ separates $\chi_i(T)$ from $V(G) \setminus (\chi_i(T) \cup X \cup Y_i)$ in $G$.
    
        Any two vertices $u \in \chi_i(T) \subseteq \chi_{i-1}(T)$ and $v \in V(G) \setminus (\chi_{i-1}(T) \cup X \cup Y_i)$ are separated by $X \cup Y_i$ in $G$ because they are separated by $X \cup Y_{i-1} \subseteq X \cup Y_i$ in $G$. Therefore, it remains to prove that any two vertices $u \in \chi_i(T)$ and $v \in (V(G) \setminus (\chi_i(T) \cup X \cup Y_i)) \setminus (V(G) \setminus (\chi_{i-1}(T) \cup X \cup Y_i)) = (\chi_{i-1}(T) \cup X \cup Y_i) \setminus (\chi_i(T) \cup X \cup Y_i) = \chi_{i-1}(T) \setminus (\chi_i(T) \cup X \cup Y_i) = \chi_{i-1}(T) \setminus ((\chi_{i-1}(T) \setminus \chi_{i-1}(T(r_{i-1}))) \cup X \cup Y_i) = \chi_{i-1}(T(r_{i-1})) \setminus (X \cup Y_i) = \chi_{i-1}(T(r_{i-1})) \setminus \chi_{i-1}(r_{i-1})$ are separated by $X \cup Y_i$ in $G$.

        Assume per contradiction $u \in \chi_i(T)$ and $v \in \chi_{i-1}(T(r_{i-1})) \setminus \chi_{i-1}(r_{i-1})$ are not separated by $X \cup Y_i$ in $G$. Let $C \in \mathcal{C}(G - (X \cup Y_i))$ be the component that contains both $u$ and $v$. Because $u$ is separated from any vertex $v' \in V(G) \setminus (\chi_{i-1}(T) \cup X \cup Y_i)$ by $X \cup Y_i$ we must have $V(C) \subseteq \chi_{i-1}(T) \setminus \chi_{i-1}(r_{i-1})$. 
    
        By Lemma \ref{lemma:partial-safeness:find-disjoint-modulator:stays-tree-decomposition} we have that $(T,\chi_{i-1})$ is a tree decomposition of $G[\chi_{i-1}(T)]$. From Corollary \ref{corollary:treedecomp-node-separator} it follows that there exists some $T' \in \mathcal{C}(T - r_{i-1})$ such that $V(C) \subseteq \chi_{i-1}(T')$. Because $v \in \chi_{i-1}(T(r_{i-1})) \setminus \chi_{i-1}(r_{i-1})$ we know $T' \in \mathcal{C}(T(r_{i-1}) - r_{i-1})$. This, however, yields $u \in \chi_{i-1}(T(r_{i-1}))$, which contradicts $u \in \chi_i(T)$. Hence, no such vertices $u$ and $v$ can exist, proving that $X \cup Y_i$ separates $\chi_i(T)$ from $V(G) \setminus (\chi_i(T) \cup X \cup Y_i)$ in $G$.
    \end{inductivestep}
    
    \noindent
    We conclude that after $i$ iterations $X{{\cup}}Y_i$ separates $\chi_i(T)$ from $V(G) \setminus (\chi_i(T){{\cup}}X{{\cup}}Y_i)$ in $G$.
\end{proof}

\begin{lemma}\label{lemma:partial-safeness:find-disjoint-modulator:1}
    After $i$ iterations $\treewidth{G - (\chi_i(T) \cup Y_i)} \leq 2$ holds.
\end{lemma}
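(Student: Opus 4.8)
The plan is to prove the claim by induction on the iteration count $i$, as was done for Lemmas~\ref{lemma:partial-safeness:find-disjoint-modulator:stays-tree-decomposition} and~\ref{lemma:partial-safeness:find-disjoint-modulator:0}. For the base case $i=0$ we have $Y_0 = \emptyset$ and $\chi_0(T) = V(G) \setminus X$, so $G - (\chi_0(T) \cup Y_0) = G[X]$, which is a clique on at most three vertices and hence has treewidth at most $2$. For the inductive step, write $A = V(G) \setminus (\chi_{i-1}(T) \cup Y_{i-1})$ (so the hypothesis is $\treewidth{G[A]} \leq 2$) and $B = V(G) \setminus (\chi_i(T) \cup Y_i)$. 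Using the recurrences $\chi_i(T) = \chi_{i-1}(T) \setminus \chi_{i-1}(T(r_{i-1}))$ and $Y_i = Y_{i-1} \cup \chi_{i-1}(r_{i-1})$ together with the elementary disjointness facts $X \cap \chi_{i-1}(T) = X \cap Y_{i-1} = \emptyset$ and $Y_{i-1} \cap \chi_{i-1}(T) = \emptyset$ (each $\chi_j(r_j)$ with $j < i-1$ was deleted from the decomposition at an earlier iteration), a short computation yields $B = A \cup (\chi_{i-1}(T(r_{i-1})) \setminus \chi_{i-1}(r_{i-1}))$ with the two sets disjoint, $X \subseteq A$, and $B \cap Y_{i-1} = \emptyset$.

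Two ingredients then close the induction. The first is that in $G[B]$ the set $X$ completely separates $A \setminus X$ from $\chi_{i-1}(T(r_{i-1})) \setminus \chi_{i-1}(r_{i-1})$ (indeed there is not even an edge between the two in $G[B] - X$): since $Y_{i-1} \subseteq Y_i$ and $B \cap Y_{i-1} = \emptyset$, the graph $G[B] - X$ is a subgraph of $G - (X \cup Y_{i-1})$, and by Lemma~\ref{lemma:partial-safeness:find-disjoint-modulator:0} the set $X \cup Y_{i-1}$ separates $\chi_{i-1}(T) \supseteq \chi_{i-1}(T(r_{i-1}))$ from $A \setminus X = V(G) \setminus (\chi_{i-1}(T) \cup X \cup Y_{i-1})$ in $G$. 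The second ingredient, and the heart of the argument, is the bound $\treewidth{G[(\chi_{i-1}(T(r_{i-1})) \setminus \chi_{i-1}(r_{i-1})) \cup X]} \leq 2$. Here the choice of $r_{i-1}$ is used: letting $c_1, \dots, c_k$ be the children of $r_{i-1}$ in $T$, the fact that $r_{i-1}$ is a node \emph{furthest from the root} with $\treewidth{G[\chi_{i-1}(T(r_{i-1})) \cup X]} > 2$ gives $\treewidth{G[\chi_{i-1}(T(c_j)) \cup X]} \leq 2$ for every $j$. One then glues the sets $\chi_{i-1}(T(c_j)) \setminus \chi_{i-1}(r_{i-1})$ onto $X$ one at a time by a subsidiary induction on $j$, applying Theorem~\ref{theorem:connect-treewidth-graphs} at each step: by Lemma~\ref{lemma:partial-safeness:find-disjoint-modulator:stays-tree-decomposition} and Corollary~\ref{corollary:treedecomp-node-separator} the bag $\chi_{i-1}(r_{i-1})$ separates distinct sets $\chi_{i-1}(T(c_j)) \setminus \chi_{i-1}(r_{i-1})$ from one another in $G[\chi_{i-1}(T)]$, so each component $C$ of the freshly added part has $N(C)$ contained in $X$ (hence a clique) and $N[C] \subseteq \chi_{i-1}(T(c_j)) \cup X$ (hence of treewidth at most $2$ by Lemma~\ref{lemma:minor-treewidth}).

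To finish, apply Theorem~\ref{theorem:connect-treewidth-graphs} to the pair $(G[B], X)$. We have $\treewidth{G[X]} \leq 2$, and by the first ingredient every component $C$ of $G[B] - X$ lies entirely inside $A$ or entirely inside $\chi_{i-1}(T(r_{i-1})) \setminus \chi_{i-1}(r_{i-1})$; in either case it is a component of $G[B] - X$, so $N_{G[B]}(C) \subseteq X$ induces a clique, while $N_{G[B]}[C]$ is contained either in $A$ or in $(\chi_{i-1}(T(r_{i-1})) \setminus \chi_{i-1}(r_{i-1})) \cup X$, whose induced subgraphs have treewidth at most $2$ by the main induction hypothesis and by the second ingredient respectively (using Lemma~\ref{lemma:minor-treewidth}). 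Theorem~\ref{theorem:connect-treewidth-graphs} then gives $\treewidth{G[B]} = \treewidth{G - (\chi_i(T) \cup Y_i)} \leq 2$, completing the induction.

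The main obstacle is the second ingredient: converting the purely combinatorial choice of $r_{i-1}$ as the deepest node of the decomposition whose subtree still has treewidth exceeding $2$ after attaching $X$ into the treewidth-$2$ bound on $(\chi_{i-1}(T(r_{i-1})) \setminus \chi_{i-1}(r_{i-1})) \cup X$. This is the only place where the clique structure of $X$ and the ``furthest node'' rule genuinely interact; the rest is bookkeeping with the set recurrences plus routine invocations of Theorem~\ref{theorem:connect-treewidth-graphs}, so it would be natural to isolate this bound as a separate lemma.
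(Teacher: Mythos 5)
Your proof is correct and takes essentially the same approach as the paper: apply Theorem~\ref{theorem:connect-treewidth-graphs} on $(G - (\chi_i(T) \cup Y_i), X)$, split the components of the complement of $X$ according to which side of the separator $X \cup Y_{i-1}$ from Lemma~\ref{lemma:partial-safeness:find-disjoint-modulator:0} they lie on, and use Corollary~\ref{corollary:treedecomp-node-separator} together with the ``furthest from the root'' choice of $r_{i-1}$ on one side and the induction hypothesis on the other. The only cosmetic deviation is that you factor out the bound $\treewidth{G[(\chi_{i-1}(T(r_{i-1}))\setminus\chi_{i-1}(r_{i-1}))\cup X]}\leq 2$ as a stand-alone intermediate claim (and prove it by a child-by-child sub-induction that is not really needed, since one application of Theorem~\ref{theorem:connect-treewidth-graphs} would already suffice), whereas the paper bounds the treewidth of $G[V(C)\cup X]$ per component directly inside its case distinction.
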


\begin{proof}
    We prove using induction on $i$.
    
    \begin{basecase}[$i = 0$]
        Holds because $\treewidth{G - (\chi_0(T) \cup Y_0)} = \treewidth{G - ((V(G) \setminus X) \cup \emptyset)} = \treewidth{G[X]} = |X|-1 \leq 2$.
    \end{basecase}
    
    \begin{inductivestep}[$i > 0$]
        We assume the induction hypothesis $\treewidth{G - (\chi_{i-1}(T) \cup Y_{i-1})} \leq 2$ and we prove $\treewidth{G - (\chi_i(T) \cup Y_i)} \leq 2$. To prove this we will apply Theorem \ref{theorem:connect-treewidth-graphs} on $(G - (\chi_i(T) \cup Y_i), X)$. This requires proving that each $C \in \mathcal{C}(G - (\chi_i(T) \cup X \cup Y_i))$ has $\treewidth{G[V(C) \cup X]} \leq 2$.
        
        Let $C \in \mathcal{C}(G - (\chi_i(T) \cup X \cup Y_i))$. By Lemma \ref{lemma:partial-safeness:find-disjoint-modulator:0} we know that $X \cup Y_{i-1}$ separates $\chi_{i-1}(T)$ from $V(G) \setminus (\chi_{i-1}(T) \cup X \cup Y_{i-1})$ in $G$. Because $Y_{i-1} \subseteq Y_i$ this means that $X$ separates $G[\chi_{i-1}(T(r_{i-1}))] - Y_i$ from $G - (\chi_{i-1}(T) \cup X \cup Y_i)$ in $G - Y_i$. Because $V(C) \subseteq V(G) \setminus (\chi_i(T) \cup X \cup Y_i) = V(G) \setminus ((\chi_{i-1}(T) \setminus \chi_{i-1}(T(r_{i-1}))) \cup X \cup Y_i) = (V(G) \setminus (\chi_{i-1}(T) \cup X \cup Y_i)) \cup (\chi_{i-1}(T(r_{i-1})) \setminus (X \cup Y_i))$ we have that $C$ is a subgraph of either $G[\chi_{i-1}(T(r_{i-1}))] - Y_i$ or $G - (\chi_{i-1}(T) \cup X \cup Y_i)$.
        
        \begin{indsubcase}[$V(C) \subseteq \chi_{i-1}(T(r_{i-1})) \setminus Y_i$]
            We note $\chi_{i-1}(T(r_{i-1})) \setminus Y_i \subseteq \chi_{i-1}(T(r_{i-1})) \setminus \chi_{i-1}(r_{i-1})$. Hence, from Lemma \ref{lemma:partial-safeness:find-disjoint-modulator:stays-tree-decomposition} and Corollary \ref{corollary:treedecomp-node-separator} it follows that there exists some $T' \in \mathcal{C}(T - r_{i-1})$ with $V(C) \subseteq \chi_{i-1}(T')$. Because $V(C) \subseteq \chi_{i-1}(T(r_{i-1})) \setminus \chi_{i-1}(r_{i-1})$ we have $T' \in \mathcal{C}(T(r_{i-1}) - r_{i-1})$. By definition of $r_i$ we have $\treewidth{G[\chi_{i-1}(T') \cup X]} \leq 2$. Hence, Lemma \ref{lemma:minor-treewidth} yields $\treewidth{G[V(C) \cup X]} \leq \treewidth{G[\chi_{i-1}(T') \cup X]} \leq 2$.
        \end{indsubcase}
        
        \begin{indsubcase}[$V(C) \subseteq V(G) \setminus (\chi_{i-1}(T) \cup X \cup Y_i)$]
            By Lemma \ref{lemma:minor-treewidth} we have $\treewidth{G[V(C) \cup X]} \leq \treewidth{G - (\chi_{i-1}(T) \cup Y_{i-1})} \leq 2$.
        \end{indsubcase}
        
        \noindent
        Because each $C \in \mathcal{C}(G - (\chi_i(T) \cup X \cup Y_i))$ has $\treewidth{G[V(C) \cup X]} \leq 2$ and $G[X]$ is a clique, we can apply Theorem \ref{theorem:connect-treewidth-graphs} on $(G - (\chi_i(T) \cup Y_i), X)$ to derive $\treewidth{G - (\chi_i(T) \cup Y_i)} \leq 2$.
    \end{inductivestep}
    
    \noindent
    We conclude that after $i$ iterations $\treewidth{G - (\chi_i(T) \cup Y_i)} \leq 2$ holds.
\end{proof}

\begin{lemma}[safeness]\label{lemma:partial-safeness:find-disjoint-modulator:2}
    If Algorithm \ref{algo:find-disjoint-modulator} terminates with $\treewidth{G[\chi_i(T) \cup X]} \leq 2$, then the returned set $Y_i \subseteq V(G) \setminus X$ is such that $|Y_i| \leq 3t+3$ and $\treewidth{G - Y_i} \leq 2$.
\end{lemma}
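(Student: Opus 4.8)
The plan is to establish the three asserted properties in turn: that $Y_i \subseteq V(G)\setminus X$, that $|Y_i| \leq 3t+3$, and that $\treewidth{G - Y_i} \leq 2$. The first two are bookkeeping. Every vertex ever added to $Y$ lies in some bag $\chi_j(r_j)$ of the running tree decomposition, and by construction these bags only ever contain vertices of $\chi_0(T) = V(G)\setminus X$; hence $Y_i \subseteq V(G)\setminus X$. For the size bound, the while-loop iterates at most $t+1$ times by its guard, and each iteration adds the bag $\chi(r)$ of a single node; since the initial decomposition has width $2$ and the removal footnote-operation only deletes vertices from bags, every such bag has at most three vertices, so $|Y_i| \leq 3(t+1) = 3t+3$.

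The substance is the treewidth bound, which I would obtain by applying Theorem~\ref{theorem:connect-treewidth-graphs} to the graph $G' := G - Y_i$ with the vertex set $U := \chi_i(T) \cup X$. First note $U \cap Y_i = \emptyset$: the bags contributing to $Y_i$ were removed from the decomposition, and $X$ is disjoint from the decomposition of $G-X$. Hence $G'[U] = G[\chi_i(T) \cup X]$, whose treewidth is at most $2$ by the termination hypothesis of the lemma. It then remains to check the two hypotheses of Theorem~\ref{theorem:connect-treewidth-graphs} for every $C \in \mathcal{C}(G' - U) = \mathcal{C}(G - (X \cup Y_i \cup \chi_i(T)))$.

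For such a $C$, Lemma~\ref{lemma:partial-safeness:find-disjoint-modulator:0} gives that $X \cup Y_i$ separates $\chi_i(T)$ from $V(G)\setminus(\chi_i(T)\cup X\cup Y_i)$ in $G$, so in $G'$ the set $X$ separates $\chi_i(T)$ from $V(G')\setminus(\chi_i(T)\cup X)$. Since $V(C)$ lies entirely in the latter side, $C$ has no neighbour in $\chi_i(T)$ within $G'$, so $N_{G'}(C) \subseteq X$; as $G[X]$ is a clique by the input assumption, $G'[N_{G'}(C)]$ is a clique, the first hypothesis. Moreover $N_{G'}[C] = V(C) \cup N_{G'}(C) \subseteq V(C) \cup X$ is disjoint from $\chi_i(T) \cup Y_i$, so $G'[N_{G'}[C]]$ is an induced subgraph of $G - (\chi_i(T) \cup Y_i)$; by Lemma~\ref{lemma:partial-safeness:find-disjoint-modulator:1} together with Lemma~\ref{lemma:minor-treewidth} its treewidth is at most $2$, the second hypothesis. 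Theorem~\ref{theorem:connect-treewidth-graphs} applied to $(G - Y_i,\, \chi_i(T) \cup X)$ therefore yields $\treewidth{G - Y_i} \leq 2$.

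The only delicate point is the bookkeeping around which vertices lie in $\chi_i(T)$, $X$, and $Y_i$, and why the various induced subgraphs coincide so that the earlier invariants apply verbatim. Once that is pinned down, the separation statement of Lemma~\ref{lemma:partial-safeness:find-disjoint-modulator:0} and the invariant $\treewidth{G - (\chi_i(T) \cup Y_i)} \leq 2$ of Lemma~\ref{lemma:partial-safeness:find-disjoint-modulator:1} feed directly into Theorem~\ref{theorem:connect-treewidth-graphs}, and I anticipate no genuinely hard step.
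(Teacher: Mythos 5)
Your proof is correct and rests on the same pillars as the paper's: the size/membership bookkeeping, the separation invariant of Lemma~\ref{lemma:partial-safeness:find-disjoint-modulator:0}, the treewidth invariant of Lemma~\ref{lemma:partial-safeness:find-disjoint-modulator:1}, and Theorem~\ref{theorem:connect-treewidth-graphs}. The only (cosmetic) difference is in how you instantiate the theorem: the paper takes $U = X$ and then case-splits the components of $G - (X \cup Y_i)$ into those lying inside $\chi_i(T)$ (handled by the termination hypothesis) and those lying outside (handled by Lemma~\ref{lemma:partial-safeness:find-disjoint-modulator:1}), whereas you enlarge $U$ to $\chi_i(T) \cup X$, absorbing the first case directly into the condition $\treewidth{G'[U]} \leq 2$ and thereby eliminating the case distinction.
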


\begin{proof}
    Because each vertex set $\chi_j(r_j)$ that was added to $Y$ to form $Y_i$ has $\chi_j(r_j) \subseteq \chi_0(r_j) \subseteq V(G) \setminus X$ and $|\chi_0(r_j)|\leq 3$ we know that $Y_i \subseteq V(G) \setminus X$ and $|Y_i| \leq 3i \leq 3t+3$ holds. It remains to prove that $\treewidth{G - Y_i} \leq 2$, for which we will apply Theorem \ref{theorem:connect-treewidth-graphs} on $(G - Y_i, X)$. This requires proving that each $C \in \mathcal{C}(G - (X \cup Y_i))$ has $\treewidth{G[V(C) \cup X]} \leq 2$.
    
    Let $C \in \mathcal{C}(G - (X \cup Y_i))$. By Lemma \ref{lemma:partial-safeness:find-disjoint-modulator:0} we know that $X \cup Y_i$ separates $\chi_i(T)$ from $V(G) \setminus (\chi_i(T) \cup X \cup Y_i)$ in $G$. Hence, we have that $C$ is a subgraph of either $G[\chi_i(T)]$ or $G - (\chi_i(T) \cup X \cup Y_i)$.
    
    \begin{case}[$V(C) \subseteq \chi_{i}(T)$]
        Lemma \ref{lemma:minor-treewidth} yields $\treewidth{G[V(C) \cup X]} \leq \treewidth{G[\chi_i(T) \cup X]} \leq 2$.
    \end{case}
        
    \begin{case}[$V(C) \subseteq V(G) \setminus (\chi_i(T) \cup X \cup Y_i)$]
        Lemmas \ref{lemma:minor-treewidth} and \ref{lemma:partial-safeness:find-disjoint-modulator:1} yields $\treewidth{G[V(C) \cup X]} \leq \treewidth{G - (\chi_i(T) \cup Y_i)} \leq 2$.
    \end{case}
    
    \noindent
    Because each $C \in \mathcal{C}(G - (X \cup Y_i))$ has $\treewidth{G[V(C) \cup X]} \leq 2$ and $G[X]$ is a clique we can apply Theorem \ref{theorem:connect-treewidth-graphs} on $(G - Y_i, X)$ to derive $\treewidth{G - Y_i} \leq 2$.
\end{proof}

\begin{lemma}\label{lemma:partial-safeness:find-disjoint-modulator:3}
    After $i$ iterations there exists a partitioning $\mathcal{D}_i$ of $G - (\chi_i(T) \cup X)$ into $i$ subgraphs $D$ with $\treewidth{G[V(D) \cup X]} > 2$, such that each component $C \in \mathcal{C}(G - (\chi_i(T) \cup X \cup Y_i))$ is a subgraph of some $D \in \mathcal{D}_i$.
\end{lemma}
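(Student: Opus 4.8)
The plan is to prove the statement by induction on $i$, mirroring the structure of the earlier partial-safeness lemmas and reusing the separation property of Lemma~\ref{lemma:partial-safeness:find-disjoint-modulator:0}. For the base case $i=0$ we have $\chi_0(T) = V(G) \setminus X$, so $G - (\chi_0(T) \cup X)$ is the empty graph; taking $\mathcal{D}_0 = \emptyset$ gives a partitioning into $0$ subgraphs, and since $\mathcal{C}(G - (\chi_0(T) \cup X \cup Y_0)) = \emptyset$ as well, the containment requirement holds vacuously.

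For the inductive step ($i > 0$) I would take $\mathcal{D}_i = \mathcal{D}_{i-1} \cup \braced{G[\chi_{i-1}(T(r_{i-1}))]}$. The first task is bookkeeping: using $\chi_i(T) = \chi_{i-1}(T) \setminus \chi_{i-1}(T(r_{i-1}))$, the fact that $\chi_{i-1}(T(r_{i-1})) \cap X = \emptyset$, and the induction hypothesis that $\mathcal{D}_{i-1}$ partitions $G - (\chi_{i-1}(T) \cup X)$, one checks that $V(G) \setminus (\chi_i(T) \cup X)$ is the disjoint union of $\chi_{i-1}(T(r_{i-1}))$ and $\bigcup_{D \in \mathcal{D}_{i-1}} V(D)$, so $\mathcal{D}_i$ is indeed a partitioning of $G - (\chi_i(T) \cup X)$ into $i$ subgraphs. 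That each $D \in \mathcal{D}_i$ satisfies $\treewidth{G[V(D) \cup X]} > 2$ holds by the induction hypothesis for the members of $\mathcal{D}_{i-1}$, and by the choice of $r_{i-1}$ (selected precisely so that $\treewidth{G[\chi_{i-1}(T(r_{i-1})) \cup X]} > 2$) for the new member.

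The core of the argument is the containment claim. First I would show, again by set algebra, that $V(G) \setminus (\chi_i(T) \cup X \cup Y_i)$ equals the disjoint union $A' \sqcup B$ with $A' = V(G) \setminus (\chi_{i-1}(T) \cup X \cup Y_{i-1})$ and $B = \chi_{i-1}(T(r_{i-1})) \setminus \chi_{i-1}(r_{i-1})$, using $Y_i = Y_{i-1} \cup \chi_{i-1}(r_{i-1})$ and the fact that $Y_{i-1}$ is disjoint from $\chi_{i-1}(T)$ (its vertices were removed from the decomposition in earlier iterations). Next, applying Lemma~\ref{lemma:partial-safeness:find-disjoint-modulator:0} after $i-1$ iterations, $X \cup Y_{i-1}$ separates $\chi_{i-1}(T)$ from $A'$ in $G$; since $B \subseteq \chi_{i-1}(T)$ and both $A'$ and $B$ avoid $X \cup Y_{i-1}$, there is no edge of $G$ between $A'$ and $B$. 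Hence every $C \in \mathcal{C}(G - (\chi_i(T) \cup X \cup Y_i))$ lies entirely in $A'$ or entirely in $B$: in the first case the absence of $A'$--$B$ edges means $C$ is in fact a component of $G - (\chi_{i-1}(T) \cup X \cup Y_{i-1})$, so the induction hypothesis places $C$ inside some $D \in \mathcal{D}_{i-1} \subseteq \mathcal{D}_i$; in the second case $V(C) \subseteq \chi_{i-1}(T(r_{i-1}))$, so $C$ is a subgraph of the new member $G[\chi_{i-1}(T(r_{i-1}))] \in \mathcal{D}_i$.

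The proof is essentially routine once the index bookkeeping is set up, so the main obstacle is purely organisational: correctly expressing $V(G) \setminus (\chi_i(T) \cup X \cup Y_i)$ as $A' \sqcup B$, making sure the ``no edge between $A'$ and $B$'' consequence of Lemma~\ref{lemma:partial-safeness:find-disjoint-modulator:0} is invoked at index $i-1$ rather than $i$, and being careful that $C$ is a genuine \emph{component} --- not merely a subgraph --- of $G - (\chi_{i-1}(T) \cup X \cup Y_{i-1})$ before applying the induction hypothesis.
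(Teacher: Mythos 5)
Your proof is correct and takes essentially the same approach as the paper's: same choice of $\mathcal{D}_i = \mathcal{D}_{i-1} \cup \{G[\chi_{i-1}(T(r_{i-1}))]\}$, same invocation of Lemma~\ref{lemma:partial-safeness:find-disjoint-modulator:0} at iteration $i-1$ to split the new components between $A'$ and $B$, and the same conclusion in each case. The only cosmetic difference is that the paper dispatches the split by referring back to "the same reasoning as used in the proof of Lemma~\ref{lemma:partial-safeness:find-disjoint-modulator:1}", while you spell out the set algebra and the "no edge between $A'$ and $B$" consequence directly; and in the $A'$ branch you note that $C$ is actually a full component of $G - (\chi_{i-1}(T) \cup X \cup Y_{i-1})$ rather than merely a subgraph of one, which is a slightly stronger (but unnecessary, since a subgraph of a component already suffices for the induction hypothesis) observation.
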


\begin{proof}
    We prove using induction on $i$.
    
    \begin{basecase}[$i = 0$]
        Initially $G - (\chi_0(T) \cup X)$ is the empty graph of which the partitioning $\mathcal{D}_0$ into zero subgraphs trivially suffices.
    \end{basecase}
    
    \begin{inductivestep}[$i > 0$]
        We assume the induction hypothesis that $\mathcal{D}_{i-1}$ is a partitioning of $G - (\chi_{i-1}(T) \cup X)$ into $i-1$ subgraphs $D$ with $\treewidth{G[V(D) \cup X]} > 2$, such that each component $C \in \mathcal{C}(G - (\chi_{i-1}(T) \cup X \cup Y_{i-1}))$ is a subgraph of some $D \in \mathcal{D}_{i-1}$.

        Let $\mathcal{D}_i = \mathcal{D}_{i-1} \cup \{G[\chi_{i-1}(T(r_{i-1}))]\}$. From the definition of $r_{i-1}$ and $\mathcal{D}_{i-1}$ it immediately follows that each $D \in \mathcal{D}_i$ has $\treewidth{G[V(D) \cup X]} > 2$. It remains to prove that each component $C \in \mathcal{C}(G - (\chi_i(T) \cup X \cup Y_i))$ is a subgraph of some $D \in \mathcal{D}_i$.

        Let $C \in \mathcal{C}(G - (\chi_i(T) \cup X \cup Y_i))$. Using the same reasoning as used in the proof of Lemma \ref{lemma:partial-safeness:find-disjoint-modulator:1} we derive that $C$ is a subgraph of either $G[\chi_{i-1}(T(r_{i-1}))] - Y_i$ or $G - (\chi_{i-1}(T) \cup X \cup Y_i)$.
        
        \begin{indsubcase}[$V(C) \subseteq \chi_{i-1}(T(r_{i-1})) \setminus Y_i$]
            Then $C$ is a subgraph of $G[\chi_{i-1}(T(r_{i-1}))] \in \mathcal{D}_i$.
        \end{indsubcase}
        
        \begin{indsubcase}[$V(C) \subseteq V(G) \setminus (\chi_{i-1}(T) \cup X \cup Y_i)$]
            There then must be exist a component $C' \in \mathcal{C}(G - (\chi_{i-1}(T) \cup X \cup Y_{i-1}))$ of which $C$ is a subgraph. From the definition of $\mathcal{D}_{i-1}$ we therefore have that $C$ is a subgraph of some $D \in \mathcal{D}_{i-1} \subset \mathcal{D}_i$.
        \end{indsubcase}
    \end{inductivestep}
    
    \noindent
    Because these cases cover all cases, correctness trivially follows.
\end{proof}

\begin{lemma}[safeness]\label{lemma:partial-safeness:find-disjoint-modulator:4}
    If Algorithm \ref{algo:find-disjoint-modulator} terminates with $\treewidth{G[\chi_{t+1}(T) \cup X]} > 2$, then the returned set $Y_{t+1} \subseteq V(G) \setminus X$ is such that $|Y_{t+1}| \leq 3t+3$ and for each $C \in \mathcal{C}(G - (X \cup Y_{t+1}))$ it holds that $X$ is a limit-{$|X|{{-}}1$} clique for $(G-C,t)$.
\end{lemma}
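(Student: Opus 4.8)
The plan is to extract from the run of Algorithm~\ref{algo:find-disjoint-modulator} a family of $t+2$ pairwise vertex-disjoint subgraphs of $G-X$, each of which together with $X$ spans a graph of treewidth greater than $2$, and then to observe that deleting a single component $C$ of $G-(X\cup Y_{t+1})$ can ``hit'' at most one member of this family. The surviving $t+1$ members then witness, via Lemma~\ref{lemma:limit-m-subset-identification} applied to the instance $(G-C,t)$, that $X$ is a limit-$(|X|-1)$ subset for $(G-C,t)$; since $C$ is disjoint from $X$, the clique property of $G[X]$ is inherited by $(G-C)[X]$, and Definition~\ref{def:limit-m-clique} finishes the argument.

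First I would handle the size and membership claims for $Y_{t+1}$. Every iteration of the while-loop adds to $Y$ the bag $\chi_j(r_j)$, which is a subset of the original bag $\chi_0(r_j)$ of the width-$2$ tree decomposition, hence has at most $3$ vertices, and is contained in $\chi_0(T)=V(G)\setminus X$. Since the loop runs at most $t+1$ times, $Y_{t+1}\subseteq V(G)\setminus X$ with $|Y_{t+1}|\le 3(t+1)=3t+3$.

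Next, let $\mathcal{D}$ be the partitioning of $G-(\chi_{t+1}(T)\cup X)$ furnished by Lemma~\ref{lemma:partial-safeness:find-disjoint-modulator:3}: it consists of $t+1$ subgraphs $D$, pairwise vertex-disjoint and each disjoint from $\chi_{t+1}(T)$ and from $X$, with $\treewidth{G[V(D)\cup X]}>2$, and with the property that every component of $G-(\chi_{t+1}(T)\cup X\cup Y_{t+1})$ is a subgraph of some $D\in\mathcal{D}$. Adjoining to $\mathcal{D}$ the subgraph $G[\chi_{t+1}(T)]$ — which by the termination hypothesis satisfies $\treewidth{G[\chi_{t+1}(T)\cup X]}>2$ — produces $t+2$ pairwise vertex-disjoint subgraphs of $G-X$, each spanning treewidth greater than $2$ together with $X$. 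Now fix a component $C\in\mathcal{C}(G-(X\cup Y_{t+1}))$. By Lemma~\ref{lemma:partial-safeness:find-disjoint-modulator:0}, $X\cup Y_{t+1}$ separates $\chi_{t+1}(T)$ from $V(G)\setminus(\chi_{t+1}(T)\cup X\cup Y_{t+1})$ in $G$, so either $V(C)\subseteq\chi_{t+1}(T)$ or $V(C)\cap\chi_{t+1}(T)=\emptyset$ (as $C$ is connected and disjoint from $X\cup Y_{t+1}$). In the former case $C$ is a subgraph of $G[\chi_{t+1}(T)]$. In the latter case $V(C)\subseteq V(G)\setminus(\chi_{t+1}(T)\cup X\cup Y_{t+1})$, and $C$ is moreover a \emph{component} of $G-(\chi_{t+1}(T)\cup X\cup Y_{t+1})$ — maximality transfers because that graph is a subgraph of $G-(X\cup Y_{t+1})$ — so Lemma~\ref{lemma:partial-safeness:find-disjoint-modulator:3} places $C$ inside some $D\in\mathcal{D}$. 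In either case $C$ is contained in exactly one of the $t+2$ subgraphs, hence vertex-disjoint from the remaining $t+1$ of them.

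Finally, denote the $t+1$ subgraphs not containing $C$ by $D_1,\dots,D_{t+1}$. Each $D_i$ is a subgraph of $(G-C)-X$ (it lies in $V(G)\setminus X$ and avoids $V(C)$), they are pairwise vertex-disjoint, and $\treewidth{(G-C)[V(D_i)\cup X]}=\treewidth{G[V(D_i)\cup X]}>2$ since $V(D_i)\cup X$ is disjoint from $V(C)$. Lemma~\ref{lemma:limit-m-subset-identification} applied to $(G-C,t)$ with these witnesses yields that $X$ is a limit-$(|X|-1)$ subset for $(G-C,t)$, and as $(G-C)[X]=G[X]$ is a clique, Definition~\ref{def:limit-m-clique} gives that $X$ is a limit-$(|X|-1)$ clique for $(G-C,t)$. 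The only step I expect to need care is the case split locating $C$: specifically, checking in the second case that $C$ is a genuine \emph{component}, not merely a subgraph, of $G-(\chi_{t+1}(T)\cup X\cup Y_{t+1})$, so that Lemma~\ref{lemma:partial-safeness:find-disjoint-modulator:3} truly applies; the rest is routine combination of the separation and treewidth-monotonicity facts already in hand.
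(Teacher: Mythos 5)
Your proof is correct and follows essentially the same route as the paper's: it uses the partitioning $\mathcal{D}_{t+1}$ from Lemma~\ref{lemma:partial-safeness:find-disjoint-modulator:3}, adjoins $G[\chi_{t+1}(T)]$ to obtain $t+2$ vertex-disjoint witnesses, locates $C$ inside one of them, and invokes Lemma~\ref{lemma:limit-m-subset-identification} on the remaining $t+1$. The only cosmetic difference is that you apply Lemma~\ref{lemma:limit-m-subset-identification} directly to $(G-C,t)$, whereas the paper first applies it to $(G-D,t)$ and then passes to $(G-C,t)$ via Lemma~\ref{lemma:limit-m-of-subgraph-to-supergraph}; your version is slightly more direct and equally valid, and you are also more explicit than the paper about why $C$ is a genuine component (not merely a subgraph) of $G-(\chi_{t+1}(T)\cup X\cup Y_{t+1})$ in the relevant case.
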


\begin{proof}
    Because each vertex set $\chi_i(r_i)$ that was added to $Y$ to form $Y_{t+1}$ has $\chi_i(r_i) \subseteq \chi_0(r_i) \subseteq V(G) \setminus X$ and $|\chi_0(r_i)|\leq 3$ we know that $Y_{t+1} \subseteq V(G) \setminus X$ and $|Y_{t+1}| \leq 3t+3$.
    
    By Lemma \ref{lemma:partial-safeness:find-disjoint-modulator:3} we know that there exist a partitioning $\mathcal{D}_{t+1}$ of $G - (\chi_{t+1}(T) \cup X)$ into $t+1$ subgraphs $D$ with $\treewidth{G[V(D) \cup X]} > 2$, such that each component $C \in \mathcal{C}(G - (\chi_{t+1}(T) \cup X \cup Y_{t+1}))$ is a subgraph of some $D \in \mathcal{D}_{t+1}$. Let $\mathcal{D}_{t+2} = \mathcal{D}_{t+1} \cup \{G[\chi_{t+1}(T)]\}$. Using the same reasoning as used in the proof of Lemma \ref{lemma:partial-safeness:find-disjoint-modulator:3} we derive that each $D \in \mathcal{D}_{t+2}$ has $\treewidth{G[V(D) \cup X]} > 2$ and each $C \in \mathcal{C}(G - (X \cup Y_{t+1}))$ is a subgraph of some $D \in \mathcal{D}_{t+2}$.
    
    We let $C \in \mathcal{C}(G - (X \cup Y_{t+1}))$ and argue that $X$ is a limit-{$|X|{{-}}1$} clique for $(G-C,t)$. Let $D \in \mathcal{D}_{t+2}$ such that $C$ is a subgraph of $D$. There exist $t+1$ vertex-disjoint subgraphs $D' \in \mathcal{D}_{t+2} \setminus \{D\}$ of $G - D$ with $\treewidth{G[V(D') \cup X]} > 2$. By Lemma \ref{lemma:limit-m-subset-identification} this yields that $X$ is a limit-{$|X|{{-}}1$} subset for $(G-D,t)$, which by Lemma \ref{lemma:limit-m-of-subgraph-to-supergraph} yields that $X$ is a limit-{$|X|{{-}}1$} subset for $(G-C,t)$.
\end{proof}

\begin{lemma}\label{lemma:time:find-disjoint-modulator}
	We can apply Algorithm \ref{algo:find-disjoint-modulator} on a graph $G$ in $\polyG$ time.
\end{lemma}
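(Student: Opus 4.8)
The plan is to bound the cost of line~1 and of each iteration of the while-loop separately, and — crucially — to observe that the number of iterations is polynomially bounded even though the loop bound $t+1$ need not be.

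First I would handle line~1. Since $X$ is a modulator, $\treewidth{G-X} \leq 2$, so by Lemma~\ref{lemma:find-tree-decompositions} a width-2 tree decomposition $(T,\chi)$ of $G-X$ can be computed in $\polyG$ time; being the output of a $\polyG$-time algorithm, it has $|V(T)| \leq \polyG$ nodes, and this bound is preserved throughout the execution because the loop only shrinks bags (per the footnote, $\chi'(t) = \chi(t) \setminus \chi(T(r))$) and never adds nodes. Rooting $T$ and computing, for every $r \in V(T)$, the subtree $T(r)$ and the distance from $r$ to the root is routine graph traversal and takes $\polyG$ time.

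Next I would bound the number of iterations. The loop runs at most $t+1$ times by construction, but I would also argue it runs at most $|V(G-X)| \leq |G|$ times: the chosen node $r$ satisfies $\treewidth{G[\chi(T(r)) \cup X]} > 2$, so $\chi(T(r)) \neq \emptyset$ (otherwise $G[\chi(T(r)) \cup X] = G[X]$, which has treewidth $|X|-1 \leq 2$), hence at least one vertex of $V(G-X)$ is removed from the bags, and since the bag-union $\chi(T)$ is monotonically decreasing these vertices are never re-added. Thus the loop executes at most $\min(t+1,|V(G)|) \leq \polyG$ times.

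Finally I would bound the work per iteration. Testing the loop guard $\treewidth{G[\chi(T) \cup X]} > 2$ costs $\polyG$ by Lemma~\ref{lemma:find-tree-decompositions}. To find a deepest $r$ with $\treewidth{G[\chi(T(r)) \cup X]} > 2$, I would run the test of Lemma~\ref{lemma:find-tree-decompositions} on $G[\chi(T(r)) \cup X]$ for each of the $\leq \polyG$ nodes $r$ (each test $\polyG$), then select a qualifying node of maximum depth; adding $\chi(r)$ to $Y$ and deleting $\chi(T(r))$ from every bag is clearly $\polyG$. Hence each iteration costs $\polyG$, and the total is $\polyG + \polyG \cdot \polyG = \polyG$. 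I do not foresee a real obstacle; the only point needing care is the non-emptiness-of-$\chi(T(r))$ argument that caps the iteration count at $|G|$ rather than the unbounded $t+1$, which is exactly what the paragraph above supplies.
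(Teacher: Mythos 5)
Your proof is correct and follows essentially the same approach as the paper: compute the tree decomposition once via Lemma~\ref{lemma:find-tree-decompositions}, bound the per-iteration cost by running the treewidth test of Lemma~\ref{lemma:find-tree-decompositions} on each of the $\polyG$ subtrees, and cap the iteration count at $\min(t+1,|V(G)|)$. The only cosmetic difference is that the paper bounds iterations by noting that $Y$ gains at least one vertex each round, whereas you bound them by noting the bag-union $\chi(T)$ shrinks by at least one vertex (since $\chi(T(r))\neq\emptyset$); your variant is a touch cleaner to justify since the non-emptiness of $\chi(T(r))$ is immediate from $\treewidth{G[\chi(T(r))\cup X]}>2\geq\treewidth{G[X]}$.
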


\begin{proof}
    By Lemma \ref{lemma:find-tree-decompositions} we can obtain a tree decomposition of $G - X$ in $\polyG$ time. Again, as follows from Lemma \ref{lemma:find-tree-decompositions}, we can determine in $\polyG$ time for each $r \in V(T)$ whether $\treewidth{G[\chi(T(r)) \cup X]} > 2$ holds. All other operations can trivially be executed in $\polyG$ time. Because each iteration at least one vertex from $G$ is added to $Y$, the while-loop iterates at most $\min(t+1,|V(G)|)$ times. Hence, we know that Algorithm \ref{algo:find-disjoint-modulator} takes $\polyG$ time.
\end{proof}

\begin{theorem}\label{theorem:find-disjoint-modulator}
    Given a graph $G$, a parameter $t$, and a modulator $X$ of $G$ such that $G[X]$ is a clique and $0 < |X| \leq 3$. Algorithm \ref{algo:find-disjoint-modulator} finds in $\polyG$ time a set $Y \subseteq V(G) \setminus X$ with $|Y| \leq 3t+3$ such that either $\treewidth{G - Y} \leq 2$ holds or for every $C \in \mathcal{C}(G - (X \cup Y))$ it holds that $X$ is a limit-{$|X|{{-}}1$} clique for $(G-C,t)$.
\end{theorem}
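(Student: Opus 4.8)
The plan is to assemble the statement from the invariant and safeness lemmas already proven for Algorithm \ref{algo:find-disjoint-modulator}, via a short case distinction on why the while-loop halts. First I would invoke Lemma \ref{lemma:time:find-disjoint-modulator}: on an admissible input (a graph $G$, a parameter $t$, and a modulator $X$ of $G$ with $G[X]$ a clique and $0 < |X| \leq 3$) the tree decomposition of $G - X$ required in line~1 exists and is computable in $\polyG$ time by Lemma \ref{lemma:find-tree-decompositions}, and the whole algorithm runs in $\polyG$ time; in particular the while-loop terminates after some number $i \leq t+1$ of iterations and the returned set is $Y = Y_i$. As noted at the start of the proofs of Lemmas \ref{lemma:partial-safeness:find-disjoint-modulator:2} and \ref{lemma:partial-safeness:find-disjoint-modulator:4}, each set $\chi_j(r_j)$ added to $Y$ satisfies $\chi_j(r_j) \subseteq \chi_0(r_j) \subseteq V(G) \setminus X$ with $|\chi_0(r_j)| \leq 3$, so $Y \subseteq V(G) \setminus X$ and $|Y| \leq 3i \leq 3t+3$.

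Next I would split on the termination condition of the loop, whose guard is ``$\treewidth{G[\chi(T) \cup X]} > 2$ and up to $t{+}1$ times''. If the algorithm halts with $\treewidth{G[\chi_i(T) \cup X]} \leq 2$ — which covers every halt caused by guard failure, and possibly also a halt after $t+1$ iterations — then Lemma \ref{lemma:partial-safeness:find-disjoint-modulator:2} gives directly that $\treewidth{G - Y} \leq 2$, the first alternative of the claim. Otherwise the algorithm performed $t+1$ iterations and halted with $\treewidth{G[\chi_{t+1}(T) \cup X]} > 2$, in which case Lemma \ref{lemma:partial-safeness:find-disjoint-modulator:4} gives that every $C \in \mathcal{C}(G - (X \cup Y))$ has $X$ a limit-$(|X|{-}1)$ clique for $(G - C, t)$, the second alternative. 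In either case the size bound $|Y| \leq 3t+3$ and the running-time bound from the first paragraph complete the statement.

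The one point needing care is that these two cases are exhaustive: a termination state with $i < t+1$ \emph{and} $\treewidth{G[\chi_i(T) \cup X]} > 2$ cannot occur, since then the guard would still hold and the iteration cap would not yet be reached, so the loop would run again. Hence whenever the loop stops, either the treewidth condition has become $\leq 2$ (handled by Lemma \ref{lemma:partial-safeness:find-disjoint-modulator:2}) or $i = t+1$ with the treewidth still $> 2$ (handled by Lemma \ref{lemma:partial-safeness:find-disjoint-modulator:4}). I do not expect a genuine obstacle in this theorem itself: the substantive work — the separator invariant of Lemma \ref{lemma:partial-safeness:find-disjoint-modulator:0}, the treewidth maintenance via repeated application of Theorem \ref{theorem:connect-treewidth-graphs} in Lemmas \ref{lemma:partial-safeness:find-disjoint-modulator:1}--\ref{lemma:partial-safeness:find-disjoint-modulator:2}, and the partition-into-obstructions argument of Lemmas \ref{lemma:partial-safeness:find-disjoint-modulator:3}--\ref{lemma:partial-safeness:find-disjoint-modulator:4} feeding Lemma \ref{lemma:limit-m-subset-identification} — is already done, so this theorem is essentially the bookkeeping that packages those lemmas into a single usable statement.
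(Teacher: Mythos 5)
Your proof is correct and takes essentially the same approach as the paper: the paper's proof is a one-liner citing Lemmas \ref{lemma:partial-safeness:find-disjoint-modulator:2}, \ref{lemma:partial-safeness:find-disjoint-modulator:4}, and \ref{lemma:time:find-disjoint-modulator}, and you simply make explicit the case distinction on the loop's termination condition that the paper leaves implicit in that citation.
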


\begin{proof}
    Directly follows from Lemmas \ref{lemma:partial-safeness:find-disjoint-modulator:2}, \ref{lemma:partial-safeness:find-disjoint-modulator:4}, and \ref{lemma:time:find-disjoint-modulator}.
\end{proof}

\chapter{Graph Decompositions}\label{ch:graph-decompositions}
Within this section we will introduce a method to decompose a graph into multiple different subgraphs that can each be reduced separately. Conceptually, our algorithm uses the concept of ``approximation and tidying'' introduced by van Bevern et al \cite{Approximation-and-Tidying-A-Problem-Kernel-for-s-Plex-Cluster-Vertex-Deletion}.

The first step in this framework is an approximation step. In this step we run a polynomial time constant factor approximation algorithm to obtain a modulator $X$ of $G$ of size bounded by a function over $t$. Having a modulator $X$ makes defining reduction rules on components $C$ in $G - X$ easier, as we know that $\treewidth{C} \leq 2$ will hold.

The second step is a tidying step in which we grow the modulator into a tidy modulator $X$. A tidy modulator $X$ is a modulator of $G$, such that for any $x \in X$ it holds that $X \setminus \{x\}$ is a modulator of $G$. A tidy modulator $X$ has the additional benefit that for each component $C \in G - X$ that $\treewidth{G[V(C) \cup \{x\}]} \leq 2$ holds for any $x \in X$.

The third step we introduce, which is not included in the framework by van Bevern et al. \cite{Approximation-and-Tidying-A-Problem-Kernel-for-s-Plex-Cluster-Vertex-Deletion}, is the step of finding a component separator of $(G,t,X)$. A component separator is a set $Y \subseteq V(G) \setminus X$ that ensures that any component $C \in \mathcal{C}(G - (X \cup Y))$ will have $N_G(C) \cap X$ being a limit-1 subset for $(G - C, t)$. The benefit of having a component separator $Y$ alongside a tidy modulator $X$ is that when we reduce $(G,t)$ by making a modification in $C$, that a solution $S'$ for the reduced problem instance $(G',t)$ has $\treewidth{N_G[C] \setminus S'} \leq 2$.

The final step is the framework is the shrinking step, which entails the process of applying reduction rules on components in $G - (X \cup Y)$ until those components are of a bounded size. This step will be discussed in Chapters \ref{ch:reducing-biconnected-subgraphs} and \ref{ch:reducing-block-cut-trees}.

\section{Approximate Modulators}
Gupta et al. \cite{Losing-Treewidth-by-Separating-Subsets} have defined a $O(\log \eta)$-approximation algorithm for the \textsc{Treewidth}-$\eta$ \textsc{Vertex Deletion} problem. We will use this algorithm to obtain a modulator $X$ of $G$ with size $O(t)$. An important aspect to note, however, is that we will not derive the exact approximation ratio $\varepsilon = O(\log 2) = O(1)$. Careful analysis of the algorithm by Gupta et al. \cite{Losing-Treewidth-by-Separating-Subsets} would provide this approximation ratio. Nevertheless, this analysis is beyond the scope of this thesis. Furthermore, any polynomial time constant factor approximation algorithm for the \textsc{Treewidth-2 Vertex Deletion} problem could be used as a subroutine. This means that a kernel bound that abstract over $\varepsilon$ directly translates to tighter kernel bounds when algorithms with better approximation ratios are discovered.

\begin{lemma}\label{lemma:apply-approximation-algorithm}
    Given a polynomial time $\varepsilon$-approximation algorithm $\mathcal{A}$ for the \textsc{Treewidth-2 Vertex Deletion} problem and a problem instance $(G,t)$. We can in $\polyG$ time obtain a modulator $X$ of $G$ with $t < |X| \leq \varepsilon t$ or a problem instance $(G',t')$ with $\twtwodeletion{G'}{t'} \Longleftrightarrow \twtwodeletion{G}{t}$ and $|V(G')| \leq 4 \wedge t' \leq t$.
\end{lemma}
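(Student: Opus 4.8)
The plan is straightforward: run the given approximation algorithm $\mathcal{A}$ on $G$ to obtain a modulator $X$ of $G$ with $\twtwodeletionG{G} \leq |X| \leq \varepsilon \cdot \twtwodeletionG{G}$, and then branch according to how $|X|$ compares with $t$ and $\varepsilon t$. Since $\varepsilon \geq 1$, the three cases $|X| \leq t$, $t < |X| \leq \varepsilon t$, and $|X| > \varepsilon t$ exhaust all possibilities, so it suffices to handle each of them.

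In the middle case $t < |X| \leq \varepsilon t$ there is nothing to do: $X$ is a modulator of exactly the required size, so I would just return $X$. In the case $|X| \leq t$, the set $X$ is itself a solution for $(G,t)$ (it is a modulator of $G$ of size at most $t$), so I would apply Reduction \ref{red:solution-is-known} to $(G,t)$ with $S = X$; by Lemma \ref{lemma:safeness:solution-is-known} the resulting instance $(G',t)$ satisfies $\twtwodeletion{G'}{t} \Longleftrightarrow \twtwodeletion{G}{t}$, and the empty graph it produces has $|V(G')| = 0 \leq 4$ and $t' = t \leq t$. In the case $|X| > \varepsilon t$, the bound $|X| \leq \varepsilon \cdot \twtwodeletionG{G}$ gives $\twtwodeletionG{G} \geq |X|/\varepsilon > t$, hence $\nottwtwodeletion{G}{t}$; I would then apply Reduction \ref{red:no-existing-solution}, which replaces $(G,t)$ by $(K_4,0)$, an instance that is equivalent to $(G,t)$ by Lemma \ref{lemma:safeness:no-existing-solution} (both admit no solution) and has $|V(K_4)| = 4 \leq 4$ and $t' = 0 \leq t$.

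For the running time, $\mathcal{A}$ is polynomial by hypothesis, computing $|X|$ and comparing it against $t$ and $\varepsilon t$ is immediate, and in the two degenerate cases applying the reduction rule simply outputs a graph on at most $4$ vertices, which is $O(1)$ — note that the detection Lemmas \ref{lemma:time:solution-is-known} and \ref{lemma:time:no-existing-solution} are not even needed here, since applicability of the relevant reduction has already been certified by the case analysis. Hence the whole routine runs in $\polyG$ time. I do not expect a real obstacle: this is essentially a bookkeeping step that packages the approximation algorithm together with the trivial reduction rules of Chapter \ref{ch:general-reduction-rules}. The only points that warrant a moment's care are that the three size-cases genuinely partition all possibilities (which relies on $\varepsilon \geq 1$) and that the strict inequality $|X| > \varepsilon t$ yields $\twtwodeletionG{G} > t$ strictly, so that "no solution for $(G,t)$" is concluded correctly.
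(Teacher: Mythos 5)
Your proof is correct and takes essentially the same route as the paper: run $\mathcal{A}$ to obtain a modulator $X$ with $\twtwodeletionG{G} \leq |X| \leq \varepsilon\cdot\twtwodeletionG{G}$, then case-split on $|X| \leq t$, $t < |X| \leq \varepsilon t$, and $|X| > \varepsilon t$, invoking Reduction~\ref{red:solution-is-known} and Reduction~\ref{red:no-existing-solution} in the two boundary cases. The additional remarks on $\varepsilon \geq 1$ and on why the detection lemmas are not needed are sound but are not a departure from the paper's argument.
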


\begin{proof}
    Let $k$ be the minimum size of any modulator of $G$. Algorithm $\mathcal{A}$ gives in $\polyG$ time a modulator $X$ of $G$ with $k \leq |X| \leq \varepsilon k$.
    
    \begin{case}[$|X| \leq t$]
        We can apply Reduction \ref{red:solution-is-known} on $(G,t)$ and $X$. Correctness then immediately follows from the definition of Reduction \ref{red:solution-is-known} and Lemma \ref{lemma:safeness:solution-is-known}.
    \end{case}
    
    \begin{case}[$|X| > \varepsilon t$]
        We have $\varepsilon t < |X| \leq \varepsilon k$, which implies $k > t$. Hence $\nottwtwodeletion{G}{t}$ holds. In this case we can apply Reduction \ref{red:no-existing-solution} on $(G,t)$. Correctness then immediately follows from the definition of Reduction \ref{red:no-existing-solution} and Lemma \ref{lemma:safeness:no-existing-solution}.
    \end{case}
    
    \begin{case}[$t < |X| \leq \varepsilon t$]
        Returning $X$ suffices.
    \end{case}
    
    \noindent
    Because these cases cover all cases, correctness trivially follows.
\end{proof}

\begin{lemma}\label{lemma:existance-approximation-algorithm}
    There exists a polynomial $O(1)$-approximation algorithm for the \textsc{Treewidth-2 Vertex Deletion} problem.
\end{lemma}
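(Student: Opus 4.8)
The plan is to simply instantiate the approximation algorithm of Gupta et al.\ \cite{Losing-Treewidth-by-Separating-Subsets} at the fixed value $\eta = 2$. That work provides, for every fixed $\eta \in \nat$, an algorithm running in time polynomial in $|G|$ that outputs a vertex set whose removal leaves a graph of treewidth at most $\eta$, of size at most $O(\log \eta)$ times the minimum size of any such set. Specialising to $\eta = 2$: the output set is precisely a modulator of $G$ in the sense of our definition, the running time is $\polyG$, and the returned ratio is $O(\log 2)$.

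It then remains only to observe that $O(\log 2)$ is a legitimate $O(1)$ bound in the sense of Section~\ref{sec:prelim:approximation}. Since $\log 2$ is a fixed real constant, the ratio obtained when $\eta$ is fixed to $2$ is bounded by some absolute constant $\varepsilon \geq 1$ that is independent of the input $(G,t)$, which is exactly what the definition of an $O(1)$-approximation algorithm demands. As already stressed in the text preceding the lemma, I would deliberately \emph{not} unfold this hidden constant: a careful reading of \cite{Losing-Treewidth-by-Separating-Subsets} would pin down $\varepsilon$ exactly, but all subsequent kernel-size bounds are stated as functions of $\varepsilon$, so the bare existence of a constant is all that is needed here.

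The ``hard part'' is therefore not mathematical but a matter of checking the fine print of the cited result: that its polynomial running time is in $|G|$, with the degree and constants allowed to depend on $\eta$ (harmless, as $\eta = 2$ is fixed), and that its output is genuinely a vertex-deletion set rather than, say, an approximate tree decomposition. Both are immediate from the statement in \cite{Losing-Treewidth-by-Separating-Subsets}, so the lemma follows directly.
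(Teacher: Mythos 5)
Your proposal is correct and takes essentially the same approach as the paper: both instantiate the $O(\log \eta)$-approximation algorithm of Gupta et al.\ at $\eta = 2$, noting that the running time becomes $\polyG$ and the ratio becomes a constant. The paper additionally quotes the explicit running-time expression $2^{O(\eta^3 \log^2 \eta)} \cdot n \log n + n^{O(1)}$, but this adds nothing beyond what you already argue.
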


\begin{proof}
    By Corollary 3 in \cite{Losing-Treewidth-by-Separating-Subsets} we have that there exists a $O(\log \eta)$-approximation algorithm for the \textsc{Treewidth}-$\eta$ \textsc{Vertex Deletion} problem, which takes $2^{O(\eta^3 \log^2 \eta)} \cdot n \log n + n^{O(1)}$ time on a graph $G$ of size $n$. With $\eta = 2$ we obtain a polynomial time $O(1)$-approximation algorithm for the \textsc{Treewidth-2 Vertex Deletion} problem.
\end{proof}

\section{Tidy Modulators}\label{sec:graph-decompositions:tidy-modulators}
Once given a modulator $X$ for graph $G$ we wish to grow $X$ into a modulator $Y$ such that the removal of any vertex $y \in Y$ results in $Y \setminus \{y\}$ being a modulator for $G$. The property that $\treewidth{G - (Y \setminus \{y\})} \leq 2$ holds for any $y \in Y$ is referred to as $Y$ being tidy modulator for $G$.

\begin{definition}[tidy modulator]\label{def:tidy-modulator}
    A modulator $X$ of $G$ is a tidy modulator of $G$ when for each $x \in X$ we have $\treewidth{G - (X \setminus \{x\})} \leq 2$.
\end{definition}

\noindent
To find a tidy modulator we will run Algorithm \ref{algo:find-disjoint-modulator} once for each vertex $x \in X$. I.e., for each $x \in X$ we will run \textsc{Find-Disjoint-Modulator}($G - (X \setminus \{x\}), t, \{x\}$). This algorithm will provide for any vertex $x \in X$ either one of the following: a proof that $\{x\}$ is a limit-0 subset for $(G,t)$ or a set $Y_x \subseteq V(G) \setminus X$ with $|Y_x| \leq 3t+3$ and $\treewidth{G - ((X \cup Y_x) \setminus \{x\})} \leq 2$.

We first introduce a reduction rule to employ when $\{x\}$ is a limit-0 subset for $(G,t)$. Then afterwards we will provide a method to combine each set $Y_x$ into a tidy modulator $Y$ of $G$.

We note that when a set $U$ is a limit-0 subset for $(G,t)$ that any solution $S$ for $(G,t)$ must contain all vertices in $U$. Once we have identified such a limit-0 subset $U$ we can remove all vertices $U$ from $G$ and focus on solving the problem for $(G - U, t - |U|)$. We formalise this in the following reduction rule.

\begin{reduction}[remove limit-0 subset]\label{red:remove-limit-0-subset}
    Given a problem instance $(G,t)$ and a non-empty limit-0 subset $U$ of $(G,t)$. Remove $U$ from $G$ and decrease $t$ by $|U|$.
\end{reduction}

\begin{lemma}[safeness]\label{lemma:safeness:remove-limit-0-subset}
    Let $(G',t')$ be the problem instance obtained by applying Reduction \ref{red:remove-limit-0-subset} on $(G,t)$. Then $\twtwodeletion{G}{t} \Longleftrightarrow \twtwodeletion{G'}{t'}$ holds.
\end{lemma}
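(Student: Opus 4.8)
We have a problem instance $(G,t)$ and a non-empty limit-0 subset $U$ of $(G,t)$. Reduction \ref{red:remove-limit-0-subset} produces $(G',t') = (G - U, t - |U|)$. We must show $\twtwodeletion{G}{t} \Longleftrightarrow \twtwodeletion{G'}{t'}$.

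The plan is a straightforward two-directional argument exploiting the definition of a limit-0 subset. For the forward direction, suppose $\twtwodeletion{G}{t}$ holds, and let $S$ be a solution for $(G,t)$. By Definition \ref{def:limit-m-subset}, since $U$ is a limit-0 subset, we have $|U \setminus S| \leq 0$, i.e.\ $U \subseteq S$. First I would set $S' = S \setminus U$. Then $|S'| = |S| - |U| \leq t - |U| = t'$, and $G' - S' = (G - U) - (S \setminus U) = G - S$, so $\treewidth{G' - S'} = \treewidth{G - S} \leq 2$. Hence $S'$ is a solution for $(G',t')$, establishing $\twtwodeletion{G'}{t'}$. (A small edge case: if $t - |U| < 0$ then $t' \in \nat$ fails; but if $U$ is a limit-0 subset with $U$ non-empty and a solution exists, then $U \subseteq S$ with $|S| \le t$ forces $|U| \le t$, so $t' \ge 0$; alternatively one observes that if $t < |U|$ then no solution exists for $(G,t)$ and, treating $t'$ as clamped or the instance as a trivial no-instance, both sides are false.)

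For the backward direction, suppose $\twtwodeletion{G'}{t'}$ holds with solution $S'$ for $(G',t') = (G-U, t-|U|)$. Set $S = S' \cup U$. Then $|S| \leq (t - |U|) + |U| = t$, and $G - S = (G - U) - S' = G' - S'$, so $\treewidth{G - S} = \treewidth{G' - S'} \leq 2$. Thus $S$ is a solution for $(G,t)$, giving $\twtwodeletion{G}{t}$.

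**Main obstacle.** There is essentially no mathematical obstacle here; the only subtlety worth care is the arithmetic on $t$ — ensuring $t' \ge 0$ so that $(G',t')$ is a legitimate problem instance — and making the set-equality $G - S = G' - S'$ precise (it uses $U \subseteq S$ in the forward direction, which is exactly what the limit-0 hypothesis buys, and is automatic in the backward direction since $U \subseteq S$ by construction). I would state the forward direction first, invoke the limit-0 definition explicitly to get $U \subseteq S$, and let the rest follow by the displayed identities.
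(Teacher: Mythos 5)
Your proof is correct and takes essentially the same route as the paper's: the correspondence $S \leftrightarrow S' \cup U$ between solutions, with $|S| = |S'| + |U|$ and $G - S = G' - S'$. You are somewhat more explicit than the paper in invoking the limit-0 hypothesis to obtain $U \subseteq S$ in the forward direction (precisely the step that makes the correspondence exhaust all solutions of $(G,t)$), and in noting the $t' \geq 0$ edge case; the paper leaves both points implicit.
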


\begin{proof}
    $\twtwodeletion{G'}{t'}$ holds if and only if there exists a solution $S'$ for $(G',t')$. We define $S = S' \cup U$ and note $|S| = |S'| + |U|$. We have $\treewidth{G - S} = \treewidth{G' - S'}$, which means that $S$ is a solution for $(G,t)$ if and only if $S'$ is a solution for $(G',t')$. Because $\twtwodeletion{G}{t}$ holds if and only if there exists a solution $S$ for $(G,t)$ we conclude that $\twtwodeletion{G}{t} \Longleftrightarrow \twtwodeletion{G'}{t'}$ holds.
\end{proof}

\begin{lemma}\label{lemma:find-tidy-modulator}
    Given a non-trivial problem instance $(G,t)$ and a modulator $X$ of $G$. We can in $\polyG$ time apply Reduction \ref{red:remove-limit-0-subset} or obtain a tidy modulator $Y$ of $G$ with $|Y| \leq (3t+4)|X|$.
\end{lemma}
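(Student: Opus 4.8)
The plan is to run Algorithm~\ref{algo:find-disjoint-modulator} once for each vertex of $X$ and then stitch the outputs together, as sketched in the text preceding this lemma. For a fixed $x \in X$ I would invoke it on the graph $G_x := G - (X \setminus \{x\})$ with the singleton modulator $\{x\}$: this is legitimate because $G_x - \{x\} = G - X$ has treewidth at most $2$, so $\{x\}$ is a modulator of $G_x$, it trivially induces a clique, and $0 < 1 \le 3$. By Theorem~\ref{theorem:find-disjoint-modulator} this yields, in $\polyG$ time, a set $Y_x \subseteq V(G_x) \setminus \{x\} = V(G) \setminus X$ with $|Y_x| \le 3t+3$ such that either (a) $\treewidth{G_x - Y_x} \le 2$, or (b) every $C \in \mathcal{C}(G_x - (\{x\} \cup Y_x))$ has $\{x\}$ as a limit-$0$ clique for $(G_x - C, t)$.

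If outcome~(b) occurs for some $x$, the aim is to show that $\{x\}$ is a limit-$0$ subset for $(G,t)$ and then apply Reduction~\ref{red:remove-limit-0-subset}. There is at least one component $C \in \mathcal{C}(G_x - (\{x\} \cup Y_x))$ --- otherwise $V(G_x) \subseteq \{x\} \cup Y_x$, and a short inspection of Algorithm~\ref{algo:find-disjoint-modulator} shows the while-loop would then have terminated reporting outcome~(a). Fixing such a $C$, the graph $G_x - C$ is an induced subgraph of $G$, so Lemma~\ref{lemma:limit-m-of-subgraph-to-supergraph} upgrades ``$\{x\}$ is a limit-$0$ subset for $(G_x - C, t)$'' to ``$\{x\}$ is a limit-$0$ subset for $(G,t)$''. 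Since $(G,t)$ is non-trivial, Definition~\ref{def:trivial-problem-instance} gives $t \ge 1$, hence $t - 1 \in \nat$, and Reduction~\ref{red:remove-limit-0-subset} is applicable to $(G,t)$ with $U = \{x\}$; this is the first alternative of the statement.

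Otherwise outcome~(a) holds for every $x \in X$, and I would output $Y := X \cup \bigcup_{x \in X} Y_x$. The size bound $|Y| \le |X| + |X|(3t+3) = (3t+4)|X|$ is immediate. As $X \subseteq Y$, the graph $G - Y$ is an induced subgraph of $G - X$, so $\treewidth{G - Y} \le 2$ by Lemma~\ref{lemma:minor-treewidth}; thus $Y$ is a modulator. For tidiness (Definition~\ref{def:tidy-modulator}) I would verify $\treewidth{G - (Y \setminus \{y\})} \le 2$ for every $y \in Y$: if $y \notin X$ then $X \subseteq Y \setminus \{y\}$ and the same induced-subgraph argument applies; if $y = x \in X$, then because $Y_x \cap X = \emptyset$ we have $(X \setminus \{x\}) \cup Y_x = (X \cup Y_x) \setminus \{x\} \subseteq Y \setminus \{x\}$, so $G - (Y \setminus \{x\})$ is an induced subgraph of $G_x - Y_x$, whose treewidth is at most $2$ by outcome~(a), and Lemma~\ref{lemma:minor-treewidth} finishes. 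Hence $Y$ is a tidy modulator of $G$.

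For the running time, Algorithm~\ref{algo:find-disjoint-modulator} is invoked $|X| \le |V(G)|$ times; each invocation, together with constructing $G_x$, deciding which outcome occurred (Lemma~\ref{lemma:find-tree-decompositions}), and finally either applying Reduction~\ref{red:remove-limit-0-subset} or taking the union, costs $\polyG$ time, so the whole procedure is polynomial. I expect the only delicate points to be keeping the identity $(X \setminus \{x\}) \cup Y_x = (X \cup Y_x) \setminus \{x\}$ straight so that every treewidth bound is attributed to the correct graph, and justifying the existence of the component $C$ in outcome~(b); neither is a genuine obstacle once Algorithm~\ref{algo:find-disjoint-modulator} and its guarantees are available.
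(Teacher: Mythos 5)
Your proof is correct and follows essentially the same approach as the paper: run Algorithm~\ref{algo:find-disjoint-modulator} once per $x \in X$ on $G - (X \setminus \{x\})$, and either deduce from outcome~(b) that $\{x\}$ is a limit-$0$ subset of $(G,t)$ and apply Reduction~\ref{red:remove-limit-0-subset}, or assemble $Y = X \cup \bigcup_{x\in X} Y_x$ into a tidy modulator from outcome~(a). The only cosmetic deviation is how the existence of a component $C$ is established in outcome~(b): the paper first disposes of the case $|V(G)| \leq (3t+4)|X|$ by returning $V(G)$ and then counts vertices, whereas you argue via the internals of Algorithm~\ref{algo:find-disjoint-modulator} --- a step that can be made even more elementary without opening the algorithm at all, since $\mathcal{C}(G_x - (\{x\} \cup Y_x)) = \emptyset$ would force $G_x - Y_x$ to be a subgraph of $G_x[\{x\}]$, which has treewidth $0$, placing you in outcome~(a).
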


\begin{proof}
     In case $|V(G)| \leq (3t+4)|X|$ returning $V(G)$ trivially suffices. Hence, we assume $|V(G)| > (3t+4)|X|$. Because $(G,t)$ is non-trivial we have $\treewidth{G} > 2$, which implies $X \neq \emptyset$. Hence, $|V(G)| > (3t+4)|X| \geq 3t+3+|X|$ must hold.

    For each $x \in X$ we run \textsc{Find-Disjoint-Modulator}($G - (X \setminus \{x\}), t, \{x\}$). By Theorem \ref{theorem:find-disjoint-modulator} we obtain in $\polyG$ time a set $Y_x \subseteq V(G) \setminus X$ with $|Y_x| \leq 3t+3$, such that either $\treewidth{G - ((X \setminus \{x\}) \cup Y_x)} \leq 2$ or every $C \in \mathcal{C}(G - (X \cup Y_x))$ has that $\{x\}$ is a limit-0 clique for $(G - ((X \setminus \{x\}) \cup V(C)), t)$.
    
    In case for some $x \in X$ we obtain a set $Y_x$ such that every $C \in \mathcal{C}(G - (X \cup Y_x))$ has $\{x\}$ being a limit-0 clique for $(G - ((X \setminus \{x\}) \cup V(C)), t)$. Then, because $|V(G) \setminus (X \cup Y_x)| \geq |V(G)| - |X| - |Y_x| > 3t + 3 + |X| - |X| - |Y_x| \geq 0$ we have that there exists a component $C \in \mathcal{C}(G - (X \cup Y_x))$ for which $\{x\}$ is a limit-0 clique for $(G - ((X \setminus \{x\}) \cup V(C)), t)$. Then by Lemma \ref{lemma:limit-m-of-subgraph-to-supergraph} we know that $\{x\}$ is a limit-0 clique for $(G,t)$, which implies that we can apply Reduction \ref{red:remove-limit-0-subset} on $\{x\}$.
    
    In case for each $x \in X$ we obtain a set $Y_x$ for which $\treewidth{G - ((X \setminus \{x\}) \cup Y_x)} \leq 2$ holds, then let $Y = X \cup \bigcup_{x \in X} Y_x$. For any vertex $y \in Y \setminus X$ we have by Lemma \ref{lemma:minor-treewidth} that $\treewidth{G - (Y \setminus \{y\})} \leq \treewidth{G - X} \leq 2$. For any vertex $y \in Y \cap X$ we have by Lemma \ref{lemma:minor-treewidth} that $\treewidth{G - (Y \setminus \{y\})} \leq \treewidth{G - ((X \setminus \{x\}) \cup Y_x)} \leq 2$. Hence, $Y$ is a tidy modulator of $G$ with $|Y| \leq |X| + (3t+3)|X| = (3t+4)|X|$.
\end{proof}

\section{Pre-Separation Reductions}
After having obtained a tidy modulator $X$ of $G$ we wish to obtain a component separator. To obtain a component separator of size $\poly{t}$ we first need to introduce two reduction rules.

\subsection{Add Necessary Edge}\label{sec:graph-decompositions:pre-separation:add-necessary-edge}
We use the notion of a necessary pair as introduced by Bodlaender \cite{Bodlaender_Necessary_Edges_in_k-Chordalisations_of_Graphs}. A vertex pair $\{x,y\} \subseteq V(G)$ is a necessary pair for treewidth $\eta$ if every width-$\eta$ tree decomposition of $G$ will have some bag that contains both $x$ and $y$. We note that any two vertices which are connected via an edge are a necessary pair, as follows from Definition \ref{def:tree-decomposition}. The reduction rule we introduce adds an edge between two non-adjacent vertices in a necessary pair for treewidth $2$. The addition of an edge ensures that a necessary pair in a graph $G$ will remain being a necessary pair in a reduced graph $G'$ obtained by a reduction that does not involve the necessary pair itself.

In Lemma 5 in the paper by Bodlaender \cite{Bodlaender_Necessary_Edges_in_k-Chordalisations_of_Graphs} it is proven that the addition of an edge between two distinct and independent vertices $x$ and $y$ that have at least $\eta+1$ internally vertex-disjoint paths connecting them will result in a graph $G'$ with treewidth at most $\eta$ if and only if $G$ had treewidth at most $\eta$ as well. We generalise this for our problem by requiring there to be at least $t+3$ internally vertex-disjoint paths. This, because regardless of the choice of a solution $S$ for $(G,t)$, either $S \cap \{x,y\} \neq \emptyset$ holds or $G - S$ will contain at least three internally vertex-disjoint paths between $x$ and $y$.

\begin{reduction}[add necessary edge]\label{red:add-necessary-edge}
	Given a problem instance $(G,t)$ and vertices $\{x,y\} \subseteq V(G)$ with $xy \notin E(G)$. If there exist $t+3$ internally vertex-disjoint paths in $G$ between $u$ and $v$, then add edge $uv$ to $G$.
\end{reduction}

\begin{lemma}[safeness]\label{lemma:safeness:add-necessary-edge}
	Let $(G',t)$ be the problem instance obtained by applying Reduction \ref{red:add-necessary-edge} on $(G,t)$. Then $\twtwodeletion{G}{t} \Longleftrightarrow \twtwodeletion{G'}{t}$ holds.
\end{lemma}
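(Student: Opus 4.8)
The plan is to prove the two implications of $\twtwodeletion{G}{t} \Longleftrightarrow \twtwodeletion{G'}{t}$ separately, where $G' = G \cup xy$. The implication $\twtwodeletion{G'}{t} \Rightarrow \twtwodeletion{G}{t}$ is immediate: $G$ is obtained from $G'$ by deleting the single edge $xy$, so $G$ is a minor of $G'$, and Lemma~\ref{lemma:minor-of-G-implication} gives $\twtwodeletionG{G} \leq \twtwodeletionG{G'} \leq t$, which is exactly $\twtwodeletion{G}{t}$.

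For the converse I would take an arbitrary solution $S$ for $(G,t)$ and show that this same set $S$ is a solution for $(G',t)$; since $|S| \leq t$ holds by assumption, it suffices to prove $\treewidth{G' - S} \leq 2$. I would distinguish two cases. If $\{x,y\} \cap S \neq \emptyset$, then the edge $xy$ used to form $G'$ is incident to a deleted vertex, so $G' - S = G - S$ and hence $\treewidth{G' - S} = \treewidth{G - S} \leq 2$. Otherwise $x, y \notin S$, and here I would exploit the precondition that $G$ contains $t+3$ pairwise internally vertex-disjoint paths between $x$ and $y$. Because $x$ and $y$ are not in $S$ and these paths share no internal vertices, each vertex of $S$ lies on at most one of them, so at most $|S| \leq t$ of the paths meet $S$; consequently at least three of them survive in $G - S$, where they remain pairwise internally vertex-disjoint paths between $x$ and $y$. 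Since $xy \notin E(G)$ we also have $xy \notin E(G - S)$, and $x \neq y$. Thus $G - S$ is a graph of treewidth at most $2$ containing $3 = 2+1$ internally vertex-disjoint paths between the distinct non-adjacent vertices $x$ and $y$, so Lemma~5 of Bodlaender~\cite{Bodlaender_Necessary_Edges_in_k-Chordalisations_of_Graphs} with $\eta = 2$ yields $\treewidth{(G - S) \cup xy} \leq 2$. As $(G - S) \cup xy = G' - S$, this shows $\treewidth{G' - S} \leq 2$, so $S$ is a solution for $(G',t)$ and $\twtwodeletion{G'}{t}$ holds.

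The only genuine work is the bookkeeping in the main case: verifying that deleting the at most $t$ vertices of $S$ can destroy at most $t$ of the $t+3$ internally vertex-disjoint paths. This is precisely why the case split on whether $S$ contains $x$ or $y$ is needed -- if an endpoint were deleted the ``at most one path per deleted vertex'' bound would fail, but in that case the reduction is vacuous since $G' - S = G - S$. Once this counting is in place, the statement reduces cleanly to Bodlaender's lemma applied to the concrete graph $G - S$, and I anticipate no deeper obstacle.
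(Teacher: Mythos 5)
Your proof is correct and follows essentially the same approach as the paper: minor argument for one direction, case split on $\{x,y\} \cap S$, and a counting argument showing at least three internally vertex-disjoint paths survive in $G-S$. The only cosmetic difference is that you invoke Bodlaender's Lemma~5 directly (that adding an edge between two non-adjacent vertices joined by $\eta+1$ internally vertex-disjoint paths preserves treewidth at most $\eta$), whereas the paper's proof cites Bodlaender's Lemma~3 to conclude $\{x,y\}$ is a necessary pair for treewidth~2 in $G-S$ and then argues about the existence of a bag containing both $x$ and $y$; since the paper itself motivates the reduction via Lemma~5 in the surrounding text, the underlying argument is the same.
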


\begin{proof}
    Because $G$ is a minor of $G'$ we obtain by Lemma \ref{lemma:minor-of-G-implication} that $\twtwodeletion{G'}{t}$ implies $\twtwodeletion{G}{t}$. It remains to prove that $\twtwodeletion{G}{t}$ implies $\twtwodeletion{G'}{t}$. Let $S$ be a solution for $(G,t)$.
    
    \begin{case}[$S \cap \{x,y\} \neq \emptyset$]
        Because $G' - S = G - S$ we have $\twtwodeletion{G'}{t}$.
    \end{case}

    \begin{case}[$S \cap \{x,y\} = \emptyset$]
        Because $|S| \leq t$ there exist at least three internally-vertex disjoint paths in $G - S$ between $u$ and $v$. Hence, by Lemma 3 in \cite{Bodlaender_Necessary_Edges_in_k-Chordalisations_of_Graphs}, we have that $\{x,y\}$ is a necessary pair for treewidth 2 in $G - S$. This means that a width-2 tree decomposition of $G - S$ will contain a bag that contains both $x$ and $y$. It trivially follows such a tree decomposition is a width-2 tree decomposition of $G' - S$ as well, which implies $\twtwodeletion{G'}{t}$.
    \end{case}
    
    \noindent
    Because these cases cover all cases, correctness trivially follows.
\end{proof}

\begin{lemma}\label{lemma:time:add-necessary-edge}
	We can detect whether Reduction \ref{red:add-necessary-edge} can be applied on a vertex pair $\{u,v\} \subseteq V(G)$ and apply this rule in $\polyG$ time.
\end{lemma}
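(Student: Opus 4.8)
The plan is to reduce everything to Observation \ref{obs:get-max-internally-vertex-disjoint-paths}. First I would test the precondition $uv \notin E(G)$ of Reduction \ref{red:add-necessary-edge}, which is a constant-time (certainly $\polyG$-time) lookup. If $uv \in E(G)$ the rule is inapplicable and we are done; otherwise applicability hinges solely on whether $G$ contains at least $t+3$ internally vertex-disjoint paths between $u$ and $v$. Since we now know $uv \notin E(G)$, Observation \ref{obs:get-max-internally-vertex-disjoint-paths} lets us compute in $\polyG$ time a maximum-cardinality set $\mathcal{P}$ of internally vertex-disjoint paths between $u$ and $v$ in $G$; Reduction \ref{red:add-necessary-edge} is applicable to $\{u,v\}$ precisely when $|\mathcal{P}| \geq t+3$, which is a single comparison. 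To carry out the rule itself we simply add the single edge $uv$ to $G$, which is again a $\polyG$-time operation, so the whole detect-and-apply procedure runs in $\polyG$ time.

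I do not expect any real obstacle here: the lemma is essentially a bookkeeping consequence of the max-flow computation already packaged in Observation \ref{obs:get-max-internally-vertex-disjoint-paths}, together with the fact that the comparison against $t+3$ is cheap. The only point worth an explicit sentence is that this comparison is legitimate even when $t$ is large relative to $|V(G)|$: in that regime $|\mathcal{P}| \leq |V(G)| - 2 < t+3$, so the rule is simply declared inapplicable and no issue with the polynomial bound arises. (One may also note that the statement's mixing of the names $\{x,y\}$ in the precondition and $u,v$ in the path condition refers to the same vertex pair, so nothing further is needed.)
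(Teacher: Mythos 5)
Your proposal is correct and matches the paper's proof exactly: the paper also derives the lemma directly from Observation \ref{obs:get-max-internally-vertex-disjoint-paths}, and you have simply spelled out the bookkeeping steps (checking $uv \notin E(G)$, comparing against $t+3$, adding the edge) that the paper leaves implicit.
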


\begin{proof}
	Directly follows from Observation \ref{obs:get-max-internally-vertex-disjoint-paths}.
\end{proof}

\noindent
Although this reduction rule can be applied on any vertex pair $\{u,v\} \subseteq V(G)$, we choose to only apply it on vertex pairs $\{x,y\} \subseteq X$ where $X$ is a tidy modulator of $G$. The reason for this is that we will provide multiple reduction rules on vertex pairs $\{u,v\} \not\subseteq X$ that remove edges between $u$ and $v$. If, however, we would always exhaustively apply Reduction \ref{red:add-necessary-edge} on any vertex pair $\{u,v\} \subseteq V(G)$, then this would undo the progress made by these other reduction rules.

To avoid repeatedly phrasing that Reduction \ref{red:add-necessary-edge} is exhaustively applied on all vertex pairs in a (tidy) modulator $X$ we introduce the notion of a linked (tidy) modulator.

\begin{definition}[linked set]\label{def:linked-set}
    A set $X \subseteq V(G)$ is linked for $(G,t)$ when for each $\{x,y\} \subseteq X$ with $xy \notin E(G)$ there are at most $t+2$ internally vertex-disjoint paths in $G$ between $x$ and $y$.
\end{definition}

\begin{lemma}\label{lemma:get-linked-modulator}
    Let $(G,t)$ be a problem instance and let $X$ be a (tidy) modulator of $G$. We can in $\polyG$ time obtain a problem instance $(G',t)$ with $V(G') = V(G) \wedge E(G' - X) = E(G - X)$ and $\twtwodeletion{G'}{t} \Longleftrightarrow \twtwodeletion{G}{t}$, such that $X$ is a linked (tidy) modulator of $(G',t)$.
\end{lemma}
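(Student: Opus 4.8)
The plan is to repeatedly apply Reduction~\ref{red:add-necessary-edge}, but restricted to vertex pairs lying inside $X$, until no such application is possible, and then verify that the resulting instance has all four desired properties.

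First I would set $G_0 = G$ and, as long as there is a pair $\{x,y\} \subseteq X$ with $xy \notin E(G_i)$ admitting $t+3$ internally vertex-disjoint paths in $G_i$, apply Reduction~\ref{red:add-necessary-edge} on this pair to obtain $G_{i+1}$. Since every application adds an edge between two vertices of $X$ and no edge is ever removed, the process is monotone and terminates after at most $\binom{|X|}{2} \le \binom{|V(G)|}{2}$ steps; together with Lemma~\ref{lemma:time:add-necessary-edge}, which bounds the cost of detecting and performing each application by $\polyG$, the whole procedure runs in $\polyG$ time. Let $G'$ be the final graph.

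Next I would check the properties. Every applied reduction leaves the vertex set unchanged, so $V(G') = V(G)$, and since every added edge has both endpoints in $X$, no such edge appears in the subgraph induced on $V(G) \setminus X$; hence $E(G' - X) = E(G - X)$, so in particular $G' - X = G - X$, giving $\treewidth{G' - X} \le 2$, i.e.\ $X$ is still a modulator of $G'$. For the tidy case, observe that for any $w \in X$ the graph $G' - (X \setminus \{w\}) = G'[(V(G) \setminus X) \cup \{w\}]$ contains only the single vertex $w$ from $X$, so none of the added within-$X$ edges occurs in it; thus $G' - (X \setminus \{w\}) = G - (X \setminus \{w\})$ and $\treewidth{G' - (X \setminus \{w\})} \le 2$ whenever $X$ was tidy in $G$. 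The equivalence $\twtwodeletion{G'}{t} \Longleftrightarrow \twtwodeletion{G}{t}$ follows by chaining the safeness of Reduction~\ref{red:add-necessary-edge} (Lemma~\ref{lemma:safeness:add-necessary-edge}) along $G_0, G_1, \dots, G'$. Finally, exhaustiveness means that for every $\{x,y\} \subseteq X$ with $xy \notin E(G')$ there are no $t+3$ internally vertex-disjoint paths between $x$ and $y$ in $G'$, hence at most $t+2$ of them, which is precisely the condition of Definition~\ref{def:linked-set} witnessing that $X$ is linked for $(G',t)$.

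The only genuinely delicate point is confining the edge additions to pairs inside $X$: this is exactly what simultaneously guarantees $E(G' - X) = E(G - X)$ and preserves the (tidy) modulator property, since both $G - X$ and each $G - (X \setminus \{w\})$ are insensitive to edges with both endpoints in $X$. Everything else is routine: termination is immediate from monotonicity, the running time is a product of two polynomial factors, and correctness is a telescoping application of the already-established safeness lemma.
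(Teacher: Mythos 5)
Your proof is correct and takes essentially the same approach as the paper: repeatedly apply Reduction~\ref{red:add-necessary-edge} to pairs inside $X$ until none qualifies, use the safeness lemma to chain equivalences, bound the number of iterations by $\binom{|X|}{2}$, and observe that edges added within $X$ affect neither $G - X$ nor any $G - (X \setminus \{w\})$, so the (tidy) modulator property is preserved. You spell out the tidy-case argument in slightly more detail than the paper, but it is the same argument.
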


\begin{proof}
    While $X$ is not linked for $(G,t)$, it follows from Lemma \ref{lemma:time:add-necessary-edge}, that we can in $\polyG$ time apply Reduction \ref{red:add-necessary-edge} on a pair $\{x,y\} \subseteq X$. We can add at most $\binom{|X|}{2}$ edges until $G[X]$ is a clique, for which $X$ is guaranteed to be linked. Hence, in $\polyG$ time we obtain a problem instance $(G',t)$ with $V(G') = V(G) \wedge E(G' - X) = E(G - X)$ such that $X$ is linked for $(G',t)$. By Lemma \ref{lemma:safeness:add-necessary-edge} we have $\twtwodeletion{G'}{t} \Longleftrightarrow \twtwodeletion{G}{t}$.
    
    Because we only add edges between pairs $\{x,y\} \subseteq X$ we have $G - X = G' - X$, which implies that $X$ is a linked modulator for $(G',t)$. Similarly, we have for each $x \in X$ that $G - (X \setminus \{x\}) = G' - (X \setminus \{x\})$, which implies that if $X$ is a tidy modulator for $G$ that then $X$ is a linked tidy modulator for $(G',t)$.
\end{proof}

\noindent
The main purpose of using linked modulators is that it will allow us to identify limit-$m$ cliques in modulators, opposed to only limit-$m$ subsets.

\begin{lemma}\label{lemma:identify-clique}
    Let $(G,t)$ be a problem instance and let $X \subseteq V(G)$ be a linked set for $(G,t)$. If for each $\{x,y\} \subseteq X$ there exist $t+3$ vertex-disjoint connected subgraphs $G_i$ of $G$ with $x,y \in N_G(G_i)$, then $G[X]$ is a clique.
\end{lemma}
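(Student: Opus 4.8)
The plan is to argue by contradiction: suppose $G[X]$ is not a clique, so there is a pair $\{x,y\} \subseteq X$ with $xy \notin E(G)$. I will use the $t+3$ vertex-disjoint connected subgraphs guaranteed by the hypothesis to build $t+3$ internally vertex-disjoint paths between $x$ and $y$ in $G$, which contradicts $X$ being linked for $(G,t)$ (Definition \ref{def:linked-set}), since for a non-adjacent pair in a linked set there can be at most $t+2$ such paths.

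\textbf{Key steps.} First I would fix the non-adjacent pair $\{x,y\} \subseteq X$ and let $G_1, \dots, G_{t+3}$ be vertex-disjoint connected subgraphs of $G$ with $x, y \in N_G(G_i)$ for each $i$. Note that $x, y \notin V(G_i)$ for every $i$, since by definition $N_G(V(G_i))$ excludes $V(G_i)$. Next, for each $i$ I would construct a simple path $P_i$ from $x$ to $y$ whose internal vertices all lie in $V(G_i)$: pick $a_i \in V(G_i) \cap N_G(x)$ and $b_i \in V(G_i) \cap N_G(y)$ (these exist because $x, y \in N_G(G_i)$); since $G_i$ is connected there is a simple path $Q_i$ in $G_i$ from $a_i$ to $b_i$, and because $x,y \notin V(G_i)$ the path $P_i := \langle x \rangle \cdot Q_i \cdot \langle y \rangle$ (using the edges $xa_i$ and $b_iy$) is a simple path between $x$ and $y$ with $V(P_i) \setminus \{x,y\} \subseteq V(G_i)$; if $a_i = b_i$ take $P_i = \langle x, a_i, y \rangle$.

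\textbf{Conclusion.} Since $G_1, \dots, G_{t+3}$ are pairwise vertex-disjoint, the internal vertex sets $V(P_i) \setminus \{x,y\} \subseteq V(G_i)$ are pairwise disjoint, so $P_1, \dots, P_{t+3}$ are pairwise internally vertex-disjoint. Hence there are at least $t+3$ internally vertex-disjoint paths in $G$ between $x$ and $y$, contradicting that $X$ is linked for $(G,t)$. Therefore no such non-adjacent pair exists, i.e.\ $xy \in E(G)$ for every $\{x,y\} \subseteq X$, so $G[X]$ is a clique.

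\textbf{Main obstacle.} There is no deep obstacle here; the only point requiring a little care is verifying that the constructed paths are genuinely simple and internally vertex-disjoint, which rests on the facts that $x,y$ lie outside each $G_i$ and that the $G_i$ are mutually vertex-disjoint. Everything else is a direct application of the definition of a linked set.
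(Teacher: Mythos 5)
Your proof is correct and takes essentially the same approach as the paper: both construct, for each pair $\{x,y\}$, one internal path per subgraph $G_i$ and invoke Definition \ref{def:linked-set} to force $xy \in E(G)$. You phrase it as a proof by contradiction and spell out the path construction in slightly more detail, but the core argument is identical.
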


\begin{proof}
    Each graph $G_i$ yields a path $P_i$ in $G[V(G_i) \cup \{x,y\}]$ from $x$ to $y$, such that all paths $P_1, P_2, \dots, P_{t+3}$ are internally vertex-disjoint. By Definition \ref{def:linked-set} this yields $xy \in E(G)$. Because this holds for all $\{x,y\} \subseteq X$ we have that $G[X]$ is a clique.
\end{proof}

\subsection{Reduce Number of Components}
The following reduction rule allows us to bound the number of components in $G - X$, where $X$ is a tidy modulator of $G$. After exhaustively applying this reduction rule on $(G,t)$ and a linked tidy modulator $X$ of $(G,t)$, we will be able to find a component separator $Y$ of size bounded by $\poly{t}$.

\begin{reduction}[reduce number of components]\label{red:reduce-number-of-components}
    Let $(G,t)$ be a problem instance, let $X$ be a tidy modulator of $G$, and let $C \in \mathcal{C}(G - X)$. Remove $C$ from $G$ if all of the following hold:
    \begin{enumerate}
        \item $N_G(C)$ is a limit-2 clique for $(G-C,t)$
        \item for each $\{x,y\} \subseteq N_G(C)$ at least one of the following holds: 
        \begin{enumerate}
            \item $\treewidth{G[V(C) \cup \{x,y\}]} \leq 2$
            \item $\{x,y\}$ is a limit-1 subset for $(G-C,t)$
        \end{enumerate}
    \end{enumerate}
\end{reduction}
\medskip

\begin{lemma}[safeness]\label{lemma:safeness:reduce-number-of-components}
	Let $(G',t)$ be the problem instance obtained by applying Reduction \ref{red:reduce-number-of-components} on $(G,t)$. Then $\twtwodeletion{G}{t} \Longleftrightarrow \twtwodeletion{G'}{t}$ holds.
\end{lemma}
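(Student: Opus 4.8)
The plan is to prove the two implications separately, with $G' = G - C$. The forward implication $\twtwodeletion{G}{t} \Rightarrow \twtwodeletion{G'}{t}$ is immediate: $G'$ is an induced subgraph, hence a minor, of $G$, so Lemma~\ref{lemma:minor-of-G-implication} gives $\twtwodeletionG{G'} \le \twtwodeletionG{G}$, and thus $\twtwodeletion{G}{t}$ implies $\twtwodeletion{G'}{t}$. All the work is in the reverse implication, and the key claim there will be that \emph{any} solution $S'$ for $(G-C,t)$ is already a solution for $(G,t)$.

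So let $S'$ be a solution for $(G',t) = (G - C, t)$. Since $S' \subseteq V(G-C) = V(G) \setminus V(C)$, the set $S'$ is disjoint from $V(C)$, so $C$ is an induced subgraph of $G - S'$ and its neighbourhood there equals $Z := N_G(C) \setminus S'$. Because $N_G(C)$ is a limit-$2$ clique for $(G-C,t)$ (condition~1) and $S'$ is a solution for $(G-C,t)$, Definition~\ref{def:limit-m-subset} gives $|Z| \le 2$; moreover $Z \subseteq N_G(C)$ and $G[N_G(C)]$ is a clique, so $G[Z]$ is a clique.

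The heart of the argument is to show $\treewidth{G[V(C) \cup Z]} \le 2$. If $|Z| \le 1$ this follows from $X$ being a \emph{tidy} modulator: when $Z = \emptyset$ we have $\treewidth{G[V(C)]} \le \treewidth{G - X} \le 2$ since $C \in \mathcal{C}(G - X)$, and when $Z = \{x\}$ with $x \in X$ we have $\treewidth{G[V(C) \cup \{x\}]} \le \treewidth{G - (X \setminus \{x\})} \le 2$ by Definition~\ref{def:tidy-modulator} together with Lemma~\ref{lemma:minor-treewidth} (here $N_G(C) \subseteq X$ since $C$ is a component of $G - X$). If $|Z| = 2$, say $Z = \{x,y\}$, then $x,y \notin S'$, so $|\{x,y\} \setminus S'| = 2$; since $S'$ is a solution for $(G-C,t)$, this means $\{x,y\}$ is \emph{not} a limit-$1$ subset for $(G-C,t)$, so condition~2 forces the other alternative, namely $\treewidth{G[V(C) \cup \{x,y\}]} \le 2$.

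It then remains to glue $C$ back onto $G' - S'$: apply Theorem~\ref{theorem:connect-treewidth-graphs} to the graph $G - S'$ with vertex set $U := V(G') \setminus S'$. Then $(G - S')[U] = G' - S'$ has treewidth at most $2$ (as $S'$ solves $(G',t)$); the unique component of $(G - S') - U$ is $C$; its closed neighbourhood in $G - S'$ is $V(C) \cup Z$, on which $G - S'$ induces $G[V(C) \cup Z]$ of treewidth at most $2$ by the previous paragraph; and $G - S'$ induces the clique $G[Z]$ on $N_{G-S'}(C) = Z$. Theorem~\ref{theorem:connect-treewidth-graphs} then yields $\treewidth{G - S'} \le 2$, so with $|S'| \le t$ the set $S = S'$ is a solution for $(G,t)$, establishing $\twtwodeletion{G'}{t} \Rightarrow \twtwodeletion{G}{t}$. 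I expect the main obstacle to be the case distinction on $|Z|$ in the reverse direction — in particular recognising that for the specific pair $Z = \{x,y\}$ arising when $|Z| = 2$ the ``limit-$1$ subset'' option of condition~2 is automatically excluded (because $S'$ avoids both $x$ and $y$), which is precisely what makes the treewidth bound on $G[V(C) \cup \{x,y\}]$ available; the remaining work is careful but routine verification of the hypotheses of Theorem~\ref{theorem:connect-treewidth-graphs}, repeatedly using that $S'$ is disjoint from $V(C)$ so the relevant induced subgraphs of $G - S'$ coincide with those of $G$.
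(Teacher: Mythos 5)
Your proposal is correct and takes essentially the same approach as the paper: the same forward direction via Lemma~\ref{lemma:minor-of-G-implication}, the same case analysis on $|N_G(C)\setminus S'|$ using the limit-$2$ clique, the tidy-modulator property, and condition~2 of the reduction, and the same final application of Theorem~\ref{theorem:connect-treewidth-graphs} to glue $C$ back onto $G'-S'$. The only cosmetic difference is that the paper argues $\treewidth{G[N_G[C]\setminus S']}\le 2$ by contradiction whereas you establish it directly.
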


\begin{proof}
    Because $G'$ is a minor of $G$ we have by Lemma \ref{lemma:minor-of-G-implication} that $\twtwodeletion{G}{t}$ implies $\twtwodeletion{G'}{t}$. It remains to prove that $\twtwodeletion{G'}{t}$ implies $\twtwodeletion{G}{t}$. Let $S'$ be a solution for $(G',t)$.
    
    We first prove $\treewidth{G[N_G[C] \setminus S']} \leq 2$. Assume per contradiction $\treewidth{G[N_G[C] \setminus S']} > 2$. Because $N_G(C)$ is a limit-2 clique for $(G',t)$ we know $|N_G(C) \setminus S'| \leq 2$. We must have $|N_G(C) \setminus S'| = 2$, because $|N_G(C) \setminus S'| \leq 1$ yields by Lemma \ref{lemma:minor-treewidth} and Definition \ref{def:tidy-modulator} that $\treewidth{G[N_G[C] \setminus S']} \leq \treewidth{G - (X \setminus (N_G(C) \setminus S'))} \leq 2$. Let $\{x,y\} = N_G(C) \setminus S'$. We have $\treewidth{G[N_G[C] \setminus S']} = \treewidth{G[V(C) \cup \{x,y\}]} > 2$, which by definition of Reduction \ref{red:reduce-number-of-components} yields that $\{x,y\}$ is a limit-1 subset for $(G',t)$. This, however, contradicts $\{x,y\} \cap S' = \emptyset$. Hence, we have $\treewidth{G[N_G[C] \setminus S']} \leq 2$.
    
    Because $G[N_G(C) \setminus S']$ is a clique we can apply Theorem \ref{theorem:connect-treewidth-graphs} on $(G - S', G - (V(C) \cup S'))$ to derive $\treewidth{G - S'} \leq 2$. Hence, we conclude $\twtwodeletion{G}{t}$.
\end{proof}

\noindent
We next prove that when $G - X$ contains many components, that then we are able to remove one using Reduction \ref{red:reduce-number-of-components}. To prove this we will construct sets of components $\mathcal{P}$, $\mathcal{Q}$, and $\mathcal{R}$ that together ensure that any component in $G - X$ not in any of these sets will satisfy all conditions for Reduction \ref{red:reduce-number-of-components}.

For notation convenience we define function $\nbound{C} : \nat^2 \rightarrow \nat$ such that $|\mathcal{C}(G - X)| > \Cbound{t}{|X|}$ implies that we can apply Reduction \ref{red:reduce-number-of-components}.

\begin{definition}\label{def:Cbound}
    \[\Cbound{t}{x} = \binom{x}{3}(t+1) + \binom{x}{2}(2t+3)\]
\end{definition}

\begin{lemma}\label{lemma:partial-application:reduce-number-of-components}
    Let $(G,t)$ be a problem instance and let $X$ be a linked tidy modulator of $(G,t)$. If $|\mathcal{C}(G - X)| > \Cbound{t}{|X|}$, then in $\polyG$ time we can apply Reduction \ref{red:reduce-number-of-components}.
\end{lemma}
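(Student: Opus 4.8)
The plan is to show that among the more than $\Cbound{t}{|X|}$ components of $G-X$ at least one, say $C$, provably satisfies both conditions of Reduction~\ref{red:reduce-number-of-components}, and that such a $C$ can be pinpointed in $\polyG$ time. As suggested, I would carve out of $\mathcal{C}(G-X)$ three sets of ``obstructing'' components $\mathcal{P},\mathcal{Q},\mathcal{R}$, prove $|\mathcal{P}|+|\mathcal{Q}|+|\mathcal{R}|\le\Cbound{t}{|X|}$, and argue that every $C\notin\mathcal{P}\cup\mathcal{Q}\cup\mathcal{R}$ meets both conditions. Let $\mathcal{R}$ consist of the components $C$ whose neighbourhood fails to induce a clique, i.e.\ some non-adjacent pair $\{x,y\}\subseteq X$ lies in $N_G(C)$. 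For a fixed non-adjacent pair $\{x,y\}$, each such component supplies an $x$--$y$ path with interior inside $C$, and paths from distinct components are internally vertex-disjoint; since $X$ is linked there are at most $t+2$ of them, so summing over the at most $\binom{|X|}{2}$ non-adjacent pairs gives $|\mathcal{R}|\le\binom{|X|}{2}(t+2)$. For $C\notin\mathcal{R}$ the graph $G[N_G(C)]$ is a clique, hence every $3$-element subset of $N_G(C)$ induces a triangle.

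For a triple $\tau\subseteq X$ write $\mathrm{cnt}(\tau)$ for the number of components $C'\in\mathcal{C}(G-X)$ with $\treewidth{G[V(C')\cup\tau]}>2$, and call $\tau$ \emph{heavy} if $\mathrm{cnt}(\tau)\ge t+2$. For a heavy $\tau$ and \emph{any} component $C$, at least $t+1$ of the counted components are disjoint from $C$, so the subgraphs $G[V(C')]$ over those $C'$ are $t+1$ vertex-disjoint subgraphs of $(G-C)-\tau$ each having treewidth $>2$ together with $\tau$; Lemma~\ref{lemma:limit-m-subset-identification}, applied to the instance $(G-C,t)$ with the set $\tau$ (so $|\tau|-1=2$), then gives that $\tau$ is a limit-$2$ subset for $(G-C,t)$. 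Put into $\mathcal{P}$ all components counted by $\mathrm{cnt}(\tau)$ over every \emph{light} triple $\tau$ (each contributing at most $t+1$), so $|\mathcal{P}|\le\binom{|X|}{3}(t+1)$. Now take $C\notin\mathcal{R}\cup\mathcal{P}$ and any triple $\tau=\{x,y,z\}\subseteq N_G(C)$: since $G[\tau]$ is a triangle, contracting the connected subgraph $G[V(C)]$ to a single vertex exhibits a $K_4$ minor in $G[V(C)\cup\tau]$ (apply Lemma~\ref{lemma:describes-k4-minor} to $(G[V(C)],x,y,z)$), so $\treewidth{G[V(C)\cup\tau]}>2$ and $C$ is counted by $\mathrm{cnt}(\tau)$; as $C\notin\mathcal{P}$, the triple $\tau$ must be heavy and hence a limit-$2$ subset for $(G-C,t)$. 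This holding for all triples of $N_G(C)$, Lemma~\ref{lemma:limit-m-subset-to-superset} — together with the trivial case $|N_G(C)|\le2$ — makes $N_G(C)$ a limit-$2$ subset, and with $C\notin\mathcal{R}$ a limit-$2$ clique, for $(G-C,t)$; that is Condition~1.

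Condition~2 is handled symmetrically with pairs. For $\pi\subseteq X$ let $\mathrm{cnt}(\pi)$ count the components $C'$ with $\treewidth{G[V(C')\cup\pi]}>2$, call $\pi$ heavy if $\mathrm{cnt}(\pi)\ge t+2$, and put into $\mathcal{Q}$ all components counted over every light pair (at most $t+1$ each), so $|\mathcal{Q}|\le\binom{|X|}{2}(t+1)$. As above, for a heavy pair $\pi$ Lemma~\ref{lemma:limit-m-subset-identification} makes $\pi$ a limit-$1$ subset for $(G-C,t)$ for every $C$; and for $C\notin\mathcal{Q}$ and any $\{x,y\}\subseteq N_G(C)$, either $\{x,y\}$ is heavy (Condition~2(b)) or $C$ is not among its counted components, i.e.\ $\treewidth{G[V(C)\cup\{x,y\}]}\le2$ (Condition~2(a)). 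Adding the three estimates, $|\mathcal{P}\cup\mathcal{Q}\cup\mathcal{R}|\le\binom{|X|}{3}(t+1)+\binom{|X|}{2}(t+1)+\binom{|X|}{2}(t+2)=\Cbound{t}{|X|}$, so when $|\mathcal{C}(G-X)|>\Cbound{t}{|X|}$ some $C\notin\mathcal{P}\cup\mathcal{Q}\cup\mathcal{R}$ exists and may be removed by Reduction~\ref{red:reduce-number-of-components}. All of $\mathcal{P},\mathcal{Q},\mathcal{R}$ are built by scanning the $O(|G|^3)$ pairs and triples of $X$ against the $O(|G|)$ components, each check being a width-$2$ tree-decomposition test (Lemma~\ref{lemma:find-tree-decompositions}) or an adjacency/neighbourhood test, so the whole procedure runs in $\polyG$ time.

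The step I expect to be most delicate is Condition~1: because enlarging a limit-$2$ subset need not preserve limit-$2$-ness, one cannot simply say ``some triple of $N_G(C)$ is limit-$2$''. The resolution is the chain ``$C\notin\mathcal{R}$ $\Rightarrow$ every triple of $N_G(C)$ is a triangle $\Rightarrow$ (after contracting $C$) that triple witnesses a $K_4$ minor in $G[V(C)\cup\tau]$ $\Rightarrow$ $C$ is counted, so a light triple would have forced $C\in\mathcal{P}$'', which makes every triple of $N_G(C)$ heavy and thus lets Lemma~\ref{lemma:limit-m-subset-to-superset} lift limit-$2$-ness from all triples to $N_G(C)$ itself; keeping the counting thresholds ($t{+}2$ heavy, $t{+}1$ obstructing, $t{+}2$ linked-path) aligned with the three summands of $\Cbound{t}{|X|}$ is the remaining bookkeeping.
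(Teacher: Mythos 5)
Your proof is correct and follows essentially the same three-set decomposition as the paper, with the thresholds $t{+}2$, $t{+}1$, $t{+}1$ aligned with the summands of $\Cbound{t}{|X|}$ exactly as in the original. The only (minor, harmless) structural difference is in the triple set: the paper's $\mathcal{Q}_{x,y,z}$ is defined via the neighbourhood condition $\{x,y,z\}\subseteq N_G(C')$ restricted to $C'\notin\mathcal{P}$, and the treewidth inequality $\treewidth{G[V(C')\cup\{x,y,z\}]}>2$ is then \emph{derived} from the clique property of each such $C'$; whereas your set counts components by the treewidth condition directly on all of $\mathcal{C}(G-X)$, and you instead use the clique property of the component $C$ under consideration to show $C$ is counted — both routes yield the same heavy/light dichotomy and the same bound, and your variant avoids the dependency of the triple set on the pair set.
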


\begin{proof}
    For each $\{x,y\} \subseteq X$ we let $\mathcal{P}_{x,y} = \setdef{C \in \mathcal{C}(G - X)}{x,y \in N_G(C)}$ and we let $\mathcal{P} = \setdef{C \in \mathcal{P}_{x,y}}{|\mathcal{P}_{x,y}| \leq t+2}$. For each $C \in \mathcal{C}(G - X) \setminus \mathcal{P}$ and $\{x,y\} \subseteq N_G(C)$ there exist at least $t+3$ components $C' \in \mathcal{C}(G - X)$ with $\{x,y\} \subseteq N_G(C')$. By Lemma \ref{lemma:identify-clique} we therefore have that $G[N_G(C)]$ is a clique.
    
    For each $\{x,y,z\} \subseteq X$ we let $\mathcal{Q}_{x,y,z} = \setdef{C \in \mathcal{C}(G - X) \setminus \mathcal{P}}{x,y,z \in N_G(C)}$ and we let $\mathcal{Q} = \setdef{C \in \mathcal{Q}_{x,y,z}}{|\mathcal{Q}_{x,y,z}| \leq t+1}$. For each $C \in \mathcal{C}(G - X) \setminus (\mathcal{P} \cup \mathcal{Q})$ and $\{x,y,z\} \subseteq N_G(C)$ there exist at least $t+1$ components $C' \in \mathcal{C}((G - C) - X) \setminus \mathcal{P}$ with $x,y,z \in N_G(C')$. For each such $C'$ we have that $G[N_{G-C}(C')]$ is a clique, which implies $\treewidth{(G-C)[V(C') \cup \{x,y,z\}]} > 2$. By Lemma \ref{lemma:limit-m-subset-identification} we have that $\{x,y,z\}$ is a limit-2 subset for $(G-C,t)$. By Lemma \ref{lemma:limit-m-subset-to-superset} we have that $N_G(C)$ is a limit-2 subset for $(G-C, t)$. Hence, for each $C \in \mathcal{C}(G - X) \setminus (\mathcal{P} \cup \mathcal{Q})$ property 1 of Reduction \ref{red:reduce-number-of-components} holds.
    
    For each $\{x,y\} \subseteq X$ we let $\mathcal{R}_{x,y} = \setdef{C \in \mathcal{C}(G - X)}{\treewidth{G[V(C) \cup \{x,y\}]} > 2}$ and we let $\mathcal{R} = \setdef{C \in \mathcal{R}_{x,y}}{|\mathcal{R}_{x,y}| \leq t+1}$. For each $C \in \mathcal{C}(G - X) \setminus \mathcal{R}$ and $\{x,y\} \subseteq N_G(C)$ we have $\treewidth{G[V(C) \cup \{x,y\}]} \leq 2$ or there exist at least $t+1$ components $C' \in \mathcal{C}((G - C) - X)$ with $\treewidth{G[V(C') \cup \{x,y\}]} > 2$. In case of the former property 2a of Reduction \ref{red:reduce-number-of-components} holds. In case of the latter we have by Lemma \ref{lemma:limit-m-subset-identification} that property 2b of Reduction \ref{red:reduce-number-of-components} holds. Hence, on each component $C \in \mathcal{C}(G - X) \setminus (\mathcal{P} \cup \mathcal{Q} \cup \mathcal{R})$ we can apply Reduction \ref{red:reduce-number-of-components}.
    
    We note that we can trivially construct sets $\mathcal{P}$ and $\mathcal{Q}$ in $\polyG$ time. From Lemma \ref{lemma:find-tree-decompositions} it also follows that we can construct $\mathcal{R}$ in $\polyG$ time. From the definitions of $\mathcal{P}$, $\mathcal{Q}$, and $\mathcal{R}$ it trivially follows that $|\mathcal{P} \cup \mathcal{Q} \cup \mathcal{R}| \leq \binom{|X|}{2}(t+2) + \binom{|x|}{3}(t+1) + \binom{|X|}{2}(t+1) = \Cbound{t}{|X|}$. Because $|\mathcal{C}(G - X)| > \Cbound{t}{|X|}$ there exists at least one component $C \in \mathcal{C}(G - X) \setminus (\mathcal{P} \cup \mathcal{Q} \cup \mathcal{R})$ on which we can apply Reduction \ref{red:reduce-number-of-components}.
\end{proof}

\noindent
We will also want to apply Lemma \ref{lemma:partial-application:reduce-number-of-components} on non-linked tidy modulators. Although we do not wish to add edges that undo progress made by other reduction rules, we do permit this behaviour if the resulting graph is guaranteed to have strictly fewer vertices. This, because we will not introduce any reduction rules that increase the number of vertices.

\begin{lemma}\label{lemma:application:reduce-number-of-components}
    Let $(G,t)$ be a problem instance and let $X$ be a tidy modulator of $G$. If $|\mathcal{C}(G - X)| > \Cbound{t}{|X|}$, then in $\polyG$ time we can obtain a problem instance $(G',t)$ with $\twtwodeletion{G}{t} \Longleftrightarrow \twtwodeletion{G'}{t}$ and $|V(G')| < |V(G)|$.
\end{lemma}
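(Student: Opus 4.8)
The plan is to reduce to the already-proven linked case, Lemma~\ref{lemma:partial-application:reduce-number-of-components}, by first linking the modulator. Concretely, I would apply Lemma~\ref{lemma:get-linked-modulator} to the problem instance $(G,t)$ and the tidy modulator $X$ to obtain in $\polyG$ time a problem instance $(G^\circ,t)$ with $V(G^\circ)=V(G)$, $E(G^\circ-X)=E(G-X)$, $\twtwodeletion{G}{t}\Longleftrightarrow\twtwodeletion{G^\circ}{t}$, and such that $X$ is a \emph{linked} tidy modulator of $(G^\circ,t)$. Since only edges inside $X$ were added, we have $G^\circ-X=G-X$, hence $\mathcal{C}(G^\circ-X)=\mathcal{C}(G-X)$ and in particular $|\mathcal{C}(G^\circ-X)|=|\mathcal{C}(G-X)|>\Cbound{t}{|X|}$.

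Next I would invoke Lemma~\ref{lemma:partial-application:reduce-number-of-components} on $(G^\circ,t)$ and the linked tidy modulator $X$: since $|\mathcal{C}(G^\circ-X)|>\Cbound{t}{|X|}$, it yields in $\polyG$ time an application of Reduction~\ref{red:reduce-number-of-components} to $(G^\circ,t)$, producing a problem instance $(G',t)$. Reduction~\ref{red:reduce-number-of-components} removes an entire (nonempty, by Lemma~\ref{lemma:min-degree-3}, since a component of $G^\circ-X$ has at least one vertex) component $C\in\mathcal{C}(G^\circ-X)$ from $G^\circ$, so $|V(G')|=|V(G^\circ)|-|V(C)|<|V(G^\circ)|=|V(G)|$. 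By Lemma~\ref{lemma:safeness:reduce-number-of-components} we have $\twtwodeletion{G^\circ}{t}\Longleftrightarrow\twtwodeletion{G'}{t}$, and chaining the two equivalences gives $\twtwodeletion{G}{t}\Longleftrightarrow\twtwodeletion{G'}{t}$.

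Finally, for the running time: linking takes $\polyG$ time by Lemma~\ref{lemma:get-linked-modulator}, and the application of Reduction~\ref{red:reduce-number-of-components} takes $\poly{|G^\circ|}\le\polyG$ time by Lemma~\ref{lemma:partial-application:reduce-number-of-components} (note $|G^\circ|$ is polynomially bounded in $|G|$, indeed $V(G^\circ)=V(G)$). Hence the whole procedure runs in $\polyG$ time and outputs $(G',t)$ with the desired properties.

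The only point requiring care — and the closest thing to an obstacle — is making sure the precondition of Lemma~\ref{lemma:partial-application:reduce-number-of-components} still holds after linking, i.e.\ that adding edges inside $X$ neither destroys the tidiness of $X$ nor decreases the component count of $G-X$; both follow immediately because linking adds edges only within $X$, so $G-X$ and the collection $\{G-(X\setminus\{x\})\}_{x\in X}$ (as far as treewidth $\le 2$ after deletion) are preserved by Lemma~\ref{lemma:get-linked-modulator}. The justification for accepting the added edges here, even though other reduction rules may later remove edges outside $X$, is exactly the remark preceding the lemma: the net effect is a strict decrease in $|V(G)|$, which is all that is claimed.
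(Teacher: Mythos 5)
Your proof is correct and takes essentially the same route as the paper: first apply Lemma~\ref{lemma:get-linked-modulator} to make $X$ a linked tidy modulator without changing $V(G)$ or $E(G-X)$, then invoke Lemma~\ref{lemma:partial-application:reduce-number-of-components} to find and apply Reduction~\ref{red:reduce-number-of-components}, which removes a nonempty component, and chain the safeness equivalences. (One small nit: the component is nonempty simply by definition of a component, so the appeal to Lemma~\ref{lemma:min-degree-3} there is unnecessary, and its precondition — that Reduction~\ref{red:contract-component-with-small-neighborhood} is inapplicable — is not even among the hypotheses of this lemma; but this does not affect correctness.)
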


\begin{proof}
    By Lemma \ref{lemma:get-linked-modulator} we can in $\polyG$ time obtain a problem instance $(G',t)$ with $V(G') = V(G) \wedge E(G' - X) = E(G - X)$ and $\twtwodeletion{G'}{t} \Longleftrightarrow \twtwodeletion{G}{t}$, such that $X$ is a linked tidy modulator of $(G',t)$. Because $V(G') = V(G) \wedge E(G' - X) = E(G - X)$ we have $|\mathcal{C}(G' - X)| = |\mathcal{C}(G - X)| > \Cbound{t}{|X|}$. By Lemma \ref{lemma:partial-application:reduce-number-of-components} we can therefore apply in $\poly{|G'|} = \polyG$ time Reduction \ref{red:reduce-number-of-components} on $(G',t)$. From the definition of Reduction \ref{red:reduce-number-of-components} and Lemma \ref{lemma:safeness:reduce-number-of-components} it follows that we obtain a problem instance $(G'',t)$ with $\twtwodeletion{G'}{t} \Longleftrightarrow \twtwodeletion{G''}{t}$ and $|V(G'')| < |V(G')|$. Correctness trivially follows.
\end{proof}

\section{Component Separators}\label{sec:graph-decompositions:component-separators}
Once given a linked tidy modulator $X$ of $(G,t)$ on which Lemma \ref{lemma:partial-application:reduce-number-of-components} does not provide an application of Reduction \ref{red:reduce-number-of-components} we will want to obtain a set $Y \subseteq V(G) \setminus X$ of $\poly{t}$ size, such that any component in $G - (X \cup Y)$ will have few neighbours in $X$. Ideally we would obtain a situation similar to the outerplanar decomposition introduced by Donkers et al. \cite{Preprocessing_for_Outerplanar_Vertex_Deletion_An_Elementary_Kernel_of_Quartic_Size}, where each component in $G - (X \cup Y)$ only has a single neighbour in $X$. This would be desirable, because it results in limited interaction between a tidy modulator and the components outside of this tidy modulator, making it significantly easier to reduce components outside this tidy modulator.

Although we are unable to construct such a set $Y$, we are able to construct a set $Y$ that guarantees that any solution $S$ for $(G,t)$ will leave each component in $C \in \mathcal{C}(G - (X \cup Y))$ with at most one neighbour $x \in (N_G(C) \cap X) \setminus S$. I.e., the neighbourhood of $C$ in $X$ will be a limit-1 subset in $(G,t)$.

We will construct a set $Y$ with a stronger guarantee, being that $N_G(C) \cap X$ will remain being a limit-1 subset after the removal of $C$ from $G$. The reason for strengthening this requirement is that when we want to prove safeness of reduction rules (that modify $C$), that then we will need to prove that the existence of a solution $S'$ for $(G',t)$ implies the existence of a solution $S$ for $(G,t)$. By only providing the guarantee that $N_G(C) \cap X$ is a limit-1 subset for $(G,t)$ we would not guarantee that $N_G(C) \cap X$ is a limit-1 subset for $(G',t)$, which is an essential property in proving the correctness of our reduction rules. The requirement that $N_G(C) \cap X$ is a limit-1 subset for $(G - C,t)$ will allow us to prove that $N_G(C) \cap X$ is a limit-1 subset for $(G',t)$ as well.

\begin{definition}[component separator]\label{def:component-separator}
    A vertex set $Y \subseteq V(G) \setminus X$ is a component separator of $(G,t,X)$ when each $C \in \mathcal{C}(G - (X \cup Y))$ has that $N_G(C) \cap X$ is a limit-1 subset for $(G-C,t)$.
\end{definition}

\noindent
To construct a component separator we will want to find for each $\{x,y\} \subseteq X$ either a set that separates $x$ and $y$ in $G - (X \setminus \{x,y\})$ or a set that separates $G - X$ into multiple components of which any single one being removed will still yield $\{x,y\}$ being a limit-1 subset. We will find these sets for a pair $\{x,y\} \subseteq X$ using different methods dependent on whether $xy \in E(G)$ holds.

In case $xy \notin E(G)$ holds we have, due to having a linked modulator $X$, that there will be at most $t+2$ internally vertex-disjoint paths between $x$ and $y$. By Menger's theorem (see Observation \ref{obs:get-min-separator}) we then know that we can in $\polyG$ time find a separator of $x$ and $y$ of size $t+2$.

In case $xy \in E(G)$ we will run \textsc{Find-Disjoint-Modulator}($G - (X \setminus \{x,y\}), t, \{x,y\}$). This yields a set $Y \subseteq V(G) \setminus X$ with $|Y| \leq 3t+3$ such that either $\treewidth{(G - (X \setminus \{x,y\})) - Y} \leq 2$ or each $C \in \mathcal{C}((G - (X \setminus \{x,y\})) - Y)$ has $\{x,y\}$ being a limit-1 clique for $((G - (X \setminus \{x,y\})) - C,t)$. In case the latter holds we are done. In case the former holds we will prove that we can find a separator for $x$ and $y$ of size $\poly{t}$.

\begin{lemma}\label{lemma:component-separator:without-edge}
    Let $(G,t)$ be a problem instance, let $X$ be a linked modulator of $(G,t)$, and let $\{x,y\} \subseteq X$ with $xy \notin E(G)$. In $\polyG$ time we can construct a set $Y \subseteq V(G) \setminus X$ with $|Y| \leq t+2$ such that each component $C \in \mathcal{C}(G - (X \cup Y))$ has $N_G(C) \cap \{x,y\}$ being a limit-1 subset for $(G-C,t)$.
\end{lemma}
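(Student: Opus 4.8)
The plan is to take $Y$ to be a minimum-cardinality separator for $x$ and $y$ in the graph $G - (X \setminus \{x,y\})$. Since $xy \notin E(G)$ we also have $xy \notin E(G - (X \setminus \{x,y\}))$, so Observation~\ref{obs:get-min-separator} applies to this graph and yields such a set $Y$ in $\polyG$ time (constructing $G - (X \setminus \{x,y\})$ is itself polynomial). Because a separator by definition avoids the two vertices it separates, $Y$ contains neither $x$ nor $y$, and since $V(G - (X \setminus \{x,y\})) = (V(G) \setminus X) \cup \{x,y\}$ we get $Y \subseteq V(G) \setminus X$, as required.

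Next I would bound $|Y|$. By Menger's theorem $|Y|$ equals the maximum number of internally vertex-disjoint paths between $x$ and $y$ in $G - (X \setminus \{x,y\})$, which is at most the corresponding number in the supergraph $G$. Because $X$ is linked for $(G,t)$ and $xy \notin E(G)$, Definition~\ref{def:linked-set} bounds this latter quantity by $t+2$, so $|Y| \leq t+2$.

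Finally I would establish the separation property. Let $C \in \mathcal{C}(G - (X \cup Y))$; I claim $|N_G(C) \cap \{x,y\}| \leq 1$. Suppose instead both $x,y \in N_G(C)$. Since $C$ is connected and $V(C) \cap (X \cup Y) = \emptyset$, taking an edge from $x$ to a vertex of $C$, a path within $C$, and an edge from a vertex of $C$ to $y$ produces a walk between $x$ and $y$ all of whose vertices lie in $\{x,y\} \cup V(C) \subseteq \{x,y\} \cup (V(G) \setminus (X \cup Y))$; this walk avoids $Y$ and lives entirely in $G - (X \setminus \{x,y\})$, contradicting that $Y$ separates $x$ from $y$ there. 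Hence $Z := N_G(C) \cap \{x,y\}$ has size at most one, and since $x,y \notin V(C)$ we have $Z \subseteq V(G-C)$; then any solution $S$ for $(G-C,t)$ trivially satisfies $|Z \setminus S| \leq |Z| \leq 1$, so $Z$ is a limit-$1$ subset for $(G-C,t)$ by Definition~\ref{def:limit-m-subset}.

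The construction and all verifications are routine; the one point requiring care is to perform the separation inside $G - (X \setminus \{x,y\})$ rather than inside $G$, which is exactly what simultaneously guarantees $Y \subseteq V(G) \setminus X$ and, via the path argument, makes the ``$C$ adjacent to both $x$ and $y$'' case impossible, so that the limit-$1$ requirement holds for free rather than needing any solution-structure argument.
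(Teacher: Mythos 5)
Your proof is correct and follows essentially the same route as the paper: take a minimum $x$-$y$ separator in $G - (X \setminus \{x,y\})$, bound its size via Menger and the linkedness of $X$, and observe that a component adjacent to both $x$ and $y$ would contradict the separation. Your write-up is slightly more explicit about the walk argument and about why $|N_G(C)\cap\{x,y\}|\leq 1$ trivially gives a limit-1 subset, but there is no substantive difference.
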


\begin{proof}
    Because $xy \notin E(G[X])$ and $X$ is linked for $(G,t)$ there exist at most $t+2$ internally vertex-disjoint paths in $G$ between $x$ and $y$. Hence, there exist at most $t+2$ such paths in $G - (X \setminus \{x,y\})$. By Observation \ref{obs:get-min-separator} we can in $\polyG$ time obtain a separator $Y \subseteq V(G) \setminus X$ for $x$ and $y$ in $G - (X \setminus \{x,y\})$ with $|Y| \leq t+2$. It then holds for each $C \in \mathcal{C}(G - (X \cup Y))$ that $|N_G(C) \cap \{x,y\}| \leq 1$, which trivially yields $N_G(C) \cap \{x,y\}$ being a limit-1 subset for $(G-C,t)$.
\end{proof}

\begin{lemma}\label{lemma:component-separator:with-edge}
    Let $(G,t)$ be a problem instance, let $X$ be a linked modulator of $(G,t)$, and let $\{x,y\} \subseteq X$ with $xy \in E(G)$. In $\polyG$ time we can obtain a problem instance $(G',t)$ with $\twtwodeletion{G}{t} \Longleftrightarrow \twtwodeletion{G'}{t}$ and $|V(G')| < |V(G)|$ or construct a set $Y \subseteq V(G) \setminus X$ with $|Y| \leq 3t + 3 + \Cbound{t}{|X|+3t+3}$ such that each component $C \in \mathcal{C}(G - (X \cup Y))$ has $N_G(C) \cap \{x,y\}$ being a limit-1 subset for $(G-C,t)$.
\end{lemma}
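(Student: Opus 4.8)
My approach would be to reduce to the auxiliary graph $G_{xy} := G - (X \setminus \{x,y\})$, run \textsc{Find-Disjoint-Modulator} on it with the clique modulator $\{x,y\}$, and then branch on the two possible outcomes of Theorem \ref{theorem:find-disjoint-modulator}. First note the preconditions hold: $\treewidth{G_{xy} - \{x,y\}} = \treewidth{G - X} \le 2$, so $\{x,y\}$ is a modulator of $G_{xy}$; $G_{xy}[\{x,y\}]$ is a clique since $xy \in E(G)$; and $0 < 2 \le 3$. Hence in $\polyG$ time we obtain a set $Y_0 \subseteq V(G_{xy}) \setminus \{x,y\} = V(G) \setminus X$ with $|Y_0| \le 3t+3$ such that either (a) $\treewidth{G_{xy} - Y_0} \le 2$, or (b) every $C \in \mathcal{C}(G_{xy} - (\{x,y\} \cup Y_0))$ has $\{x,y\}$ a limit-$1$ clique for $(G_{xy} - C, t)$.

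\textbf{Case (b).} Here I would simply return $Y := Y_0$. Since $\{x,y\} \subseteq X$, we have $G - (X \cup Y_0) = G_{xy} - (\{x,y\} \cup Y_0)$, so each $C \in \mathcal{C}(G - (X \cup Y_0))$ has $\{x,y\}$ a limit-$1$ subset for $(G_{xy} - C, t)$; as $G_{xy} - C$ is a subgraph of $G - C$, Lemma \ref{lemma:limit-m-of-subgraph-to-supergraph} promotes this to a limit-$1$ subset for $(G-C,t)$, and then Lemma \ref{lemma:subset-of-limit-m-of-subset} gives that $N_G(C) \cap \{x,y\}$ is a limit-$1$ subset for $(G-C,t)$. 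As $|Y_0| \le 3t+3 \le 3t+3+\Cbound{t}{|X|+3t+3}$, this is the required set.

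\textbf{Case (a).} Now $H := G_{xy} - Y_0 = G - ((X \setminus \{x,y\}) \cup Y_0)$ has $\treewidth{H} \le 2$, and I would work with the modulator $M := X \cup Y_0$, noting $G - M = H - \{x,y\}$, $\treewidth{G-M}\le 2$, $|M| \le |X| + 3t+3$, and (using tidiness of $X$ together with Lemma \ref{lemma:minor-treewidth}, cf.\ Definition \ref{def:tidy-modulator}) that $M$ is a tidy modulator of $G$. If $|\mathcal{C}(G - M)| > \Cbound{t}{|M|}$, apply Lemma \ref{lemma:application:reduce-number-of-components} to obtain in polynomial time an equivalent instance with fewer vertices, taking the first alternative. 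Otherwise $|\mathcal{C}(G-M)| \le \Cbound{t}{|M|} \le \Cbound{t}{|X|+3t+3}$ (monotonicity of $\Cbound$ in its second argument). For each component $C \in \mathcal{C}(G - M)$ with $\{x,y\} \subseteq N_G(C)$ I would locate a single vertex $w_C \in V(C)$ whose removal from $C$ separates $N_G(x) \cap V(C)$ from $N_G(y) \cap V(C)$; such a vertex exists because otherwise Menger yields two vertex-disjoint paths between these two sets inside $C$, which together with $x$, $y$, and the edge $xy$ assemble into four pairwise-disjoint connected subgraphs with all six adjacencies, so Lemma \ref{lemma:describes-k4-minor} forces $\treewidth{G[V(C) \cup \{x,y\}]} = \treewidth{H[V(C)\cup\{x,y\}]} > 2$, contradicting $\treewidth{H}\le 2$. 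Such $w_C$ can be found in $\polyG$ time via a minimum separator computation on an auxiliary graph (Observation \ref{obs:get-min-separator}). Returning $Y := Y_0 \cup \{\, w_C : C\in\mathcal{C}(G-M),\ \{x,y\}\subseteq N_G(C)\,\}$ gives $|Y| \le (3t+3) + \Cbound{t}{|X|+3t+3}$; and each $C \in \mathcal{C}(G - (X\cup Y))$ sits either inside a component of $G-M$ adjacent to at most one of $x,y$, or inside a component of $C' - w_{C'}$ for some such $C'$ (again adjacent to at most one of $x,y$ by the choice of $w_{C'}$), so $N_G(C)\cap\{x,y\}$ has size at most $1$ and is thus trivially a limit-$1$ subset for $(G-C,t)$.

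\textbf{Main obstacle.} I expect the delicate point to be the $K_4$-minor argument in case (a): showing robustly that every component adjacent to both $x$ and $y$ admits a single separating vertex, including the subcase where $N_G(x) \cap N_G(y) \cap V(C) \ne \emptyset$ (which requires routing the needed connecting path through $C$), and encoding the set-to-set separation into an auxiliary graph on which Observation \ref{obs:get-min-separator} is applicable while $x$ and $y$ remain non-adjacent. A secondary point is checking that $M = X \cup Y_0$ is genuinely a tidy modulator, so that Lemma \ref{lemma:application:reduce-number-of-components} is available in the reduce branch; this is precisely where tidiness of $X$ is used.
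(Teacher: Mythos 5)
Your proposal follows essentially the same route as the paper: run \textsc{Find-Disjoint-Modulator} on $G - (X \setminus \{x,y\})$ with clique modulator $\{x,y\}$, return $Y_0$ directly in the limit-1-clique outcome, invoke Lemma~\ref{lemma:application:reduce-number-of-components} when $\mathcal{C}(G - (X \cup Y_0))$ is too large, and otherwise add one separating vertex per remaining component. The one place the paper is cleaner is the final step: it separates $x$ from $y$ in $G[V(D) \cup \{x,y\}] - xy$ via Observation~\ref{obs:get-min-separator} directly, which sidesteps both the auxiliary set-to-set construction and the $N_G(x)\cap N_G(y)\cap V(C)$ subcase you flag; and for the $K_4$-minor contradiction it takes two internally vertex-disjoint $x$--$y$ paths $P,Q$ and joins them with a shortest path $R$ in $D$ between $V(P)$ and $V(Q)$ before applying Lemma~\ref{lemma:describes-k4-minor} --- this $R$ is precisely the ``connecting path'' you correctly identify as the delicate point, since two vertex-disjoint paths inside $C$ need not be adjacent on their own. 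You are also right to observe that tidiness of $X$ is implicitly required for the reduce branch: the paper's proof asserts $X \cup Y'$ is a tidy modulator, which needs $X$ tidy, and the lemma is in fact only ever invoked with $X$ a linked tidy modulator.
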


\begin{proof}
    We run \textsc{Find-Disjoint-Modulator}($G - (X \setminus \{x,y\}), t, \{x,y\}$). By Theorem \ref{theorem:find-disjoint-modulator} we obtain in $\polyG$ time a set $Y' \subseteq V(G) \setminus X$ with $|Y'| \leq 3t+3$ such that either $\treewidth{(G - (X \setminus \{x,y\})) - Y'} \leq 2$ or every $C \in \mathcal{C}(G - (X \cup Y'))$ has $\{x,y\}$ being a limit-1 clique for $((G - (X \setminus \{x,y\}))-C,t)$.
    
    \begin{case}[$\treewidth{(G - (X \setminus \{x,y\})) - Y'} > 2$]
        Lemmas \ref{lemma:limit-m-of-subgraph-to-supergraph} and \ref{lemma:subset-of-limit-m-of-subset} prove that $Y = Y'$ suffices.
    \end{case}
    
    \begin{case}[$|\mathcal{C}(G - (X \cup Y'))| > \Cbound{t}{|X|+3t+3}$]
        Directly follows from Lemma \ref{lemma:application:reduce-number-of-components} and the observation that $X \cup Y'$ is a tidy modulator for $G$ of size at most $|X| + 3t + 3$.
    \end{case}
    
    \begin{case}[$\treewidth{(G - (X \setminus \{x,y\})) - Y'} \leq 2 \wedge |\mathcal{C}(G - (X \cup Y'))| \leq \Cbound{t}{|X|+3t+3}$]
        By Lemma \ref{lemma:minor-treewidth} we have that each $D \in \mathcal{C}(G - (X \cup Y'))$ has $\treewidth{G[V(D) \cup \{x,y\}]} \leq \treewidth{(G - (X \setminus \{x,y\})) - Y'} \leq 2$. We prove that there can not exist two internally vertex-disjoint paths between $x$ and $y$ in $G[V(D) \cup \{x,y\}] - xy$. Assume per contradiction that $P$ and $Q$ are two such paths. Because $D$ is connected there exists a path in $D$ between $V(P)$ and $V(Q)$. Let $R$ be a shortest path between $V(P)$ and $V(Q)$ in $D$. As shown in Figure \ref{fig:component-separator:with-edge}, we can apply Lemma \ref{lemma:describes-k4-minor} on $(P, (V(R) \setminus V(P)) \cup V(Q), x, y)$ to derive $\treewidth{G[V(D) \cup \{x,y\}]} > 2$, which is a contradiction. Hence, there can not exist two internally vertex-disjoint paths between $x$ and $y$ in $G[V(D) \cup \{x,y\}] - xy$.
        
        By Observation \ref{obs:get-min-separator} we can in $\polyG$ time obtain a set $Y_D \subseteq V(D)$ with $|Y_D| \leq 1$ that separates $x$ and $y$ in $G[V(D) \cup \{x,y\}] - xy$. Such a separator $Y_D$ has for each $C \in \mathcal{C}(D - Y_D)$ that $|N_G(C) \cap \{x,y\}| \leq 1$, which implies that $N_G(C) \cap \{x,y\}$ is a limit-1 subset for $(G-D,t)$.
    
        Let $Y = Y' \cup \bigcup_{D \in \mathcal{C}(G - (X \cup Y'))} Y_D$. We have $|Y| \leq 3t + 3 + \Cbound{t}{|X|+3t+3}$. Each $C \in \mathcal{C}(G - (X \cup Y))$ is a subgraph of some $D \in \mathcal{C}(G - (X \cup Y'))$, and for this component $D$ we then have $C \in \mathcal{C}(D - Y_D)$, which implies that $N_G(C) \cap \{x,y\}$ is a limit-1 subset for $(G-D,t)$. By Lemma \ref{lemma:limit-m-of-subgraph-to-supergraph} we then derive that $N_G(C) \cap \{x,y\}$ is a limit-1 subset for $(G-C,t)$. Hence, returning $Y$ suffices.
    \end{case}
    
    \noindent
    Because these cases cover all cases, correctness trivially follows.
\end{proof}

\begin{figure}[ht]
    \centering
    \includegraphics[scale=.9]{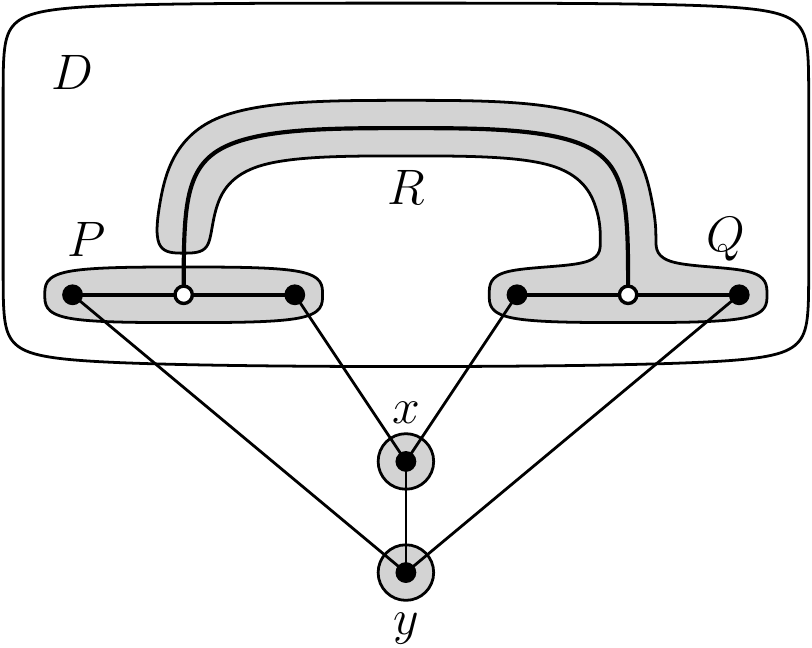}
    \caption{Two internally vertex-disjoint paths $P$ and $Q$ between $x$ and $y$ yields a $K_4$ minor}
    \label{fig:component-separator:with-edge}
\end{figure}

\begin{lemma}\label{lemma:get-component-separator}
    Let $(G,t)$ be a problem instance and let $X$ be a linked tidy modulator of $(G,t)$. In $\polyG$ time we can obtain a problem instance $(G',t)$ with $\twtwodeletion{G}{t} \Longleftrightarrow \twtwodeletion{G'}{t}$ and $|V(G')| < |V(G)|$ or construct a component separator $Y$ of $(G,t,X)$ with $|Y| \leq \binom{|X|}{2}(3t+3 + \Cbound{t}{|X|+3t+3})$.
\end{lemma}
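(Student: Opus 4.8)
The plan is to handle each pair $\{x,y\}\subseteq X$ separately and then take $Y$ to be the union of the resulting per-pair separators. For a pair with $xy\notin E(G)$ I would apply Lemma~\ref{lemma:component-separator:without-edge} to obtain in $\polyG$ time a set $Y_{x,y}\subseteq V(G)\setminus X$ with $|Y_{x,y}|\leq t+2$ such that every $C\in\mathcal{C}(G-(X\cup Y_{x,y}))$ has $N_G(C)\cap\{x,y\}$ being a limit-1 subset for $(G-C,t)$. For a pair with $xy\in E(G)$ I would apply Lemma~\ref{lemma:component-separator:with-edge}: this either already returns a strictly smaller equivalent instance $(G',t)$, in which case we are done with the whole lemma, or produces in $\polyG$ time a set $Y_{x,y}\subseteq V(G)\setminus X$ with $|Y_{x,y}|\leq 3t+3+\Cbound{t}{|X|+3t+3}$ enjoying the same limit-1 guarantee. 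The hypotheses of both auxiliary lemmas are met because $X$ is a linked tidy modulator of $(G,t)$, hence in particular a linked modulator. Set $Y=\bigcup_{\{x,y\}\subseteq X}Y_{x,y}$. Since there are $\binom{|X|}{2}$ pairs and each contributes at most $3t+3+\Cbound{t}{|X|+3t+3}$ vertices (note $3t+3+\Cbound{t}{|X|+3t+3}\geq t+2$), we get $|Y|\leq\binom{|X|}{2}\left(3t+3+\Cbound{t}{|X|+3t+3}\right)$, and the total running time is $\polyG$ since we perform $\binom{|X|}{2}$ steps each running in polynomial time.

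It remains to verify that $Y$ is a component separator of $(G,t,X)$. Fix $C\in\mathcal{C}(G-(X\cup Y))$; I would show that every $\{x,y\}\subseteq N_G(C)\cap X$ is a limit-1 subset for $(G-C,t)$. Since $Y_{x,y}\subseteq Y$, the connected graph $C$ (being a connected subgraph of $G-(X\cup Y)$, hence of $G-(X\cup Y_{x,y})$) is contained in a single component $D\in\mathcal{C}(G-(X\cup Y_{x,y}))$. As $x,y\in X$ lie outside $D$ and are adjacent to $C\subseteq D$, we have $\{x,y\}\subseteq N_G(D)$, so $N_G(D)\cap\{x,y\}=\{x,y\}$; by the guarantee of the applicable auxiliary lemma this set $\{x,y\}$ is a limit-1 subset for $(G-D,t)$. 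Because $V(D)\supseteq V(C)$, the graph $G-D$ is a subgraph of $G-C$, so Lemma~\ref{lemma:limit-m-of-subgraph-to-supergraph} yields that $\{x,y\}$ is a limit-1 subset for $(G-C,t)$.

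Having shown that every $2$-element subset of $N_G(C)\cap X$ is a limit-1 subset for $(G-C,t)$, Lemma~\ref{lemma:limit-m-subset-to-superset} applied with $m=1$ to the vertex set $N_G(C)\cap X$ gives that $N_G(C)\cap X$ itself is a limit-1 subset for $(G-C,t)$ (if $|N_G(C)\cap X|\leq 1$ this conclusion still holds, the hypothesis of that lemma being vacuous). As this holds for every $C\in\mathcal{C}(G-(X\cup Y))$, the set $Y$ is a component separator of $(G,t,X)$, completing the proof together with the size and time bounds above. No step is conceptually difficult; the only care needed is to track the subgraph/supergraph directions correctly when transporting limit-$m$ statements between $G-D$ and $G-C$ and between the sets $\{x,y\}$ and $N_G(C)\cap X$, which is where the (routine) effort goes.
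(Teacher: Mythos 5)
Your proof is correct and takes essentially the same route as the paper: invoke Lemmas~\ref{lemma:component-separator:without-edge} and~\ref{lemma:component-separator:with-edge} per pair, take the union, and lift the per-pair limit-1 guarantee from $(G-D,t)$ to $(G-C,t)$ via Lemma~\ref{lemma:limit-m-of-subgraph-to-supergraph} before applying Lemma~\ref{lemma:limit-m-subset-to-superset}. Your explicit observation that $\{x,y\}\subseteq N_G(D)$ forces $N_G(D)\cap\{x,y\}=\{x,y\}$, and your note on the degenerate case $|N_G(C)\cap X|\leq 1$, are both sound and in fact make the argument slightly more transparent than the paper's.
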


\begin{proof}
    From Lemmas \ref{lemma:component-separator:without-edge} and \ref{lemma:component-separator:with-edge} it follows that in $\polyG$ time we can obtain either a problem instance $(G',t)$ with $\twtwodeletion{G}{t} \Longleftrightarrow \twtwodeletion{G'}{t}$ and $|V(G')| < |V(G)|$, or for each $\{x,y\} \subseteq X$ we obtain a set $Y_{x,y} \subseteq V(G) \setminus X$ with $|Y_{x,y}| \leq 3t + 3 + \Cbound{t}{|X|+3t+3}$ such that each component $C \in \mathcal{C}(G - (X \cup Y_{x,y}))$ has $N_G(C) \cap \{x,y\}$ being a limit-1 subset for $(G-C,t)$. In case the former holds correctness trivially follows. Hence, we assume the latter.

    Let $Y = \bigcup_{\{x,y\} \subseteq X} Y_{x,y}$. We have $|Y| \leq \binom{|X|}{2}(3t+3+\Cbound{t}{|X|+3t+3})$. It remains to prove that each component $C \in \mathcal{C}(G - (X \cup Y))$ has $N_G(C) \cap X$ being a limit-1 subset for $(G-C,t)$. Let $C \in \mathcal{C}(G - (X \cup Y))$, let $\{x,y\} \subseteq N_G(C) \cap X$, and let $D \in \mathcal{C}(G - (X \cup Y_{x,y}))$ be the component of which $C$ is a subgraph. By Lemma \ref{lemma:limit-m-of-subgraph-to-supergraph} we have that $N_G(D) \cap \{x,y\}$ is a limit-1 subset for $(G-C,t)$. By Lemma \ref{lemma:limit-m-subset-to-superset} we then have that $N_G(D) \cap X$ is a limit-1 subset for $(G-C,t)$. Hence, we conclude that $Y$ is a component separator for $(G,t,X)$.
\end{proof}

\section{Small \textit{Y}-Neighbourhood Component Separator}
Similar to the work by Donkers et al. \cite{Preprocessing_for_Outerplanar_Vertex_Deletion_An_Elementary_Kernel_of_Quartic_Size}, we wish to simplify the interface between a connected components $C \in \mathcal{C}(G - (X \cup Y))$ and component separator $Y$. To this end we will add additional vertices to $Y$, such that every component $C \in \mathcal{C}(G - (X \cup Y))$ will have at most four neighbours in $Y$.

We will apply Lemma 3.11 from \cite{Preprocessing_for_Outerplanar_Vertex_Deletion_An_Elementary_Kernel_of_Quartic_Size} to obtain this extended set $Y'$. We note that Lemma 3.11 in \cite{Preprocessing_for_Outerplanar_Vertex_Deletion_An_Elementary_Kernel_of_Quartic_Size} states that $G$ must be an outerplanar graph. The only property of outerplanarity that their proof employs is that $G$ has treewidth at most two. Hence, the weaker assumption that $G$ is a graph with treewidth at most two still leads to a valid proof.

\begin{lemma}[Lemma 3.11 in \cite{Preprocessing_for_Outerplanar_Vertex_Deletion_An_Elementary_Kernel_of_Quartic_Size}]\label{lemma:take-LCA-closure}
    Let $G$ be a graph with $\treewidth{G} \leq 2$ and let $Z \subseteq V(G)$. We can in $\polyG$ time find a set $Z' \subseteq V(G)$ with $|Z'| \leq 6|Z|$ and $Z \subseteq Z'$, such that each component $C \in \mathcal{C}(G - Z')$ has $|N_G(C)| \leq 4$.
\end{lemma}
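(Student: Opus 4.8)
The plan is to carry out the standard least-common-ancestor (LCA) closure argument on a tree decomposition, with one extra twist that is needed to bring the constant down from $6$ to $4$.

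First, using Lemma~\ref{lemma:find-tree-decompositions} (and the hypothesis $\treewidth{G}\le 2$) obtain a width-$2$ tree decomposition $(T,\chi)$ of $G$, and then make it \emph{reduced}: while some edge $st\in E(T)$ has $\chi(s)\subseteq\chi(t)$, contract it, keeping bag $\chi(t)$. This is a $\polyG$-time preprocessing step, preserves being a width-$2$ tree decomposition of $G$, and afterwards every edge $st\in E(T)$ has $|\chi(s)\cap\chi(t)|\le 2$ (since neither of two adjacent bags contains the other and both have at most $3$ vertices). Root $T$ arbitrarily. For each $z\in Z$ pick a node $t_z$ with $z\in\chi(t_z)$, and let $M=\{t_z : z\in Z\}$, so $|M|\le|Z|$. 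Let $\widehat M$ be the LCA-closure of $M$ in the rooted tree $T$ (the smallest superset of $M$ closed under taking LCAs of pairs), computable in $\polyG$ time. I would then invoke the two standard properties of LCA-closure, stating and justifying them explicitly: $|\widehat M|\le 2|M|$, and every connected component $D$ of $T-\widehat M$ has at most two neighbours in $\widehat M$ in the tree $T$ --- the latter because $D$ has at most one neighbour towards the root (the parent of the top node of $D$) and at most one neighbour among children of $D$-nodes, as two such would have their LCA inside $D$, contradicting closure. Finally set $Z'=\chi(\widehat M)=\bigcup_{t\in\widehat M}\chi(t)$; then $Z\subseteq Z'$ because $z\in\chi(t_z)$, and $|Z'|\le 3|\widehat M|\le 6|Z|$.

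It remains to bound $N_G(C)$ for a component $C\in\mathcal C(G-Z')$. Since $C$ is a connected subgraph of $G$, $T[\bchi(C)]$ is connected (the quoted Lemma 14.10 of \cite{Kernelization_Theory_of_Parameterized_Preprocessing}), and $\bchi(C)$ is disjoint from $\widehat M$ because each bag of a $\widehat M$-node is contained in $Z'$ and hence disjoint from $V(C)$. Thus $T[\bchi(C)]$ lies inside a single component $D$ of $T-\widehat M$, whose boundary consists of at most two tree-edges, say $\{r,a\}$ and $\{b',b\}$ with $r,b'\in V(D)$ and $a,b\in\widehat M$. Now take $v\in N_G(C)$. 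Then $v\notin V(C)$, so $v\in Z'$, i.e.\ $v\in\chi(s)$ for some $s\in\widehat M$; and $v$ has a neighbour $c\in V(C)$, so some bag $\chi(t)$ with $t\in\bchi(C)\subseteq V(D)$ contains both $v$ and $c$. Since $\bchi(v)$ is connected (Definition~\ref{def:tree-decomposition}) and meets both $V(D)$ and $\widehat M$, it contains a path from $t$ to $s$ that must leave $D$ through one of the two boundary edges, so both endpoints of that edge lie in $\bchi(v)$; hence $v\in\chi(r)\cap\chi(a)$ or $v\in\chi(b')\cap\chi(b)$. Each of these two sets has size at most $2$ because the decomposition is reduced, so $|N_G(C)|\le 4$. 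All steps run in $\polyG$ time.

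The step I expect to require the most care is precisely getting this constant right: the naive closure argument confines $N_G(C)$ to the union of the (at most) two boundary bags, giving only $|N_G(C)|\le 6$; the improvement to $4$ relies on first reducing the tree decomposition so that adjacent bags intersect in at most $2$ vertices, together with the observation that each $v\in N_G(C)$ is pinned not to a whole boundary bag but to the intersection of two adjacent bags straddling a boundary edge. The LCA-closure facts themselves are routine but should be proved rather than merely cited, since the excerpt does not provide them.
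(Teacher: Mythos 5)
The paper does not prove this lemma itself; it imports it as a black box (citing Lemma 3.11 of Donkers et al.), so there is no in-paper proof to compare against. Your reconstruction is correct and is the standard argument one would give: reduce the width-$2$ tree decomposition so that adjacent bags share at most two vertices, take the LCA closure $\widehat M$ of a set of representative nodes for $Z$, set $Z'=\chi(\widehat M)$, and then use the facts that $T[\bchi(C)]$ lies inside one component $D$ of $T-\widehat M$, that $D$ has at most two boundary edges into $\widehat M$, and that every $v\in N_G(C)$ must lie in the adhesion $\chi(d)\cap\chi(m)$ (of size at most $2$) of one of those two edges, not merely in a whole bag --- which is exactly the observation that turns the naive bound of $6$ into $4$. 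All the auxiliary facts you invoke (termination and correctness of the bag-contraction step, $|\widehat M|\le 2|M|$, at most two boundary edges per component because a third would force an LCA inside $D$, uniqueness of the $T$-path forcing exactly one boundary crossing) check out.
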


\begin{lemma}\label{lemma:get-small-Y-neighbourhood-component-separator}
    Let $(G,t)$ be a problem instance, let $X$ be a tidy modulator of $G$, and let $Y$ be a component separator of $(G,t,X)$. We can in $\polyG$ time obtain a component separator $Y'$ of $(G,t,X)$ with $|Y'| \leq 6|Y|$, such that each $C \in \mathcal{C}(G - (X \cup Y'))$ has $|N_G(C) \cap Y'| \leq 4$.
\end{lemma}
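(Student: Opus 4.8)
The plan is to simply feed the graph $G - X$ into Lemma~\ref{lemma:take-LCA-closure}. Since $X$ is a (tidy) modulator of $G$ we have $\treewidth{G - X} \leq 2$, and $Y \subseteq V(G) \setminus X = V(G - X)$, so Lemma~\ref{lemma:take-LCA-closure} applies with $Z = Y$. This yields in $\polyG$ time a set $Y' \subseteq V(G - X) = V(G) \setminus X$ with $Y \subseteq Y'$ and $|Y'| \leq 6|Y|$, such that every component $C \in \mathcal{C}((G - X) - Y') = \mathcal{C}(G - (X \cup Y'))$ satisfies $|N_{G-X}(C)| \leq 4$.

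First I would derive the $Y'$-neighbourhood bound. Because $Y' \cap X = \emptyset$, for any $C \in \mathcal{C}(G - (X \cup Y'))$ we have $N_G(C) \cap Y' \subseteq N_G(C) \setminus X = N_{G-X}(C)$, so $|N_G(C) \cap Y'| \leq |N_{G-X}(C)| \leq 4$, as required.

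Second I would check that $Y'$ remains a component separator of $(G,t,X)$. Since $Y \subseteq Y'$, each $C \in \mathcal{C}(G - (X \cup Y'))$ is a subgraph of a unique $D \in \mathcal{C}(G - (X \cup Y))$; from $V(C) \subseteq V(D)$ it follows that $N_G(C) \cap X \subseteq N_G(D) \cap X$ and that $G - D$ is an induced subgraph of $G - C$. As $Y$ is a component separator, $N_G(D) \cap X$ is a limit-1 subset for $(G - D, t)$; by Lemma~\ref{lemma:subset-of-limit-m-of-subset} its subset $N_G(C) \cap X$ is also a limit-1 subset for $(G - D, t)$; and by Lemma~\ref{lemma:limit-m-of-subgraph-to-supergraph}, applied to the subgraph $G - D$ of $G - C$, we conclude that $N_G(C) \cap X$ is a limit-1 subset for $(G - C, t)$. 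Hence $Y'$ satisfies Definition~\ref{def:component-separator}.

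There is no real obstacle here: the entire content is in Lemma~\ref{lemma:take-LCA-closure}, and the running time is polynomial because that lemma runs in polynomial time and the rest is bookkeeping. The only point requiring care is orienting the two limit-$m$ closure lemmas correctly — shrinking the vertex set $N_G(D)\cap X$ down to $N_G(C)\cap X$ via Lemma~\ref{lemma:subset-of-limit-m-of-subset}, and enlarging the graph from $G-D$ to $G-C$ via Lemma~\ref{lemma:limit-m-of-subgraph-to-supergraph} — so that the final statement is about $(G - C, t)$ rather than $(G - D, t)$.
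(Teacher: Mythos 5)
Your proposal is correct and takes essentially the same route as the paper: apply Lemma~\ref{lemma:take-LCA-closure} to $G-X$ with $Z=Y$, then lift the limit-1 property from the coarser components of $G-(X\cup Y)$ to the refined components of $G-(X\cup Y')$ via Lemmas~\ref{lemma:subset-of-limit-m-of-subset} and~\ref{lemma:limit-m-of-subgraph-to-supergraph}. You are a bit more explicit than the paper about translating $|N_{G-X}(C)|\leq 4$ into $|N_G(C)\cap Y'|\leq 4$ and about the order in which the two limit-$m$ lemmas are invoked, but the argument is the same.
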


\begin{proof}
    By Lemma \ref{lemma:take-LCA-closure} we can in $\polyG$ time obtain a set $Y' \subseteq V(G) \setminus X$ with $|Y'| \leq 6|Y|$ and $Y \subseteq Y'$, such that each component $C \in \mathcal{C}(G - (X \cup Y'))$ has $|N_G(C) \cap Y'| \leq 4$. It remains to prove that $Y'$ is a component separator of $(G,t,X)$. Let $C' \in \mathcal{C}(G - (X \cup Y'))$ and let $C \in \mathcal{C}(G - (X \cup Y))$ such that $C'$ is a subgraph of $C$. Because $V(C') \subseteq V(C)$ and $V(C) \cap X = \emptyset$ we have $N_G(C') \cap X \subseteq N_G(C) \cap X$. From Lemmas \ref{lemma:limit-m-of-subgraph-to-supergraph} and \ref{lemma:subset-of-limit-m-of-subset} it follows that $N_G(C') \cap X$ is a limit-1 subset for $(G - C',t)$. Hence, $Y'$ is a component separator of $(G,t,X)$.
\end{proof}

\section{Combining Results}
We finally combine all results from this section into two theorems.

\begin{theorem}\label{theorem:get-linked-tidy-modulator}
    Let $(G,t)$ be a non-trivial problem instance and let $\mathcal{A}$ be a polynomial time $\varepsilon$-approximation algorithm for the \textsc{Treewidth-2 Vertex Deletion} problem. In $\polyG$ time we can obtain one of the following:
    \begin{enumerate}
        \item A problem instance $(G',t')$ with $\twtwodeletion{G}{t} \Longleftrightarrow \twtwodeletion{G'}{t'}$ and $|V(G')| < |V(G)|$ and $t' \leq t$
        \item A problem instance $(G',t)$ and a linked tidy modulator $X$ for $(G',t)$ with $\twtwodeletion{G}{t} \Longleftrightarrow \twtwodeletion{G'}{t}$ and $V(G') = V(G)$ and $|X| \leq \varepsilon t (3t+4)$
    \end{enumerate}
\end{theorem}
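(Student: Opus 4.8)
The plan is to compose the three engines developed in this chapter --- the approximation step (Lemma~\ref{lemma:apply-approximation-algorithm}), the tidying step (Lemma~\ref{lemma:find-tidy-modulator}), and the linking step (Lemma~\ref{lemma:get-linked-modulator}) --- routing the ``shrink the instance'' branch of each of the first two into outcome~1 and carrying the successful branches forward so that the end of the chain yields outcome~2.

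First I would invoke Lemma~\ref{lemma:apply-approximation-algorithm} on $(G,t)$ with the given $\varepsilon$-approximation algorithm $\mathcal{A}$. Either it returns a problem instance $(G',t')$ with $\twtwodeletion{G}{t} \Longleftrightarrow \twtwodeletion{G'}{t'}$, $|V(G')| \le 4$, and $t' \le t$ --- and since $(G,t)$ is non-trivial, Definition~\ref{def:trivial-problem-instance} gives $|V(G)| > 4$, so $|V(G')| < |V(G)|$ and outcome~1 holds --- or it returns a modulator $X$ of $G$ with $t < |X| \le \varepsilon t$. In the latter case $(G,t)$ is still the unchanged, non-trivial instance, so I would apply Lemma~\ref{lemma:find-tidy-modulator} to $(G,t)$ and $X$. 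This either applies Reduction~\ref{red:remove-limit-0-subset}, which by its definition together with Lemma~\ref{lemma:safeness:remove-limit-0-subset} removes a non-empty limit-$0$ subset and so produces a problem instance $(G',t')$ with $\twtwodeletion{G}{t} \Longleftrightarrow \twtwodeletion{G'}{t'}$, $|V(G')| < |V(G)|$, and $t' \le t$ --- again outcome~1 --- or it yields a tidy modulator $Y$ of $G$ with $|Y| \le (3t+4)|X| \le (3t+4)\varepsilon t = \varepsilon t(3t+4)$.

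In this last case I would feed $Y$ into Lemma~\ref{lemma:get-linked-modulator}, obtaining a problem instance $(G',t)$ with $V(G') = V(G)$ and $\twtwodeletion{G}{t} \Longleftrightarrow \twtwodeletion{G'}{t}$ in which $Y$ is a linked tidy modulator of $(G',t)$; renaming $X := Y$ yields $|X| \le \varepsilon t(3t+4)$, which is exactly outcome~2. For the running time, each of the three invoked procedures runs in time polynomial in the size of its input graph, every input graph along the chain has at most $|V(G)|$ vertices, and only a constant number of invocations occur, so the combined routine runs in $\polyG$ time. I do not anticipate a genuine obstacle here: the statement is essentially a wiring diagram over the chapter's lemmas, so the only care needed is bookkeeping --- checking that the precondition of each lemma holds along the chain (in particular that $(G,t)$ is still non-trivial when Lemma~\ref{lemma:find-tidy-modulator} is invoked, and that $Y$ is a genuine tidy modulator of $G$ before it is passed to Lemma~\ref{lemma:get-linked-modulator}), and confirming that the size bound $(3t+4)|X| \le \varepsilon t(3t+4)$ matches the claimed value.
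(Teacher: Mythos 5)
Your proposal is correct and follows exactly the same chain as the paper's own proof: Lemma~\ref{lemma:apply-approximation-algorithm}, then Lemma~\ref{lemma:find-tidy-modulator}, then Lemma~\ref{lemma:get-linked-modulator}, with the shrink-branches of the first two routed into outcome~1 and the success branch carried into outcome~2. You even state the case analysis more carefully than the paper's text, which appears to have accidentally swapped ``former'' and ``latter'' when dispatching the two outcomes of Lemma~\ref{lemma:apply-approximation-algorithm}.
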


\begin{proof}
    By Lemma \ref{lemma:apply-approximation-algorithm} we can in $\polyG$ time obtain a modulator $X$ of $G$ with $t < |X| \leq \varepsilon t$ or a problem instance $(G',t')$ with $\twtwodeletion{G'}{t'} \Longleftrightarrow \twtwodeletion{G}{t}$ and $|V(G')| \leq 4 \wedge t' \leq t$. In case the former holds correctness immediately follows from Definition \ref{def:trivial-problem-instance}. Hence we assume the latter.
    
    By Lemma \ref{lemma:find-tidy-modulator} we can in $\polyG$ time apply Reduction \ref{red:remove-limit-0-subset} or obtain a tidy modulator $X'$ of $G$ with $|X'| \leq (3t+4)|X| \leq \varepsilon t (3t+4)$. In case the former holds correctness immediately follows from the definition of Reduction \ref{red:remove-limit-0-subset} and Lemma \ref{lemma:safeness:remove-limit-0-subset}. In case the latter holds correctness immediately follows from Lemma \ref{lemma:get-linked-modulator}.
\end{proof}

\noindent
Similar to the definition of $\nbound{C}$ we define a function $\nbound{Y} : \nat^2 \rightarrow \nat$ such that any component separator $Y$ of $(G,t,X)$ obtained by our method will have $|Y| \leq \Ybound{t}{|X|}$.

\begin{definition}\label{def:Ybound}
    \[\Ybound{t}{x} = 6\binom{x}{2}(3t+3 + \Cbound{t}{x+3t+3})\]
\end{definition}

\begin{theorem}\label{theorem:get-component-separator}
    Let $(G,t)$ be a non-trivial problem instance and let $X$ be a linked tidy modulator for $(G,t)$. In $\polyG$ time we can obtain one of the following:
    \begin{enumerate}
        \item A problem instance $(G',t)$ with $\twtwodeletion{G}{t} \Longleftrightarrow \twtwodeletion{G'}{t}$ and $|V(G')| < |V(G)|$
        \item A component separator $Y$ for $(G,t,X)$ with $|Y| \leq \Ybound{t}{|X|}$, such that $|\mathcal{C}(G - (X \cup Y))| \leq \Cbound{t}{|X| + |Y|}$ and each $C \in \mathcal{C}(G - (X \cup Y))$ has $|N_G(C) \cap Y| \leq 4$
    \end{enumerate}
\end{theorem}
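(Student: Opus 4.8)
The statement is essentially a packaging of the previous three sections, so I would prove Theorem~\ref{theorem:get-component-separator} by chaining the lemmas already established. First I would apply Lemma~\ref{lemma:get-component-separator} to $(G,t)$ and the linked tidy modulator $X$. This runs in $\polyG$ time and yields one of two outcomes: either a problem instance $(G',t)$ with $\twtwodeletion{G}{t} \Longleftrightarrow \twtwodeletion{G'}{t}$ and $|V(G')| < |V(G)|$, in which case we are immediately in case~1 of the theorem; or a component separator $Y_0$ of $(G,t,X)$ with $|Y_0| \leq \binom{|X|}{2}(3t+3 + \Cbound{t}{|X|+3t+3})$. In the latter case I would proceed to refine $Y_0$.

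Next, starting from the component separator $Y_0$, I would apply Lemma~\ref{lemma:get-small-Y-neighbourhood-component-separator} (which needs only that $X$ is a tidy modulator, which it is, being linked tidy) to obtain in $\polyG$ time a component separator $Y$ of $(G,t,X)$ with $|Y| \leq 6|Y_0| \leq 6\binom{|X|}{2}(3t+3 + \Cbound{t}{|X|+3t+3}) = \Ybound{t}{|X|}$ by Definition~\ref{def:Ybound}, such that each $C \in \mathcal{C}(G - (X \cup Y))$ has $|N_G(C) \cap Y| \leq 4$. This handles the size bound on $Y$ and the bounded $Y$-neighbourhood condition in case~2.

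The remaining obligation is the bound $|\mathcal{C}(G - (X \cup Y))| \leq \Cbound{t}{|X| + |Y|}$. Here I would observe that $X \cup Y$ is a tidy modulator of $G$: indeed $X$ is a tidy modulator, $Y \subseteq V(G) \setminus X$, and for every $z \in X \cup Y$ Lemma~\ref{lemma:minor-treewidth} gives $\treewidth{G - ((X \cup Y) \setminus \{z\})} \leq \treewidth{G - (X \setminus \{z\})} \leq 2$ when $z \in X$ (using tidiness of $X$) and $\treewidth{G - ((X \cup Y) \setminus \{z\})} \leq \treewidth{G - X} \leq 2$ when $z \in Y$. Now I would check whether $|\mathcal{C}(G - (X \cup Y))| > \Cbound{t}{|X| + |Y|}$; if so, Lemma~\ref{lemma:application:reduce-number-of-components} applied to $(G,t)$ with the tidy modulator $X \cup Y$ gives in $\polyG$ time a problem instance $(G',t)$ with $\twtwodeletion{G}{t} \Longleftrightarrow \twtwodeletion{G'}{t}$ and $|V(G')| < |V(G)|$, landing us in case~1. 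Otherwise $|\mathcal{C}(G - (X \cup Y))| \leq \Cbound{t}{|X| + |Y|}$ holds directly, and $Y$ satisfies all requirements of case~2.

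The argument is almost entirely bookkeeping, so I do not expect a genuine obstacle; the one point requiring a moment's care is verifying that $X \cup Y$ is still a tidy modulator after the $Y$-neighbourhood refinement (so that Lemma~\ref{lemma:application:reduce-number-of-components} is applicable), and confirming that the final component-count test does not interfere with the previously obtained $|N_G(C) \cap Y| \leq 4$ property — but since that test either triggers a vertex-reducing reduction (case~1) or simply holds as stated (case~2), there is nothing to reconcile. One should also note that all invocations preserve $V(G') = V(G)$ or strictly decrease $|V(G')|$, so the polynomial running-time claim follows from the $\polyG$ bounds on each of Lemmas~\ref{lemma:get-component-separator}, \ref{lemma:get-small-Y-neighbourhood-component-separator}, and \ref{lemma:application:reduce-number-of-components}.
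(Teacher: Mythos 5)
Your proposal follows exactly the chain used in the paper's own proof: invoke Lemma~\ref{lemma:get-component-separator}, refine via Lemma~\ref{lemma:get-small-Y-neighbourhood-component-separator}, then fall back to Lemma~\ref{lemma:application:reduce-number-of-components} using the observation that $X \cup Y$ remains a tidy modulator if the component count exceeds $\Cbound{t}{|X|+|Y|}$. The only difference is that you spell out the tidiness of $X \cup Y$ via Lemma~\ref{lemma:minor-treewidth} rather than leaving it as an observation, which is a harmless elaboration.
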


\begin{proof}
    By Lemma \ref{lemma:get-component-separator} we can in $\polyG$ time obtain a problem instance $(G',t)$ with $\twtwodeletion{G}{t} \Longleftrightarrow \twtwodeletion{G'}{t}$ and $|V(G')| < |V(G)|$ or construct a component separator $Y$ of $(G,t,X)$ with $|Y| \leq \binom{|X|}{2}(3t+3 + \Cbound{t}{|X|+3t+3})$. In case the former holds correctness trivially follows. Hence we assume the latter.
    
    By Lemma \ref{lemma:get-small-Y-neighbourhood-component-separator} we can in $\polyG$ time obtain a component separator $Y'$ with $|Y'| \leq 6|Y| \leq \Ybound{t}{|X|}$ and for which each $C \in \mathcal{C}(G - (X \cup Y'))$ has $|N_G(C) \cap Y'| \leq 4$. In case $|\mathcal{C}(G - (X \cup Y'))| > \Cbound{t}{|X| + |Y'|}$ correctness directly follows from Lemma \ref{lemma:application:reduce-number-of-components} and the observation that $X \cup Y'$ is a tidy modulator of $G$. Otherwise, returning $Y'$ suffices.
\end{proof}

\noindent
Theorems \ref{theorem:get-linked-tidy-modulator} and \ref{theorem:get-component-separator} now provide us with a method to decompose our graph $G$ into a linked tidy modulator $X$ for $(G,t)$, a component separator $Y$ for $(G,t,X)$, and a bounded set of components $\mathcal{C}(G - (X \cup Y))$ in which each component $C$ has $|N_G(C) \cap Y| \leq 4$. Because the number of components in $G - (X \cup Y)$ and the sizes $X$ and $Y$ are bounded by a function over $t$, it remains to reduce components $C \in \mathcal{C}(G - (X \cup Y))$ until these are of size bounded by a function over $t$ as well. Chapters \ref{ch:reducing-biconnected-subgraphs} and \ref{ch:reducing-block-cut-trees} will introduce methods to achieve this.

As indicated in Section \ref{sec:graph-decompositions:pre-separation:add-necessary-edge}, we only wish to apply Reduction \ref{red:add-necessary-edge} on vertex pairs in a tidy modulator $X$ for $(G,t)$, to avoid undoing progress made by reduction rules introduced in Chapters \ref{ch:reducing-biconnected-subgraphs} and \ref{ch:reducing-block-cut-trees}. Because our method of obtaining a component separator $Y$ for $(G,t,X)$ is dependent on $X$ being a linked tidy modulator for $(G,t)$, we have that reductions that do not decrease the number of vertices in $G$ must ensure that $X$ remains being a linked tidy modulator for reduced problem instance $(G',t)$. This, because otherwise we would need acquire a new linked tidy modulator for $(G',t)$, during which process previously made progress could be undone. As such, all reductions introduced in Chapters \ref{ch:reducing-biconnected-subgraphs} and \ref{ch:reducing-block-cut-trees} that remove only edges from $G$ will only be allowed to be applied on edges not in $G[X]$. Reductions that decrease the number of vertices in $G$ do not need to adhere to this constraint, because the process of obtaining a tidy modulator will never increase the number of vertices, and hence can not undo progress made by these reduction rules.

In Chapter \ref{ch:reducing-biconnected-subgraphs} we will focus on reducing biconnected subgraphs of $G - X$ that are large with respect to $|X|$. During this chapter we will not make use of component separators, nor the property that $X$ is a linked for $(G,t)$. Nevertheless, we still need to adhere to previously posed constraint, because the reductions introduced in Chapter \ref{ch:reducing-biconnected-subgraphs} will be applied in Chapters \ref{ch:reducing-block-cut-trees} and \ref{ch:combining-results} as well, in which case $X$ should remain being linked.

\chapter{Reducing Biconnected Subgraphs}\label{ch:reducing-biconnected-subgraphs}
Within this section we will focus on reducing large biconnected induced subgraphs $B$ in $G - X$, where $X$ will be a tidy modulator of $G$. The essence of our method is that we consider a tree decomposition of $B$, and in this tree decomposition find a large linear subtree (i.e. a subtree with two leaves). In this large linear subtree we will find a subtree, such that the vertices contained in all bags of this subtree do not have neighbours outside of $B$. The `structure' present within the graph induced by vertices from such bags is rather limited, which allows us to define reduction rules.

We note that a linear subtree simply corresponds to a path in a tree decomposition. The main reason why we will refer to them as linear subtrees, opposed to paths, is that in our search for these linear subtrees we will recurse on (non-linear) subtrees. By considering paths to be a type of tree it will allow us to express the same results using more compact and fewer definitions.

The tree decompositions that we will work with will have an additional smoothness property, as defined by Bodlaender \cite{Bodlaender_linear-time_algorithm_for_finding_tree-decomposition_of_small_treewidth}.

\begin{definition}[smooth tree decomposition]\label{def:smooth-tree-decomposition}
    A tree decomposition $(T,\chi)$ of width $\eta$ is smooth if for all $b \in V(T)$ it holds that $|\chi(b)| = \eta+1$ and each $ab \in E(T)$ has $|\chi(a) \cap \chi(b)| = \eta$.
\end{definition}

\begin{lemma}\label{lemma:get-smooth-tree-decomposition}
    Let $G$ be a graph with $\treewidth{G} \leq 2$ and $|V(G)| \geq 3$. We can in $\polyG$ time obtain a smooth tree decomposition of $G$ of width 2.
\end{lemma}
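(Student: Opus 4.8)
The plan is to start from an arbitrary width-$2$ tree decomposition of $G$ and massage it into a smooth one by purely local surgery, following the classical construction of Bodlaender \cite{Bodlaender_linear-time_algorithm_for_finding_tree-decomposition_of_small_treewidth}. First I would invoke Lemma \ref{lemma:find-tree-decompositions}: since $\treewidth{G} \leq 2$ this yields in $\polyG$ time a width-$2$ tree decomposition $(T,\chi)$ of $G$, and by the linear-time algorithm underlying that lemma one may assume $|V(T)| = O(|V(G)|)$. The remaining work is to enforce the two conditions of Definition \ref{def:smooth-tree-decomposition}: every bag has size exactly $3$, and every tree-edge joins bags intersecting in exactly $2$ vertices.

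To enforce the first condition I would interleave two local rules while maintaining the axioms of Definition \ref{def:tree-decomposition}. Rule (i): if $ab \in E(T)$ with $\chi(a) \subseteq \chi(b)$, contract the edge by deleting $a$ and reattaching its other neighbours to $b$; one checks directly that all three axioms survive (the critical point being that a vertex of $\chi(a)$ also lies in $\chi(b)$, so axiom~3 is preserved) and that $|V(T)|$ strictly drops. Rule (ii): if some bag has $|\chi(b)| < 3$, then since $|V(G)| \geq 3$ and a one-node decomposition would force $\chi(b) = V(G)$ of size exactly $3$, the tree has at least two nodes, so $b$ has a neighbour $a$; after exhaustively applying Rule~(i) we have $\chi(a) \not\subseteq \chi(b)$, so we may pick $v \in \chi(a)\setminus\chi(b)$ and add $v$ to $\chi(b)$ --- axiom~3 for $v$ holds because $b$ is adjacent to $a$ and $v \in \chi(a)$. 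This terminates, because the pair $\bigl(|V(T)|,\ \sum_{b \in V(T)}(3-|\chi(b)|)\bigr)$ strictly decreases in lexicographic order at each step while both coordinates stay non-negative and polynomially bounded. When neither rule applies, every bag has size exactly $3$ and no two adjacent bags are nested (in particular none are equal), so every edge $ab$ now has $|\chi(a)\cap\chi(b)| \leq 2$.

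For the second condition, take any $ab \in E(T)$ with $k := |\chi(a)\cap\chi(b)| \leq 1$. I would subdivide this edge by inserting a path of $2-k$ new nodes whose bags interpolate between $\chi(a)$ and $\chi(b)$: starting from $\chi(a)$, repeatedly replace one vertex not in $\chi(b)$ by one vertex of $\chi(b)$ not yet in the current bag, always keeping the $k$ common vertices, until the bag equals $\chi(b)$. Every intermediate bag has size $3$, consecutive bags along the inserted path meet in exactly $2$ vertices, and the extremal ones meet $\chi(a)$, resp.\ $\chi(b)$, in exactly $2$ vertices. Checking that the outcome is still a tree decomposition reduces to axiom~3 for the vertices occurring in the new bags: a vertex common to $\chi(a)$ and $\chi(b)$ already lay in a subtree spanning the edge $ab$ and remains connected through the inserted path; a vertex carried over from $\chi(a)\setminus\chi(b)$, resp.\ introduced from $\chi(b)\setminus\chi(a)$, stays connected because each inserted bag holding it is adjacent to a bag holding it, ultimately reaching $a$, resp.\ $b$; all other vertices are untouched. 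Since the new internal edges already intersect in exactly $2$ vertices they are never reprocessed, so this phase inserts at most $2|E(T)|$ nodes and runs in $\polyG$ time. The result satisfies both clauses of Definition \ref{def:smooth-tree-decomposition} and still has width $2$.

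I expect the main obstacle to be pure bookkeeping rather than any conceptual difficulty: one must verify, case by case, that each of the three local surgeries (contraction, bag enlargement, edge subdivision) preserves axiom~3 of Definition \ref{def:tree-decomposition}, and pin down the termination and size bounds carefully enough that the overall running time is genuinely polynomial in $|G|$. The fiddly points are the degenerate configurations --- $|V(T)| = 1$, and an enlargement step that re-creates a nested pair of adjacent bags and so triggers another contraction --- which is exactly why the rules in the first phase must be interleaved and the lexicographic potential argument used.
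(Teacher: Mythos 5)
Your proof is correct and takes the same route as the paper: obtain a width-$2$ tree decomposition via Lemma \ref{lemma:find-tree-decompositions}, then smooth it with the local surgery due to Bodlaender \cite{Bodlaender_linear-time_algorithm_for_finding_tree-decomposition_of_small_treewidth}. The paper simply cites Bodlaender's observation as a black box, whereas you spell out the contraction/enlargement/subdivision rules and the lexicographic termination argument in full; your version is therefore a correct, self-contained expansion of the one-line proof in the text.
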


\begin{proof}
    Directly follows from Lemma \ref{lemma:find-tree-decompositions} and the observation by Bodlaender \cite{Bodlaender_linear-time_algorithm_for_finding_tree-decomposition_of_small_treewidth} that any tree decomposition can in polynomial time be transformed into a smooth tree decomposition of equal width.
\end{proof}

\noindent
We state a slightly modified version of Lemma 2.2 in \cite{Bodlaender_linear-time_algorithm_for_finding_tree-decomposition_of_small_treewidth}, which also measures the number of vertices contained in bags of subtrees of $T$ opposed to only the complete tree $T$.

\begin{lemma}\label{lemma:size=of-smooth-tree-decomposition}
    Let $(T,\chi)$ be a smooth tree decomposition of width $\eta$ of a graph $G$ and let $T'$ be a non-empty subtree of $T$. Then $|V(T')| = |\chi(T')| - \eta$.
\end{lemma}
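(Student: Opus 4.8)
The statement to prove is Lemma~\ref{lemma:size=of-smooth-tree-decomposition}: for a smooth width-$\eta$ tree decomposition $(T,\chi)$ of $G$ and a non-empty subtree $T'$ of $T$, we have $|V(T')| = |\chi(T')| - \eta$. The plan is to proceed by induction on $|V(T')|$. In the base case $|V(T')| = 1$, so $T'$ consists of a single node $b$; by smoothness $|\chi(b)| = \eta + 1$, hence $|\chi(T')| = \eta + 1 = |V(T')| + \eta$, as required.

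For the inductive step, suppose $|V(T')| \geq 2$. Pick a leaf $\ell$ of $T'$ and let $a$ be its unique neighbour in $T'$; set $T'' = T' - \ell$, which is a non-empty subtree of $T$ with $|V(T'')| = |V(T')| - 1$. I would first argue that $\chi(T') = \chi(T'') \cup \chi(\ell)$ and analyse $\chi(\ell) \setminus \chi(T'')$. The key claim is that $|\chi(\ell) \setminus \chi(T'')| = 1$, i.e. exactly one vertex of the bag $\chi(\ell)$ does not appear in any bag of $T''$. To see the upper bound: by smoothness, $|\chi(a) \cap \chi(\ell)| = \eta$ and $|\chi(\ell)| = \eta+1$, so $\chi(\ell) \setminus \chi(a)$ is a single vertex $v$; every vertex of $\chi(\ell) \setminus \{v\}$ lies in $\chi(a) \subseteq \chi(T'')$, so $\chi(\ell) \setminus \chi(T'') \subseteq \{v\}$. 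For the lower bound (that $v \notin \chi(T'')$): by Property~3 of Definition~\ref{def:tree-decomposition}, the nodes of $T$ whose bags contain $v$ induce a connected subtree; since $v \in \chi(\ell)$ but $v \notin \chi(a)$, this subtree meets $T'$ only at $\ell$ — any node $t \in V(T'') \cap \bchi(v)$ would have its path to $\ell$ in $T$ pass through $a$ (as $\ell$ is a leaf of $T'$ with unique neighbour $a$), forcing $v \in \chi(a)$ by connectivity, a contradiction. Hence $v \notin \chi(T'')$ and $|\chi(T')| = |\chi(T'')| + 1$. Applying the induction hypothesis to $T''$ gives $|\chi(T')| = (|V(T'')| + \eta) + 1 = |V(T')| + \eta$, completing the induction.

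One subtlety to be careful about: the connectivity argument uses that the path from $a$ to $\ell$ in the whole tree $T$ is just the edge $a\ell$, which holds because $\ell$ is a leaf of $T'$ and $T'$ is a \emph{subtree} of $T$ (so this edge is genuinely in $T$), but I should make sure the argument that a node $t$ of $T''$ has its $T$-path to $\ell$ passing through $a$ is justified — this follows since $T''$ is connected, contains $a$, and $\ell \notin V(T'')$, so in the tree $T$ the unique $t$–$\ell$ path must exit $T''$, and the only edge of $T'$ leaving $\{\ell\}$ is $a\ell$; if $T$ has other edges at $\ell$ they lead outside $T'$ and hence cannot reach $t \in V(T'') \subseteq V(T')$ without revisiting $\ell$. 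The main obstacle is thus getting this connectivity bookkeeping exactly right; the arithmetic and the use of smoothness are routine.
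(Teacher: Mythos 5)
Your proof is correct. The paper itself does not spell out an argument here; it merely states that the claim ``trivially follows from the proof of Lemma 2.2 in'' Bodlaender's paper, which proves the special case $T' = T$. Your induction on $|V(T')|$ — peeling off a leaf $\ell$, using smoothness to pin down the unique vertex $v = \chi(\ell)\setminus\chi(a)$, and then invoking connectivity of $\bchi(v)$ in $T$ together with the fact that the $T$-path from any node of $T''$ to $\ell$ must cross the edge $a\ell$ — is precisely the bookkeeping that Bodlaender's proof sets up for the whole tree, extended to subtrees, so you are not taking a genuinely different route; you are supplying the details the paper leaves to the cited source. The one spot where your wording is slightly more roundabout than necessary is the justification that the $t$--$\ell$ path passes through $a$: a cleaner phrasing is that deleting the edge $a\ell$ from the tree $T$ separates $T$ into a component containing $\ell$ and a component containing $a$; since $T'' = T'-\ell$ is connected and none of its edges is $a\ell$, all of $T''$ (hence $t$) lies in $a$'s component, forcing the path to use edge $a\ell$. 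But your version conveys the same idea and is correct.
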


\begin{proof}
    Trivially follows from the proof of Lemma 2.2 in \cite{Bodlaender_linear-time_algorithm_for_finding_tree-decomposition_of_small_treewidth}.
\end{proof}

\section{Properties of Biconnected Graphs}
We will first prove two essential properties about biconnected induced subgraphs. The biconnected subgraphs $B$ we consider are subgraphs of $G - X$, where $(G,t)$ is a problem instance on which Reduction \ref{red:contract-component-with-small-neighborhood} has been exhaustively applied and where $X$ is a tidy modulator of $G$. The first property we will prove for such subgraphs is that the number of vertices in $B$ with neighbours outside of $B$ is bounded. The second property is that a smooth tree decomposition of $(T,\chi)$ will yield that each $u \in V(B)$ with $|\bchi(u)| = 1$ has a neighbour outside of $B$.

We start by proving the first property. To prove this we first prove a property that bounds the number of vertices in $B$ that have neighbours in a connected subgraph $H$ in $G - B$. Then afterwards we will prove that we can find a bounded number of connected subgraphs in $G$ such that all neighbours of $B$ are located in these subgraphs.

\begin{lemma}\label{lemma:biconnected-td:three-connected-to-biconnected}
    Let $G$ be a graph with $\treewidth{G} \leq 2$, let $B$ be a biconnected subgraph of $G$, and let $H$ be a connected subgraph of $G - B$. We have $|N_G(H) \cap V(B)| \leq 2$.
\end{lemma}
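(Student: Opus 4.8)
The plan is to argue by contradiction: suppose $|N_G(H) \cap V(B)| \geq 3$, and let $b_1, b_2, b_3$ be three distinct vertices of $B$ each having a neighbour in $H$. The goal is to build a $K_4$ minor in $G$, contradicting $\treewidth{G} \leq 2$ via Lemma~\ref{lemma:k_4_equals_treewidth_2} (or more directly Lemma~\ref{lemma:describes-k4-minor}). The natural four branch sets of the $K_4$ minor will be $H$ itself together with three connected pieces of $B$, one ``containing'' each of $b_1, b_2, b_3$, pairwise adjacent inside $B$. Since $H$ is adjacent to each $b_i$, the piece containing $b_i$ is adjacent to $H$, so the only real work is to partition (part of) $V(B)$ into three connected, pairwise-adjacent subgraphs with $b_i$ in the $i$-th part.

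First I would dispose of trivial cases: if $|V(B)| \leq 2$ then $B$ cannot contain three distinct vertices, so $|N_G(H) \cap V(B)| \leq 2$ holds vacuously; hence assume $|V(B)| \geq 3$. Then, since $B$ is biconnected with at least three vertices, by Lemma~\ref{lemma:biconnected-has-cycle} there is a simple cycle through any two prescribed vertices; more usefully, biconnectivity gives internally vertex-disjoint paths and lets us route through all of $b_1, b_2, b_3$. Concretely, I would show that in a biconnected graph on $\geq 3$ vertices there is a partition of $V(B)$ into three non-empty connected sets $V_1 \ni b_1$, $V_2 \ni b_2$, $V_3 \ni b_3$ that are pairwise adjacent. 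One clean way: take a simple cycle $C$ visiting $b_1$ and $b_2$ (Lemma~\ref{lemma:biconnected-has-cycle}); if $b_3$ lies on $C$, the cycle is already split by $b_1,b_2,b_3$ into three arcs which I assign to $V_1,V_2,V_3$ (putting each endpoint $b_i$ into $V_i$ and any other cycle-vertices arbitrarily but keeping each $V_i$ a sub-path of the cycle, hence connected; consecutive arcs share no vertex but are joined by a cycle edge, so the three sets are pairwise adjacent once we also note the ``wrap-around'' arc connects $V_1$ and $V_3$). If $b_3 \notin V(C)$, use biconnectivity to find two internally vertex-disjoint paths from $b_3$ to $C$ hitting $C$ at two distinct vertices, thereby carving $C$ into two arcs; absorb $b_3$ and its two connecting paths into $V_3$, and split the rest of the cycle between $V_1$ and $V_2$ so that $b_1 \in V_1$, $b_2 \in V_2$, and both arcs touching $V_3$'s attachment points are distributed to keep all three sets connected and pairwise adjacent. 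Any leftover vertices of $B$ outside these structures can be appended to any $V_i$ they attach to (possible since $B$ is connected), or simply ignored, since a $K_4$ minor only needs four pairwise-adjacent connected subgraphs, not a partition of all of $V(B)$.

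Finally, with $V_1, V_2, V_3 \subseteq V(B)$ pairwise adjacent and connected, and $H$ connected and vertex-disjoint from $B$ with a neighbour in each $V_i$, I apply Lemma~\ref{lemma:describes-k4-minor} to the tuple $(G[V_1], G[V_2], G[V_3], H)$ to conclude $\treewidth{G} > 2$, contradicting the hypothesis. Hence $|N_G(H) \cap V(B)| \leq 2$.

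The main obstacle I anticipate is the bookkeeping in the case $b_3 \notin V(C)$: making sure the three sets stay connected \emph{and} pairwise adjacent after splicing in the two $b_3$-to-$C$ paths, particularly when those paths attach to $C$ at points that interact awkwardly with the positions of $b_1$ and $b_2$. I expect this can be handled uniformly by instead taking, from the start, a more symmetric object — e.g.\ a connected subgraph of $B$ spanning $\{b_1,b_2,b_3\}$ built from two internally vertex-disjoint $b_1$--$b_2$ paths plus a path from $b_3$ into that structure, then using an ``ear''-type argument — but the cycle-based argument above should suffice with careful case analysis.
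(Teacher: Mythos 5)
Your proposal follows essentially the same route the paper takes: assume three boundary vertices $b_1,b_2,b_3 \in N_G(H) \cap V(B)$, take a simple cycle $C$ through two of them (the paper's $u,w$ via Lemma~\ref{lemma:biconnected-has-cycle}), case on whether the third lies on $C$, and in each case exhibit a $K_4$ minor with $H$ as one branch set and three connected, pairwise-adjacent pieces carved out of $B$ as the others, then invoke Lemma~\ref{lemma:describes-k4-minor}. The $b_3 \in V(C)$ sub-case is handled identically in both (three consecutive arcs of $C$). Where you diverge slightly is in routing the third vertex into $C$ when $b_3 \notin V(C)$: you invoke a fan-type consequence of biconnectivity (two internally vertex-disjoint $b_3$--$C$ paths landing at distinct vertices $c_1, c_2$), whereas the paper takes a shortest $v$--$C$ path $P$ in $B-u$ and a shortest $v$--$C$ path $Q$ in $B-w$ (which are \emph{not} assumed internally disjoint from one another) and then splits further on whether $P$ hits $w$ and whether $Q$ hits $u$, splicing $Q(u,v]$ with $P[v,w)$ in the degenerate sub-case. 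Your fan-based route is slightly cleaner, but as you yourself flag, you have not actually carried out the arc-assignment bookkeeping: to close the proposal you still need to check that however $b_1,b_2$ fall among the two arcs cut out by $c_1, c_2$ (including the coincidence cases $b_i = c_j$), one can always split $V(C)$ into two connected, adjacent parts $V_1 \ni b_1$ and $V_2 \ni b_2$ each containing one of $c_1, c_2$, with $V_3 = \{b_3\}$ together with the internal vertices of the two fan paths. This does go through with a short case check, so the approach is sound, but the writeup leaves that step as a promissory note rather than a proof.
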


\begin{proof}
    Assume per contradiction $\{u,v,w\} \subseteq N_G(H) \cap V(B)$. Because $B$ is biconnected and $|V(B)| \geq |N_G(H) \cap V(B)| \geq 3$ there exists a simple cycle $C$ in $B$ that contains $u$ and $w$ (see Lemma \ref{lemma:biconnected-has-cycle}).
    
    \begin{case}[$v \in V(C)$]
        The construction shown in Figure \ref{fig:three-connected-to-biconnected:1} describes a $K_4$ minor in $G$, which by Lemma \ref{lemma:describes-k4-minor} contradicts $\treewidth{G} \leq 2$.
    \end{case}

    \begin{case}[$v \notin V(C)$]
        Because $B - u$ is connected there exists a path $P$ from $v$ to $C - u$ in $B - u$. We let $P$ be a shortest path between $v$ and $C - u$ in $B - u$, which means $|V(P) \cap V(C)| = 1$ holds. Similarly, we let $Q$ be a shortest path between $v$ and $C - w$ in $B - w$.
        
        \begin{subcase}[$w \notin V(P) \vee u \notin V(Q)$]
            Irregardless of whether $w \notin V(P)$ or $u \notin V(Q)$ holds, there exists a path $R$ from $v$ to $C - \{u,w\}$ in $B - \{u,w\}$. The construction shown in Figure \ref{fig:three-connected-to-biconnected:2.1} describes a $K_4$ minor in $G$, which by Lemma \ref{lemma:describes-k4-minor} contradicts $\treewidth{G} \leq 2$.
        \end{subcase}
        
        \begin{subcase}[$w \in V(P) \wedge u \in V(Q)$]
            Because $V(P) \cap V(C) = \{w\} \wedge V(Q) \cap V(C) = \{u\}$ it follows from the definition of $P$ and $Q$ that concatenating paths $Q(u,v]$ and $P[v,w)$ yields a (potentially non-simple) path $R$ in $B - C$ from a vertex in $N_G(u)$ to a vertex in $N_G(w)$, and this path visits vertex $v$. The construction shown in Figure \ref{fig:three-connected-to-biconnected:2.2} describes a $K_4$ minor in $G$, which by Lemma \ref{lemma:describes-k4-minor} contradicts $\treewidth{G} \leq 2$.
        \end{subcase}
    \end{case}
    
    \noindent
    Because these cases\footnote{We note that we could prove the same without distinguishing Case 1 and Case 2.1. Nevertheless, we chose for the more explicit approach to avoid potential confusion.} cover all cases and each lead to a contradiction, we can conclude that $|N_G(H) \cap V(B)| \leq 2$.
\end{proof}

\begin{figure}[ht]
\centering
\begin{subfigure}{.33\textwidth}
  \centering
  \includegraphics[page=1, width=.9\columnwidth]{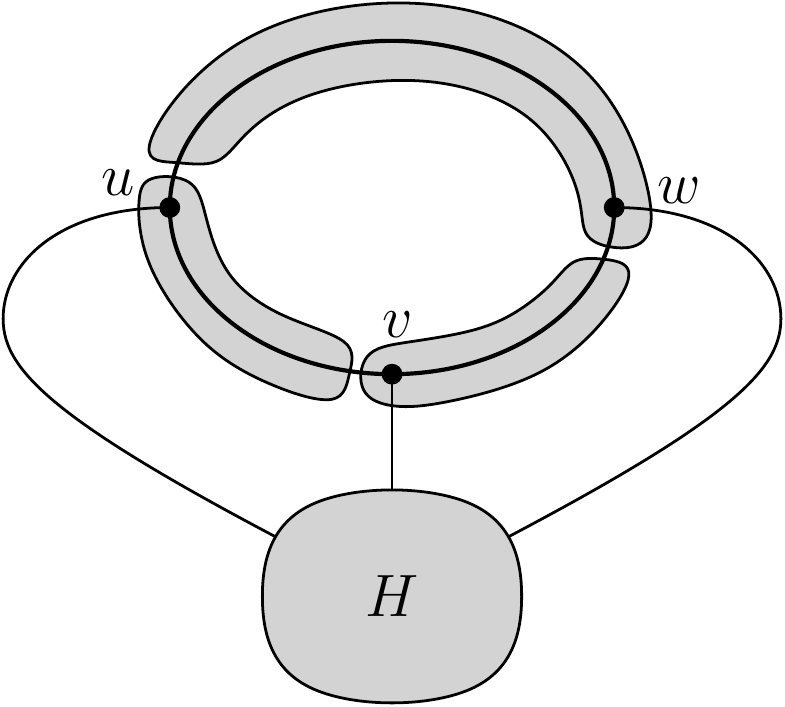}
  \caption{Case $v \in V(C)$}
  \label{fig:three-connected-to-biconnected:1}
\end{subfigure}%
\begin{subfigure}{.33\textwidth}
  \centering
  \includegraphics[page=2, width=.9\columnwidth]{max-3-shared-leafs-biconnected.pdf}
  \caption{Case $w \notin V(P) \vee u \notin V(Q)$}
  \label{fig:three-connected-to-biconnected:2.1}
\end{subfigure}%
\begin{subfigure}{.33\textwidth}
  \centering
  \includegraphics[page=3, width=.9\columnwidth]{max-3-shared-leafs-biconnected.pdf}
  \caption{Case $w \in V(P) \wedge u \in V(Q)$}
  \label{fig:three-connected-to-biconnected:2.2}
\end{subfigure}
\caption{Connected graph $H$ being triple connected to biconnected graph $B$ yields a $K_4$ minor}
\label{fig:three-connected-to-biconnected}
\end{figure}

\begin{lemma}\label{lemma:biconnected-td:number-of-vertices-with-neighbours-outside-of-biconnected}
    Let $(G,t)$ be a non-trivial problem instance, let $X$ be a tidy modulator of $G$, and let $B$ be a biconnected induced subgraph of $G - X$. Then $|\partial_G(B)| \leq 2|X|$.
\end{lemma}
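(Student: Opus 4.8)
The plan is to distribute $\partial_G(B)$ over the at most $|X|$ modulator vertices, attributing at most two boundary vertices of $B$ to each $x \in X$, and to invoke Lemma~\ref{lemma:biconnected-td:three-connected-to-biconnected} in the treewidth-$2$ graph $G - (X \setminus \{x\})$ (which has treewidth at most two by Definition~\ref{def:tidy-modulator}). Since $B$ is a subgraph of $G - X$ we have $V(B) \cap X = \emptyset$, so each $v \in \partial_G(B)$ has a neighbour $w \in V(G) \setminus V(B)$ with either $w \in X$ or $w \in V((G-X) - B)$. The crux is to rule out ``free-floating'' pieces outside $B$: I first show that every component $D$ of $(G - X) - B$ with $N_{G-X}(D) \cap V(B) \neq \emptyset$ has $N_G(D) \cap X \neq \emptyset$.

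To prove that claim, suppose $N_G(D) \cap X = \emptyset$. As $D$ is a component of $(G-X)-B$ we have $N_{G-X}(D) \subseteq V(B)$, hence $N_G(D) = N_{G-X}(D) \subseteq V(B)$, and Lemma~\ref{lemma:biconnected-td:three-connected-to-biconnected} applied in $G - X$ gives $|N_G(D)| \leq 2$. If $|N_G(D)| \leq 1$, or if $N_G(D) = \{u,w\}$ with $uw \in E(G)$, then $G[N_G[D]] \cup \binom{N_G(D)}{2}$ equals the induced subgraph $G[N_G[D]]$ of $G - X$, so it has treewidth at most two and Reduction~\ref{red:contract-component-with-small-neighborhood} applies to $D$, contradicting that $(G,t)$ is non-trivial (Definition~\ref{def:trivial-problem-instance}). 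In the remaining case $N_G(D) = \{u,w\}$ with $uw \notin E(G)$ I claim $\treewidth{G[V(D) \cup \{u,w\}] \cup \{uw\}} \leq 2$: otherwise Lemma~\ref{lemma:k_4_equals_treewidth_2} yields a $K_4$ minor of $G[V(D) \cup \{u,w\}] \cup \{uw\}$, and since $G[V(D)\cup\{u,w\}]$ is an induced subgraph of $G - X$ and thus has no $K_4$ minor, this minor must use the edge $uw$; so it is described by vertex-disjoint connected vertex sets $H_1 \ni u$, $H_2 \ni w$, $H_3$, $H_4$ inside $V(D) \cup \{u,w\}$ for which all pairs except $\{H_1,H_2\}$ are already adjacent in $G[V(D)\cup\{u,w\}]$. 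Picking a $u$--$w$ path $P$ in $B$ (which exists because $B$ is connected and $u \neq w$) and replacing $H_1$ by $H_1 \cup (V(P) \setminus \{w\})$ gives four vertex-disjoint connected subgraphs of $G[V(B) \cup V(D)]$ that are pairwise adjacent, so Lemma~\ref{lemma:describes-k4-minor} gives $\treewidth{G[V(B) \cup V(D)]} > 2$, contradicting that $G[V(B)\cup V(D)]$ is an induced subgraph of $G - X$. Hence $\treewidth{G[N_G[D]] \cup \binom{N_G(D)}{2}} \leq 2$, so Reduction~\ref{red:contract-component-with-small-neighborhood} again applies to $D$, contradicting non-triviality. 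This proves the claim.

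With the claim in hand, for each $x \in X$ let $G_x := G - (X \setminus \{x\})$ and let $K_x$ be the subgraph of $G_x$ induced by $\{x\}$ together with the vertex sets of all components of $(G-X) - B$ having $x$ as a neighbour. Then $K_x$ is connected (every such component is connected and adjacent to $x$) and disjoint from $V(B)$, so by Lemma~\ref{lemma:biconnected-td:three-connected-to-biconnected} applied in $G_x$ (treewidth at most two) we get $|N_{G_x}(K_x) \cap V(B)| \leq 2$. To conclude it suffices to verify $\partial_G(B) \subseteq \bigcup_{x \in X} \bigl(N_{G_x}(K_x) \cap V(B)\bigr)$: given $v \in \partial_G(B)$ pick $w \in N_G(v) \setminus V(B)$; if $w = x \in X$ then $v$ is adjacent to $x \in V(K_x)$, and if $w \in V((G-X)-B)$ then by the claim the component $D$ of $(G-X)-B$ containing $w$ has a neighbour $x \in X$, so $V(D) \subseteq V(K_x)$ and $v$ is adjacent to $w \in V(K_x)$; in either case $v \in V(B)$ is a $G_x$-neighbour of $K_x$ and $v \notin V(K_x)$, so $v \in N_{G_x}(K_x) \cap V(B)$. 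Summing over $X$ gives $|\partial_G(B)| \leq \sum_{x \in X} |N_{G_x}(K_x) \cap V(B)| \leq 2|X|$.

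The main obstacle I anticipate is precisely the middle step: arguing that a component of $(G-X)-B$ touching $B$ but not touching $X$ is contractible via Reduction~\ref{red:contract-component-with-small-neighborhood}. The delicate point there is that after adding the edge $uw$ between the at most two attachment points $u,w$ of such a component the treewidth must stay at most two, and this is exactly where biconnectedness of $B$ enters, through rerouting one branch set of a hypothetical $K_4$ minor along a $u$--$w$ path inside $B$ to obtain a forbidden $K_4$ minor in the treewidth-$2$ graph $G - X$.
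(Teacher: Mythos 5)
Your proof is correct and takes essentially the same route as the paper's: you distribute $\partial_G(B)$ over the modulator vertices by attaching to each $x \in X$ a connected region $K_x$ (which coincides with the paper's component $C_x$ of $(G - B) - (X \setminus \{x\})$ containing $x$), apply Lemma~\ref{lemma:biconnected-td:three-connected-to-biconnected} to each such region inside the treewidth-$2$ graph $G - (X \setminus \{x\})$, and rule out components of $(G-X)-B$ touching only $B$ via non-triviality of $(G,t)$ and Reduction~\ref{red:contract-component-with-small-neighborhood}. The one spot where you do more work than needed is in establishing $\treewidth{G[N_G[D]] \cup \binom{N_G(D)}{2}} \leq 2$ when $N_G(D) = \{u,w\}$ with $uw \notin E(G)$: the paper simply observes that this graph is a \emph{minor} of $G[V(B) \cup V(D)] \subseteq G - X$, obtained by contracting the connected graph $B$ onto the at most two attachment vertices $N_G(D)$, which replaces your explicit $K_4$-rerouting argument with a one-line appeal to Lemma~\ref{lemma:minor-treewidth}.
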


\begin{proof}
    For each $x \in X$ let $C_x \in \mathcal{C}((G - B) - (X \setminus \{x\}))$ such that $x \in V(C_x)$. We first prove that $N_G(B) \subseteq \bigcup_{x \in X} V(C_x)$. Assume per contradiction that $v \in N_G(B) \setminus \bigcup_{x \in X} V(C_x)$.
    
    We have $v \notin X$ because otherwise $v \in V(C_v)$ would hold. Let $C \in \mathcal{C}((G - B) - X)$ be the component that contains $v$. If $N_G(C) \cap X \neq \emptyset$ let $x \in N_G(C) \cap X$. By definition of $C_x$ we have $v \in V(C_x)$. Hence $N_G(C) \cap X = \emptyset$ holds, which implies $N_G(C) \subseteq V(B)$.
    
    By Lemma \ref{lemma:biconnected-td:three-connected-to-biconnected} we have $|N_G(C)| \leq 2$. Because $N_G(C) \subseteq V(B)$ and $B$ is connected we have that $G[N_G[C]] \cup \binom{N_G(C)}{2}$ is a minor of $G[V(B) \cup V(C)]$. We can obtain this minor by contracting $B$ into the vertices $N_G(C)$. By Lemma \ref{lemma:minor-treewidth} we have $\treewidthsmallp{G[N_G[C]] \cup \binom{N_G(C)}{2}} \leq \treewidth{G[V(B) \cup V(C)]} \leq \treewidth{G - X} \leq 2$. Therefore we can apply Reduction \ref{red:contract-component-with-small-neighborhood} on $C$, which by Definition \ref{def:trivial-problem-instance} contradicts $(G,t)$ being non-trivial. Hence, $N_G(B) \subseteq \bigcup_{x \in X} V(C_x)$ must hold.
    
    We next prove that $|\partial_G(B)| \leq 2|X|$ holds. Assume per contradiction that $|\partial_G(B)| > 2|X|$. Then, due to the pigeonhole principle, there exists some $C_x$ with $|N_G(C_x) \cap V(B)| \geq 3$, which by Lemma \ref{lemma:biconnected-td:three-connected-to-biconnected} contradicts $\treewidth{G - (X \setminus \{x\})} \leq 2$. Therefore, we conclude $|\partial_G(B)| \leq 2|X|$.
\end{proof}

\noindent
We have proven that a biconnected induced subgraph $B$ has a bounded number of vertices with neighbours outside of $B$, however, we have not yet identified which vertices in $B$ will have neighbours outside of $B$. We will next show that in a smooth tree decomposition $(T,\chi)$ of $B$ each vertex $u \in V(B)$ with $|\bchi(u)| = 1$ will have a neighbour outside of $B$. We will also show that each leaf $\ell \in V(T)$ contains a vertex $u \in \chi(\ell)$ with $\bchi(u) = \{\ell\}$, thus indirectly bounding the number of leaves in a smooth tree decomposition of $B$.

\begin{lemma}\label{lemma:biconnected-td:beta-1-neighbours}
    Let $(G,t)$ be a non-trivial problem instance, let $B$ be a biconnected induced subgraph of $G$, and let $(T, \chi)$ be a smooth tree decomposition of $B$ of width $2$. For each vertex $u \in V(B)$ with $|\bchi(u)| = 1$ we have $u \in \partial_G(B)$.
\end{lemma}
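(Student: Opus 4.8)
The plan is to argue by contradiction: assume $u \notin \partial_G(B)$ and show this forces the $G$-degree of $u$ to be at most two, which is impossible for a non-trivial instance. First I would unfold the definition of the boundary: $\partial_G(B) = N_G(G - B)$ is exactly the set of vertices of $B$ that have a neighbour in $V(G) \setminus V(B)$. Hence $u \notin \partial_G(B)$ means $N_G(u) \subseteq V(B)$, and since $B$ is an \emph{induced} subgraph this gives $N_G(u) = N_B(u)$.

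Next I would exploit that $(T,\chi)$ is a tree decomposition of $B$ together with the assumption $\bchi(u) = \{b\}$. By Property 2 of Definition \ref{def:tree-decomposition}, for every edge $uv \in E(B)$ some bag contains both $u$ and $v$; as $\chi(b)$ is the only bag containing $u$, we get $v \in \chi(b)$, so $N_B(u) \subseteq \chi(b) \setminus \{u\}$. Smoothness of $(T,\chi)$ (Definition \ref{def:smooth-tree-decomposition}) gives $|\chi(b)| = 3$, hence $|N_B(u)| \le 2$, and therefore $|N_G(u)| = |N_B(u)| \le 2$.

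Finally, since $(G,t)$ is a non-trivial problem instance, Reduction \ref{red:contract-component-with-small-neighborhood} cannot be applied to it (Definition \ref{def:trivial-problem-instance}), so Lemma \ref{lemma:min-degree-3} yields that $G$ has minimum degree at least three; in particular $|N_G(u)| \ge 3$, contradicting $|N_G(u)| \le 2$. Thus $u \in \partial_G(B)$.

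I do not expect a real obstacle here: the argument is short, and the only points requiring care are correctly threading the chain \textup{(}non-trivial $\Rightarrow$ Reduction \ref{red:contract-component-with-small-neighborhood} inapplicable $\Rightarrow$ minimum degree at least three\textup{)} and observing that biconnectedness of $B$ is not actually used in this particular statement \textup{(}it is merely part of the ambient setting inherited from the surrounding lemmas\textup{)}.
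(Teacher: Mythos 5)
Your proof is correct and is essentially the paper's proof, merely recast as a contrapositive: the paper argues directly that $|N_B(u)| \leq 2$ (via Property 2 of the tree-decomposition definition and the width bound) while $|N_G(u)| \geq 3$ (via non-triviality and Lemma \ref{lemma:min-degree-3}), so some neighbour of $u$ lies outside $B$. Your observations that $B$ being \emph{induced} is what makes $N_G(u) = N_B(u)$ under the contradiction hypothesis, and that biconnectedness plays no role here, are both accurate.
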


\begin{proof}
    Let $u \in V(B)$ with $\bchi(u) = \{b\}$. By Property 2 of Definition \ref{def:tree-decomposition} we have for each $v \in N_B(u)$ that $v \in \chi(b) \setminus \{u\}$, which implies $|N_B(u)| \leq 2$. By Definition \ref{def:trivial-problem-instance} and Lemma \ref{lemma:min-degree-3} we have $|N_G(u)| \geq 3$. This yields $N_G(u) \setminus V(B) \neq \emptyset$ and hence $u \in \partial_G(B)$.
\end{proof}

\begin{lemma}\label{lemma:biconnected-td:leaf-has-beta-1}
    Let $(G,t)$ be a non-trivial problem instance, let $B$ be a biconnected induced subgraph of $G$, and let $(T, \chi)$ be a smooth tree decomposition of $B$ of width 2. For each leaf $\ell \in V(T)$ there exists some $u \in \chi(\ell)$ with $\bchi(u) = \{\ell\}$.
\end{lemma}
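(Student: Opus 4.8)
The plan is to use the smoothness conditions of Definition~\ref{def:smooth-tree-decomposition} together with the connectivity property (Property~3 of Definition~\ref{def:tree-decomposition}). First I would dispose of the degenerate case where $\ell$ is the only node of $T$. Here smoothness gives $|\chi(\ell)| = \eta + 1 = 3$, and since $\bchi(u) \subseteq V(T) = \{\ell\}$ for every $u \in V(B)$, any $u \in \chi(\ell)$ trivially satisfies $\bchi(u) = \{\ell\}$.

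In the main case $\ell$ has a unique neighbour $b$ in $T$. By smoothness $|\chi(\ell)| = 3$ and $|\chi(\ell) \cap \chi(b)| = 2$, so $\chi(\ell) \setminus \chi(b)$ contains exactly one vertex $u$ (in particular it is nonempty). I would then argue $\bchi(u) = \{\ell\}$ by contradiction: if some node $c \neq \ell$ had $u \in \chi(c)$, then, since $\ell$ is a leaf, the unique simple path in $T$ from $\ell$ to $c$ begins with the edge $\ell b$ and hence passes through $b$; but by Property~3 of Definition~\ref{def:tree-decomposition} (equivalently Lemma~\ref{lemma:bchi-connected-in-subtree} applied with $T' = T$) the set $\bchi(u)$ induces a connected subtree of $T$ containing both $\ell$ and $c$, so it must contain $b$ as well. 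This forces $u \in \chi(b)$, contradicting the choice of $u \in \chi(\ell) \setminus \chi(b)$. Therefore $\bchi(u) = \{\ell\}$, and $u$ is the desired vertex.

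I do not anticipate any genuine obstacle: the only point needing a word of care is that $\chi(\ell)\setminus\chi(b)$ is nonempty, which is immediate from $|\chi(\ell)| = 3 > 2 = |\chi(\ell)\cap\chi(b)|$. Note that the hypothesis that $(G,t)$ is non-trivial is not actually used here; it is kept only for uniformity with the neighbouring lemmas.
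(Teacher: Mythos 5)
Your proof is correct and takes essentially the same approach as the paper's: handle the one-node case trivially, use smoothness to extract a vertex $u \in \chi(\ell)\setminus\chi(b)$, and then use the connectivity of $T[\bchi(u)]$ (via $b$ separating $\ell$ from the rest of $T$) to force $\bchi(u)=\{\ell\}$. Your observation that the non-triviality hypothesis is unused is also accurate.
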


\begin{proof}
    In case $|V(T)| \leq 1$ this trivially holds. Otherwise, let $\ell \in V(T)$ with $N_T(\ell) = \{b\}$. By Definition \ref{def:smooth-tree-decomposition} we have $|\chi(\ell)| = 3 \wedge |\chi(\ell) \cap \chi(b)| = 2$. Hence, there exists some $u \in \chi(\ell) \setminus \chi(b)$. Because $T[\bchi(u)]$ is connected (Property 3 of Definition \ref{def:tree-decomposition}) and $b$ separates $\ell$ from $T - \{\ell, b\}$ this yields $\bchi(u) = \{\ell\}$.
\end{proof}

\section{Frequent Path Neighbour Reduction}
We will next introduce a reduction rule that, when given a path $P$ in $G - X$ that visits many neighbours of some vertex $u$, will allow us to find a neighbour $v \in N_G(u) \cap V(P)$ for which we can safely remove edge $uv$ from $G$. This reduction will be used to remove an edge incident to a vertex $u$ that appears in many bags of a linear subtree of a smooth tree decomposition of a biconnected graph.

Because we will reuse this reduction in Chapter \ref{ch:reducing-block-cut-trees} we will, however, prove that this reduction can be applied more generally. Within this section we will search for a simple path $P$ in $G - X$ that contains four neighbours $\{v_1,v_2,v_3,v_4\}$ of a vertex $u \in V(G)$ (not necessarily in $X$), and the component in $G - (X \cup \{u,v_1,v_4\})$ that contains $P(v_1,v_4)$ has no neighbours in $X$. Within Chapter \ref{ch:reducing-block-cut-trees} we will instead search for a simple path $P$ in $G - X$ that contains four neighbours $\{v_1,v_2,v_3,v_4\}$ of a vertex $x \in X$ such that the component $C \in \mathcal{C}(G - (X \cup \{v_1,v_4\}))$ that contains $P(v_1,v_4)$ has $N_G(C) \cap X$ being a limit-1 subset. Because $\{u\}$ will trivially be a limit-1 subset, the second definition (with modulator $X \cup \{u\}$) will suffice for the first use case as well. Hence, we will only introduce and apply the more general reduction rule.

\begin{reduction}[frequent path neighbour reduction]\label{red:remove-middle-edge-to-isolated-path}
    Let $(G,t)$ be a problem instance, let $X$ be a tidy modulator of $G$, let $x \in X$, and let $P$ be a simple path in $G - X$ visiting $\{v_1, v_2, v_3, v_4\} \subseteq N_G(x)$ in order of increasing index. Let $C \in \mathcal{C}(G - (X \cup \{v_1, v_4\}))$ such that $P(v_1, v_4)$ is contained in $C$. If $N_G(C) \cap X$ is a limit-1 subset for $(G - v_2x, t)$, then remove edge $v_2x$ from $G$.
\end{reduction}

\noindent
To prove that this reduction is safe we will first prove that any simple path $P$ in $G-X$ that visits three vertices $\{v_1, v_2, v_3\} \subseteq N_G(x)$ has that $\{v_2,x\}$ separates $v_1$ from $v_3$ in $G - X$.

\begin{lemma}\label{lemma:x-path-separator}
    Let $G$ be a graph with $\treewidth{G} \leq 2$, let $x \in V(G)$, and let $P$ be a simple path in $G - x$ visiting $\{v_1, v_2, v_3\} \subseteq N_G(x)$ in order of increasing index. Then $\{v_2,x\}$ separates $v_1$ from $v_3$ in $G$.
\end{lemma}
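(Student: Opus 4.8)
The plan is to argue by contradiction: suppose $\{v_2,x\}$ does \emph{not} separate $v_1$ from $v_3$ in $G$, so there is a path $Q$ from $v_1$ to $v_3$ in $G-\{v_2,x\}$, and derive a $K_4$ minor, contradicting $\treewidth{G}\le 2$ via Lemma~\ref{lemma:describes-k4-minor}. The "backbone" of this minor is the observation that $x$ together with the subpath $P'=P[v_1,v_3]$ of $P$ already forms $K_4$ minus one edge: $P'$ is a simple path containing $v_2$ strictly between $v_1$ and $v_3$, and $x$ is adjacent to all three of $v_1,v_2,v_3$, so writing $P_1=P[v_1,v_2]$ and $P_2=P[v_2,v_3]$, the branch sets $\{x\}$, $\{v_2\}$, $V(P_1)\setminus\{v_2\}$, $V(P_2)\setminus\{v_2\}$ are pairwise vertex-disjoint, connected, and adjacent for every pair except the last two. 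It remains to connect $V(P_1)\setminus\{v_2\}$ to $V(P_2)\setminus\{v_2\}$ using $Q$.

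First I would extract from $Q$ a clean connecting subpath. Let $A=V(P_1)\setminus\{v_2\}$ and $B=V(P_2)\setminus\{v_2\}$; note $v_1\in A$, $v_3\in B$, $A\cap B=\emptyset$, and $A\cup B\cup\{v_2\}=V(P')$. Since $Q$ starts in $A$ and ends in $B$, let $q$ be the last vertex of $Q$ lying in $A$ and let $q'$ be the first vertex of $Q$ after $q$ lying in $B$ (both exist: $v_1$ witnesses the first, $v_3$ the second). Let $Q''=Q[q,q']$. Then every internal vertex of $Q''$ lies after $q$ (hence not in $A$), before $q'$ (hence not in $B$), and is distinct from $v_2$ and from $x$ (since $Q$ avoids both); therefore the interior of $Q''$ is disjoint from $V(P')\cup\{x\}$, while $q\in A$ and $q'\in B$.

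Finally I would assemble the four branch sets as $H_1=\{x\}$, $H_2=\{v_2\}$, $H_3 = \bigl(V(P_1)\setminus\{v_2\}\bigr)\cup\bigl(V(Q'')\setminus\{q'\}\bigr)$, and $H_4=V(P_2)\setminus\{v_2\}$. Here $H_3$ is connected because $P_1-v_2$ is a path ending at a $P$-neighbour of $v_2$ and containing $q$, to which $Q''$ is glued at $q$; the part of $Q''$ before $q'$ meets $q'\in H_4$ through its last edge, giving an $H_3$–$H_4$ edge. One then checks the remaining adjacencies (via $xv_1$, $xv_2$, $xv_3$, and the two $P$-edges at $v_2$) and pairwise disjointness (using that $P$ is simple, $x\notin V(P)$, and the interior of $Q''$ avoids $V(P')\cup\{x\}$), with the small degenerate cases $v_1=q$ or $v_3=q'$ or $Q''$ being a single edge handled directly. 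Applying Lemma~\ref{lemma:describes-k4-minor} to $(H_1,H_2,H_3,H_4)$ yields $\treewidth{G}>2$, the desired contradiction. I expect the main obstacle to be precisely the second step: a naive choice of $Q$ may re-enter $P'$ repeatedly, so care is needed to isolate a subpath of $Q$ whose interior is disjoint from $P'$ while still bridging the two sides of $v_2$ — this "last on one side / first on the other" surgery is the crux, and a figure analogous to those in the surrounding sections would make it transparent.
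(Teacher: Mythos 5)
Your proof is correct and takes essentially the same approach as the paper: assume for contradiction a path $Q$ from $v_1$ to $v_3$ in $G-\{v_2,x\}$, extract a subpath of $Q$ whose interior avoids $V(P[v_1,v_3])\cup\{x\}$, and combine it with $x$'s edges to $v_1,v_2,v_3$ to exhibit a $K_4$ minor via Lemma~\ref{lemma:describes-k4-minor}. The only cosmetic difference is in the branch-set decomposition: the paper splits $P$ at the endpoints $a,b$ of the extracted subpath $Q'$ and uses $P(a,b)$ (which contains $v_2$) as one branch set, whereas you split at $v_2$ itself and take $\{v_2\}$ as a singleton branch set; both choices yield the desired $K_4$ minor and the verification steps are interchangeable.
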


\begin{proof}
    Assume per contradiction that there exists a path $Q$ in $G - \{v_2,x\}$ from $v_1$ to $v_3$. Then $Q$ starts in $P[v_1,v_2)$ and ends in $P(v_2,v_3]$. We let $Q'$ be a shortest subpath of $Q$ between a vertex $a \in V(P[v_1,v_2))$ and a vertex vertex $b \in V(P(v_2,v_3])$. We display this situation in Figure \ref{fig:x-path-separator:construction}. We note $a \neq b$ because $P$ is a simple path.
    
    Because $Q'$ is a shortest subpath we have $V(P) \cap V(Q') = \{a,b\}$. We contract $P[v_1,a]$ together with $Q' - b$ into a vertex $c$, we contract $P[b,v_3]$ into a vertex $d$, and we contract $P(a,b)$ into a vertex $e$ (see Figure \ref{fig:x-path-separator:k4}). Vertices $c$, $d$, $e$, and $x$ together form a $K_4$ minor in $G$. By Lemma \ref{lemma:minor-treewidth} this contradicts $\treewidth{G} \leq 2$. Hence $\{v_2,x\}$ must separate $v_1$ from $v_3$.
\end{proof}

\begin{figure}[ht]
\centering
\begin{subfigure}{.5\textwidth}
  \centering
  \includegraphics[page=2, width=.75\columnwidth]{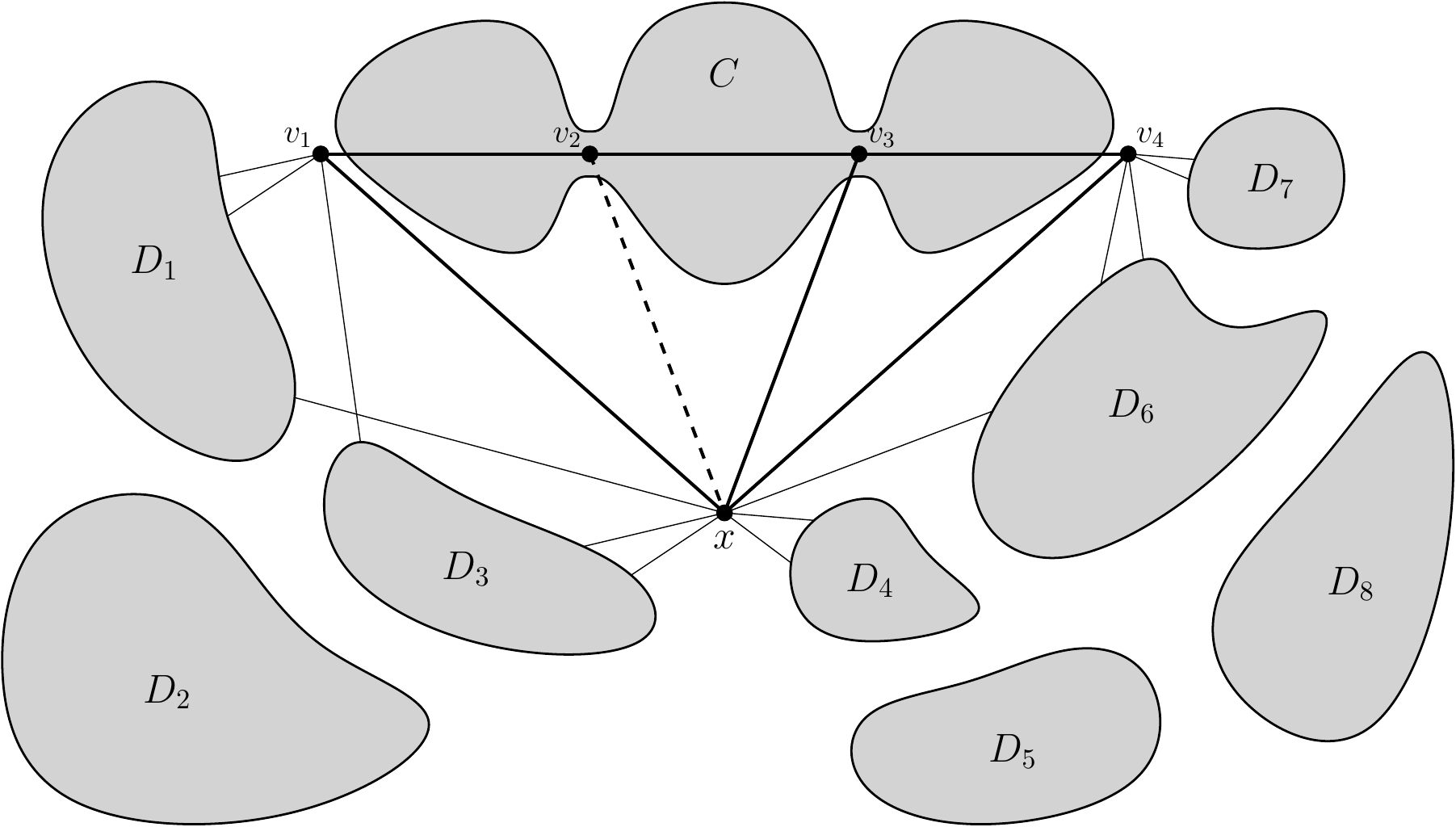}
  \caption{definition of path $Q'$}
  \label{fig:x-path-separator:construction}
\end{subfigure}%
\begin{subfigure}{.5\textwidth}
  \centering
  \includegraphics[page=3, width=.75\columnwidth]{frequent-neighbour-reduction.pdf}
  \caption{$K_4$-minor in $G$}
  \label{fig:x-path-separator:k4}
\end{subfigure}
\caption{Proof of Lemma \ref{lemma:x-path-separator} that $\{v_2,x\}$ not being a separator results in a $K_4$ minor}
\label{fig:x-path-separator}
\end{figure}

\begin{lemma}[safeness]\label{lemma:safeness:remove-middle-edge-to-isolated-path}
    Let $(G',t)$ be the problem instance obtained by applying Reduction \ref{red:remove-middle-edge-to-isolated-path} on $(G,t)$. Then $\twtwodeletion{G}{t} \Longleftrightarrow \twtwodeletion{G'}{t}$ holds.
\end{lemma}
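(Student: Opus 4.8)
The forward direction is immediate: $G' = G - v_2x$ is a subgraph, hence a minor, of $G$, so Lemma~\ref{lemma:minor-of-G-implication} gives $\twtwodeletionG{G'} \le \twtwodeletionG{G}$, and therefore $\twtwodeletion{G}{t}$ implies $\twtwodeletion{G'}{t}$. For the converse, fix a solution $S'$ for $(G',t)$; the goal is to show $\treewidth{G - S'} \le 2$, which (since $|S'| \le t$) makes $S'$ a solution for $(G,t)$. If $v_2 \in S'$ or $x \in S'$ then $G - S' = G' - S'$ and we are done, so assume $v_2, x \notin S'$.

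The first step is to control how $S'$ meets $C$. Since $v_2 \in V(C) \cap N_G(x)$ we have $x \in N_G(C) \cap X$; combining this with the hypothesis that $N_G(C) \cap X$ is a limit-$1$ subset for $(G - v_2x, t) = (G',t)$ and with $x \notin S'$, it follows that $(N_G(C) \cap X) \setminus S' = \{x\}$. As $N_G(C) \subseteq (N_G(C) \cap X) \cup \{v_1, v_4\}$, this shows that $\{v_1, v_4, x\} \setminus S'$ separates $V(C) \setminus S'$ from the rest of $G - S'$; in particular every component $D$ of $(G-S')[V(C) \setminus S']$ has $N_{G-S'}(D) \subseteq \{v_1, v_4, x\}$.

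Next I bound the treewidth of the two sides of this separator and glue them with Theorem~\ref{theorem:connect-treewidth-graphs} (applied with ``$U$'' the vertices outside $V(C)$). The outside piece $(G - S') - V(C)$ contains no endpoint of $v_2x$, hence equals $(G' - S') - V(C)$, a subgraph of $G' - S'$, so its treewidth is at most $2$. For the $C$-side note that $V(C) \cup \{v_1, v_4, x\}$ is disjoint from $X \setminus \{x\}$ (because $V(C) \cap X = \emptyset$ and $v_1,v_4 \notin X$), so $G[V(C) \cup \{v_1,v_4,x\}]$ is an induced subgraph of $G - (X \setminus \{x\})$, which has treewidth at most $2$ since $X$ is a tidy modulator; by Lemma~\ref{lemma:minor-treewidth} the same holds for every induced subgraph of it, in particular for each $(G-S')[N_{G-S'}[D]]$ (which still contains the edge $v_2x$). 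The edges $v_1x$ and $v_4x$ lie in $G - S'$ (from $v_1, v_4 \in N_G(x)$), so the only gap in applying the theorem is a component $D$ adjacent to both $v_1$ and $v_4$ with $v_1 v_4 \notin E(G)$. I plan to exclude this via Lemma~\ref{lemma:x-path-separator}: applied inside the treewidth-$2$ graph $G - (X \setminus \{x\})$ to the subpath of $P$ visiting $v_1, v_2, v_4$ (and, symmetrically, $v_1, v_3, v_4$), it shows that $v_2$ (resp.\ $v_3$) separates $v_1$ from $v_4$ in $G - X$; tracking which sub-pieces of $D - v_2$ (resp.\ $D - v_3$) attach to $v_1$ and which to $v_4$ then yields that no component of $C - S'$ can reach both, so each $N_{G-S'}(D)$ lies in $\{v_1,x\}$ or $\{v_4,x\}$ and induces a clique. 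Theorem~\ref{theorem:connect-treewidth-graphs} then gives $\treewidth{G - S'} \le 2$.

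The hard part is precisely this last step: the naive gluing breaks when $v_1v_4$ is a non-edge, and the ``$D$ cannot see both $v_1$ and $v_4$'' claim is delicate because deleting $v_2$ (or $v_3$) from a component $D$ of $C - S'$ may disconnect it, so one must argue componentwise about which parts of $D$ sit on which side of the separation guaranteed by Lemma~\ref{lemma:x-path-separator}, likely with a small case split on whether $v_1$, $v_3$, $v_4$ lie in $S'$ (each such case collapses the interface to size at most $2$ and is routine). If that bookkeeping turns out too unwieldy, the fallback is a direct $K_4$-minor argument: any $K_4$ minor in $G - S'$ must use the edge $v_2x$ (since $G' - S'$ has none), and the separator structure above plus the fan-like way $x$ attaches to the path $P$ leaves no room for four pairwise-adjacent branch sets.
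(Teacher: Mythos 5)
Your forward direction and the first bookkeeping step (showing $(N_G(C)\cap X)\setminus S' = \{x\}$ and hence $N_G(C)\setminus S' \subseteq \{v_1,v_4,x\}$) are fine and match the paper. However, the core of your plan — gluing with Theorem~\ref{theorem:connect-treewidth-graphs} by taking $U$ to be the vertices \emph{outside} $V(C)$ and then arguing that no inside component can see both $v_1$ and $v_4$ — is wrong. In the main case $(V(C)\cup\{v_1,v_4,x\})\cap S'=\emptyset$, the component $D$ of $(G-S')[V(C)]$ containing $P(v_1,v_4)$ \emph{does} have $N_{G-S'}(D) = \{v_1,v_4,x\}$: the first and last vertices of $P(v_1,v_4)$ are, along $P$, neighbours of $v_1$ and $v_4$ respectively, and $v_2\in V(D)$ is adjacent to $x$. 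So the clique condition fails for this $D$ whenever $v_1v_4\notin E(G)$, and no amount of ``tracking sub-pieces of $D-v_2$'' rescues it — Theorem~\ref{theorem:connect-treewidth-graphs} needs the boundary of the whole component $D$ to be a clique, not the boundaries of pieces of $D$ after you further cut it. Your invocation of Lemma~\ref{lemma:x-path-separator} with separator $\{v_2,x\}$ or $\{v_3,x\}$ cannot help here precisely because $v_2,v_3\in V(D)$, so the separator lives inside the very component you need to bound.

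The paper glues in the opposite direction: it takes $U = N_G[C]\setminus S'$ (so that $V(C),v_1,v_4,x$ all sit on the \emph{inside}, where $\treewidth{G[N_G[C]]-S'}\le 2$ follows from the tidy-modulator property and $X\cap(V(C)\cup\{v_1,v_4\})=\emptyset$), and then shows that each \emph{outside} component $D\in\mathcal{C}(G-(N_G[C]\cup S'))$ has $N_G(D)\setminus S'\subseteq\{v_1,v_4,x\}$ but cannot contain both $v_1$ and $v_4$: Lemma~\ref{lemma:x-path-separator} applied to $G'-S'$ (where $v_1,v_3,v_4$ are the remaining in-order path neighbours of $x$) shows $\{v_3,x\}$ separates $v_1$ from $v_4$ in $G'-S'$, and since $\{v_3,x\}\subseteq N_G[C]$ is disjoint from $D$, a component touching both would give a forbidden $v_1$--$v_4$ path. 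Finally, you have also dropped the case $V(C)\cap S'\neq\emptyset$ with $\{v_1,v_4\}\cap S'=\emptyset$: there the paper first \emph{replaces} $S'$ by $S = (S'\setminus V(C))\cup\{v_4\}$ (still a solution of $(G',t)$, shown by another application of Theorem~\ref{theorem:connect-treewidth-graphs}), so as to reduce to the case where $S$ hits $\{v_1,v_4\}$; your ``routine case split on whether $v_1,v_3,v_4\in S'$'' does not cover this, since $v_3\in S'$ alone does not shrink the interface $\{v_1,v_4,x\}$ of $C$.
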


\begin{proof}
    Because $G'$ is a minor of $G$ we obtain by Lemma \ref{lemma:minor-of-G-implication} that $\twtwodeletion{G'}{t}$ implies $\twtwodeletion{G}{t}$. It remains to prove that $\twtwodeletion{G'}{t}$ implies $\twtwodeletion{G}{t}$. We let $S'$ be a solution for $(G',t)$.

    \begin{case}[$\{v_2, x\} \cap S' \neq \emptyset$]
        We have $G' - S' = G - S$, which yields $\twtwodeletion{G}{t}$.
    \end{case}
    
    \noindent
    For all other cases we will assume that $v_2, x \notin S'$ holds. Due to $N_G(C) \cap X$ being a limit-1 subset for $(G',t)$ this yields $(N_G(C) \cap X) \setminus S' = \{x\}$. Hence, $N_G(C) \setminus S' \subseteq \{v_1, v_4, x\}$ holds for all other cases. Because $G[N_G[C]] - S'$ is a subgraph of $G - (X \setminus \{x\})$ we have by Lemma \ref{lemma:minor-treewidth} that $\treewidth{G[N_G[C]] - S'} \leq \treewidth{G - (X \setminus \{x\})} \leq 2$.
    
    \begin{case}[$\{v_1,v_4\} \cap S' \neq \emptyset$]
        W.l.o.g. assume $v_4 \in S'$. From Lemma \ref{lemma:minor-treewidth} it follows that $\treewidth{G - (V(C) \cup S')} \leq \treewidth{G' - S'} \leq 2$. Because $N_G(C) \setminus S' \subseteq \{v_1,x\}$ and $v_1x \in E(G)$ we have that $G[N_G(C) \setminus S']$ is a clique. Therefore we can apply Theorem \ref{theorem:connect-treewidth-graphs} on $(G - S', G - (V(C) \cup S'))$ to derive $\treewidth{G - S'} \leq 2$, which yields $\twtwodeletion{G}{t}$.
    \end{case}
    
    \begin{case}[$V(C) \cap S' \neq \emptyset$]
        Let $S = (S' \setminus V(C)) \cup \{v_4\}$. We prove $\treewidth{G' - S} \leq 2$. By Lemma \ref{lemma:minor-treewidth} we have $\treewidth{G' - (V(C) \cup S)} \leq \treewidth{G' - S'} \leq 2$ and $\treewidth{G'[N_G[C]] - S} \leq \treewidth{G - (X \setminus \{x\})} \leq 2$. Because $N_G(C) \setminus S \subseteq \{v_1,x\}$ and $v_1x \in E(G')$ we have that $G'[N_G(C) \setminus S]$ is a clique. Therefore we can apply Theorem \ref{theorem:connect-treewidth-graphs} on $(G' - S, G' - (V(C) \cup S))$ to derive $\treewidth{G' - S} \leq 2$. Hence, in case $V(C) \cap S' \neq \emptyset$ there exists a solution $S$ for $(G',t)$ with $\{v_1,v_4\} \cap S \neq \emptyset$, which as proven in Case 2 implies $\twtwodeletion{G}{t}$.
    \end{case}
    
    \begin{case}[$(V(C) \cup \{v_1,v_4,x\}) \cap S' = \emptyset$]
        We first prove for each $D \in \mathcal{C}(G - (N_G[C] \cup S'))$ that $G[N_G(D)] - S'$ is a clique. We have $N_G(D) \setminus S' \subseteq N_G(C) \setminus S' = \{v_1,v_4,x\}$ and $v_1x, v_4x \in E(G)$. Hence, for $G[N_G(D)] - S'$ to not be a clique we must have $v_1,v_4 \in N_G(D)$. By Lemma \ref{lemma:x-path-separator} we have that $\{v_3,x\}$ separates $v_1$ from $v_4$ in $G' - S'$. A component $D \in \mathcal{C}(G - (N_G[C] \cup S'))$ with $v_1,v_4 \in N_G(D)$ contradicts $\{v_3,x\}$ being a separator. Hence $G[N_G(D)] - S'$ must be a clique. This corresponds to the situation shown in Figure \ref{fig:safeness:remove-middle-edge-to-isolated-path}.
        
        We have by Lemma \ref{lemma:minor-treewidth} that $\treewidth{G[N_G[C]] - S'} = \treewidth{G[V(C) \cup \{v_1,v_4,x\}]} \leq \treewidth{G - (X \setminus \{x\})} \leq 2$ and that each $D \in \mathcal{C}(G - (N_G[C] \cup S'))$ has $\treewidth{G[N_G[D]] - S'} \leq \treewidth{G' - S'} \leq 2$. Furthermore, for each such $D$ we have proven that $G[N_G(D)] - S'$ is a clique. Therefore we can apply Theorem \ref{theorem:connect-treewidth-graphs} on $(G - S', N_G[C] \setminus S')$ to derive $\treewidth{G' - S} \leq 2$, which implies $\twtwodeletion{G}{t}$.
    \end{case}
    
    \noindent
    Because these cases cover all cases, correctness trivially follows.
\end{proof}

\begin{figure}[ht]
    \centering
    \includegraphics[width=.8\columnwidth]{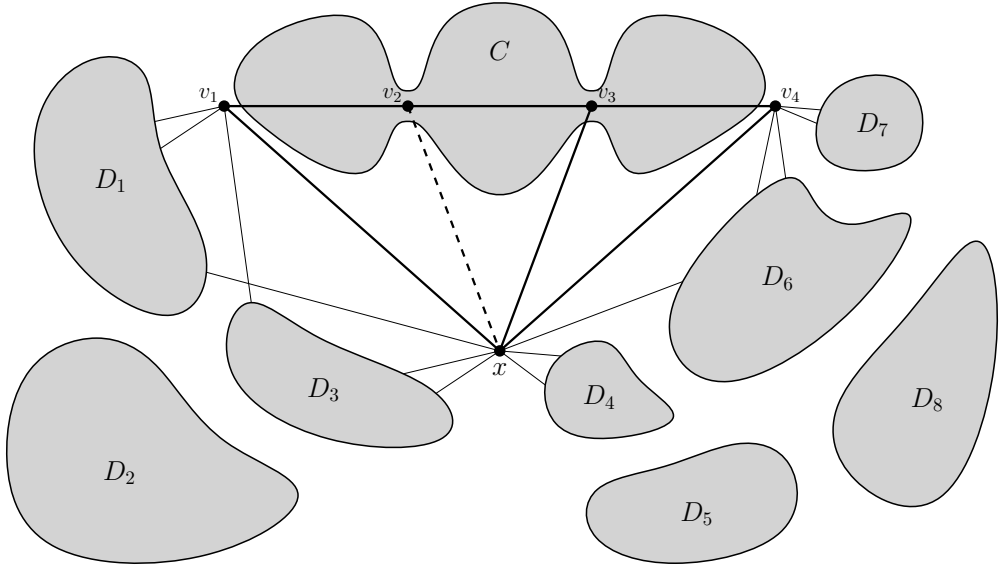}
    \caption{Example application of Reduction \ref{red:remove-middle-edge-to-isolated-path} and the connectivity of components $D \in \mathcal{C}(G - (N_G[C] \cup S'))$ in case $V(C) \cup \{v_1,v_4,x\} \cap S' = \emptyset$}
    \label{fig:safeness:remove-middle-edge-to-isolated-path}
\end{figure}

\section{Ladder Reduction}
We will next introduce another reduction. This reduction is defined to reduce the situation displayed in Figure \ref{fig:ladder-reduction}. There are two vertex-disjoint induced paths $P$ and $Q$ in $B$ alongside a matching $M = \{p_1q_1, p_2q_2, \dots, p_9q_9\}$ between nodes in $P$ and $Q$, such that $\partial_G(V(P) \cup V(Q)) \subseteq \{p_1,p_9,q_1,q_9\}$. This matching $M$ will be such that the endpoints of edges in $M$ are `ordered' along $P$ and $Q$. I.e., the situation in Figure \ref{fig:non-cross-double-path:1} will never occur. We note that Figure \ref{fig:ladder-reduction} is a simplification of the described situation, because it omits edges not contained in $M$. We furthermore note that this graph $G[V(P) \cup V(Q)]$ has a ladder graph $L_9$ as minor. As we will prove later, large linear subtrees in a smooth tree decomposition of $B$ will contain a pair of paths $P$ and $Q$ that together induce a graph in $G$ with a ladder minors $L_9$.\clearpage

\begin{figure}[ht]
    \centering
    \includegraphics[scale=0.7]{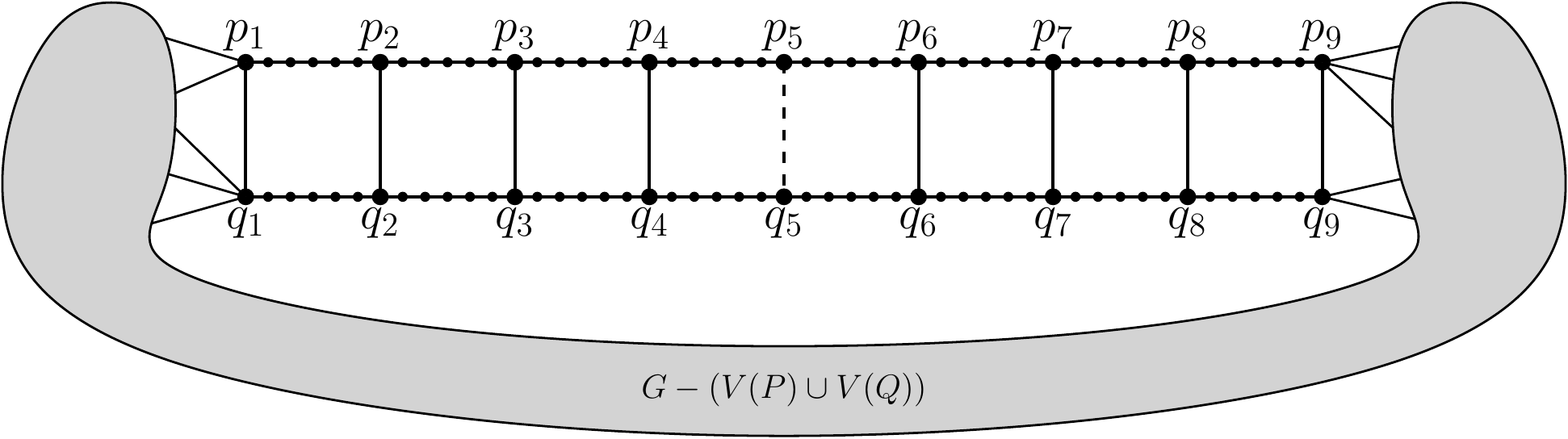}
    \caption{Situation reduced by Reduction \ref{red:ladder-reduction}}
    \label{fig:ladder-reduction}
\end{figure}
\medskip
\begin{figure}[ht]
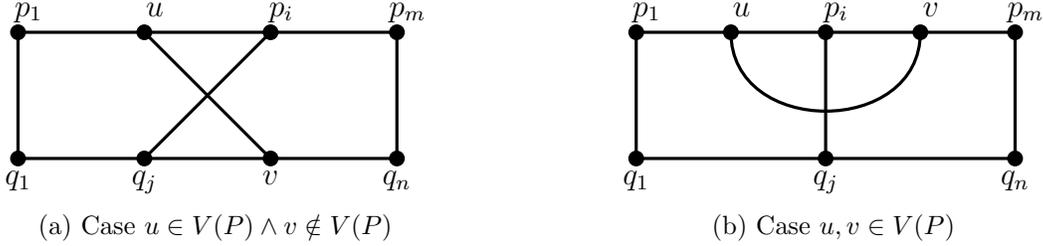

\begin{subfigure}{.49\textwidth}
  \centering
  \includegraphics[width=.7\linewidth, page=2]{ladder-reduction.pdf}  
  \caption{Case $u \in V(P) \wedge v \notin V(P)$}
  \label{fig:non-cross-double-path:1}
\end{subfigure}
\begin{subfigure}{.49\textwidth}
  \centering
  \includegraphics[width=.7\linewidth, page=3]{ladder-reduction.pdf}
  \caption{Case $u, v \in V(P)$}
  \label{fig:non-cross-double-path:2}
\end{subfigure}
\caption{Illegal edge pairs when $\treewidth{G[V(P) \cup V(Q)] \cup \{p_1q_1,p_mq_n\}} \leq 2$}
\label{fig:non-cross-double-path}
\end{figure}

\noindent
Within this subsection we will only focus on defining the reduction and proving that it is safe. In later subsections we will show how one can find these reducible structure.

\begin{reduction}[ladder reduction]\label{red:ladder-reduction}
    Let $(G,t)$ be a problem instance. Let $P$ and $Q$ be two vertex-disjoint induced paths in $G$ visiting in order the vertices $\{p_1, p_2, \dots, p_9\}$ and $\{q_1, q_2, \dots, q_9\}$ respectively, such that $p_1q_1, p_2q_2, \dots, p_9q_9 \in E(G)$. If $\treewidth{G[V(P) \cup V(Q)]} \leq 2$ and $\partial_G(V(P) \cup V(Q)) \subseteq \{p_1,p_9,q_1,q_9\}$, then remove $p_5q_5$ from $G$. 
\end{reduction}

\noindent
Before we can prove the safeness of this reduction, we will first prove a lemma that states that edges in $G[V(P) \cup V(Q)]$ can not `cross' each other. I.e. the situations shown in Figure \ref{fig:non-cross-double-path} can not occur. We purposefully assume in this lemma that $\treewidth{G[V(P) \cup V(Q)] \cup \{p_1q_1, p_mq_n\}} \leq 2$, opposed to the assumption $\treewidth{G[V(P) \cup V(Q)]} \leq 2$ with $p_1q_1, p_9q_9 \in M \subset E(G)$. This, because we will reuse this lemma in the proof of Lemma \ref{lemma:find-ladder-reduction}, which uses paths $P = \langle p_1, p_2, \dots, p_m \rangle$ and $Q = \langle q_1, q_2, \dots, q_n \rangle$ for which not necessarily $p_1q_1, p_mq_n \in E(G)$ holds.

\begin{lemma}\label{lemma:non-cross-double-path}
    Let $G$ be a graph, let $P = \langle p_1, p_2, \dots, p_m \rangle$ and $Q = \langle q_1, q_2, \dots, q_n \rangle$ be two vertex-disjoint induced paths in $G$ for which $\treewidth{G[V(P) \cup V(Q)] \cup \{p_1q_1, p_mq_n\}} \leq 2$ holds. For each $p_i \in V(P)$ and $q_i \in V(Q)$ with $p_iq_j \in E(G)$ we have that $\{p_i,q_j\}$ separates $V(P[p_1,p_i)) \cup V(Q[q_1,q_j))$ from $V(P(p_i,p_m]) \cup V(Q(q_j,q_n])$ in $G[V(P) \cup V(Q)]$.
\end{lemma}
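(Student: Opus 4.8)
The plan is to argue by contradiction and produce a $K_4$-minor in $H := G[V(P) \cup V(Q)] \cup \{p_1q_1, p_mq_n\}$, so that Lemma~\ref{lemma:describes-k4-minor} contradicts $\treewidth{H} \leq 2$. Fix an edge $p_iq_j$ and write $L := V(P[p_1,p_i)) \cup V(Q[q_1,q_j))$ and $R := V(P(p_i,p_m]) \cup V(Q(q_j,q_n])$, so that $V(P) \cup V(Q) = L \sqcup \{p_i,q_j\} \sqcup R$. Suppose $\{p_i,q_j\}$ does \emph{not} separate $L$ from $R$ in $G[V(P) \cup V(Q)]$. Since deleting $\{p_i,q_j\}$ leaves exactly the induced subgraph on $L \cup R$, a path witnessing non-separation must traverse, at some step, an edge of $G$ with one endpoint in $L$ and the other in $R$. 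Such an edge cannot have both endpoints on $P$ (its endpoints would lie on opposite sides of $p_i$ along $P$, hence at distance at least two in $P$, hence non-adjacent in $G$ because $P$ is induced), and likewise not both on $Q$; so it joins $V(P)$ to $V(Q)$. As the statement is symmetric under exchanging $P$ and $Q$ (which preserves the edge set $\{p_1q_1, p_mq_n\}$), I may assume this edge is $p_kq_l$ with $k < i$ and $l > j$; in particular $i \geq 2$ and $j \leq n-1$.

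Next I would build the $K_4$-minor from the two chords $p_iq_j$ and $p_kq_l$, which cross one another in the cycle obtained by linking $P$ and $Q$ through the edges $p_1q_1$ and $p_mq_n$. Concretely, take the four vertex sets $H_1 := V(P[p_k,p_i))$, $H_2 := V(P[p_i,p_m]) \cup V(Q(q_l,q_n])$, $H_3 := V(Q(q_j,q_l])$, $H_4 := V(P[p_1,p_k)) \cup V(Q[q_1,q_j])$, which contain $p_k$, $p_i$, $q_l$, $q_j$ respectively, are pairwise disjoint, and partition $V(P) \cup V(Q)$. Each induces a connected subgraph of $H$: $H_1$ and $H_3$ are subpaths of $P$ and $Q$, while $H_2$ (resp.\ $H_4$) is a subpath of $P$ together with a subpath of $Q$ glued by the edge $p_mq_n$ (resp.\ $p_1q_1$), one of the two pieces possibly being empty. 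Then I would check the six adjacencies: $H_1 \sim H_2$ through $p_{i-1}p_i$; $H_3 \sim H_4$ through $q_jq_{j+1}$; $H_4 \sim H_1$ through $p_{k-1}p_k$, or through $p_1q_1$ when $k=1$; $H_2 \sim H_3$ through $q_lq_{l+1}$, or through $p_mq_n$ when $l=n$; $H_1 \sim H_3$ through the chord $p_kq_l$; and $H_2 \sim H_4$ through the chord $p_iq_j$. Applying Lemma~\ref{lemma:describes-k4-minor} to $(H_1,H_2,H_3,H_4)$ in $H$ yields $\treewidth{H} > 2$, contradicting the hypothesis, so $\{p_i,q_j\}$ does separate $L$ from $R$ in $G[V(P) \cup V(Q)]$.

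I do not expect a deep obstacle. The substance is choosing these four branch sets so that the two chords genuinely cross; this reduces to the observation that $k<i$, $j<l$, and $V(P) \cap V(Q) = \emptyset$ make the four branch vertices pairwise distinct and place them, in the cyclic order $(p_k,p_i,q_l,q_j)$, around the cycle formed by $P$, $Q$ and the edges $p_1q_1$, $p_mq_n$. The remaining work is routine bookkeeping for degenerate cases — some $H_s$ reducing to a single vertex, or an index reaching an endpoint such as $k=1$, $j=1$, $l=n$ or $i=m$, in which case a path-edge adjacency is replaced by one of the two added edges — all of which is covered by the small alternatives already listed in the adjacency check; note also that $k<i\le m$ and $j<l\le n$ force $m,n\ge 2$.
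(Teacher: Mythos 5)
Your proof is correct and takes essentially the same approach as the paper: reduce to a single chord $p_kq_l$ joining the two sides (both-endpoints-on-$P$ or both-on-$Q$ being ruled out by inducedness), observe that this chord crosses $p_iq_j$ in the cycle formed by $P$, $Q$, $p_1q_1$, $p_mq_n$, and exhibit a $K_4$ minor via Lemma~\ref{lemma:describes-k4-minor}. The paper's four branch sets are $P[p_1,p_k]$, $P(p_k,p_m]$, $Q[q_1,q_l)$, $Q[q_l,q_n]$ --- a marginally simpler choice, since each is a single subpath and no gluing via $p_1q_1$ or $p_mq_n$ is needed inside a branch set --- but the underlying argument is identical.
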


\begin{proof}
    Assume per contradiction that for $\{p_i,q_j\}$ this does not hold. Let $R$ be a shortest path in $G[V(P) \cup V(Q)] - \{p_i,q_j\}$ from a vertex $u \in V(P[p_1,p_i)) \cup V(Q[q_1,q_j))$ to a vertex $v \in V(P(p_i,p_m]) \cup V(Q(q_j,q_n])$. Because $R$ is a shortest path in $G[V(P) \cup V(Q)]$ we have $V(R) = \{u,v\}$, which implies $uv \in E(G)$. W.l.o.g. we assume $u \in V(P)$.
    
    \begin{case}[$v \in V(Q)$]
        This case is displayed in Figure \ref{fig:non-cross-double-path:1}. We can apply Lemma \ref{lemma:describes-k4-minor} on $(P[p_1,u], P(u,p_m], Q[q_1,v), Q[v,q_n])$ to derive $\treewidth{G[V(P) \cup V(Q)] \cup \{p_1q_1, p_mq_n\}} > 2$, which is a contradiction.
    \end{case}
    
    \begin{case}[$v \in V(P)$]
        This case is displayed in Figure \ref{fig:non-cross-double-path:2}. Because $P$ is an induced path we have that $p_i$ separates $P[p_1,p_i)$ from $P(p_i,p_m]$ in $G[P]$, which contradicts $uv \in E(G)$.
    \end{case}
    
    \noindent
    Because both cases lead to a contradiction, correctness trivially follows.
\end{proof}

\noindent
Next we prove that Reduction \ref{red:ladder-reduction} is safe. We first provide the intuition on why this reduction rule is safe, before diving into the formal proof. This proof deals with two potential situations for a solution $S'$ of $(G',t)$:

The first situation is the situation where $S'$ contains two or more vertices from $(V(P) \cup V(Q)) \setminus \{p_1,p_9,q_1,q_9\}$. In this case we let $S$ be the set $S'$ where the vertices in $(V(P) \cup V(Q)) \setminus \{p_1,q_1,p_9,q_9\}$ are replaced by $p_9$ and $q_9$. This yields the situation shown in Figure \ref{fig:ladder-intuition:1}. We then prove that $\treewidth{G - S} \leq 2$ using Theorem \ref{theorem:connect-treewidth-graphs} on $(G - S, (V(P) \cup V(Q)) \setminus S)$.

In the second situation $S'$ contains at most one vertex from $(V(P) \cup V(Q)) \setminus \{p_1,q_1,p_9,q_9\}$. In this case we will show that $G[(V(P) \cup V(Q))  \setminus \{p_1,q_1,p_9,q_9\}]$ will have a minor that shows that $\{p_1,q_1\} \setminus S'$ and $\{p_9,q_9\} \setminus S'$ are separated by $(V(P) \cup V(Q)) \setminus S'$ in $G - S'$. This situation is shown in Figure \ref{fig:ladder-intuition:2}. A path between $\{p_1,q_1\} \setminus S'$ and $\{p_9,q_9\} \setminus S'$ that is disjoint from the displayed minor trivially yields a $K_4$ minor. With this observation we can apply Theorem \ref{theorem:connect-treewidth-graphs} on $(G - S', (V(P) \cup V(Q)) \setminus S')$ to derive $\treewidth{G - S'} \leq 2$.

The second situation is, however, a bit more complex that described here. The reason for this is caused by $p_1$, $p_9$, $q_1$, or $q_9$ potentially being included in $S'$. Conceptually, the way our proof works will be the same as we described, however, we will not use $\{p_1,p_9,q_1,q_9\}$ to separate $P - \{p_1,p_9\}$ and $Q - \{q_1,q_9\}$ from the remaining graph. Instead we will use some set $\{p_i,p_j,q_i,q_j\}$ with $1 < i < j < 9$ to separate $P(i,j)$ and $Q(i,j)$ from the remaining graph.

\begin{figure}[ht]
\begin{subfigure}{\textwidth}
  \centering
  \includegraphics[scale=0.7,page=5]{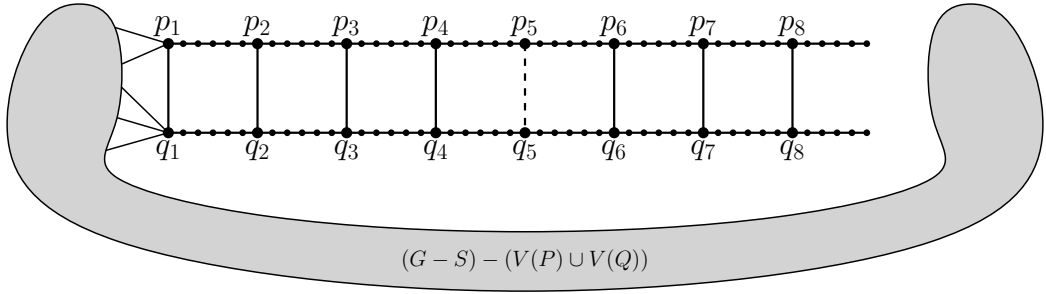}  
  \caption{Case $|S' \cap ((V(P) \cup V(Q)) \setminus \{p_1,p_9,q_1,q_9\})| \geq 2$}
  \label{fig:ladder-intuition:1}
\end{subfigure}
\begin{subfigure}{\textwidth}
  \centering
  \bigskip\bigskip
  \includegraphics[scale=0.7,page =7]{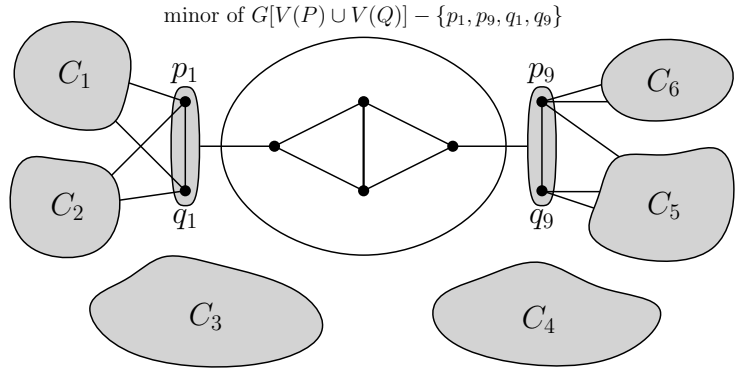}  
  \caption{Case $|S' \cap ((V(P) \cup V(Q)) \setminus \{p_1,p_9,q_1,q_9\})| \leq 1$}
  \label{fig:ladder-intuition:2}
\end{subfigure}
\caption{Intuition why Reduction \ref{red:ladder-reduction} is safe}
\label{fig:ladder-intuition}
\end{figure}

\begin{lemma}[safeness]\label{lemma:safeness:ladder-reduction}
    Let $(G',t)$ be the problem instance obtained by applying Reduction \ref{red:ladder-reduction} on $(G,t)$. Then $\twtwodeletion{G}{t} \Longleftrightarrow \twtwodeletion{G'}{t}$ holds.
\end{lemma}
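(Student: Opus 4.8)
The plan is to prove the two directions of the equivalence separately. Since $G'$ is obtained from $G$ by deleting the single edge $p_5q_5$, $G'$ is a minor of $G$, and Lemma~\ref{lemma:minor-of-G-implication} immediately gives $\twtwodeletion{G}{t} \Rightarrow \twtwodeletion{G'}{t}$. For the converse I would fix a solution $S'$ for $(G',t)$ and construct a solution $S$ for $(G,t)$ with $|S| \le |S'| \le t$. Write $L = V(P) \cup V(Q)$ and $L^{\circ} = L \setminus \{p_1,q_1,p_9,q_9\}$, and observe that $G$ and $G'$ agree except for the edge $p_5q_5 \in \binom{L^{\circ}}{2}$. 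If $\{p_5,q_5\} \cap S' \neq \emptyset$ then $G - S' = G' - S'$ and $S = S'$ works, so I would assume $p_5,q_5 \notin S'$; the argument then splits on $r = |S' \cap L^{\circ}|$, following Figures~\ref{fig:ladder-intuition:1} and~\ref{fig:ladder-intuition:2}.

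In the case $r \ge 2$ (Figure~\ref{fig:ladder-intuition:1}) I would set $S = (S' \setminus L^{\circ}) \cup \{p_9, q_9\}$; since $S'$ meets $L^{\circ}$ in at least two vertices and $\{p_9,q_9\} \cap L^{\circ} = \emptyset$, this gives $|S| \le |S'|$. To show $\treewidth{G - S} \le 2$ I would apply Theorem~\ref{theorem:connect-treewidth-graphs} to $(G - S,\, L \setminus S)$: the set $L \setminus S$ induces a subgraph of $G[V(P)\cup V(Q)]$, hence by Lemma~\ref{lemma:minor-treewidth} has treewidth at most $2$; every component $C$ of $G - (S \cup L)$ lies outside $L$, so $N_G(C) \subseteq \partial_G(L) \cup S \subseteq \{p_1,q_1,p_9,q_9\} \cup S$, and therefore $N_{G-S}(C) \subseteq \{p_1,q_1\} \setminus S$, which induces a clique because $p_1q_1 \in E(G)$; finally, since $S$ and $S'$ agree off $L$ and $V(C)$ is disjoint from $L$, the set $V(C) \cup N_{G-S}(C)$ avoids $S'$ and induces the same subgraph in $G$ and $G'$ (the changed edge $p_5q_5$ touches only $L^{\circ}$), so it is a subgraph of $G' - S'$ and has treewidth at most $2$. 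Theorem~\ref{theorem:connect-treewidth-graphs} then yields that $S$ is a modulator of $G$.

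In the case $r \le 1$ (Figure~\ref{fig:ladder-intuition:2}) I would take $S = S'$ and argue directly that re-inserting $p_5q_5$ cannot raise the treewidth above $2$, again through Theorem~\ref{theorem:connect-treewidth-graphs} applied to $G - S'$ with $U = L \setminus S'$. The heart of the matter is the claim that no component $C$ of $G - (S' \cup L)$ has neighbours in both $\{p_1,q_1\} \setminus S'$ and $\{p_9,q_9\} \setminus S'$: if such a $C$ existed, a path through $C$ joining the two ends of the ladder, combined with the two rails re-routed around the at most one vertex of $L^{\circ}$ lying in $S'$ and a surviving rung used as a chord, would describe a $K_4$ minor already present in $G' - S'$, contradicting that $S'$ is a solution for $(G',t)$. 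Granting this, every such $C$ has $N_{G-S'}(C)$ confined to a single rung-pair $\{p_1,q_1\}$ or $\{p_9,q_9\}$, which is a clique, and the treewidth bounds on the pieces are inherited from $G' - S'$ exactly as before, giving $\treewidth{G - S'} \le 2$. I expect this last case to be the main obstacle: when $S'$ also contains one or more of the four ``corner'' vertices $p_1,q_1,p_9,q_9$ the ends of the ladder degenerate, which forces the chord rung and the re-routing—and, in the tightest configurations, a swap of a corner vertex into $S$ or a shift to an interior separating window $\{p_i,q_i,p_j,q_j\}$ with $1 < i < j < 9$ chosen to dodge both the deleted corner and the deleted interior vertex—to be chosen carefully so that the $K_4$ minor still lands inside $G' - S'$.
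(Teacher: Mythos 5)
Your forward implication and your case $|S' \cap L^{\circ}| \geq 2$ match the paper's Case~1. The gap is the case $|S' \cap L^{\circ}| \leq 1$. Your plan to apply Theorem~\ref{theorem:connect-treewidth-graphs} to $(G - S', L \setminus S')$ rests on the claim that no component of $G - (S' \cup L)$ has neighbours in both $\{p_1,q_1\}\setminus S'$ and $\{p_9,q_9\}\setminus S'$, the idea being that a bridging component would betray a $K_4$ minor in $G' - S'$. Both the claim and the sketched minor argument fail. Let $G$ be the ladder $L_9$ on $\{p_1,\dots,p_9,q_1,\dots,q_9\}$ plus one extra vertex $d$ adjacent exactly to $q_1$ and $p_9$; then $\partial_G(V(P)\cup V(Q)) = \{q_1,p_9\}$ and Reduction~\ref{red:ladder-reduction} applies. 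Take $t = 3$ and $S' = \{p_1,p_8,q_9\}$, so $|S'\cap L^{\circ}| = 1$ and $\{p_5,q_5\}\cap S' = \emptyset$. The component $\{d\}$ of $G-(S'\cup L)$ has $N_{G-S'}(\{d\}) = \{q_1,p_9\}$, which meets both corner pairs and is not a clique, so Theorem~\ref{theorem:connect-treewidth-graphs} simply does not apply with $U = L\setminus S'$. Moreover $\treewidth{G'-S'}\leq 2$, so $S'$ really is a solution and there is no $K_4$ minor to extract: in $G'-S'$ the vertex $p_9$ is a pendant attached only to $d$, since both of its neighbours $p_8,q_9$ in $L$ lie in $S'$, so re-routing a rail around $p_8$ cannot reach the right end of the ladder and the bridge through $d$ never closes a $K_4$.

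The fix is the interior window shift you flag at the very end but do not carry out, and it is precisely what the paper's Case~2 does. After using the symmetry to assume $S' \cap \{p_2,p_3,p_4,q_2,q_3,q_4\} = \emptyset$, the paper applies Theorem~\ref{theorem:connect-treewidth-graphs} to $(G-S',\, W-S')$ with $W = \{p_2,\dots,p_6,q_2,\dots,q_6\}$. Here $p_3,q_3$ are guaranteed present and serve as two singleton $K_4$ branches, $\{p_4,p_5,p_6,q_4,q_5,q_6\}$ induces a $6$-cycle in $G'$ from which $S'$ removes at most one vertex and so stays connected as a third branch, and $\{p_2,q_2\}\cup V(R)$ is the fourth branch for any putative path $R$ from $\{p_2,q_2\}$ to $\{p_6,q_6\}\setminus S'$ avoiding the window interior; the resulting $K_4$ in $G'-S'$ gives the separation needed, while Lemma~\ref{lemma:non-cross-double-path} gives $\partial_G(G[W]) \subseteq \{p_2,p_6,q_2,q_6\}$, so every outside component's $(G-S')$-neighbourhood lies in one of the cliques $\{p_2,q_2\}$, $\{p_6,q_6\}$. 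In the example above, $q_1$, $d$ and $p_9$ are absorbed into a single outside component with $(G-S')$-neighbourhood $\{q_2\}$, and the argument closes. Your overall case decomposition and the closing intuition are correct; the error is starting with the full ladder $L$ as the window for Theorem~\ref{theorem:connect-treewidth-graphs}.
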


\begin{proof}
    Because $G'$ is a minor of $G$ we obtain by Lemma \ref{lemma:minor-of-G-implication} that $\twtwodeletion{G'}{t}$ implies $\twtwodeletion{G}{t}$. It remains to prove that $\twtwodeletion{G'}{t}$ implies $\twtwodeletion{G}{t}$. We let $S'$ be a solution for $(G',t)$ and we define for each $1 \leq i,j \leq 9$ graphs $H[i,j] = G[V(P[p_i,p_j]) \cup V(Q[q_i,q_j])]$, graphs $H[i,j) = H[i,j] - \{p_j,q_j\}$, graphs $H(i,j] =  H[i,j] - \{p_i,q_i\}$, and graphs $H(i,j) = H[i,j] - \{p_i,p_j,q_i,q_j\}$.
    
    \begin{case}[$|S' \cap V(H(1,9))| \geq 2$]
        Let $S = (S' \setminus V(H(1,9))) \cup \{p_9,q_9\}$. By Lemma \ref{lemma:minor-treewidth} we have $\treewidth{H[1,9] - S} \leq \treewidth{G[V(P) \cup V(Q)]} \leq 2$ and $\treewidth{G - (V(H(1,9)) \cup S)} \leq \treewidth{G' - S'} \leq 2$. Furthermore, $N_G(G - (V(H(1,9)) \cup S)) \setminus S \subseteq \partial_G(H[1,9]) \setminus S \subseteq \{p_1,q_1\}$, which induces a clique in $G$. Hence, we can apply Theorem \ref{theorem:connect-treewidth-graphs} on $(G - S, H[1,9] - S)$ to derive $\treewidth{G - S} \leq 2$. Because $|S| \leq |S'| \leq t$ this yields $\twtwodeletion{G}{t}$.
    \end{case}

    \begin{case}[$|S' \cap V(H(1,9))| \leq 1$]
        W.l.o.g. we assume $S' \cap V(H(1,5)) = \emptyset$. We first argue that $G'[V(H(3,6]) \setminus S']$ is a connected graph. We have that $G'[V(H[4,6]) \setminus S']$ will be connected because $G'[V(H[4,6])]$ trivially is biconnected and $|S' \cap V(H[4,6])| \leq |S' \cap V(H(1,9))| \leq 1$. Because $S' \cap V(H(3,4]) = \emptyset$ we also have that $G'[V(H(3,4]) \setminus S']$ is connected. Hence, $G'[V(H(3,6]) \setminus S']$ is connected.
    
        We argue that $V(H(2,6))$ separates $\{p_2,q_2\}$ from $\{p_6,q_6\} \setminus S'$ in $G' - S'$. Assume per contradiction that this does not hold and let $R$ be a shortest path in $G' - (V(H(2,6)) \cup S')$ between $\{p_2,q_2\}$ and $\{p_6,q_6\} \setminus S'$. When we contract $G'[V(H[2,3)) \cup V(R)] - \{p_6,q_6\}$ into a single vertex $a$ and we contract $G'[V(H(3,6]) \setminus S']$ into a single vertex $b$, then we have that $a$, $b$, $p_3$, and $q_3$ together form a $K_4$ minor in $G' - S'$ (see Figure \ref{fig:safeness:ladder-reduction}). Lemma \ref{lemma:minor-of-G-implication} therefore yields the contradiction $\treewidth{G' - S'} > 2$. Hence, $V(H(2,6))$ must separate $\{p_2,q_2\}$ from $\{p_6,q_6\} \setminus S'$ in $G' - S'$.
        
        We argue that $\partial_G(H[2,6]) \subseteq \{p_2,p_6,q_2,q_6\}$. Assume per contradiction that there exists a vertex $u \in \partial_G(H[2,6]) \setminus \{p_2,p_6,q_2,q_6\} \subseteq V(H(2,6))$ and let $v \in N_G(u) \setminus V(H[2,6])$. Because $\partial_G(H[1,9]) \subseteq \{p_1,p_9,q_1,q_9\}$ and $u \in V(H(1,9))$ we must have $N_G(u) \subset V(H[1,9])$, which implies $v \in V(H[1,2)) \cup V(H(6,9])$. By Lemma \ref{lemma:non-cross-double-path} we have that $\{p_2,p_6,q_2,q_6\}$ separates $u$ from $v$ in $H[1,9]$, which contradicts $v \in N_G(u)$. Therefore $\partial_G(H[2,6]) \subseteq \{p_2,p_6,q_2,q_6\}$ holds.
        
        By Lemma \ref{lemma:minor-treewidth} we have $\treewidth{H[2,6] - S'} \leq \treewidth{G[V(P) \cup V(Q)]} \leq 2$. Because $\partial_G(H[2,6]) \subseteq \{p_2,p_6,q_2,q_6\}$ and $V(H(2,6))$ separates $\{p_2,q_2\}$ from $\{p_6,q_6\} \setminus S'$ in $G' - S'$ we have for each $C \in \mathcal{C}(G - (V(H[2,6]) \cup S'))$ that $G[N_G(C)] \setminus S' \subseteq \{p_2,q_2\} \vee G[N_G(C)] \setminus S' \subseteq \{p_6,q_6\}$, which in either case is a clique. Furthermore, by Lemma \ref{lemma:minor-treewidth} we have $\treewidth{G[N_G[C] \setminus S']} \leq \treewidth{G - (V(H(2,6)) \cup S')} \leq \treewidth{G' - S'} \leq 2$. Therefore we can apply Theorem \ref{theorem:connect-treewidth-graphs} on $(G - S, H[2,6] - S')$ to derive $\treewidth{G - S'} \leq 2$, which yields $\twtwodeletion{G}{t}$.
    \end{case}

    \noindent
    Because these cases cover all cases, correctness trivially follows.
\end{proof}
\medskip
\begin{figure}[ht]
    \centering
    \includegraphics[page=4,width=.55\linewidth]{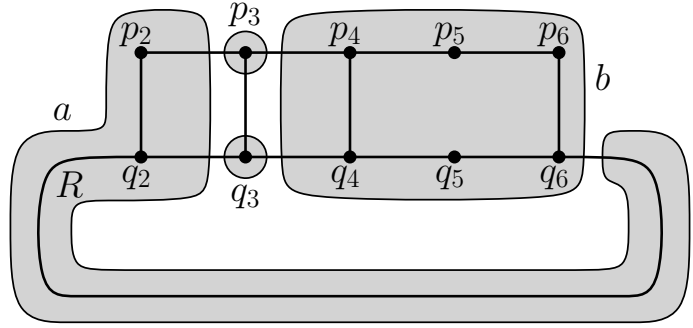}
    \caption{$K_4$ minor when $H(2,6)$ does not separate $\{p_2,q_2\}$ from $\{p_6,q_6\} \setminus S'$ in $G' - S'$}
    \label{fig:safeness:ladder-reduction}
\end{figure}
\medskip
\noindent
With this we have proven that Reduction \ref{red:ladder-reduction} is safe. However, we have not yet provided a method to find such paths $P$ and $Q$. The remaining part of this section will focus on finding applications of Reductions \ref{red:remove-middle-edge-to-isolated-path} and \ref{red:ladder-reduction}.

\section{Finding Reducible Structures}
Within this section we will let $(G,t)$ be a non-trivial problem instance, we let $X$ be a tidy modulator of $G$, and we let $B$ be a biconnected induced subgraph of $G - X$. In case $B$ is large we will want to be able to find a subgraph of $B$ on which Reduction \ref{red:remove-middle-edge-to-isolated-path} or \ref{red:ladder-reduction} can be applied.

To this end we will use the following algorithm. We first compute a smooth tree decomposition $(T^*,\chi)$ of $B$. We then recursively explore subtrees $T$ of $T^*$. In these subtrees $T$ we pick an arbitrary leaf to be its root. The subtrees $T$ we explore will have for each non-leaf node $b \in V(T)$ that $N_T(b) = N_{T^*}(b)$. This is the same as stating that $\partial_{T^*}(T)$ consists of only leaves of $T$. For example, the tree shown in Figure \ref{fig:valid-subtrees:valid} satisfies this condition and the tree shown in Figure \ref{fig:valid-subtrees:invalid} does not.

\begin{figure}[ht]
\centering
\begin{subfigure}{.45\textwidth}
  \centering
  \includegraphics[scale=.5, page=1]{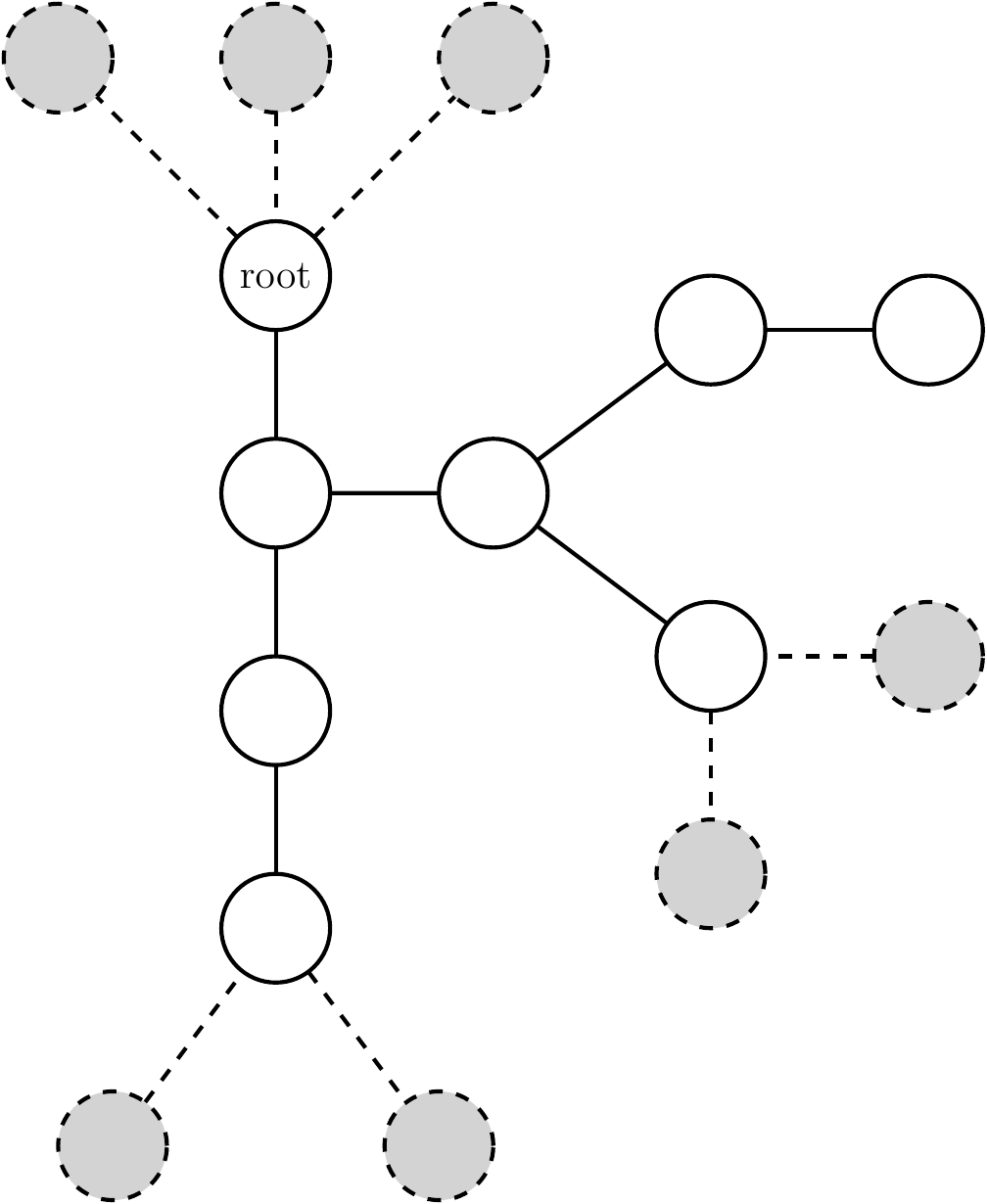}  
  \caption{A `valid' subtree with $\partial_{T^*}(T)$ consisting of only leaves}
  \label{fig:valid-subtrees:valid}
\end{subfigure}
\hspace{.04\textwidth}
\begin{subfigure}{.45\textwidth}
  \centering
  \includegraphics[scale=.5, page=2]{biconnected-reduction-problem-instance.pdf}  
  \caption{An `invalid' subtree with $\partial_{T^*}(T)$ not consisting of only leaves}
  \label{fig:valid-subtrees:invalid}
\end{subfigure}
\caption{Subtrees $T$ of $T^*$ with vertices $V(T^*) \setminus V(T)$ shown in grey}
\label{fig:valid-subtrees}
\end{figure}

\noindent
The base case of this recursive algorithm is the case where $T$ is a large linear tree (see Figure \ref{fig:finding-reducible-structures:base-case}). For such a tree we will prove that we can find a structure on which Reduction \ref{red:remove-middle-edge-to-isolated-path} or \ref{red:ladder-reduction} can be applied.

\begin{figure}[ht]
    \centering
    \includegraphics[scale=.5, page=3]{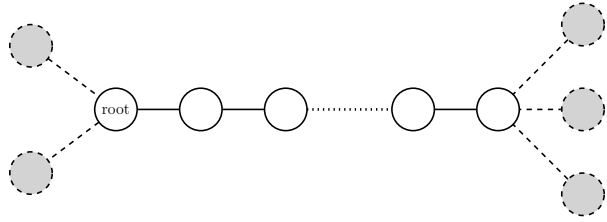}  
    \caption{Base case handled by recursive algorithm}
    \label{fig:finding-reducible-structures:base-case}
\end{figure}

\noindent
In case $T$ is not a linear tree we will find a branching node $b$ (i.e. a node with degree at least three) furthest from the root of $T$. By our choice of $b$ there are $|N_{T^*}(b)|-1$ linear subtrees and one potentially non-linear subtree in $\mathcal{C}(T - b)$. This potentially non-linear subtree will be the subtree containing the root of $T$. We will check for each subtree $T' \in \mathcal{C}(T - b)$ whether subtree $T[V(T') \cup \{b\}]$ is large and leads to the application of a reduction rule. Figure \ref{fig:finding-reducible-structures:recursive-step} displays these subtrees.

In case one of the linear subtrees is large we reach a base case. In case all linear subtrees are small, then we can prove that the subtree containing the root will be large. On this subtree we will then recurse. Because each time we recurse the number of leaves in $T$ strictly decreases, we will eventually reach a base case.

\begin{figure}[ht]
    \centering
    \includegraphics[scale=.5, page=4]{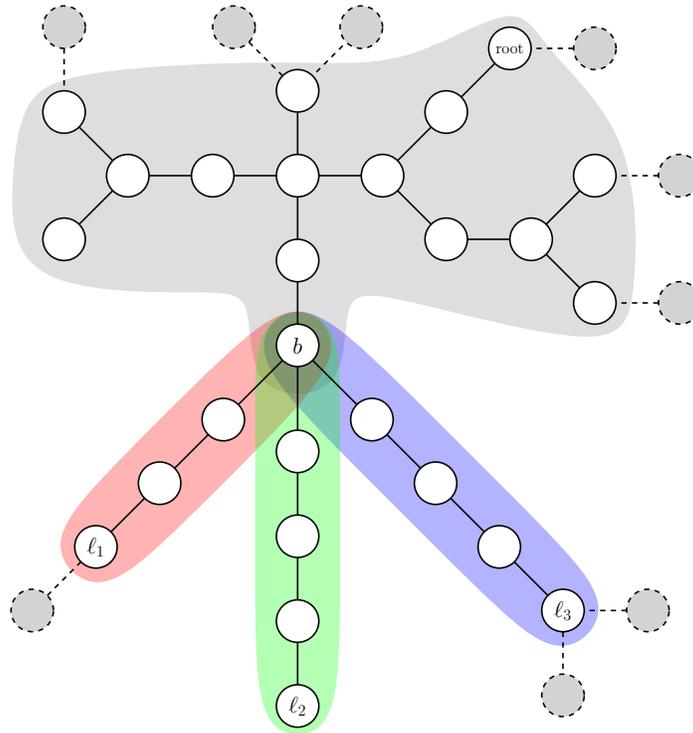}  
    \caption{Recursive step where each candidate subproblem has a different colour}
    \label{fig:finding-reducible-structures:recursive-step}
\end{figure}

\noindent
For notation convenience we introduce the definition of a biconnected reduction problem instance.

\begin{definition}\label{def:biconnected-reduction-problem-instance}
     We call $(G,t,X,B,T^*,\chi,T,L,U)$ a biconnected reduction problem instance when all of the following hold:
     \begin{itemize}
         \item $(G,t)$ is a non-trivial problem instance
         \item $X$ is a tidy modulator for $G$
         \item $B$ is a biconnected induced subgraph of $G - X$
         \item $(T^*,\chi)$ is a smooth tree decomposition of $B$ of width 2
         \item $T$ is a subtree of $T^*$
         \item $L = \setdef{\ell \in V(T)}{|N_T(\ell)| \leq 1}$
         \item $U = \chi(T) \cap \partial_G(B)$
         \item for each $b \in V(T) \setminus L$ it holds that $N_T(b) = N_{T^*}(b)$.
     \end{itemize}
\end{definition}
\bigskip
\noindent
We note that a given tuple $(G,t,X,B,T^*,\chi,T)$ uniquely defines sets $L$ (the set of leaves of $T$) and $U$ (the set of vertices in $\chi(T)$ that have neighbours outside of $B$). Hence, we can always find sets $L$ and $U$ for a given $(G,t,X,B,T^*,\chi,T)$ tuple.

We start describing this algorithm by first handling its base case, where $T$ is a large linear subtree of $T^*$. We will initially focus on two types of linear subtrees. The first type being a tree $T$ where some vertex $u \in \chi(T)$ is contained in many nodes $\bchi(u) \cap V(T)$. The second type will be a tree for which $U = \emptyset$ holds. Afterwards we will combine these two types into a general reduction for large linear subtrees. We then finish by describing the recursive search.

\subsection{Reducing Long Paths with Frequent Vertex}
We first focus on the situation where $T$ is a linear tree with a vertex $u \in V(T)$ that is contained in many nodes $\bchi(u) \cap V(T)$. We start with a simplified problem where $U \subseteq \{u\}$ holds. I.e. there are no vertices $v \in \chi(T) \setminus \{u\}$ with a neighbour outside of $B$. Then afterwards we will generalise this to any set $U$.

\begin{lemma}\label{lemma:reducing-long-paths-with-frequent-vertex-without-U}
    Let $(G,t,X,B,T^*,\chi,T,L,U)$ be a biconnected reduction problem instance with $|L| = 2$ and let $u \in V(B)$. If $U \subseteq \{u\}$ and $V(T) \subseteq \bchi(u)$ and $|V(T)| = 7$, then we can in $\polyG$ time apply Reduction \ref{red:remove-middle-edge-to-isolated-path} on an edge in $B$.
\end{lemma}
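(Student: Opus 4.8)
The plan is to realise this as a single application of Reduction~\ref{red:remove-middle-edge-to-isolated-path} taken with respect to the enlarged modulator $X' := X \cup \{u\}$ and the vertex $x := u \in X'$, following the remark preceding that reduction that $\{u\}$ is trivially a limit-1 subset. First I would check that $X'$ is again a tidy modulator of $G$: it is a modulator since $\treewidth{G - X'} \le \treewidth{G - X} \le 2$ by Lemma~\ref{lemma:minor-treewidth}, and for $y \in X$ we have $X \setminus \{y\} \subseteq X' \setminus \{y\}$ while for $y = u$ we have $X' \setminus \{y\} = X$, so $\treewidth{G - (X' \setminus \{y\})} \le 2$ in both cases by Lemma~\ref{lemma:minor-treewidth} and Definition~\ref{def:tidy-modulator}. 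It then suffices to produce an induced path $P = \langle v_1, v_2, v_3, v_4 \rangle$ in $G - X'$ with $\{v_1,\dots,v_4\} \subseteq N_G(u)$ such that the component $C$ of $G - (X' \cup \{v_1, v_4\})$ containing $P(v_1,v_4)$ has $N_G(C) \cap X'$ a limit-1 subset for $(G - v_2 u, t)$; the safeness of the resulting application is Lemma~\ref{lemma:safeness:remove-middle-edge-to-isolated-path}.

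\textbf{Pinning down the bags of $T$.} Since $|L| = 2$, the tree $T$ is a path $\langle b_1, \dots, b_7 \rangle$ with leaves $b_1, b_7$; since $V(T) \subseteq \bchi(u)$, the vertex $u$ lies in every $\chi(b_i)$; and by smoothness $|\chi(b_i)| = 3$ with $|\chi(b_i) \cap \chi(b_{i+1})| = 2$, so passing from $b_i$ to $b_{i+1}$ exactly one non-$u$ vertex is forgotten and exactly one new non-$u$ vertex is introduced. The crucial step is a peeling argument from $b_1$ to $b_7$: for an interior node $b_i$ with $2 \le i \le 6$, the vertex newly introduced at $b_i$ cannot be the one forgotten when moving to $b_{i+1}$, since otherwise that vertex would appear, within $T$, only in the single node $b_i$ (here $\bchi(\cdot) \cap V(T)$ is a subpath of $T$ by Lemma~\ref{lemma:bchi-connected-in-subtree}); as all neighbours of $b_i$ in $T^*$ lie in $T$ by Definition~\ref{def:biconnected-reduction-problem-instance}, connectivity of models in $T^*$ forces it to appear only in $b_i$ in all of $T^*$, so by Lemma~\ref{lemma:biconnected-td:beta-1-neighbours} it would lie in $\partial_G(B) \cap \chi(T) = U \subseteq \{u\}$ --- impossible, as it differs from $u$. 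Propagating this from $b_1$ onward forces $\chi(b_i) = \{u, z_i, z_{i+1}\}$ for pairwise distinct vertices $z_1, \dots, z_8$ (consistently with $|\chi(T)| = |V(T)| + 2 = 9$ via Lemma~\ref{lemma:size=of-smooth-tree-decomposition}), with moreover $\bchi(z_j) = \{b_{j-1}, b_j\}$ for each $j \in \{3,4,5,6\}$ and these being interior nodes only.

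\textbf{Reading off the reduction.} For $j \in \{3,4,5,6\}$ the vertex $z_j$ lies in $\chi(T) \setminus \{u\}$, hence $z_j \notin \partial_G(B)$ and so $N_G(z_j) \subseteq \chi(\bchi(z_j)) \setminus \{z_j\} = \{u, z_{j-1}, z_{j+1}\}$; since $(G,t)$ is non-trivial, Lemma~\ref{lemma:min-degree-3} gives $|N_G(z_j)| \ge 3$, whence $N_G(z_j) = \{u, z_{j-1}, z_{j+1}\}$. In particular $u z_j \in E(B)$ and $z_j z_{j+1} \in E(B)$, so $P := \langle z_3, z_4, z_5, z_6 \rangle$ is an induced path in $G - X'$ all of whose vertices lie in $N_G(u)$. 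Setting $v_i := z_{i+2}$, the component $C$ of $G - (X' \cup \{z_3, z_6\})$ containing $P(z_3, z_6) = \langle z_4, z_5\rangle$ is exactly $\{z_4, z_5\}$ (because $N_G(z_4) \setminus (X' \cup \{z_3, z_6\}) = \{z_5\}$ and $N_G(z_5) \setminus (X' \cup \{z_3, z_6\}) = \{z_4\}$), so $N_G(C) = \{u, z_3, z_6\}$ and $N_G(C) \cap X' = \{u\}$ --- a set of size at most one, hence trivially a limit-1 subset for $(G - z_4 u, t)$. Therefore Reduction~\ref{red:remove-middle-edge-to-isolated-path} applies and removes the edge $z_4 u \in E(B)$, and the entire construction (the smooth tree decomposition is part of the input, the peeling walk has seven steps, and the degree checks are local) runs in $\polyG$ time. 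The main obstacle is the peeling argument of the middle paragraph, where smoothness, connectivity of models, Lemma~\ref{lemma:biconnected-td:beta-1-neighbours}, and the hypothesis $U \subseteq \{u\}$ must be combined carefully; everything after that is routine bookkeeping.
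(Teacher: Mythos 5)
Your proof is correct and follows essentially the same route as the paper's: pin down the bags as $\chi(b_i) = \{u, z_i, z_{i+1}\}$ with distinct $z_i$'s via smoothness, model-connectivity and Lemma~\ref{lemma:biconnected-td:beta-1-neighbours}, use Lemma~\ref{lemma:min-degree-3} and $U \subseteq \{u\}$ to conclude $N_G(z_j) = \{u, z_{j-1}, z_{j+1}\}$ for interior $j$, and then apply Reduction~\ref{red:remove-middle-edge-to-isolated-path} to $\langle z_3, z_4, z_5, z_6 \rangle$ with modulator $X \cup \{u\}$ and the trivially limit-1 subset $\{u\}$. The only cosmetic difference is that your peeling argument is phrased as "the introduced vertex cannot be the forgotten one" while the paper directly contradicts $v_{i+1} = v_{j+1}$; these are equivalent, and your explicit check that $X \cup \{u\}$ is tidy is a small nicety the paper leaves implicit.
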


\begin{proof}
    Because $|L| = 2$ we have that $T$ is a linear tree. We let $V(T) = \{b_1, b_2, \dots, b_7\}$ be such that each $1 < i < 7$ has $N_T(b_i) = \{b_{i-1}, b_{i+1}\}$. We first prove that we can order the vertices in $\chi(T) = \{u, v_1, v_2, \dots, v_8\}$ such that each $b_i \in V(T)$ has $\chi(b_i) = \{u, v_i, v_{i+1}\}$. I.e. the situation displayed in Figure \ref{fig:lemma:reducing-long-paths-with-frequent-vertex-without-U:tree}.
    
    Because $V(T) \subseteq \bchi(u)$ it follows from Definition \ref{def:smooth-tree-decomposition} that for each $1 \leq i < 7$ there exists a single vertex $v_{i+1} \in (\chi(b_i) \cap \chi(b_{i+1})) \setminus \{u\}$. We argue that for any $1 \leq i,j < 7$ with $i \neq j$ vertices $v_{i+1}$ and $v_{j+1}$ are distinct. W.l.o.g. we assume per contradiction that $v_{i+1} = v_{j+1}$ and $i < j$. By Lemma \ref{lemma:bchi-connected-in-subtree} we have that $T[\bchi(v_{i+1})]$ is connected, which yields $\{b_i, b_{i+1}, b_{i+2}\} \subseteq \{b_i, b_{i+1}, \dots, b_{j+1}\} \subseteq \bchi(v_{i+1})$. This yields $\chi(b_i) \cap \chi(b_{i+1}) = \chi(b_{i+1}) \cap \chi(b_{i+2}) = \{u,v_{i+1}\}$. From Definition \ref{def:smooth-tree-decomposition} it follows that there exists a vertex $w \in \chi(b_{i+1}) \setminus \{u,v_{i+1}\}$ with $w \notin \chi(b_i) \wedge w \notin \chi(b_{i+2})$. Because $T[\bchi(w)]$ is connected (Lemma \ref{lemma:bchi-connected-in-subtree}) we have $\bchi(w) = \{b_{i+1}\}$, which by Lemma \ref{lemma:biconnected-td:beta-1-neighbours} yields the contradiction $w \in U \subseteq \{u\}$. Hence $v_{i+1}$ and $v_{j+1}$ must be distinct.
    
    We let $v_1 \in \chi(b_1) \setminus \chi(b_2)$ and $v_8 \in \chi(b_7) \setminus \chi(b_6)$. From Lemma \ref{lemma:bchi-connected-in-subtree} it follows that $T[\bchi(v_1)]$ is connected and $T[\bchi(v_8)]$ is connected, which implies that all vertices $v_i$ must be distinct. From the definitions of $v_i$ it follows that each $b_i \in V(T)$ has $\chi(b_i) = \{u, v_i, v_{i+1}\}$.
    
    We next prove that $N_G(v_i) = \{u, v_{i-1}, v_{i+1}\}$ holds for each $3 \leq i \leq 5$. Because $v_i \notin U$ it follows from Definition \ref{def:trivial-problem-instance} and Lemma \ref{lemma:min-degree-3} that $|N_B(v_i)| = |N_G(v_i)| \geq 3$. By Definition \ref{def:tree-decomposition} we have $N_B(v_i) \subseteq \chi(\bchi(v_i)) \setminus \{v_i\} = \chi(\{b_{i-1}, b_i\}) \setminus \{v_i\} = \{u, v_{i-1}, v_{i+1}\}$. Hence $N_G(v_i) = \{u, v_{i-1}, v_{i+1}\}$ holds. We display $G[\chi(T[b_2,b_6])]$ in Figure \ref{fig:lemma:reducing-long-paths-with-frequent-vertex-without-U:graph}.
    
     Because $X \cup \{u\}$ is a tidy modulator of $G$ and $\langle v_3, v_4, v_5, v_6 \rangle$ is a simple path in $G - (X \cup \{u\})$, and the component $C \in \mathcal{C}(G - (X \cup \{u, v_3, v_6\})$ that contains $\{v_4,v_5\}$ has $N_G(C) \cap (X \cup \{u\}) = N_G(\{v_4,v_5\}) \cap (X \cup \{u\}) = \{u\}$, which trivially is a limit-1 subset for $(G - uv_4,t)$, we can apply Reduction \ref{red:remove-middle-edge-to-isolated-path} on $uv_4 \in E(B)$. Furthermore, it is trivial that all these steps can be executed in $\polyG$ time.
\end{proof}

\begin{figure}[ht]
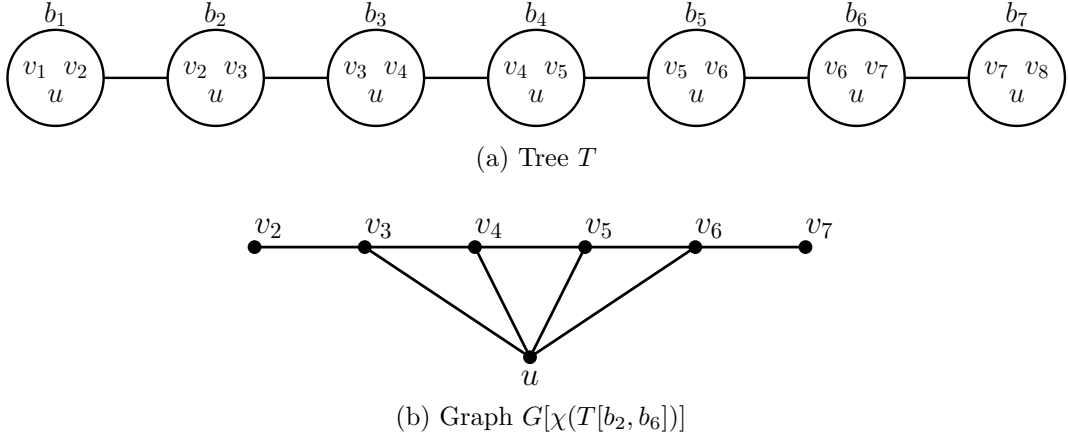

\centering
\begin{subfigure}{.95\linewidth}
    \centering
    \includegraphics[width=.9\linewidth,page=5]{biconnected-reduction-problem-instance.pdf}
    \caption{Tree $T$}
    \label{fig:lemma:reducing-long-paths-with-frequent-vertex-without-U:tree}
\end{subfigure}
~\\~\\~\\
\begin{subfigure}{.95\linewidth}
    \centering
    \includegraphics[width=.5\linewidth,page=6]{biconnected-reduction-problem-instance.pdf}
    \caption{Graph $G[\chi(T[b_2,b_6])]$}
    \label{fig:lemma:reducing-long-paths-with-frequent-vertex-without-U:graph}
\end{subfigure}
\caption{Biconnected reduction instance with $|L| = 2 \wedge U \subseteq \{u\} \wedge V(T) \subseteq \bchi(u) \wedge |V(T)| = 7$}
\label{fig:lemma:reducing-long-paths-with-frequent-vertex-without-U:}
\end{figure}

\noindent
To generalise Lemma \ref{lemma:reducing-long-paths-with-frequent-vertex-without-U} we will look at the earliest node $b_v$ along $T$ where a vertex $v \in U \setminus \{u\}$ appears `for the last time'. If the path leading up to $b_v$ is small, then the suffix (path after $b_v$) will be long compared to the number of vertices in $U \setminus \{u,v\}$, in which case we recurse on this suffix. If the path leading up to $b_v$ is large, then on this prefix we will be able to apply Lemma \ref{lemma:reducing-long-paths-with-frequent-vertex-without-U}.

\begin{lemma}\label{lemma:reducing-long-paths-with-frequent-vertex-with-U}
    Let $(G,t,X,B,T^*,\chi,T,L,U)$ be a biconnected reduction problem instance with $|L| = 2$ and let $u \in V(B)$. If $V(T) \subseteq \bchi(u)$ and $|V(T)| \geq 9|U| + 7$, then we can in $\polyG$ time apply Reduction \ref{red:remove-middle-edge-to-isolated-path} on an edge in $B$.
\end{lemma}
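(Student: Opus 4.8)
The plan is to prove the statement by induction on $|U|$, using Lemma~\ref{lemma:reducing-long-paths-with-frequent-vertex-without-U} in the base case and a cut of the path $T$ at a well-chosen node in the inductive step. Since $|L| = 2$, the tree $T$ is a path; write $V(T) = \langle b_1, \dots, b_m \rangle$ with $m = |V(T)| \geq 9|U| + 7$. First I would record that any restriction of $T$ to a subpath $T[\{b_p, \dots, b_q\}]$ again yields a biconnected reduction problem instance in the sense of Definition~\ref{def:biconnected-reduction-problem-instance}: the smooth tree decomposition $(T^*,\chi)$ and $B$ stay fixed, the sets $L$ and $U$ are re-derived from the subpath, and the condition ``$N_T(b) = N_{T^*}(b)$ for every non-leaf $b$'' is inherited because a node internal to the subpath is internal to $T$. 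In the base case $U \subseteq \{u\}$ (in particular when $|U| = 0$), restrict to $T' = T[\{b_1,\dots,b_7\}]$; this instance has $|L'| = 2$, $V(T') \subseteq \bchi(u)$, $U' \subseteq U \subseteq \{u\}$ and $|V(T')| = 7$, so Lemma~\ref{lemma:reducing-long-paths-with-frequent-vertex-without-U} applies and produces the required application of Reduction~\ref{red:remove-middle-edge-to-isolated-path}.

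For the inductive step assume some $v \in U \setminus \{u\}$ exists. By Lemma~\ref{lemma:bchi-connected-in-subtree}, $\bchi(w) \cap V(T)$ is a connected subpath, hence an interval $\{b_{i_w},\dots,b_{j_w}\}$, for every $w$; pick $v \in U \setminus \{u\}$ with the smallest last-bag index $j_v$, and set $j := j_v$. I would then extract the structural facts that drive the recursion, all of them consequences of smoothness (every bag has exactly three vertices, $u$ lies in all of them, and consecutive bags share exactly two) together with Lemma~\ref{lemma:biconnected-td:beta-1-neighbours}: (i) any $w \in U \setminus \{u\}$ occurring in some $b_i$ with $i \le j$ also occurs in $b_j$, so at most two such vertices exist, and at most one of them occurs anywhere in $\{b_1,\dots,b_{j-1}\}$ (two would share a bag with $u$ in $\{b_1,\dots,b_{j-1}\}$ and also lie in $b_j$, forcing two consecutive equal bags), where it forms a suffix-interval $\{b_{i_0},\dots,b_{j-1}\}$; consequently $T[\{b_1,\dots,b_{i_0-1}\}]$ is ``$U$-clean'' ($U$-content $\subseteq \{u\}$); and (ii) every interior bag of a $U \setminus \{u\}$-vertex $w$ forces a pairwise-distinct vertex appearing in that bag alone, which by Lemma~\ref{lemma:biconnected-td:beta-1-neighbours} lies in $\partial_G(B)$ and hence in $U$, so the span $j_w - i_w$ is bounded linearly in $|U|$.

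Then I would split into two cases. If the clean prefix $T[\{b_1,\dots,b_{i_0-1}\}]$ has at least $7$ nodes, restrict to its first $7$ and invoke Lemma~\ref{lemma:reducing-long-paths-with-frequent-vertex-without-U}. Otherwise, by (i) the at most two relevant $U \setminus \{u\}$-vertices cluster near $b_j$, and by the span bound of (ii) this pins $j$ down small enough that the suffix $T' = T[\{b_{j+1},\dots,b_m\}]$ still has at least $9(|U|-1)+7$ nodes; since $v \notin \chi(T')$ we get $U' \subseteq U \setminus \{v\}$, so $|U'| \le |U| - 1$, and the induction hypothesis applied to $T'$ yields the reduction. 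Every step --- computing the intervals, testing $U$-cleanness, restricting, and the recursion of depth at most $|U| \le |\partial_G(B)| \le 2|X|$ by Lemma~\ref{lemma:biconnected-td:number-of-vertices-with-neighbours-outside-of-biconnected} --- runs in $\polyG$ time, using that $(T^*,\chi)$ is available by Lemma~\ref{lemma:get-smooth-tree-decomposition}.

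\textbf{Main obstacle.} The delicate point is the arithmetic in the second case: one must balance the linear span bound of (ii) against how many $U$-vertices are actually eliminated when the path is cut at $b_j$, so that the suffix still meets the threshold $9|U'|+7$. The constant $9$ in the statement is precisely what is needed to absorb the worst configuration (a $U$-vertex whose span reaches far back toward $b_1$), and choosing between ``clean $7$-bag prefix window'' and ``correctly sized suffix'' requires careful index chasing rather than a one-line estimate; I expect this bookkeeping, not any conceptual difficulty, to be where the real work lies.
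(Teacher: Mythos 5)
Your overall structure matches the paper's: induction on $|U|$, cut at $b_j$ where $j$ is the smallest last-bag index over $U \setminus \{u\}$, with a fallback to Lemma~\ref{lemma:reducing-long-paths-with-frequent-vertex-without-U} once a $7$-bag prefix is $U$-clean. The gap is in the quantitative claim that ``the span bound of (ii) pins $j$ down small enough that the suffix has at least $9(|U|-1)+7$ nodes.'' Since you invoke the induction hypothesis with only $|U'| \le |U|-1$, you need $j \le 9$ in the tight case $|V(T)| = 9|U|+7$, yet (ii) as you state it bounds $w$'s span merely by $O(|U|)$, which gives $j \le i_0 + O(|U|)$, not $j \le 9$. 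The fact you need and do not use is that $W := U \cap \chi(T[b_1,b_j])$ is contained in the single bag $\chi(b_j)$: by minimality of $j$, every $w' \in W$ has $j_{w'} \ge j$, while $w'$ also appears in some $b_i$ with $i \le j$, so connectedness of $T^*[\bchi(w')]$ forces $w' \in \chi(b_j)$; hence $|W| \le 3$. Each interior-bag vertex $z_k$ for $i_0 < k < j$ lies in $W \setminus \{u,w\}$, so there is at most one such $k$, giving $j \le i_0 + 2 \le 9$ — this is exactly what the paper's Case~1.3 establishes (by contradiction for $j \ge 10$).

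You do correctly flag in the closing paragraph that one must balance the span against the eliminated $U$-vertices; indeed, noting that each $z_k$ with $i_0 < k < j$ is also removed from $U'$ gives $|U'| \le |U| - (j - i_0)$, and combined with $j \ge i_0 + 1$ and $i_0 \le 7$ the arithmetic $8j \ge 9 i_0$ closes the case even without invoking $|W| \le 3$. But that index-chasing is exactly what is deferred rather than carried out, and the bound the proof body actually relies on (linear span in $|U|$, loss of one $U$-vertex per cut) is too weak on its own to yield the conclusion.
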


\begin{proof}
    Because $|L| = 2$ we have that $T$ is a linear tree. We let $V(T) = \{b_1, b_2, \dots, b_m\}$ with $m = |V(T)|$ be such that each $1 < i < m$ has $N_T(b_i) = \{b_{i-1}, b_{i+1}\}$. We prove that we can apply Reduction \ref{red:remove-middle-edge-to-isolated-path} using induction on $U$.
    
    \begin{basecase}[$U \subseteq \{u\}$]
        Let $T' = T[b_1, b_7]$, let $L' = \{b_1,b_7\}$, and let $U' = U$. Then $(G,t,X,B,T^*,\chi,T',L',U')$ is a biconnected reduction problem instance. Correctness immediately follows from Lemma \ref{lemma:reducing-long-paths-with-frequent-vertex-without-U}.
    \end{basecase}
    
    \begin{inductivestep}[$U \not\subseteq \{u\}$]
        For each $v \in U \setminus \{u\}$ let $b_v$ be the node furthest from $b_1$ with $v \in \chi(b_v)$. Let $v \in U \setminus \{u\}$ be a vertex for which $b_v$ is closest to $b_1$. Let $1 \leq i \leq j \leq m$ such that $b_i$ is the vertex closest to $b_1$ with $v \in \chi(b_i)$ and $b_j = b_v$. Let $W = U \cap \chi(T[b_1,b_j])$. Due to our choice of $v$ we have $W \subseteq \chi(b_j)$.
        
        \begin{indsubcase}[$j \leq 9$]
            Let $T' = T(b_j, b_m]$, let $L' = \{b_{j+1}, b_m\}$, and let $U' = U \cap \chi(T') \subseteq U \setminus \{v\}$. Then $(G,t,X,B,T^*,\chi,T',L',U')$ is a biconnected reduction problem instance. We have $|V(T')| = |V(T)|-j \geq 9|U| + 7 - 9 \geq 9|U'| + 7$. Hence, when recursing on $(G,t,X,B,T^*,\chi,T',L',U')$, the removal of an edge in $B$ follows from the induction hypothesis.
        \end{indsubcase}
        
        \begin{indsubcase}[$U \cap {\chi(T[b_1,b_7])} \subseteq \{u\}$]
            Let $T' = T[b_1, b_7]$, let $L' = \{b_1, b_7\}$, and let $U' = U \cap \chi(T')$. Then $(G,t,X,B,T^*,\chi,T',L',U')$ is a biconnected reduction problem instance. Hence, when recursing on $(G,t,X,B,T^*,\chi,T',L',U')$, the removal of an edge in $B$ follows from the Lemma \ref{lemma:reducing-long-paths-with-frequent-vertex-without-U}.
        \end{indsubcase}
        
        \begin{indsubcase}[$j \geq 10 \wedge U \cap {\chi(T[b_1,b_7])} \not\subseteq \{u\}$]
            Let $w \in (U \cap \chi(T[b_1,b_7])) \setminus \{u\}$. By Lemma \ref{lemma:bchi-connected-in-subtree} and $w \in W \subseteq \chi(b_j)$ we have $V(T[b_7,b_{10}]) \subseteq \bchi(w)$. From Definition \ref{def:smooth-tree-decomposition} it follows that each $7 \leq k \leq 10$ has $\chi(b_k) = \{u,w,x_k\}$ for some vertex $x_k$, and these vertices $x_k$ are distinct. Lemma \ref{lemma:bchi-connected-in-subtree} therefore yields $\bchi(x_8) = \{b_8\}$ and $\bchi(x_9) = \{b_9\}$. By Lemma \ref{lemma:biconnected-td:beta-1-neighbours} this yields $\{x_9,x_{10}\} \subseteq W \setminus \{u,w\}$. Because $u,w \in W$ and $|W| \leq |\chi(j)| = 3$ we have $|W \setminus \{u,w\}| \leq 1$, which contradicts $u \neq w$.
        \end{indsubcase}
    \end{inductivestep}
    
    \noindent
    Each case either leads to a contradiction or describes a straightforward recursive algorithm that applies Reduction \ref{red:remove-middle-edge-to-isolated-path} on an edge in $B$. Because $|U| \leq |V(G)|$ the algorithm described by this recursion recurses at most $\polyG$ times. By Lemma \ref{lemma:size=of-smooth-tree-decomposition} we have $|V(T)| \leq \polyG$ from which it trivially follows that this algorithm takes at most $\polyG$ time.
\end{proof}

\subsection{Reducing Long Paths without Neighbours}
We next focus on the situation where $\chi(T)$ contains no vertices with neighbours outside of $B$, i.e., $U = \emptyset$. For this case we will prove that we can either apply Reduction \ref{red:remove-middle-edge-to-isolated-path} or \ref{red:ladder-reduction}.

We will first prove that we can find two vertex-disjoint induced paths $P$ and $Q$ that visit all and only these vertices $\chi(T - L)$. Once these paths are identified we will compute a maximal matching between their vertices. When this matching is of size at least 9 we will have obtained a graph with a $L_9$ ladder as minor on which Reduction \ref{red:ladder-reduction} can be applied. When this matching is of size strictly less than 9, then we will find a vertex in $P$ with many neighbours in $Q$ (or a vertex in $Q$ with many neighbours in $P$) on which we will be able to apply Reduction \ref{red:remove-middle-edge-to-isolated-path}.

We prove that we can find these two vertex-disjoint induced paths $P$ and $Q$ in two steps. We first prove this for trees $T$ with three vertices. Afterwards we inductively prove that we can construct such paths for trees $T$ with $|V(T)| \geq 3$.

\begin{lemma}\label{lemma:single-step-double-path}
    Let $(G,t,X,B,T^*,\chi,T,L,U)$ be a biconnected reduction problem instance with $|L| = 2$ and $U = \emptyset$. Let $b_2 \in V(T) \setminus L$ with $N_T(b_2) = \{b_1,b_3\}$. We can order the vertices in $\chi(b_2) = \{p,q,r\}$ such that $\chi(b_1) \cap \chi(b_2) = \{p,q\} \wedge \chi(b_2) \cap \chi(b_3) = \{p,r\} \wedge qr \in E(B)$.
\end{lemma}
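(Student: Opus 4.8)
The plan is to establish the three asserted properties of the ordering one at a time, where the first two are bookkeeping about the bags around $b_2$ and the third, the adjacency $qr \in E(B)$, is the substantive claim requiring \emph{biconnectivity} of $B$. First I would record the relevant structure. Since $T$ is a subtree of $T^*$, the edges $b_1b_2$ and $b_2b_3$ lie in $T^*$, so by smoothness (Definition~\ref{def:smooth-tree-decomposition}) we have $|\chi(b_2)| = 3$ and $|\chi(b_1)\cap\chi(b_2)| = |\chi(b_2)\cap\chi(b_3)| = 2$; moreover, since $b_2 \notin L$, Definition~\ref{def:biconnected-reduction-problem-instance} gives $N_{T^*}(b_2) = N_T(b_2) = \{b_1,b_3\}$, so $b_2$ has exactly the two neighbours $b_1,b_3$ in the whole tree $T^*$, and $T^* - b_2$ splits into exactly two components $A \ni b_1$ and $A' \ni b_3$.

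Next I would show $\chi(b_1)\cap\chi(b_2) \neq \chi(b_2)\cap\chi(b_3)$. If these were equal, say both $\{p,q\}$, then the remaining vertex $r$ of $\chi(b_2)$ would lie in no bag other than $b_2$: any bag containing $r$ besides $b_2$ would, by connectivity of $T^*[\bchi(r)]$ (property~3 of Definition~\ref{def:tree-decomposition}) and the fact that $b_1,b_3$ are the only neighbours of $b_2$, force $b_1 \in \bchi(r)$ or $b_3 \in \bchi(r)$, i.e.\ $r \in \chi(b_1)$ or $r \in \chi(b_3)$, which is false. Hence $|\bchi(r)| = 1$, so by Lemma~\ref{lemma:biconnected-td:beta-1-neighbours} we get $r \in \partial_G(B)$; but $r \in \chi(b_2) \subseteq \chi(T)$, contradicting $U = \chi(T)\cap\partial_G(B) = \emptyset$. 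So $\chi(b_1)\cap\chi(b_2)$ and $\chi(b_2)\cap\chi(b_3)$ are two distinct $2$-element subsets of the $3$-element set $\chi(b_2)$; their intersection is a single vertex, which I name $p$, I name $q$ the other vertex of $\chi(b_1)\cap\chi(b_2)$ and $r$ the other vertex of $\chi(b_2)\cap\chi(b_3)$. Then $\chi(b_2) = \{p,q,r\}$, $\chi(b_1)\cap\chi(b_2) = \{p,q\}$, and $\chi(b_2)\cap\chi(b_3) = \{p,r\}$, giving the first two required properties.

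The hard part is $qr \in E(B)$, since a shared bag does not by itself yield an edge; here I would use that $B$ is biconnected. From connectivity of $T^*[\bchi(q)]$ and $q \notin \chi(b_3)$ one gets $\bchi(q) \subseteq \{b_2\}\cup A$, and symmetrically $\bchi(r) \subseteq \{b_2\}\cup A'$. Suppose $qr \notin E(B)$. Since $|V(B)| \geq |\chi(b_2)| = 3$, Lemma~\ref{lemma:biconnected-has-cycle} gives a simple cycle through $q$ and $r$, hence two internally vertex-disjoint $q$--$r$ paths in $B$; at least one of them, say $R$, avoids $p$ (as $p \notin \{q,r\}$ can be an internal vertex of at most one), and since $qr \notin E(B)$ this $R$ has at least one internal vertex. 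Every internal vertex $v$ of $R$ satisfies $v \notin \{p,q,r\} = \chi(b_2)$, so $b_2 \notin \bchi(v)$, and therefore $\bchi(v)$, being a nonempty connected subtree of $T^* - b_2$, lies entirely in $A$ or entirely in $A'$. The internal vertex of $R$ adjacent to $q$ shares a bag with $q$, so its $\bchi$ meets $\bchi(q) \subseteq \{b_2\}\cup A$; as it does not contain $b_2$, it must lie in $A$; symmetrically the internal vertex of $R$ adjacent to $r$ has $\bchi \subseteq A'$. Walking along the internal vertices of $R$ we would have to pass from the "$A$" class to the "$A'$" class across some edge whose endpoints share a bag, contradicting $A \cap A' = \emptyset$ (and if $R$ has a single internal vertex it would lie in both classes, equally impossible). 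Hence $qr \in E(B)$. This argument is purely combinatorial on the given data, so no running-time bound is needed, matching the statement.
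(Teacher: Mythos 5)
Your proof is correct, and for the first part (that $\chi(b_1)\cap\chi(b_2) \neq \chi(b_2)\cap\chi(b_3)$) you follow essentially the same route as the paper: smoothness gives the bag sizes and intersection sizes, and if the two intersections coincided, the remaining vertex $r$ of $\chi(b_2)$ would satisfy $\bchi(r) = \{b_2\}$, so Lemma~\ref{lemma:biconnected-td:beta-1-neighbours} would put $r \in U = \emptyset$. Where you genuinely diverge is in establishing $qr \in E(B)$. The paper argues by contradiction via tree decompositions: assuming $qr \notin E(B)$, it removes $q$ from $\chi(b_2)$ (the assumption is what keeps Property~2 of Definition~\ref{def:tree-decomposition} intact), then applies Corollary~\ref{corollary:treedecomp-edge-separator} to the edge $b_1b_2$ of the modified decomposition to conclude that the singleton $\{p\}$ separates two nonempty parts of $B$, contradicting biconnectivity. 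You instead deploy biconnectivity at the outset: Lemma~\ref{lemma:biconnected-has-cycle} gives a simple cycle through $q$ and $r$, hence two internally vertex-disjoint $q$--$r$ paths, one of which avoids $p$; since $qr \notin E(B)$ that path has internal vertices, none of which lie in $\chi(b_2)$, so each has its $\bchi$-image entirely inside one of the two components of $T^* - b_2$, and the $\bchi$-images at the two ends are anchored in opposite components, forcing an edge of $B$ whose endpoints have disjoint $\bchi$-images, violating Property~2. Both arguments use biconnectivity essentially but at opposite ends of the deduction; your version is more hands-on and sidesteps the paper's implicit (and unstated) verification that deleting $q$ from $\chi(b_2)$ preserves Property~2, at the cost of a slightly longer combinatorial walk along the path.
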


\begin{proof}
    Definition \ref{def:smooth-tree-decomposition} yields $|\chi(b_1)| = |\chi(b_2)| = |\chi(b_3)| = 3 \wedge |\chi(b_1) \cap \chi(b_2)| = |\chi(b_2) \cap \chi(b_3)| = 2$. We order vertices $\chi(b_2) = \{p,q,r\}$ such that $\chi(b_1) \cap \chi(b_2) = \{p,q\} \wedge p \in \chi(b_2) \cap \chi(b_3)$.
    
    \begin{case}[$\chi(b_2) \cap \chi(b_3) = \{p,q\}$]
        From Lemma \ref{lemma:bchi-connected-in-subtree} and $r \notin \chi(b_1) \wedge r \notin \chi(b_3)$ we derive $\bchi(r) = \{b_2\}$, which by Lemma \ref{lemma:biconnected-td:beta-1-neighbours} yields the contradiction $r \in U = \emptyset$.
    \end{case}
    
    \begin{case}[$\chi(b_2) \cap \chi(b_3) = \{p,r\}$]
        Assume per contradiction $qr \notin E(B)$. Because $b_2 \notin L$ we have by Definition \ref{def:biconnected-reduction-problem-instance} that $N_{T^*}(b_2) = \{b_1, b_3\}$. Because $q \notin \chi(b_3)$ we have that $b_2$ is a leaf in $T^*[\bchi(q)]$. From Definition \ref{def:tree-decomposition} it trivially follows that removing $q$ from bag $b_2$ in $(T^*, \chi)$ results in a (non-smooth) tree decomposition $(T', \chi')$ of $B$. By Definition \ref{def:smooth-tree-decomposition} and Corollary \ref{corollary:treedecomp-edge-separator} we have that $\chi'(b_1) \cap \chi'(b_2) = \{p\}$ separates $\chi'(b_1) \setminus \chi'(b_2) \supset \{q\}$ from $\chi'(b_3) \setminus \chi'(b_2) = \chi(b_3) \setminus \chi(b_2) \neq \emptyset$ in $B$. This, however, contradicts $B$ being biconnected. Hence $qr \in E(B)$ must hold.
    \end{case}
    
    \noindent
    Because these cases cover all cases, correctness trivially follows.
\end{proof}

\begin{lemma}\label{lemma:many-steps-double-path}
   Let $(G,t,X,B,T^*,\chi,T,L,U)$ be a biconnected reduction problem instance with $L = \{\ell_1, \ell_2\}$ and $U = \emptyset$ and $|V(T)| \geq 3$. We can in $\polyG$ time obtain two vertex-disjoint induced paths $P = \langle p_1, p_2, \dots, p_m \rangle$ and $Q = \langle q_1, q_2, \dots, q_n \rangle$ in $B$ with $V(P) \cup V(Q) = \chi(T - L) \wedge p_1,q_1 \in \chi(\ell_1) \wedge p_m,q_n \in \chi(\ell_2)$.
\end{lemma}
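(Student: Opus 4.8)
The plan is to proceed by induction on the number of branching vertices of $T$, or equivalently on $|V(T)|$, using Lemma~\ref{lemma:single-step-double-path} as the base case and the glue for the inductive step. Since $|L| = |\{\ell_1,\ell_2\}| = 2$, the tree $T$ is a linear tree (a path), so write $V(T) = \langle b_1, b_2, \dots, b_k \rangle$ with $\ell_1 = b_1$, $\ell_2 = b_k$ and $N_T(b_i) = \{b_{i-1}, b_{i+1}\}$ for $1 < i < k$. The base case $k = 3$ is exactly Lemma~\ref{lemma:single-step-double-path}: applying it to $b_2$ with neighbours $b_1, b_3$ gives an ordering $\chi(b_2) = \{p, q, r\}$ with $\chi(b_1) \cap \chi(b_2) = \{p,q\}$, $\chi(b_2) \cap \chi(b_3) = \{p, r\}$, and $qr \in E(B)$. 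Here $\chi(T - L) = \chi(b_2) = \{p,q,r\}$, and I would take $P = \langle p \rangle$, $Q = \langle q, r\rangle$ (or some orientation thereof); since $qr \in E(B)$ this $Q$ is a genuine path, both are trivially induced (being of length $\le 1$), they are vertex-disjoint, and $p, q \in \chi(b_1) = \chi(\ell_1)$ while $p, r \in \chi(b_3) = \chi(\ell_2)$. Wait — I need $P$ to start and end at the right bags, so more care with orientation is needed, but the endpoint conditions $p_1, q_1 \in \chi(\ell_1)$ and $p_m, q_n \in \chi(\ell_2)$ are met by this choice since $p$ lies in both $\chi(\ell_1)$ and $\chi(\ell_2)$.

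For the inductive step with $k \ge 4$, consider the subtree $T' = T[b_1, b_{k-1}]$ with leaves $L' = \{b_1, b_{k-1}\}$ and $U' = \chi(T') \cap \partial_G(B) = \emptyset$ (a subset of $U = \emptyset$). Note $(G,t,X,B,T^*,\chi,T',L',U')$ is again a biconnected reduction problem instance: the only nontrivial condition is that every non-leaf node of $T'$ has the same $T^*$-neighbourhood as in $T^*$, which holds because non-leaf nodes of $T'$ are non-leaf nodes of $T$ (except possibly $b_{k-1}$, which is now a leaf of $T'$). By the induction hypothesis I obtain induced paths $P' = \langle p_1, \dots, p_{m'}\rangle$ and $Q' = \langle q_1, \dots, q_{n'}\rangle$ in $B$ with $V(P') \cup V(Q') = \chi(T' - L') = \chi(\langle b_2, \dots, b_{k-2}\rangle)$, with $p_1, q_1 \in \chi(b_1)$ and $p_{m'}, q_{n'} \in \chi(b_{k-1})$. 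Now I apply Lemma~\ref{lemma:single-step-double-path} to the node $b_{k-1}$ of $T^*$ (noting $N_{T^*}(b_{k-1}) = \{b_{k-2}, b_k\}$ since $b_{k-1} \notin L$), obtaining an ordering $\chi(b_{k-1}) = \{p, q, r\}$ with $\chi(b_{k-2}) \cap \chi(b_{k-1}) = \{p, q\}$, $\chi(b_{k-1}) \cap \chi(b_k) = \{p, r\}$, and $qr \in E(B)$. By smoothness, $\{p, q\} = \chi(b_{k-2}) \cap \chi(b_{k-1})$ is precisely the pair of vertices that $\chi(b_{k-1})$ shares with the previous bag; I must match this pair against $\{p_{m'}, q_{n'}\}$, the last vertices of $P'$ and $Q'$. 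The new vertex entering at $b_{k-1}$ is $r$, and it is adjacent to $q$. So I extend whichever of $P', Q'$ ends at $q$ by appending $r$: relabel so that $P'$ ends at $p = p_{m'}$ and $Q'$ ends at $q = q_{n'}$, then set $P = P'$ and $Q = Q' \text{ followed by } r$. Then $V(P) \cup V(Q) = \chi(\langle b_2, \dots, b_{k-1}\rangle) = \chi(T - L)$, the endpoint conditions hold since $p_1, q_1 \in \chi(b_1) = \chi(\ell_1)$ and $p \in \chi(b_k) = \chi(\ell_2)$ and $r \in \chi(b_k) = \chi(\ell_2)$, and $Q$ is still an induced path because $qr \in E(B)$ and — crucially — $r \notin \chi(b_j)$ for any $j < k - 1$, so $r$ has no $B$-neighbours among $V(Q') \setminus \{q\}$ or $V(P')$ by Property~2 of Definition~\ref{def:tree-decomposition}.

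The main obstacle I anticipate is the bookkeeping around \emph{which} vertex of the shared pair $\{p, q\}$ equals which endpoint of $P'$ and $Q'$, and verifying that the extended path $Q$ remains \emph{induced} rather than merely a walk. The inducedness claim rests on the tree-decomposition structure: since $r \in \chi(b_{k-1}) \setminus \chi(b_{k-2})$ and $T[\bchi(r)]$ is connected (Property~3), we get $\bchi(r) \cap V(T) = \{b_{k-1}\}$, hence every $B$-neighbour of $r$ lies in $\chi(b_{k-1}) = \{p, q, r\}$, so $r$'s only neighbours within $V(P) \cup V(Q)$ are $p$ and $q$ — and I append $r$ after $q$, so the edge $rp$ (if present) would create a chord only if $p$ is not the predecessor of $q$ in $Q$, which it is not since $P$ and $Q$ are disjoint; but actually $rp$ being an edge is harmless for $Q$ being induced since $p \notin V(Q)$. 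I also need that $P$ stays induced — it does, since $P = P'$ is unchanged. A final routine check: the polynomial running time follows because the recursion depth is $|V(T)| \le \polyG$ (bounded via Lemma~\ref{lemma:size=of-smooth-tree-decomposition}), and each step invokes Lemma~\ref{lemma:single-step-double-path} plus $O(1)$ relabelling, all doable in $\polyG$ time.
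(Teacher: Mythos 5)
Your proof follows the same inductive strategy as the paper: induction on $|V(T)|$ with base case $|V(T)|=3$ via Lemma~\ref{lemma:single-step-double-path}, and an inductive step that peels off $\ell_2$, applies the hypothesis to $T' = T - \ell_2$, and extends one of the two paths by the new vertex $r$ from the penultimate bag. Two small points to tighten: the claim $\bchi(r) \cap V(T) = \{b_{k-1}\}$ is off (in fact $r \in \chi(b_k)$ as well, so $\bchi(r)\cap V(T) = \{b_{k-1},b_k\}$, though the conclusion about $r$'s neighbours in $V(P)\cup V(Q)$ still holds), and the identification $\{p,q\} = \{p_{m'},q_{n'}\}$ needs the one-line justification (as in the paper) that $p_{m'},q_{n'}$ lie in $\chi(T'-L')$ and in $\chi(b_{k-1})$, hence by connectivity of $\bchi$ also in $\chi(b_{k-2})$, forcing $\{p_{m'},q_{n'}\} = \chi(b_{k-2})\cap\chi(b_{k-1})$.
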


\begin{proof}
    We prove using induction on $|V(T)|$.

    \begin{basecase}[$|V(T)| = 3$]
        We let $b$ be the single vertex in $V(T) \setminus L$. Lemma \ref{lemma:single-step-double-path} yields $\chi(b) = \{p,q,r\} \wedge \chi(\ell_1) \cap \chi(b) = \{p,q\} \wedge \chi(b) \cap \chi(\ell_2) = \{p,r\} \wedge qr \in E(B)$. We let $P = \langle p \rangle$ and $Q = \langle q, r \rangle$ which trivially satisfies the requirements for $P$ and $Q$.
    \end{basecase}

    \begin{inductivestep}[$|V(T)| > 3$]
        Let $b \in N_T(\ell_2)$, let $T' = T - \ell_2$, let $L' = \{\ell_1, b\}$, and let $U' = \emptyset$. Then $(G,t,X,B,T^*,\chi,T',L',U')$ is a biconnected reduction problem instance. By the induction hypothesis we obtain vertex-disjoint induced paths $P = \langle p_1, p_2, \dots, p_m \rangle$ and $Q = \langle q_1, q_2, \dots, q_n \rangle$ in $B$ with $V(P) \cup V(Q) = \chi(T' - L') \wedge p_1,q_1 \in \chi(\ell_1) \wedge p_m,q_n \in \chi(b)$.
    
        We let $b'$ such that $N_T(b) = \{b',\ell_2\}$. Because $b' \in V(T') \setminus L'$ we have by definition of $P$ and $Q$ that $\chi(b') \cap \chi(b) = \{p_m,q_n\}$. By Lemma \ref{lemma:single-step-double-path} we have $\chi(b) = \{p_m,q_n,r\}$ and either $\chi(b) \cap \chi(\ell_2) = \{p_m,r\} \wedge q_nr \in E(B)$ or $\chi(b) \cap \chi(\ell_2) = \{q_n,r\} \wedge p_mr \in E(B)$. W.l.o.g. we assume the former.
        
        Let $Q' = \langle q_1, q_2, \dots, q_n, r \rangle$. By Lemma \ref{lemma:bchi-connected-in-subtree} and $r \notin \chi(b')$ it follows that $r \notin \chi(T' - L')$. Hence, $P$ and $Q'$ must be simple and vertex-disjoint. We derive $V(P) \cup V(Q') = V(P) \cup V(Q) \cup \{p_m,q_n,r\} = \chi(T' - L') \cup \chi(b) = \chi(T - L)$. We argue that $Q'$ is an induced path. Assume per contradiction $q \in N_B(r) \cap (V(Q) \setminus \{q_n\})$. By Corollary \ref{corollary:treedecomp-edge-separator} we have that $\{p_m,q_n\}$ separates $q$ from $r$ in $B$, which contradicts $qr \in E(B)$. Therefore $Q'$ must be an induced path. We conclude that $P$ and $Q'$ are vertex-disjoint induced paths in $B$ with $V(P) \cup V(Q) = \chi(T - L) \wedge p_1,q_1 \in \chi(\ell_1) \wedge p_m,q_n \in \chi(\ell_2)$.
    \end{inductivestep}
    
    \noindent
    We note that our base case and inductive step directly describe how to obtain paths $P$ and $Q$. Because Lemma \ref{lemma:size=of-smooth-tree-decomposition} yields $|V(T)| \leq \polyG$ it trivially follows that this method allows us to obtain $P$ and $Q$ in $\polyG$ time.
\end{proof}

\noindent
By Lemma \ref{lemma:many-steps-double-path} we can now find two vertex-disjoint induced paths $P$ and $Q$ and we can prove (as we will do later) that these paths have $\partial_G(V(P) \cup V(Q)) = \{p_1,p_m,q_1,q_n\}$. I.e., the situation shown in Figure \ref{fig:many-steps-double-path}. Because these paths are induced in $G$ and $G$ has minimum degree three there will be edges between vertices in $P$ and $Q$. In the next step we will compute a maximal matching over these edges. In case the maximal matching is small we will find a structure that can be reduced using Reduction \ref{red:remove-middle-edge-to-isolated-path} (see Figure \ref{fig:lemma:reducing-long-paths-with-frequent-vertex-without-U:graph}). In case the maximal matching is large we will instead find a structure that can be reduced using Reduction \ref{red:ladder-reduction} (see Figure \ref{fig:ladder-reduction}).

\begin{figure}[ht]
    \centering
    \includegraphics[scale=.75,page=8]{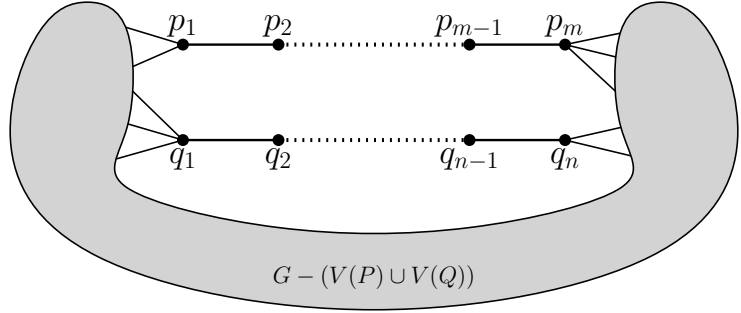}
    \caption{The two vertex-disjoint induced paths $P$ and $Q$ found in $T$ with $U = \emptyset$}
    \label{fig:many-steps-double-path}
\end{figure}

\begin{lemma}\label{lemma:find-ladder-reduction}
    Let $(G,t,X,B,T^*,\chi,T,L,U)$ be a biconnected reduction problem instance with $L = \{\ell_1, \ell_2\}$ and $U = \emptyset$ and $|V(T)| \geq 86$. We can in $\polyG$ time apply either Reduction \ref{red:remove-middle-edge-to-isolated-path} or \ref{red:ladder-reduction} on an edge in $B$.
\end{lemma}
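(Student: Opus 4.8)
The plan is to hand the long linear subtree $T$ (note $|L|=2$ means $T$ is a path) to Lemma~\ref{lemma:many-steps-double-path}, obtaining in polynomial time two vertex-disjoint induced paths $P=\langle p_1,\dots,p_m\rangle$ and $Q=\langle q_1,\dots,q_n\rangle$ in $B$ with $V(P)\cup V(Q)=\chi(T-L)$, $\{p_1,q_1\}\subseteq\chi(\ell_1)$ and $\{p_m,q_n\}\subseteq\chi(\ell_2)$. I would first record three facts about these paths. \emph{(i)} $G[V(P)\cup V(Q)]$ is an induced subgraph of $G-X$, hence has treewidth at most $2$; moreover the edges $p_1q_1$ and $p_mq_n$ may be added while keeping treewidth at most $2$, because $p_1,q_1$ lie together in the bag $\chi(\ell_1)$ and $p_m,q_n$ together in $\chi(\ell_2)$ of the smooth decomposition of $B$ — this is exactly the hypothesis of Lemma~\ref{lemma:non-cross-double-path}. \emph{(ii)} $\partial_G(V(P)\cup V(Q))\subseteq\{p_1,p_m,q_1,q_n\}$: since $U=\emptyset$, no vertex of $\chi(T)$ has a neighbour outside $B$; and for $v\in\chi(T-L)$ lying in no leaf bag, the defining property $\partial_{T^*}(T)\subseteq L$ (together with connectivity of $\bchi(v)$ in $T^*$) forces $\bchi(v)\subseteq V(T-L)$, so by Definition~\ref{def:tree-decomposition} all $B$-neighbours of $v$ lie in $\chi(T-L)$; a short check shows the only vertices of $\chi(T-L)$ occurring in a leaf bag are the four endpoints. \emph{(iii)} By Lemma~\ref{lemma:size=of-smooth-tree-decomposition} and the fact that each leaf bag contributes exactly one vertex outside $\chi(T-L)$, $|V(P)|+|V(Q)|=|\chi(T-L)|=|V(T)|\ge 86$, so one of the two paths — say $P$, after renaming — has at least $43$ vertices.

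Next I would compute a maximal matching $M$ of the \emph{rung} edges of $G$ between $V(P)$ and $V(Q)$ (polynomial time, Observation~\ref{obs:get-maximal-matching}). By Lemma~\ref{lemma:non-cross-double-path} (applicable by (i)) rung edges do not cross, and a one-line argument shows the matched pairs are strictly monotone along both paths. The proof then splits on $|M|$. If $|M|\ge 9$, choose nine matched pairs $p_{i_1}q_{j_1},\dots,p_{i_9}q_{j_9}$ with $i_1<\dots<i_9$ (hence $j_1<\dots<j_9$) and set $P'=P[p_{i_1},p_{i_9}]$, $Q'=Q[q_{j_1},q_{j_9}]$: these are vertex-disjoint induced paths visiting the nine chosen vertices in order with all nine rungs present, and using the separations coming from the two extreme rungs one sees that every interior vertex of $P'$ (resp. $Q'$) has all its $G$-neighbours inside $V(P')\cup V(Q')$; together with (ii) this gives $\partial_G(V(P')\cup V(Q'))\subseteq\{p_{i_1},p_{i_9},q_{j_1},q_{j_9}\}$, and $G[V(P')\cup V(Q')]$ has treewidth at most $2$ as an induced subgraph. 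Hence Reduction~\ref{red:ladder-reduction} applies and removes $p_{i_5}q_{j_5}\in E(B)$.

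If instead $|M|\le 8$, then $|V(M)\cap V(Q)|=|M|\le 8$. Every interior vertex of $P$ has degree at least $3$ in $G$ (Lemma~\ref{lemma:min-degree-3}), exactly two neighbours on $P$ (induced path), and — being a non-endpoint — no neighbour outside $V(P)\cup V(Q)$ by (ii); so it has at least one neighbour on $Q$, and for an \emph{unmatched} such vertex that neighbour lies in $V(M)\cap V(Q)$ by maximality of $M$. There are at least $|V(P)|-2-|M|\ge 43-2-8=33$ unmatched interior vertices of $P$, so the pigeonhole principle yields a vertex $q^*\in V(M)\cap V(Q)$ with at least four of them as neighbours; take four, $v_1,v_2,v_3,v_4$ in order along $P$ (all interior, hence non-endpoints of $P$). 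Writing $v_1=p_a$, $v_4=p_d$, $q^*=q_\beta$, applying the non-crossing property to the rungs $p_aq^*$ and $p_dq^*$ shows that no $p_k$ with $a<k<d$ can have a $Q$-neighbour other than $q^*$ (such a $q_\ell$ would have to satisfy both $\ell>\beta$ and $\ell<\beta$). Therefore the component $C$ of $G-(X\cup\{q^*\}\cup\{v_1,v_4\})$ containing the sub-path $P(v_1,v_4)$ is \emph{exactly} $V(P(v_1,v_4))$, so $N_G(C)\subseteq\{v_1,v_4,q^*\}$ and $N_G(C)\cap(X\cup\{q^*\})\subseteq\{q^*\}$, which is trivially a limit-$1$ subset. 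As $X\cup\{q^*\}$ is again a tidy modulator of $G$ (Definition~\ref{def:tidy-modulator} and Lemma~\ref{lemma:minor-treewidth}), Reduction~\ref{red:remove-middle-edge-to-isolated-path} applies with modulator $X\cup\{q^*\}$, distinguished vertex $q^*$, path $P$ and neighbours $v_1,\dots,v_4$, removing $q^*v_2\in E(B)$. All steps run in polynomial time (in particular $|V(T)|\le|V(G)|$), so the lemma follows.

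I expect the main obstacle to be the small-matching case: one must (a) get the numerical calibration right so that $|M|\le 8$ together with $|V(T)|\ge 86$ genuinely forces a vertex with four neighbours on the opposite path, and (b) verify that the ``middle'' component $C$ collapses to the bare sub-path $P(v_1,v_4)$ and therefore has no neighbour in $X$ — the non-crossing lemma is exactly what makes (b) work, but keeping the bookkeeping of corner versus interior vertices straight is the delicate part. By contrast the ladder case and the three preliminary facts are comparatively routine once Lemma~\ref{lemma:non-cross-double-path} and fact (ii) are in place.
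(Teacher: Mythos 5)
Your proof is correct and follows essentially the same route as the paper's: after obtaining $P$ and $Q$ from Lemma~\ref{lemma:many-steps-double-path} and establishing the treewidth, boundary, and size facts, you compute a maximal rung matching $M$ and split on $|M|\le 8$ versus $|M|\ge 9$ exactly as the paper does, landing on Reduction~\ref{red:remove-middle-edge-to-isolated-path} with modulator $X\cup\{q^*\}$ in the first case and Reduction~\ref{red:ladder-reduction} on a truncated pair of subpaths in the second. Two of your sub-arguments are pleasantly more direct than the paper's — you justify $\treewidth{G[V(P)\cup V(Q)]\cup\{p_1q_1,p_mq_n\}}\le 2$ by observing that both endpoints of each added edge already share a bag ($\chi(\ell_1)$ resp.\ $\chi(\ell_2)$) of the restricted decomposition, whereas the paper realizes this doubled graph as a minor of $B$ via two cycle segments through $D_1$ and $D_2$; and in the small-matching case you pigeonhole over the $\ge 33$ unmatched interior vertices of the longer path rather than over edges of $H-M$ incident to $V(M)$ — but these reach the same intermediate configurations and the rest of the argument coincides.
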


\begin{proof}
    By Lemma \ref{lemma:many-steps-double-path} we can in $\polyG$ time obtain two vertex-disjoint induced paths $P = \langle p_1, p_2, \dots, p_m \rangle$ and $Q = \langle q_1, q_2, \dots, q_n \rangle$ in $G$ with $V(P) \cup V(Q) = \chi(T - L) \wedge p_1,q_1 \in \chi(\ell_1) \wedge p_m,q_n \in \chi(\ell_2)$.
    
    We first prove that $\treewidth{G[V(P) \cup V(Q)] \cup \{p_1q_1, p_mq_n\}} \leq 2$. By Definition \ref{def:smooth-tree-decomposition} we know that there exists a vertex $u \in \chi(\ell_1) \setminus \{p_1,q_1\}$ and a vertex $v \in \chi(\ell_2) \setminus \{p_m,q_n\}$. Let $D_1 \in \mathcal{C}(B - \{p_1,q_1\})$ such that $u \in V(D_1)$ and $D_2 \in \mathcal{C}(B - \{p_m,q_n\})$ such that $v \in V(D_2)$. By Lemma \ref{lemma:biconnected-has-cycle} there exists a simple cycle $C$ in $B$ that contains $u$ and $v$. By Corollary \ref{corollary:treedecomp-edge-separator} we have that $\{p_1,q_1\}$ separates $u$ from $v$ in $B$ (and hence also in $C$), which implies that there exists a path $R_1$ from $p_1$ to $q_1$ in $C[V(D_1) \cup \{p_1,q_1\}]$. Similarly, there exists a path $R_2$ from $p_m$ to $q_n$ in $C[V(D_2) \cup \{p_m,q_n\}]$. Contracting $R_1 - \{q_1\}$ into $p_1$ and $R_2 - \{q_n\}$ in $p_m$ results in $G[V(P) \cup V(Q)] \cup \{p_1q_1, p_mq_n\}$ as a minor of $B$. By Lemma \ref{lemma:minor-treewidth} this yields $\treewidth{G[V(P) \cup V(Q)] \cup \{p_1q_1, p_mq_n\}} \leq \treewidth{B} \leq \treewidth{G - X} \leq 2$.
    
    We prove that $N_G[(V(P) \cup V(Q)) \setminus \{p_1,p_m,q_1,q_n\}] \subseteq V(P) \cup V(Q)$. By Corollary \ref{corollary:treedecomp-edge-separator} we have that $\{p_1,p_m,q_1,q_n\}$ separates $\chi(T - L) \setminus \{p_1, p_m, q_1, q_n\}$ from $V(B) \setminus \chi(T - L)$ in $B$. Because $U = \emptyset$ we can therefore derive $N_G[(V(P) \cup V(Q)) \setminus \{p_1,p_m,q_1,q_n\}] = N_B[(V(P) \cup V(Q)) \setminus \{p_1,p_m,q_1,q_n\}] = N_B[\chi(T - L) \setminus \{p_1, p_m, q_1, q_n\}] \subseteq \chi(T - L) = V(P) \cup V(Q)$.

    Let $H$ be a bipartite graph with $V(H) = V(P) \cup V(Q)$ and $E(H) = (V(P) \times V(Q)) \cap E(B)$, which we can trivially construct in $\polyG$ time. We argue that each $p \in V(P) \setminus \{p_1, p_m\}$ has $N_H(p) \neq \emptyset$. Because $p \in (V(P) \cup V(Q)) \setminus \{p_1,p_m,q_1,q_n\}$ we have $N_G(p) \subseteq V(P) \cup V(Q)$. Because $P$ is an induced path we have $|N_B(p) \cap V(P)| \leq 2$. Definition \ref{def:trivial-problem-instance} and Lemma \ref{lemma:min-degree-3} therefore yields $N_B(p) \cap V(Q) \neq \emptyset$, which implies $N_H(p) \neq \emptyset$. A symmetric argument proves that each $q \in V(Q) \setminus \{q_1, q_n\}$ has $N_H(q) \neq \emptyset$. This then gives $|E(H)| \geq (|V(P)| + |V(Q)| - 4) / 2$. By Lemma \ref{lemma:size=of-smooth-tree-decomposition} we have $|V(T)| = |\chi(T)| - 2  = |\chi(T-L)| = |V(P)| + |V(Q)|$, which yields $|E(H)| \geq (|V(T)| - 4) / 2 \geq 41$. We can in $\polyG$ time obtain a maximal matching $M$ in $H$ (see Observation \ref{obs:get-maximal-matching}).
    
    \begin{case}[$|M| \leq 8$]
        Let $H' = H - M$. Because $M$ is a maximal matching, each $pq \in E(H')$ has $p \in V(M) \vee q \in V(M)$. Because $|V(M)| \leq 16$ there exists some $r \in V(M)$ with $|N_{H'}(r)| \geq \lceil |E(H')| / 16 \rceil \geq \lceil (|E(H)| - 8) / 16 \rceil \geq  \lceil 33 / 16 \rceil \geq 3$, which yields $|N_H(r)| = |N_{H'}(r)| + 1 \geq 4$. W.l.o.g. we assume $r \in V(Q)$. Let $r = q_j \in V(Q)$ and $\{p_a, p_b, p_c, p_d\} \subseteq N_H(q_j)$ with $1 \leq a < b < c < d \leq m$.
    
        Because $V(P(p_a,p_d)) \subseteq (V(P) \cup V(Q)) \setminus \{p_1,p_m,q_1,q_n\}$ we have $N_G[P(p_a,p_d)] \subseteq V(P) \cup V(Q)$. By Lemma \ref{lemma:non-cross-double-path} we have that $\{p_a,p_d,q_j\}$ separates $V(P(p_a,p_d))$ from $V(P[p_1, p_a)) \cup V(Q[q_1,q_j)) \cup V(P(p_d, p_m]) \cup V(Q(q_j, q_n])$ in $G[V(P) \cup V(Q)]$. This therefore yields that $N_G[P(p_a,p_d)] \subseteq (V(P) \cup V(Q)) \setminus (V(P[p_1, p_a)) \cup V(Q[q_1,q_j)) \cup V(P(p_d, p_m]) \cup V(Q(q_j, q_n])) = V(P[p_a,p_d]) \cup \{q_j\}$. I.e., we obtain the situation shown in Figure \ref{fig:find-ladder-reduction}.
    
        We note that $X \cup \{q_j\}$ is a tidy modulator of $G$ and $P(p_a,p_d) \in \mathcal{C}(G - (X \cup \{p_a,p_d,q_j\}))$ with $N_G(P(p_a,p_d)) \cap (X \cup \{q_j\}) = \{q_j\}$. Because $\{q_j\}$ is a limit-1 clique for $(G - p_bq_j,t)$ we can apply Reduction \ref{red:remove-middle-edge-to-isolated-path} on $p_bq_j \in E(B)$.
    \end{case}

    \begin{case}[$|M| \geq 9$]
        Let $1 \leq i, i' \leq m$ such that $p_i$ and $p_{i'}$ are the first and last vertex along $P$ in $V(M)$ respectively. Similarly, let $1 \leq j, j' \leq n$ such that $q_j$ and $q_{j'}$ are the first and last vertex along $Q$ in $V(M)$ respectively. From Lemma \ref{lemma:non-cross-double-path} it follows that all edges in $M$ can be ordered corresponding to the order in which their endpoints appear along both $P$ and $Q$. I.e., the graph $G[V(P) \cup V(Q)]$ has a $L_9$ minor, where the ladder rungs are edges in $M$.
        
        By Lemma \ref{lemma:non-cross-double-path} we have that $\{p_i,p_{i'},q_j,q_{j'}\}$ separates $V(P(p_i,p_{i'})) \cup V(Q(q_j,q_{j'}))$ from $V(P[p_1, p_i)) \cup V(Q[q_1,q_j)) \cup V(P(p_{i'}, p_m]) \cup V(Q(q_{j'}, q_n])$ in $G[V(P) \cup V(Q)]$. Because $U = \emptyset$ we therefore have $N_G[V(P(p_i, p_{i'})) \cup V(Q(q_j, q_{j'}))] \subseteq V(P[p_i, p_{i'}]) \cup V(Q[q_j, q_{j'}])$. Therefore we have $\partial_G(V(P[p_i, p_{i'}]) \cup V(Q[q_j, q_{j'}])) \subseteq \{p_i, p_{i'}, q_j, q_{j'}\}$. From the definitions of $P$, $Q$, $M$, $i$, $i'$, $j$, and $j'$ it then follows that we can apply Reduction \ref{red:ladder-reduction} on $P[p_i,p_{i'}]$ and $Q[q_j,q_{j'}]$, which removes some edge $pq \in M \subseteq E(B)$.
    \end{case}
    
    \noindent
    Because these cases cover all cases, it trivially follows that we can in $\polyG$ time apply either Reduction \ref{red:remove-middle-edge-to-isolated-path} or \ref{red:ladder-reduction} on some edge in $B$. 
\end{proof}

\begin{figure}[ht]
    \centering
    \includegraphics[scale=.8,page=9]{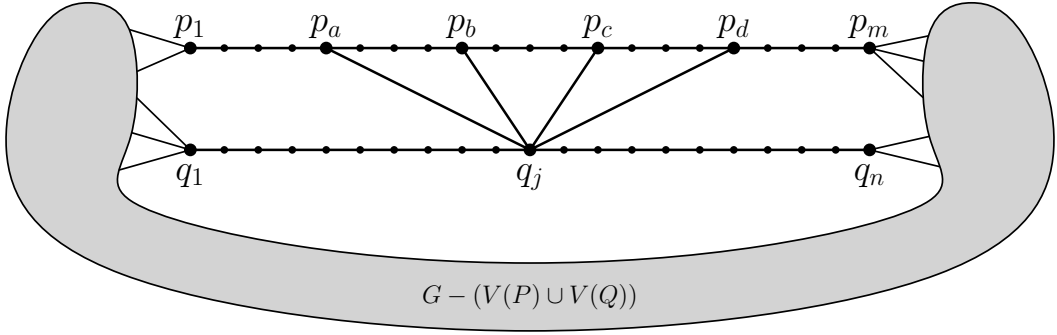}
    \caption{Reducible case where $q_j \in V(Q)$ has four neighbours in $V(P)$}
    \label{fig:find-ladder-reduction}
\end{figure}

\subsection{Reducing Long Paths}
We next combine the situation where a linear tree $T$ has a vertex $u \in \chi(b)$ with $|\bchi(u) \cap V(T)| \geq 9|U|+7$ and the case where $U = \emptyset$ to prove that any linear subtree $T$ of $T^*$ that is large compared to $|U|$ leads to an application of Reduction \ref{red:remove-middle-edge-to-isolated-path} or \ref{red:ladder-reduction}.

We will use a proof similar to the proof of Lemma \ref{lemma:reducing-long-paths-with-frequent-vertex-with-U}. We look at the earliest node $b_u$ along $T$ where a vertex $u \in U$ appears `for the last time'. If the path leading up to $b_u$ is small, then the suffix will be long compared to the number of vertices in $U \setminus \{u\}$, in which case we recurse on this suffix. If the first 86 bags do not contain any vertex from $U$, then on this path we can apply Lemma \ref{lemma:find-ladder-reduction}. If, on the other hand, some of the first 86 bags contain a vertex $w \in U$ and the prefix to $b_u$ is large, then by our choice of $u$ we have found a large linear subtree $T'$ of $T$ with $V(T') \subseteq \bchi(w)$, which allows us to apply Lemma \ref{lemma:reducing-long-paths-with-frequent-vertex-with-U}.

\begin{lemma}\label{lemma:reduce-long-path}
    Let $(G,t,X,B,T^*,\chi,T,L,U)$ be a biconnected reduction problem instance with $|L| = 2$. If $|V(T)| \geq 118|U| + 86$, then we can in $\polyG$ time apply Reduction \ref{red:remove-middle-edge-to-isolated-path} or \ref{red:ladder-reduction} on an edge in $B$.
\end{lemma}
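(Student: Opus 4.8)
The plan is to argue exactly as in the proof of Lemma~\ref{lemma:reducing-long-paths-with-frequent-vertex-with-U}, by induction on $|U|$: whenever a boundary vertex of $B$ ``disappears'' early along $T$ we peel off a short prefix and recurse on the suffix, and otherwise we cut out a long subtree handled by one of the two base lemmas. Since $|L|=2$, the tree $T$ is linear; write $V(T)=\langle b_1,\dots,b_m\rangle$ with $N_T(b_i)=\{b_{i-1},b_{i+1}\}$ for $1<i<m$, and for a vertex $v$ occurring in some bag of $T$ let $\mathrm{first}(v)$ and $\mathrm{last}(v)$ be the smallest and largest index of such a bag; by Lemma~\ref{lemma:bchi-connected-in-subtree}, $v$ then occurs in every bag $b_k$ with $\mathrm{first}(v)\le k\le\mathrm{last}(v)$. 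The base case $U=\emptyset$ is immediate: $|V(T)|=m\ge 86$, so Lemma~\ref{lemma:find-ladder-reduction} applies to $(G,t,X,B,T^*,\chi,T,L,U)$ and already yields an application of Reduction~\ref{red:remove-middle-edge-to-isolated-path} or~\ref{red:ladder-reduction} on an edge of $B$.

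For the inductive step $U\neq\emptyset$ (so $m\ge 118|U|+86\ge 204$), choose $u\in U$ minimising $j:=\mathrm{last}(u)$ and set $W:=U\cap\chi(T[b_1,b_j])$. The crucial observation is that $W\subseteq\chi(b_j)$, hence $|W|\le|\chi(b_j)|=3$ by smoothness (Definition~\ref{def:smooth-tree-decomposition}): any $v\in W$ lies in some $b_k$ with $k\le j$ and, by the choice of $u$, has $\mathrm{last}(v)\ge\mathrm{last}(u)=j$, so by Lemma~\ref{lemma:bchi-connected-in-subtree} it lies in $b_j$ as well. Now distinguish cases on $j$. If $j\le 118$, recurse on $T':=T[b_{j+1},b_m]$ with $L':=\{b_{j+1},b_m\}$ and $U':=\chi(T')\cap\partial_G(B)$; this is again a biconnected reduction problem instance (its interior nodes are interior nodes of $T$, so they keep their $T^*$-neighbourhoods), we have $u\notin\chi(T')$ so $|U'|\le|U|-1$, and $|V(T')|=m-j\ge(118|U|+86)-118=118(|U|-1)+86\ge 118|U'|+86$, so the induction hypothesis finishes this case. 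If $j\ge 119$ and no vertex of $U$ occurs among $b_1,\dots,b_{86}$, then $T[b_1,b_{86}]$ with empty boundary set is a biconnected reduction problem instance of size $86$ and Lemma~\ref{lemma:find-ladder-reduction} applies. If $j\ge 119$ and some $w\in U$ occurs in $b_1,\dots,b_{86}$, put $i_0:=\mathrm{first}(w)\le 86$; since $\mathrm{last}(w)\ge j$, the vertex $w$ occurs in every bag of $T':=T[b_{i_0},b_j]$, so $V(T')\subseteq\bchi(w)$, and since $\chi(T')\subseteq\chi(T[b_1,b_j])$ we get $U':=\chi(T')\cap\partial_G(B)\subseteq W$, hence $|U'|\le 3$ and $9|U'|+7\le 34\le j-i_0+1=|V(T')|$; thus Lemma~\ref{lemma:reducing-long-paths-with-frequent-vertex-with-U} (with frequent vertex $w$) applies. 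In every case we obtain, in $\polyG$ time, an application of Reduction~\ref{red:remove-middle-edge-to-isolated-path} or~\ref{red:ladder-reduction} on an edge of $B$; the recursion has depth at most $|U|\le|V(G)|$ and, since $|V(T)|\le\polyG$ by Lemma~\ref{lemma:size=of-smooth-tree-decomposition}, each level costs $\polyG$ time.

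The remaining work is essentially bookkeeping, and I expect two places to require care. First, the bound $W\subseteq\chi(b_j)$ (and therefore $|U'|\le 3$ in the last case) is what makes the threshold $34=9\cdot 3+7$ of Lemma~\ref{lemma:reducing-long-paths-with-frequent-vertex-with-U} affordable as soon as $j\ge 119$, because a subtree contained in $\bchi(w)$ that is cut out of the prefix $T[b_1,b_j]$ can see at most the three boundary vertices of $W$; getting this clean — using Lemma~\ref{lemma:bchi-connected-in-subtree} and the minimality of $\mathrm{last}(u)$ — is the one genuinely conceptual step. Second, one must check that the constant $118$ in the hypothesis is exactly calibrated: $119=86+33$ with $33=34-1$, and $118=119-1$ is precisely the number of prefix bags the suffix recursion is allowed to lose, which is why the inequality $|V(T')|\ge 118|U'|+86$ in the recursive case is tight; and one must verify that each subtree handed to a sub-lemma ($T[b_{j+1},b_m]$, $T[b_1,b_{86}]$, $T[b_{i_0},b_j]$) still satisfies all clauses of Definition~\ref{def:biconnected-reduction-problem-instance}, which holds since each is a contiguous sub-path of $T$ whose non-leaf nodes are non-leaf nodes of $T$ and hence inherit $N_T(b)=N_{T^*}(b)$.
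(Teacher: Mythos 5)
Your proof is correct and follows essentially the same strategy as the paper's: induction on $|U|$, picking the boundary vertex $u$ that vanishes earliest along the linear tree, then splitting into three cases ($j\le 118$, the first 86 bags are $U$-free, or else a frequent vertex $w$). The only cosmetic difference is in the third case: the paper fixes $T' = T[b_{86}, b_{119}]$ (a window of exactly 34 bags) and argues $V(T')\subseteq\bchi(w)$ for the witness $w$, whereas you take the possibly larger subtree $T' = T[b_{i_0}, b_j]$ starting at $\mathrm{first}(w)$; both choices lie inside $\bchi(w)$, have $\ge 34$ bags, and inherit $U' \subseteq W$ with $|W|\le 3$, so either works.
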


\begin{proof}
    Because $|L| = 2$ we have that $T$ is a linear tree. We let $V(T) = \{b_1, b_2, \dots, b_m\}$ with $m = |V(T)|$ such that each $1 < i < m$ has $N_T(b_i) = \{b_{i-1}, b_{i+1}\}$. We prove that we can apply a reduction using induction on $|U|$.
    
    \begin{basecase}[$U = \emptyset$]
        Because $|V(T)| \geq 86$ correctness follows from Lemma \ref{lemma:find-ladder-reduction}.
    \end{basecase}
    
    \begin{inductivestep}[$U \neq \emptyset$]
        For each $u \in U$ let $b_u$ be the node furthest from $b_1$ with $u \in \chi(b_u)$. Let $u \in U$ be a vertex for which $b_u$ is closest to $b_1$. Let $1 \leq i \leq j \leq m$ such that $b_i$ is the vertex closest to $b_1$ with $u \in \chi(b_i)$ and $b_j = b_u$. Let $W = U \cap V(T[b_1,b_j])$. Due to our choice of $u$ we have $W \subseteq \chi(b_j)$.
        
        \begin{indsubcase}[$j \leq 118$]
            Let $T' = T(b_j,b_m]$, let $L' = \{b_{j+1}, b_m\}$, and let $U' = U \cap \chi(T') \subseteq U \setminus \{u\}$. Then $(G,t,X,B,T^*,\chi,T',L',U')$ is a biconnected reduction problem instance. We have $|V(T')| = |V(T)| - j \geq 118|U|+86-118 \geq 118|U'|+86$. Hence, when recursing on $(G,t,X,B,T^*,\chi,T',L',U')$, the removal of an edge in $B$ follows from the induction hypothesis.
        \end{indsubcase}
        
        \begin{indsubcase}[$U \cap {\chi(T[b_1, b_{86}])} = \emptyset$]
            Let $T' = T[b_1, b_{86}]$, let $L' = \{b_1, b_{86}\}$, and let $U' = \emptyset$. Then $(G,t,X,B,T^*,\chi,T',L',U')$ is a biconnected reduction problem instance. Hence, when recursing on $(G,t,X,B,T^*,\chi,T',L',U')$, the removal of an edge in $B$ follows from Lemma \ref{lemma:find-ladder-reduction}.
        \end{indsubcase}
        
        \begin{indsubcase}[$j \geq 119 \wedge U \cap {\chi(T[b_1, b_{86}])} \neq \emptyset$]
            Let $T' = T[b_{86}, b_{119}]$, let $L' = \{b_{86}, b_{119}\}$, and let $U' = U \cap \chi(T')$. Then $(G,t,X,B,T^*,\chi,T',L',U')$ is a biconnected reduction problem instance. Because $\emptyset \neq U \cap \chi(T[b_1, t_{86}]) \subseteq W \subseteq \chi(b_j)$ we have by Lemma \ref{lemma:bchi-connected-in-subtree} that $V(T') \subseteq \bchi(w)$ for some $w \in W$. Furthermore, we have $|U'| \leq |W| \leq |\chi(b_j)| = 3$, which implies $|V(T')| = 34 \geq 9|U'| + 7$. Hence, when recursing on $(G,t,X,B,T^*,\chi,T',L',U')$, the removal of an edge in $B$ follows from Lemma \ref{lemma:reducing-long-paths-with-frequent-vertex-with-U}.
        \end{indsubcase}
    \end{inductivestep}
    
    \noindent
    Each case describes a straightforward recursive algorithm that applies Reduction \ref{red:remove-middle-edge-to-isolated-path} or \ref{red:ladder-reduction} on an edge in $B$. Because $|U| \leq |V(G)|$ the algorithm described by this recursion recurses at most $\polyG$ times. By Lemma \ref{lemma:size=of-smooth-tree-decomposition} we have $|V(T)| \leq \polyG$ from which it trivially follows that this algorithm takes at most $\polyG$ time.
\end{proof}

\subsection{Reducing Large Trees}
We have now shown that any linear subtree $T$ that is large compared to $|U|$ leads to an application of Reduction \ref{red:remove-middle-edge-to-isolated-path} or \ref{red:ladder-reduction}. It remains to prove that any non-linear tree $T$ that is large compared to $|U|$ will have such a linear subtree $T'$.

We will use the branching rule that was shown in Figure \ref{fig:finding-reducible-structures:recursive-step}. I.e., we choose an arbitrary leaf vertex to be the root of $T$ and we find a vertex $b$ furthest from the root with degree at least three. All subtrees rooted under $b$ will be linear subtrees. This means that when any of them is large that correctness follows from Lemma \ref{lemma:reduce-long-path}. Otherwise, we can remove these subtrees rooted under $b$ and be left with a large remaining subtree with strictly fewer leaves.

\begin{lemma}\label{lemma:reduce-big-biconnected-tree}
    Let $(G,t,X,B,T^*,\chi,T,L,U)$ be a biconnected reduction problem instance with $|L| \geq 2$. If $|V(T)| \geq 876|L| + 118|U|$, then we can in $\polyG$ time apply Reduction \ref{red:remove-middle-edge-to-isolated-path} or \ref{red:ladder-reduction} on an edge in $B$.
\end{lemma}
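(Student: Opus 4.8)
The plan is to prove Lemma~\ref{lemma:reduce-big-biconnected-tree} by induction on $|L|$, the number of leaves of $T$, funnelling every case into the linear-subtree result Lemma~\ref{lemma:reduce-long-path}, exactly along the branching strategy sketched before the statement (root $T$ at a leaf, recurse on the subtrees hanging below the deepest branching vertex).

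For the base case $|L| = 2$ the tree $T$ is linear, and since $|V(T)| \geq 876\cdot 2 + 118|U| \geq 118|U| + 86$, Lemma~\ref{lemma:reduce-long-path} directly yields an application of Reduction~\ref{red:remove-middle-edge-to-isolated-path} or~\ref{red:ladder-reduction} on an edge of $B$. For the inductive step $|L| \geq 3$, $T$ is not linear, so I would fix an arbitrary leaf $\rho$ as root and take a branching vertex $b$ with $d := \deg_T(b) \geq 3$ that is furthest from $\rho$; then every component $T'_1,\dots,T'_{d-1}$ of $T - b$ not containing $\rho$ is a linear tree, and for each $i$ I would set $T''_i := T[V(T'_i)\cup\{b\}]$. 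A short check shows $(G,t,X,B,T^*,\chi,T''_i,L''_i,U''_i)$ is again a biconnected reduction problem instance with $|L''_i| = 2$: its interior nodes lie strictly below $b$, hence are non-leaves of $T$ with $N_T = N_{T^*}$, and all their $T$-neighbours stay inside $V(T'_i)\cup\{b\}$. If some $T''_i$ satisfies $|V(T''_i)| \geq 118|U''_i| + 86$ we finish via Lemma~\ref{lemma:reduce-long-path}. Otherwise every $T''_i$ is ``short'', and I would recurse on $\hat T := T - \bigcup_i V(T'_i)$, which still contains $\rho$ and $b$ (the latter now a leaf), so $2 \leq |\hat L| = |L| - d + 2 < |L|$.

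The heart of the argument is checking that the recursion preserves the size hypothesis, i.e.\ $|V(\hat T)| \geq 876|\hat L| + 118|\hat U|$. Writing $u_i := |(\chi(T'_i)\setminus\chi(b))\cap\partial_G(B)|$ and $c_b := |\chi(b)\cap\partial_G(B)| \leq |\chi(b)| = 3$, one has $|U''_i| = u_i + c_b$; moreover, since $b$ separates the $T'_i$ in $T$, Lemma~\ref{lemma:bchi-connected-in-subtree} gives that the sets $(\chi(T'_i)\setminus\chi(b))\cap\partial_G(B)$ are pairwise disjoint subsets of $U$, and that they are exactly the vertices of $U$ lost from $\hat T$, so $|\hat U| = |U| - \sum_i u_i$. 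In the ``short'' case $|V(T'_i)| = |V(T''_i)| - 1 \leq 118u_i + 118c_b + 84$; summing and substituting, the $\sum_i u_i$ terms cancel against $118|\hat U|$, and $|\hat L| = |L| - d + 2$ absorbs the rest, leaving $|V(\hat T)| - 876|\hat L| - 118|\hat U| \geq 876(d-2) - (118c_b + 84)(d-1) \geq 876(d-2) - 438(d-1) = 438(d-3) \geq 0$ because $d \geq 3$ — which is precisely why the constant $876 = 2\cdot 438$ is chosen. Termination and the $\polyG$ running-time bound follow since each recursion strictly decreases $|L| \leq |V(T^*)| \leq \polyG$ (Lemma~\ref{lemma:size=of-smooth-tree-decomposition}) and every step is polynomial-time.

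The main obstacle I expect is the bookkeeping in the inductive step rather than any single deep idea: verifying rigorously that $T''_i$ and $\hat T$ satisfy all eight clauses of a biconnected reduction problem instance (especially the $N_{T^*}$-neighbourhood condition), and nailing down the partition statements describing how the vertices of $U$ and of $\chi(T)$ split between $\hat T$ and the $T'_i$, since it is exactly those disjointness/partition facts that make the constants line up tightly.
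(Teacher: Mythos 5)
Your proposal is correct and follows essentially the same strategy as the paper's proof: root $T$ at a leaf, locate the deepest branching vertex $b$, apply Lemma~\ref{lemma:reduce-long-path} to the linear subtrees hanging below $b$ if any is long enough, and otherwise recurse on $T - (F - b)$ with $b$ as a new leaf, using the observation that the $\chi(T'_i)\setminus\chi(b)$ are pairwise disjoint so the $\sum_i u_i$ contribution cancels against $118|\hat U|$, leaving the surplus $876(d-2) - 438(d-1) = 438(d-3) \geq 0$. Your arithmetic is slightly sharper in carrying $c_b = |\chi(b)\cap\partial_G(B)|$ rather than bounding it by $|\chi(b)|=3$ upfront, but the grouping is algebraically equivalent to the paper's chain of inequalities, and the constant $876 = 2\cdot 438$ plays exactly the role you identified.
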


\begin{proof}
    We prove using induction on $|L|$.

    \begin{basecase}[$|L| = 2$]
        Because $|V(T)| \geq 876|L| + 118|U| > 118|U| + 86$ correctness follows from Lemma \ref{lemma:reduce-long-path}.
    \end{basecase}

    \begin{inductivestep}[$|L| > 2$]
        We pick an arbitrary node from $L$ to be the root of $T$. We let $b \in V(T)$ be a vertex furthest from the root with $|N_T(b)| > 2$. We let $F$ be the subtree of $T$ rooted at $b$. Due to our choice of $b$ every subtree of $F - b$ is linear. We have $|L \cap V(F)| = |N_T(b)| - 1 \geq 2$.
    
        For each $\ell \in L \cap V(F)$ let $T_\ell = T[V(C_\ell) \cup \{b\}]$ where $C_\ell$ is the component in $F - b$ that contains $\ell$. Let $L_\ell = \{\ell, b\}$ and $U_\ell = U \cap \chi(T_\ell) \subseteq U \cap \chi(F)$. Then $(G,t,X,B,T^*,\chi,T_\ell,L_\ell,U_\ell)$ is a biconnected reduction problem instance. For any $\{\ell,\ell'\} \subseteq L \cap V(F)$ it follows from Corollary \ref{corollary:treedecomp-node-separator} that $\chi(b)$ separates $\chi(U_\ell) \setminus \chi(b) \subseteq \chi(T_\ell) \setminus \chi(b)$ from $\chi(U_{\ell'}) \setminus \chi(b) \subseteq \chi(T_{\ell'}) \setminus \chi(b)$ in $B$, which implies $U_\ell \cap U_{\ell'} \subseteq \chi(b)$. By Lemma \ref{lemma:reduce-long-path} we have that any $\ell \in L \cap V(F)$ with $|V(T_\ell)| \geq 118|U_\ell| + 86$ yields that recursing on $(G,t,X,B,T^*,\chi,T_\ell,L_\ell,U_\ell)$ leads to the removal of an edge in $B$. Therefore, for the remaining case we assume no such $\ell$ exists.
        
        Let $T' = T - (F - b)$, let $L' = (L \setminus V(F)) \cup \{b\}$, and let $U' = U \cap \chi(T')$. Then $(G,t,X,B,T^*,\chi,T',L',U')$ is a biconnected reduction problem instance. We note that $|L'| = |L| - |L \cap V(F)| + 1$ and $U' = U \setminus (\chi(F) \setminus \chi(b))$ which implies $|U'| = |U| - |U \cap (\chi(F) \setminus \chi(b))|$. We derive $|V(T')| \geq 876|L'| + 118|U'|$.
\begin{align*}
    ~& |V(T')|                                                                               &\\[1.5ex]
=   ~& |V(T)| - |V(F){{\setminus}}\{b\}|                                                       &~ \text{definition of $T'$}\\[1ex]
=   ~& |V(T)| - \sum_{\ell \in L{{\cap}}V(F)} |V(T_\ell){{\setminus}}\{b\}|                      &~ \text{definition of $T_\ell$}\\[.8ex]
\geq~& |V(T)| - \sum_{\ell \in L{{\cap}}V(F)} (118|U_\ell| + 84)                               &~ |V(T_\ell)| \leq 118|U_\ell| + 85\\[.8ex]
\geq~& |V(T)| - \sum_{\ell \in L{{\cap}}V(F)} (118(|U_\ell{{\setminus}}\chi(b)| + 3) + 84)       &~ |\chi(b)| \leq 3\\[.8ex]
=   ~& |V(T)| - 438|L{{\cap}}V(F)| - 118\sum_{\ell \in L{{\cap}}V(F)} |U_\ell{{\setminus}}\chi(b)| &~ \text{rewriting}\\[.8ex]
\geq~& |V(T)| - 438|L{{\cap}}V(F)| - 118|U{{\cap}}(\chi(F){{\setminus}}\chi(b))|                   &~ U_\ell \subseteq U{{\cap}}\chi(F) \wedge U_\ell{{\cap}}U_{\ell'} \subseteq \chi(b)\\[1.5ex]
\geq~& 876|L| + 118|U| - 438|L{{\cap}}V(F)| - 118|U{{\cap}}(\chi(F){{\setminus}}\chi(b))|          &~ |V(T)| \geq 876|L| + 118|U|\\[1.5ex]
=   ~& 876|L| + 118|U'| - 438|L{{\cap}}V(F)|                                                   &~ |U'| = |U| - |U{{\cap}}(\chi(F){{\setminus}}\chi(b))|\\[1.5ex]
=   ~& 876|L'| + 438|L{{\cap}}V(F)| - 876 + 118|U'|                                            &~ |L'| = |L| - |L{{\cap}}V(F)| + 1\\[1.5ex]
\geq~& 876|L'| + 118|U'|                                                                     &~ |L{{\cap}}V(F)| \geq 2
\end{align*}
        \noindent
        Because $|V(T')| \geq 876|L'| + 118|U'|$ and $|L'| < |L|$, recursing on $(G,t,X,B,T^*,\chi,T',L',U')$ leads to the removal of an edge in $B$, as follows from the induction hypothesis.
    \end{inductivestep}
    
    \noindent
    Each case describes a straightforward recursive algorithm that applies Reduction \ref{red:remove-middle-edge-to-isolated-path} or \ref{red:ladder-reduction} on an edge in $B$. By Lemma \ref{lemma:size=of-smooth-tree-decomposition} we have $|L| \leq |V(T)| \leq \polyG$ which implies that the algorithm described by this recursion recurses at most $\polyG$ times. Furthermore, it trivially follows that each recursive step takes at most $\polyG$ time.
\end{proof}

\begin{theorem}\label{theorem:reduce-big-biconnected-graph}
    Let $(G,t)$ be a non-trivial problem instance, let $X$ be a tidy modulator for $G$, and let $B$ be a biconnected induced subgraph of $G - X$. If $|V(B)| \geq 1988|X| + 2$, then we can in $\polyG$ time apply Reduction \ref{red:remove-middle-edge-to-isolated-path} or \ref{red:ladder-reduction} on an edge in $B$.
\end{theorem}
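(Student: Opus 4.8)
The plan is to reduce this theorem directly to Lemma \ref{lemma:reduce-big-biconnected-tree} by packaging $B$ into a biconnected reduction problem instance and then verifying the numeric hypothesis of that lemma. First, since $(G,t)$ is non-trivial we have $\treewidth{G} > 2$, so $X \neq \emptyset$ and hence $|V(B)| \geq 1988|X| + 2 \geq 3$; moreover $B$ is a subgraph of $G - X$, so $\treewidth{B} \leq 2$. Lemma \ref{lemma:get-smooth-tree-decomposition} then yields, in $\polyG$ time, a smooth width-$2$ tree decomposition $(T^*,\chi)$ of $B$. I take $T = T^*$, $L = \setdef{\ell \in V(T)}{|N_T(\ell)| \leq 1}$, and $U = \chi(T) \cap \partial_G(B)$; all eight conditions of Definition \ref{def:biconnected-reduction-problem-instance} are then immediate (the last one, $N_T(b) = N_{T^*}(b)$ for non-leaves $b$, holds trivially because $T = T^*$), so $(G,t,X,B,T^*,\chi,T,L,U)$ is a biconnected reduction problem instance.

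Next I bound $|V(T)|$, $|L|$, and $|U|$. Since $\chi(T^*) = V(B)$, Lemma \ref{lemma:size=of-smooth-tree-decomposition} applied to the non-empty subtree $T^*$ gives $|V(T)| = |\chi(T)| - 2 = |V(B)| - 2 \geq 1988|X|$; in particular $|V(T)| \geq 1988 \geq 2$, so $T$ has at least two leaves and $|L| \geq 2$, which is the standing hypothesis of Lemma \ref{lemma:reduce-big-biconnected-tree}. For the upper bounds: $U \subseteq \partial_G(B)$, so $|U| \leq |\partial_G(B)| \leq 2|X|$ by Lemma \ref{lemma:biconnected-td:number-of-vertices-with-neighbours-outside-of-biconnected}. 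For $L$, Lemma \ref{lemma:biconnected-td:leaf-has-beta-1} supplies, for each leaf $\ell \in L$, a vertex $u_\ell \in \chi(\ell)$ with $\bchi(u_\ell) = \{\ell\}$; distinct leaves give distinct such vertices, and each satisfies $|\bchi(u_\ell)| = 1$, so $u_\ell \in \partial_G(B)$ by Lemma \ref{lemma:biconnected-td:beta-1-neighbours}. Hence $|L| \leq |\partial_G(B)| \leq 2|X|$ as well.

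Combining the estimates, $876|L| + 118|U| \leq 876 \cdot 2|X| + 118 \cdot 2|X| = 1988|X| \leq |V(T)|$, so the hypothesis $|V(T)| \geq 876|L| + 118|U|$ of Lemma \ref{lemma:reduce-big-biconnected-tree} is satisfied. That lemma then produces, in $\polyG$ time, an application of Reduction \ref{red:remove-middle-edge-to-isolated-path} or \ref{red:ladder-reduction} on an edge of $B$; together with the $\polyG$ cost of computing $(T^*,\chi)$, $L$, and $U$, the total running time remains $\polyG$, which completes the argument. I do not anticipate a real obstacle here: this theorem is essentially a top-level wrapper whose content is the constant-chasing $876 \cdot 2 + 118 \cdot 2 = 1988$ combined with the two structural facts (Lemmas \ref{lemma:biconnected-td:number-of-vertices-with-neighbours-outside-of-biconnected} and \ref{lemma:biconnected-td:leaf-has-beta-1}) that convert ``$B$ large'' into ``$|L|$ and $|U|$ small relative to $|V(T)|$''. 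The only subtlety to watch is ensuring $|V(T)| \geq 2$ (hence $|L| \geq 2$) before invoking Lemma \ref{lemma:reduce-big-biconnected-tree} — which is precisely why the bound in the statement carries the ``$+2$''.
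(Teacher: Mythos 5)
Your proof is correct and follows essentially the same route as the paper's: package $(G,t,X,B,T^*,\chi,T^*,L,U)$ as a biconnected reduction problem instance, bound $|V(T^*)| \geq 1988|X|$ via Lemma \ref{lemma:size=of-smooth-tree-decomposition}, bound $|U| \leq 2|X|$ via Lemma \ref{lemma:biconnected-td:number-of-vertices-with-neighbours-outside-of-biconnected} and $|L| \leq 2|X|$ via Lemmas \ref{lemma:biconnected-td:leaf-has-beta-1} and \ref{lemma:biconnected-td:beta-1-neighbours}, then invoke Lemma \ref{lemma:reduce-big-biconnected-tree}. Your write-up is slightly more explicit (spelling out the injection $\ell \mapsto u_\ell$ and the $|V(T)| \geq 2 \Rightarrow |L| \geq 2$ check), but there is no substantive difference.
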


\begin{proof}
    By Definition \ref{def:trivial-problem-instance} we have $X \neq \emptyset$, which implies $|V(B)|\geq 3$. Therefore, Lemma \ref{lemma:get-smooth-tree-decomposition} yields that we can obtain in $\polyG$ time a smooth tree decomposition $(T,\chi)$ of $B$ of width 2. By Lemma \ref{lemma:size=of-smooth-tree-decomposition} we have $|V(T)| \geq 1988|X|$. We let $L$ and $U$ be such that $(G,t,X,B,T,\chi,T,L,U)$ is a biconnected reduction problem instance. By Lemma \ref{lemma:biconnected-td:number-of-vertices-with-neighbours-outside-of-biconnected} we have $|U| \leq 2|X|$ and by Lemma \ref{lemma:biconnected-td:leaf-has-beta-1} we have $2 \leq |L| \leq |U| \leq 2|X|$. Because $|V(T)| \geq 1988|X| \geq 876|L| + 118|U|$ correctness immediately follows from Lemma \ref{lemma:reduce-big-biconnected-tree}.
\end{proof}

\chapter{Reducing Block-Cut Trees}\label{ch:reducing-block-cut-trees}
In the previous chapter we have given a method that, given a large biconnected induced subgraph $B$ in $G - X$, finds an application of a reduction that removes an edge from $B$. Within this chapter we complement this method by providing reduction rules that reduce the number of maximal biconnected subgraphs in any connected component in $G - (X \cup Y)$, where $X$ is a tidy modulator of $G$ and $Y$ is a component separator for $(G,t,X)$.

To do so we use the notion of a block-cut graph. We base the definition of a block-cut graph on the definition by West \cite[Chapter 4]{Introduction-to-Graph-Theory}. A block-cut graph of a graph $G$ is a bipartite graph consisting of blocks and cuts for vertices. A cut is a vertex $a \in V(G)$ that has $|\mathcal{C}(G - a)| > |\mathcal{C}(G)|$, I.e., its removal strictly increases the number of connected components. Cut vertices are also called articulation vertices. A block $B$ is a maximal biconnected subgraph of $G$. This means that graph $B$ has no articulation vertices, however, articulation vertices of $G$ can be non-articulation vertices in $B$. We furthermore note that any maximal biconnected subgraph of $G$ will be an induced subgraph of $G$.

By Lemma 3.1.4 in \cite{diestel_graph_theory} we have that a block-cut graph of a connected graph $G$ is a tree. Because we will only consider block-cut graphs of connected components in $G - (X \cup Y)$, all our block-cut graphs will be trees.

\begin{definition}[block-cut tree]\label{def:block-cut-tree}
    A block-cut tree $T = (A \cup \mathcal{B}, E)$ of a connected graph $G$ has $A$ being the set of articulation vertices of $G$ and $\mathcal{B}$ being the set of maximal biconnected subgraphs of $G$, and for each $a \in A$ and $B \in \mathcal{B}$ we have $aB \in E(T) \Longleftrightarrow a \in V(B)$.
\end{definition}

\noindent
We display a block-cut tree of an example graph (taken from \cite{Introduction-to-Graph-Theory}) in Figure \ref{fig:block-cut-tree-example}. We will always use black dots to denote (articulation) vertices, and white dots to denote blocks. Furthermore, we use $A(F)$ and $\mathcal{B}(F)$ to respectively refer to the set of articulation vertices and the set of blocks in a subgraph $F$ of a block-cut tree $T$. We define these sets also on subgraphs of $T$, opposed to only $T$ itself, because we will want use these definitions on paths in $T$ and subgraphs of $T$ as well. For a set of blocks $\mathcal{B}$ we let $V(\mathcal{B}) = \bigcup_{B \in \mathcal{B}} V(B)$.

\begin{figure}[ht]
    \centering
    ~
    \begin{subfigure}{.50\textwidth}
      \centering
      \includegraphics[scale=.66, page=1]{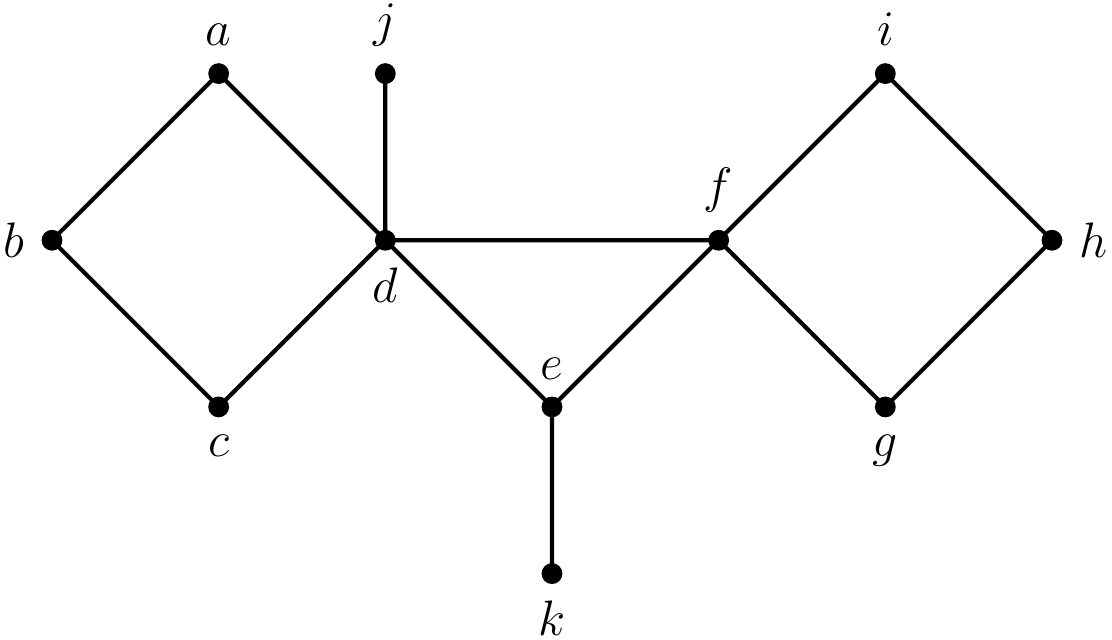}  
      \caption{Graph $G$}
      \label{fig:block-cut-tree-example:G}
    \end{subfigure}
    ~
    \begin{subfigure}{.45\textwidth}
      \centering
      \includegraphics[scale=.66, page=2]{block-cut-tree-example.pdf}
      \caption{Block-cut tree $T$ of $G$}
      \label{fig:block-cut-tree-example:T}
    \end{subfigure}
    \caption{An example graph with a corresponding block-cut tree}
    \label{fig:block-cut-tree-example}
\end{figure}
\clearpage
\noindent
We note that a block-cut tree $T$ of $G$ will have $|V(T)| = O(|A(T)|) \leq O(|V(G)|)$ and each block $B \in \mathcal{B}(T)$ has $|V(B)| \leq |V(G)|$. Hence, we can make the following observation:

\begin{observation}\label{obs:block-cut-tree-polyg-size}
    A block-cut tree $T$ of a connected graph $G$ has $|T| = \polyG$.
\end{observation}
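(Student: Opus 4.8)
The plan is to bound the two kinds of nodes of $T$ separately and then observe that each block, being a subgraph of $G$, is itself of polynomial size. First I would dispose of the degenerate case: if $|V(G)| \le 1$ then $G$ has no articulation vertices and at most one block, so $T$ has at most one node and the claim is immediate. Henceforth I assume $|V(G)| \ge 2$, so that, since $G$ is connected, every maximal biconnected subgraph of $G$ contains at least one edge.

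For the articulation vertices, note that $A(T) \subseteq V(G)$ directly from Definition \ref{def:block-cut-tree}, hence $|A(T)| \le |V(G)|$. For the blocks, the key fact is that distinct blocks of $G$ share at most one vertex and therefore no edge — equivalently, the blocks of $G$ partition $E(G)$ — so the map sending each edge of $G$ to the unique block containing it is a surjection onto $\mathcal{B}(T)$, giving $|\mathcal{B}(T)| \le |E(G)|$. Combining the two bounds, $|V(T)| = |A(T)| + |\mathcal{B}(T)| \le |V(G)| + |E(G)| \le |V(G)| + \binom{|V(G)|}{2}$, which is $\polyG$. Finally, every block $B \in \mathcal{B}(T)$ is an induced subgraph of $G$, so $|V(B)| \le |V(G)|$ and $|E(B)| \le |E(G)|$; hence the total size of a description of $T$ — the tree together with the graphs stored at its block nodes — is at most $|V(T)|$ times $\polyG$, which is again $\polyG$, as claimed.

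I do not expect any real obstacle here; the only point requiring a moment's care is the edge-disjointness of distinct blocks, which is the standard structural fact underlying block-cut trees and is implicit in Definition \ref{def:block-cut-tree} together with Lemma 3.1.4 in \cite{diestel_graph_theory}. Everything else is a one-line counting argument.
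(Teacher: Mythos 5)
Your proof is correct and takes the same route as the paper: bound the number of nodes of $T$ and then observe that each block, being a subgraph of $G$, has size $\polyG$, so the whole description is polynomial. In fact you are slightly more careful than the paper's note preceding the observation, which asserts $|V(T)| = O(|A(T)|)$ — false in general (a star $K_{1,n}$ has one articulation vertex but $n$ blocks) — whereas your bound $|\mathcal{B}(T)| \leq |E(G)|$ via edge-disjointness of blocks is the argument that actually closes the gap.
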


\noindent
We next cite and prove important properties of block-cut trees, which will be used in later subsections.

\begin{lemma}[Algorithm 4.1.23 in \cite{Introduction-to-Graph-Theory}]\label{lemma:get-block-cut-tree}
    A block-cut tree of a connected graph $G$ can be obtained in $\polyG$ time.
\end{lemma}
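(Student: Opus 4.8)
The plan is to exhibit a concrete $\polyG$-time procedure that, given a connected graph $G$, constructs the bipartite graph $T = (A \cup \mathcal{B}, E)$ of Definition \ref{def:block-cut-tree}, and then appeal to the already-cited fact (Lemma 3.1.4 in \cite{diestel_graph_theory}, or West's treatment) that the block-cut graph of a connected graph is a tree. The construction has three stages: compute the set $A$ of articulation vertices, compute the set $\mathcal{B}$ of maximal biconnected subgraphs, and then add the incidences. The first stage is immediate from the definition of an articulation vertex: a vertex $a \in V(G)$ lies in $A$ iff $|\mathcal{C}(G - a)| > |\mathcal{C}(G)|$, and since $\mathcal{C}(\cdot)$ is computable in $\polyG$ time (DFS, as recalled in the preliminaries) and there are $|V(G)|$ candidates, all of $A$ is found in $\polyG$ time. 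The third stage is trivial once $A$ and $\mathcal{B}$ are in hand: for each $a \in A$ and $B \in \mathcal{B}$ add $aB$ to $E$ precisely when $a \in V(B)$, which by construction yields a graph meeting Definition \ref{def:block-cut-tree}; connectivity comes for free from connectivity of $G$ and the cited tree property, so no separate check is needed.

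The second stage is where the real work lies. I would compute the blocks via the classical characterization that the relation on $E(G)$ defined by ``$e \approx f$ iff $e = f$, or $e$ and $f$ lie on a common simple cycle'' is an equivalence relation whose classes are exactly the edge sets of the maximal biconnected subgraphs containing at least one edge (bridges forming their own $K_2$-blocks; isolated vertices, which do not occur here since $G$ is connected with $|V(G)| \ge 2$, would each form a trivial block). Whether two edges $e = u_1u_2$ and $f = w_1w_2$ lie on a common simple cycle is decidable in $\polyG$ time: such a cycle exists iff $G$ contains two vertex-disjoint paths that, together with $e$ and $f$, close up into a cycle — equivalently a pair of suitably internally vertex-disjoint paths linking $\{u_1,u_2\}$ to $\{w_1,w_2\}$ in one of the two terminal pairings — and each such bounded disjoint-paths query can be answered in polynomial time by flow techniques in the spirit of Observation \ref{obs:get-max-internally-vertex-disjoint-paths}. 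Computing all pairwise relations and taking the transitive closure partitions $E(G)$ into blocks; each class together with the vertices it spans is an element of $\mathcal{B}$, giving $|\mathcal{B}| \le |E(G)|$ blocks, each of size $\le |V(G)|$, in $\polyG$ total time.

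I expect the main obstacle to be precisely this second stage: carefully proving that the $\approx$-equivalence classes coincide with the maximal biconnected subgraphs (in particular handling bridges, and distinguishing ``maximal biconnected subgraph'' from ``$2$-edge-connected component''), together with the polynomial-time justification of the ``common cycle'' test. If one is willing to invoke an external algorithm rather than argue this from scratch, the cleaner route is to replace the whole second stage by the standard Hopcroft--Tarjan depth-first-search block-finding algorithm, which produces $A$ and $\mathcal{B}$ simultaneously in linear — hence $\polyG$ — time; with either implementation of the subroutine, assembling $T$ and invoking the tree property finishes the proof.
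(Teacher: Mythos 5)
Your proposal is correct, but it takes a genuinely different route from the paper: the paper does not prove this lemma at all, it simply cites West's textbook algorithm (Algorithm 4.1.23 in \cite{Introduction-to-Graph-Theory}), exactly as it earlier cites Lemma 3.1.4 in \cite{diestel_graph_theory} for the fact that the block-cut graph is a tree. You instead construct the block-cut tree from scratch: articulation vertices by $|V(G)|$ component counts, blocks as the classes of the ``lie on a common simple cycle'' equivalence relation on $E(G)$, and incidences by membership. That self-contained route is sound, and it buys you independence from the textbook black box at the cost of two things you rightly flag as the real work: proving that the common-cycle relation is transitive with classes equal to the edge sets of blocks, and justifying that the common-cycle test is polynomial. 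On the latter, one refinement is worth noting: your phrasing ``in one of the two terminal pairings'' suggests you would run two separate disjoint-paths queries, one per pairing of $\{u_1,u_2\}$ with $\{w_1,w_2\}$, which would push you toward the general two-disjoint-paths problem. In fact a single vertex-capacitated max-flow from a super-source attached to $\{u_1,u_2\}$ to a super-sink attached to $\{w_1,w_2\}$ in $G - e - f$ already suffices: unit vertex capacities force any flow of value two to use both $w_1$ and $w_2$, so whichever pairing the flow realizes closes up into a cycle with $e$ and $f$, and conversely any common cycle yields such a flow. This keeps the test squarely within the Observation~\ref{obs:get-max-internally-vertex-disjoint-paths} toolbox you invoke. (The shared-endpoint case degenerates to a single reachability check in $G$ minus the shared vertex, which is even easier.) Your fallback of invoking Hopcroft--Tarjan is of course the shortest path and is essentially what the paper does by citing West; either way the lemma holds.
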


\begin{lemma}[Proposition 4.1.19 in \cite{Introduction-to-Graph-Theory}]\label{lemma:blocks-share-at-most-one-cut-vertex}
    Two distinct blocks share at most one vertex, which will be an articulation vertex.
\end{lemma}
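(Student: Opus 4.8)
The statement is classical (it is cited as Proposition~4.1.19 in~\cite{Introduction-to-Graph-Theory}), so one legitimate option is to defer to that reference; but a self-contained argument is short, and the plan is to run everything off maximality of blocks. For the first assertion I would argue by contradiction: suppose two distinct blocks $B_1 \neq B_2$ share two distinct vertices $x$ and $y$, and consider the subgraph $H = B_1 \cup B_2$ (with $V(H) = V(B_1) \cup V(B_2)$ and $E(H) = E(B_1) \cup E(B_2)$). I claim $H$ is biconnected. Granting this, maximality of $B_1$ forces $H = B_1$, hence $B_2 \subseteq B_1$, and then maximality of $B_2$ forces $B_1 = B_2$, contradicting $B_1 \neq B_2$.

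The crux is therefore checking that $H$ has no articulation vertex. Since $B_1$ and $B_2$ are connected and share a vertex, $H$ is connected. The key observation is that any block $B$ with $|V(B)| \geq 2$ satisfies: $B - z$ is connected for every $z \in V(B)$ --- for $|V(B)| \geq 3$ this is exactly the statement that $B$ has no articulation vertex, and for $|V(B)| = 2$ the block is a single edge and $B - z$ is a single vertex. Now suppose $z \in V(H)$ had $|\mathcal{C}(H - z)| > 1$. Setting $B_i - z := B_i$ when $z \notin V(B_i)$, both $B_1 - z$ and $B_2 - z$ are connected, and since $\{x,y\} \setminus \{z\} \neq \emptyset$ they share a vertex; hence $(B_1 - z) \cup (B_2 - z) = H - z$ is connected --- a contradiction. (The degenerate cases in which some $B_i$ has only two vertices are automatically covered: then $V(B_i) = \{x,y\}$, so $z$ cannot lie in $V(B_i)$ while differing from both $x$ and $y$.) For the second assertion --- that a vertex $v$ shared by exactly two distinct blocks $B_1, B_2$ is an articulation vertex of $G$ --- I would again argue by contradiction: if $v$ were not an articulation vertex, the component of $G$ containing $B_1 \cup B_2$ stays connected after deleting $v$, so there is a path $P$ in $G - v$ between $V(B_1)$ and $V(B_2)$; take $P$ \emph{shortest}, so it meets $V(B_1)$ only in its first vertex $a$ and $V(B_2)$ only in its last vertex $b$, with $a \neq b$ and $a,b \neq v$. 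Then I would show $H' = B_1 \cup B_2 \cup P$ is biconnected by the same style of case analysis on a putative cut vertex $z$: using that $B_1 - z$ and $B_2 - z$ are connected, that $P - z$ has at most two pieces each still attached to $B_1$ or $B_2$ at $a$ or $b$, and that $B_1$ and $B_2$ remain joined through $v$ (when $z \neq v$) or through $P$ (when $z \notin V(P)$, or via the attachment points otherwise). Since $b \in V(H') \setminus V(B_1)$, $H'$ strictly contains $B_1$, contradicting maximality of $B_1$; hence $v$ is an articulation vertex.

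I do not expect any genuine difficulty here --- the statement is purely structural. The only real work is the bookkeeping in the two ``$H$ (resp.\ $H'$) is biconnected'' case analyses, and the two choices that make these go through cleanly are (i) treating ``$z \notin V(B_i)$'' uniformly by reading $B_i - z$ as $B_i$, and (ii) taking $P$ to be a shortest $V(B_1)$--$V(B_2)$ path so that it is internally disjoint from both blocks. If one prefers to avoid even this, simply citing~\cite{Introduction-to-Graph-Theory} suffices, as is done in the surrounding text for the other block--cut tree facts.
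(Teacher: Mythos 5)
The paper does not prove this lemma at all---it simply cites Proposition 4.1.19 of West---so there is no in-paper argument to compare against, and as you observe, deferring to the reference is exactly what the surrounding text does for the other block--cut facts as well. Your self-contained argument is correct and is a reasonable replacement for the citation: both halves run off maximality of blocks via the observation that for a block $B$ on at least two vertices, $B - z$ is connected for every $z$, combined for the second half with a shortest $V(B_1)$--$V(B_2)$ path in $G - v$, so the path is internally disjoint from both blocks. Two small things to write out explicitly if you actually substitute this for the citation: (i) the second half implicitly uses $|V(B_i)| \geq 2$ to get nonempty endpoint sets for $P$; this holds because a singleton block $\{v\}$ can only be maximal when $v$ is isolated, in which case $v$ cannot also lie in a second block, but the step should be stated; (ii) the paper's preliminaries define ``biconnected'' as merely ``contains no articulation vertex,'' which by itself does not literally force connectedness, so your step ``$B$ connected with no cut vertex $\Rightarrow$ $B - z$ connected'' leans on the standard (West's) convention that blocks are connected rather than on the paper's literal wording. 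Neither point is a mathematical gap, just a place where a self-contained write-up should be explicit.
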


\begin{lemma}\label{lemma:articulation-vertices-as-separators}
    Let $G$ be a connected graph, let $T$ be a block-cut tree of $G$, let $a \in A(T)$, and let $T' \in \mathcal{C}(T - a)$. Then $a$ separates $V(\mathcal{B}(T')) \setminus \{a\}$ from $V(G) \setminus V(\mathcal{B}(T'))$ in $G$.
\end{lemma}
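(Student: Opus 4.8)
The plan is to argue by contradiction. Suppose $u \in V(\mathcal{B}(T')) \setminus \{a\}$ and $v \in V(G) \setminus V(\mathcal{B}(T'))$ are connected in $G - a$, and fix a path $P = \langle w_0, w_1, \dots, w_k \rangle$ in $G - a$ with $w_0 = u$ and $w_k = v$; in particular $w_i \neq a$ for all $i$. The key observation I would use is that the edges of $G$ are partitioned among the blocks: every edge lies in a maximal biconnected subgraph (a single edge $K_2$ is biconnected), and this block is unique by Lemma~\ref{lemma:blocks-share-at-most-one-cut-vertex}, since two distinct blocks share at most one vertex and cannot both contain an edge. So for each $0 \le i < k$ let $B_i \in \mathcal{B}(T)$ be the block containing $w_i w_{i+1}$; then $w_i, w_{i+1} \in V(B_i)$.

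First I would pin down the endpoints of this sequence of blocks. Since $w_0 = u \neq a$ and $u \in V(\mathcal{B}(T'))$, the vertex $u$ lies in some block $B_* \in \mathcal{B}(T')$, and I claim every block containing $u$ (in particular $B_0$) lies in $\mathcal{B}(T')$: if $B'$ is a block with $u \in V(B')$ and $B' \neq B_*$, then in $T$ the node $u$ is adjacent to both $B_*$ and $B'$, and as $u \neq a$ the two-edge path $B_*\,u\,B'$ survives in $T - a$, so $B'$ lies in the same component $T'$ of $T - a$ as $B_*$. Hence $B_0 \in \mathcal{B}(T') \subseteq V(T')$. Symmetrically, $v = w_k \in V(B_{k-1})$ while $v \notin V(\mathcal{B}(T'))$, so $B_{k-1} \notin \mathcal{B}(T')$, i.e.\ $B_{k-1}$ is a node of $T - a$ lying outside $V(T')$.

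Next I would trace the sequence $B_0, B_1, \dots, B_{k-1}$ through the block-cut tree. For consecutive blocks, both $B_i$ and $B_{i+1}$ contain $w_{i+1}$, and $w_{i+1} \neq a$; so if $B_i \neq B_{i+1}$ the two-edge path $B_i\,w_{i+1}\,B_{i+1}$ lies in $T - a$, and if $B_i = B_{i+1}$ there is nothing to prove. By transitivity, all of $B_0, \dots, B_{k-1}$ lie in a single component of $T - a$, which, containing $B_0 \in V(T')$, must be $T'$. This contradicts $B_{k-1} \notin V(T')$. Therefore no such path exists, and $a$ separates $V(\mathcal{B}(T')) \setminus \{a\}$ from $V(G) \setminus V(\mathcal{B}(T'))$ in $G$.

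The step I expect to need the most care is the second one: one must not conflate ``$u$ lies in some block of $T'$'' with ``the block $B_0$ carrying the first edge of $P$ lies in $T'$'', and the resolution rests on the fact that an articulation vertex other than $a$ cannot be incident in $T$ to blocks in two different components of $T - a$. An alternative, essentially equivalent route is to note that assigning bag $V(B)$ to each block node $B$ and bag $\{a'\}$ to each articulation node $a'$ turns the block-cut tree into a tree decomposition of $G$ and then invoke Corollary~\ref{corollary:treedecomp-node-separator} with the node $a$; I would nonetheless favour the direct argument above, as it is self-contained given the cited facts about block-cut trees.
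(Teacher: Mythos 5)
Your proof is correct, and it reaches the same conclusion via the same structural facts (Definition~\ref{def:block-cut-tree} and Lemma~\ref{lemma:blocks-share-at-most-one-cut-vertex}), but the route is somewhat longer than the paper's. The paper exploits the fact that $V(\mathcal{B}(T'))\setminus\{a\}$ and $V(G)\setminus V(\mathcal{B}(T'))$ partition $V(G-a)$, so any path in $G - a$ between the two sets must cross an edge with one endpoint on each side; it therefore assumes a single such crossing edge $uv$, identifies the unique block $B$ containing that edge, and derives the contradiction in one step (if $u\in V(\mathcal{B}(T'))\setminus\{a\}$ lies in both $B\notin\mathcal{B}(T')$ and some $B'\in\mathcal{B}(T')$, then $u$ is a cut vertex distinct from $a$ adjacent in $T$ to both, forcing $B\in\mathcal{B}(T')$). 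You instead walk along a hypothetical full path $w_0,\dots,w_k$, assign a block $B_i$ to each edge $w_iw_{i+1}$, and show by induction through the internal vertices that the whole block sequence stays inside a single component of $T - a$. That buys you nothing extra here — the argument collapses to the paper's once you notice only the first boundary-crossing edge matters — but it is a perfectly sound and self-contained proof, and your care in distinguishing ``$u$ lies in some block of $T'$'' from ``the block containing the first edge lies in $T'$'' is exactly the point the paper compresses. Your observation that the block-cut tree can be recast as a tree decomposition and Corollary~\ref{corollary:treedecomp-node-separator} invoked is also a valid alternative, though it requires separately verifying the tree-decomposition axioms, which is why the direct argument is preferable here.
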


\begin{proof}
    Assume per contradiction that there exists an edge $uv$ between a vertex $u \in V(\mathcal{B}(T')) \setminus \{a\}$ and vertex $v \in V(G) \setminus V(\mathcal{B}(T'))$. Then $G[\{u,v\}]$ is a (potentially non-maximal) biconnected subgraph of $G$. Hence, there exists some block $B \in \mathcal{B}(T)$ of which $G[\{u,v\}]$ is a subgraph. Because $v \in V(G) \setminus V(\mathcal{B}(T'))$ we have $B \in \mathcal{B}(T) \setminus \mathcal{B}(T')$. Because $u \in V(\mathcal{B}(T')) \setminus \{a\}$ there exists a block $B' \in \mathcal{B}(T')$ with $u \in V(B')$. This yields $u \in V(B) \cap V(B')$, which by Definition \ref{def:block-cut-tree} and Lemma \ref{lemma:blocks-share-at-most-one-cut-vertex} implies $u \in A(T) \setminus \{a\}$ and $\{B, B'\} \subseteq N_T(u)$. This, however, yields the contradiction $B \in \mathcal{B}(T')$. Hence, $a$ separates $V(\mathcal{B}(T')) \setminus \{a\}$ from $V(G) \setminus V(\mathcal{B}(T'))$ in $G$.
\end{proof}

\begin{lemma}\label{lemma:block-cut-tree-only-blocks-as-leaves}
    A block-cut tree $T$ of a connected graph $G$ has only blocks as leaves.
\end{lemma}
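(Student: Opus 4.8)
The plan is to argue by contradiction: assume that some leaf of the block-cut tree $T$ is an articulation vertex $a \in A(T)$, and then show that $a$ cannot in fact be an articulation vertex of $G$.

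First I would dispose of the trivial case $|V(G)| \le 1$, where $A(T) = \emptyset$ and the claim holds vacuously, and assume $|V(G)| \ge 2$. Since $T$ is bipartite with parts $A(T)$ and $\mathcal{B}(T)$ (by Definition \ref{def:block-cut-tree} every edge of $T$ joins an articulation vertex to a block containing it), the unique neighbour of the leaf $a$ must be a block $B \in \mathcal{B}(T)$. Again by Definition \ref{def:block-cut-tree} this gives $a \in V(B)$, and moreover $a \notin V(B')$ for every block $B' \neq B$. Consequently, every edge of $G$ incident to $a$ — being itself a biconnected subgraph and hence contained in some block that contains $a$ — must lie in $E(B)$; in particular $N_G(a) \subseteq V(B) \setminus \{a\}$.

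Next I would use that $a \in A(T)$ means $a$ is an articulation vertex, so $G - a$ has at least two distinct components $C_1$ and $C_2$. Connectivity of $G$ forces $a$ to have a neighbour $v_i \in V(C_i)$ for each $i$, and by the previous paragraph $v_1, v_2 \in V(B) \setminus \{a\}$. The key step is to show that $B - a$ is a connected subgraph of $G - a$. If $|V(B)| \le 2$ this is immediate (and in fact already contradictory, since $v_1 \ne v_2$ cannot both lie in the single vertex of $V(B)\setminus\{a\}$); and if $|V(B)| \ge 3$ I would invoke Lemma \ref{lemma:biconnected-has-cycle}: any two distinct vertices $u, w$ of $B$ lie on a common simple cycle $Z$ of $B$, and deleting $a$ from $Z$ still leaves one of the two $u$–$w$ arcs of $Z$ intact, so $u$ and $w$ remain connected in $B - a$. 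Since $B - a$ is connected and contains both $v_1$ and $v_2$, these two vertices lie in the same component of $G - a$, contradicting $v_1 \in C_1$ and $v_2 \in C_2$. Hence no leaf of $T$ is an articulation vertex, so all leaves are blocks.

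I expect the only mildly delicate points to be the justification that every edge incident to $a$ lies in the single block $B$ (which rests on blocks sharing at most one vertex, so no edge lies in two blocks), and the deduction that $B - a$ is connected from Lemma \ref{lemma:biconnected-has-cycle}, where one must take care to pick the $u$–$w$ arc of the cycle avoiding $a$ in case $a$ happens to lie on the cycle. Everything else is routine bookkeeping about the bipartite structure of the block-cut tree.
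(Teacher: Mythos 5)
Your proof is correct and follows essentially the same route as the paper's: both rest on the observation that each component of $G-a$ contains a neighbour of $a$, and that a single block containing $a$ cannot hold two such neighbours because a biconnected block with one vertex removed remains connected. You argue the contrapositive (assuming $a$ is a leaf and deriving a contradiction) and make explicit via Lemma \ref{lemma:biconnected-has-cycle} the ``block minus $a$ is connected'' step, whereas the paper builds two distinct blocks $B_u \neq B_v$ incident to $a$ and leaves that connectivity argument implicit in the claim that no single block can contain both $u$ and $v$.
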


\begin{proof}
    Let $a \in A(T)$. We have $|\mathcal{C}(G - a)| > |\mathcal{C}(G)| = 1$, which implies that there exist at least two vertices $\{u,v\} \subseteq N_G(a)$ in different components of $G - a$. Because $G[\{a,u\}]$ and $G[\{a,v\}]$ are (potentially non-maximal) biconnected subgraphs of $G$ there must exist some blocks $B_u, B_v \in \mathcal{B}(T)$ of which $G[\{a,u\}]$ and $G[\{a,v\}]$ are subgraphs respectively. Because $u$ and $v$ are separated in $G$ by $a$ there can not exist a biconnected component in $G$ that contains both $u$ and $v$, which implies $B_u \neq B_v$. By Definition \ref{def:block-cut-tree} this yields $|N_T(a)| \geq 2$. Hence, $a$ can not be a leaf in $T$, which proves that all leaves in $T$ are blocks.
\end{proof}

\begin{lemma}\label{lemma:block-cut-tree-only-blocks-of-size-at-least-2}
    A block-cut tree $T$ of a connected graph $G$ with $|V(G)| > 1$ has only blocks consisting of at least two vertices.
\end{lemma}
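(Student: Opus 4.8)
The plan is to prove the statement by contradiction: assume some block $B \in \mathcal{B}(T)$ has $|V(B)| \leq 1$ and derive a violation of the maximality built into the definition of a block. Since $B$ is a maximal biconnected subgraph of $G$ with at most one vertex, there are only two possibilities: $B$ is the empty graph, or $B = G[\{v\}]$ for a single vertex $v \in V(G)$. In both cases I intend to exhibit a strictly larger biconnected subgraph of $G$ that contains $B$, which immediately contradicts maximality.

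First I would fix a working vertex: if $V(B) = \{v\}$ take that $v$, and if $V(B) = \emptyset$ take any $v \in V(G)$, which exists because $|V(G)| > 1$. Since $G$ is connected and has more than one vertex, $v$ has a neighbour $u \in N_G(v)$. I then consider $H = G[\{u,v\}]$, which contains the edge $uv$. The next step is to verify that $H$ is biconnected in the paper's sense, i.e. that it has no articulation vertices: $\mathcal{C}(H)$ consists of one component, and deleting either $u$ or $v$ from $H$ leaves a single-vertex graph, which again has one component, so neither $u$ nor $v$ increases the component count and hence neither is an articulation vertex of $H$. Thus $H$ is a biconnected subgraph of $G$ with $V(B) \subseteq \{v\} \subsetneq \{u,v\} = V(H)$, contradicting that $B$ is a maximal biconnected subgraph of $G$. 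Therefore every block of $T$ has at least two vertices.

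The main — and essentially only — obstacle here is definitional bookkeeping rather than any real difficulty: because the preliminaries define ``biconnected'' as ``contains no articulation vertices'' rather than as $2$-connectivity, one has to be careful at the small end of the scale, checking both that a one-vertex (or empty) graph does \emph{not} qualify as a maximal biconnected subgraph when $|V(G)|>1$ and that $K_2$ \emph{does} qualify so that the strictly larger witness $G[\{u,v\}]$ is legitimate. Once these edge cases are handled, the argument is immediate; no running-time statement is needed, as this is a purely structural lemma.
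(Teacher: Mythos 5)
Your proof is correct and follows essentially the same route as the paper's: assume $|V(B)|\le 1$, use connectivity and $|V(G)|>1$ to produce a neighbour $u$, and observe that $G[\{u,v\}]$ is a strictly larger biconnected subgraph, contradicting maximality. The only cosmetic difference is that the paper dispenses with the empty-block case in one sentence while you fold it into the same witness construction, and you spell out explicitly that $K_2$ has no articulation vertices; both are fine.
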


\begin{proof}
    Assume per contradiction that $B$ is a block in $G$ with $|V(B)| \leq 1$. Because the empty graph is not a maximal biconnected subgraph of a non-empty graph $G$ we must have $|V(B)| = 1$. Let $u \in V(B)$. Because $G$ is connected and $|V(G)| > 1$ we have $N_G(u) \neq \emptyset$. Let $v \in N_G(u)$. Then $G[\{u,v\}]$ is a biconnected subgraph of $G$ and $B$ is a strict subgraph of $G[\{u,v\}]$, which contradicts $B$ being a maximal biconnected subgraph. Hence, each block consists of at least two vertices.
\end{proof}

\begin{corollary}\label{corollary:block-cut-tree-only-blocks-of-size-at-least-2}
    A block-cut tree $T$ of a connected graph $G$ with $|V(T)| > 1$ has only blocks consisting of at least two vertices.
\end{corollary}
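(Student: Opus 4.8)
The plan is to reduce the statement to Lemma \ref{lemma:block-cut-tree-only-blocks-of-size-at-least-2}, which is the identical conclusion but with the hypothesis $|V(G)| > 1$ in place of $|V(T)| > 1$. So the only thing I need to establish is that, for a connected graph $G$, the condition $|V(T)| > 1$ implies $|V(G)| > 1$; once that is in hand, the corollary is immediate.

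First I would argue the contrapositive: if $|V(G)| \leq 1$, then $|V(T)| \leq 1$. If $V(G) = \emptyset$, then $G$ has no maximal biconnected subgraph and no articulation vertex, so $V(T) = \emptyset$. If $|V(G)| = 1$, say $V(G) = \{u\}$, then $u$ is not an articulation vertex, since removing it does not increase the number of components ($|\mathcal{C}(G - u)| = 0 \not> 1 = |\mathcal{C}(G)|$), so $A(T) = \emptyset$; moreover $G[\{u\}]$ has no articulation vertex and hence is biconnected, and it is plainly the unique maximal biconnected subgraph of $G$, so $\mathcal{B}(T) = \{G[\{u\}]\}$ and $|V(T)| = 1$. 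In both cases $|V(T)| \leq 1$, as claimed.

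Consequently, given $|V(T)| > 1$, the above yields $|V(G)| > 1$, and since $G$ is connected, Lemma \ref{lemma:block-cut-tree-only-blocks-of-size-at-least-2} applies verbatim and shows every block of $G$ consists of at least two vertices, which is exactly the statement of the corollary. I expect essentially no obstacle here; the only mild subtlety is handling the single-vertex graph correctly — in particular observing that a one-vertex graph is biconnected and is therefore its own (unique) block — and checking that the hypothesis that $G$ is connected transfers directly to the invocation of the lemma. (One could alternatively note that $|V(T)| > 1$ forces an edge in $T$, hence an articulation vertex $a$ of $G$, and then $|\mathcal{C}(G-a)| > 1$ already forces $|V(G)| \geq 3$; but the contrapositive argument above is cleaner and also disposes of the degenerate cases.)
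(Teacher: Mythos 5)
Your proof is correct and reduces to Lemma~\ref{lemma:block-cut-tree-only-blocks-of-size-at-least-2} just as the paper does, but via a different bridge: you argue the contrapositive (if $|V(G)| \leq 1$ then $|V(T)| \leq 1$), whereas the paper extracts an articulation vertex $a$ from $T$ (since $|V(T)|>1$ and $T$ is connected bipartite forces $A(T)\neq\emptyset$) and then uses $|V(G)|-1 \geq |\mathcal{C}(G-a)| > |\mathcal{C}(G)| \geq 1$ to get $|V(G)| \geq 3$ directly. Both routes are sound and comparably short; yours establishes exactly the hypothesis $|V(G)|>1$ that the lemma needs, while the paper's gets the stronger $|V(G)|\geq 3$ for free. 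Your contrapositive case analysis is slightly more delicate about conventions for the empty graph (one could argue the empty graph's unique maximal biconnected subgraph is itself, giving $|V(T)|=1$ rather than $0$), but that nuance does not affect the conclusion $|V(T)|\leq 1$, so the argument stands. You even note the paper's route in your parenthetical at the end.
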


\begin{proof}
    Because $T$ is a connected bipartite graph we have $A(T) \neq \emptyset$. Let $a \in A(T)$. Then $|V(G) \setminus \{a\}| \geq |C(G - a)| > |C(G)| \geq 1$ yields $|V(G)| \geq 3$. Correctness follows from \ref{lemma:block-cut-tree-only-blocks-of-size-at-least-2}.
\end{proof}

\noindent
Before introducing our method to reducing large block-cut trees, we first provide a high-level overview of our approach.

We start by proving that when a block-cut tree $T$ of some component $C \in \mathcal{C}(G - (X \cup Y))$ has many leaves, we are able to contract one of the leaf blocks into the articulation vertex they are adjacent to. We use this to provide a bound on the number of leaves in any block-cut tree we wish to further reduce. As we will prove later, the removal of all branching vertices (vertices with degree at least three) in $T$ results in a linear forest consisting of a bounded number of trees.

We will then show that any vertex $y \in N_G(C) \cap Y$ with many neighbours in $C \in \mathcal{C}(G - (X \cup Y))$ allows us to find a large biconnected induced graph in $G - X$, which can be reduced using Theorem \ref{theorem:reduce-big-biconnected-graph}. In case no such vertex $y$ exists, then we obtain a bound on the number of blocks $B \in \mathcal{B}(T)$ with neighbours in $Y$. By also removing all blocks $B$ with neighbours in $Y$ from the linear forest obtained in previous step we obtain a linear forest consisting of a bounded number of trees, where each block of this forest has $N_G(B) \setminus V(C) \subseteq X$.

In case there are three adjacent blocks in this linear forest that do not have neighbours in $X$, then we can show that we can safely contract one of these blocks into an articulation vertex.

In case there are many blocks that are adjacent to the same vertex in $x \in X$, then we will be able to find a simple path $Q$ that visits many neighbours of $x$. We will then prove that that we can apply Reduction \ref{red:remove-middle-edge-to-isolated-path} on an edge $ux$ with $u \in V(Q)$.

Because our linear forest will either have many blocks adjacent to vertices of $X$, or will have three adjacent blocks that are not adjacent to any vertex in $X$, we will be able to apply either one of these two reduction rules on any large linear subtree.

\section{Bounding Number of Leaves in Block-Cut Trees}
We start by providing a reduction rule that, when a block-cut tree has many leaves, will allow us to contract a leaf block into a single vertex.

\begin{reduction}[contract leaf block]\label{red:contract-block-leaf}
    Let $(G,t)$ be a non-trivial problem instance, let $X$ be a modulator of $G$, let $C \in \mathcal{C}(G - X)$, let $T$ be a block-cut tree of $C$, and let $B \in \mathcal{B}(T)$ with $N_T(B) = \{a\}$. If $N_G(B - a) \setminus \{a\}$ is a limit-1 subset for $(G-B,t)$ and each $x \in N_G(B - a) \setminus \{a\}$ has $\treewidth{G[V(B) \cup \{x\}] \cup \{ax\}} \leq 2$, then contract $B$ into $a$.
\end{reduction}

\noindent
We note that from Corollary \ref{corollary:block-cut-tree-only-blocks-of-size-at-least-2} it follows that applying Reduction \ref{red:contract-block-leaf} strictly decreases the number of vertices in $G$.

\begin{lemma}[safeness]\label{lemma:safeness:contract-block-leaf}
    Let $(G',t)$ be the problem instance obtained by applying Reduction \ref{red:contract-block-leaf} on $(G,t)$. Then $\twtwodeletion{G}{t} \Longleftrightarrow \twtwodeletion{G'}{t}$ holds.
\end{lemma}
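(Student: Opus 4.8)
The plan is to mimic the structure of the other safeness proofs in this paper (e.g. Lemmas \ref{lemma:safeness:contract-component-with-small-neighborhood} and \ref{lemma:safeness:reduce-number-of-components}): one direction is immediate because $G'$ is a minor of $G$, and the other direction is carried out by taking a solution $S'$ for $(G',t)$ and building from it a solution $S$ for $(G,t)$ of the same size, using Theorem \ref{theorem:connect-treewidth-graphs} to glue treewidth bounds together across the clique $\{a\}$ (or the clique on a surviving boundary vertex).

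First I would observe that $G'$ is obtained from $G$ by contracting the connected subgraph $B$ into the vertex $a$, so $G'$ is a minor of $G$; by Lemma \ref{lemma:minor-of-G-implication} this gives $\twtwodeletion{G}{t} \Rightarrow \twtwodeletion{G'}{t}$. For the converse, let $S'$ be a solution for $(G',t)$. Note that by Lemma \ref{lemma:articulation-vertices-as-separators}, $a$ separates $V(B)\setminus\{a\}$ from the rest of $G$, so the only vertices of $G$ outside $B$ that are adjacent to $V(B)\setminus\{a\}$ are the vertices of $X$ in $N_G(B-a)\setminus\{a\}$ together with $a$ itself; this is exactly what lets us treat $B$ as a near-protrusion whose boundary is controlled. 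I would split into cases on how $S'$ interacts with $B$ and $a$. If $a\in S'$ or more generally if $N_G(B-a)\setminus(S'\cup\{a\})=\emptyset$, then $V(B)\setminus\{a\}$ can be re-attached cheaply: since $N_G(B-a)\setminus\{a\}$ is a limit-1 subset for $(G-B,t)$ (hence also for $(G',t)$ by Lemma \ref{lemma:limit-m-of-subgraph-to-supergraph}, after noting $G-B$ is a subgraph of $G'$), the set $S'$ leaves at most one vertex $x$ of $N_G(B-a)\setminus\{a\}$ in the graph; then $\treewidth{G[V(B)\cup\{x\}]\cup\{ax\}}\le 2$ is exactly the hypothesis needed to apply Theorem \ref{theorem:connect-treewidth-graphs} to glue $G[N_G[B]]-S'$ onto $G-(V(B)\cup S')$ across the clique $\{x\}$ (or $\{a\}$ if no such $x$ survives), taking $S=S'$. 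The subtlety is that $S'$ may contain vertices "inside" $B$ in $G'$ — but in $G'$ the image of $B$ is the single vertex $a$, so $S'\cap(V(B)\setminus\{a\})=\emptyset$ automatically; the only interaction is whether $a\in S'$.

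So really there are just two cases: $a\in S'$, and $a\notin S'$. If $a\in S'$, take $S=S'$; then $G-S$ is a disjoint union of $G'-S'$ (with $a$ deleted) and $G[V(B)]-a$, and $\treewidth{G[V(B)]-a}\le\treewidth{G[V(B)\cup\{x\}]\cup\{ax\}}\le 2$ for any $x$ (Lemma \ref{lemma:minor-treewidth}), so $\treewidth{G-S}\le 2$ — actually I would phrase this via Theorem \ref{theorem:connect-treewidth-graphs} on $(G-S, G-(V(B)\cup S))$ since the cross-boundary neighbourhood is empty, which is a clique trivially. If $a\notin S'$, then since $N_G(B-a)\setminus\{a\}$ is limit-1 for $(G-B,t)$ and $S'$ is a solution there (restricting $S'$ to $G-B$ only shrinks treewidth by Lemma \ref{lemma:minor-treewidth}), at most one vertex $x\in N_G(B-a)\setminus\{a\}$ satisfies $x\notin S'$; set $S=S'$ and apply Theorem \ref{theorem:connect-treewidth-graphs} on $(G-S, G-(V(B)\setminus\{a\}\cup S))$: the component $G[V(B)]-S$ has cross-boundary neighbourhood contained in $\{a,x\}$, and $ax\in E(G[V(B)\cup\{x\}]\cup\{ax\})$ makes $\{a,x\}$ a clique in the glued graph, while $\treewidth{G[N_G[B]\cup\{x\}]\cup\{ax\}}\le 2$ (or $\le\treewidth{G[V(B)\cup\{x\}]\cup\{ax\}}\le 2$ after observing the only external edges at $V(B)\setminus\{a\}$ go to $a$ or $x$) supplies the local treewidth bound, and $\treewidth{G-(V(B)\setminus\{a\}\cup S)}\le\treewidth{G'-S'}\le 2$ supplies the rest. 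In all cases $|S|=|S'|\le t$, giving $\twtwodeletion{G}{t}$.

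The main obstacle I anticipate is bookkeeping the boundary of $B$ precisely: making sure that "$N_G(B-a)\setminus\{a\}\subseteq X$" (which follows from $C\in\mathcal{C}(G-X)$ and Lemma \ref{lemma:articulation-vertices-as-separators}) is used correctly so that the external interface of $B$ in $G$ is genuinely $\{a\}\cup(N_G(B-a)\setminus\{a\})$ and nothing else, and verifying that limit-1-ness transfers from $(G-B,t)$ to $(G',t)$ — the cleanest route is Lemma \ref{lemma:limit-m-of-subgraph-to-supergraph} together with the fact that $G-B$ is (isomorphic to) a subgraph of $G'$ since contracting $B$ to $a$ only identifies vertices of $B$. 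Everything else is a routine application of Theorem \ref{theorem:connect-treewidth-graphs} across a clique separator of size at most two, exactly in the style already used repeatedly in Chapters \ref{ch:general-reduction-rules} and \ref{ch:graph-decompositions}.
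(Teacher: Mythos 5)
Your approach is the same as the paper's in outline (one direction via $G'$ being a minor of $G$, the other via the limit-1 hypothesis and a gluing argument with Theorem~\ref{theorem:connect-treewidth-graphs}), but there is a concrete gap in the case where a boundary vertex $x\in N_G(B-a)\setminus\{a\}$ and the articulation vertex $a$ both survive in $G-S'$. To invoke Theorem~\ref{theorem:connect-treewidth-graphs} on $(G-S',\,G-(V(B)\setminus\{a\}\cup S'))$, you need $N_{G-S'}(C)\subseteq\{a,x\}$ to induce a clique \emph{in $G-S'$}; but $ax$ need not be an edge of $G$ (here $a\in V(C)\subseteq V(G)\setminus X$ and $x\in X$, and nothing forces an edge between them). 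You notice the issue --- ``$ax\in E(G[V(B)\cup\{x\}]\cup\{ax\})$ makes $\{a,x\}$ a clique in the glued graph'' --- but the small graph $G[V(B)\cup\{x\}]\cup\{ax\}$ only governs the local treewidth bound; it does not put an edge into the ambient graph $G-S'$ where the clique condition has to hold. The paper's fix is to run the whole gluing inside $G\cup\{ax\}$ rather than $G$: apply Theorem~\ref{theorem:connect-treewidth-graphs} on $\bigl((G\cup\{ax\})-S',\;G'-S'\bigr)$ (and observe $G'-S'=((G\cup\{ax\})-(B-a))-S'$), obtain $\treewidth{(G\cup\{ax\})-S'}\leq 2$, and then use Lemma~\ref{lemma:minor-treewidth} to drop the extra edge and conclude $\treewidth{G-S'}\leq 2$. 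Lifting your argument to $G\cup\{ax\}$ in the same way would close the gap and make your two-case split unnecessary, since fixing some $x\in N_G(B-a)\setminus\{a\}$ (nonempty by non-triviality of $(G,t)$ and Reduction~\ref{red:contract-component-with-small-neighborhood}, an observation you should add) with $N_G(B-a)\setminus S'\subseteq\{a,x\}$ handles all cases uniformly. Two smaller slips: in your ``$a\in S'$'' case you assert the cross-boundary neighbourhood is empty, but a surviving $x$ may still be adjacent to $B-a$ (the singleton $\{x\}$ is a clique, so the gluing goes through, but the claim as stated is wrong); and the expression $\treewidth{G[N_G[B]\cup\{x\}]\cup\{ax\}}$ is not what you want, since $N_G[B]$ includes vertices outside $V(B)\cup\{x\}$ --- you want $G[V(B)\cup\{x\}]\cup\{ax\}$, as in the reduction rule's hypothesis.
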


\begin{proof}
    Because $G'$ is a minor of $G$ we have by Lemma \ref{lemma:minor-of-G-implication} that $\twtwodeletion{G}{t}$ implies $\twtwodeletion{G'}{t}$. It remains to prove that $\twtwodeletion{G'}{t}$ implies $\twtwodeletion{G}{t}$. We let $S'$ be a solution for $(G',t)$.
    
    We have by Lemma \ref{lemma:limit-m-of-subgraph-to-supergraph} that $N_G(B - a) \setminus \{a\}$ is a limit-1 subset for $(G',t)$. We argue that $N_G(B - a) \setminus \{a\} \neq \emptyset$. By Lemma \ref{lemma:minor-treewidth} we have $\treewidth{B} \leq \treewidth{G - X} \leq 2$, which in case $N_G(B - a) \setminus \{a\} = \emptyset$ would allow us to apply Reduction \ref{red:contract-component-with-small-neighborhood} on $B - a$, which contradicts Definition \ref{def:trivial-problem-instance}. Therefore, there exists some $x \in N_G(B - a) \setminus \{a\}$ for which $N_G(B - a) \setminus S' \subseteq \{a,x\}$. We display this situation when $a,x \notin S'$ in Figure \ref{fig:safeness:contract-block-leaf}.
    
    By the definition of $G'$ we have $G' - S' = ((G \cup \{ax\}) - (B - a)) - S'$. By Lemma \ref{lemma:minor-treewidth} we have $\treewidth{(G[V(B) \cup \{x\}] \cup \{ax\}) - S'} \leq \treewidth{G[V(B) \cup \{x\}] \cup \{ax\}} \leq 2$. Furthermore, $N_G((B - a) - S') \setminus S' \subseteq N_G(B - a) \setminus S' \subseteq \{a,x\}$ holds, which induces a clique in $G \cup \{ax\}$. Hence, we can apply Theorem \ref{theorem:connect-treewidth-graphs} on $((G \cup \{ax\}) - S', G' - S')$ to derive $\treewidth{(G \cup \{ax\}) - S'} \leq 2$. By Lemma \ref{lemma:minor-treewidth} this yields $\treewidth{G - S'} \leq 2$, which implies $\twtwodeletion{G}{t}$.
\end{proof}

\begin{figure}[ht]
    \centering
    \includegraphics[scale=.7,page=2]{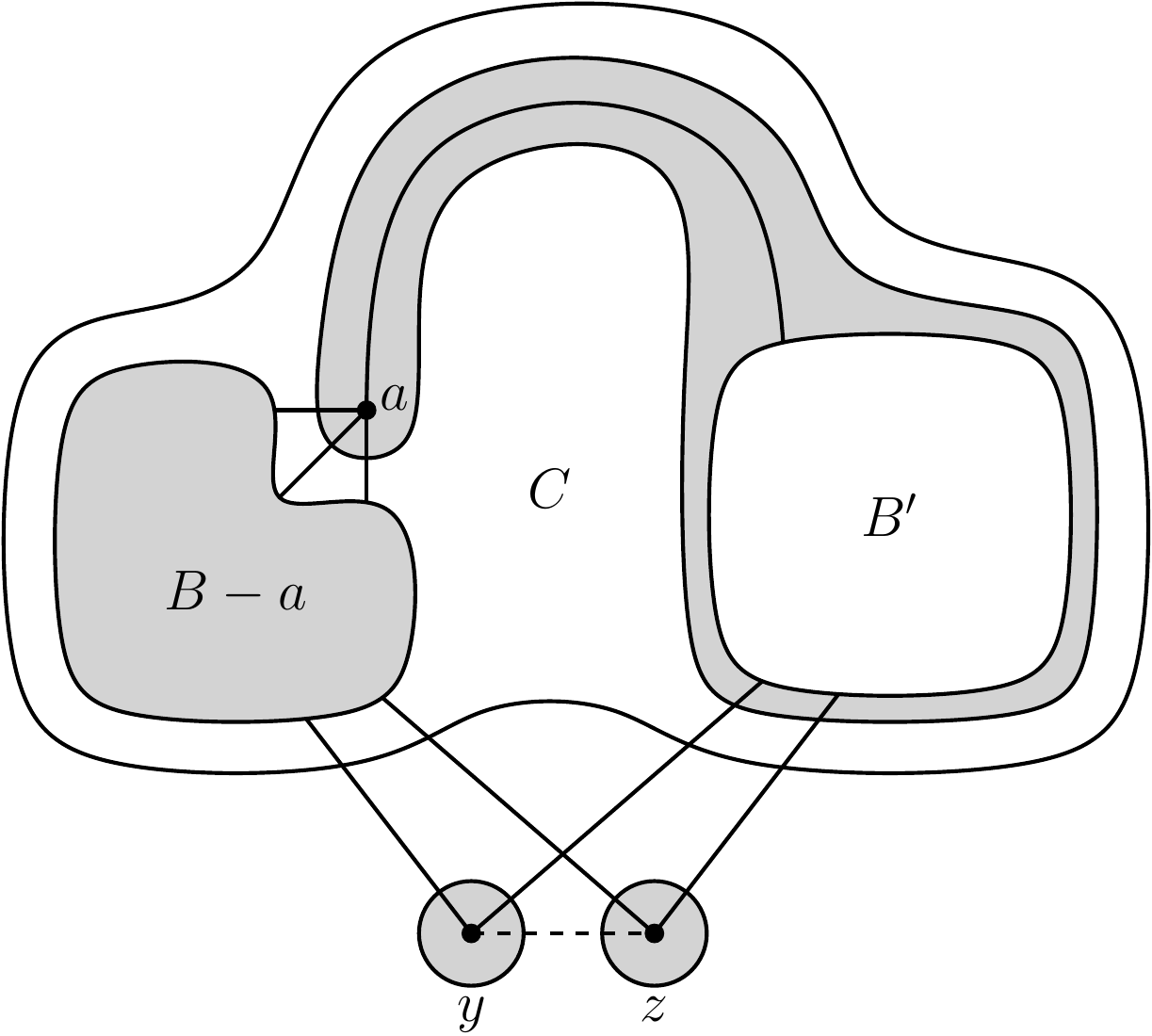}
    \caption{Graph $(G \cup \{ax\}) - S'$ in case $N_G(B - a) \setminus S' = \{a,x\}$}
    \label{fig:safeness:contract-block-leaf}
\end{figure}

\noindent
We next prove that when a block-cut tree $T$ of $C \in \mathcal{C}(G - (X \cup Y))$ has many leaves, that then we can identify one block on which Reduction \ref{red:contract-block-leaf} can be applied. To prove this we first prove that there are a bounded number of blocks $B$ and articulation vertices $a$ with $N_T(B) = \{a\}$ for which $N_G(B-a) \setminus \{a\}$ is not a limit-1 subset for $(G-B,t)$. Then afterwards we show that there are a bounded number of such pairs $(B,a)$ that have some $x \in N_G(B-a) \setminus \{a\}$ for which $\treewidth{G[V(B) \cup \{x\}] \cup \{ax\}} > 2$.

\begin{lemma}\label{lemma:bound-non-limit-1-blocks}
    Let $(G,t)$ be a problem instance, let $X$ be a tidy modulator of $G$, let $Y$ be a component separator for $(G,t,X)$, let $C \in \mathcal{C}(G - (X \cup Y))$, and let $T$ be a block-cut tree of $C$. There exist at most $|N_G(C) \cap Y|(|N_G(C)|-1)(t+2)$ pairs of blocks $B \in \mathcal{B}(T)$ and vertices $a \in A(T)$ for which $N_T(B) = \{a\}$ and $|N_G(B - a) \setminus \{a\}| \geq 2$ and $N_G(B - a) \cap Y \neq \emptyset$, and we can identify these pairs in $\polyG$ time.
\end{lemma}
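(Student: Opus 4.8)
The plan is to reduce the count to a bound for each fixed pair of ``witness'' neighbours and then sum. First I would record the structure of a leaf block. If $B \in \mathcal{B}(T)$ has $N_T(B) = \{a\}$, then $\{B\}$ is a component of $T - a$, so Lemma~\ref{lemma:articulation-vertices-as-separators} (applied to the connected graph $C$) shows that $a$ separates $V(B) \setminus \{a\}$ from $V(C) \setminus V(B)$ in $C$; hence no vertex of $V(B) \setminus \{a\}$ has a neighbour in $V(C)$ outside $V(B)$, and since $V(B) \setminus \{a\} \subseteq V(C)$ while $V(C)$ is disjoint from $X \cup Y$, we get $N_G(B - a) \setminus \{a\} \subseteq (X \cup Y) \cap N_G(C)$ and $a \notin X \cup Y$. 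To each counted pair $(B,a)$ I would attach a witness: pick $y_B \in N_G(B - a) \cap Y \subseteq N_G(C) \cap Y$ (possible by the third condition) and then $z_B \in (N_G(B - a) \setminus \{a, y_B\}) \subseteq N_G(C) \setminus \{y_B\}$ (possible by the second condition, since $|N_G(B-a) \setminus \{a\}| \geq 2$ and $y_B \neq a$). There are at most $|N_G(C) \cap Y| \cdot (|N_G(C)| - 1)$ witnesses, so it suffices to prove that for every fixed $y \in N_G(C) \cap Y$ and $z \in N_G(C) \setminus \{y\}$ there are at most $t+2$ leaf blocks $B$ (with articulation vertex $a$) satisfying $\{y,z\} \subseteq N_G(B-a) \setminus \{a\}$; summing over all witnesses yields the claimed bound. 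The running time is routine: by Lemma~\ref{lemma:get-block-cut-tree} we compute $T$ in $\polyG$ time, $|V(T)| \leq \polyG$ by Observation~\ref{obs:block-cut-tree-polyg-size}, and the three conditions are checkable in $\polyG$ time for each of the $\polyG$ leaf blocks.

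For the per-witness bound I would first note that each such block $B$ yields a simple $y$--$z$ path whose interior lies in $V(B) \setminus \{a\} \subseteq V(C)$: biconnectedness of $B$ makes $B - a$ connected, so a neighbour of $y$ in $V(B)\setminus\{a\}$ and a neighbour of $z$ in $V(B)\setminus\{a\}$ are joined by a path inside $B - a$. Because distinct blocks of $C$ intersect only in an articulation vertex (Lemma~\ref{lemma:blocks-share-at-most-one-cut-vertex}), two distinct leaf blocks with distinct articulation vertices are vertex-disjoint, and $a$ never lies in the interior of the path associated with $B$, these paths (over all blocks sharing the witness) are pairwise internally vertex-disjoint and avoid $y,z$. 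If $yz \in E(G)$, I claim there is at most one such block: given two, $B_1$ (vertex $a_1$) and $B_2$ (vertex $a_2$), pick a connected subgraph $Z$ of $C$ containing $a_1$ and $V(B_2) \setminus \{a_2\}$ but disjoint from $V(B_1) \setminus \{a_1\}$ (it exists since $C$ is connected and $a_1$ separates $V(B_1)\setminus\{a_1\}$ from the rest of $C$; when $a_1 = a_2$ take $Z = V(B_2)$); then applying Lemma~\ref{lemma:describes-k4-minor} to $(y, z, V(B_1)\setminus\{a_1\}, Z)$ — the edge $yz$ is present, $y$ and $z$ each see both remaining parts, and $V(B_1)\setminus\{a_1\}$ and $Z$ are joined through $a_1 \in Z$, using that blocks have at least two vertices by Corollary~\ref{corollary:block-cut-tree-only-blocks-of-size-at-least-2} — gives treewidth $> 2$, which contradicts $\treewidth{G - (X \setminus \{z\})} \leq 2$ when $z \in X$ (Definition~\ref{def:tidy-modulator}) and $\treewidth{G - X} \leq 2$ when $z \in Y$, since that tuple lives inside the respective graph. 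If instead $yz \notin E(G)$, then $t+3$ such blocks would give $t+3$ pairwise internally vertex-disjoint $y$--$z$ paths in $G$, so $\{y,z\}$ satisfies the hypothesis of Reduction~\ref{red:add-necessary-edge}.

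The step I expect to be the main obstacle is pinning down the bound $t+2$ in the case $yz \notin E(G)$: unlike the adjacent case, these $y$--$z$ paths only build a series-parallel (generalized theta) structure and yield no $K_4$ minor, so the bound must come from the instance being already ``reduced''. Concretely, Reduction~\ref{red:add-necessary-edge} is safe for arbitrary vertex pairs (Lemma~\ref{lemma:safeness:add-necessary-edge}), and once it has been applied exhaustively every non-adjacent pair admits at most $t+2$ internally vertex-disjoint paths — precisely the linked-ness threshold of Definition~\ref{def:linked-set} — so I would invoke the surrounding algorithm's maintenance of this exhaustiveness, or, if that is not available where the lemma is used, strengthen the statement to ``\dots\ or Reduction~\ref{red:add-necessary-edge} applies''. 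The remaining work is bookkeeping: the disjointness and routing of the fourth branch set $Z$ in the $K_4$ argument, and the trivial case $|V(T)| \leq 1$ (no qualifying pairs). Combining the two cases gives at most $t+2$ blocks per witness, and summing over the at most $|N_G(C) \cap Y|(|N_G(C)|-1)$ witnesses completes the bound.
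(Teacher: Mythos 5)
Your witness-pair bookkeeping matches the paper exactly, and your $K_4$ argument for the case $yz \in E(G)$ is essentially the paper's $K_4$ construction (the paper uses $B - a$ and $C - (B-a)$ as the two block-sides rather than your $V(B_1)\setminus\{a_1\}$ and $Z$, but the minor is the same $K_4$-with-branch-sets picture). The gap you yourself flag, however, is real and you cannot escape it by appealing to exhaustive application of Reduction~\ref{red:add-necessary-edge}: this lemma is invoked downstream (via Corollary~\ref{corollary:bound-non-limit-1-blocks}, Lemma~\ref{lemma:bound-leafs-in-block-cut-tree}, and Theorem~\ref{theorem:reduce-block-cut-tree}) with only a tidy modulator in hand, not a linked one, so you would have to propagate a stronger hypothesis through several statements. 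Strengthening the lemma's conclusion to ``\dots or Reduction~\ref{red:add-necessary-edge} applies'' would likewise force changes wherever the bound is used as a pure counting fact.

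The missing idea, and the paper's resolution, is to apply Lemma~\ref{lemma:safeness:add-necessary-edge} with parameter $0$ as a structural statement rather than as a reduction step. Since $t+3 \geq 3$ of the leaf blocks share the witness $\{y,z\}$, you get at least $3$ pairwise internally vertex-disjoint $y$--$z$ paths inside $G[V(C)\cup\{y,z\}]$. Lemma~\ref{lemma:safeness:add-necessary-edge} instantiated on the instance $(G[V(C)\cup\{y,z\}],\,0)$ then says $\twtwodeletion{G[V(C)\cup\{y,z\}]}{0} \Longleftrightarrow \twtwodeletion{G[V(C)\cup\{y,z\}]\cup\{yz\}}{0}$, i.e.\ $\treewidth{G[V(C)\cup\{y,z\}]\cup\{yz\}} \leq 2$ holds whenever $\treewidth{G[V(C)\cup\{y,z\}]} \leq 2$ does. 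The latter follows from tidiness of $X$ (or from $\treewidth{G - X} \leq 2$ when $z \in Y$), exactly as you argued. With the edge $yz$ now present ``for free'', your $K_4$ argument runs uniformly in both the adjacent and non-adjacent cases, closing the gap without any appeal to linkedness. In short: you do not need $yz \in E(G)$ to place $y$ and $z$ in a clique of the branch-set graph; you only need $\geq 3$ internally vertex-disjoint paths, and you already have $t+3$.
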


\begin{proof}
    Let $L = \setdef{(B,a)}{B \in \mathcal{B}(T) \wedge a \in A(T) \wedge N_T(B) = \{a\}}$. I.e., $L$ is the set of all leaf blocks $B$ in $T$ with their respective single articulation vertex $a \in V(B)$. For each $y \in N_G(C) \cap Y \wedge z \in N_G(C) \setminus \{y\}$ let $R_{y,z} = \setdef{(B,a) \in L}{y,z \in N_G(B-a)}$ and let $R = \setdef{(B,a) \in L}{(B,a) \in R_{y,z} \wedge |R_{y,z}| \leq t+2}$. I.e., $R$ is the set of all $(B,a) \in L$ which have some $y \in N_G(B-a) \cap Y$ and some $z \in N_G(B-a) \setminus \{a,y\}$ for which there are at most $t+2$ blocks $(B',a') \in L$ with $y,z \in N_G(B' - a')$. From the definition of $R$ and Observation \ref{obs:block-cut-tree-polyg-size} we immediately derive $|R| \leq |N_G(C) \cap Y|(|N_G(C)|-1)(t+2)$ and $R$ can be constructed in $\polyG$ time. It remains to prove that each $(B,a) \in L \setminus R$ has $|N_G(B-a) \setminus \{a\}| \leq 1$ or $N_G(B - a) \cap Y = \emptyset$.
    
    Assume per contradiction $(B,a) \in L \setminus R$ with $|N_G(B-a) \setminus \{a\}| \geq 2$ and $N_G(B - a) \cap Y \neq \emptyset$. Let $y \in N_G(B - a) \cap Y$ and $z \in N_G(B - a) \setminus \{a,y\}$. Because $B-a$ is a subgraph of $C$ with $V(C) \cap Y = \emptyset$ we have $y \in N_G(C) \cap Y$. Hence, we have $(B,a) \in R_{y,z} \wedge (B,a) \notin R$, which implies that there exist at least $t+3$ pairs $(B',a') \in L$ with $y,z \in N_G(B'-a')$. Graphs $B' - a'$ are connected due to $B'$ being biconnected. From Lemma \ref{lemma:articulation-vertices-as-separators} it follows that these subgraphs $B' - a'$ are pairwise vertex-disjoint. Hence, there exist at least $t+3$ internally vertex-disjoint paths between $y$ and $z$ in $G[V(C) \cup \{y,z\}]$. Lemma \ref{lemma:safeness:add-necessary-edge} states $\twtwodeletion{G[V(C) \cup \{y,z\}]}{0} \Longleftrightarrow \twtwodeletion{G[V(C) \cup \{y,z\}] \cup \{yz\}}{0}$. By Lemma \ref{lemma:minor-treewidth} we have $\treewidth{G[V(C) \cup \{y,z\}]} \leq \treewidth{G - (X \setminus \{z\})} \leq 2$, which therefore implies $\treewidth{G[V(C) \cup \{y,z\}] \cup \{yz\}} \leq 2$.
    
    We note that $C - (B - a)$ must be a connected graph, because $C$ is connected and $\partial_C(B) = \{a\}$. We have that $(B-a, C - (B - a), y, z)$ describes a $K_4$ minor in $G[V(C) \cup \{y,z\}] \cup \{yz\}$, as shown in Figure \ref{fig:bound-non-limit-1-blocks}. Hence, Lemma \ref{lemma:describes-k4-minor} yields $\treewidth{G[V(C) \cup \{y,z\}] \cup \{yz\}} > 2$, which is a contradiction. Therefore, we have that each $(B,a) \in L \setminus R$ has $|N_G(B - a) \setminus \{a\}| \leq 1$ or $N_G(B - a) \cap Y = \emptyset$.
\end{proof}

\begin{figure}[ht]
    \centering
    \includegraphics[scale=.6]{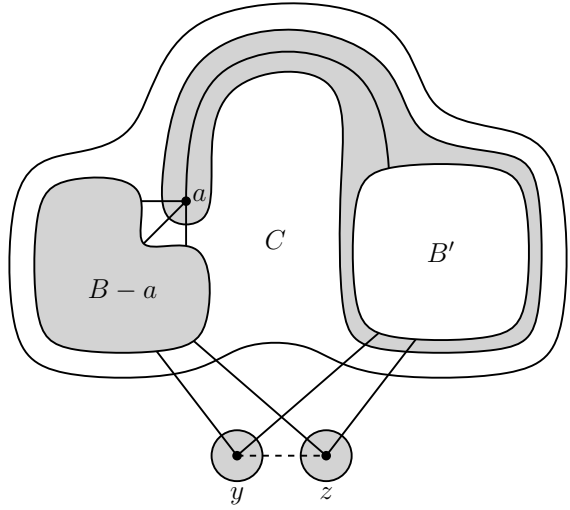}
    \caption{$K_4$ minor in $G[V(C) \cup \{y,z\}] \cup \{yz\}$}
    \label{fig:bound-non-limit-1-blocks}
\end{figure}

\begin{corollary}\label{corollary:bound-non-limit-1-blocks}
    Let $(G,t)$ be a problem instance, let $X$ be a tidy modulator of $G$, let $Y$ be a component separator for $(G,t,X)$, let $C \in \mathcal{C}(G - (X \cup Y))$, and let $T$ be a block-cut tree of $C$. There exist at most $|N_G(C) \cap Y|(|N_G(C)|-1)(t+2)$ pairs of blocks $B \in \mathcal{B}(T)$ and vertices $a \in A(T)$ for which $N_T(B) = \{a\}$ and $N_G(B - a) \setminus \{a\}$ is not a limit-1 subset for $(G-B,t)$, and we can identify these pairs in $\polyG$ time.
\end{corollary}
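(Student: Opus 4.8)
The plan is to derive this corollary directly from Lemma~\ref{lemma:bound-non-limit-1-blocks}: I would show that the collection of pairs counted there is already a superset of the pairs named in the corollary. Let $R$ be the set of pairs $(B,a)$ with $B\in\mathcal{B}(T)$, $a\in A(T)$, $N_T(B)=\{a\}$, $|N_G(B-a)\setminus\{a\}|\geq 2$, and $N_G(B-a)\cap Y\neq\emptyset$; Lemma~\ref{lemma:bound-non-limit-1-blocks} gives $|R|\leq |N_G(C)\cap Y|(|N_G(C)|-1)(t+2)$ and that $R$ is computable in $\polyG$ time. It then suffices to prove that every leaf-block pair $(B,a)$ (meaning $N_T(B)=\{a\}$) for which $N_G(B-a)\setminus\{a\}$ is \emph{not} a limit-1 subset for $(G-B,t)$ lies in $R$; equivalently, by contraposition, that a leaf-block pair with $(B,a)\notin R$ has $N_G(B-a)\setminus\{a\}$ being a limit-1 subset for $(G-B,t)$.

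I would split this contrapositive into the two ways a leaf-block pair can fail to be in $R$. If $|N_G(B-a)\setminus\{a\}|\leq 1$, then the set has at most one vertex and is a limit-1 subset for any instance. Otherwise $N_G(B-a)\cap Y=\emptyset$, and the first step is to show $N_G(B-a)\setminus\{a\}\subseteq N_G(C)\cap X$. Since $B$ is a leaf block of the block-cut tree $T$ of the connected graph $C$, the component of $T-a$ containing $B$ is just $\{B\}$, so Lemma~\ref{lemma:articulation-vertices-as-separators} (with ``$G$'' instantiated as $C$) gives that $a$ separates $V(B)\setminus\{a\}$ from $V(C)\setminus V(B)$ in $C$; hence $N_G(B-a)\cap V(C)\subseteq\{a\}$. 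Combined with $N_G(B-a)\cap Y=\emptyset$ and the observation that any $X$-neighbour of the $C$-subgraph $B-a$ is an $X$-neighbour of $C$, this yields $N_G(B-a)\setminus\{a\}\subseteq N_G(C)\cap X$. By Definition~\ref{def:component-separator}, $N_G(C)\cap X$ is a limit-1 subset for $(G-C,t)$, so Lemma~\ref{lemma:subset-of-limit-m-of-subset} makes its subset $N_G(B-a)\setminus\{a\}$ a limit-1 subset for $(G-C,t)$ too; and since $V(B)\subseteq V(C)$ makes $G-C$ a subgraph of $G-B$, Lemma~\ref{lemma:limit-m-of-subgraph-to-supergraph} upgrades this to a limit-1 subset for $(G-B,t)$, as needed.

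This proves the inclusion, hence the stated bound. For the algorithmic part, the ``identification'' is realized by computing $R$ as in Lemma~\ref{lemma:bound-non-limit-1-blocks}: $R$ is a $\polyG$-time-computable superset, of the claimed bounded size, of the named pairs, which is precisely what the later application (pairing this corollary with Reduction~\ref{red:contract-block-leaf}) needs --- for every leaf-block pair outside $R$, the limit-1 hypothesis of that reduction holds. I expect the only real care needed is in getting the two monotonicity directions for limit-$m$ subsets right (pass \emph{up} from $G-C$ to $G-B$, and \emph{down} from $N_G(C)\cap X$ to $N_G(B-a)\setminus\{a\}$) and in the small block-cut-tree argument that $N_G(B-a)$ meets $V(C)$ only in $a$; the rest is routine chaining of results already established above, so no substantial obstacle is anticipated.
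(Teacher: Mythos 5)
Your proposal is correct and follows essentially the same route as the paper's own proof: derive $N_G(B-a)\setminus\{a\}\subseteq N_G(C)$ via Lemma~\ref{lemma:articulation-vertices-as-separators}, then use the component-separator property together with the limit-$m$ monotonicity lemmas to show that any leaf-block pair outside the set counted by Lemma~\ref{lemma:bound-non-limit-1-blocks} has $N_G(B-a)\setminus\{a\}$ a limit-1 subset for $(G-B,t)$. In fact your write-up is slightly more careful than the paper's: you correctly chain \emph{both} Lemma~\ref{lemma:subset-of-limit-m-of-subset} and Lemma~\ref{lemma:limit-m-of-subgraph-to-supergraph} (the paper cites only the latter), and you use the cardinality condition $|N_G(B-a)\setminus\{a\}|\leq 1$ that actually matches Lemma~\ref{lemma:bound-non-limit-1-blocks}, where the paper writes the slightly mismatched $|N_G(B-a)|\geq 2$.
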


\begin{proof}
    Let $B \in \mathcal{B}(T)$ and $a \in A(T)$ such that $N_T(B) = \{a\}$. From Lemma \ref{lemma:articulation-vertices-as-separators} and the observation that $B - a$ is a subgraph of $C$ we derive $N_G(B - a) \setminus \{a\} \subseteq N_G(C)$. It therefore follows from Lemma \ref{lemma:limit-m-of-subgraph-to-supergraph} and Definition \ref{def:component-separator} that $N_G(B - a) \setminus \{a\}$ not being a limit-1 subset for $(G - B, t)$ requires $|N_G(B - a)| \geq 2$ and $N_G(B - a) \cap Y \neq \emptyset$. Hence, correctness directly follows from Lemma \ref{lemma:bound-non-limit-1-blocks}.
\end{proof}

\begin{lemma}\label{lemma:bound-yes-limit-1-blocks-non-treewidth-2}
    Let $(G,t)$ be a problem instance, let $X$ be a tidy modulator of $G$, let $C \in \mathcal{C}(G - X)$, and let $T$ be a block-cut tree of $C$. There exist at most $|N_G(C)|$ pairs of block $B \in \mathcal{B}(T)$ and vertices $a \in A(T)$ for which $N_T(B) = \{a\}$ and for which some $x \in N_G(B-a) \setminus \{a\}$ has $\treewidth{G[V(B) \cup \{x\}] \cup \{ax\}} > 2$, and we can identify these pairs in $\polyG$ time.
\end{lemma}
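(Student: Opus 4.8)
The plan is to build an injection from the pairs in question (call a pair $(B,a)$ with $N_T(B) = \{a\}$ \emph{bad} if some $x \in N_G(B-a)\setminus\{a\}$ has $\treewidth{G[V(B)\cup\{x\}]\cup\{ax\}} > 2$) into $N_G(C)$. First I would check that every such witness $x$ lies in $N_G(C)$: by Lemma \ref{lemma:articulation-vertices-as-separators} applied to $C$ and $a$, the vertex $a$ separates $V(B)\setminus\{a\}$ from $V(C)\setminus V(B)$ in $C$, so any neighbour of $B-a$ lying inside $C$ must lie in $V(B)$ and hence equal $a$; since $x\neq a$ is adjacent to $V(B)\setminus\{a\}\subseteq V(C)$, this forces $x\notin V(C)$ and therefore $x\in N_G(C)$. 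Fix one witness $x(B,a)\in N_G(C)$ per bad pair.

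The crux is that $x(B,a)$ determines $(B,a)$, i.e. no $x\in N_G(C)$ is a witness of two bad pairs. Suppose $x$ witnesses $(B_1,a_1)\neq(B_2,a_2)$; then $B_1\neq B_2$ (as $N_T(B_i)=\{a_i\}$), and since $B_1,B_2$ are leaves of $T$, Lemma \ref{lemma:blocks-share-at-most-one-cut-vertex} gives $V(B_1)\cap V(B_2)\subseteq\{a_1\}$. I would reach a contradiction by exhibiting a $K_4$ minor in $G[V(C)\cup\{x\}]$: this is a subgraph of $G-(X\setminus\{x\})$, which has treewidth at most $2$ by Definition \ref{def:tidy-modulator} and Lemma \ref{lemma:minor-treewidth}, so by Lemma \ref{lemma:k_4_equals_treewidth_2} it contains no $K_4$ minor.

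To produce the $K_4$ minor I would argue as follows. The graph $G[V(B_1)\cup\{x\}]$ is itself a subgraph of $G-(X\setminus\{x\})$, hence has treewidth at most $2$, while $G[V(B_1)\cup\{x\}]\cup\{a_1x\}$ does not; so, using Lemma \ref{lemma:k_4_equals_treewidth_2} together with the fact that deleting the edge $a_1x$ must destroy every $K_4$-minor model of $G[V(B_1)\cup\{x\}]\cup\{a_1x\}$, one obtains such a model in which $a_1$ and $x$ lie in distinct branch sets joined precisely by the edge $a_1x$. Next, since $B_2$ is a leaf block attached at $a_2$ with $x\in N_G(B_2-a_2)$, there is a vertex $v\in V(B_2)\setminus\{a_1\}$ adjacent to $x$, and there is a simple path $P$ from $x$ to $a_1$ in $G[(V(C)\setminus(V(B_1)\setminus\{a_1\}))\cup\{x\}]$ — this graph is connected, since $C-(B_1-a_1)$ is connected (a leaf block detaches cleanly at its cut vertex, cf. Lemma \ref{lemma:articulation-vertices-as-separators}), it contains $a_1$, and $x$ is adjacent to $v$ in it. As $P$ avoids $V(B_1)\setminus\{a_1\}$, enlarging the branch set containing $a_1$ by $V(P)\setminus\{x\}$ keeps all four branch sets connected and pairwise disjoint and replaces the virtual edge $a_1x$ by a genuine edge of $G$ incident with $x$ on $P$. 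This yields a $K_4$ minor in $G[V(C)\cup\{x\}]$, the desired contradiction.

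For the running time: by Observation \ref{obs:block-cut-tree-polyg-size} there are only $\polyG$ leaf blocks, each with $\polyG$ candidate witnesses, and for each candidate $\treewidth{G[V(B)\cup\{x\}]\cup\{ax\}}>2$ can be tested in $\polyG$ time via Lemma \ref{lemma:find-tree-decompositions}; so the bad pairs (at most $|N_G(C)|$ of them, by the injection) are listable in $\polyG$ time. I expect the $K_4$-minor surgery to be the main obstacle: making rigorous that $a_1x$ is ``essential'' in some minor model, that $V(P)\setminus\{x\}$ attaches to the $a_1$-branch set without meeting the other three, and keeping the interaction of $V(B_1)$, $V(B_2)$ and $V(C)$ (including the corner case $a_1=a_2$) under control is where the bookkeeping lives.
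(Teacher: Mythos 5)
Your counting framework — show that a witness $x$ for a bad leaf block is adjacent to no other leaf block's interior, hence bad pairs inject into $N_G(C)$ — is the same as the paper's (the paper phrases it via a set $R$ of pairs with a ``private'' $x\in N_G(C)$, but it is the same injection). The preliminary checks are all sound: every witness lies in $N_G(C)$ by Lemma~\ref{lemma:articulation-vertices-as-separators}, $C-(B_1-a_1)$ is connected, the blocks intersect only in $\{a_1\}\cap\{a_2\}$ by Lemma~\ref{lemma:blocks-share-at-most-one-cut-vertex}, and the corner case $a_1=a_2$ is handled.

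Where you diverge from the paper is in \emph{certifying} that $x\in N_G(B_2-a_2)$ forces $\treewidth{G[V(B_1)\cup\{x\}]\cup\{a_1x\}}\leq 2$, and this is where your proof has a gap. You assert that one can choose a $K_4$-minor model of $G[V(B_1)\cup\{x\}]\cup\{a_1x\}$ in which $a_1$ and $x$ lie in distinct branch sets joined precisely by $a_1x$. That does not follow from ``every model uses $a_1x$'': the edge could instead be a bridge inside a single branch set containing both $a_1$ and $x$. The good news is that the assertion is also unnecessary — if $a_1,x\in H$, adding $V(P)\setminus\{x\}$ to $H$ reconnects the two halves of $H-a_1x$ via $P$ and stays disjoint from the other three branch sets (which live in $(V(B_1)\cup\{x\})\setminus\{a_1\}$, while $V(P)\setminus\{x\}\subseteq V(C)\setminus(V(B_1)\setminus\{a_1\})$), so the surgery goes through in both cases. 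The cleaner route, and the one the paper takes, is to avoid the surgery entirely: by Lemma~\ref{lemma:articulation-vertices-as-separators}, contracting the connected graph $C-(B_1-a_1)$ into $a_1$ inside $G[V(C)\cup\{x\}]$ yields \emph{exactly} $G[V(B_1)\cup\{x\}]\cup\{a_1x\}$ (the only neighbours of $C-(B_1-a_1)$ among $V(B_1)\cup\{x\}$ are $N_G(a_1)\cap V(B_1)$ and $x$), so Lemma~\ref{lemma:minor-treewidth} gives $\treewidth{G[V(B_1)\cup\{x\}]\cup\{a_1x\}}\leq\treewidth{G-(X\setminus\{x\})}\leq 2$ in one line, with no model bookkeeping at all.
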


\begin{proof}
    We define $L = \setdef{(B,a)}{B \in \mathcal{B}(T) \wedge N_T(B) = \{a\}}$, we define for each $x \in N_G(C)$ set $R_x = \setdef{(B,a) \in L}{x \in N_G(B-a)}$ and $R = \setdef{(B,a) \in L}{(B,a) \in R_x \wedge |R_x| \leq 1}$. I.e., $R$ is the set of all $(B,a) \in L$ which have some vertex $x \in N_G(B-a) \setminus \{a\}$ such that $(B,a)$ is the only leaf block in $T$ for which this holds. From the definition of $R$ and Observation \ref{obs:block-cut-tree-polyg-size} it follows that $|R| \leq |N_G(C)|$ and $R$ can be constructed in $\polyG$ time. It remains to prove that each $(B,a) \in L \setminus R$ and $x \in N_G(B-a) \setminus \{a\}$ has $\treewidth{G[V(B) \cup \{x\}] \cup \{ax\}} \leq 2$. 
    
    Let $(B,a) \in L \setminus R$ and $x \in N_G(B-a) \setminus \{a\}$. We have $(B,a) \in R_x \wedge (B,a) \notin R$, which implies that there exists some pair $(B',a') \in L \setminus \{(B,a)\}$ with $x \in N_G(B' - a')$. By Lemma \ref{lemma:articulation-vertices-as-separators} we have that $B-a$ and $B'-a'$ are vertex disjoint. We note that $C - (B - a)$ must be a connected graph, because $C$ is connected and $\partial_C(B) = \{a\}$. Furthermore, we have $x \in N_G(B' - a') \cap X \subseteq N_G(C - (B - a))$. Therefore, when in graph $G[V(C) \cup \{x\}]$ we contract $C-(B-a)$ into $a$ we obtain $G[V(B) \cup \{x\}] \cup \{ax\}$. By Lemma \ref{lemma:minor-treewidth} we have $\treewidth{G[V(B) \cup \{x\}] \cup \{ax\}} \leq \treewidth{G[V(C) \cup \{x\}]} \leq \treewidth{G - (X \setminus \{x\})} \leq 2$. Hence, each $(B,a) \in L \setminus R$ and $x \in N_G(B-a) \setminus \{a\}$ has $\treewidth{G[V(B) \cup \{x\}] \cup \{ax\}} \leq 2$.
\end{proof}

\noindent
Similar to the definition of $\nbound{C}$ and $\nbound{Y}$ we define a function $\nbound{L} : \nat^3 \rightarrow \nat$ that states that any block-cut tree $T$ of a component $C \in \mathcal{C}(G - (X \cup Y))$ that has strictly more than $\Lbound{t}{|X|}{|N_G(C) \cap Y|}$ leaves leads to an application of Reduction \ref{red:contract-block-leaf}.

\begin{definition}\label{def:Lbound}
    \[\Lbound{t}{x}{y} = y(x+y-1)(t+2)+x+y\]
\end{definition}

\begin{lemma}\label{lemma:bound-leafs-in-block-cut-tree}
    Let $(G,t)$ be a non-trivial problem instance, let $X$ be a tidy modulator of $G$, let $Y$ be a component separator for $(G,t,X)$, let $C \in \mathcal{C}(G - (X \cup Y))$, and let $T$ be a block-cut tree of $C$. If $T$ has strictly more than $\Lbound{t}{|X|}{|N_G(C) \cap Y|}$ leaves we can in $\polyG$ time apply Reduction \ref{red:contract-block-leaf}.
\end{lemma}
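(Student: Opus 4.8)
The plan is as follows. First I would reduce the statement about \emph{leaves of $T$} to a statement about leaf blocks. Since $(G,t)$ is non-trivial we have $X\neq\emptyset$, so $\Lbound{t}{|X|}{|N_G(C)\cap Y|}\geq |X|\geq 1$; hence the hypothesis that $T$ has strictly more than $\Lbound{t}{|X|}{|N_G(C)\cap Y|}$ leaves forces $|V(T)|\geq 2$. By Lemma~\ref{lemma:block-cut-tree-only-blocks-as-leaves} every leaf of $T$ is a block, and, $T$ being a tree on at least two nodes, each leaf block $B$ has a unique neighbour $a\in A(T)$ with $N_T(B)=\{a\}$; conversely every such pair gives a degree-$1$ block. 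So the number of leaves of $T$ equals $|L|$, where $L$ is the set of pairs $(B,a)$ with $B\in\mathcal{B}(T)$, $a\in A(T)$ and $N_T(B)=\{a\}$. The goal is then to exhibit one pair $(B,a)\in L$ meeting both preconditions of Reduction~\ref{red:contract-block-leaf}: that $N_G(B-a)\setminus\{a\}$ is a limit-$1$ subset for $(G-B,t)$, and that each $x\in N_G(B-a)\setminus\{a\}$ has $\treewidth{G[V(B)\cup\{x\}]\cup\{ax\}}\leq 2$.

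Next I would bound the number of ``bad'' pairs in $L$ — those failing at least one precondition — as a union of the two bounds already established. I would first record that $X\cup Y$ is a tidy modulator of $G$: it is a modulator since $G-(X\cup Y)$ is a minor of $G-X$ and Lemma~\ref{lemma:minor-treewidth} gives treewidth at most $2$; and it is tidy since removing a vertex of $Y$ leaves a superset of $X$, while removing a vertex $x\in X$ leaves a graph that is a minor of $G-(X\setminus\{x\})$, of treewidth at most $2$ by tidiness of $X$. Then Corollary~\ref{corollary:bound-non-limit-1-blocks} applied to $(G,t),X,Y,C,T$ bounds the pairs $(B,a)\in L$ with $N_G(B-a)\setminus\{a\}$ not a limit-$1$ subset for $(G-B,t)$ by $|N_G(C)\cap Y|(|N_G(C)|-1)(t+2)$, identifiable in $\polyG$ time; and Lemma~\ref{lemma:bound-yes-limit-1-blocks-non-treewidth-2}, invoked with the tidy modulator $X\cup Y$ in place of $X$ so that $C\in\mathcal{C}(G-(X\cup Y))$ plays the role of the component, bounds the pairs $(B,a)\in L$ admitting some $x\in N_G(B-a)\setminus\{a\}$ with $\treewidth{G[V(B)\cup\{x\}]\cup\{ax\}}>2$ by $|N_G(C)|$, also identifiable in $\polyG$ time. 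Since $C$ is a component of $G-(X\cup Y)$ we have $N_G(C)\subseteq X\cup Y$, hence $|N_G(C)|\leq |N_G(C)\cap X|+|N_G(C)\cap Y|\leq |X|+|N_G(C)\cap Y|$; substituting this bounds the total number of bad pairs by $|N_G(C)\cap Y|(|X|+|N_G(C)\cap Y|-1)(t+2)+|X|+|N_G(C)\cap Y|$, which is exactly $\Lbound{t}{|X|}{|N_G(C)\cap Y|}$ by Definition~\ref{def:Lbound}.

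Finally, since $|L|>\Lbound{t}{|X|}{|N_G(C)\cap Y|}$, some pair $(B,a)\in L$ is not bad, i.e.\ it satisfies both preconditions of Reduction~\ref{red:contract-block-leaf}; together with $(G,t)$ being non-trivial, $X\cup Y$ a modulator of $G$, $C\in\mathcal{C}(G-(X\cup Y))$, $T$ a block-cut tree of $C$, and $N_T(B)=\{a\}$, this is precisely an instance on which Reduction~\ref{red:contract-block-leaf} applies. For the running time: a block-cut tree of $C$ is obtained in $\polyG$ time by Lemma~\ref{lemma:get-block-cut-tree}, $|L|\leq|V(T)|=\polyG$ by Observation~\ref{obs:block-cut-tree-polyg-size}, and the two sets of bad pairs are computed in $\polyG$ time by the cited results, so a good leaf block is found and the reduction (a single edge contraction inside $B$) applied in $\polyG$ time. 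I expect the only real friction to be the bookkeeping — checking that $X\cup Y$ is a tidy modulator so that Lemma~\ref{lemma:bound-yes-limit-1-blocks-non-treewidth-2} may be invoked for $C\in\mathcal{C}(G-(X\cup Y))$, and tracking the inequality $|N_G(C)|\leq|X|+|N_G(C)\cap Y|$ so the constants line up with Definition~\ref{def:Lbound}; all the substantive combinatorial content is already carried by Corollary~\ref{corollary:bound-non-limit-1-blocks} and Lemma~\ref{lemma:bound-yes-limit-1-blocks-non-treewidth-2}.
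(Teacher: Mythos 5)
Your proof is correct and follows the same route as the paper's (very terse) proof: split the bad leaf pairs into those failing the limit-1 condition (bounded by Corollary~\ref{corollary:bound-non-limit-1-blocks}) and those failing the treewidth condition (bounded by Lemma~\ref{lemma:bound-yes-limit-1-blocks-non-treewidth-2} applied with the tidy modulator $X\cup Y$), then observe that the sum is at most $\Lbound{t}{|X|}{|N_G(C)\cap Y|}$. You actually supply more detail than the paper does, in particular verifying that $X\cup Y$ is a tidy modulator and tracking the inequality $|N_G(C)|\leq|X|+|N_G(C)\cap Y|$ to align with Definition~\ref{def:Lbound}.
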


\begin{proof}
    Directly follows from Corollary \ref{corollary:bound-non-limit-1-blocks}, Lemma \ref{lemma:bound-yes-limit-1-blocks-non-treewidth-2}, and the observation that $X \cup Y$ is a tidy modulator of $G$.
\end{proof}

\section{Obtaining a Simple Path without \textit{Y}-Neighbours}
Our goal for this subsection is to obtain a simple path $P$ in $T$ that contains no vertices (being blocks and articulation vertices) with degree three or more and all blocks contained in $P$ do not have neighbours in $Y$. We will first prove that when a component $C \in \mathcal{C}(G - (X \cup Y))$ contains many neighbours of a vertex $y \in Y$, that then we are able to apply Theorem \ref{theorem:reduce-big-biconnected-graph} to reduce component $C$. This yields a bound on the number of vertices in $C$ that have neighbours in $Y$, and indirectly also the number of blocks in $\mathcal{B}(T)$ that have neighbours in $Y$. Using this bound alongside the previously obtained bound on the number of leaves in $T$ allows us to find such a path $P$.

\begin{lemma}\label{lemma:reduce-biconnected-by-adding-y-vertex}
    Let $(G,t)$ be a non-trivial problem instance, let $X$ be a tidy modulator of $G$, let $Y$ be a component separator for $(G,t,X)$, let $C \in \mathcal{C}(G - (X \cup Y))$, and let $y \in N_G(C) \cap Y$. If $|N_G(y) \cap V(C)| > 1988|X|$, we can in $\polyG$ time apply Reduction \ref{red:remove-middle-edge-to-isolated-path} or \ref{red:ladder-reduction} on an edge in $E(G) \setminus E(G[X])$.
\end{lemma}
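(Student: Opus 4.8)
The plan is to exhibit a large biconnected induced subgraph of $G - X$ and then hand it to Theorem \ref{theorem:reduce-big-biconnected-graph}. Concretely, I would work with $H = G[V(C) \cup \{y\}]$. Since $V(C) \cup \{y\} \subseteq V(G) \setminus X$, $H$ is an induced subgraph of $G - X$, hence $\treewidth{H} \leq 2$, and $H$ is connected because $C$ is connected and $y$ has a neighbour in $C$ (indeed $|N_G(y) \cap V(C)| > 1988|X| \geq 1988$, using that $(G,t)$ non-trivial forces $X \neq \emptyset$, so in particular $|V(C)| > 1$ and the block-cut machinery applies to $H$).

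The heart of the argument is to show that $y$ lies in a single block $B$ of $H$ and that $B$ already contains every neighbour of $y$ in $C$. The key observation is that $H - y = G[V(C)] = C$ is connected, so $y$ is \emph{not} an articulation vertex of $H$. Pick a fixed $v \in N_G(y) \cap V(C)$ and let $B$ be the block of $H$ containing the edge $yv$. For any other $w \in N_G(y) \cap V(C)$, the edge $yw$ lies in some block $B'$; if $B' \neq B$ then $B$ and $B'$ are two distinct blocks sharing the vertex $y$, so Lemma \ref{lemma:blocks-share-at-most-one-cut-vertex} would make $y$ an articulation vertex of $H$, a contradiction. Hence $B' = B$ and $w \in V(B)$. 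Therefore $\{y\} \cup (N_G(y) \cap V(C)) \subseteq V(B)$, and since $|N_G(y) \cap V(C)| > 1988|X|$ is an integer this gives $|V(B)| \geq |N_G(y) \cap V(C)| + 1 \geq 1988|X| + 2$.

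To finish, I would note that $B$, being a maximal biconnected subgraph of $H$, is biconnected and induced in $H$, and since $H$ is induced in $G - X$, $B$ is a biconnected induced subgraph of $G - X$ with $|V(B)| \geq 1988|X| + 2$. Applying Theorem \ref{theorem:reduce-big-biconnected-graph} to $(G,t)$, $X$, and $B$ then yields, in $\polyG$ time, an application of Reduction \ref{red:remove-middle-edge-to-isolated-path} or \ref{red:ladder-reduction} on an edge of $B$; every edge of $B$ has both endpoints in $V(C) \cup \{y\} \subseteq V(G) \setminus X$, so it lies in $E(G) \setminus E(G[X])$ as required. For the running time: $H$ and a block-cut tree of $H$ are computable in $\polyG$ time (Lemma \ref{lemma:get-block-cut-tree}), the block $B$ containing $y$ is identified in $\polyG$ time, and Theorem \ref{theorem:reduce-big-biconnected-graph} is polynomial.

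The only subtle step is the middle one — pinning $y$ to a single block and concluding that this block swallows all of $N_G(y) \cap V(C)$; this rests on combining ``$H - y$ is connected'' with Lemma \ref{lemma:blocks-share-at-most-one-cut-vertex} (and, implicitly, Lemma \ref{lemma:block-cut-tree-only-blocks-of-size-at-least-2} to know blocks are nondegenerate). The remaining ingredients — that $H$ is induced in $G - X$, that a block is an induced subgraph, the edge-membership check against $E(G[X])$, and the integer estimate $|V(B)| \geq 1988|X|+2$ — are routine bookkeeping.
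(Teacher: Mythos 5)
Your argument is correct, and it takes a genuinely different route from the paper's. The paper constructs an explicit biconnected subgraph from scratch: it takes a spanning tree $T$ of $C$, repeatedly prunes leaves of $T$ not adjacent to $y$, and then shows by a direct connectivity argument that $G[V(T) \cup \{y\}]$ is biconnected. You instead work with the whole graph $H = G[V(C) \cup \{y\}]$, observe that $y$ is not a cut vertex of $H$ (because $H - y = C$ is connected), and conclude from Lemma~\ref{lemma:blocks-share-at-most-one-cut-vertex} that every edge from $y$ into $C$ lands in a single block $B$ of $H$ — which therefore already contains $\{y\} \cup (N_G(y) \cap V(C))$ and has the required size. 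Your route is arguably cleaner: it avoids the spanning-tree-pruning construction and the ad hoc biconnectivity check, replacing them with a one-line consequence of block-cut tree structure. What the paper's construction buys is that it produces a witness biconnected subgraph directly, without appealing to the block decomposition of $H$; in your version one must compute (or at least conceptually invoke) a block-cut tree of $H$, though as you note this is still polynomial via Lemma~\ref{lemma:get-block-cut-tree}. Both arguments correctly feed a biconnected induced subgraph of $G - X$ of size at least $1988|X| + 2$ into Theorem~\ref{theorem:reduce-big-biconnected-graph}, and your edge-membership and integrality bookkeeping are sound.
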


\begin{proof}
    We let $T$ be a spanning tree of $C$ from which we have exhaustively removed its leaves that were not adjacent to $y$. We have $|V(T)| \geq |N_G(y) \cap V(C)| > 1988|X|$. From Observation \ref{obs:get-spanning-tree} it trivially follows that we can obtain $T$ in $\polyG$ time.
    
    We let $B = G[V(T) \cup \{y\}]$ and argue that $B$ is biconnected. Assume per contradiction that $a$ is an articulation vertex of $B$. Because $T$ is spanning tree of $B - y$ we have $a \neq y$, which implies $a \in V(T)$. Let $a$ be the root of $T$. Every subtree $T' \in \mathcal{C}(T-a)$ will contain at least one leaf of $T$. Because $V(T) \setminus \{a\} = V(B) \setminus \{a,y\}$, we have that each $D \in \mathcal{C}(B - a)$ contains a leaf of $T$. Because all leaves of $T$ are connected to $y$ we have that $B-a$ is connected, which contradicts $a$ being an articulation vertex. Hence, $B$ must be biconnected.
    
    Because $V(B) \cap X = \emptyset$ we have that $B$ is a biconnected induced subgraph of $G - X$ with $|V(B)| = |V(T)| + 1 \geq 1988|X| + 2$. Therefore, by Theorem \ref{theorem:reduce-big-biconnected-graph}, we can in $\polyG$ time apply Reduction \ref{red:remove-middle-edge-to-isolated-path} or \ref{red:ladder-reduction} on an edge in $E(B) \subseteq E(G) \setminus E(G[X])$.
\end{proof}

\noindent
We next focus on obtaining a simple path $P = \langle a_1, B_1, a_2, B_2, \dots, B_k, a_{k+1} \rangle$ in a block-cut tree $T$, such that $N_G(V(\mathcal{B}(P)) \setminus \{a_1, a_{k+1}\}) \subseteq \{a_1, a_{k+1}\} \cup X$. Figure \ref{fig:get-linear-block-cut-tree:example-P} displays an example path with $k = 6$ for which this condition holds. Later on we will show how the graph induced by vertices in blocks along such a path $P$ can be reduced.

\begin{figure}[ht]
\centering
\includegraphics[scale=.7,page=7]{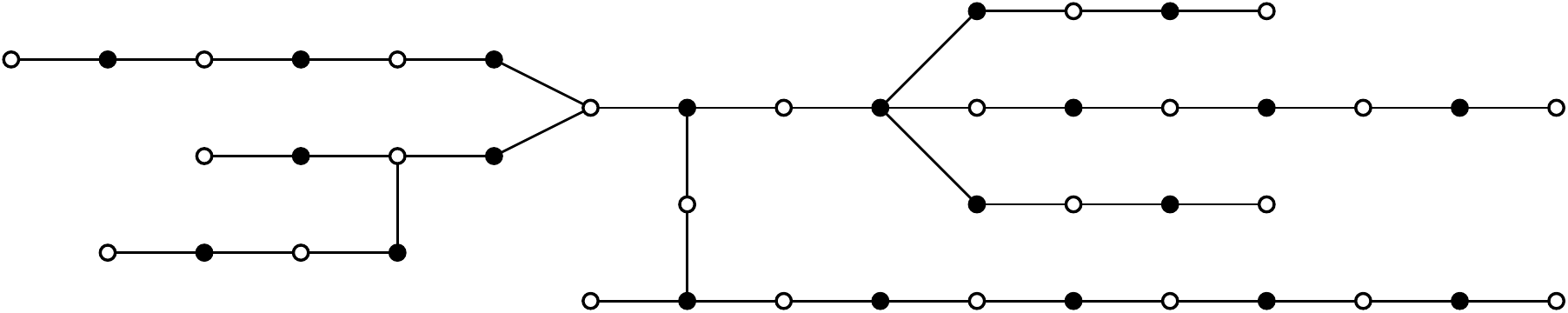}
\caption{A simple path $P$ in a block-cut tree with $N_G(V(\mathcal{B}(P)) \setminus \{a_1,a_7\}) \subseteq \{a_1,a_7\} \cup X$.}
\label{fig:get-linear-block-cut-tree:example-P}
\end{figure}

\noindent
To obtain such a path $P$ we will define a set of blocks and articulation vertices $D \subseteq V(T)$ for which all components in $T - D$ will be linear subtrees corresponding to paths $P$ with $|\mathcal{B}(P)| = k_P$ and $N_G(V(\mathcal{B}(P)) \setminus \{a_1, a_{k_P}\}) \subseteq \{a_1, a_{k_P}\} \cup X$. By bounding the size of $D$ and the number of components in $T - D$ we are able to prove that a large block-cut tree $T$ will contain a large component in $T - D$.

The set $D$ consists out of three sets $D_1$, $D_2$, and $D_3$. We first focus on $D_1$. Sets $D_2$ and $D_3$ will be introduced in the proof of Lemma \ref{lemma:get-linear-block-cut-tree}. Set $D_1$ is the set of all branching vertices (vertices with degree at least three) in $T$. We start by deriving bounds on the size of $D_1$ and the number of components in $T - D_1$.

\begin{lemma}\label{lemma:bound-D1}
    Let $T$ be a tree with $m \geq 2$ leaves and let $D_1 = \setdef{c \in V(T)}{|N_T(c)| \geq 3}$. Then $|D_1| \leq m-2$.
\end{lemma}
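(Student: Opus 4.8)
The plan is a straightforward degree-counting argument. First I would observe that since $T$ has $m \geq 2$ leaves it has at least two vertices, and hence (being a tree) it is connected with $|E(T)| = |V(T)| - 1$; in particular every leaf has degree exactly $1$, since the only vertex that could have degree $0$ would be an isolated vertex in a one-vertex tree, which is excluded.

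Next I would partition $V(T)$ according to degree: let $L$ be the set of degree-one vertices (the leaves, so $|L| = m$ by the remark above), let $D_1$ be the set of vertices of degree at least three, and let the remaining vertices have degree exactly two. Applying the handshake identity $\sum_{v \in V(T)} |N_T(v)| = 2|E(T)| = 2(|V(T)| - 1)$, I would bound the left-hand side from below: the leaves contribute exactly $|L|$, the degree-two vertices contribute exactly $2(|V(T)| - |L| - |D_1|)$, and each vertex of $D_1$ contributes at least $3$, so the total is at least $|L| + 2(|V(T)| - |L| - |D_1|) + 3|D_1| = 2|V(T)| - |L| + |D_1|$. Combining with the exact value $2|V(T)| - 2$ gives $2|V(T)| - 2 \geq 2|V(T)| - |L| + |D_1|$, i.e. $|D_1| \leq |L| - 2 = m - 2$.

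There is essentially no obstacle here; the only point requiring a small amount of care is the degenerate case, i.e. confirming that the hypothesis $m \geq 2$ rules out a single-vertex tree so that ``leaf'' really means ``degree exactly one'' and the handshake count with $|L| = m$ is valid. Everything else is the routine degree-sum estimate above.
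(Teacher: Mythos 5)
Your proof is correct and takes essentially the same approach as the paper: both are the standard degree-sum (handshake) argument that bounds $|D_1|$ by comparing the total degree $2|E(T)| = 2|V(T)| - 2$ against the contributions of leaves, degree-two vertices, and branching vertices. The paper organizes the algebra slightly differently (starting from $|D_1| \leq \sum_{c \in D_1}(|N_T(c)| - 2)$), but the content is identical, including the implicit use of $m \geq 2$ to rule out a degree-zero vertex.
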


\begin{proof}
    Because all vertices in $V(T) \setminus D_1$ have degree at most two, and there are $m$ vertices with degree at most one, we have that $\sum_{c \in V(T) \setminus D_1} |N_T(c)| = 2|V(T) \setminus D_1| - m$. We derive $|D_1| \leq m-2$.
    \allowdisplaybreaks
    \begin{align*}
        |D_1| &\leq \sum_{c \in D_1} (|N_T(c)| - 2) & \text{definition of } D_1\\
        &= \sum_{c \in D_1} |N_T(c)| - 2|D_1| & \text{rewriting}\\
        &= \sum_{c \in D_1} |N_T(c)| + 2|V(T) \setminus D_1| - 2|V(T)| & D_1 \subset V(T)\\
        &= \sum_{c \in D_1} |N_T(c)| + \sum_{c \in V(T) \setminus D_1} |N_T(c)| - 2|V(T)| + m & \hspace{14.1pt}\sum_{c \in V(T) \setminus D_1} |N_T(c)| = 2|V(T){{\setminus}}D_1|{{-}}m\\
        &= \sum_{c \in V(T)} |N_T(c)| - 2|V(T)| + m & D_1 \subset V(T)\\
        &= 2|E(T)| - 2|V(T)| + m & \textrm{degree sum formula}\\
        &= m - 2 & |E(T)| = |V(T)| - 1
    \end{align*}
    \noindent
    We conclude $|D_1| \leq m-2$.
\end{proof}

\begin{lemma}\label{lemma:bound-CD1}
    Let $T$ be a non-empty tree and let $D_1 = \setdef{c \in V(T)}{|N_T(c)| \geq 3}$. Then $|\mathcal{C}(T - D_1)| \leq 2|D_1| + 1$.
\end{lemma}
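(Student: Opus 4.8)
The plan is to count the components of $T - D_1$ by a direct vertex/edge bookkeeping argument, exploiting that the only vertices of $T$ lying outside $D_1$ have at most two neighbours.

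First I would observe that every $v \in V(T) \setminus D_1$ has $|N_T(v)| \le 2$, so $T - D_1$ is a disjoint union of paths (a linear forest). For any forest $F$ the number of connected components equals $|V(F)| - |E(F)|$. Applying this to $T - D_1$, and writing $F_{D_1}$ for the set of edges of $T$ having at least one endpoint in $D_1$, we have $V(T - D_1) = V(T) \setminus D_1$, $E(T - D_1) = E(T) \setminus F_{D_1}$, and $|E(T)| = |V(T)| - 1$ since $T$ is a tree. Substituting, the $|V(T)|$ terms cancel and we obtain the identity
\[
    |\mathcal{C}(T - D_1)| = 1 - |D_1| + |F_{D_1}| .
\]
Thus the claim reduces to showing $|F_{D_1}| \le 3|D_1|$.

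To bound $|F_{D_1}|$ I would reuse the handshake bookkeeping already used in the proof of Lemma~\ref{lemma:bound-D1}. Counting each edge inside $D_1$ twice and each $D_1$-to-outside edge once gives $|F_{D_1}| = \sum_{v \in D_1} |N_T(v)| - |E(T[D_1])|$; the degree-sum identity for trees yields $\sum_{v \in D_1} (|N_T(v)| - 2) = m - 2$, where $m$ is the number of leaves of $T$ (this is exactly the computation proving Lemma~\ref{lemma:bound-D1}); and $T[D_1]$ being a subforest of $T$ controls $|E(T[D_1])|$ from below. The finishing step is to combine these facts — together with the structural observation that each path-component of $T - D_1$ is attached to $D_1$ by at most two edges (its two ends, each either a leaf of $T$ or joined to $D_1$ by a single edge) — so that the total number of attachment edges, hence $|F_{D_1}|$, is pinned down in terms of $|D_1|$.

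\textbf{Main obstacle.} The non-routine part is precisely this last bound $|F_{D_1}| \le 3|D_1|$: it is where one must use that $D_1$ is an \emph{induced} subforest of $T$ rather than substituting crude degree estimates, and where the interplay between the leaf count $m$, Lemma~\ref{lemma:bound-D1}, and the component-by-component attachment count has to be set up carefully. Everything before it (the forest component identity and the rewriting of $E(T-D_1)$) is elementary.
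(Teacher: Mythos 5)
Your identity $|\mathcal{C}(T - D_1)| = 1 - |D_1| + |F_{D_1}|$ is correct, and it cleanly reduces the lemma to proving $|F_{D_1}| \leq 3|D_1|$. You are right to flag that step as the crux, but the reason it resists proof is that it is \emph{false}, and so is the lemma as stated. Take the star $T = K_{1,n}$ with $n \geq 4$: then $D_1$ is just the centre, $|F_{D_1}| = n > 3 = 3|D_1|$, and correspondingly $|\mathcal{C}(T - D_1)| = n > 3 = 2|D_1| + 1$. No amount of induced-subforest bookkeeping repairs this, because the vertices of $D_1$ may have arbitrarily large degree.

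The paper's own inductive proof has the matching flaw. It picks an arbitrary $r \in D_1$ and ends with
\[
    |\mathcal{C}(T - D_1)| \;\leq\; 2|D_1 \setminus \{r\}| + |N_T(r)| \;=\; 2|D_1| + |N_T(r)| - 2 \;\leq\; 2|D_1| + 1,
\]
but the final inequality requires $|N_T(r)| \leq 3$, while $r \in D_1$ guarantees only $|N_T(r)| \geq 3$. The step is valid precisely when every branching vertex has degree exactly $3$, which is not assumed; block-cut trees, where the lemma is later invoked, routinely have articulation vertices of higher degree.

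What the paper actually needs downstream (in Lemma \ref{lemma:get-linear-block-cut-tree}) is only the consequence $|\mathcal{C}(T - D_1)| \leq 2m - 3$, with $m$ the number of leaves of $T$, and that bound is true and falls straight out of your computation. Every edge of $F_{D_1}$ is incident to $D_1$, so $|F_{D_1}| \leq \sum_{c \in D_1}|N_T(c)|$. The degree-sum argument inside the proof of Lemma \ref{lemma:bound-D1} gives $\sum_{c \in D_1}(|N_T(c)| - 2) = m - 2$ for $m \geq 2$, hence $\sum_{c \in D_1}|N_T(c)| = 2|D_1| + m - 2$. Substituting into your identity yields $|\mathcal{C}(T - D_1)| \leq m + |D_1| - 1$, and $|D_1| \leq m - 2$ finishes. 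So your approach is not stuck: combined with the leaf-count identity, it proves the bound that is actually used, and in the process exposes that the constant in the lemma's statement is wrong.
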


\begin{proof}
    We prove by induction on $D_1$.
    
    \begin{basecase}[$D_1 = \emptyset$]
        Because $V(T) \neq \emptyset$ we have $|\mathcal{C}(T - D_1)| = |\mathcal{C}(T)| = 1 = 2|D_1| + 1$.
    \end{basecase}
    
    \begin{inductivestep}[$D_1 \neq \emptyset$]
        Let $r \in D_1$. For each $d \in N_T(r)$ let $T_d = T[V(C_d) \cup \{r\}]$ where $C_d \in \mathcal{C}(T - r)$ with $d \in V(C_d)$ and let $D_{1,d} = \setdef{c \in V(T_d)}{|N_{T_d}(c)| \geq 3}$. For each $c \in V(T_d) \setminus \{r\}$ we have $N_{T_d}(c) = N_T(c)$, which implies $D_{1,d} \cup \{r\} = D_1 \cap V(T_d)$.
        
        Each component $C \in \mathcal{C}(T - D_1)$ is a subgraph of $T - r$, which implies that there exists some $T_d$ of which $C$ is a subgraph. Because $D_{1,d} \cup \{r\} = D_1 \cap V(T_d)$ this yields $C \in \mathcal{C}(T_d - (D_{1,d} \cup \{r\}))$, and hence $|\mathcal{C}(T - D_1)| \leq \sum_{d \in N_T(r)} |\mathcal{C}(T_d - (D_{1,d} \cup \{r\}))|$. Because $r$ is a leaf in $T_d$ we have $|\mathcal{C}(T_d - (D_{1,d} \cup \{r\}))| \leq |\mathcal{C}(T_d - D_{1,d})|$. By the induction hypothesis we therefore derive $|\mathcal{C}(T - D_1)| \leq \sum_{d \in N_T(r)} |\mathcal{C}(T_d - D_{1,d})| \leq \sum_{d \in N_T(r)} (2|D_{1,d}| + 1) = 2\sum_{d \in N_T(r)} |D_{1,d}| + |N_T(r)|$. Because all trees $T_d$ and $T_{d'}$ with $d \neq d'$ have $V(T_d) \cap V(T_{d'}) = \{r\}$ we have $D_{1,d} \cap D_{1,d'} = \emptyset$. Because $D_{1,d} \subseteq D_1 \setminus \{r\}$ this yields $|\mathcal{C}(T - D_1)| \leq 2\sum_{d \in N_T(r)} |D_{1,d}| + |N_T(r)| \leq 2|D_1 \setminus \{r\}| + |N_T(r)| = 2|D_1| + |N_T(r)| - 2 \leq 2|D_1| + 1$.
    \end{inductivestep}
    
    \noindent
    We conclude that $|\mathcal{C}(T - D_1)| \leq 2|D_1| + 1$.
\end{proof}

\noindent
In the next proof we introduce sets $D_2$ and $D_3$, and prove that these sets allow us to find a simple path $P = \langle a_1, B_1, a_2, B_2, \dots, B_{k}, a_{k+1} \rangle$ in $T$ with $N_G(V(\mathcal{B}(P)) \setminus \{a_1,a_{k+1}\}) \subseteq \{a_1,a_{k+1}\} \cup X$. Figure \ref{fig:get-linear-block-cut-tree} shows these sets $D_1$, $D_2$, and $D_3$ for an example block-cut tree $T$. In this visualisation we use black dots to denote articulation vertices and white dots to denote blocks. 

\begin{lemma}\label{lemma:get-linear-block-cut-tree}
    Let $(G,t)$ be a problem instance, let $X$ be a tidy modulator of $G$, let $Y$ be a component separator for $(G,t,X)$, let $C \in \mathcal{C}(G - (X \cup Y))$, let $T$ be a block-cut tree of $C$, and let $k > 0$. If $T$ has $m \geq 2$ leaves and each $y \in N_G(C) \cap Y$ has $|N_G(y) \cap V(C)| \leq n$ and $|\mathcal{B}(T)| \geq m(2k+3)+2kn|N_G(C) \cap Y|-3k-4$, then we can in $\polyG$ time obtain a simple path $P = \langle a_1, B_1, a_2, B_2, \dots, B_{k'}, a_{k'+1} \rangle$ in $T$ with $N_G(V(\mathcal{B}(P)) \setminus \{a_1,a_{k'+1}\}) \subseteq \{a_1,a_{k'+1}\} \cup X \wedge k' \geq k$.
\end{lemma}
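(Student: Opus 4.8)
The plan is to locate the required path $P$ inside a single connected component of $T - D$, where $D = D_1 \cup D_2 \cup D_3 \subseteq V(T)$ is chosen so that $T-D$ is a disjoint union of paths (alternating between blocks and articulation vertices) in which no surviving block can ``see'' $Y$. Here $D_1$ is the set of branching vertices of $T$, already under control by Lemmas~\ref{lemma:bound-D1} and~\ref{lemma:bound-CD1}: $|D_1| \le m-2$, $|\mathcal{C}(T - D_1)| \le 2|D_1|+1 \le 2m-3$, and $T - D_1$ is a linear forest since every remaining vertex has degree at most two.

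First I would set $W = \bigcup_{y \in N_G(C)\cap Y}\parenthesised{N_G(y)\cap V(C)}$, the set of vertices of $C$ having a neighbour in $Y$; by hypothesis $|W| \le n\,|N_G(C)\cap Y|$. Then I take $D_2 = W \cap A(T)$ and $D_3 = \setdef{B\in\mathcal{B}(T)}{V(B)\cap \parenthesised{W\setminus A(T)} \neq \emptyset}$, so that $|D_2| + |D_3| \le |W|$ (charge each block of $D_3$ to one of its non-articulation vertices in $W$; distinct blocks get distinct charges). Every element of $D_2 \cup D_3$ has degree at most two in $T$, since a degree-$\ge 3$ vertex already lies in $D_1$; hence removing $D_2 \cup D_3$ from the linear forest $T - D_1$ creates at most $|D_2| + |D_3|$ extra components. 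So $T-D$ consists of at most $2m - 3 + |D_2| + |D_3|$ paths, among which are distributed $|\mathcal{B}(T)| - |D \cap \mathcal{B}(T)|$ blocks, where $|D\cap\mathcal{B}(T)| \le (m-2) + |D_3|$.

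From a component $\mathcal{P}$ of $T-D$ with $b$ blocks I would take the largest subpath $P = \langle a_1, B_1, a_2, \dots, B_{k'}, a_{k'+1}\rangle$ that begins and ends at an articulation vertex, trimming at most one block from each end, so $k' \ge b-2$. A counting step then finishes the existence claim: if every component of $T-D$ had at most $k+1$ blocks, then $|\mathcal{B}(T)| \le |D\cap\mathcal{B}(T)| + (k+1)\,|\mathcal{C}(T-D)|$, and substituting the bounds above together with $|D_2|+|D_3| \le n\,|N_G(C)\cap Y|$ should contradict the assumed lower bound $|\mathcal{B}(T)| \ge m(2k+3) + 2kn\,|N_G(C)\cap Y| - 3k - 4$; hence some component has at least $k+2$ blocks and yields $P$ with $k' \ge k$. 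Every step runs in $\polyG$ time, and $|T| = \polyG$ by Observation~\ref{obs:block-cut-tree-polyg-size}. For the property $N_G\parenthesised{V(\mathcal{B}(P)) \setminus \{a_1, a_{k'+1}\}} \subseteq \{a_1,a_{k'+1}\} \cup X$: since $C \in \mathcal{C}(G-(X\cup Y))$, every $v\in V(C)$ has $N_G(v) \subseteq V(C)\cup X\cup Y$; each $B_i$ has $\treewidthsmallp{\text{degree}}$... more precisely $\deg_T(B_i) = 2$ (branching blocks are in $D_1$, and $B_i$ already has the distinct $T$-neighbours $a_i,a_{i+1}$), so $V(B_i)$ contains exactly the articulation vertices $a_i,a_{i+1}$, and $B_i\notin D_3$ with $a_i,a_{i+1}\in\mathcal{P}$ (hence $\notin D_2$) gives $V(B_i)\cap W=\emptyset$, killing all $Y$-neighbours; and for $u\in N_G(v)\cap V(C)$ the edge $uv$ lies in one block (Lemma~\ref{lemma:blocks-share-at-most-one-cut-vertex}), which is forced to be a block of $P$ containing $v$ because a non-articulation $v$ lies only in its $B_i$ and an interior articulation vertex $v=a_j$ has $\deg_T(a_j)=2$ and so lies only in $B_{j-1},B_j$.

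The hard part will be the arithmetic of the counting step: choosing $D_2$ and $D_3$, and accounting exactly for how many extra components their removal produces and for the at-most-two blocks trimmed from each component's ends, so that the inequality closes with precisely the stated constants, which is tightest for small $k$. Some tuning of which vertices land in $D_2$ versus $D_3$, or a sharper count of how many block-endpoints a removed branching vertex can create, is likely to be the point where the bound $m(2k+3) + 2kn\,|N_G(C)\cap Y| - 3k - 4$ comes out as written.
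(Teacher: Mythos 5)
Your approach is genuinely different from the paper's, and the difference is instructive. The paper takes $D_2$ to be the \emph{leaf blocks} of $T - D_1$ (so $|D_2| \le 4m-6$), which guarantees that every surviving component already starts and ends at an articulation vertex and no trimming is needed, and then takes $D_3$ to be the $Y$-adjacent blocks of $T-(D_1\cup D_2)$, bounded by $|D_3| \le 2n|N_G(C)\cap Y|$ via a pigeonhole argument on $T$ (a vertex in three or more surviving blocks must have $T$-degree $\ge 3$). You instead pay the ``end-of-path'' cost at the counting stage by trimming up to two blocks per component, and in exchange you get the sharper bound $|D_2|+|D_3| \le |W| \le n|N_G(C)\cap Y|$ via your clean charging argument (a non-articulation vertex lies in exactly one block, so the charges are injective). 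Your argument for $N_G\parenthesised{V(\mathcal{B}(P))\setminus\{a_1,a_{k'+1}\}}\subseteq\{a_1,a_{k'+1}\}\cup X$ is also different: the paper routes through $\partial_T(T')=\{a_1,a_{k'+1}\}$ and Lemma~\ref{lemma:articulation-vertices-as-separators}, while you argue directly that each edge of $C$ lies in a unique block, non-articulation and interior-articulation vertices of $P$ lie only in blocks of $\mathcal{B}(P)$, and $V(B_i)\cap W=\emptyset$; both are sound, yours is a bit more elementary.

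There is, however, a concrete gap in the arithmetic, exactly where you predicted one. If every component of $T-D$ has at most $k+1$ blocks, your count gives
\begin{align*}
|\mathcal{B}(T)| &\le (m-2) + |D_3| + (k+1)\parenthesised{2m-3 + |D_2|+|D_3|}\\
&\le (2k+3)m - 3k - 5 + (k+2)n|N_G(C)\cap Y|,
\end{align*}
while the paper's corresponding argument closes at $(2k+3)m - 3k - 5 + 2kn|N_G(C)\cap Y|$, just below the hypothesis threshold $(2k+3)m + 2kn|N_G(C)\cap Y| - 3k - 4$. Your coefficient $(k+2)$ on the $n|N_G(C)\cap Y|$ term beats $2k$ for $k\ge 3$, ties at $k=2$, but is \emph{worse} at $k=1$: $(k-2)n|N_G(C)\cap Y|\ge 0$ is needed for the contradiction, which fails whenever $k=1$ and $n|N_G(C)\cap Y|>0$. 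The lemma is stated for all $k>0$, and the paper's proof does cover $k=1$, so as written your proof does not establish the lemma in the form given. (This happens to be irrelevant to the application, which instantiates $k=30|X|+3$, but a complete proof of the lemma as stated would need either the paper's $D_2$, or to exclude $k=1$ from the hypotheses, or a finer accounting of how many components actually need trimming.)
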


\begin{proof}
    Let $D_1 = \setdef{c \in V(T)}{|N_T(c)| \geq 3}$. By Lemmas \ref{lemma:bound-D1} and \ref{lemma:bound-CD1} we have $D_1 \leq m-2$ and $|\mathcal{C}(T - D_1)| \leq 2|D_1| + 1 \leq 2m-3$. We note that each component $T' \in \mathcal{C}(T - D_1)$ is a linear tree with $|N_T(T')| \leq 2$. E.g., the situation shown in Figure \ref{fig:get-linear-block-cut-tree:T-D1}.
    
    Let $D_2 = \setdef{B \in \mathcal{B}(T) \setminus D_1}{|N_T(B) \setminus D_1| \leq 1}$. I.e., $D_1$ is the set of all leaf blocks in $T - D_1$. Because each component in $T - D_1$ is a linear tree we have $|D_2| \leq 2|\mathcal{C}(T - D_1)| \leq 4m-6$ and $|\mathcal{C}(T - (D_1 \cup D_2))| \leq |\mathcal{C}(T - D_1)| \leq 2m-3$. We note that each component $T' \in \mathcal{C}(T - (D_1 \cup D_2))$ is a linear tree with $|N_T(T')| = 2$ and each leaf $\ell$ of $T'$ has $\ell \in A(T)$. E.g., the situation shown in Figure \ref{fig:get-linear-block-cut-tree:T-D2}.
    
    Let $D_3 = \setdef{B \in \mathcal{B}(T) \setminus (D_1 \cup D_2)}{N_G(B) \cap Y \neq \emptyset}$. I.e., $D_2$ is the set of all blocks in $T - (D_1 \cup D_2)$ that are adjacent to some vertex in $Y$.
    
    We argue that $|D_3| \leq 2n|N_G(C) \cap Y|$. Assume per contradiction that there exists some vertex $y \in N_G(C) \cap Y$ that is adjacent to $2n+1$ blocks in $\mathcal{B}(T) \setminus (D_1 \cup D_2)$. Because $|N_G(y) \cap V(C)| \leq n$ there must exist a vertex $u \in N_G(y) \cap V(C)$ that is located in at least three blocks in ${B}(T) \setminus (D_1 \cup D_2)$. Lemma \ref{lemma:blocks-share-at-most-one-cut-vertex} implies $u \in A(T)$. By Definition \ref{def:block-cut-tree} we have $|N_T(u)| \geq 3$, which implies $u \in D_1$. Let $B \in N_T(u) \setminus (D_1 \cup D_2)$. By definition of $D_2$ we have $2 \leq |N_T(B) \setminus D_1| \leq |N_T(B) \setminus \{u\}| = |N_T(B)| - 1$, which implies $B \in D_1$, which is a contradiction. Hence, we have $|D_3| \leq 2n|N_G(C) \cap Y|$. 
    
    Because each component in $T - (D_1 \cup D_2)$ is a linear tree we have that the removal of any block $D_3$ from $T$ increases the number of components with at most one. Hence, $|\mathcal{C}(T - (D_1 \cup D_2 \cup D_3))| \leq |\mathcal{C}(T - (D_1 \cup D_2))| + |D_3| \leq 2m + 2n|N_G(C) \cap Y| - 3$. We note that each component $T' \in \mathcal{C}(T - (D_1 \cup D_2 \cup D_3))$ is a linear tree with $|N_T(T')| = 2$, each leaf $\ell$ of $T'$ has $\ell \in A(T)$, and each block $B \in \mathcal{B}(T')$ has $N_G(B) \cap Y = \emptyset$. E.g., the situation shown in Figure \ref{fig:get-linear-block-cut-tree:T-D3-post}.

    Using the previously derived bounds and the pigeonhole principle, we derive that the largest component $T' \in \mathcal{C}(T - (D_1 \cup D_2 \cup D_3))$ must contain at least $k$ blocks.
    
    \begin{align*}
        |\mathcal{B}(T')| &\geq \left\lceil \frac{|\mathcal{B}(T) \setminus (D_1 \cup D_2 \cup D_3)|}{|\mathcal{C}(T - (D_1 \cup D_2 \cup D_3)|} \right\rceil\\
        &\geq \left\lceil \frac{m(2k+3)+2kn|N_G(C) \cap Y|-3k-4 - m + 2 - 4m + 6 - 2n|N_G(C) \cap Y|}{2m + 2n|N_G(C) \cap Y| - 3} \right\rceil\\
        &= \left\lceil \frac{2m(k-1) + 2n(k-1)|N_G(C) \cap Y| - 3(k-1)+1}{2m + 2n|N_G(C) \cap Y| - 3} \right\rceil\\
        &= k-1 + \left\lceil \frac{1}{2m + 2n|N_G(C) \cap Y| - 3} \right\rceil\\
        &= k
    \end{align*}
    
    \noindent
    From Observation \ref{obs:block-cut-tree-polyg-size} it trivially follows that we can construct sets $D_1$, $D_2$, $D_3$ and find the largest component $T' \in \mathcal{C}(T - (D_1 \cup D_2 \cup D_3))$ in $\polyG$ time. We let $T'$ be this largest component, which is shown in Figure \ref{fig:get-linear-block-cut-tree:T'}.
    
    We know that $T'$ is a linear tree with articulation vertices from $A(T)$ as leaves. Let $P = \langle a_1, B_1, a_2, B_2, \dots, B_{k'}, a_{k'+1} \rangle$ with $k' = |\mathcal{B}(T')| \geq k$ be the simple path in $T'$ between its two leaves. Because $T'$ is a linear tree with $|N_T(T')| = 2$ it follows from Lemma \ref{lemma:block-cut-tree-only-blocks-as-leaves} that $\partial_T(T') = \{a_1,a_{k'+1}\}$. From Lemma \ref{lemma:articulation-vertices-as-separators} it therefore follows that $N_C(V(\mathcal{B}(P)) \setminus \{a_1, a_{k'+1}\}) = \{a_1,a_{k'+1}\}$. Because each block $B \in \mathcal{B}(T')$ has $N_G(B) \cap Y = \emptyset$, we conclude that $N_G(V(\mathcal{B}(P)) \setminus \{a_1,a_{k'+1}\}) \subseteq \{a_1,a_{k'+1}\} \cup X$. Hence, path $P$ suffices.
\end{proof}
\clearpage
\begin{figure}[ht]
\newcommand{\figvspace}{.71cm}
\centering
\begin{subfigure}{\textwidth}
    \centering
    \includegraphics[scale=.7,page=1]{reduce-block-cut-tree.pdf}
    \caption{Block-cut tree $T$}
    \label{fig:get-linear-block-cut-tree:T}
\end{subfigure}
\\~\vspace{\figvspace}~\\
\begin{subfigure}{\textwidth}
    \centering
    \includegraphics[scale=.7,page=2]{reduce-block-cut-tree.pdf}
    \caption{Subtree $T - D_1$ with all degree three or more nodes removed}
    \label{fig:get-linear-block-cut-tree:T-D1}
\end{subfigure}
\\~\vspace{\figvspace}~\\
\begin{subfigure}{\textwidth}
    \centering
    \includegraphics[scale=.7,page=3]{reduce-block-cut-tree.pdf}
    \caption{Subtree $T - (D_1 \cup D_2)$ with all block leaves removed}
    \label{fig:get-linear-block-cut-tree:T-D2}
\end{subfigure}
\\~\vspace{\figvspace}~\\
\begin{subfigure}{\textwidth}
    \centering
    \includegraphics[scale=.7,page=4]{reduce-block-cut-tree.pdf}
    \caption{A component in $T - (D_1 \cup D_2)$ and its connections to $Y$}
    \label{fig:get-linear-block-cut-tree:T-D3-pre}
\end{subfigure}
\\~\vspace{\figvspace}~\\
\begin{subfigure}{\textwidth}
    \centering
    \includegraphics[scale=.7,page=5]{reduce-block-cut-tree.pdf}
    \caption{The sub components of previous component present in $T - (D_1 \cup D_2 \cup D_3)$}
    \label{fig:get-linear-block-cut-tree:T-D3-post}
\end{subfigure}
\\~\vspace{\figvspace}~\\
\begin{subfigure}{\textwidth}
    \centering
    \includegraphics[scale=.6,page=6]{reduce-block-cut-tree.pdf}
    \caption{An example component $T'$ in $T - (D_1 \cup D_2 \cup D_3)$}
    \label{fig:get-linear-block-cut-tree:T'}
\end{subfigure}
\caption{Example construction of $D_1$, $D_2$, and $D_3$ and the resulting components}
\label{fig:get-linear-block-cut-tree}
\end{figure}
\clearpage
\noindent
We define a function $\nbound{P} : \nat^4 \rightarrow \nat$ such that for any block-cut tree $T$ of a component $C \in \mathcal{C}(G - (X \cup Y))$ with $|\mathcal{B}(T)| > \Pbound{t}{|X|}{|N_G(C) \cap Y|}{k}$ we can either apply a reduction or find a path in $T$ with at least $k$ blocks.

\begin{definition}\label{def:Pbound}
    \[\Pbound{t}{x}{y}{k} = \max(1,~\Lbound{t}{x}{y}(2k+3)+3976kxy-3k-4)\]
\end{definition}

\begin{lemma}\label{lemma:get-linear-block-cut-tree2}
    Let $(G,t)$ be a non-trivial problem instance, let $X$ be a tidy modulator of $G$, let $Y$ be a component separator for $(G,t,X)$, let $C \in \mathcal{C}(G - (X \cup Y))$, let $T$ be a block-cut tree of $C$, and let $k > 0$. If $|\mathcal{B}(T)| > \Pbound{t}{|X|}{|N_G(C) \cap Y|}{k}$, then we can in $\polyG$ time apply Reduction \ref{red:remove-middle-edge-to-isolated-path} or \ref{red:ladder-reduction} on an edge in $E(G) \setminus E(G[X])$, or we apply Reduction \ref{red:contract-block-leaf}, or we obtain a simple path $P = \langle a_1, B_1, a_2, B_2, \dots, B_{k'}, a_{k'+1} \rangle$ in $T$ with $N_G(V(\mathcal{B}(P)) \setminus \{a_1,a_{k'+1}\}) \subseteq \{a_1,a_{k'+1}\} \cup X \wedge k' \geq k$.
\end{lemma}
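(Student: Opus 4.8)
Lemma~\ref{lemma:get-linear-block-cut-tree2} is a packaging statement: it combines the three distinct escape hatches developed in this section. The plan is to do a case analysis on the size of $T$, dispatching to the appropriate earlier lemma in each case.

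\textbf{Approach.} First I would observe that by Definition~\ref{def:Pbound} we have $\Pbound{t}{|X|}{|N_G(C) \cap Y|}{k} \geq 1$, so the hypothesis $|\mathcal{B}(T)| > \Pbound{t}{|X|}{|N_G(C) \cap Y|}{k}$ gives $|\mathcal{B}(T)| \geq 2$; in particular $|V(T)| > 1$, so by Corollary~\ref{corollary:block-cut-tree-only-blocks-of-size-at-least-2} every block has at least two vertices, $T$ has at least two leaves, and by Lemma~\ref{lemma:block-cut-tree-only-blocks-as-leaves} these leaves are blocks. Let $m$ denote the number of leaves of $T$. I would then split into two cases according to how large $m$ is relative to $\Lbound{t}{|X|}{|N_G(C) \cap Y|}$.

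\textbf{Case 1: $m > \Lbound{t}{|X|}{|N_G(C) \cap Y|}$.} Here $T$ has strictly more than $\Lbound{t}{|X|}{|N_G(C) \cap Y|}$ leaves, so Lemma~\ref{lemma:bound-leafs-in-block-cut-tree} directly lets us apply Reduction~\ref{red:contract-block-leaf} in $\polyG$ time, and we are done.

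\textbf{Case 2: $m \leq \Lbound{t}{|X|}{|N_G(C) \cap Y|}$.} Now I would check whether some $y \in N_G(C) \cap Y$ has $|N_G(y) \cap V(C)| > 1988|X|$. If so, Lemma~\ref{lemma:reduce-biconnected-by-adding-y-vertex} applies Reduction~\ref{red:remove-middle-edge-to-isolated-path} or~\ref{red:ladder-reduction} on an edge in $E(G) \setminus E(G[X])$ in $\polyG$ time, and we are done. Otherwise every $y \in N_G(C) \cap Y$ has $|N_G(y) \cap V(C)| \leq 1988|X|$, so I can invoke Lemma~\ref{lemma:get-linear-block-cut-tree} with $n = 1988|X|$, provided the size hypothesis $|\mathcal{B}(T)| \geq m(2k+3) + 2kn|N_G(C)\cap Y| - 3k - 4$ holds. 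This is where the constant $3976 = 2 \cdot 1988$ in Definition~\ref{def:Pbound} comes from: substituting $n = 1988|X|$ and using $m \leq \Lbound{t}{|X|}{|N_G(C)\cap Y|}$, we get $m(2k+3) + 2k \cdot 1988|X| \cdot |N_G(C)\cap Y| - 3k - 4 \leq \Lbound{t}{|X|}{|N_G(C)\cap Y|}(2k+3) + 3976 k |X| |N_G(C)\cap Y| - 3k - 4 \leq \Pbound{t}{|X|}{|N_G(C)\cap Y|}{k} < |\mathcal{B}(T)|$, so the hypothesis of Lemma~\ref{lemma:get-linear-block-cut-tree} is met and it yields the desired path $P$ with $k' \geq k$ in $\polyG$ time.

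\textbf{Main obstacle.} The only real work is the arithmetic bookkeeping in Case 2: verifying that the bound defined in Definition~\ref{def:Pbound} is exactly strong enough to feed Lemma~\ref{lemma:get-linear-block-cut-tree} after substituting $n = 1988|X|$ and $m \leq \Lbound{t}{|X|}{|N_G(C)\cap Y|}$. Everything else is routine dispatch to prior lemmas, and the $\polyG$ running time follows since each branch is a single $\polyG$-time subroutine call.
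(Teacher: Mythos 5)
Your proposal matches the paper's proof: both case-split on the number of leaves relative to $\Lbound{t}{|X|}{|N_G(C)\cap Y|}$ to invoke Lemma~\ref{lemma:bound-leafs-in-block-cut-tree}, then on the existence of a high-degree $y \in N_G(C)\cap Y$ to invoke Lemma~\ref{lemma:reduce-biconnected-by-adding-y-vertex}, and finally feed $n = 1988|X|$ into Lemma~\ref{lemma:get-linear-block-cut-tree}. The one difference is that you spell out the arithmetic showing $|\mathcal{B}(T)| > \Pbound{t}{|X|}{|N_G(C)\cap Y|}{k}$ implies the size hypothesis of Lemma~\ref{lemma:get-linear-block-cut-tree}, which the paper leaves implicit — a small improvement in explicitness, not a different route.
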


\begin{proof}
    If $T$ has strictly more than $\Lbound{t}{|X|}{|N_G(C) \cap Y|}$ leaves correctness follows from Lemma \ref{lemma:bound-leafs-in-block-cut-tree}. We assume that $T$ has at most $\Lbound{t}{|X|}{|N_G(C) \cap Y|}$ leaves.
    
    If there exists a vertex $y \in N_G(C) \cap Y$ with $|N_G(y) \cap V(C)| > 1988|X|$ correctness follows from Lemma \ref{lemma:reduce-biconnected-by-adding-y-vertex}. We assume that each $y \in N_G(C) \cap Y$ has $|N_G(y) \cap V(C)| \leq 1988|X|$.
    
    Because $|\mathcal{B}(T)| > \Pbound{t}{|X|}{|N_G(C) \cap Y|}{k} \geq 1$ we trivially have that $T$ has at least two leaves. Hence, Lemma \ref{lemma:get-linear-block-cut-tree} yields that we can in $\polyG$ time obtain a path $P = \langle a_1, B_1, a_2, B_2, \dots, B_{k'}, a_{k'+1} \rangle$ in $T$ with $N_G(V(\mathcal{B}(P)) \setminus \{a_1,a_{k'+1}\}) \subseteq \{a_1,a_{k'+1}\} \cup X \wedge k' \geq k$.
\end{proof}

\section{Reducing Simple Paths without Neighbours}
Once we have obtained a path $P$ in $T$ where each block in $P$ has no neighbours in $Y$ we will want to find a reduction rule that can be applied on the blocks contained in this path. We will consider two reduction rules. In case many blocks along $P$ share a neighbour $x \in X$ then we will be able to find an application of Reduction \ref{red:remove-middle-edge-to-isolated-path}. Otherwise, we will be able to contract a block without any neighbours in $X \cup Y$. In this subsection we focus on the latter.

We introduce a new reduction rule that can be applied on three adjacent blocks along path $P$ without neighbours in $X \cup Y$. We display this situation in Figure \ref{fig:safeness:reducing-3-induced-blocks-without-neighbours}. This reduction contracts block $B_1$ into $a_1$. The intuitive reason why this reduction is safe is that a solution $S'$ for the resulting problem instance $(G',t)$ that contains a vertex from any of the blocks can be replaced by either $a_1$ or $a_4$. In case $S'$ does not contain a vertex from any of these blocks, then, due to Reduction \ref{red:contract-component-with-small-neighborhood} not being applicable on non-trivial problem instance $(G,t)$, we can derive that $a_1$ and $a_4$ are separated in $G' - S'$, which allows us to apply Theorem \ref{theorem:connect-treewidth-graphs}.
\begin{figure}[ht]
    \centering
    \includegraphics[scale=.66]{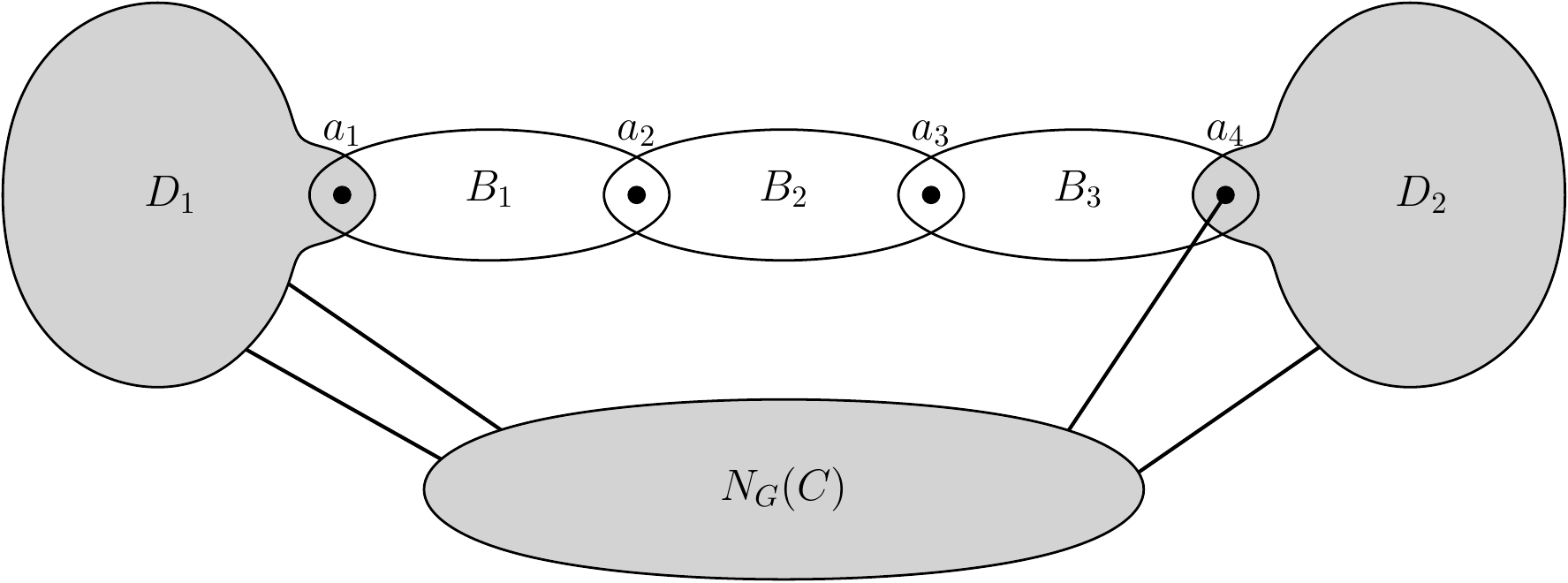}
    \caption{A simple block path with $N_G((V(B_1) \cup V(B_2) \cup V(B_3)) \setminus \{a_1,a_4\}) = \{a_1,a_4\}$}
    \label{fig:safeness:reducing-3-induced-blocks-without-neighbours}
\end{figure}
\begin{reduction}[contract blocks without $X{{\cup}}Y$-neighbours]\label{red:reducing-3-induced-blocks-without-neighbours}
    Let $(G,t)$ be a non-trivial problem instance, let $X$ be a modulator of $G$, let $C \in \mathcal{C}(G - X)$, let $T$ be a block-cut tree of $C$, and let $\langle a_1, B_1, a_2, B_2, a_3, B_3, a_4 \rangle$ be a simple path in $T$ with $N_G((V(B_1) \cup V(B_2) \cup V(B_3)) \setminus \{a_1,a_4\}) = \{a_1,a_4\}$. Then contract $B_1$ into $a_1$.
\end{reduction}
\clearpage

\begin{lemma}[safeness]\label{lemma:safeness:reducing-3-induced-blocks-without-neighbours}
    Let $(G',t)$ be the problem instance obtained by applying Reduction \ref{red:reducing-3-induced-blocks-without-neighbours} on $(G,t)$. Then $\twtwodeletion{G}{t} \Longleftrightarrow \twtwodeletion{G'}{t}$ holds.
\end{lemma}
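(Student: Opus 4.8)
The plan is to follow the same proof pattern used for Reductions~\ref{red:contract-component-with-small-neighborhood}, \ref{red:reduce-number-of-components}, and \ref{red:contract-block-leaf}: since $G'$ is obtained from $G$ by contracting the connected subgraph $B_1$ (which has at least two vertices by Corollary~\ref{corollary:block-cut-tree-only-blocks-of-size-at-least-2}) into $a_1$, graph $G'$ is a minor of $G$, so Lemma~\ref{lemma:minor-of-G-implication} immediately gives that $\twtwodeletion{G}{t}$ implies $\twtwodeletion{G'}{t}$. The substance is the reverse direction. So I would fix a solution $S'$ for $(G',t)$ and build a solution $S$ for $(G,t)$ of size at most $|S'|$. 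Write $U = (V(B_1) \cup V(B_2) \cup V(B_3)) \setminus \{a_1,a_4\}$, so the hypothesis says $N_G(U) = \{a_1,a_4\}$; note $U \subseteq V(C)$ and $U \cap X = \emptyset$.

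First I would handle the case where $S'$ hits $U$, i.e.\ $U \cap S' \neq \emptyset$ (after translating back: $S'$ uses some vertex of $V(B_1)\cup V(B_2)\cup V(B_3)$ other than $a_1,a_4$). Here I would set $S = (S' \setminus U) \cup \{a_4\}$ (or $\cup\{a_1\}$); this has $|S| \le |S'| \le t$. To show $\treewidth{G - S} \le 2$ I would apply Theorem~\ref{theorem:connect-treewidth-graphs} on $(G - S,\ G[V(B_1)\cup V(B_2)\cup V(B_3)] - S)$: the inner graph is an induced subgraph of $G - X$ so has treewidth $\le 2$ by Lemma~\ref{lemma:minor-treewidth}, and each component $D \in \mathcal{C}(G - (V(B_1)\cup V(B_2)\cup V(B_3)) \cup S)$ is (via Lemma~\ref{lemma:minor-treewidth}) a subgraph of something with treewidth $\le 2$ in $G' - S'$ — this needs care since I'm changing $S'$ to $S$, but because all of $U$ has been removed and $a_4 \in S$, the only possible neighbours of $D$ outside are among $\{a_1\}$, which trivially induces a clique. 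I should double-check that $G - (\text{blocks} \cup S)$ really is a minor of $G' - S'$ or handle it by a direct treewidth bound; the cleanest route is probably to note that removing $U \cup \{a_4\}$ from $G$ is the same as removing it from $G'$, since $U$ is exactly the part that was contracted.

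Second, the case $U \cap S' = \emptyset$: here I take $S = S'$ and must show $\treewidth{G - S'} \le 2$ directly. Since Reduction~\ref{red:contract-component-with-small-neighborhood} is not applicable on the non-trivial instance $(G,t)$ and $B_2 - \{a_2,a_3\}$ (together with $B_2$ being biconnected, or rather the middle block) has its full neighbourhood inside $U \cup \{a_1,a_4\}$, I would argue that $\{a_2,a_3\}$ — or more precisely that the middle block structure — forces $a_1$ and $a_4$ to be separated in $G - S'$ by the vertex set $(V(B_1)\cup V(B_2)\cup V(B_3)) \setminus S'$. Concretely: any $a_1$–$a_4$ path in $G - S'$ avoiding the blocks would, together with the blocks, yield a $K_4$ minor (three internally disjoint connections between $a_1$ and $a_4$: through $B_1$–$B_2$–$B_3$ split at $a_2$ and $a_3$ gives enough structure, plus the external path), contradicting $\treewidth{G - S'} \le 2$ via Lemma~\ref{lemma:describes-k4-minor}. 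Then I apply Theorem~\ref{theorem:connect-treewidth-graphs} on $(G - S',\ G[V(B_1)\cup V(B_2)\cup V(B_3)] - S')$: each $D \in \mathcal{C}(G - ((V(B_1)\cup V(B_2)\cup V(B_3)) \cup S'))$ has $N_G(D) \setminus S' \subseteq \{a_1\}$ or $\subseteq \{a_4\}$ by the separation just established, hence induces a clique, and $\treewidth{G[N_G[D]] - S'} \le \treewidth{G' - S'} \le 2$.

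The main obstacle I expect is getting the $K_4$-minor argument in the second case exactly right — in particular verifying that the three blocks genuinely give three ``independent'' routes from $a_1$ to $a_4$ (so that one external path completes a $K_4$), rather than, say, $B_2$ degenerating or $a_2 = a_3$; Lemma~\ref{lemma:blocks-share-at-most-one-cut-vertex} and biconnectedness of each block (with Lemma~\ref{lemma:biconnected-has-cycle}) should rule out the degenerate configurations, and the blocks having $\ge 2$ vertices each (Corollary~\ref{corollary:block-cut-tree-only-blocks-of-size-at-least-2}) is what guarantees nontrivial internal paths. A secondary subtlety is bookkeeping in the first case to be sure the modified set $S$ still certifies $\treewidth{G-S}\le 2$ without circular appeal to $G'$; I would resolve this by always reducing to the statement ``$\treewidth{G[\text{three blocks}\cup\text{one of }a_1,a_4] \text{ minus } S} \le 2$'' which follows from $\treewidth{G - X}\le 2$ plus Lemma~\ref{lemma:minor-treewidth}.
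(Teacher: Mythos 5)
Your two-case structure matches the paper's, and your handling of the first case (solution hits the interior) is sound modulo bookkeeping. The genuine gap is in the second case, and in two places.

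First, the $K_4$-minor construction as described does not work. The three blocks form a chain $a_1 \rightsquigarrow a_2 \rightsquigarrow a_3 \rightsquigarrow a_4$; even when each block contains a cycle through its two articulation vertices, ``splitting at $a_2$ and $a_3$'' gives at most two internally vertex-disjoint $a_1$--$a_4$ paths inside $V(B_1)\cup V(B_2)\cup V(B_3)$, and adding one external path produces a theta-like subgraph of treewidth $2$, not a $K_4$. In the degenerate case where the blocks are single edges you get a cycle, which has no $K_4$ minor at all. Moreover the contradiction must live in $G' - S'$ (the graph you \emph{know} has treewidth $\leq 2$), not in $G - S'$ (the graph you are trying to bound); but in $G'$ the block $B_1$ has been contracted to $a_1$, so its internal structure is not available for the minor.

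Second, the Reduction~\ref{red:contract-component-with-small-neighborhood} non-applicability you invoke is applied to the wrong set. Taking $H = B_2 - \{a_2,a_3\}$ is problematic: it may be empty (if $B_2$ is an edge) or disconnected, and at best it gives $\treewidth{G[V(B_2)] \cup \{a_2a_3\}} > 2$, which says nothing directly about separating $a_1$ from $a_4$. The paper applies the rule to $H = (B_2 \cup B_3) - \{a_2,a_4\}$, which is always non-empty (contains $a_3$), connected (each $B_i - a_i$ is connected and they share $a_3$), and has $N_G(H) = \{a_2,a_4\}$; non-triviality then yields $\treewidth{G[V(B_2) \cup V(B_3)] \cup \{a_2a_4\}} > 2$. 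Since the contraction identifies $a_2$ with $a_1$, this becomes $\treewidth{H' \cup \{a_1a_4\}} > 2$ where $H' = G'[V(B_2)\cup V(B_3)\cup\{a_1\}]$, so any $a_1$--$a_4$ path in $G' - S'$ avoiding $V(H')\setminus\{a_1,a_4\}$ would realise $H' \cup \{a_1a_4\}$ as a minor of $G' - S'$, contradicting $\treewidth{G' - S'} \leq 2$. That is the clean separation argument; no explicit $K_4$-minor construction is needed, and attempting one from ``three independent routes'' through the block chain is the wrong shape of argument.
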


\begin{proof}
    Because $G'$ is a minor of $G$ we have by Lemma \ref{lemma:minor-of-G-implication} that $\twtwodeletion{G}{t}$ implies $\twtwodeletion{G'}{t}$. It remains to prove that $\twtwodeletion{G'}{t}$ implies $\twtwodeletion{G}{t}$. Let $S'$ be a solution for $(G',t)$, let $H = G[V(B_1) \cup V(B_2) \cup V(B_3)]$, and let $H' = G'[V(B_2) \cup V(B_3) \cup \{a_1\}]$.
    
    \begin{case}[$S' \cap V(H') \neq \emptyset$]
        In case $a_1 \in S'$ let $S = S'$ and otherwise let $S = (S' \setminus V(H')) \cup \{a_4\}$. In either case we have $\{a_1,a_4\} \cap S' \subseteq \{a_1,a_4\} \cap S \neq \emptyset$. By Lemma \ref{lemma:minor-treewidth} we therefore have $\treewidth{(G - S) - (H - \{a_1,a_4\})} \leq \treewidth{(G' - S') - (H' - \{a_1,a_4\})} \leq \treewidth{G' - S'} \leq 2$ and $\treewidth{H - S} \leq \treewidth{G - X} \leq 2$. Because $N_G(H \setminus \{a_1,a_4\}) = \{a_1,a_4\}$ and $\{a_1,a_4\} \cap S \neq \emptyset$ we have $|N_G(H - (S \cup \{a_1,a_4\})) \setminus S| \leq |\{a_1,a_4\} \setminus S| \leq 1$. Hence, $N_G(H - (S \cup \{a_1,a_4\})) \setminus S$ induces a clique in $G - S$. Therefore we can apply Theorem \ref{theorem:connect-treewidth-graphs} on $(G - S, (G - S) - (H - \{a_1,a_4\}))$ to derive $\treewidth{G - S} \leq 2$, which implies $\twtwodeletion{G}{t}$.
    \end{case}
    
    \begin{case}[$S' \cap V(H') = \emptyset$]
        Because $N_G((V(B_2) \cup V(B_3)) \setminus \{a_1,a_4\}) = \{a_2,a_4\}$ we have by Definition \ref{def:trivial-problem-instance} that $\treewidth{G[V(B_2) \cup V(B_3)] \cup \{a_2a_4\}} > 2$. From the definition of $H'$ it therefore follows that $\treewidth{H' \cup \{a_1a_4\}} > 2$, which implies that $a_1$ and $a_4$ must be separated in $G' - S'$ by $H' - \{a_1,a_4\}$. Hence, each component $D \in \mathcal{C}(G' - (V(H') \cup S')) = \mathcal{C}(G - (V(H) \cup S'))$ must have $N_G(D) \setminus S' \subseteq \{a_1\} \vee N_G(D) \setminus S' \subseteq \{a_4\}$, which in either case induces a clique in $G - S'$. By Lemma \ref{lemma:minor-treewidth} we have $\treewidth{(G - S') - (H - \{a_1,a_4\})} = \treewidth{(G' - S') - (H' - \{a_1,a_4\})} \leq \treewidth{G' - S'} \leq 2$ and $\treewidth{H - S'} \leq \treewidth{G - X} \leq 2$. Therefore we can apply Theorem \ref{theorem:connect-treewidth-graphs} on $(G - S', (G - S') - (H - \{a_1,a_4\}))$ to derive $\treewidth{G - S'} \leq 2$, which implies $\twtwodeletion{G}{t}$.
    \end{case}
    
    \noindent
    Because these cases cover all cases, correctness trivially follows.
\end{proof}

\section{Reducing Simple Paths with \textit{X}-Neighbours}
We next focus on reducing paths $P$ with many blocks in $P$ have neighbours in $X$. We show that when some vertex $x \in X$ is a neighbour of many blocks contained in $P$ that then we will be able to apply Reduction \ref{red:remove-middle-edge-to-isolated-path} on an edge between $x$ and a vertex in $V(\mathcal{B}(P))$.

To apply Reduction \ref{red:remove-middle-edge-to-isolated-path} we need to find a simple path $Q$ in $G - X$ that visits multiple neighbours of $x$. We first focus on obtaining this path $Q$, and afterwards show that this path $Q$ allows us to apply Reduction \ref{red:remove-middle-edge-to-isolated-path}. To obtain this path $Q$ we first show that for any block $B \in \mathcal{B}(T)$ with $N_T(B) = \{a_1,a_2\}$ we can find a simple path in $B$ from $a_1$ to a vertex $u \in N_G(x) \cap V(B)$ to $a_2$. Afterwards we concatenate these paths to find a simple path that visits many neighbours of $x$.

\begin{lemma}\label{lemma:block-cut-tree:single-step-single-path}
    Let $G$ be a biconnected graph, let $\{a_1,a_2\} \subseteq V(G)$ and $u \in V(G)$. We can find in $\polyG$ time a simple path in $G$ from $a_1$ to $u$ to $a_2$.
\end{lemma}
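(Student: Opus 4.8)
The plan is to reduce the statement to Lemma~\ref{lemma:biconnected-has-cycle} (a biconnected graph on at least three vertices has a simple cycle through any two prescribed vertices) by means of a one-vertex gadget. First I would dispose of the degenerate inputs: if $a_1 = a_2$ or $u \in \{a_1, a_2\}$ the desired walk is just a simple $a_1$--$a_2$ path (a single vertex when all three coincide), which a breadth-first search finds in $\polyG$ time since $G$ is connected. So from now on assume $a_1$, $a_2$, and $u$ are pairwise distinct; in particular $|V(G)| \geq 3$, so $G$ is $2$-connected in the usual sense.

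The central step is to build the graph $G'$ from $G$ by adding a single new vertex $w$ with $N_{G'}(w) = \{a_1, a_2\}$. I would first verify that $G'$ is biconnected: deleting $w$ leaves $G$, which is connected; deleting $a_1$ (symmetrically $a_2$) leaves $G - a_1$ connected, with $w$ still attached through $a_2$; and deleting any other vertex $v$ leaves $G - v$ connected with $w$ still attached through both $a_1$ and $a_2$. Hence $G'$ has no articulation vertex, and $|V(G')| = |V(G)| + 1 \geq 4$.

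Next I would obtain, in $\polyG$ time, a simple cycle $C$ of $G'$ that visits both $w$ and $u$. Existence follows at once from Lemma~\ref{lemma:biconnected-has-cycle}; to make it constructive I would note that $wu \notin E(G')$ (as $u \notin \{a_1,a_2\}$), invoke Observation~\ref{obs:get-max-internally-vertex-disjoint-paths} to compute in $\polyG$ time a maximum collection of internally vertex-disjoint $w$--$u$ paths in $G'$, observe that this collection has size at least two (a single separating vertex for $w$ and $u$ would be an articulation vertex of $G'$, by Menger's theorem), and take the union of two of them as $C$. Finally, since $|N_{G'}(w)| = 2$, the simple cycle $C$ uses exactly the edges $wa_1$ and $wa_2$ at $w$; therefore $C - w$ is a simple path of $G$ with endpoints $a_1$ and $a_2$, and as $u \in V(C) \setminus \{w\}$ it passes through $u$. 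Returning $C - w$ completes the construction, and every step above is polynomial in $|G|$.

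I expect the only real obstacle to be the ``in $\polyG$ time'' clause: Lemma~\ref{lemma:biconnected-has-cycle} is quoted in the paper as a pure existence statement, so the work is in turning it into an algorithm, which is precisely why I route through the max-flow-based Observation~\ref{obs:get-max-internally-vertex-disjoint-paths} and Menger's theorem; the rest is routine case handling, the main pitfall being to clear away the cases $a_1 = a_2$, $u = a_1$, $u = a_2$ (and hence $|V(G)| < 3$) before applying the cycle lemma to $G'$.
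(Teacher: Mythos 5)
Your proposal is correct and follows essentially the same route as the paper: both add a new dummy vertex adjacent to $a_1$ and $a_2$, invoke Observation~\ref{obs:get-max-internally-vertex-disjoint-paths} to extract two internally vertex-disjoint simple paths to $u$, and then strip the dummy vertex to recover the desired $a_1$--$u$--$a_2$ path. The only cosmetic difference is that you phrase the final step as forming a cycle and deleting $w$, whereas the paper directly names the two paths $Q_1, Q_2$, notes that the pair $\{a_1,a_2\}$ separates $u$ from the dummy so one path uses $a_1$ and the other $a_2$, and concatenates $Q_1 - v$ with $Q_2 - v$; these are the same construction.
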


\begin{proof}
    In case $u \in \{a_1,a_2\}$ this trivially holds, due to $G$ being connected. Hence, we assume $u \notin \{a_1,a_2\}$. We consider graph $G'$ formed by adding a new vertex $v$ to $G$ that is adjacent to $a_1$ and $a_2$. It trivially holds that $G'$ is biconnected. By Observation \ref{obs:get-max-internally-vertex-disjoint-paths} we can in $\polyG$ time find two internally vertex-disjoint simple paths $Q_1$ and $Q_2$ between $u$ and $v$. Because $\{a_1,a_2\}$ separates $u$ from $v$ in $G'$ we can, w.l.o.g., assume $a_1 \in V(Q_1)$ and $a_2 \in V(Q_2)$. Because $N_{G'}(v) = \{a_1,a_2\}$ we have that $Q_1 - v$ is a simple path from $a_1$ to $u$ and $Q_2 - v$ is a simple path from $u$ to $a_2$. Hence, connecting $Q_1 - v$ to $Q_2 - v$ results in a simple path in $G$ from $a_1$ to $v$ to $a_2$.
\end{proof}

\begin{lemma}\label{lemma:block-cut-tree:many-step-single-path}
    Let $G$ be a connected graph, let $U \subseteq V(G)$, let $T$ be a block-cut tree of $G$, let $P = \langle a_1, B_1, a_2, B_2, \dots B_k, a_{k+1} \rangle$ be a simple path in $T$ that has $m$ blocks $B_i \in \mathcal{B}(P)$ with $V(B_i) \cap U \neq \emptyset$. We can in $\polyG$ time obtain a simple path $Q$ in $G[V(\mathcal{B}(P))]$ with $|V(Q) \cap U| \geq \lceil m / 2 \rceil$.
\end{lemma}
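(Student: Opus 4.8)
The plan is to construct $Q$ block by block and then glue the pieces together. For each block $B_i$ appearing on $P$ I would first record that $\{a_i,a_{i+1}\}\subseteq V(B_i)$ by Definition~\ref{def:block-cut-tree} and that $B_i$ is biconnected. I then choose a witness $u_i\in V(B_i)$: if $V(B_i)\cap U\neq\emptyset$ I take $u_i\in V(B_i)\cap U$, and otherwise I set $u_i=a_i$. Applying Lemma~\ref{lemma:block-cut-tree:single-step-single-path} to the biconnected graph $B_i$ with the pair $\{a_i,a_{i+1}\}$ and the vertex $u_i$ yields, in $\polyG$ time, a simple path $Q_i$ in $B_i$ that goes from $a_i$ to $u_i$ to $a_{i+1}$.

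Next I would prove the structural fact that makes the concatenation of the $Q_i$ a single simple path. For consecutive blocks, Lemma~\ref{lemma:blocks-share-at-most-one-cut-vertex} together with $a_{i+1}\in V(B_i)\cap V(B_{i+1})$ gives $V(B_i)\cap V(B_{i+1})=\{a_{i+1}\}$. For blocks at distance at least two on $P$, say $B_i$ and $B_j$ with $j\geq i+2$, I would apply Lemma~\ref{lemma:articulation-vertices-as-separators} twice: removing $a_{i+1}$ from $T$ separates $B_i$ from $B_j$, so $a_{i+1}$ separates $V(B_j)\setminus\{a_{i+1}\}$ from $V(B_i)\setminus\{a_{i+1}\}$ in $G$, hence $V(B_i)\cap V(B_j)\subseteq\{a_{i+1}\}$; the identical argument with $a_{i+2}$ gives $V(B_i)\cap V(B_j)\subseteq\{a_{i+2}\}$, and since $a_{i+1}\neq a_{i+2}$ are distinct vertices of the simple path $P$, the intersection is empty. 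Consequently every vertex of $Q_i$ other than $a_i$ and $a_{i+1}$ belongs to no other block of $P$, while $a_i$ (resp.\ $a_{i+1}$) is shared only with $Q_{i-1}$ (resp.\ $Q_{i+1}$) as a common endpoint. Therefore the concatenation $Q=Q_1Q_2\cdots Q_k$, which traverses $a_1,a_2,\dots,a_{k+1}$ in order, is a simple path; and since each $Q_i$ is a subgraph of $B_i$ with $V(B_i)\subseteq V(\mathcal{B}(P))$, it is a path in $G[V(\mathcal{B}(P))]$.

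For the counting, each of the $m$ blocks $B_i$ with $V(B_i)\cap U\neq\emptyset$ contributes its witness $u_i\in V(Q)\cap U$. If $u_i=u_j$ for $i<j$, then $u_i\in V(B_i)\cap V(B_j)$, which by the disjointness statement forces $j=i+1$ and $u_i=a_{i+1}$; in particular no graph vertex can be a witness for three distinct blocks, since that would place it in the intersection of two blocks at distance at least two. Hence among the $m$ witnesses there are at least $\lceil m/2\rceil$ distinct vertices of $V(Q)\cap U$, which is the desired bound. Finally, $k\leq|V(T)|=\polyG$ by Observation~\ref{obs:block-cut-tree-polyg-size}, each of the $k$ invocations of Lemma~\ref{lemma:block-cut-tree:single-step-single-path} runs in $\polyG$ time, and the concatenation is immediate, so the whole procedure runs in $\polyG$ time.

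The step I expect to require the most care is the gluing argument: rigorously establishing that blocks at distance at least two on $P$ are vertex-disjoint (so $Q$ is genuinely simple), and then accounting for the possibility that a $U$-witness is an articulation vertex lying in two consecutive blocks — which is precisely why the guarantee weakens from $m$ to $\lceil m/2\rceil$.
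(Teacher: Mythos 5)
Your proof is correct and follows essentially the same route as the paper: apply Lemma~\ref{lemma:block-cut-tree:single-step-single-path} per block, concatenate into a single simple path, and count $U$-witnesses. The only (cosmetic) difference is in the counting step — the paper first thins the $m$ witness blocks to an alternating subset $R'$ of size $\lceil m/2\rceil$ whose members are pairwise vertex-disjoint and only then picks $U$-witnesses from $R'$, whereas you pick a witness from every one of the $m$ blocks and then argue that a collision $u_i=u_j$ can only occur between consecutive blocks at the shared articulation vertex, capping collisions at pairs; both yield $\lceil m/2\rceil$. Your treatment is, if anything, more explicit than the paper's at two points the paper glosses over: justifying that non-consecutive blocks on $P$ are fully vertex-disjoint (which needs more than Lemma~\ref{lemma:blocks-share-at-most-one-cut-vertex} alone), and verifying that the concatenated walk $Q$ is genuinely a simple path.
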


\begin{proof}
    Let $R = \setdef{B_i \in \mathcal{B}(P)}{V(B_i) \cap U \neq \emptyset}$. We remove every second block from $R$ to form set $R'$. As follows from Lemma \ref{lemma:blocks-share-at-most-one-cut-vertex}, each pair of blocks $\{B_i, B_j\} \subseteq R'$ are vertex disjoint. For each $B_i \in R'$ we pick a vertex $u_i \in V(B_i) \cap U$ and for each $B_i \in V(P) \setminus R'$ we pick any vertex $u_i \in V(B_i)$. By Observation \ref{obs:block-cut-tree-polyg-size} and Lemma \ref{lemma:block-cut-tree:single-step-single-path} we can in $\polyG$ time find for each $B_i \in \mathcal{B}(P)$ a simple path $Q_i$ in $B_i$ from $a_i$ to $u_i$ to $a_{i+1}$. We connect all these paths into a simple path $Q$ in $G[V(\mathcal{B}(P))]$ that visits all vertices $\{u_1, u_2, \dots, u_k\}$. Because each $\{B_i,B_j\} \subseteq R'$ are vertex-disjoint we have $|Q \cap U| \geq |R'| = \lceil m / 2 \rceil$. Hence, path $Q$ suffices.
\end{proof}

\noindent
In case we apply Lemma \ref{lemma:block-cut-tree:many-step-single-path} with $U = N_G(x)$ we find a path $Q$ in $G[V(\mathcal{B}(P))]$ with $V(Q) \cap N_G(x) \geq \lceil m / 2 \rceil$, where $m$ is the number of blocks in $\mathcal{B}(P)$ that neighbour $x$. We next show that $m \geq 11$ leads to an application of Reduction \ref{red:remove-middle-edge-to-isolated-path}.

\begin{lemma}\label{lemma:linear-block-cut-tree-reduce-frequent-x-neighbour}
    Let $(G,t)$ be a problem instance, let $X$ be a tidy modulator of $G$, let $x \in X$, let $Y$ be a component separator for $(G,t,X)$, let $C \in \mathcal{C}(G - (X \cup Y))$, let $T$ be a block-cut tree of $C$, and let $P = \langle a_1, B_1, a_2, B_2, \dots B_k, a_{k+1} \rangle$ be a simple path in $T$ with $N_G(V(\mathcal{B}(P)) \setminus \{a_1,a_{k+1}\}) \subseteq \{a_1,a_{k+1}\} \cup X$. If there exist at least 11 blocks $B_i \in \mathcal{B}(P)$ with $x \in N_G(B_i)$, then we can in $\polyG$ time apply Reduction \ref{red:remove-middle-edge-to-isolated-path} on an edge in $E(G) \setminus E(G[X])$.
\end{lemma}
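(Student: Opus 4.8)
The plan is to manufacture a long path through many neighbours of $x$ inside $V(\mathcal{B}(P))$ and then invoke Reduction~\ref{red:remove-middle-edge-to-isolated-path}. Since each of the at least $11$ blocks $B_i\in\mathcal{B}(P)$ with $x\in N_G(B_i)$ contains a vertex of $N_G(x)\cap V(C)$, I would first apply Lemma~\ref{lemma:block-cut-tree:many-step-single-path} to $C$, the block-cut tree $T$, the path $P$, and $U=N_G(x)\cap V(C)$, obtaining in $\polyG$ time a simple path $Q$ in $G[V(\mathcal{B}(P))]$ with $|V(Q)\cap N_G(x)|\ge\lceil 11/2\rceil=6$. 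Because $V(\mathcal{B}(P))\subseteq V(C)$ and $V(C)\cap X=\emptyset$, this $Q$ is a simple path in $G-X$, running between the two leaf articulation vertices $a_1$ and $a_{k+1}$ of $P$. Listing the vertices of $N_G(x)\cap V(Q)$ as $w_1,\dots,w_\ell$ ($\ell\ge 6$) in their order along $Q$ and setting $v_i=w_{i+1}$ for $i\in\{1,2,3,4\}$, I get four neighbours of $x$ occurring in increasing order along $Q$ with $v_1\notin\{a_1,a_{k+1}\}$ and $v_4\notin\{a_1,a_{k+1}\}$ (an extremal vertex of $Q$ lying in $N_G(x)$ would have to be $w_1$ or $w_\ell$); this is exactly why $6$ usable neighbours, and hence $11$ blocks, are required. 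In particular $Q(v_1,v_4)$ is a nonempty connected subgraph contained in $V(\mathcal{B}(P))\setminus\{a_1,a_{k+1}\}$.

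Let $C'$ be the component of $G-(X\cup\{v_1,v_4\})$ containing $Q(v_1,v_4)$. The heart of the proof is verifying the precondition ``$N_G(C')\cap X$ is a limit-$1$ subset for $(G-v_2x,t)$''. My intended route is to show $C'\subseteq V(C)$ and then transport the limit-$1$ property across the component separator. For confinement I would use the hypothesis $N_G(V(\mathcal{B}(P))\setminus\{a_1,a_{k+1}\})\subseteq\{a_1,a_{k+1}\}\cup X$: in $G-X$ no vertex of $V(\mathcal{B}(P))\setminus\{a_1,a_{k+1}\}$ can leave $V(\mathcal{B}(P))$ except through $a_1$ or $a_{k+1}$, so it suffices to show $a_1,a_{k+1}\notin V(C')$; this is where the block-cut structure along $P$ and the position of $v_1,v_4$ on the block chain (deleting them severs $Q(v_1,v_4)$ from both ends of the chain) come in. Once $C'\subseteq V(C)$, we have $N_G(C')\cap X\subseteq N_G(C)\cap X$, which is a limit-$1$ subset for $(G-C,t)$ because $Y$ is a component separator of $(G,t,X)$; Lemma~\ref{lemma:subset-of-limit-m-of-subset} gives that $N_G(C')\cap X$ is also limit-$1$ for $(G-C,t)$, and since $v_2\in V(Q)\subseteq V(C)$ the graph $G-C$ is a subgraph of $G-v_2x$, so Lemma~\ref{lemma:limit-m-of-subgraph-to-supergraph} upgrades this to: $N_G(C')\cap X$ is a limit-$1$ subset for $(G-v_2x,t)$.

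With $X$ a tidy modulator, $x\in X$, the path $Q$ in $G-X$ visiting $v_1,v_2,v_3,v_4\in N_G(x)$ in order, and $C'$ the component of $G-(X\cup\{v_1,v_4\})$ containing $Q(v_1,v_4)$, all hypotheses of Reduction~\ref{red:remove-middle-edge-to-isolated-path} are met, so we delete $v_2x$; since $v_2\in V(C)$ is disjoint from $X$, this edge lies in $E(G)\setminus E(G[X])$. Every step---Lemma~\ref{lemma:block-cut-tree:many-step-single-path}, extracting the $w_i$ (using Observation~\ref{obs:block-cut-tree-polyg-size}), forming $C'$, and checking the limit-$1$ condition---runs in $\polyG$ time, which finishes the argument. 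The step I expect to be the main obstacle is precisely the confinement claim $C'\subseteq V(C)$: extracting from the comparatively weak neighbourhood hypothesis on $P$, together with the block-cut structure, that deleting $v_1$ and $v_4$ genuinely traps $C'$ inside $C$ (so that the component separator can be applied) is the delicate point of the proof.
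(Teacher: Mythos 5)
Your argument retraces the paper's proof step for step: extract $Q$ via Lemma~\ref{lemma:block-cut-tree:many-step-single-path}, take the $2$nd through $5$th $x$-neighbours $w_2,\dots,w_5$ along $Q$ as the reduction's $v_1,\dots,v_4$, delete $w_3x$, and transport the limit-$1$ guarantee from $N_G(C)\cap X$ through Lemmas~\ref{lemma:limit-m-of-subgraph-to-supergraph} and~\ref{lemma:subset-of-limit-m-of-subset}. The limit-$1$ transport and the observation that $w_2,w_5\notin\{a_1,a_{k+1}\}$ are both correct, and you correctly identify the confinement claim $C'\subseteq V(C)$ as the crux. But you leave that claim unproved, and the appeal to ``the position of $v_1,v_4$ on the block chain'' does not carry it on its own: a priori $w_2$ and $w_5$ could be interior vertices of one large block $B_i$, in which case deleting them need not sever anything useful (a biconnected graph minus two vertices can stay connected), and $C'$ could still reach $a_1$ or $a_{k+1}$ and from there escape into $Y$ or the rest of the graph, breaking the limit-$1$ argument.

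The missing ingredient is Lemma~\ref{lemma:x-path-separator}. Since $X$ is a tidy modulator, $\treewidth{G-(X\setminus\{x\})}\le 2$, so applying that lemma to the triples $(w_1,w_2,w_3)$ and $(w_4,w_5,w_6)$ along $Q$ shows that $w_2$ separates $w_1$ from $w_3$ in $C$ and $w_5$ separates $w_4$ from $w_6$ in $C$. Hence $w_2,w_5$ are cut vertices of $C$; and because they lie in $V(\mathcal{B}(P))$ while a vertex strictly inside a block of $\mathcal{B}(P)$ (i.e.\ not in $A(P)$) cannot disconnect $C$ restricted to $V(\mathcal{B}(P))$, they must actually be articulation vertices on the spine of $P$, say $w_2=a_i$ and $w_5=a_j$. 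Only with that identification can you invoke Lemma~\ref{lemma:articulation-vertices-as-separators} to conclude that $\{a_i,a_j\}$ separates $V(\mathcal{B}(P[a_i,a_j]))\setminus\{a_i,a_j\}$ from the rest of $C$, which together with the hypothesis $N_G(V(\mathcal{B}(P))\setminus\{a_1,a_{k+1}\})\subseteq\{a_1,a_{k+1}\}\cup X$ finally yields $C'\subseteq V(C)$. Incidentally, this is the structural reason six neighbours (and hence eleven blocks) are needed: you must have an $x$-neighbour strictly before $w_2$ and one strictly after $w_5$ to run the separator lemma at both ends, not merely to keep $w_2,w_5$ away from the endpoints of $Q$.
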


\begin{proof}
    By Lemma \ref{lemma:block-cut-tree:many-step-single-path} we can in $\polyG$ time find a simple path $Q$ in $G[V(\mathcal{B}(P))]$ with $|V(Q) \cap N_G(x)| \geq 6$. Let $\{v_1,v_2,v_3,v_4,v_5,v_6\} \subseteq V(Q) \cap N_G(x)$ be ordered along $Q$. Because $X$ is a tidy modulator we have by Lemma \ref{lemma:x-path-separator} that in $C$ vertex $v_2$ separates $v_1$ from $v_3$ and vertex $v_5$ separates $v_4$ from $v_6$. This implies $v_2, v_5 \in A(P)$. W.l.o.g. let $v_2 = a_i \wedge v_5 = a_j$ with $i < j$.
    
    From Lemma \ref{lemma:articulation-vertices-as-separators} it follows that $\{a_i,a_j\}$ separates $V(\mathcal{B}(P[a_i,a_j])) \setminus \{a_i,a_j\}$ from $V(C) \setminus V(\mathcal{B}(P[a_i,a_j]))$ in $C$. Therefore, because $N_G(V(\mathcal{B}(P)) \setminus \{a_1,a_{k+1}\}) \subseteq \{a_1,a_{k+1}\} \cup X$, we must have $N_G(V(\mathcal{B}(P[a_i,a_j])) \setminus \{a_i,a_j\}) \subseteq \{a_i,a_j\} \cup X$. Because $G[V(\mathcal{B}(P[a_i,a_j])]$ is a subgraph of $C$ we have $N_G(V(\mathcal{B}(P)) \setminus \{a_1,a_{k+1}\}) \setminus \{a_1,a_{k+1}\} \subseteq N_G(C) \cap X$. Because $Y$ is a component separator for $(G,t,X)$ it follows from Lemmas \ref{lemma:limit-m-of-subgraph-to-supergraph} and \ref{lemma:subset-of-limit-m-of-subset} that $N_G(V(\mathcal{B}(P[a_i,a_j])) \setminus \{a_i,a_j\}) \setminus \{a_i,a_j\}$ is a limit-1 subset for $(G-v_3x,t)$. Hence, we can apply Reduction \ref{red:remove-middle-edge-to-isolated-path} on $v_3x \in E(G) \setminus E(G[X])$.
\end{proof}

\section{Reducing Large Block-Cut Trees}
We next combine Lemma \ref{lemma:linear-block-cut-tree-reduce-frequent-x-neighbour} and the definition of Reduction \ref{red:reducing-3-induced-blocks-without-neighbours} to show that a simple path $\langle a_1, B_1, a_2, B_2, \dots, B_k, a_{k+1} \rangle$ in $T$ with $k \geq 30|X|+3$ and $N_G(V(\mathcal{B}(P)) \setminus \{a_1,a_{k+1}\}) \subseteq \{a_1,a_{k+1}\} \cup X$ leads to an application of Reduction \ref{red:remove-middle-edge-to-isolated-path} or \ref{red:reducing-3-induced-blocks-without-neighbours}.

\begin{lemma}\label{lemma:block-cut-tree:reduce-linear-tree}
    Let $(G,t)$ be a non-trivial problem instance, let $X$ be a tidy modulator of $G$, let $Y$ be a component separator for $(G,t,X)$, let $C \in \mathcal{C}(G - (X \cup Y))$, let $T$ be a block-cut tree of $C$, and let $P = \langle a_1, B_1, a_2, B_2, \dots, B_k, a_{k+1} \rangle$ be a simple path in $T$ with $k \geq 30|X|+3$ and $N_G(V(\mathcal{B}(P)) \setminus \{a_1,a_{k+1}\}) \subseteq \{a_1,a_{k+1}\} \cup X$. We can in $\polyG$ time apply Reduction \ref{red:remove-middle-edge-to-isolated-path} on an edge in $(G) \setminus E(G[X])$ or apply Reduction \ref{red:reducing-3-induced-blocks-without-neighbours}.
\end{lemma}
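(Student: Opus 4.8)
The plan is to do a case distinction on whether a single vertex of $X$ is adjacent to many blocks of $P$. First I would check whether there is some $x \in X$ with $x \in N_G(B_i)$ for at least $11$ of the blocks $B_i \in \mathcal{B}(P)$. If so, the hypotheses of Lemma \ref{lemma:linear-block-cut-tree-reduce-frequent-x-neighbour} are met verbatim, and it yields an application of Reduction \ref{red:remove-middle-edge-to-isolated-path} on an edge of $E(G) \setminus E(G[X])$ in $\polyG$ time. Otherwise, every $x \in X$ lies in $N_G(B_i)$ for at most $10$ indices $i$. Call $B_i$ \emph{bad} if $N_G(B_i) \cap X \neq \emptyset$ and \emph{good} otherwise; then at most $10|X|$ of $B_1,\dots,B_k$ are bad, so the bad blocks split the sequence into at most $10|X|+1$ maximal runs of consecutive good blocks whose lengths sum to at least $k - 10|X| \ge 20|X|+3 > 2(10|X|+1)$. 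By the pigeonhole principle some run has length at least $3$, giving three consecutive good blocks $B_i, B_{i+1}, B_{i+2}$ with $1 \le i \le k-2$, which can be located in $\polyG$ time by Observation \ref{obs:block-cut-tree-polyg-size}.

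Next I would show that $\langle a_i, B_i, a_{i+1}, B_{i+1}, a_{i+2}, B_{i+2}, a_{i+3}\rangle$ is a reducible structure for Reduction \ref{red:reducing-3-induced-blocks-without-neighbours}, i.e. that $N_G(W) = \{a_i, a_{i+3}\}$ where $W = (V(B_i) \cup V(B_{i+1}) \cup V(B_{i+2})) \setminus \{a_i, a_{i+3}\}$. For the inclusion $N_G(W) \subseteq \{a_i,a_{i+3}\}$: since $T$ is a tree and $P$ is a simple path, neither $a_1$ nor $a_{k+1}$ lies in $V(B_i) \cup V(B_{i+1}) \cup V(B_{i+2})$ unless it equals $a_i$ or $a_{i+3}$, so $W \subseteq V(\mathcal{B}(P)) \setminus \{a_1, a_{k+1}\}$ and hence $N_G(W) \subseteq \{a_1,a_{k+1}\} \cup X$ by hypothesis; goodness of the three blocks forces $N_G(W) \cap X = \emptyset$ (each $w \in W$ lies in a good block $B_j$, so $N_G(w) \subseteq V(B_j) \cup N_G(V(B_j))$ misses $X$); and applying Lemma \ref{lemma:articulation-vertices-as-separators} to $a_i$ (resp. $a_{i+3}$) shows $a_1$ (resp. $a_{k+1}$) cannot be a neighbour of $W$ unless it coincides with $a_i$ (resp. $a_{i+3}$), because $W$ lies strictly on the $B_i$-side of $a_i$ and on the $B_{i+2}$-side of $a_{i+3}$ in $T$. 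For the reverse inclusion: $T$ has more than one node (as $k \ge 33$), so by Corollary \ref{corollary:block-cut-tree-only-blocks-of-size-at-least-2} each of $B_i$ and $B_{i+2}$ has at least two vertices; since blocks are induced subgraphs, $a_i$ has a neighbour in $V(B_i) \setminus \{a_i\} \subseteq W$ and $a_{i+3}$ has a neighbour in $V(B_{i+2}) \setminus \{a_{i+3}\} \subseteq W$, so both $a_i, a_{i+3} \in N_G(W)$; as $a_i \ne a_{i+3}$, equality follows.

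Finally I would apply Reduction \ref{red:reducing-3-induced-blocks-without-neighbours}. The one point requiring care is that this reduction is phrased in terms of a component of $G - X$, whereas here $C \in \mathcal{C}(G - (X \cup Y))$; I would apply it with $C' \in \mathcal{C}(G - X)$ the component containing $C$. The equality $N_G(W) = \{a_i, a_{i+3}\}$ means that, aside from $a_i$ and $a_{i+3}$, every vertex of $V(B_i)\cup V(B_{i+1})\cup V(B_{i+2})$ has all of its $G$-neighbours inside these three blocks, so no biconnectivity spanning $a_{i+1}$ or $a_{i+2}$ can be created by the extra vertices of $C' \setminus C$; this lets me argue that $B_i, B_{i+1}, B_{i+2}$ remain blocks of $C'$ and still form a simple subpath of a block-cut tree of $C'$, while the neighbourhood equality (a statement about $G$) is unchanged. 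Reduction \ref{red:reducing-3-induced-blocks-without-neighbours} then applies, and all steps run in $\polyG$ time by Observation \ref{obs:block-cut-tree-polyg-size}. I expect the main obstacle to be precisely this reconciliation of $C$ with $C'$ together with the ``$\subseteq$'' half of the neighbourhood equality, which has to marry the global hypothesis on $P$, the local goodness of the chosen blocks, and the separator property of articulation vertices; the pigeonhole count and the ``$\supseteq$'' half are routine.
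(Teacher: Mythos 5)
Your main argument matches the paper's: the case split on whether some $x \in X$ is adjacent to at least $11$ blocks of $\mathcal{B}(P)$ (invoking Lemma~\ref{lemma:linear-block-cut-tree-reduce-frequent-x-neighbour}), the pigeonhole count showing three consecutive blocks with no $X$-neighbours, and the derivation of the neighbourhood equality $N_G(W) = \{a_i, a_{i+3}\}$ from Lemma~\ref{lemma:articulation-vertices-as-separators}, goodness, and Corollary~\ref{corollary:block-cut-tree-only-blocks-of-size-at-least-2}. Your write-up is in fact somewhat more careful than the paper's on the $\supseteq$ direction of the equality, which the paper glosses over.

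However, your final paragraph introduces a detour that is both unnecessary and, as written, incorrect. Reduction~\ref{red:reducing-3-induced-blocks-without-neighbours} is stated for an \emph{arbitrary} modulator of $G$, not specifically the tidy modulator $X$ of the lemma. Since $X \cup Y$ is a modulator of $G$ (indeed $\treewidth{G - (X \cup Y)} \leq \treewidth{G - X} \leq 2$), you can instantiate the reduction directly with modulator $X \cup Y$, component $C \in \mathcal{C}(G - (X \cup Y))$, block-cut tree $T$, and the subpath $\langle a_i, B_i, a_{i+1}, B_{i+1}, a_{i+2}, B_{i+2}, a_{i+3}\rangle$, exactly as the paper does; no passage to $C' \in \mathcal{C}(G - X)$ is needed. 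Moreover, your claim that ``$B_i, B_{i+1}, B_{i+2}$ remain blocks of $C'$'' does not follow from $N_G(W) = \{a_i,a_{i+3}\}$: a vertex $y \in Y \cap V(C')$ could be adjacent to both $a_i$ and $a_{i+3}$ (the equality only forbids $y$ being adjacent to \emph{interior} vertices of the three blocks), and such a $y$ creates a path from $a_i$ to $a_{i+3}$ in $C'$ avoiding $W$, so that $a_{i+1}$ and $a_{i+2}$ cease to be articulation vertices of $C'$ and the three blocks merge into one block of $C'$. So this workaround would need substantial further hypotheses and should simply be dropped in favour of instantiating the reduction with modulator $X \cup Y$.
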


\begin{proof}
    For each $x \in X$ let $R_x = \setdef{B \in \mathcal{B}(P)}{x \in N_G(B)}$. In case some $x \in X$ has $|R_x| \geq 11$ correctness immediately follows from Lemma \ref{lemma:linear-block-cut-tree-reduce-frequent-x-neighbour}. Otherwise we assume that each $x \in X$ has $|R_x| \leq 10$. Let $R = \bigcup_{x \in X} R_x$. We have $|R| \leq 10|X|$ and from Observation \ref{obs:block-cut-tree-polyg-size} it follows that we can obtain $R$ in $\polyG$ time.
    
    Because $|\mathcal{B}(P)| \geq 30|X|+3 \geq 3(|R|+1)$ there must exist three blocks $\{B_i, B_{i+1}, B_{i+2}\} \subseteq \mathcal{B}(P) \setminus R$. From Lemma \ref{lemma:articulation-vertices-as-separators} it follows that $\{a_i,a_{i+3}\}$ separates $V(\mathcal{B}(P[a_i,a_{i+3}])) \setminus \{a_i,a_{i+3}\}$ from $V(C) \setminus V(\mathcal{B}(P[a_i,a_{i+3}]))$ in $C$. Therefore, because $N_G(V(\mathcal{B}(P)) \setminus \{a_1,a_{k+1}\}) \subseteq \{a_1,a_{k+1}\} \cup X$, we must have $N_G(V(\mathcal{B}(P[a_i,a_{i+3}])) \setminus \{a_i,a_{i+3}\}) \subseteq \{a_i,a_{i+3}\} \cup X$. Because $\{B_i, B_{i+1}, B_{i+2}\} \subseteq \mathcal{B}(P) \setminus R$ we have by definition of $R$ that $N_G(V(\mathcal{B}(P[a_i,a_{i+3}])) \setminus \{a_i,a_{i+3}\}) = \{a_i,a_{i+3}\}$. Hence, we can apply Reduction \ref{red:reducing-3-induced-blocks-without-neighbours} on $B_i$.
\end{proof}

\begin{theorem}\label{theorem:reduce-block-cut-tree}
    Let $(G,t)$ be a non-trivial problem instance, let $X$ be a tidy modulator of $G$, let $Y$ be a component separator for $(G,t,X)$, let $C \in \mathcal{C}(G - (X \cup Y))$, let $T$ be a block-cut tree of $C$ with $|\mathcal{B}(T)| > \Pbound{t}{|X|}{|N_G(C) \cap Y|}{30|X|+3}$. We can in $\polyG$ time apply Reduction \ref{red:remove-middle-edge-to-isolated-path} or \ref{red:ladder-reduction} on an edge in $E(G) \setminus E(G[X])$, or apply Reduction \ref{red:contract-block-leaf} or \ref{red:reducing-3-induced-blocks-without-neighbours}.
\end{theorem}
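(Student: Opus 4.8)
The plan is to chain together the results already established in this chapter. First I would invoke Lemma~\ref{lemma:get-block-cut-tree} implicitly (we are given the block-cut tree $T$ of $C$), and apply Lemma~\ref{lemma:get-linear-block-cut-tree2} with $k = 30|X|+3$. Since $|\mathcal{B}(T)| > \Pbound{t}{|X|}{|N_G(C) \cap Y|}{30|X|+3}$, in $\polyG$ time this lemma either applies Reduction~\ref{red:remove-middle-edge-to-isolated-path} or~\ref{red:ladder-reduction} on an edge in $E(G) \setminus E(G[X])$, or applies Reduction~\ref{red:contract-block-leaf}, or produces a simple path $P = \langle a_1, B_1, a_2, \dots, B_{k'}, a_{k'+1} \rangle$ in $T$ with $k' \geq 30|X|+3$ and $N_G(V(\mathcal{B}(P)) \setminus \{a_1, a_{k'+1}\}) \subseteq \{a_1, a_{k'+1}\} \cup X$. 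In the first three cases we are immediately done, so the remaining work is to handle the case where we obtain such a path $P$.

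In that remaining case, I would feed $P$ into Lemma~\ref{lemma:block-cut-tree:reduce-linear-tree}. The hypotheses match exactly: $(G,t)$ is a non-trivial problem instance, $X$ is a tidy modulator, $Y$ is a component separator for $(G,t,X)$, $C \in \mathcal{C}(G - (X \cup Y))$, $T$ is a block-cut tree of $C$, and $P$ is a simple path in $T$ with $k' \geq 30|X|+3$ blocks and $N_G(V(\mathcal{B}(P)) \setminus \{a_1, a_{k'+1}\}) \subseteq \{a_1, a_{k'+1}\} \cup X$. Hence in $\polyG$ time we can apply Reduction~\ref{red:remove-middle-edge-to-isolated-path} on an edge in $E(G) \setminus E(G[X])$ or apply Reduction~\ref{red:reducing-3-induced-blocks-without-neighbours}. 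Combining this with the first three cases shows that in every case we apply one of Reductions~\ref{red:remove-middle-edge-to-isolated-path}, \ref{red:ladder-reduction}, \ref{red:contract-block-leaf}, or~\ref{red:reducing-3-induced-blocks-without-neighbours}, which is exactly the claimed conclusion.

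Since essentially all the heavy lifting has been done in the preceding lemmas, this proof is really just a two-step case analysis gluing Lemmas~\ref{lemma:get-linear-block-cut-tree2} and~\ref{lemma:block-cut-tree:reduce-linear-tree} together, and the $\polyG$ time bound follows because each invoked lemma runs in $\polyG$ time and there are only two invocations. The main thing to be careful about is not an obstacle so much as a bookkeeping check: confirming that the path $P$ returned by Lemma~\ref{lemma:get-linear-block-cut-tree2} satisfies the precondition $k' \geq 30|X|+3$ required by Lemma~\ref{lemma:block-cut-tree:reduce-linear-tree} (it does, since we chose $k = 30|X|+3$ and the lemma guarantees $k' \geq k$), and that the ``$N_G(\cdot) \subseteq \{a_1, a_{k'+1}\} \cup X$'' condition is passed along verbatim. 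No genuinely new argument is needed; if there is any subtlety it would be in making sure the edge on which Reduction~\ref{red:remove-middle-edge-to-isolated-path} or~\ref{red:ladder-reduction} is applied always lies outside $E(G[X])$ in every branch, which each of the cited lemmas already guarantees.
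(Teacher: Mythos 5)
Your proposal is correct and matches the paper's proof exactly: both apply Lemma~\ref{lemma:get-linear-block-cut-tree2} with $k = 30|X|+3$ and, in the surviving case, hand the resulting path $P$ directly to Lemma~\ref{lemma:block-cut-tree:reduce-linear-tree}. The bookkeeping checks you flag (that $k' \geq k = 30|X|+3$ and the neighbourhood condition carries over verbatim) are the right ones, and the paper handles them the same way.
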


\begin{proof}
    By Lemma \ref{lemma:get-linear-block-cut-tree2} we can in $\polyG$ time apply Reduction \ref{red:remove-middle-edge-to-isolated-path} or \ref{red:ladder-reduction} on an edge in $E(G) \setminus E(G[X])$, or apply Reduction \ref{red:contract-block-leaf}, or we obtain a simple path $P = \langle a_1, B_1, a_2, B_2, \dots, B_k, a_{k+1} \rangle$ in $T$ with $N_G(V(\mathcal{B}(P)) \setminus \{a_1,a_{k+1}\}) \subseteq \{a_1,a_{k+1}\} \cup X \wedge k \geq 30|X|+3$. In case Reduction \ref{red:remove-middle-edge-to-isolated-path}, \ref{red:ladder-reduction}, or \ref{red:contract-block-leaf} is applied correctness is trivial. In case a path $P$ is obtained, then correctness directly follows from Lemma \ref{lemma:block-cut-tree:reduce-linear-tree}.
\end{proof}

\chapter{Combining Results}\label{ch:combining-results}
Finally we will combine the results from Chapters \ref{ch:graph-decompositions}, \ref{ch:reducing-biconnected-subgraphs}, and \ref{ch:reducing-block-cut-trees} to obtain our final kernelization algorithm. We show a description of the algorithm we will deploy below:

\begin{algorithm}
    \SetAlgoLined
	\caption{\textsc{Kernelize}($G, t, \mathcal{A}$)}
	\label{algo:kernelize}
	\smallskip
	\KwIn{A problem instance $(G,t)$ and an $\varepsilon$-approximation algorithm $\mathcal{A}$.}
	\KwOut{A problem instance $(G',t')$ with $\twtwodeletion{G'}{t'} \Longleftrightarrow \twtwodeletion{G}{t}$ and $|V(G')| \leq \text{\textsc{bound}}(t, \varepsilon)$.}
	\medskip
	\everypar={\nl}
	\While{$|V(G)| > \text{\textsc{bound}}(t, \varepsilon)$}{
    	\eIf{\upshape $(G,t)$ is trivial}{
    	    reduce using Lemma \ref{lemma:get-non-trivial-problem-instance}
    	}{
    	    obtain a linked tidy modulator $X$ for $(G,t)$ using Theorem \ref{theorem:get-linked-tidy-modulator}\\
    	    \While{\upshape $(G,t)$ is non-trivial and $X$ is a linked tidy modulator for $(G,t)$}{
    	        obtain a component separator $Y$ for $(G,t,X)$ using Theorem \ref{theorem:get-component-separator}\\
    	        let $C$ be the largest component in $G - (X \cup Y)$\\
    	        let $T$ be a block-cut tree of $C$\\
    	        \eIf{$|\mathcal{B}(T)| > \Pbound{t}{|X|}{4}{30|X|+3}$}{
    	            reduce using Theorem \ref{theorem:reduce-block-cut-tree}
    	        }{
    	            let $B$ be the largest block in $\mathcal{B}(T)$\\
    	            reduce using Theorem \ref{theorem:reduce-big-biconnected-graph}
    	        }
    	    }
    	}
	}
	\Return $(G,t)$
\end{algorithm}

\noindent
We note that function $\textsc{bound}(t, \varepsilon)$ will be defined at the end of this chapter. Furthermore, we note that Algorithm \ref{algo:kernelize} makes the implicit assumption that Theorem \ref{theorem:get-linked-tidy-modulator} and \ref{theorem:get-component-separator} always yield a linked tidy modulator $X$ and a component separator $Y$ respectively. In the remainder of this chapter we will also account for the case where Theorems \ref{theorem:get-linked-tidy-modulator} or \ref{theorem:get-component-separator} return a reduced problem instance.

We will first show that when $X$ is a linked tidy modulator for $(G,t)$ and $|V(G)|$ is large compared to $X$, that then we can apply a reduction that reduces either the number of vertices in $G$ or the number of edges in $G - X$. This step corresponds to lines 7 to 15 in Algorithm \ref{algo:kernelize}. We define a function $\nbound{G} : \nat^3 \rightarrow \nat$, such that for a problem instance $(G,t)$, a linked tidy modulator $X$ of $(G,t)$, and a component separator $Y$ of $(G,t,X)$ with $|V(G)| > \Gbound{t}{|X|}{|Y|}$ we know that we are able to apply a reduction (as will follow from Theorems \ref{theorem:reduce-big-biconnected-graph} and \ref{theorem:reduce-block-cut-tree}).

\begin{definition}\label{def:Gbound}
    \[\Gbound{t}{x}{y} = \Cbound{t}{x+y} \cdot (1988x + 1) \cdot \Pbound{t}{x}{4}{30x+3} + x + y\]
\end{definition}

\begin{lemma}\label{lemma:reduce-large-components:single-step}
    Let $(G,t)$ be a non-trivial problem instance and let $X$ be a linked tidy modulator for $(G,t)$. If $|V(G)| > \Gbound{t}{|X|}{\Ybound{t}{|X|}}$ we can in $\polyG$ time obtain a problem instance $(G',t)$ with $\twtwodeletion{G'}{t} \Longleftrightarrow \twtwodeletion{G}{t}$ for which one of the following holds:
    \begin{enumerate}
        \item $|V(G')| < |V(G)|$
        \item $V(G') = V(G) \wedge |E(G')| < |E(G)|$ and $X$ is a linked tidy modulator for $(G',t)$
    \end{enumerate}
\end{lemma}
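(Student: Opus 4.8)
The plan is to apply Theorem~\ref{theorem:get-component-separator} to $(G,t)$ and the linked tidy modulator $X$. This either immediately yields a problem instance $(G',t)$ with $\twtwodeletion{G'}{t} \Longleftrightarrow \twtwodeletion{G}{t}$ and $|V(G')| < |V(G)|$ (in which case we are done with outcome~1), or it yields a component separator $Y$ for $(G,t,X)$ with $|Y| \leq \Ybound{t}{|X|}$, with $|\mathcal{C}(G - (X \cup Y))| \leq \Cbound{t}{|X|+|Y|}$, and with each $C \in \mathcal{C}(G - (X \cup Y))$ satisfying $|N_G(C) \cap Y| \leq 4$. We assume the latter case for the remainder.

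First I would argue that the largest component $C \in \mathcal{C}(G - (X \cup Y))$ must be large. We have $|V(G)| > \Gbound{t}{|X|}{\Ybound{t}{|X|}} \geq \Gbound{t}{|X|}{|Y|}$, and since $|V(G)| \leq |X| + |Y| + \sum_{C \in \mathcal{C}(G-(X \cup Y))} |V(C)|$, by the pigeonhole principle and the bound on $|\mathcal{C}(G - (X \cup Y))|$ there exists some $C$ with
\[
    |V(C)| \geq \frac{|V(G)| - |X| - |Y|}{|\mathcal{C}(G - (X \cup Y))|} > \frac{\Gbound{t}{|X|}{|Y|} - |X| - |Y|}{\Cbound{t}{|X|+|Y|}} = (1988|X| + 1) \cdot \Pbound{t}{|X|}{4}{30|X|+3}.
\]
Let $T$ be a block-cut tree of $C$, which we can obtain in $\polyG$ time by Lemma~\ref{lemma:get-block-cut-tree}. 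I would then split on the size of $\mathcal{B}(T)$. If $|\mathcal{B}(T)| > \Pbound{t}{|X|}{|N_G(C) \cap Y|}{30|X|+3}$, then since $|N_G(C) \cap Y| \leq 4$ and $\nbound{P}$ is monotone in its third argument (which follows from $\nbound{L}$ being monotone in its third argument, as is clear from Definitions~\ref{def:Lbound} and~\ref{def:Pbound}), we also have $|\mathcal{B}(T)| > \Pbound{t}{|X|}{4}{30|X|+3}$, and Theorem~\ref{theorem:reduce-block-cut-tree} lets us in $\polyG$ time apply Reduction~\ref{red:remove-middle-edge-to-isolated-path} or~\ref{red:ladder-reduction} on an edge in $E(G) \setminus E(G[X])$, or apply Reduction~\ref{red:contract-block-leaf} or~\ref{red:reducing-3-induced-blocks-without-neighbours}. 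If instead $|\mathcal{B}(T)| \leq \Pbound{t}{|X|}{4}{30|X|+3}$, then again by pigeonhole the largest block $B \in \mathcal{B}(T)$ has $|V(B)| \geq |V(C)| / |\mathcal{B}(T)| > 1988|X| + 1 > 1988|X| + 2 - 1$, so $|V(B)| \geq 1988|X| + 2$; since $B$ is a maximal biconnected subgraph of $C$ it is a biconnected induced subgraph of $G - X$, so Theorem~\ref{theorem:reduce-big-biconnected-graph} lets us in $\polyG$ time apply Reduction~\ref{red:remove-middle-edge-to-isolated-path} or~\ref{red:ladder-reduction} on an edge in $B \subseteq E(G) \setminus E(G[X])$.

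Finally I would check that whichever reduction gets applied produces an instance $(G',t)$ meeting one of the two claimed outcomes. Reductions~\ref{red:contract-block-leaf} and~\ref{red:reducing-3-induced-blocks-without-neighbours} both contract a block into an articulation vertex, and by Corollary~\ref{corollary:block-cut-tree-only-blocks-of-size-at-least-2} every block here has at least two vertices, so these strictly decrease $|V(G)|$, giving outcome~1; safeness is Lemma~\ref{lemma:safeness:contract-block-leaf} resp.\ Lemma~\ref{lemma:safeness:reducing-3-induced-blocks-without-neighbours}. Reduction~\ref{red:ladder-reduction} removes edge $p_5q_5$, which lies inside a block of $C$ and hence is not in $E(G[X])$, so $V(G') = V(G)$ and $|E(G')| < |E(G)|$; safeness is Lemma~\ref{lemma:safeness:ladder-reduction}. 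Reduction~\ref{red:remove-middle-edge-to-isolated-path} is applied on an edge in $E(G) \setminus E(G[X])$, so likewise $V(G') = V(G)$ and $|E(G')| < |E(G)|$; safeness is Lemma~\ref{lemma:safeness:remove-middle-edge-to-isolated-path}. In the edge-removal cases I must additionally confirm that $X$ remains a linked tidy modulator for $(G',t)$: removing an edge not in $G[X]$ leaves $G[X]$ unchanged and can only decrease treewidth (Lemma~\ref{lemma:minor-treewidth}), so $G' - (X \setminus \{x\})$ still has treewidth at most $2$ for each $x \in X$, hence $X$ is still a tidy modulator, and removing edges cannot increase the number of internally vertex-disjoint paths between a non-adjacent pair in $X$, so $X$ stays linked. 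That gives outcome~2. All steps run in $\polyG$ time, and $\twtwodeletion{G'}{t} \Longleftrightarrow \twtwodeletion{G}{t}$ in every case by the cited safeness lemmas.

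The main obstacle I expect is purely bookkeeping: carefully threading the $|N_G(C) \cap Y| \leq 4$ bound through the monotonicity of $\nbound{P}$ and verifying the two pigeonhole estimates line up exactly with Definition~\ref{def:Gbound} — in particular that the ``$+1$'' slack in $1988|X|+1$ versus the $1988|X|+2$ threshold of Theorem~\ref{theorem:reduce-big-biconnected-graph} is absorbed correctly (it is, since $|V(B)|$ is an integer strictly exceeding $1988|X|+1$). No conceptual difficulty arises; the real work was done in the earlier chapters.
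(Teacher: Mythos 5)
Your proposal follows the paper's own proof essentially step for step: invoke Theorem~\ref{theorem:get-component-separator}, pigeonhole to locate a large component $C$, build its block-cut tree, split on $|\mathcal{B}(T)|$, and apply Theorem~\ref{theorem:reduce-block-cut-tree} or Theorem~\ref{theorem:reduce-big-biconnected-graph} accordingly, then verify that each reduction preserves the required invariant (vertex-count decrease, or unchanged vertices with $X$ still a linked tidy modulator). One slip: you claim that $|\mathcal{B}(T)| > \Pbound{t}{|X|}{|N_G(C) \cap Y|}{30|X|+3}$ together with $|N_G(C)\cap Y|\leq 4$ and monotonicity yields $|\mathcal{B}(T)| > \Pbound{t}{|X|}{4}{30|X|+3}$, but the monotonicity implication runs the other way ($\Pbound{t}{|X|}{|N_G(C)\cap Y|}{\cdot} \leq \Pbound{t}{|X|}{4}{\cdot}$); fortunately Theorem~\ref{theorem:reduce-block-cut-tree}'s hypothesis is precisely your case assumption, not the derived bound, so the detour is unnecessary and the argument survives — the monotonicity is instead what makes your second case (``$\leq \Pbound{t}{|X|}{4}{30|X|+3}$'') cover the complement of your first.
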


\begin{proof}
    By Theorem \ref{theorem:get-component-separator} we can in $\polyG$ time obtain either a problem instance $(G',t)$ with $\twtwodeletion{G}{t} \Longleftrightarrow \twtwodeletion{G'}{t}$ and $|V(G')| < |V(G)|$, or a  component separator $Y$ for $(G,t,X)$ with $|Y| \leq \Ybound{t}{|X|}$, such that $|\mathcal{C}(G - (X \cup Y))| \leq \Cbound{t}{|X| + |Y|}$ and each $C \in \mathcal{C}(G - (X \cup Y))$ has $|N_G(C) \cap Y| \leq 4$. In the former case, correctness trivially follows. Hence we assume the latter.
    
    Let $C \in \mathcal{C}(G - (X \cup Y))$ for which $|V(C)|$ is maximal. We note that from Definition \ref{def:Cbound} it trivially follows that $\nbound{C}$ is a monotonically increasing function. We use this to derive the minimum size of $V(C)$:
    \medskip
    \begin{align*}
        |V(C)| &\geq \left\lceil \frac{|V(G)| - |X| - |Y|}{| \mathcal{C}(G - (X \cup Y))|} \right\rceil\\
        &\geq \left\lceil \frac{\Gbound{t}{|X|}{\Ybound{t}{|X|}} - |X| - \Ybound{t}{|X|} + 1}{\Cbound{t}{|X| + |Y|}} \right\rceil\\
        &\geq \left\lceil \frac{\Cbound{t}{|X|{{+}}\Ybound{t}{|X|}}{{\cdot}}(1988|X|{{+}}1){{\cdot}}\Pbound{t}{|X|}{4}{30|X|{{+}}3}{{+}}1}{\Cbound{t}{|X|{{+}}\Ybound{t}{|X|}}} \right\rceil\\
        &= (1988|X|{{+}}1) \cdot \Pbound{t}{|X|}{4}{30|X|{{+}}3} + \left\lceil \frac{1}{\Cbound{t}{|X|{{+}}\Ybound{t}{|X|}}} \right\rceil\\
        &= (1988|X|{{+}}1) \cdot \Pbound{t}{|X|}{4}{30|X|{{+}}3} + 1
    \end{align*}
    ~\vspace{-.23cm}\\
    By Lemma \ref{lemma:get-block-cut-tree} we can in $\polyG$ time obtain a block-cut tree $T$ of $C$.
    
    \begin{case}[$|\mathcal{B}(T)| > \Pbound{t}{|X|}{4}{30|X|+3}$]
        We have by Theorem \ref{theorem:reduce-block-cut-tree} that we can in $\polyG$ time apply Reduction \ref{red:remove-middle-edge-to-isolated-path} or \ref{red:ladder-reduction} on an edge in $E(G) \setminus E(G[X])$, or Reduction \ref{red:contract-block-leaf} or \ref{red:reducing-3-induced-blocks-without-neighbours}. We let $(G',t)$ be the obtained problem instance.
        
        \begin{subcase}[Reduction \ref{red:remove-middle-edge-to-isolated-path} or \ref{red:ladder-reduction} was applied]\label{proof:reduce-large-components:single-step:1a}
            By Lemmas \ref{lemma:safeness:remove-middle-edge-to-isolated-path} and \ref{lemma:safeness:ladder-reduction} we have $\twtwodeletion{G'}{t} \Longleftrightarrow \twtwodeletion{G}{t}$. By definition of Reductions \ref{red:remove-middle-edge-to-isolated-path} and \ref{red:ladder-reduction} we have $V(G') = V(G) \wedge |E(G')| < |E(G)|$. The removal of an edge in $E(G) \setminus E(G[X])$ never increases the number of internally vertex-disjoint paths between any pair $\{x,y\} \subseteq X$, nor remove an edge between any such a pair. Hence $X$ will be a linked set for $(G',t)$. Because $G'$ is a subgraph of $G$ it also holds that $X$ is a tidy modulator for $G'$. Hence, in this case returning $(G',t)$ suffices.
        \end{subcase}
        
        \begin{subcase}[Reduction \ref{red:contract-block-leaf} or \ref{red:reducing-3-induced-blocks-without-neighbours} was applied]
            By Lemmas \ref{lemma:safeness:contract-block-leaf} and \ref{lemma:safeness:reducing-3-induced-blocks-without-neighbours} we have $\twtwodeletion{G'}{t} \Longleftrightarrow \twtwodeletion{G}{t}$.
            By Lemma \ref{lemma:block-cut-tree-only-blocks-of-size-at-least-2}, the definition of Reductions \ref{red:contract-block-leaf} and \ref{red:reducing-3-induced-blocks-without-neighbours}, and Definition \ref{def:Pbound} we have $|V(G')| < |V(G)|$. Hence, in this case returning $(G',t)$ suffices.
        \end{subcase}
    \end{case}
    
    \begin{case}[$|\mathcal{B}(T)| \leq \Pbound{t}{|X|}{4}{30|X|+3}$]
        Because each vertex $v \in V(C)$ is in at least one block $B \in \mathcal{B}(T)$ there must exist some $B \in \mathcal{B}(T)$ with $|V(B)| \geq \lceil |V(C)| / |\mathcal{B}(T)| \geq 1988|X| + 2$. Then by Theorem \ref{theorem:reduce-big-biconnected-graph} we can in $\polyG$ time apply Reduction \ref{red:remove-middle-edge-to-isolated-path} or \ref{red:ladder-reduction} on an edge in $E(G) \setminus E(G[X])$. Following the same reasoning as in Case 1.1, returning the obtained problem instance $(G',t)$ suffices.
    \end{case}
    
    \noindent
    Because these cases cover all cases, correctness trivially follows.
\end{proof}
\clearpage

\noindent
Lemma \ref{lemma:reduce-large-components:single-step}, in essence, corresponds to a single iteration of the while-loop on lines 6 to 16 in Algorithm \ref{algo:kernelize}. This while-loop will continue iterating as long as Lemma \ref{lemma:reduce-large-components:single-step} yields a problem instance with $|V(G')| = |V(G)|$. We only terminate this while-loop once the number of vertices in the graph have strictly decreased. The following lemma captures this behaviour:

\begin{lemma}\label{lemma:reduce-large-components:many-step}
    Let $(G,t)$ be a non-trivial problem instance and let $X$ be a linked tidy modulator for $(G,t)$. If $|V(G)| > \Gbound{t}{|X|}{\Ybound{t}{|X|}}$ we can in $\polyG$ time obtain a problem instance $(G',t)$ with $\twtwodeletion{G'}{t} \Longleftrightarrow \twtwodeletion{G}{t}$ and $|V(G')| < |V(G)|$.
\end{lemma}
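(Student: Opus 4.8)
The plan is to iterate Lemma~\ref{lemma:reduce-large-components:single-step}, using the fact that each application either strictly decreases the vertex count (in which case we are immediately done) or strictly decreases the number of edges in $G - X$ while keeping the vertex set and the linked tidy modulator intact (in which case we repeat). Since the number of edges in $G - X$ is a non-negative integer bounded by $\binom{|V(G)|}{2} \leq \polyG$, this process must terminate after at most $\polyG$ iterations, and termination can only occur via the vertex-decreasing case.

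First I would invoke Lemma~\ref{lemma:reduce-large-components:single-step} on $(G,t)$ and $X$, obtaining a problem instance $(G',t)$ with $\twtwodeletion{G'}{t} \Longleftrightarrow \twtwodeletion{G}{t}$ satisfying either case~1 ($|V(G')| < |V(G)|$) or case~2 ($V(G') = V(G)$, $|E(G')| < |E(G)|$, and $X$ still a linked tidy modulator for $(G',t)$). If case~1 holds, return $(G',t)$ and we are done. If case~2 holds, then $(G',t)$ is still non-trivial — actually I should be careful here: I would need $(G',t)$ to remain non-trivial to re-apply the lemma. Since $V(G') = V(G)$ and $G'$ is obtained by edge deletions, $|V(G')| = |V(G)| > \Gbound{t}{|X|}{\Ybound{t}{|X|}} > 4$, and $\treewidth{G'} \geq \treewidth{G} $ need not hold — but edge deletion can only decrease treewidth, so I cannot directly conclude $\treewidth{G'} > 2$. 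However, if $\treewidth{G'} \leq 2$ then $(G',t)$ is trivial and we can apply Reduction~\ref{red:solution-is-known} with $S = \emptyset$ to get a problem instance with strictly fewer vertices (as $|V(G')| > 4$), which again finishes the proof. Similarly if Reduction~\ref{red:contract-component-with-small-neighborhood} becomes applicable, applying it strictly decreases the vertex count. So in all sub-cases where $(G',t)$ is trivial we are done, and otherwise $(G',t)$ is a non-trivial problem instance with a linked tidy modulator $X$ and $|V(G')| > \Gbound{t}{|X|}{\Ybound{t}{|X|}}$, so the lemma applies again.

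I would formalise this as an induction on $|E(G - X)|$ (or on $|E(G)|$). The base case is $|E(G - X)| = 0$: here Lemma~\ref{lemma:reduce-large-components:single-step} still applies and cannot return a graph with the same vertex set but fewer edges in $G - X$ (since there are none), so it must return case~1, giving $|V(G')| < |V(G)|$. For the inductive step with $|E(G - X)| > 0$, apply the lemma; if case~1, done; if case~2, either $(G',t)$ is trivial (handled as above, yielding a vertex decrease) or we recurse on $(G',t)$ with strictly fewer edges in $G' - X = (G - X) $ minus at least one edge, and the induction hypothesis yields a problem instance $(G'',t)$ with $\twtwodeletion{G''}{t} \Longleftrightarrow \twtwodeletion{G'}{t} \Longleftrightarrow \twtwodeletion{G}{t}$ and $|V(G'')| < |V(G')| = |V(G)|$. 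Each step runs in $\polyG$ time and there are at most $|E(G)| \leq \polyG$ steps, so the total running time is $\polyG$.

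The main obstacle I anticipate is the bookkeeping around the non-triviality hypothesis: Lemma~\ref{lemma:reduce-large-components:single-step} requires a \emph{non-trivial} problem instance, but the edge-deletion reductions of Chapters~\ref{ch:reducing-biconnected-subgraphs} and~\ref{ch:reducing-block-cut-trees} do not on their own guarantee that non-triviality is preserved. I would handle this cleanly by observing that if at any point the instance becomes trivial then Lemma~\ref{lemma:get-non-trivial-problem-instance} (together with the fact that $|V(G')| > \Gbound{t}{|X|}{\Ybound{t}{|X|}} > 4$) produces an equivalent instance with strictly fewer vertices, which is exactly the conclusion we want; and that the linked-tidy-modulator property together with $V(G') = V(G)$ guarantees $X \neq \emptyset$ and $|V(G')| > 4$, so the only way to ``fall off'' the hypotheses is into a case that already finishes the proof. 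Everything else is routine: monotonicity of the edge count under the case-2 branch, and the $\polyG$ bound on the number of iterations.
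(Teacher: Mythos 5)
Your overall strategy is the same as the paper's: iterate Lemma~\ref{lemma:reduce-large-components:single-step}, dispatch the case where the number of vertices drops immediately, and when the resulting instance becomes trivial invoke Lemma~\ref{lemma:get-non-trivial-problem-instance} together with $|V(G)| > 4$ to still produce a vertex decrease. That is all correct and is exactly what the paper does.

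However, the specific induction measure you develop, $|E(G - X)|$, does not actually work, and the claims you make about it are wrong. The second case of Lemma~\ref{lemma:reduce-large-components:single-step} guarantees $|E(G')| < |E(G)|$, not $|E(G' - X)| < |E(G - X)|$; the edge removed lies in $E(G) \setminus E(G[X])$, so it may well be an edge with exactly one endpoint in $X$. In fact Reduction~\ref{red:remove-middle-edge-to-isolated-path} \emph{always} removes an edge of the form $v_2 x$ with $x \in X$ and $v_2 \notin X$, so $|E(G' - X)| = |E(G - X)|$ in that case. Consequently your claim ``$G' - X = (G - X)$ minus at least one edge'' is false, and so is your base-case argument that when $|E(G - X)| = 0$ the lemma ``must return case~1'' (the lemma could, in principle, still remove a crossing edge). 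You do mention $|E(G)|$ parenthetically as an alternative measure, and that version is correct and is what the paper uses: when $|E(G)| = 0$ the graph has treewidth $0$, contradicting non-triviality, so the base case is vacuous, and each application of case~2 strictly decreases $|E(G)|$. Commit to $|E(G)|$ and drop the $|E(G - X)|$ reasoning; the rest of your argument then goes through.
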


\begin{proof}
    We prove using induction on $|E(G)|$.
    
    \begin{basecase}[$E(G) = \emptyset$]
        We have $\treewidth{G} = 0$, which contradicts $(G,t)$ being a non-trivial problem instance. Hence, this case vacuously holds.
    \end{basecase}
    
    \begin{inductivestep}[$E(G) \neq \emptyset$]
        By Lemma \ref{lemma:reduce-large-components:single-step} we can in $\polyG$ time obtain a problem instance $(G',t)$ with $\twtwodeletion{G'}{t} \Longleftrightarrow \twtwodeletion{G}{t}$ for which either $|V(G')| < |V(G)|$ holds or $V(G') = V(G) \wedge |E(G')| < |E(G)|$ and $X$ is a linked tidy modulator for $(G',t)$.
    
        \begin{indsubcase}[$|V(G')| < |V(G)|$]
            Returning $(G',t)$ trivially suffices.
        \end{indsubcase}
    
        \begin{indsubcase}[$V(G')| \not < |V(G)|$ and $(G',t)$ is a non-trivial problem instance]
            Correctness immediately follows from the induction hypothesis.
        \end{indsubcase}
        
        \begin{indsubcase}[$V(G')| \not < |V(G)|$ and $(G',t)$ is a trivial problem instance]
            Lemma \ref{lemma:get-non-trivial-problem-instance} yields that we can in $\poly{|G'|} \leq \polyG$ time obtain a problem instance $(G'',t)$ with $\twtwodeletion{G''}{t} \Longleftrightarrow \twtwodeletion{G'}{t}$ such that $|V(G'')| \leq 4$ or $(G'', t)$ is a non-trivial problem instance with $|V(G'')| < |V(G')|$. Because $(G,t)$ is non-trivial we have $|V(G)| > 4$, which means that in either case $|V(G'')| < |V(G)|$ holds. Hence, returning $(G'',t)$ suffices.
        \end{indsubcase}
    \end{inductivestep}
    
    \noindent
    Because these cases cover all cases, correctness trivially follows.
\end{proof}

\noindent
We finally define a function $\text{\textsc{bound}} : \nat \times \real \rightarrow \nat$ such that for a given polynomial time $\varepsilon$-approximation algorithm and a problem instance $(G,t)$ with $|V(G)| > \bound{t}{\varepsilon}$ we can apply a reduction. This procedure corresponds to lines 2 to 17 in Algorithm \ref{algo:kernelize}.

\begin{definition}\label{def:bound}
    \[\bound{t}{\varepsilon} = \max(4, \Gbound{t}{\varepsilon t(3t+4)}{\Ybound{t}{\varepsilon t(3t+4)}})\]
\end{definition}

\begin{lemma}\label{lemma:reduce-problem-instance:single-step}
    Let $(G,t)$ be a problem instance and let $\mathcal{A}$ be a polynomial time $\varepsilon$-approximation algorithm for the \textsc{Treewidth-2 Vertex Deletion} problem. If $V(G) > \bound{t}{\varepsilon}$, then we can in $\polyG$ time obtain a problem instance $(G',t')$ with $\twtwodeletion{G'}{t'} \Longleftrightarrow \twtwodeletion{G}{t}$ and $|V(G')| < |V(G)| \wedge t' \leq t$.
\end{lemma}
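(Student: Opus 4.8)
The plan is to case-split on whether $(G,t)$ is a trivial problem instance, mirroring the structure of lines 2--17 of Algorithm~\ref{algo:kernelize}. First I would handle the trivial case: if $(G,t)$ is trivial, then Lemma~\ref{lemma:get-non-trivial-problem-instance} gives in $\polyG$ time a problem instance $(G',t)$ with $\twtwodeletion{G'}{t} \Longleftrightarrow \twtwodeletion{G}{t}$ such that either $|V(G')| \leq 4$ or $(G',t)$ is non-trivial with $|V(G')| < |V(G)|$. Since $|V(G)| > \bound{t}{\varepsilon} \geq 4$ by Definition~\ref{def:bound}, in either case $|V(G')| < |V(G)|$, so returning $(G',t)$ with $t' = t$ suffices.

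Next I would handle the non-trivial case. Apply Theorem~\ref{theorem:get-linked-tidy-modulator} in $\polyG$ time. This yields one of two outcomes. In outcome~1 we directly obtain a problem instance $(G',t')$ with $\twtwodeletion{G}{t} \Longleftrightarrow \twtwodeletion{G'}{t'}$, $|V(G')| < |V(G)|$, and $t' \leq t$, so we are done. In outcome~2 we obtain a problem instance $(G',t)$ and a linked tidy modulator $X$ for $(G',t)$ with $V(G') = V(G)$, $|X| \leq \varepsilon t(3t+4)$, and $\twtwodeletion{G}{t} \Longleftrightarrow \twtwodeletion{G'}{t}$. Now $(G',t)$ is non-trivial (otherwise, by Lemma~\ref{lemma:get-non-trivial-problem-instance} and the fact that $|V(G')| = |V(G)| > 4$, we could reduce it instead; but more simply one should note that if $(G',t)$ were trivial we just fall back to the trivial-case argument applied to $(G',t)$, which again strictly decreases vertices since $|V(G')| > 4$). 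Assuming $(G',t)$ is non-trivial, I want to invoke Lemma~\ref{lemma:reduce-large-components:many-step}, which needs $|V(G')| > \Gbound{t}{|X|}{\Ybound{t}{|X|}}$. Since $\nbound{G}$ and $\nbound{Y}$ are monotonically increasing in their relevant arguments (as follows from Definitions~\ref{def:Gbound}, \ref{def:Ybound}, and \ref{def:Cbound}), and $|X| \leq \varepsilon t(3t+4)$, we have $\Gbound{t}{|X|}{\Ybound{t}{|X|}} \leq \Gbound{t}{\varepsilon t(3t+4)}{\Ybound{t}{\varepsilon t(3t+4)}} = \bound{t}{\varepsilon} < |V(G)| = |V(G')|$. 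So Lemma~\ref{lemma:reduce-large-components:many-step} applies and gives in $\polyG$ time a problem instance $(G'',t)$ with $\twtwodeletion{G''}{t} \Longleftrightarrow \twtwodeletion{G'}{t}$ and $|V(G'')| < |V(G')| = |V(G)|$. Combining the equivalences, $\twtwodeletion{G''}{t} \Longleftrightarrow \twtwodeletion{G}{t}$, and returning $(G'',t)$ with $t' = t$ finishes the proof.

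The only subtlety I anticipate is the monotonicity bookkeeping: one must check that substituting the upper bound $|X| \leq \varepsilon t(3t+4)$ into $\Gbound{t}{|X|}{\Ybound{t}{|X|}}$ only increases the value, which requires $\nbound{C}$, $\nbound{Y}$, $\nbound{P}$, and hence $\nbound{G}$ to all be non-decreasing in the modulator-size argument. This is essentially syntactic from the definitions (all coefficients are positive and all the binomial and polynomial terms are increasing in $x$ for $x \geq 0$), but it is the one place where care is needed. Everything else is routine chaining of the cited results, and since each invoked lemma runs in $\polyG$ time and we invoke a constant number of them, the total running time is $\polyG$. I would also make explicit that in every branch the returned parameter $t'$ satisfies $t' \leq t$ (it equals $t$ except in outcome~1 of Theorem~\ref{theorem:get-linked-tidy-modulator}, where $t' \leq t$ is asserted directly).
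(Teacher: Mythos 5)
Your proof is correct and takes essentially the same route as the paper's: dispatch the trivial case via Lemma~\ref{lemma:get-non-trivial-problem-instance} (using $\bound{t}{\varepsilon} \geq 4$), invoke Theorem~\ref{theorem:get-linked-tidy-modulator}, then use the monotonicity of $\nbound{Y}$ and $\nbound{G}$ together with $|X| \leq \varepsilon t(3t+4)$ to reach the hypothesis $|V(G')| > \Gbound{t}{|X|}{\Ybound{t}{|X|}}$ of Lemma~\ref{lemma:reduce-large-components:many-step}. The one point where you are more careful than the paper is your parenthetical check that $(G',t)$ from outcome~2 of Theorem~\ref{theorem:get-linked-tidy-modulator} is non-trivial before invoking Lemma~\ref{lemma:reduce-large-components:many-step} (which requires a non-trivial instance); the paper applies that lemma without explicitly verifying this, whereas your fallback to Lemma~\ref{lemma:get-non-trivial-problem-instance} on $(G',t)$ quietly closes that corner.
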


\begin{proof}
    In case $(G,t)$ is a trivial problem instance, correctness immediately follows from Lemma \ref{lemma:get-non-trivial-problem-instance}. Hence, we assume $(G,t)$ is non-trivial.
    
    From Theorem \ref{theorem:get-linked-tidy-modulator} it follows that we can in $\polyG$ time obtain a problem instance $(G',t')$ with $\twtwodeletion{G}{t} \Longleftrightarrow \twtwodeletion{G'}{t'}$ and $|V(G')| < |V(G)|$ and $t' \leq t$, or we obtain a problem instance $(G',t)$ and a linked tidy modulator $X$ for $(G',t)$ with $\twtwodeletion{G}{t} \Longleftrightarrow \twtwodeletion{G'}{t}$ and $V(G') = V(G)$ and $|X| \leq \varepsilon t (3t+4)$. In case the former holds correctness trivially follows. Hence, we assume the latter.
    
    We note that from Definitions \ref{def:Ybound} and \ref{def:Gbound} it follows that $\nbound{G}$ and $\nbound{Y}$ are monotonically increasing functions. Hence, we have $|V(G')| = |V(G)| > \bound{t}{\varepsilon} \geq \Gbound{t}{\varepsilon t(3t+4)}{\Ybound{t}{\varepsilon t(3t+4)}} \geq \Gbound{t}{|X|}{\Ybound{t}{|X|}}$. Hence, from Lemma \ref{lemma:reduce-large-components:many-step} it follows that we can in $\poly{|G'|} = \polyG$ time obtain a problem instance $(G'',t)$ with $\twtwodeletion{G''}{t} \Longleftrightarrow \twtwodeletion{G'}{t}$ and $|V(G'')| < |V(G')| = |V(G)|$. Returning this problem instance $(G',t)$ trivially suffices.
\end{proof}

\noindent
We can use Lemma \ref{lemma:reduce-problem-instance:single-step} in a straightforward proof by induction to completely describe the behaviour expressed by Algorithm \ref{algo:kernelize}.

\begin{theorem}\label{theorem:reduce-problem-instance:many-step}
    Let $(G,t)$ be a problem instance and let $\mathcal{A}$ be a polynomial time $\varepsilon$-approximation algorithm for the \textsc{Treewidth-2 Vertex Deletion} problem. We can in $\polyG$ time obtain a problem instance $(G',t')$ with $\twtwodeletion{G'}{t'} \Longleftrightarrow \twtwodeletion{G}{t}$ and $|V(G')| \leq \bound{t}{\varepsilon}$.
\end{theorem}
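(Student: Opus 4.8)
The plan is to prove this by straightforward induction on $|V(G)|$, mirroring the structure of the outer \texttt{while}-loop in Algorithm \ref{algo:kernelize}, and invoking Lemma \ref{lemma:reduce-problem-instance:single-step} as the engine that drives one shrinking step.

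\textbf{Base case.} If $|V(G)| \leq \bound{t}{\varepsilon}$, then returning $(G',t') = (G,t)$ trivially satisfies the claim, since $\twtwodeletion{G'}{t'} \Longleftrightarrow \twtwodeletion{G}{t}$ holds vacuously and the size bound is met. This corresponds to the \texttt{while}-loop guard failing immediately.

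\textbf{Inductive step.} Suppose $|V(G)| > \bound{t}{\varepsilon}$. By Lemma \ref{lemma:reduce-problem-instance:single-step} we can in $\polyG$ time obtain a problem instance $(G_1,t_1)$ with $\twtwodeletion{G_1}{t_1} \Longleftrightarrow \twtwodeletion{G}{t}$ and $|V(G_1)| < |V(G)|$ and $t_1 \leq t$. Since $t_1 \leq t$ and $\nbound{G}$, $\nbound{Y}$ are monotonically increasing (as already noted in the proof of Lemma \ref{lemma:reduce-problem-instance:single-step}, following from Definitions \ref{def:Ybound} and \ref{def:Gbound}), we have $\bound{t_1}{\varepsilon} \leq \bound{t}{\varepsilon}$, so the induction hypothesis applies to $(G_1, t_1)$: in $\poly{|G_1|} \leq \polyG$ time we obtain a problem instance $(G',t')$ with $\twtwodeletion{G'}{t'} \Longleftrightarrow \twtwodeletion{G_1}{t_1}$ and $|V(G')| \leq \bound{t_1}{\varepsilon} \leq \bound{t}{\varepsilon}$. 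Chaining the two equivalences gives $\twtwodeletion{G'}{t'} \Longleftrightarrow \twtwodeletion{G}{t}$, and the size and time bounds follow. Because these cases cover all cases, correctness follows.

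\textbf{Main obstacle.} The only subtlety is the running-time bookkeeping: each recursive call is applied to a strictly smaller graph, so the recursion depth is at most $|V(G)|$, and since each step takes time polynomial in the current graph size (which never exceeds $|V(G)|$), the total time remains $\polyG$ — the paper's convention of only tracking that the degree is polynomial makes this routine. A secondary point worth stating explicitly is the monotonicity argument needed to ensure $\bound{t_1}{\varepsilon} \leq \bound{t}{\varepsilon}$ when the parameter drops; this is immediate from Definition \ref{def:bound} together with monotonicity of $\nbound{C}$, $\nbound{Y}$, $\nbound{G}$ in all arguments. I would present the proof in two or three sentences of actual mathematical content, since all the real work has been offloaded to Lemma \ref{lemma:reduce-problem-instance:single-step}.
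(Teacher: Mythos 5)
Your proof matches the paper's essentially line for line: the same induction on $|V(G)|$, the same base case when the size bound already holds, the same inductive step driven by Lemma~\ref{lemma:reduce-problem-instance:single-step}, and the same appeal to monotonicity of $\bound{\cdot}{\varepsilon}$ to pass from $\bound{t'}{\varepsilon}$ to $\bound{t}{\varepsilon}$. Your added remark on the running-time bookkeeping (recursion depth at most $|V(G)|$, each level polynomial in the current size) makes explicit a point the paper leaves implicit, but it does not change the argument.
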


\begin{proof}
    We prove using induction on $|V(G)|$.
    
    \begin{basecase}[$|V(G)| \leq \bound{t}{\varepsilon}$]
        Trivially holds because returning $(G,t)$ suffices.
    \end{basecase}
    
    \begin{inductivestep}[$|V(G)| > \bound{t}{\varepsilon}$]
        By Lemma \ref{lemma:reduce-problem-instance:single-step} we can in $\polyG$ time obtain a problem instance $(G',t')$ with $\twtwodeletion{G'}{t'} \Longleftrightarrow \twtwodeletion{G}{t}$ and $|V(G)'|  < |V(G)| \wedge t' \leq t$. From the induction hypothesis it follows that we can in $\poly{|G'|} \leq \polyG$ time obtain a problem instance $(G'',t'')$ with $\twtwodeletion{G''}{t''} \Longleftrightarrow \twtwodeletion{G'}{t'}$ and $|V(G'')| \leq \bound{t'}{\varepsilon}$. From Definition \ref{def:bound} it follows that \textsc{bound} is a monotonically increasing function, which implies $|V(G'')| \leq \bound{t'}{\varepsilon} \leq \bound{t}{\varepsilon}$. Hence, returning $(G'',t'')$ suffices.
    \end{inductivestep}
    
    \noindent
    Because these cases cover all cases, correctness trivially follows.
\end{proof}

\noindent
Using the proof given in Appendix \ref{appendix:derivation-kernel-size}, we are now able to provide an explicit kernel bound.

\begin{lemma}[Appendix \ref{appendix:derivation-kernel-size}]\label{lemma:kernel-size}
    $\bound{t}{\varepsilon} = O(\varepsilon^{18} t^{41})$.
\end{lemma}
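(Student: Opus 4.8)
The plan is to prove Lemma~\ref{lemma:kernel-size} by unwinding the chain of definitions
$\bound{t}{\varepsilon} \mapsto \Gbound{\cdot}{\cdot}{\cdot} \mapsto \Cbound{\cdot}{\cdot}, \Pbound{\cdot}{\cdot}{\cdot}{\cdot} \mapsto \Lbound{\cdot}{\cdot}{\cdot}$
from the inside out, tracking only the asymptotic growth in $t$ and $\varepsilon$ at each layer. The key observation that organises the whole computation is that the ``base'' modulator size is $|X| = O(\varepsilon t^2)$ (from $\bound{t}{\varepsilon} = \max(4, \Gbound{t}{\varepsilon t(3t+4)}{\cdots})$ in Definition~\ref{def:bound}), so every occurrence of $x$ in the auxiliary functions should be instantiated with $\Theta(\varepsilon t^2)$, and every occurrence of $y$ (the component-separator size) will itself be a polynomial in $x$ and $t$ that we must compute first.

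First I would compute $\Cbound{t}{x} = \binom{x}{3}(t+1) + \binom{x}{2}(2t+3) = O(x^3 t)$ from Definition~\ref{def:Cbound}. Next, from Definition~\ref{def:Ybound}, $\Ybound{t}{x} = 6\binom{x}{2}(3t+3 + \Cbound{t}{x+3t+3})$; since $x+3t+3 = O(x)$ when $x = \Omega(t)$, this gives $\Ybound{t}{x} = O(x^2 \cdot x^3 t) = O(x^5 t)$. Then from Definition~\ref{def:Lbound}, $\Lbound{t}{x}{y} = y(x+y-1)(t+2) + x + y = O(y^2 t)$ when $y \geq x$ (which holds here since $y = \Theta(x^5 t)$), so $\Lbound{t}{x}{y} = O(x^{10} t^3)$ after substituting $y = O(x^5 t)$. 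From Definition~\ref{def:Pbound}, $\Pbound{t}{x}{y}{k} = \max(1, \Lbound{t}{x}{y}(2k+3) + 3976 k x y - 3k - 4)$; with $k = 30x+3 = O(x)$ and $y = O(x^5 t)$ this is $O(x^{10}t^3 \cdot x) + O(x \cdot x \cdot x^5 t) = O(x^{11} t^3)$. Finally, from Definition~\ref{def:Gbound}, $\Gbound{t}{x}{y} = \Cbound{t}{x+y}(1988x+1)\Pbound{t}{x}{4}{30x+3} + x + y$. Care is needed here: the third argument of $\nbound{P}$ is the constant $4$, not $y$, so $\Pbound{t}{x}{4}{30x+3} = O(x^{11} t^3)$ as above (the $y$-slot being $4$ only improves things). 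Also $\Cbound{t}{x+y}$ with $y = \Ybound{t}{x} = O(x^5 t)$ gives $\Cbound{t}{O(x^5 t)} = O((x^5 t)^3 t) = O(x^{15} t^4)$. Hence $\Gbound{t}{x}{\Ybound{t}{x}} = O(x^{15}t^4) \cdot O(x) \cdot O(x^{11}t^3) = O(x^{27} t^7)$.

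To finish, I would substitute $x = \varepsilon t(3t+4) = \Theta(\varepsilon t^2)$ into $O(x^{27} t^7)$, obtaining $O((\varepsilon t^2)^{27} t^7) = O(\varepsilon^{27} t^{61})$ — which overshoots the claimed $O(\varepsilon^{18} t^{41})$. This discrepancy tells me the asymptotic estimates above are too lossy and the actual bookkeeping in Appendix~\ref{appendix:derivation-kernel-size} must be tighter: in particular several of the binomial coefficients contribute lower-order terms that cancel, and the $\nbound{P}$ term's dominant contribution is likely $3976 k x y$ rather than $\Lbound{}{}{}(2k+3)$, or the $\nbound{Y}$/$\nbound{L}$ estimates should not be maximised so crudely. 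The main obstacle, then, is not the structure of the argument but getting the exponents exactly right: I would redo each layer keeping the precise polynomial degree in $x$ and in $t$ \emph{separately} (e.g.\ $\Cbound{t}{x}$ is degree $3$ in $x$ and degree $1$ in $t$; $\Ybound{t}{x}$ is degree $2 + 3 = 5$ in $x$ but note the inner argument $x + 3t + 3$ mixes the variables), then track how the final substitution $x = \Theta(\varepsilon t^2)$ converts an $x$-degree of $d_x$ and a residual $t$-degree of $d_t$ into $\varepsilon^{d_x} t^{2 d_x + d_t}$. Matching $2 d_x + d_t = 41$ with $d_x = 18$ forces $d_t = 5$, so the target is: show $\Gbound{t}{x}{\Ybound{t}{x}}$ has combined $x$-degree $18$ and residual $t$-degree $5$. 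I would verify this by careful (but routine) expansion, citing the appendix for the full arithmetic, and conclude $\bound{t}{\varepsilon} = O(\varepsilon^{18} t^{41})$.
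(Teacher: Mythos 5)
Your overall framework is exactly the paper's: work bottom--up through the definitions, track the polynomial degree in $x$ and in $t$ separately at each layer, and substitute $x = \Theta(\varepsilon t^2)$ at the very end. You also correctly compute $\Cbound{t}{x} = O(x^3 t)$, $\Ybound{t}{x} = O(x^5 t)$, $\Cbound{t}{x+y} = O(x^{15}t^4)$, and you correctly identify the target degrees $d_x = 18$, $d_t = 5$. However, your derivation as written yields $O(\varepsilon^{27}t^{61})$, which does not prove the statement, and your diagnosis of where the slack lies is mistaken.

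The single concrete error is in the $\nbound{P}$ estimate. In $\Gbound{t}{x}{y}$ the call is $\Pbound{t}{x}{4}{30x+3}$ --- the second argument is the constant $4$ --- and inside $\nbound{P}$ the $\nbound{L}$ call inherits that argument, so the relevant quantity is $\Lbound{t}{x}{4} = 4(x+3)(t+2)+x+4 = O(xt)$, which is degree $1$ in $x$ and degree $1$ in $t$. You instead computed $\Lbound{t}{x}{y}$ with $y = \Ybound{t}{x} = O(x^5 t)$, obtaining $O(x^{10}t^3)$; and although you then noticed the $y$-slot is $4$ and remarked this ``only improves things,'' you kept the $O(x^{11}t^3)$ bound for $\nbound{P}$ rather than recomputing. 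The improvement is not marginal: with $\Lbound{t}{x}{4}=O(xt)$ and $k = 30x+3 = O(x)$, Definition \ref{def:Pbound} gives $\Pbound{t}{x}{4}{30x+3} = O(xt \cdot x) + O(x \cdot x \cdot 4) = O(x^2 t)$, i.e.\ degree $2$ in $x$ and $1$ in $t$, versus your $O(x^{11}t^3)$. Plugging the corrected value into $\Gbound{t}{x}{\Ybound{t}{x}} = \Cbound{t}{x+y} \cdot (1988x+1) \cdot \Pbound{t}{x}{4}{30x+3}$ gives $O(x^{15}t^4) \cdot O(x) \cdot O(x^2 t) = O(x^{18}t^5)$, and then $x = \Theta(\varepsilon t^2)$ yields $O(\varepsilon^{18}t^{41})$ exactly. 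Your speculation that the discrepancy comes from ``binomial coefficients contributing lower-order terms that cancel'' or from ``not maximising the $\nbound{Y}/\nbound{L}$ estimates so crudely'' is off the mark; the only thing that needs changing is replacing the $y$-slot in $\nbound{L}$ by $4$ and recomputing $\nbound{P}$. Everything else in your derivation is already tight.
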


\begin{theorem}
    Let $(G,t)$ be a problem instance. We can in $\polyG$ time obtain a problem instance $(G',t')$ with $\twtwodeletion{G'}{t'} \Longleftrightarrow \twtwodeletion{G}{t}$ and $|V(G')| = O(t^{41})$.
\end{theorem}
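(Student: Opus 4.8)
The final statement is an immediate corollary of the machinery already assembled in this chapter, so the plan is essentially to invoke the right two results and discharge the approximation-algorithm hypothesis. First I would apply Lemma~\ref{lemma:existance-approximation-algorithm} to obtain a concrete polynomial-time $O(1)$-approximation algorithm $\mathcal{A}$ for the \textsc{Treewidth-2 Vertex Deletion} problem; this fixes $\varepsilon = O(1)$. Feeding $\mathcal{A}$ and the input problem instance $(G,t)$ into Theorem~\ref{theorem:reduce-problem-instance:many-step}, we obtain in $\polyG$ time a problem instance $(G',t')$ with $\twtwodeletion{G'}{t'} \Longleftrightarrow \twtwodeletion{G}{t}$ and $|V(G')| \leq \bound{t}{\varepsilon}$.

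It remains to translate the bound $\bound{t}{\varepsilon}$ into a clean asymptotic expression in $t$ alone. Here I would invoke Lemma~\ref{lemma:kernel-size}, which states $\bound{t}{\varepsilon} = O(\varepsilon^{18} t^{41})$. Since $\varepsilon = O(1)$ by Lemma~\ref{lemma:existance-approximation-algorithm}, the factor $\varepsilon^{18}$ is absorbed into the constant, leaving $|V(G')| = O(t^{41})$. Combined with the correctness ($\twtwodeletion{G'}{t'} \Longleftrightarrow \twtwodeletion{G}{t}$) and the $\polyG$ running time already guaranteed by Theorem~\ref{theorem:reduce-problem-instance:many-step}, this is exactly the claimed statement.

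\textbf{Where the difficulty lies.} All the real work has already been done: the decomposition into a linked tidy modulator $X$ and component separator $Y$ (Chapter~\ref{ch:graph-decompositions}), the reduction of large biconnected subgraphs (Theorem~\ref{theorem:reduce-big-biconnected-graph}), the reduction of large block-cut trees (Theorem~\ref{theorem:reduce-block-cut-tree}), and the bookkeeping that shows these combine into a terminating kernelization (Theorem~\ref{theorem:reduce-problem-instance:many-step}). The only genuinely delicate remaining ingredient is Lemma~\ref{lemma:kernel-size} --- the claim that the nested functions $\nbound{G}$, $\nbound{Y}$, $\nbound{P}$, $\nbound{L}$, $\nbound{C}$ (Definitions~\ref{def:Cbound}, \ref{def:Ybound}, \ref{def:Lbound}, \ref{def:Pbound}, \ref{def:Gbound}, \ref{def:bound}) unwind to a degree-$41$ polynomial in $t$ times $\varepsilon^{18}$. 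This is a routine but error-prone calculation: one substitutes $|X| = \varepsilon t(3t+4) = \Theta(\varepsilon t^2)$, tracks how $\Cbound{t}{x}$ contributes a factor $\Theta(x^3 t) = \Theta(\varepsilon^3 t^7)$, how $\Pbound{t}{x}{4}{30x+3}$ contributes roughly $\Theta(x^2 \cdot \Lbound{t}{x}{4}) = \Theta(\varepsilon^6 t^{13})$ with the leaf-bound factor, and how $\Gbound{t}{x}{y}$ multiplies a $\Cbound{}{}$-term, a linear-in-$x$ term, and a $\Pbound{}{}$-term. The arithmetic is deferred to Appendix~\ref{appendix:derivation-kernel-size}, so at this level the proof is simply the chaining of Lemmas~\ref{lemma:existance-approximation-algorithm} and~\ref{lemma:kernel-size} with Theorem~\ref{theorem:reduce-problem-instance:many-step}, and I would present it in three or four lines.
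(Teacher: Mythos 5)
Your proof is correct and matches the paper's own argument, which likewise derives the theorem directly from Theorem~\ref{theorem:reduce-problem-instance:many-step} together with Lemmas~\ref{lemma:existance-approximation-algorithm} and~\ref{lemma:kernel-size}. Your additional exposition of where the real work lies (and why $\varepsilon^{18}$ is absorbed into the constant) is accurate but not needed beyond the three-citation chain.
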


\begin{proof}
    Directly follows from Theorem \ref{theorem:reduce-problem-instance:many-step} and Lemmas \ref{lemma:existance-approximation-algorithm} and \ref{lemma:kernel-size}.
\end{proof}

\chapter{Conclusion}\label{ch:conclusion}
We have presented a kernelization algorithm that in $\polyG$ time transforms a problem instance $(G,t)$ into a problem instance $(G',t')$ with $|V(G')| = O(t^{41})$ for which $\twtwodeletion{G}{t} \Longleftrightarrow \twtwodeletion{G'}{t'}$ holds. Hence, our algorithm proves that the \textsc{Treewidth-2 Vertex Deletion} admits a kernel of $O(t^{41})$ vertices, which is the first known explicit kernel size.

Our kernelization algorithm, when combined with a `trivial' $\binom{|V(G)|}{t} \cdot \polyG$ brute force algorithm (i.e. check  every candidate subset of $V(G)$ of size $t$ using Lemma \ref{lemma:find-tree-decompositions}), yields an explicit $O(t^{41t}) \cdot \polyG$ FPT algorithm for the \textsc{Treewidth-2 Vertex Deletion} problem. Although this algorithm is not a single-exponential FPT algorithm, like the algorithms by Kim et al. \cite{A_single-exponential_FPT_algorithm_for_the_K4-minor_cover_problem} and by Fomin et al. \cite{Planar_F-Deletion_Approximation_and_Optimal_FPT_Algorithms}, this algorithm is the first constructive FPT algorithm that solves the \textsc{Treewidth-2 Vertex Deletion} problem.

\paragraph{Kernel Size}
One of the remaining questions is whether, or rather how, a tighter kernel bound can be obtained. As proven by Dell and van Melkebeek \cite{DBLP:journals/jacm/DellM14}, there will not exist a kernel consisting out of $O(t^{2 - \varepsilon})$ edges with $\varepsilon > 0$ unless $\text{coNP} \subseteq \text{NP}/\text{poly}$. Our current kernel of $O(t^{41})$ vertices is far larger than this potential lower bound. Hence, we would prefer bringing these bounds closer together.

One of the major causes of the large degree of the polynomial bound, that can potentially be improved upon, is the large number of components $\nbound{C}$ needed to guarantee that Lemma \ref{lemma:application:reduce-number-of-components} will find an application of Reduction \ref{red:reduce-number-of-components}.

Another effective approach to decrease the degree of the polynomial is to reduce the size $\nbound{Y}$ of component separators. The potentially easiest way to achieve this would be to lower the $\nbound{C}$ term. Other approaches will most likely require either a redefinition of a component separator or a completely new method to obtaining these component separators.

Other foreseeable improvements will likely follow from the introduction of new reduction rules that operate on block-cut trees, or more effective methods to find applications of existing reduction rules. Nevertheless, as follows from the derivation in Appendix \ref{appendix:derivation-kernel-size}, such improvements will have rather limited potential due to $\Pbound{t}{|X|}{|N_G(C) \cap Y|}{30|X| + 3}$ only having quintic dependency on parameter $t$. A more effective approach could involve reducing a block-cut graph (forest) over graph $G - (X \cup Y)$, opposed to individual components in $G - (X \cup Y)$. This will likely requiring changing the definition of a component separator $Y$, due to the currently very limited `interaction' between different components in $G - (X \cup Y)$.

\paragraph{Generalisation to other F-Minor Free Families} An important remaining task is to check whether the method of decomposing graphs outlined in Chapter \ref{ch:graph-decompositions} can be generalised to other instances of the $\mathcal{F}$-\textsc{Minor Cover} problem as well. We note that our method of decomposing graphs is similar to the method of decomposing graphs into near-protrusions introduced by Fomin et al. \cite{Planar_F-Deletion_Approximation_and_Optimal_FPT_Algorithms}. Nevertheless, our method is able to provide the stronger guarantee that components are adjacent to limit-1 cliques in $X$, opposed to the method by Fomin et al., which would for \textsc{Treewidth-2 Vertex Deletion} only guarantee that each component is adjacent to a limit-3 subset in $X$ (see Lemma 25 and 26 in \cite{Planar_F-Deletion_Approximation_and_Optimal_FPT_Algorithms}).

The property that sets $N_G(C) \cap X$ are limit-1 subsets, opposed to limit-3 subsets, is essential to the safety of the constructive reduction rules introduced in Chapters \ref{ch:reducing-biconnected-subgraphs} and \ref{ch:reducing-block-cut-trees}. Although constructive reduction rules that are safe under the guarantee that $N_G(C) \cap X$ is a limit-3 subset exist (e.g., protrusion replacement would be constructive if the set of representatives is known), finding such constructive reductions will be much more difficult.

The question is whether both approaches can be brought together into a single approach that both provides the stronger limit-1 neighbourhood guarantee and is applicable to any instance of the $\mathcal{F}$-\textsc{Minor Cover} problem where $\mathcal{F}$ contains a planar graph. The existence of such an approach would drastically simplify the creation of constructive kernels for this class of problems. A complementing question would be to verify whether the limit-$m$ bound by Fomin et al. \cite{Planar_F-Deletion_Approximation_and_Optimal_FPT_Algorithms} is tight when the component separator $Y$ is of size polynomial in $t$. This, because such a proof would yield that the limit-1 neighbourhood guarantee is not achievable by a generalised kernelization algorithm that operates on all instances of the $\mathcal{F}$-\textsc{Minor Cover} problem where $\mathcal{F}$ contains a planar graph.

\clearpage
\phantomsection
\addcontentsline{toc}{chapter}{Bibliography}
\bibliographystyle{plainurl}
\bibliography{literature}

\clearpage
\phantomsection
\addcontentsline{toc}{chapter}{Appendices}
\appendix 
\chapter{Derivation Kernel Size}\label{appendix:derivation-kernel-size}
In Theorem \ref{theorem:reduce-problem-instance:many-step} it is stated that given an $\varepsilon$-approximation algorithm for the \textsc{Treewidth-2 Vertex Deletion} problem and a problem instance $(G,t)$, that we can obtain a problem instance $(G',t')$ with $\twtwodeletion{G'}{t'} \Longleftrightarrow \twtwodeletion{G}{t}$ and $|V(G')| \leq \bound{t}{\varepsilon}$. Within this appendix we derive an asymptotic upper bound to the size of the obtained problem instance.

We note that in case $t = 0$ holds we can obtain a problem instance $(G',t')$ with $|V(G')| \leq 4 \leq \bound{t}{\varepsilon}$, as follows from Definitions \ref{def:trivial-problem-instance} and \ref{def:bound} and Lemma \ref{lemma:get-non-trivial-problem-instance}. Hence, for this reason it remains to derive an upper bound to the case where $t > 0$.

As discussed in Section \ref{sec:prelim:approximation}, we have that $\varepsilon \geq 1$ will hold for any $\varepsilon$-approximation algorithm for the \textsc{Treewidth-2 Vertex Deletion} problem.

To derive an upper bound we will make use of the Definitions \ref{def:Cbound}, \ref{def:Ybound}, \ref{def:Lbound}, \ref{def:Pbound}, \ref{def:Gbound}, and \ref{def:bound}. We list these definitions below:
\begin{align*}
    \Cbound{t}{x} &= \binom{x}{3}(t+1) + \binom{x}{2}(2t+3)\\[1.5ex]
    \Ybound{t}{x} &= 6\binom{x}{2}(3t+3 + \Cbound{t}{x+3t+3})\\[1ex]
    \Lbound{t}{x}{y} &= y(x+y-1)(t+2)+x+y\\[2ex]
    \Pbound{t}{x}{y}{k} &= \max(1, \Lbound{t}{x}{y}(2k+3)+3976kxy - 3k - 4)\\[2ex]
    \Gbound{t}{x}{y} &= \Cbound{t}{x{{+}}y} \cdot (1988x+1) \cdot \Pbound{t}{x}{4}{30x{{+}}3} + x + y\\[2ex]
    \bound{t}{\varepsilon} &= \max(4, \Gbound{t}{\varepsilon t(3t+4)}{\Ybound{t}{\varepsilon t(3t+4)}})
\intertext{
    We will derive an upper bound to the value of $\bound{t}{\varepsilon}$ using a bottom up approach by first analysing upper-bounds to the values of sub-formulas. For notation convenience we will define $x = \varepsilon t (3t+4)$ and we define $y = \Ybound{t}{x}$.
}
\intertext{
    We first derive an upper bound to the value of $x$
}
    x &= \varepsilon t (3t + 4)\\
    &\leq \varepsilon t (3t + 4t)\\
    &\leq 7 \varepsilon t^2
\intertext{
    We next derive an upper bound to the value of $\Cbound{t}{x + 3t + 3}$
}
    \Cbound{t}{x{{+}}3t{{+}}3} &= \binom{x + 3t + 3}{3}(t+1) + \binom{x + 3t + 3}{2}(2t+3)\\
    &\leq 2t \binom{7 \varepsilon t^2 + 3t + 3}{3} + 5t \binom{7 \varepsilon t^2 + 3t + 3}{2}\\
    &\leq 2t \binom{13 \varepsilon t^2}{3} + 5t \binom{13 \varepsilon t^2}{2}\\
    &< 2t \cdot 2197 \varepsilon^3 t^6 + 5t \cdot 169 \varepsilon^2 t^4\\
    &\leq 5239 \varepsilon^3 t^7
\intertext{
We next derive an upper bound to $y$.
}
    y &= \Ybound{t}{x}\\
    &= 6 \binom{x}{2}(3t+3 + \Cbound{t}{x+3t+3})\\
    &< 6 \binom{7 \varepsilon t^2}{2} (3t + 3 + 5239 \varepsilon^3 t^7)\\
    &< 6 \cdot 49 \varepsilon^2 t^4 \cdot 5245 \varepsilon^3 t^7\\
    &\leq 1542030 \varepsilon^5 t^{11}
\intertext{
    We next derive an upper bound to $\Lbound{t}{x}{4}$.
}
    \Lbound{t}{x}{4} &= 4(x+4-1)(t+2)+x+4\\
    &\leq 4 \cdot 3t (7 \varepsilon t^2 + 3) + 7 \varepsilon t^2 + 4\\
    &< 12t \cdot 10 \varepsilon t^2 + 11 \varepsilon t^2\\
    &\leq 131 \varepsilon t^3
\intertext{
    We next derive an upper bound to $\Pbound{t}{x}{4}{30x+3}$.
}
    \Pbound{t}{x}{4}{30x{{+}}3} &= \max(1, \Lbound{t}{x}{4}(60x+9)+ 15904x(30x+3)-90x-11)\\
    &\leq \max(1, 131 \varepsilon t^3 \cdot (420 \varepsilon t^2 + 9) + 15904 \cdot 7 \varepsilon t^2 (210 \varepsilon t^2 + 3))\\
    &\leq \max(1, 131 \varepsilon t^3 \cdot 429 \varepsilon t^2 + 111328 \varepsilon t^2 \cdot 213 \varepsilon t^2)\\
    &\leq 23769063 \varepsilon^2 t^5
\intertext{
    We next derive an upper bound to $\Cbound{t}{x+y}$.
}
    \Cbound{t}{x+y} &= \binom{x + y}{3}(t+1) + \binom{x + y}{2}(2t+3)\\
    &< 2t \binom{7 \varepsilon t^2 + 1542030 \varepsilon^5 t^{11}}{3} + 5t \binom{7 \varepsilon t^2 + 1542030 \varepsilon^5 t^{11}}{2}\\
    &\leq 2t \binom{1542037 \varepsilon^5 t^{11}}{3} + 5t \binom{1542037 \varepsilon^5 t^{11}}{2}\\
    &< 2t \cdot 3666776026137044653 \varepsilon^{15} t^{33} + 5t \cdot 2377878109369 \varepsilon^{10} t^{22}\\
    &\leq 7333563941664636151 \varepsilon^{15} t^{34}
\intertext{
    Using these previous derivations, we will now derive an upper bound to $\bound{t}{\varepsilon}$. We note that in this derivation we use the observation that $t > 0$ and $\varepsilon \geq 1$ yields $x = \varepsilon t (3t+4) \geq 7 \geq 4$.
}
    \bound{t}{\varepsilon} &= \max(4, \Gbound{t}{\varepsilon t(3t+4)}{\Ybound{t}{\varepsilon t(3t+4)}})\\
    &= \max(4, \Gbound{t}{x}{\Ybound{t}{x}})\\
    &= \max(4, \Gbound{t}{x}{y})\\
    &= \Cbound{t}{x{{+}}y} \cdot (1988x+1) \cdot \Pbound{t}{x}{4}{30x{{+}}3} + x + y\\
    &< 7333563941664636151 \varepsilon^{15} t^{34}{{\cdot}}13917 \varepsilon t^2{{\cdot}}23769063 \varepsilon^2 t^5{{+}}1542037 \varepsilon^5 t^{11}\\
    &\leq 2425899315517822591524501413458 \varepsilon^{18} t^{41}
\intertext{
    When $t > 0$ we can conclude $\bound{t}{\varepsilon} < 2425899315517822591524501413458 \varepsilon^{18} t^{41}$. Hence, we have $\bound{t}{\varepsilon} \leq \max(4, 2425899315517822591524501413458 \varepsilon^{18} t^{41}) = O(\varepsilon^{18} t^{41})$.
}
\end{align*}

\end{document}